 \newtheorem{thm}{Theorem}[section]
 \newtheorem{cor}[thm]{Corollary}
 \newtheorem{lem}[thm]{Lemma}
 \newtheorem{prop}[thm]{Proposition}
 \theoremstyle{definition}
 \newtheorem{defn}[thm]{Definition}
 \theoremstyle{remark}
 \newtheorem{rem}[thm]{Remark}
 \newtheorem{ex}[thm]{Example}
 \newtheorem{hyp}[thm]{Hypothesis}
 \newtheorem{notation}[thm]{Notation}
 \numberwithin{equation}{section}
\newcommand{\CC}{\mathbb{C}}
\newcommand{\EE}{\mathbb{E}}
\newcommand{\NN}{\mathbb{N}}
\newcommand{\PP}{\mathbb{P}}
\newcommand{\RR}{\mathbb{R}}
\newcommand{\BB}{\mathbb{B}}
\newcommand{\supp}{\mathrm{supp}}
\newcommand{\dist}{\mathrm{dist}}
\newcommand{\Ran}{\mathrm{Ran}}
\newcommand{\loc}{\mathrm{loc}}
\newcommand{\fin}{\mathrm{fin}}
\newcommand{\Div}{{\mathrm{div}}}
\newcommand{\Id}{\mathrm{d}}
\newcommand{\Ad}{\mathrm{ad}}
\newcommand{\SPn}[2]{\langle #1|#2\rangle} 
\newcommand{\SPb}[2]{\big\langle #1\big|#2\big\rangle} 
\newcommand{\SPB}[2]{\Big\langle \,#1\,\Big|\,#2\, \Big\rangle}
\newcommand{\ol}[1]{\overline{#1}} 
\newcommand{\ul}[1]{\underline{#1}} 
\newcommand{\mr}[1]{\mathring{#1}}
\newcommand{\wh}[1]{\widehat{#1}}
\newcommand{\wt}[1]{\widetilde{#1}}
\newcommand{\nf}[2]{\nicefrac{#1}{#2}}
\newcommand{\eh}{{\nf{1}{2}}}
\newcommand{\mh}{{-\nf{1}{2}}}
\newcommand{\cN}{\mathcal{N}}
\newcommand{\cO}{\mathcal{O}} 
\newcommand{\cC}{\mathcal{C}}
\newcommand{\cD}{\mathcal{D}} 
\newcommand{\cE}{\mathcal{E}}\newcommand{\cQ}{\mathcal{Q}}
\newcommand{\cT}{\mathcal{T}}
\newcommand{\cK}{\mathcal{K}}
\newcommand{\cM}{\mathcal{M}}       
\newcommand{\sA}{\mathscr{A}}
\newcommand{\sB}{\mathscr{B}} 
\newcommand{\sC}{\mathscr{C}}
\newcommand{\sD}{\mathscr{D}}\newcommand{\sP}{\mathscr{P}} 
\newcommand{\sF}{\mathscr{F}}\newcommand{\sR}{\mathscr{R}}
\newcommand{\sS}{\mathscr{S}}
\newcommand{\sI}{\mathscr{I}}\newcommand{\sU}{\mathscr{U}}
\newcommand{\sJ}{\mathscr{J}}
\newcommand{\sK}{\mathscr{K}}
\newcommand{\sL}{\mathscr{L}}\newcommand{\sX}{\mathscr{X}}         
\newcommand{\sM}{\mathscr{M}}\newcommand{\sY}{\mathscr{Y}}       
\newcommand{\fA}{\mathfrak{A}}
\newcommand{\fB}{\mathfrak{B}} 
\newcommand{\fC}{\mathfrak{C}}
\newcommand{\fF}{\mathfrak{F}}
\newcommand{\fH}{\mathfrak{H}}
\newcommand{\fa}{\mathfrak{a}}
\newcommand{\fb}{\mathfrak{b}}
\newcommand{\fc}{\mathfrak{c}}
\newcommand{\fd}{\mathfrak{d}}
\newcommand{\V}[1]{\boldsymbol{#1}}
\newcommand{\vsigma}{\boldsymbol{\sigma}}
\newcommand{\veps}{\boldsymbol{\varepsilon}}
\newcommand{\vxi}{\boldsymbol{\xi}}
\newcommand{\vgamma}{\boldsymbol{\gamma}}
\newcommand{\ve}{\varepsilon}
\newcommand{\vp}{\varphi}
\newcommand{\vo}{\varpi}
\newcommand{\vk}{\varkappa}
\newcommand{\vr}{\varrho}
\newcommand{\vt}{\vartheta}
\newcommand{\id}{\mathbbm{1}}                   % Identity 
\newcommand{\dom}{\cD}                          % domain of definition
\newcommand{\fdom}{\cQ}                         % form domain 
\newcommand{\HR}{\mathscr{H}}                   % Hilbert space
\newcommand{\FHR}{\hat{\mathscr{H}}}        % fiber Hilbert space
\newcommand{\HP}{\mathfrak{h}}                  % photon Hilbert space 
\newcommand{\LO}{\mathscr{B}}                   % set of bounded, linear operators
\newcommand{\ad}{a^\dagger}                     % creation op.
\newcommand{\const}{\mathfrak{c}}     % generic constant
\newcommand{\WW}[2]{\mathbb{W}_{#1}^{#2}}
\renewcommand{\Re}{\mathrm{Re}}
\renewcommand{\Im}{\mathrm{Im}}
\renewcommand{\le}{\leqslant}        
\renewcommand{\ge}{\geqslant}  
\begin{document}

\title[The semi-group in NRQED] {Continuity properties of the semi-group and its
integral kernel in non-relativistic QED}

\author[O.~Matte]{O.~Matte}

\address{Oliver Matte, Institut for Matematik, {\AA}rhus Universitet,
Ny Munkegade 118, DK-8000 Aarhus C, Denmark}
\email{matte@math.au.dk}
%
%\address{Jacob M{\o}ller, Institut for Matematik, {\AA}rhus Universitet,
%Ny Munkegade 118, DK-8000 Aarhus C, Denmark}\email{jacob@imf.au.dk}

\begin{abstract}
Employing recent results on stochastic differential equations associated
with the standard model of non-relativistic quantum electrodynamics by B.~G\"{u}neysu,
J.S.~M{\o}ller, and the present author, 
we study the continuity of the corresponding semi-group between weighted 
vector-valued $L^p$-spaces, continuity properties of elements in the range of the
semi-group, and the pointwise continuity of an operator-valued semi-group kernel. 
We further discuss the continuous dependence of the semi-group and its integral kernel
on model parameters. All these results are obtained for Kato decomposable electrostatic 
potentials and the actual assumptions on the model are general enough to cover the Nelson 
model as well. As a corollary we obtain some new pointwise 
exponential decay and continuity results on elements of low-energetic spectral subspaces of atoms 
or molecules that also take spin into account. In a simpler situation where spin is 
neglected we explain how to verify the joint continuity of positive ground state eigenvectors with 
respect to spatial coordinates and model parameters. There are no smallness assumptions
imposed on any model parameter.
\end{abstract}
\maketitle
\setcounter{tocdepth}{1}
\tableofcontents
%
%%%%%%%%%%%%%%%%%%%%%%%%%%%%%%%%%%%%%%%%%%%%%%%%%
%%%%%%%%%%%%%%%%%%%%%%%%%%%%%%%%%%%%%%%%%%%%%%%%%
%%%%%%%%%%%%%%%%%%%%%%%%%%%%%%%%%%%%%%%%%%%%%%%%%

\section{Introduction}\label{sec-intro}

\noindent
In this article we extend some well-known results on Schr\"odinger semi-groups to the case where 
the quantum mechanical matter particles modeled by the non-relativistic Schr\"odinger operator are 
coupled to relativistic quantized radiation fields. The prime example for such a situation is the 
standard model of non-relativistic quantum electrodynamics (QED), describing the interaction of a 
fixed number of non-relativistic electrons with the second  quantized photon field; here a quantized 
vector potential is introduced in the Hamiltonian via minimal coupling. Another example is Nelson's 
model for the interaction of nucleons with a quantized, linearly coupled meson field.

In these cases the corresponding  Feynman-Kac (FK) formula for the semi-group is 
given by a vector-valued expectation, where the target Hilbert space, call it $\FHR$, is given by the 
tensor product of a finite-dimensional space accounting for spin degrees of freedom (if any) with the 
infinite-dimensional state space of the quantized radiation field (bosonic Fock space). In comparison
to the usual Schr\"odinger operator, the FK
integrand involves an additional process, attaining values in the set of bounded operators on 
$\FHR$, which is defined by means of certain $\FHR$-valued semi-martingales solving a stochastic 
differential equation (SDE) associated with the model. While FK representations
have been available in non-relativistic QED for quite some time 
\cite{Hiroshima1997,HiroshimaLorinczi2008,LHB2011},
the latter SDE has been derived and investigated only recently in our 
earlier work together with B.~G\"uneysu and J.S.~M{\o}ller \cite{GMM2014}. 
(Our article \cite{GMM2014} also provides a new version of the FK formula {\em with spin}, which 
will be employed in our examples; compare Rem.~\ref{rem-FK} below.)

The additional operator-valued process in
the FK integrand is -- roughly speaking -- a perturbation of the semi-group
associated with the radiation field energy operator, and it therefore has a
regularizing effect on the position coordinates of the bosons that constitute the quantized radiation 
field. Taking these new regularizing effects into account, in addition to the familiar regularity
properties of  Schr\"odinger semi-groups, poses a new mathematical problem. To discuss these 
effects we employ our SDE and derive some Burkholder-Davis-Gundy (BDG) type estimates 
involving unbounded weight functions of the radiation field energy. As a result we obtain, for 
instance, the following canonical regularity result: 
To start with it turns out that -- as usual -- 
elements, $\Psi$, in the range of the semi-group at strictly positive times are represented by 
continuous functions of the position of the matter particles (electrons, nucleons), in our case 
attaining values in $\FHR$. For a fixed position, $\V{x}\in\RR^\nu$, of the matter particles
we therefore obtain a well-defined element of $\Psi(\V{x})\in\FHR$. By means of the 
BDK type estimates we can show that the $n$-boson functions constituting 
$\Psi(\V{x})$ belong to Sobolev spaces of some order $\alpha\ge1$, provided that the coefficient 
functions in the SDE that couple the matter and radiation degrees of freedom belong to a Sobolev 
space (with respect to the bosonic position variable) of the same
order. If the coefficient functions depend continuously on $\V{x}$ as elements of that Sobolev 
space, then so does $\Psi(\V{x})$. If the coefficient functions are continuous functions of $\V{x}$ as 
elements of Sobolev spaces of arbitrary high orders, then the Sobolev embedding theorem 
implies that elements in the range of the semi-group are given by sequences (indexed by the boson 
number) of complex-valued functions that are {\em jointly continuous} in the position
variables of the matter particles {\em and} the bosons, together with all derivatives with respect to
any boson positions. All this holds true for Kato decomposable electrostatic potentials. 

To show that the regularity of $\Psi(\V{x})$ is not worse than the one of the
coefficients, sufficiently good bounds on multiple commutators of creation and annihilation operators
with functions of the radiation field energy are required in the derivation of the BDG type 
estimates. Obtaining such commutator bounds {\em with sufficiently good right hand sides}
turns out to be non-trivial. Hence we study them systematically in the appendix, which might also be
useful elsewhere. 

Of course, the analysis of the usual Schr\"odinger semi-groups  by means of FK formulas 
is by now a well-known subject, which has been extensively studied by many authors in the past 
decades. One standard reference, most of the time treating Kato decomposable electrostatic 
potentials and neglecting magnetic fields, is \cite{Simon1982}; see, e.g., also Ch.~I.1 of 
\cite{Sznitman1998} for a short introduction. In the presence of singular, classical magnetic 
fields, various continuity properties of the semi-group and its integral kernel are studied in 
\cite{BHL2000}. Parts of the analysis in \cite{BHL2000} are pushed forward to a matrix-valued 
case in \cite{Gueneysu2011}. The present article is mainly motivated by the work
quoted so far; for more information on the literature on Schr\"odinger semi-groups we refer to the 
remarks and reference lists in \cite{BHL2000,Simon1982,Sznitman1998}.

In the following we shall describe some more results obtained in this article and explain its 
organization.

In Sect.~\ref{sec-definitions} we fix our notation and standing assumptions and survey some earlier 
results. In Sect.~\ref{sec-SG-CK} we verify that our Feynman-Kac operators define a self-adjoint
semi-group between $\FHR$-valued $L^p$-spaces; we also verify the Chapman-Kolmogoroff 
equations for an operator-valued integral kernel of the semi-group.
The BDG type estimates mentioned above are established in Sect.~\ref{sec-weights}
employing the commutator bounds of App.~\ref{app-comm}. The results of Sect.~\ref{sec-weights}
will also play an important role in a forthcoming analysis of a Bismut-Elworthy-Li type formula in 
non-relativistic QED \cite{Matte2015}.

After that we start analyzing continuity properties of the semi-group:
Sect.~\ref{sec-cont-V} is devoted to weighted operator norm bounds on the semi-group between 
$\FHR$-valued $L^p$-spaces. At the same time, we shall study the continuous dependence
of the semi-group on the electrostatic potential, which is always assumed to be Kato decomposable.
The continuous dependence in weighted $L^p$-to-$L^q$-norms of the semi-group on the choice of
the coupling functions, that determine the interaction between the matter particles and the radiation
field, is discussed in Sect.~\ref{sec-coup}.
After that we study the strong continuity with respect to the time parameter of the semi-group
in Sect.~\ref{sec-SG}. In Sect.~\ref{sec-eq-cont} we show that, at strictly positive times, 
the semi-group maps bounded sets in $L^p$ into equicontinuous sets of functions from 
$\RR^\nu$ to $\FHR$.  The results of Sects.~\ref{sec-cont-V}--\ref{sec-eq-cont} will imply the
regularity results discussed above; cf. Ex.~\ref{ex-QED-exp}. 
With our crucial BDG type estimates at hand, 
we may proceed along the lines of the usual Schr\"odinger semi-group theory
in the proofs in these sections.

As an application we mention a simple argument proving the pointwise exponential decay of the
partial Fock-space norms of elements of low-lying spectral subspaces of atomic or molecular
Hamiltonians in non-relativistic QED; see Ex.~\ref{ex-QED-exp}. 
This complements earlier pointwise decay results \cite{HiHi2010,LHB2011} in several 
aspects: it applies to several electrons, it takes spin into account, the bound on the 
exponential decay rate is the natural one given in terms of the ionization energy, and it is not
restricted to eigenvectors. 
In fact, corresponding $L^2$-exponential localization results already exist \cite{Griesemer2004}
and all we have to do is to apply our weighted operator norm bounds to go from $L^2$ to $L^\infty$.

The results of Sects.~\ref{sec-SG-CK}--\ref{sec-eq-cont} will all be necessary
in Sect.~\ref{sec-cont-ker} to prove the (joint) continuity on $(0,\infty)\times\RR^\nu\times\RR^\nu$ 
of the operator-valued integral kernel of the semi-group with respect to some weighted operator 
norm. The fact that we study the {\em operator norm} continuity of the kernel also causes some new
technical problems that do not show up in the usual Schr\"odinger semi-group theory.
We shall also discuss the dependence on model parameters of the semi-group kernel.

The continuity results on the semi-group kernel will be complemented in Sect.~\ref{sec-pos}.
There it is shown that, if no spin degrees of freedom are present, the integral kernel 
(at arbitrary fixed $(t,\V{x},\V{y})$, $t>0$)
is positivity improving with respect to a suitable positive cone in the Fock space. 
The well-known positivity improvement by the corresponding semi-group \cite{Hiroshima2000}
is an immediate corollary of this result. The relatively simple proof given in this section is based 
on a novel factorization of the FK integrand found in \cite{GMM2014} and traditional ideas 
associated with Perron-Frobenius type arguments in mathematical quantum field theory;
see, e.g., \cite{Faris1972,Simon1974}.

Sect.~\ref{sec-pos} also provides a link to the final Sect.~\ref{sec-cont-GS} where, again in the
absence of any spin degrees of freedom, the joint continuity of positive ground state eigenvectors
with respect to spatial coordinates and model parameters is discussed. In fact, what remains to
prove in this section is only the continuous dependence on the model parameters in the Hilbert
space norm. Since the ground state eigenvalues are typically embedded in the continuous spectrum
this is, however, still a non-trivial task. We shall show that a certain compactness argument used
to prove the existence of ground states in \cite{GLL2001} can be adapted in such a way that is
reveals their continuous dependence on model parameters. In fact, we shall present a simplified
version of the compactness argument that works for a larger class of coupling functions and does
not require the photon derivative bound used in \cite{GLL2001}. This last section is rather sketchy
at some points; we shall only work out the new observations that we would like to communicate.

 \medskip

{\em General notation.} $\dom(T)$ (resp. $\fdom(T)$) denotes the domain (resp. form domain)
of a suitable linear operator $T$. If $\sX$ and $\sY$ are normed vector spaces, then
$\LO(\sX,\sY)$ denotes the set of bounded operators from $\sX$ to $\sY$ and
$\LO(\sX):=\LO(\sX,\sX)$.
If $\V{v}=(v_1,\ldots,v_\ell)$ is a vector of elements $v_j$ of a fixed normed vector space, then
we abbreviate $\|\V{v}\|^2:=\|v_1\|^2+\dots+\|v_\ell\|^2$.
If $\eta$ is a measure on the $\sigma$-algebra $\fC$, then $\eta^{\otimes_n}$ is the corresponding
$n$-fold product measure on the $n$-fold product $\sigma$-algebra $\fC^{\otimes_n}$.
We set $a\wedge b:=\min\{a,b\}$ and $a\vee b:=\max\{a,b\}$,
for $a,b\in\RR$. If not specified otherwise, then the symbols $c_{a,b,\dots}\:,c_{a,b,\dots}'\:,\ldots\;$
denote positive constants that depend only on the quantities $a,b,\ldots$ (if any) and whose
values might change from one equation array to another.

%%%%%%%%%%%%%%%%%%%%%%%%%%%%%%%%%%%%%%%%%%%%%%%%
%%%%%%%%%%%%%%%%%%%%%%%%%%%%%%%%%%%%%%%%%%%%%%%%
%%%%%%%%%%%%%%%%%%%%%%%%%%%%%%%%%%%%%%%%%%%%%%%%

\section{Definitions, assumptions, and earlier results}\label{sec-definitions}

\noindent
This preliminary section is split into six subsections. In the first one, we recall the definition of the 
bosonic Fock space, which is the state space of the quantized radiation field, as well as the 
definitions of certain operators acting in it. In Subsect.~\ref{ssec-models} we introduce the full 
Hilbert space and vector-valued $L^p$-spaces. The Hamiltonian determining the models covered by 
our results is introduced, for vanishing electrostatic potentials to start with, 
in Subsect.~\ref{ssec-free-Ham}.
In particular, a standing hypothesis on the terms in the Hamiltonian that couple the matter particles 
to the radiation field is introduced in that subsection.  In Subsect.~\ref{ssec-SDE} we fix our notation 
for probabilistic objects and recall some results of \cite{GMM2014} on SDE's associated with the 
models under consideration. In this article we shall only treat Kato decomposable electrostatic 
potentials. Their definition and some of their well-known properties are recalled in 
Subsect.~\ref{ssec-Kato}. 
Finally, in Subsect.~\ref{ssec-FK}, the main objects of our present study, certain FK operators, are 
defined and a FK formula of \cite{GMM2014} is recalled in the special case of Kato 
decomposable potentials. In this article, the FK formula of Thm.~\ref{thm-FK} below will actually 
only be used in Ex.~\ref{ex-QED-exp} and in Sect.~\ref{sec-cont-GS}, where applications to 
non-relativistic QED are discussed. 
(The definition of the standard model of non-relativistic QED is recalled in 
Ex.~\ref{ex-NRQED} and Ex.~\ref{ex-QED-V}.)
All other results are statements on the FK operators introduced in Def.~\ref{def-FKO} that rely on 
the analysis of our SDE in \cite{GMM2014}, but not on the FK formula.

%%%%%%%%%%%%%%%%%%%%%%%%%%%%%%%%%%%%%%%%%%%%%%%%%

\subsection{Bosonic Fock space}\label{ssec-Fock}

\noindent
Let $(\cM,\fA,\mu)$ be a $\sigma$-finite measure space. To ensure separability of the 
corresponding $L^2$-space,
\begin{align}\label{def-HP}
\HP&:=L^2(\cM,\fA,\mu),
\end{align}
we assume that $\fA$ is generated by some countable semi-ring $\fH$ such that 
$\mu\!\!\upharpoonright_{\fH}$ is $\sigma$-finite. The corresponding bosonic Fock space 
is denoted by
\begin{align*}
{\sF}:=\CC\oplus\bigoplus_{n=1}^\infty{\sF}^{(n)}\ni\psi
=(\psi^{(0)},\psi^{(1)},\ldots,\psi^{(n)},\ldots\:).
\end{align*}
Hence, $\sF^{(1)}:=\HP$ and ${\sF}^{(n)}\subset L^2(\cM^n,\fA^{\otimes_n},\mu^{\otimes_n})$ 
is the closed subspace of all functions $\psi^{(n)}$ which are 
symmetric under permutations of their arguments,
$$
\psi^{(n)}(k_1,\ldots,k_n)=\psi^{(n)}(k_{\pi(1)},\ldots,k_{\pi(n)}),\quad\text{$\mu^{\otimes_n}$-a.e.},
$$
for every permutation $\pi$ of $\{1,\ldots,n\}$. The symbols
$\ad(f)$ and $a(f)$ denote the usual creation and annihilation
operators of a boson $f\in\HP$. We recall the convenient notation
\begin{align}\label{def-a(k)}
(a(k)\psi)^{(n)}(k_1,\ldots,k_n)&=(n+1)^\eh\psi^{(n+1)}(k_1,\ldots,k_n,k),
\quad\text{$\mu^{\otimes_{n+1}}$-a.e.},
\end{align}
for every $n\in\NN$ and $\psi\in\sF$, and $(a(k)\psi)^{(0)}:=\psi^{(1)}(k)$. Then
\begin{align}\label{def-a(f)}
(a(f)\psi)^{(n)}:=\int\ol{f(k)}\,(a(k)\psi)^{(n)}\Id\mu(k),\quad n\in\NN_0,\;
\psi\in\dom(a(f)),
\end{align}
and $\ad(f):=a(f)^*$,
where $\dom(a(f))$ is the set of all $\psi\in\sF$ for which the right hand side of \eqref{def-a(f)}
defines an element of $\sF$.
Let $\fF_{\fin}$ denote the subspace of all elements $(\psi^{(n)})_{n=0}^\infty\in\sF$
such that $\psi^{(n)}\not=0$ holds for only finitely many $n$.
Then we have the following canonical commutation relations on (e.g.) $\sF_{\fin}$,
\begin{align}\label{CCR}
[a(f),a(g)]&=[\ad(f),\ad(g)]=0\,,\quad
[a(f),\ad(g)]=\SPn{f}{g}\,\id\,,\quad f,g\in{{\HP}}.
\end{align}
The field operators given by
\begin{equation}\label{def-vp-vo}
\vp(f):=\ad(f)+a(f),\qquad f\in\HP,
\end{equation}
are essentially self-adjoint, and we denote
their unique self-adjoint extensions again by $\vp(f)$.
Given a vector of boson wave functions, $\V{f}=(f_1,\ldots,f_\nu)$, we set
$\vp(\V{f}):=(\vp(f_1),\ldots,\vp(f_\nu))$
and we shall employ an analogous convention for the creation and annihilation operators.

If $\vk$ is a real-valued measurable function on $\cM$ and $n\in\NN$,
then $\Id\Gamma^{(n)}(\vk)$ denotes the maximal operator of multiplication with
the function $(k_1,\ldots,k_n)\mapsto\sum_{\ell=1}^n\vk(k_\ell)$ in $\sF^{(n)}$.
We also set $\Id\Gamma^{(0)}(\vk):=0$ and recall that the differential
second quantization of $\vk$ is the self-adjoint operator in $\sF$ given by the direct sum
$\Id\Gamma(\vk):=\oplus_{n=0}^\infty\Id\Gamma(\vk)$.
If $\vk>0$ a.e. and $f\in\dom(\vk^\mh)$, then the Cauchy-Schwarz inequality 
and \eqref{CCR} imply the basic relative bounds
\begin{align}\label{rb-a}
\|a(f)\psi\|&\le\|\vk^\mh f\|\,\|\Id\Gamma(\vk)^{\nf{1}{2}}\psi\|,
\\\label{rb-ad}
\|\ad(f)\psi\|^2&\le\|\vk^\mh f\|^2\|\Id\Gamma(\vk)^{\nf{1}{2}}\psi\|^2+\|f\|^2\|\psi\|^2,
\end{align}
for all $\psi\in\fdom(\Id\Gamma(\vk))$. As a direct consequence we obtain
\begin{align}\label{qfb-vp}
\Id\Gamma(\vk)+\vp(f)\ge-\|\vk^\mh f\|^2\quad\text{on}\;\,\fdom(\Id\Gamma(\vk)).
\end{align}
If $\psi\in\dom(\Id\Gamma(\vk))$, then we also have the bound
\begin{align}\label{rb-vp2}
\|\vp(f)^2\psi\|&\le6\|(1+\vk^{-1})^\eh f\|^2\|(1+\Id\Gamma(\omega))\psi\|.
\end{align}

%%%%%%%%%%%%%%%%%%%%%%%%%%%%%%%%%%%%%%%%%%%%%%%%

\subsection{Full Hilbert space and weighted, vector-valued $\boldsymbol{L^p}$-spaces}
\label{ssec-models}

\noindent
Next, we add (generalized) spin degrees of freedom to our
model by tensoring the Fock space with ${\CC^L}$, for some $L\in\NN$. We call 
\begin{align}\label{def-fHR}
\FHR:=\CC^L\otimes\sF
\end{align}
the fiber Hilbert space. 
The full Hilbert space for non-relativistic quantum mechanical particles with
a total number of $L$ ``spin'' degrees of freedom and interacting with a
quantized bosonic field is then given by
\begin{align}\label{def-HR}
\HR:=L^2(\RR^\nu,\FHR)=\int_{\RR^\nu}^\oplus\FHR\,\Id\V{x},
\quad\text{for some}\;\nu\in\NN\setminus\{1\}.
\end{align}
In our mathematical analysis of semi-groups we shall, however, most of the time work in the more
general spaces $L^p(\RR^\nu,\FHR;e^{pF}\Id\V{x})$, $p\in[1,\infty]$, where the density $e^{pF}$ 
with respect to the Borel-Lebesgue measure $\Id\V{x}$ is given by some globally Lipschitz 
continuous function $F:\RR^\nu\to\RR$. The norm on 
$L^p(\RR^\nu,\FHR)=L^p(\RR^\nu,\FHR;\Id\V{x})$
is denoted by $\|\cdot\|_p$, for $p\in[1,\infty]$, so that $\|\cdot\|_2=\|\cdot\|_{\HR}$. 
The operator norm on $\LO\big(L^p(\RR^\nu,{\FHR}),L^q(\RR^\nu,{\FHR})\big)$
is denoted by $\|\cdot\|_{p,q}$. We shall use the same symbols for norms on scalar $L^p$-spaces 
as well, which should not cause any confusion. 

Of course, a linear operator 
$T:L^p(\RR^\nu,{\FHR};e^{pF}\Id\V{x})\to L^q(\RR^\nu,{\FHR};e^{qF}\Id\V{x})$
is bounded, if and only if $e^FTe^{-F}\in\LO\big(L^p(\RR^\nu,{\FHR}),L^q(\RR^\nu,{\FHR})\big)$, 
and in the affirmative case
\begin{align*}
\|T\|_{\LO(L^p(\RR^\nu,{\FHR};e^{pF}\Id\V{x}),L^q(\RR^\nu,{\FHR};e^{qF}\Id\V{x}))}
=\|e^FTe^{-F}\|_{p,q}.
\end{align*}

%%%%%%%%%%%%%%%%%%%%%%%%%%%%%%%%%%%%%%%%%%%%%%%%

\subsection{Interaction terms and free Hamiltonian}\label{ssec-free-Ham}

\noindent
The intercation of the non-relativistic quantum mechanical particles with the radiation field
is determined by $S\in\NN$ Hermitian matrices $\sigma_1,\ldots,\sigma_S\in\LO({\CC^L})$
and two vectors of boson wave functions,
$\V{G}_{\V{x}}=(G_{1,\V{x}},\ldots,G_{\nu,\V{x}})\in\HP^\nu$ and
$\V{F}_{\V{x}}=(F_{1,\V{x}},\ldots,F_{S,\V{x}})\in\HP^S$, parametrized by $\V{x}\in\RR^\nu$. 
Most of the time we regard the matrices as operators on $\FHR$
by identifying $\sigma_j\equiv\sigma_j\otimes\id$. We shall write
$\vsigma\cdot\V{v}=\sigma_1v_1+\dots+\sigma_Sv_S$, if $\V{v}=(v_1,\ldots,v_S)$ is any
vector of numbers, functions, or suitable operators.
To state our precise standing assumptions on $\V{G}$ and $\V{F}$, we first
recall that a conjugation $C$ on a Hilbert space is an anti-linear isometry with $C^2=\id$.

\begin{hyp}\label{hyp-G}
$\omega:\cM\to[0,\infty)$ is $\mu$-a.e. strictly positive.
The map $\RR^\nu\times\cM\ni(\V{x},k)\mapsto (\V{G}_{\V{x}},\V{F}_{\V{x}})(k)\in\CC^{\nu+S}$ is
measurable, $\V{x}\mapsto\V{G}_{\V{x}}$ belongs to $C^2(\RR^\nu,\HP^{\nu})$,
and $\V{x}\mapsto \V{F}_{\V{x}}\in\HP^S$ is globally Lipschitz continuous on $\RR^\nu$.
All components of $\V{G}_{\V{x}}$, $\V{F}_{\V{x}}$, and  $\partial_{x_\ell}\V{G}_{\V{x}}$ belong to 
$$
\mathfrak{k}:=L^2\big(\cM,\fA,(\omega^{-1}+\omega^{2})\mu\big).
$$
and the map
$$
\RR^\nu\ni\V{x}\longmapsto
(\V{G}_{\V{x}},\partial_{x_1}\V{G}_{\V{x}},
\ldots,\partial_{x_\nu}\V{G}_{\V{x}},\V{F}_{\V{x}})\in\mathfrak{k}^{\nu(\nu+1)+S}
$$
is bounded and continuous. Furthermore, there is a conjugation $C\colon\HP\to\HP$,
such that, for all $t>0$, $\V{x}\in\RR^\nu$, $\ell=\{1,\ldots,\nu\}$, and $j\in\{1,\ldots,S\}$, 
\begin{align*}%\label{hyp-sym2}
[C,e^{-t\omega}]=0,\quad
CG_{\ell,\V{x}}=G_{\ell,\V{x}},\quad CF_{j,\V{x}}=F_{j,\V{x}}.
\end{align*}
\end{hyp}
{\em The previous hypothesis will be tacitly assumed throughout the whole paper.}
Explicitly, we shall only mention certain additional assumptions on $\V{G}$ and $\V{F}$,
which are occasionally used to derive some of our results. To this end it is convenient to introduce
a {\em coefficient vector} $\V{c}$ by setting
\begin{equation}\label{def-coeff}
q_{\V{x}}:=\Div_{\V{x}}\V{G}_{\V{x}},\quad
\V{c}_{\V{x}}:=(\V{G}_{\V{x}},q_{\V{x}},\vsigma\cdot\V{F}_{\V{x}})\in\HP^{\nu+1+L^2},
\quad\V{x}\in\RR^\nu.
\end{equation}
Furthermore, we abbreviate
\begin{align}\label{def-lambda}
\mho(\V{x}):=\|M(\V{x})\|^2_{\sB(\CC^L)},\;\;\text{where 
$M(\V{x}):=\big(\|\omega^\mh(\vsigma\cdot\V{F}_{\V{x}})_{ij}\|\big)_{i,j=1}^L$.}
\end{align}
The interaction terms appearing in the free Hamiltonian defined below are now given by
\begin{align}\label{def-A}
\vp(\V{G})&:=\int_{\RR^\nu}^\oplus\id_{{\CC^L}}\otimes\vp(\V{G}_{\V{x}})\,\Id\V{x},
\quad\;
\vsigma\cdot\vp(\V{F}):=\sum_{j=1}^S
\int_{\RR^\nu}^\oplus\sigma_j\otimes\vp({F}_{j,\V{x}})\,\Id\V{x},
\\\label{def-q}
\vp(q)&:=\int_{\RR^\nu}^\oplus\id_{{\CC^L}}\otimes\vp(q_{\V{x}})\,\Id\V{x}.
\end{align}
These direct integrals of self-adjoint operators are well-defined, since 
$\RR^\nu\ni\V{x}\mapsto e^{i\vp(c_{j,\V{x}})}\psi\in\FHR$ is measurable,
for all $\psi\in\FHR$, if $c_j$ denotes any component of $\V{c}$.

\begin{ex}\label{ex-NRQED}
To cover the standard model of non-relativistic QED
for $N\in\NN$ electrons we choose $\nu=3N$ 
and write $\ul{\V{x}}=(\V{x}_1,\ldots,\V{x}_N)\in(\RR^3)^N$ instead of $\V{x}$.
Moreover, we choose $\cM=\RR^3\times\{0,1\}$,
equipped with the product of the Lebesgue and counting measures, and set
$\omega(\V{k},\lambda):=|\V{k}|$, $(\V{k},\lambda)\in\RR^3\times\{0,1\}$.
The coupling function $\V{G}$ is given by 
$$
\V{G}_{\ul{\V{x}}}^{\chi,N}(\V{k},\lambda):=
\big(\V{G}_{\V{x}_1}^{\chi}(\V{k},\lambda),\ldots,\V{G}_{\V{x}_N}^\chi(\V{k},\lambda)\big)
\in(\CC^3)^N,
$$
and
\begin{align*}
\V{G}_{\V{x}}^\chi(\V{k},\lambda)
:=(2\pi)^{-\nf{3}{2}}(2\omega(\V{k},\lambda))^\mh\chi(\V{k})
e^{-i\V{k}\cdot\V{x}}\veps(\V{k},\lambda),\quad\V{x}\in\RR^3,
\end{align*}
where the {\em ultra-violet cut-off function} $\chi:\RR^3\to[0,\infty)$ is always assumed to be even,
$\chi(\V{k})=\chi(-\V{k})$, $\V{k}\in\RR^3$, with $\omega^2\chi\in L^2(\RR^3\times\{0,1\})$.
In the above formulas we chose the Coulomb gauge, i.e., $q_{\ul{\V{x}}}=0$,
and by applying a suitable unitary transformation, if necessary, we may assume that the
{\em polarization vectors} are given as
$$
\veps(\V{k},0)=|\V{e}\times\V{k}|^{-1}\V{e}\times\V{k}, \quad 
\veps(\V{k},1)=|\V{k}|^{-1}\V{k}\times\veps(\V{k},0),\quad\text{a.e.}\;\V{k},
$$
for some $\V{e}\in\RR^3$ with $|\V{e}|=1$.
Several more explicit choices for $\chi$ are common in the literature.
Often a sharp ultra-violet cutoff is chosen, in which case $\chi$ is proportional to the characteristic 
function of some ball about the origin in $\RR^3$. 
Sometimes it is, however, favorable for technical reasons 
if $\chi$ is some Schwartz function, e.g., a Gaussian.

In many papers devoted to the mathematical analysis of non-relativistic QED, spin is neglected.
In this situation we set $L=1$ and $\V{F}=\V{0}$. To include spin, we choose $L=2^N$, so that
${\CC^L}=(\CC^2)^{\otimes_N}$, $S=3N$, and
$$
\sigma_{3\ell+j}:=\id_{\CC^2}^{\otimes_\ell}\otimes\sigma_j\otimes\id_{\CC^2}^{\otimes_{N-\ell-1}},
\qquad\ell=0,\ldots,N-1,\;j=1,2,3,
$$
with the $2$\texttimes$2$ Pauli spin-matrices $\sigma_1$, $\sigma_2$, and $\sigma_3$, as well as
$$
\V{F}_{\ul{\V{x}}}^{\chi,N}(\V{k},\lambda):=
\big(\V{F}^{\chi}_{\V{x}_1}(\V{k},\lambda),\ldots,\V{F}_{\V{x}_N}^\chi(\V{k},\lambda)\big)
\in(\CC^3)^N,\;\;\;\V{F}^\chi_{\V{x}}(\V{k},\lambda):=
-\tfrac{i}{2}\V{k}\times\V{G}_{\V{x}}^\chi(\V{k},\lambda).
$$
A suitable conjugation is given by $(Cf)(\V{k},\lambda):=(-1)^{1+\lambda}\ol{f(-\V{k},\lambda)}$,
a.e. $\V{k}\in\RR^3$ and $\lambda\in\{0,1\}$. Obviously, if $\chi$ is rapidly decaying (resp. a 
Gaussian), then the corresponding coefficient vector
$\V{c}^{\chi,N}_{\ul{\V{x}}}=(\V{G}_{\V{x}}^\chi,0,\V{F}_{\ul{\V{x}}}^{\chi,N})$ satisfies
\begin{equation}\label{hyp-aw-QED}
\sup_{\ul{\V{x}}}\|\omega^{\alpha}\V{c}^{\chi,N}_{\ul{\V{x}}}\|_{\HP}<\infty
\quad
\big(\text{resp.}\quad\sup_{\ul{\V{x}}}\|e^{\delta\omega}
\V{c}^{\chi,N}_{\ul{\V{x}}}\|_{\HP}<\infty\big)
\end{equation}
for all $\alpha>0$ (resp. $\delta>0$). When relevant, additional conditions of this type will  be 
imposed on the coefficient vector $\V{c}$ in the statements of our results.
\end{ex}

We are now in a position to introduce the Hamiltonian determining the {\em free} particle-radiation 
field system, i.e., we first consider the case of vanishing electrostatic potentials. For short, we
will denote operators of the type $\id_{\CC^L}\otimes A$ and
constant direct integrals of the type $\int_{\RR^\nu}\id_{\CC^L}\otimes A\Id\V{x}$
simply by $A$ in what follows. This should not cause any confusion if the underlying Hilbert
space is specified.

\begin{defn}\label{free-Ham}
We define the {\em free Hamiltonian} acting in the Hilbert space $\HR$ by
\begin{align}\nonumber
H^{0}&:=\tfrac{1}{2}(-i\nabla_{\V{x}}-\vp(\V{G}))^2-\vsigma\cdot\vp(\V{F})+\Id\Gamma(\omega),
\quad\dom(H^0):=\dom(\Delta)\cap\dom(\Id\Gamma(\omega)).
\end{align}
For every fixed $\V{x}\in\RR^\nu$, we further define an operator acting in $\FHR$ by
\begin{align*}
\wh{H}(\V{x})&:=\tfrac{1}{2}\vp(\V{G}_{\V{x}})^2
-\tfrac{i}{2}\vp(q_{\V{x}})-\vsigma\cdot\vp(\V{F}_{\V{x}})+\Id\Gamma(\omega),
\quad \dom(\wh{H}(\V{x})):=\dom(\Id\Gamma(\omega)).
\end{align*}
\end{defn}

The $\FHR$-valued second order Sobolev space $\dom(\Delta)$ and the nabla operators acting
on $\FHR$-valued Sobolev functions appearing in Def.~\ref{free-Ham} are defined by means of
a $\FHR$-valued Fourier transform. These objects are introduced in complete
analogy to the scalar case upon replacing the Lebesgue integral by a Bochner-Lebesgue integral
in the formula for the Fourier transform of an element of $L^1\cap L^2$.

It is known \cite{HaslerHerbst2008,Hiroshima2000esa,Hiroshima2002} 
that $H^0$ is well-defined and self-adjoint. In view of \eqref{qfb-vp}, $H^0$ is bounded from below.
It is essentially well-known and not difficult to prove (see, e.g., \cite[App.~A]{GMM2014})
that $\wh{H}(\V{x})$ is closed, and self-adjoint if $q_{\V{x}}=0$.
On account of \eqref{rb-a}, \eqref{rb-ad}, \eqref{rb-vp2}, and Hyp.~\ref{hyp-G},
there is a universal constant $c>0$ such that
\begin{align}\label{rb-H(x)}
\sup_{\V{x}\in\RR^\nu}\|\wh{H}(\V{x})(1+\Id\Gamma(\omega))^{-1}\|
\le c\big(1+\sup_{\V{x}\in\RR^\nu}\|\V{c}_{\V{x}}\|_{\HP}^2\big)<\infty.
\end{align}
%As a well-known consequence of the continuity of $\V{x}\mapsto\V{c}_{\V{x}}\in\mathfrak{k}^{\nu+S+1}$, the map \begin{align}\label{elisabeth}\RR^\nu\ni\longmapsto\big(\wh{H}(\V{x}),\vp(\V{G}_{\V{x}})\big)\in\LO(\dom(\Id\Gamma(\omega)),\FHR^{1+\nu}), \end{align} is continuous as well.

%%%%%%%%%%%%%%%%%%%%%%%%%%%%%%%%%%%%%%%%%%%%%%%%

\subsection{Associated SDE's}\label{ssec-SDE}

Before we continue with our discussion of the Hamiltonians in the presence of exterior potentials
we shall introduce some probabilistic objects in the present subsection. In particular, we shall recall
some results of \cite{GMM2014} on a certain SDE involving $\wh{H}(\V{x})$.
For basic definitions and results from stochastic analysis we refer to 
\cite{daPrZa1992,HackenbrochThalmaier1994,Me1982,MePe1980}.

Let $I$ be a closed interval with $\inf I=0$
and $\BB=(\Omega,\fF,(\fF_t)_{t\in I},\PP)$ some stochastic basis (i.e. a filtered probability space)
satisfying the {\em usual assumptions} (i.e. it is complete and right continuous). We shall also
consider the time-shifted basis,
$$
I^\tau:=\{t\ge0:\,\tau+t\in I\},\quad\BB_\tau:=(\Omega,\fF,(\fF_{\tau+t})_{t\in I^\tau}),\PP),
\quad\tau\in[0,\sup I).
$$
If $\sK$ is a separable Hilbert space, then $\mathsf{S}_{I^\tau}(\sK)$ denotes the space of
{\em continuous} $\sK$-valued semi-martingales with respect to $\BB_\tau$ and
we set $\mathsf{S}_{I}(\sK):=\mathsf{S}_{I^0}(\sK)$.
The bold letter $\V{B}\in\mathsf{S}_I(\RR^\nu)$ denotes an arbitrary $\nu$-dimensional 
$\BB$-Brownian motion with covariance matrix $\id$. If $\tau\in[0,\sup I)$,
then ${}^\tau\!\V{X}^{\V{q}}\in\mathsf{S}_{I^\tau}(\RR^\nu)$ is always a solution of the It\={o} equation
\begin{equation}\label{eq-Xq}
\V{X}_t=\V{q}+\V{B}_{\tau+t}+\int_0^t\V{\beta}(\tau+s,\V{X}_s)\Id s,\quad t\in[0,\sup I^\tau),
\end{equation}
for some $\fF_\tau$-measurable $\V{q}:\Omega\to\RR^\nu$. 
We set $\V{X}^{\V{q}}:={}^0\!\V{X}^{\V{q}}$.
In fact, we shall only need to consider two
different choices of the time-dependent vector field $\V{\beta}:[0,\sup I)\times\RR^\nu\to\RR^\nu$ 
in the present article. Once and for all we thus agree upon the following convention:
\begin{enumerate}
\item[$\bullet$] either $I=[0,t]$ with $t>0$ and $\V{\beta}(s,\V{x})=\frac{\V{y}-\V{x}}{t-s}$
with $\V{y}\in\RR^\nu$, so that ${}^\tau\!\V{X}^{\V{q}}$ with $\tau\in[0,t)$ is a Brownian bridge from 
some $\fF_\tau$-measurable $\V{q}$ to $\V{y}$ in time $t-\tau$. In this case we denote
${}^\tau\!\V{X}^{\V{q}}$ also by ${}^\tau\V{b}^{t;\V{q},\V{y}}$ and 
$\V{X}^{\V{q}}$ by $\V{b}^{t;\V{q},\V{y}}$.
\item[$\bullet$] or $I=[0,\infty)$ and $\V{\beta}=\V{0}$, so that
${}^\tau\!\V{X}_\bullet^{\V{q}}={}^\tau\!\V{B}_\bullet^{\V{q}}:=\V{q}+\V{B}_{\tau+\bullet}$. 
\end{enumerate}
For every $s\in I$, we denote the process obtained by reversing $\V{X}^{\V{q}}$ at time $s$
by $(\sR_s\V{X}^{\V{q}})_{\tau}:=\V{X}^{\V{q}}_{s-\tau}$, $\tau\in[0,s]$. The associated
filtration $(\fF[\sR_s{\V{X}^{\V{q}}}]_\tau)_{\tau\in[0,s]}$ is the standard extension of the
filtration $(\fH_\tau)_{\tau\in[0,s]}$ given by
$\fH_\tau:=\sigma(\V{X}_{s-\tau}^{\V{q}};\V{B}_{r}-\V{B}_u:s-\tau\le r\le u\le s)$;
see \cite{HaussmannPardoux1986,Pardoux-LNM1204}.
Replacing $(\fF_t)_{t\in I}$ in $\BB$ by $(\fF[\sR_s{\V{X}^{\V{q}}}]_\tau)_{\tau\in[0,s]}$
we obtain a new stochastic basis, denoted by $\BB[\sR_s{\V{X}^{\V{q}}}]$,
again satisfying the usual assumptions.

\begin{ex}\label{ex-stoch}
(1) Let $\Omega_{\mathrm{W}}:=C([0,\infty),\RR^\nu)$ denote the Wiener space and 
$\fF^{\mathrm{W}}$ the completion of the Borel $\sigma$-algebra associated with the 
standard Polish topology on $\Omega_{\mathrm{W}}$ with respect to the
Wiener measure $\PP_{\mathrm{W}}$ on 
$\Omega_{\mathrm{W}}$. Let $(\fF_t^{\mathrm{W}})_{t\ge0}$ be the standard extension of the 
filtration $(\sigma(\mathrm{pr}_s:s\in[0,t]))_{t\ge0}$ generated by the evaluation maps
$\mathrm{pr}_t(\vgamma):=\vgamma(t)$, $\vgamma\in\Omega_{\mathrm{W}}$, $t\ge0$. Then
$(\Omega,\fF,(\fF_t)_{t\in I},\PP,\V{B}):=(\Omega_{\mathrm{W}},\fF^{\mathrm{W}},
(\fF^{\mathrm{W}}_t)_{t\in I},\PP_{\mathrm{W}},\mathrm{pr})$
is the standard example of a Brownian motion.
 
 \smallskip
 
 \noindent(2) Let $t>0$ and $\V{x}\in\RR^\nu$. Then there exists a Brownian motion 
 $\hat{\V{B}}$ with respect to the stochastic basis
${\BB}[\sR_t\V{B}^{\V{x}}]$ such that, up to indistinguishability,
$\sR_t\V{B}^{\V{x}}$ is the unique solution with initial condition 
$\V{q}=\V{B}_{t}^{\V{x}}$ of the stochastic differential equation
\begin{align}\label{bridge-rev}
\V{X}_\tau=\V{q}+\int_0^\tau\frac{\V{x}-\V{X}_u}{t-u}\,\Id u+\hat{\V{B}}_\tau,\quad\tau\in[0,t),
\quad\V{X}_t=\V{x}.
\end{align}
These assertions follow from \cite{HaussmannPardoux1986,Pardoux-LNM1204}. 
We see that $\sR_t\V{B}^{\V{x}}$ is a Brownian bridge
back to $\V{x}$, defined by means of a new stochastic basis and Brownian motion.
We will denote the solution of \eqref{bridge-rev} corresponding an arbitrary
$\fF[\sR_t\V{B}^{\V{x}}]_0$-measurable initial condition $\V{q}\in\RR^\nu$ 
by the symbol $\smash{\hat{\V{b}}}^{t;\V{q},\V{x}}$.
 
 \smallskip
 
 \noindent(3) Let $t>s>0$ and $\V{x},\V{y}\in\RR^\nu$. Then there exists a Brownian motion 
 $\bar{\V{B}}$ with respect to the stochastic basis
${\BB}[\sR_s\V{b}^{t;\V{x},\V{y}}]$ such that, up to indistinguishability,
$\sR_s\V{b}^{t;\V{x},\V{y}}$ is the unique solution with initial condition 
$\V{q}=\V{b}_{s}^{t;\V{x},\V{y}}$ of the stochastic differential equation
\begin{align}\label{bridge-rev2}
\V{b}_\tau=\V{q}+\int_0^\tau\frac{\V{x}-\V{b}_u}{s-u}\,\Id u+\bar{\V{B}}_\tau,\quad\tau\in[0,s),
\quad\V{b}_s=\V{x}.
\end{align}
These assertions follow again from \cite{HaussmannPardoux1986,Pardoux-LNM1204}.
We denote the solution of \eqref{bridge-rev2} with an arbitrary 
$\fF[\sR_s\V{b}^{t;\V{x},\V{y}}]_0$-measurable initial condition $\V{q}:\Omega\to\RR^\nu$ by 
$\bar{\V{b}}^{s;\V{q},\V{x}}$. Then $({\BB}[\sR_s\V{b}^{t;\V{x},\V{y}}],\bar{\V{b}}^{s;\V{q},\V{x}})$
is another valid choice for a basis and a driving process to which Thm.~\ref{thm-WW} below applies.
\end{ex}

In the following three theorems we quote some of the main results of \cite{GMM2014}.
In their statements we shall always consider $\dom(\Id\Gamma(\omega))$ as a Hilbert
space equipped with the graph norm of $\Id\Gamma(\omega)$, whence it might make sense
to recall the following:

\begin{rem}\label{rem-Gronwall}
Let $(\Sigma,\fC)$ be a measurable space and 
$T$ be a closed operator acting in the separable Hilbert space $\sK$. Consider its domain
$\dom(T)$ as a Hilbert space equipped with the graph norm. Then a map
map $\eta:\Sigma\to\dom(T)$ is $\fC$-$\fB(\sK)$-measurable, if and only if it is
$\fC$-$\fB(\dom(T))$-measurable. Therefore, no ambiguities appear when we say, for instance,
that $\eta:\Omega\to\dom(\Id\Gamma(\omega))$ be $\fF_0$-measurable.
\end{rem}

\begin{thm}[\cite{GMM2014}]\label{thm-WW}
Let $V:\RR^\nu\to\RR$ be bounded and continuous. Then,
for all $\tau\in[0,\sup I)$ and $\fF_\tau$-measurable $\V{q}:\Omega\to\RR^\nu$,
there exist operators $\WW{t}{V}[^\tau\!\V{X}^{\V{q}}](\vgamma)\in\LO(\FHR)$,
$t\in I^\tau$, $\vgamma\in\Omega$, such that the following holds:

\smallskip

\noindent
{\rm(1)} For every $t\in I^\tau$, the operator-valued map 
$\WW{t}{V}[^\tau\!\V{X}^{\V{q}}]:\Omega\to\LO(\FHR)$ is 
$\fF_{\tau+t}$-$\fB(\LO(\FHR))$-measurable and $\PP$-almost separably valued.

\smallskip

\noindent
{\rm(2)} Using the notation \eqref{def-lambda}, we have the following bound for all $t\in I^\tau$,
\begin{align}\label{bd-WW}
\ln\|\WW{t}{V}[^\tau\!\V{X}^{\V{q}}]\|
\le\int_0^t\big(\mho(^\tau\!\V{X}_s^{\V{q}})-V(^\tau\!\V{X}_s^{\V{q}})\big)\Id s
\quad\text{on $\Omega$.}
\end{align}
{\rm(3)} If $\eta:\Omega\to{\dom(\Id\Gamma(\omega))}$ is $\fF_\tau$-measurable, then
$\WW{}{V}[^\tau\!\V{X}^{\V{q}}]\eta\in {\sf S}_{{I}^\tau}(\FHR)$
and, up to indistinguishability, $\WW{}{V}[^\tau\!\V{X}^{\V{q}}]\eta$ is the unique
element of ${\sf S}_{{I}^\tau}(\FHR)$ whose paths belong $\PP$-a.s. to  
$C(I^\tau,\dom(\Id\Gamma(\omega)))$ and which $\PP$-a.s. solves
\begin{align}\label{SDE-spin}
Y_\bullet&=\eta+i\int_0^\bullet\vp(\V{G}_{{}^\tau\!\V{X}^{\V{q}}_s})\,Y_s\,\Id^\tau\!
\V{X}^{\V{q}}_s-\int_0^\bullet
\big(\wh{H}(^\tau\!\V{X}^{\V{q}}_s)+V(^\tau\!\V{X}_s^{\V{q}})\big)Y_s\,\Id s
\end{align}
on $[0,\sup I^\tau)$.

\smallskip

\noindent
{\rm(4)} For all $0\le\tau\le t\in I$, $\V{x}\in\RR^\nu$, $\psi\in\FHR$, and $\vgamma\in\Omega$, set
\begin{align}\label{def-flow}
\Lambda_{\tau,t}(\V{x},\psi,\vgamma)
:=(^\tau\!\V{X}_{t-\tau}^{\V{x}},\WW{t-\tau}{V}[^\tau\!\V{X}^{\V{x}}]\psi)(\vgamma)
\in\RR^\nu\times\FHR.
\end{align}
Then $(\Lambda_{\tau,t})_{0\le\tau\le t\in I}$ is a stochastic flow for the system of SDE
comprised of \eqref{eq-Xq} and \eqref{SDE-spin} and in particular we have,
for all $\tau\in I$ and $\fF_\tau$-measurable $(\V{q},\eta):\Omega\to\RR^\nu\times\FHR$,
\begin{align}\label{flow1}
\Lambda_{\tau,\tau+\bullet}(\V{q}(\vgamma),\eta(\vgamma),\vgamma)=
(^\tau\!\V{X}_{\bullet}^{\V{q}},\WW{\bullet}{V}[^\tau\!\V{X}^{\V{q}}]\psi)(\vgamma)
\>\>\text{on $I^\tau$, for $\PP$-a.e. $\vgamma$.}
\end{align}
{\rm(5)} Let $\sK$ be a separable Hilbert space. Set
\begin{align*}
(P_{\tau,t}f)(\V{x},\psi):=\int_\Omega f\big(\Lambda_{\tau,t}(\V{x},\psi,\vgamma)\big)\Id\PP(\vgamma),
\quad\V{x}\in\RR^\nu,\,\psi\in\FHR,
\end{align*}
for all $0\le\tau\le t\in I$ and every bounded Borel measurable function $f:\RR^\nu\times\FHR\to\sK$.
Then the transition operators $(P_{\tau,t})_{0\le\tau\le t\in I}$ satisfy the following Markov
property: If $0\le\sigma\le\tau\le t\in I$, if $f$ is a bounded real-valued Borel function
on $\RR^\nu\times\FHR$ or if $f:\RR^\nu\times\FHR\to\sK$ is bounded and continuous, 
and if $(\V{q},\eta):\Omega\to\RR^\nu\times\FHR$ is $\fF_\sigma$-measurable, then
it follows that, for $\PP$-a.e. $\vgamma$,
\begin{align}\label{Markov}
\big(\EE^{\fF_\tau}\big[f(\Lambda_{\sigma,t}[\V{q},\eta])\big]\big)(\vgamma)
=(P_{\tau,t}f)\big(\Lambda_{\sigma,\tau}(\V{q}(\vgamma),\eta(\vgamma),\vgamma)\big).
\end{align}
Here $\EE^{\fF_\tau}$ denotes conditional expectation with respect to $\PP$ given $\fF_\tau$
and $f(\Lambda_{\sigma,t}[\V{q},\eta])$ is the random variable given by
$\Omega\ni\vgamma\mapsto f(\Lambda_{\sigma,t}(\V{q}(\vgamma),\eta(\vgamma),\vgamma))$.
\end{thm}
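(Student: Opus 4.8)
Theorem~\ref{thm-WW} collects several results of \cite{GMM2014}, so in the paper itself one simply cites that reference; the following describes how one would prove it from scratch. Everything revolves around the linear stochastic evolution equation \eqref{SDE-spin}. As its right-hand side depends linearly on the unknown $Y$ and on $\eta$, the solution map $\eta\mapsto Y_t(\vgamma)$ is linear, and the theorem reduces to three tasks: solving \eqref{SDE-spin} for $\eta\in\dom(\Id\Gamma(\omega))$, proving the pathwise a priori bound \eqref{bd-WW}, and then reading off $\WW{t}{V}[{}^\tau\!\V{X}^{\V{q}}](\vgamma)\in\LO(\FHR)$ by density and boundedness. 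For existence I would first regularize, replacing $\omega$ by $\omega_n:=\omega\wedge n$: then $\Id\Gamma(\omega_n)$, $\vp(\V{G}_{\V{x}})$, $\vp(q_{\V{x}})$ and $\vsigma\cdot\vp(\V{F}_{\V{x}})$ are all bounded, with locally bounded and continuous $\V{x}$-dependence by \eqref{rb-a}--\eqref{rb-vp2} and Hyp.~\ref{hyp-G}, so the regularized version of \eqref{SDE-spin} has coefficients that are Lipschitz in the state variables on the relevant bounded time interval and is solved by Picard iteration in $L^2(\Omega;C(I^\tau,\FHR))$. The passage $n\to\infty$ is then controlled by energy estimates uniform in $n$; see below.

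The pathwise bound \eqref{bd-WW} is the heart of the matter, and the mechanism is a cancellation in It\^o's formula for $\|Y_t\|^2$. Since each component of $\vp(\V{G}_{\V{x}})$ is self-adjoint, $\Re\DPn{Y}{i\vp(\V{G}_{\V{x}})Y}=0$; hence both the stochastic differential and the $\V{\beta}$-drift correction carried by the martingale coefficient $i\vp(\V{G}_{{}^\tau\!\V{X}_s})Y_s\,\Id{}^\tau\!\V{X}_s$ drop out of $\Id\|Y_s\|^2$, which is therefore absolutely continuous. In its density the term $\tfrac12\vp(\V{G}_{\V{x}})^2$ from the drift of \eqref{SDE-spin} contributes $-\sum_\ell\|\vp(G_{\ell,\V{x}})Y_s\|^2$, exactly cancelled by the It\^o correction $+\sum_\ell\|\vp(G_{\ell,\V{x}})Y_s\|^2$; the term $-\tfrac{i}{2}\vp(q_{\V{x}})$ contributes $-\Im\DPn{Y}{\vp(q_{\V{x}})Y}=0$; the term $\Id\Gamma(\omega)$ contributes $-2\|\Id\Gamma(\omega)^{\eh}Y_s\|^2\le0$ and is discarded; and the spin term $-\vsigma\cdot\vp(\V{F}_{\V{x}})$ is absorbed by a Pauli--Fierz type operator inequality $\pm\,\vsigma\cdot\vp(\V{F}_{\V{x}})\le\Id\Gamma(\omega)+\mho(\V{x})$ that follows from \eqref{rb-a}--\eqref{rb-ad} and a matrix Cauchy--Schwarz estimate with $M(\V{x})$ as in \eqref{def-lambda}. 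What remains is the pointwise differential inequality $\tfrac{\Id}{\Id s}\|Y_s\|^2\le2\big(\mho-V\big)({}^\tau\!\V{X}_s^{\V{q}})\,\|Y_s\|^2$, to which a deterministic Gronwall argument run path by path gives \eqref{bd-WW}. (In the Brownian bridge case $\V{\beta}$ is singular at the terminal time, but it drops out of this computation and ${}^\tau\!\V{X}^{\V{q}}_s$ remains bounded, so this causes no trouble.) Linearity together with this bound produces the operators $\WW{t}{V}[{}^\tau\!\V{X}^{\V{q}}](\vgamma)\in\LO(\FHR)$ for $\PP$-a.e.\ $\vgamma$; adaptedness and $\PP$-almost separable valuedness in item (1) are inherited from the measurable, continuous approximants and from separability of $\FHR$.

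Item (3) -- that $\WW{}{V}[{}^\tau\!\V{X}^{\V{q}}]\eta$ has paths in $C(I^\tau,\dom(\Id\Gamma(\omega)))$ and is the unique such solution of \eqref{SDE-spin} -- requires a second, higher-level energy estimate, now on $\|\Id\Gamma(\omega)Y_t\|$, and this is the genuinely hard step. Running It\^o on $\|\Id\Gamma(\omega)Y_t\|^2$ forces one to commute $\Id\Gamma(\omega)$ through $\vp(\V{G}_{\V{x}})$ and $\vp(\V{G}_{\V{x}})^2$, producing terms governed by $\vp(\omega\V{G}_{\V{x}})$ and iterated commutators; these are finite precisely because Hyp.~\ref{hyp-G} demands $\V{G}_{\V{x}}\in\mathfrak{k}=L^2((\omega^{-1}+\omega^2)\mu)$, and estimating them with good enough right-hand sides, uniformly in the regularization parameter, is the analogue at this level of the commutator bounds systematized in App.~\ref{app-comm}. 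Uniqueness is then obtained by feeding the difference of two solutions -- which has vanishing initial value -- into the first energy estimate.

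With uniqueness available, item (4) is immediate: for $\tau\le\tau+t\le\tau+t+s$ in $I$, the processes $\big(\WW{t+r}{V}[{}^\tau\!\V{X}^{\V{q}}]\eta\big)_{r\in[0,s]}$ and $\big(\WW{r}{V}[{}^{\tau+t}\!\V{X}^{\wt{\V{q}}}]\wt\eta\big)_{r\in[0,s]}$, with $\wt{\V{q}}={}^\tau\!\V{X}^{\V{q}}_t$ and $\wt\eta=\WW{t}{V}[{}^\tau\!\V{X}^{\V{q}}]\eta$, both solve \eqref{SDE-spin} over $[0,s]$ with respect to the shifted basis $\BB_{\tau+t}$ from the same $\fF_{\tau+t}$-measurable starting value, hence coincide; this cocycle identity is exactly the flow property of $(\Lambda_{\tau,t})$ and yields \eqref{flow1}. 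Item (5) is finally the standard Markov property of a stochastic flow driven by a process with independent increments: conditioning \eqref{flow1} on $\fF_\tau$ and using that the increments of $\V{B}$ after time $\tau$ are independent of $\fF_\tau$, the conditional expectation collapses to $P_{\tau,t}$ evaluated at the time-$\tau$ state, the continuity of $f$ (or its boundedness and real-valuedness) making the disintegration rigorous. Summarizing, the one genuinely delicate ingredient is operator-theoretic, not probabilistic: producing commutator estimates sharp enough to propagate the $\dom(\Id\Gamma(\omega))$-regularity of $\eta$ along \eqref{SDE-spin} uniformly in the regularization.
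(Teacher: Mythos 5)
First, note that the paper does not prove Theorem~\ref{thm-WW} at all: it is imported verbatim from \cite{GMM2014}, so the ``proof'' in this paper is a citation, which you correctly identify. Your from-scratch sketch therefore has to be judged on its own merits, and it contains one genuine error at the foundational step. Replacing $\omega$ by $\omega\wedge n$ does \emph{not} make $\vp(\V{G}_{\V{x}})$, $\vp(q_{\V{x}})$ or $\vsigma\cdot\vp(\V{F}_{\V{x}})$ bounded: creation and annihilation operators grow like the square root of the boson number on each $n$-particle sector, independently of any cutoff on the dispersion relation, so the ``regularized'' equation never has Lipschitz (let alone bounded) coefficients and the proposed Picard iteration in $L^2(\Omega;C(I^\tau,\FHR))$ does not get off the ground as described. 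One would instead need a cutoff in the boson number, or the route actually taken in \cite{GMM2014}, where $\WW{}{V}$ is constructed as a time-ordered integral series built on the semigroup $e^{-t\Id\Gamma(\omega)}$ (the present paper says exactly this at the start of Subsect.~\ref{ssec-wBDG}) and only afterwards identified as the solution of \eqref{SDE-spin}. Relatedly, the hard analytic content --- propagating $\dom(\Id\Gamma(\omega))$-regularity and passing to the limit in the regularization, which requires the kind of commutator bounds systematized in App.~\ref{app-comm} --- is precisely the part you gesture at rather than carry out.

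The rest of the sketch is sound in spirit and in fact mirrors computations that do appear in this paper: the cancellation of the It\^{o} correction $\sum_\ell\|\vp(G_{\ell,\V{x}})Y_s\|^2$ against the $\tfrac12\vp(\V{G}_{\V{x}})^2$ drift term, the vanishing of the $\vp(q)$ and martingale contributions to $\Id\|Y_s\|^2$, and the absorption of $\vsigma\cdot\vp(\V{F}_{\V{x}})$ by $\Id\Gamma(\omega)+\mho(\V{x})$ are exactly the mechanisms used in Lemma~\ref{cor-spin1} (with trivial weight), and your Gronwall argument does yield the exponent $\int_0^t(\mho-V)$ of \eqref{bd-WW}. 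Two caveats: an It\^{o}/Gronwall argument only gives \eqref{bd-WW} almost surely, whereas the theorem asserts it pointwise on $\Omega$, which comes from the explicit series construction of the operators; and the flow and Markov properties in (4)--(5) indeed follow from pathwise uniqueness, but via the strong-solution statement (Thm.~\ref{thm-str-sol}) rather than uniqueness alone when the initial data are merely $\fF_\tau$-measurable.
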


\begin{thm}[\cite{GMM2014}]\label{thm-str-sol}
Assume that $V$ is continuous and bounded.
Let $(\Lambda_{\tau,t}^{\mathrm{W}})_{\tau\le t\in I}$ denote the stochastic
flow defined in Thm.~\ref{thm-WW}(3) for the special choices
$\BB=(\Omega_{\mathrm{W}},\fF^{\mathrm{W}},(\fF_t^{\mathrm{W}})_{t\in I},\PP_{\mathrm{W}})$ and
$\V{B}=\mathrm{pr}$; see Ex.~\ref{ex-stoch}(1). Then $(\Lambda_{0,t}^{\mathrm{W}})_{t\in I}$ is a
strong solution of \eqref{eq-Xq}\&\eqref{SDE-spin} in the sense that,
for any stochastic basis $(\Omega,\fF,(\fF_t)_{t\in I},\PP)$ and Brownian
motion $\V{B}$ as explained in the beginning of this subsection, and for any $\fF_0$-measurable
$(\V{q},\eta):\Omega\to\RR^\nu\times\FHR$, the (up to indistinguishability) unique solution of
\eqref{eq-Xq}\&\eqref{SDE-spin} with $\tau=0$ is given, for $\PP$-a.e. $\vgamma\in\Omega$, by \begin{equation}\label{eq-str-sol}
(\V{X}^{\V{q}}_t,\WW{t}{V}[\V{X}^{\V{q}}]\eta)(\vgamma)
=\Lambda_{0,t}^{\mathrm{W}}(\V{q}(\vgamma),\eta(\vgamma),\V{B}_\bullet(\vgamma))
\,,\quad t\in I.
\end{equation}
\end{thm}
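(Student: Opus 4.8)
The statement is an instance of the Yamada--Watanabe principle: since the system \eqref{eq-Xq}\&\eqref{SDE-spin} possesses a weak solution and enjoys pathwise uniqueness, it admits a strong solution given by a fixed measurable functional of the initial datum and the driving Brownian motion, and the point of the theorem is to identify this functional with the Wiener-space flow $\Lambda^{\mathrm{W}}$. Both ingredients are already at hand: Thm.~\ref{thm-WW}(3), applied on the Wiener basis of Ex.~\ref{ex-stoch}(1), furnishes a weak solution, while the same statement on an arbitrary basis gives pathwise uniqueness (``the (up to indistinguishability) unique element''). It is also convenient to note at the outset that $\FHR$, the space $\dom(\Id\Gamma(\omega))$ with its graph norm, the path space $C(I,\dom(\Id\Gamma(\omega)))$, and $\Omega_{\mathrm{W}}$ are all Polish, so that regular conditional probabilities, measurable disintegrations, and measurable selections are available in the forms needed below.

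The plan is first to exploit the triangular structure of the system. For both admissible vector fields --- $\V{\beta}=\V{0}$ and the Brownian-bridge drift $\V{\beta}(s,\V{x})=(\V{y}-\V{x})/(t-s)$ --- equation \eqref{eq-Xq} is decoupled from $Y$ and is a classical SDE with an adapted, explicit strong solution; hence $\V{X}^{\V{q}}=\mathsf{X}(\V{q},\V{B}_{\bullet})$ for a fixed measurable map $\mathsf{X}$, with $\V{X}^{\V{x}}=\mathsf{X}(\V{x},\mathrm{pr})$ on the Wiener basis. It thus remains to treat the component $\WW{\bullet}{V}[\V{X}^{\V{q}}]\eta$, which solves the \emph{linear} equation \eqref{SDE-spin} once the path of $\V{X}^{\V{q}}$ is frozen. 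By conditioning on the $\fF_0$-measurable pair $(\V{q},\eta)$ --- permissible since the increments of $\V{B}$ are independent of $\fF_0$ --- it suffices to establish \eqref{eq-str-sol} for deterministic initial data $(\V{q},\eta)=(\V{x},\psi)$ with $\psi\in\dom(\Id\Gamma(\omega))$.

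Fixing such $(\V{x},\psi)$, I would run the classical doubling argument: disintegrate the joint law of $\big(\V{B}_{\bullet},\WW{\bullet}{V}[\V{X}^{\V{x}}]\psi\big)$ over its first coordinate; on an auxiliary space build a Brownian motion $\V{B}$ together with two processes $Y^1,Y^2$ conditionally independent given $\V{B}$, each distributed according to the disintegration kernel; verify that $\V{B}$ stays a Brownian motion for the filtration generated by $(\V{B},Y^1,Y^2)$, so that both $(\mathsf{X}(\V{x},\V{B}),Y^1)$ and $(\mathsf{X}(\V{x},\V{B}),Y^2)$ solve \eqref{eq-Xq}\&\eqref{SDE-spin}; and invoke pathwise uniqueness to conclude $Y^1=Y^2$ up to indistinguishability, which forces the kernel to be a Dirac mass. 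This yields a measurable map $\Phi_{\V{x},\psi}\colon\Omega_{\mathrm{W}}\to C(I,\dom(\Id\Gamma(\omega)))$ with $\WW{\bullet}{V}[\V{X}^{\V{x}}]\psi=\Phi_{\V{x},\psi}(\V{B}_{\bullet})$ almost surely on \emph{every} basis; evaluating on the Wiener basis, where $\V{B}=\mathrm{pr}$, identifies $\Phi_{\V{x},\psi}$ with $\vgamma\mapsto\WW{\bullet}{V}[\V{X}^{\V{x}}]\psi\,(\vgamma)$. Combined with $\V{X}^{\V{q}}=\mathsf{X}(\V{q},\V{B}_{\bullet})$ and the definition \eqref{def-flow} of $\Lambda^{\mathrm{W}}$, this is exactly \eqref{eq-str-sol} for deterministic data, and the general $\fF_0$-measurable case follows by integration.

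The hard part, I expect, is to make the assertion ``$\WW{\bullet}{V}[\V{X}^{\V{x}}]\psi=\Phi_{\V{x},\psi}(\V{B}_{\bullet})$ on every basis'' genuinely rigorous, i.e.\ the transfer of the stochastic integral in \eqref{SDE-spin} across different stochastic bases. Its integrand $\vp(\V{G}_{\V{X}_s})Y_s$ only acts between $\dom(\Id\Gamma(\omega))$ and $\FHR$ --- it is merely $\Id\Gamma(\omega)^{1/2}$-bounded, by \eqref{rb-a}--\eqref{rb-ad} --- so the integral must be built through a localization, and in order to have a single functional $\Phi_{\V{x},\psi}$ representing the solution simultaneously on all bases one has to realize it as an almost-sure limit of Riemann sums along one fixed nested sequence of partitions and check, using the a~priori bounds \eqref{bd-WW} and \eqref{rb-H(x)} together with the estimates in Hyp.~\ref{hyp-G}, that the localization, this limit, and continuity in the graph norm of $\Id\Gamma(\omega)$ are all unaffected when $\mathrm{pr}$ is replaced by an arbitrary Brownian motion. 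This is the genuinely new bookkeeping; the remaining Yamada--Watanabe machinery, including the verification that $\V{B}$ stays a Brownian motion under the filtration enlargements involved, is standard (cf., e.g., \cite{Me1982,daPrZa1992}).
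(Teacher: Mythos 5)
Your outline is essentially sound, but it takes a genuinely different route from the one behind this statement, which is not proved in the present paper at all: it is imported from \cite{GMM2014}, where the result is obtained constructively rather than from an abstract weak-existence-plus-pathwise-uniqueness principle. There, $\WW{t}{V}[\V{X}^{\V{q}}]$ is \emph{defined} on an arbitrary basis as the limit of an explicit time-ordered integral series built from iterated stochastic integrals of concrete integrands driven by $\V{X}^{\V{q}}$ (cf.\ the remarks opening Sect.~\ref{sec-weights} and the factorization recalled in Sect.~\ref{sec-pos}), so the solution is by construction one fixed measurable functional of $(\V{q},\eta)$ and the driving path; the identity \eqref{eq-str-sol} then reduces to the observation that the same construction on the Wiener basis yields $\Lambda^{\mathrm{W}}$, with Thm.~\ref{thm-WW}(3) used only to identify the limit of the series with the solution. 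What the constructive route buys is exactly the step you flag as the hard part: one never has to transfer an abstractly produced stochastic integral with a merely $\Id\Gamma(\omega)^{\nf{1}{2}}$-bounded integrand across bases, and one obtains the explicit formula that is exploited elsewhere in the paper. What your Yamada--Watanabe route buys is independence from the series construction, at the price of the infinite-dimensional bookkeeping you describe (Polishness of $C(I,\dom(\Id\Gamma(\omega)))$, Borel measurability of graph-norm continuity in the weaker path space, the doubling/conditional-independence argument, and the law-based transfer of the solution property via Riemann sums), none of which is citable off the shelf but all of which is doable. Two places in your sketch need explicit care: the theorem allows $\eta$ with values in $\FHR$, where the SDE characterization of Thm.~\ref{thm-WW}(3) is unavailable, so after treating $\psi\in\dom(\Id\Gamma(\omega))$ you must extend \eqref{eq-str-sol} by approximation, using that both $\WW{t}{V}[\V{X}^{\V{q}}](\vgamma)$ and $\psi\mapsto\Lambda^{\mathrm{W}}_{0,t}(\V{x},\psi,\vgamma)$ are bounded on $\FHR$; and the passage to random $\fF_0$-measurable data is cleaner run in the reverse direction, i.e.\ set $Z_t:=\Lambda^{\mathrm{W}}_{0,t}(\V{q},\eta,\V{B}_\bullet)$, verify by a measurability/Fubini argument that $Z$ solves \eqref{eq-Xq}\&\eqref{SDE-spin} on the given basis, and then conclude by the pathwise uniqueness of Thm.~\ref{thm-WW}(3), rather than conditioning on $\fF_0$ and integrating.
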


\begin{ex}\label{ex-str-sol}
(1) For all $t>s>0$, $\V{y},\V{z}\in\RR^\nu$, and $\psi\in\FHR$, the random variables
$\WW{t-s}{V}[\V{b}^{t-s;\V{z},\V{y}}]\psi$ and $\WW{t-s}{V}[^s\V{b}^{t;\V{z},\V{y}}]\psi$ 
have the same distribution. In fact, both processes 
$(\V{b}^{t-s;\V{z},\V{y}},\WW{}{V}[\V{b}^{t-s;\V{z},\V{y}}]\psi)$
and $(^s\V{b}^{t;\V{z},\V{y}},\WW{}{V}[^s\V{b}^{t;\V{z},\V{y}}]\psi)$ 
solve the same system of equations \eqref{eq-Xq}\&\eqref{SDE-spin}. The only difference
is that the second one is constructed by means of the time shifted basis with
filtration $(\fF_{s+\tau})_{\tau\in[0,t-s]}$ and the time shifted Brownian motion 
$(\V{B}_{s+\tau})_{\tau\in[0,t-s]}$.
Since $(\V{B}_\tau)_{\tau\in[0,t-s]}$ and $(\V{B}_{s+\tau})_{\tau\in[0,t-s]}$ have the same
distribution and since the initial conditions $\V{q}=\V{z}$ and $\eta=\psi$ are constant, 
the claim follows from Thm.~\ref{thm-str-sol}.

\smallskip

\noindent(2) For all $s>0$, $\V{x},\V{z}\in\RR^\nu$, and $\psi\in\FHR$, the random variables
$\WW{s}{V}[\V{b}^{s;\V{z},\V{x}}]$,
$\WW{s}{V}[\bar{\V{b}}{}^{s;\V{z},\V{x}}]\psi$, and $\WW{s}{V}[\hat{\V{b}}{}^{s;\V{z},\V{x}}]\psi$ 
have the same distribution, where we use the notation introduced in Ex.~\ref{ex-stoch}(2)\&(3).
In fact, $\V{b}^{s;\V{z},\V{x}}$, $\bar{\V{b}}{}^{s;\V{z},\V{x}}$ and $\hat{\V{b}}{}^{s;\V{z},\V{x}}$ 
satisfy formally the same SDE; only the underlying bases and the driving Brownian motions are 
different. Hence, the claim follows again from Thm.~\ref{thm-str-sol}.
\end{ex}

\begin{thm}[\cite{GMM2014}]\label{thm-WW-rev}
Let $V:\RR^\nu\to\RR$ be bounded and continuous, $\V{x}\in\RR^\nu$, $s\in I$, 
and let $\WW{s}{V}[\sR_s{\V{X}^{\V{x}}}]$ be given by Thm.~\ref{thm-WW} applied with 
$\sR_s{\BB}$ as underlying stochastic basis. Then the following identity holds $\PP$-a.s.,
\begin{align*}
\WW{s}{V}[\V{X}^{\V{x}}]^*&=\WW{s}{V}[\sR_s{\V{X}^{\V{x}}}].
\end{align*}
\end{thm}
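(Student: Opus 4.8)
The plan is to establish the adjoint identity $\WW{s}{V}[\V{X}^{\V{x}}]^*=\WW{s}{V}[\sR_s{\V{X}^{\V{x}}}]$ by testing it against vectors in the core $\dom(\Id\Gamma(\omega))$ and exploiting the SDE characterization from Thm.~\ref{thm-WW}(3). First I would fix $\psi,\eta\in\dom(\Id\Gamma(\omega))$ (thought of as deterministic initial conditions) and consider the two continuous semi-martingales $Y_\tau:=\WW{\tau}{V}[\V{X}^{\V{x}}]\psi$ and $Z_\tau:=\WW{\tau}{V}[\sR_s{\V{X}^{\V{x}}}]\eta$, solving \eqref{SDE-spin} along the forward process $\V{X}^{\V{x}}$ and along the time-reversed process $\sR_s\V{X}^{\V{x}}$ respectively (the latter being a semi-martingale with respect to the reversed basis $\BB[\sR_s\V{X}^{\V{x}}]$, as recalled in Subsect.~\ref{ssec-SDE} and Ex.~\ref{ex-stoch}). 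The goal is then to prove $\SPn{Y_s}{\eta}=\SPn{\psi}{Z_s}$ for $\PP$-a.e.\ $\vgamma$; since $\dom(\Id\Gamma(\omega))$ is dense in $\FHR$ and the operators are bounded (Thm.~\ref{thm-WW}(2)), this yields the claimed operator identity almost surely.

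The key computation is to examine the scalar process $\tau\mapsto\SPn{Y_\tau}{Z_{s-\tau}}$ on $[0,s]$ — i.e.\ pairing the forward solution run up to time $\tau$ with the reversed solution run up to time $s-\tau$ — and to show it is constant. Its value at $\tau=0$ is $\SPn{\psi}{Z_s}$ and at $\tau=s$ it is $\SPn{Y_s}{\eta}$, which is exactly what we want. To see constancy I would apply the It\^o product rule. Here one must be careful: $Z_{s-\tau}$ as a process in $\tau$ is adapted to the forward filtration (it is built from $\V{X}^{\V{x}}$ reversed, hence from the Brownian increments on $[s-\tau,s]$), so one is really running time backwards in the second factor. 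The differentials of $Y$ produce the terms $i\vp(\V{G})\,Y\,\Id\V{X}$ and $-(\wh H+V)Y\,\Id s$ from \eqref{SDE-spin}; the differentials of $Z_{s-\tau}$ produce the time-reversed analogues. When one forms $\Id\SPn{Y_\tau}{Z_{s-\tau}}$, the drift contributions $-\SPn{(\wh H(\V{X}_\tau)+V)Y_\tau}{Z_{s-\tau}}$ and $-\SPn{Y_\tau}{(\wh H(\V{X}_\tau)+V)Z_{s-\tau}}$, together with the It\^o cross-variation term coming from the two stochastic integrals against the same Brownian motion (which produces a second-order term involving $\vp(\V{G})^2$), should cancel by self-adjointness of $\wh H(\V{X}_\tau)$ when $q=0$ — and more generally by the structure $\wh H=\tfrac12\vp(\V{G})^2-\tfrac i2\vp(q)-\vsigma\cdot\vp(\V{F})+\Id\Gamma(\omega)$, where the anti-self-adjoint part $-\tfrac i2\vp(q)$ is precisely what is generated by the first-order It\^o correction to $\vp(\V{G})\,\Id\V{X}$. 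The martingale parts combine into a single stochastic integral that one checks is identically zero because the integrands cancel pathwise (the forward $+i\vp(\V{G})Y\,\Id\V{X}$ pairs against $Z$ while the reversed term on $Z$ contributes with the opposite sign after integration by parts in the reversal).

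The main obstacle I expect is the rigorous handling of the time-reversal and the simultaneous It\^o calculus for two semi-martingales adapted to \emph{different} filtrations. One cannot naively apply the product rule to $Y_\tau$ and $Z_{s-\tau}$ since they live on $\BB$ and $\BB[\sR_s\V{X}^{\V{x}}]$ respectively; the clean way around this is to pick one reference basis — namely to realize both processes on the reversed basis $\BB[\sR_s\V{X}^{\V{x}}]$ using the strong-solution property of Thm.~\ref{thm-str-sol} and Ex.~\ref{ex-stoch}(2), so that $\sR_s\V{X}^{\V{x}}$ solves the explicit bridge-type SDE \eqref{bridge-rev} driven by the new Brownian motion $\hat{\V{B}}$, while the \emph{reversed} forward solution $\tau\mapsto Y_{s-\tau}$ becomes a bona fide semi-martingale on that same basis. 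Once both factors are expressed as It\^o processes with respect to $\hat{\V{B}}$, the product-rule cancellation is a finite computation using only the relative bounds \eqref{rb-a}–\eqref{rb-vp2}, the regularity in Hyp.~\ref{hyp-G}, and the core property $\psi,\eta\in\dom(\Id\Gamma(\omega))$ to justify that all the operator expressions $\wh H(\V{x})\psi$, $\vp(\V{G}_{\V{x}})\psi$, $\vp(\V{G}_{\V{x}})^2\psi$ are genuine $\FHR$-valued and the stochastic integrals well-defined. Since this is quoted from \cite{GMM2014}, I would carry out the cancellation only schematically and refer there for the measurability and integrability bookkeeping, emphasizing the two structural facts that drive it: self-adjointness (mod the explicit $-\tfrac i2\vp(q)$ term) of $\wh H(\V{x})$, and the It\^o-correction matching that term.
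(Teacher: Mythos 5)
This theorem carries no proof in the present paper: it is imported verbatim from \cite{GMM2014} (only the kernel symmetry \eqref{T(x,y)-sym} and the applications in Lem.~\ref{lem-bridget1} use it), so your plan has to be judged on its own. It has a genuine gap at exactly the point you flag as the ``main obstacle''. Your central computation requires applying the It\^{o} product rule to $\tau\mapsto\SPn{Y_\tau}{Z_{s-\tau}}$, hence both factors must be semi-martingales with respect to \emph{one} filtration. Your proposed fix — pass to the basis $\BB[\sR_s\V{X}^{\V{x}}]$ and claim that the reversed forward solution $\tau\mapsto Y_{s-\tau}=\WW{s-\tau}{V}[\V{X}^{\V{x}}]\psi$ ``becomes a bona fide semi-martingale on that same basis'' — does not work: this process is not even adapted to $(\fF[\sR_s\V{X}^{\V{x}}]_\tau)_{\tau\in[0,s]}$. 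At $\tau=0$ it would have to be measurable with respect to $\fF[\sR_s\V{X}^{\V{x}}]_0$, which is (up to null sets) $\sigma(\V{X}_s^{\V{x}})$, whereas $Y_s$ depends on the whole forward path on $[0,s]$; symmetrically, $Z_{s-\tau}$ is not adapted to $(\fF_\tau)$. The strong-solution property of Thm.~\ref{thm-str-sol} only expresses the solution as a measurable functional of the initial datum and the driving path — it cannot manufacture adaptedness to the backward filtration. Consequently the drift/cross-variation cancellation you describe (including the ``quadratic covariation'' between an integral against $\V{B}$ and one against $\hat{\V{B}}$) has no meaning within standard It\^{o} calculus; making it rigorous would require genuinely anticipating, two-sided stochastic calculus, which your sketch neither invokes nor sets up. The algebraic heuristics (self-adjointness of $\wh{H}(\V{x})$ up to $-\tfrac{i}{2}\vp(q_{\V{x}})$, matching with the It\^{o} correction) are the right formal mechanism, but they are not the missing ingredient.

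For comparison, the proof in \cite{GMM2014} avoids any backward It\^{o} calculus by working with the explicit representation of the integrand — the same kind of normal-ordered factorization recalled in \eqref{fact}, built from a scalar semi-martingale and $\HP$-valued semi-martingales $U^{\pm}$ defined by pathwise integrals along the driving process. Taking the adjoint of that explicit expression exchanges the creation- and annihilation-type building blocks (and conjugates the scalar part), and a deterministic change of variables $r\mapsto s-r$ in the defining integrals shows that the building blocks evaluated along $\sR_s\V{X}^{\V{x}}$ coincide $\PP$-a.s.\ with the exchanged ones evaluated along $\V{X}^{\V{x}}$; uniqueness from Thm.~\ref{thm-WW}(3) then identifies the two operators. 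If you want to salvage a ``dynamical'' argument instead, the viable route is an approximation by time-ordered products over partitions of $[0,s]$, where taking adjoints reverses the order of the factors and hence reverses time, followed by a limit — not the pathwise pairing of forward and backward solutions.
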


If the Brownian bridge is chosen as driving process, then we also know the following:

\begin{prop}[\cite{GMM2014}]\label{prop-ida}
Let $V:\RR^\nu\to\RR$ be bounded and continuous and $t>0$. 
Then, for all $\V{x},\V{y}\in\RR^\nu$, we can choose versions of
$\WW{}{V}[\V{b}^{t;\V{x},\V{y}}]$ such that 
$[0,t]\times\RR^{2\nu}\times\Omega\ni(s,\V{x},\V{y},\vgamma)\mapsto
\WW{s}{V}[\V{b}^{t;\V{x},\V{y}}](\vgamma)\in\LO(\FHR)$ is measurable with a separable image and
$(s,\V{x},\V{y})\mapsto\WW{s}{V}[\V{b}^{t;\V{x},\V{y}}](\vgamma)$
is continuous from $(0,t]\times\RR^{2\nu}$ into $\LO(\FHR)$ for all $\vgamma\in\Omega$.
\end{prop}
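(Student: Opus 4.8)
\emph{Approach.} The strategy is to reduce the claimed regularity of $\WW{s}{V}[\V{b}^{t;\V{x},\V{y}}]$ to the general existence and measurability statement of Thm.~\ref{thm-WW} together with the strong-solution property of Thm.~\ref{thm-str-sol}, and then to upgrade measurability to pathwise continuity in $(s,\V{x},\V{y})$ by a Kolmogorov-type continuity argument applied in the operator norm on $\LO(\FHR)$. The key point is that, by Thm.~\ref{thm-str-sol}, for a fixed Brownian motion $\V{B}$ on Wiener space the solution is realized as a deterministic functional $\Lambda_{0,\bullet}^{\mathrm{W}}$ of the driving path; here the driving process is the Brownian bridge $\V{b}^{t;\V{x},\V{y}}$, which itself can be written as a fixed continuous functional of a single underlying Brownian motion and of the endpoints $(\V{x},\V{y})$ (indeed $\V{b}^{t;\V{x},\V{y}}_s = \V{x} + \tfrac{s}{t}(\V{y}-\V{x}) + (\text{standard Brownian bridge})_s$). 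Thus everything is governed by one probability space that does \emph{not} depend on $(\V{x},\V{y})$, and the dependence on these parameters enters only through the coefficients of the SDE~\eqref{SDE-spin}, which are smooth in $\V{x}$ by Hyp.~\ref{hyp-G}.

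\emph{Steps.} First I would fix the canonical setup: work on Wiener space, let $\V{w}$ be the standard bridge on $[0,t]$ obtained from $\V{B}$, and set $\V{b}^{t;\V{x},\V{y}} := \V{x} + \tfrac{\bullet}{t}(\V{y}-\V{x}) + \V{w}$, so that $(s,\V{x},\V{y},\vgamma)\mapsto \V{b}^{t;\V{x},\V{y}}_s(\vgamma)$ is jointly measurable and, for each $\vgamma$, jointly continuous. Second, invoke Thm.~\ref{thm-WW}(1)--(3) with this driving process to obtain the operators $\WW{s}{V}[\V{b}^{t;\V{x},\V{y}}]$, their $\fB(\LO(\FHR))$-measurability, and the defining SDE; joint measurability in $(s,\V{x},\V{y},\vgamma)$ follows because the coefficients depend measurably on the driving path and on $\V{x}$ (one can also cite the stochastic-flow statement Thm.~\ref{thm-WW}(4) to organize this). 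Third, and this is the analytic core, I would derive an $L^q(\PP)$-estimate of the form
\begin{align*}
\EE\big[\,\|\WW{s}{V}[\V{b}^{t;\V{x},\V{y}}] - \WW{s'}{V}[\V{b}^{t;\V{x}',\V{y}'}]\|_{\LO(\FHR)}^{q}\,\big]
\le C_{q,t,\delta}\big(|s-s'| + |\V{x}-\V{x}'| + |\V{y}-\V{y}'|\big)^{q}
\end{align*}
valid for $s,s'\in[\delta,t]$ with $\delta>0$, for $q$ large. This is obtained by writing the difference of the two processes, using the SDE~\eqref{SDE-spin} (and, for the $s$-increment, also the boundedness built into~\eqref{bd-WW} and~\eqref{rb-H(x)}), applying It\^o's formula to $\|\cdot\|^{q}$ or a Burkholder--Davis--Gundy estimate to the martingale part, bounding the drift terms with the uniform operator bounds~\eqref{rb-H(x)} and $\mho$, and Gr\"onwall's lemma; the $\V{x},\V{y}$-differences are controlled because $\V{x}\mapsto\vp(\V{G}_{\V{x}})$, $\vp(q_{\V{x}})$, $\vsigma\cdot\vp(\V{F}_{\V{x}})$ are locally Lipschitz as $\LO(\fdom(\Id\Gamma(\omega)),\FHR)$-valued maps by Hyp.~\ref{hyp-G}, and because $(\V{x},\V{y})\mapsto\V{b}^{t;\V{x},\V{y}}$ is affine with unit slope. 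Fourth, apply the Kolmogorov--Chentsov continuity theorem in the parameter space $(s,\V{x},\V{y})\in[\delta,t]\times\RR^{2\nu}$ (Banach-space valued version, e.g.\ via the standard chaining argument) to produce, for each $\delta$, a modification that is continuous in $(s,\V{x},\V{y})$ for $\PP$-a.e.\ $\vgamma$; since $\LO(\FHR)$ is a complete metric space this is legitimate. Finally, patch over $\delta = 1/n$, $n\in\NN$, to get continuity on $(0,t]\times\RR^{2\nu}$, intersect the corresponding full-measure sets, and redefine $\WW{s}{V}[\V{b}^{t;\V{x},\V{y}}]$ on the exceptional null set (say to $0$, or to the value given by Thm.~\ref{thm-WW}); the resulting version is jointly measurable with separable image on $[0,t]\times\RR^{2\nu}\times\Omega$ and continuous in $(s,\V{x},\V{y})$ on $(0,t]\times\RR^{2\nu}$ for \emph{every} $\vgamma$, as claimed.

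\emph{Main obstacle.} The delicate step is the operator-norm moment estimate in Step three: one is differencing two semimartingales that solve SDEs with \emph{different} (because $\V{x}$-dependent) unbounded coefficients $\wh{H}(\V{x})$ and driven by \emph{different} bridges, so the naive difference does not close on $\FHR$ alone — one must work on $\dom(\Id\Gamma(\omega))$, exploit that $\WW{}{V}$ maps into $C([0,t],\dom(\Id\Gamma(\omega)))$ by Thm.~\ref{thm-WW}(3), and control quantities like $(\wh{H}(\V{x})-\wh{H}(\V{x}'))Y_s$ using that $\wh{H}(\V{x})-\wh{H}(\V{x}')$ is $\Id\Gamma(\omega)$-bounded with small bound when $|\V{x}-\V{x}'|$ is small (again from Hyp.~\ref{hyp-G} and the relative bounds~\eqref{rb-a}--\eqref{rb-vp2}). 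Keeping the constants uniform in $(\V{x},\V{y})$ over all of $\RR^{2\nu}$ is automatic here because Hyp.~\ref{hyp-G} provides global bounds on $\sup_{\V{x}}\|\V{c}_{\V{x}}\|_{\HP}$ and a global Lipschitz constant for $\V{F}$; the restriction to $s\ge\delta$ is needed only because the bridge drift $\tfrac{\V{y}-\V{x}}{t-s}$ — equivalently the difference of drifts $\tfrac{(\V{y}-\V{x})-(\V{y}'-\V{x}')}{t-s}$ — is not bounded as $s\uparrow t$, which is exactly why continuity is asserted only on $(0,t]$ rather than $[0,t]$.
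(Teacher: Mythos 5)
There is a genuine gap, and it sits exactly where you flag the "main obstacle": your Step three asserts the moment estimate
$\EE\big[\|\WW{s}{V}[\V{b}^{t;\V{x},\V{y}}]-\WW{s'}{V}[\V{b}^{t;\V{x}',\V{y}'}]\|_{\LO(\FHR)}^q\big]\le C(|s-s'|+|\V{x}-\V{x}'|+|\V{y}-\V{y}'|)^q$, but the It\^{o}/BDG/Gronwall machinery you invoke only produces bounds on $\|(\WW{s}{V}-\wt{\mathbb{W}}_{s'}^{V})\eta\|$ for each \emph{fixed} initial vector $\eta$, and only for $\eta\in\dom(\Id\Gamma(\omega))$ (or in the domain of suitable weights): differencing the generators costs a factor of $(1+\Id\Gamma(\omega))^{\eh}$ on the vector (compare the term $d_p(s)^2\|\theta^{\eh}\Theta_s\tilde\psi_s\|^2$ in Lem.~\ref{lem-diff1}), so one does not even get a $\psi$-uniform bound $\le C(\ldots)\|\psi\|$ without inserting an extra smoothing step. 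More importantly, the operator norm is a supremum over the unit ball \emph{inside} the expectation, and $\EE[\sup_\psi(\cdot)]$ is not controlled by $\sup_\psi\EE[(\cdot)]$; your sketch offers no mechanism to close this, and it is precisely the kind of new difficulty the paper alludes to when discussing operator-norm statements. In addition, your claimed Lipschitz increments in $(\V{x},\V{y})$ are false as stated for $V$ merely bounded and continuous, since the difference contains $|V(\V{b}^{t;\V{x},\V{y}}_r)-V(\V{b}^{t;\V{x}',\V{y}'}_r)|$, which is only controlled by the modulus of continuity of $V$; this particular point is repairable by factoring out the scalar $e^{-\int_0^sV(\V{b}_r)\Id r}$ as in \eqref{def-WWV} and treating only $V=0$ by the SDE, but you do not do so, and without a power-type modulus Kolmogorov--Chentsov does not apply.

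For comparison: the paper itself gives no proof of Prop.~\ref{prop-ida}; it is quoted from \cite{GMM2014}, where $\WW{}{V}$ is constructed as a time-ordered integral series (see the remark in Sect.~\ref{sec-weights} that in \cite{GMM2014} "the convergence of a time-ordered integral series" had to be studied, and the factorization used in Sect.~\ref{sec-pos}). There the joint continuity in $(s,\V{x},\V{y})$ is obtained pathwise, for \emph{every} $\vgamma$, by choosing good versions of the (vector-valued) building blocks and proving locally uniform operator-norm convergence of the series together with continuity of each term; no Kolmogorov modification of the full operator-valued field, and no operator-norm moment estimate of the type you postulate, is needed. So your route is genuinely different, but as written its central estimate is unsubstantiated and, in the $(\V{x},\V{y})$-variables with general continuous $V$, not correct.
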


%%%%%%%%%%%%%%%%%%%%%%%%%%%%%%%%%%%%%%%%%%%%%%%%

\subsection{Kato-decomposable potentials}\label{ssec-Kato}

For $\V{x}\not=\V{0}$ and $r>0$, we set
\begin{align*}
g_r(\V{x})&:=1_{|\V{x}|<r}\cdot
\left\{\begin{array}{ll}
-\ln|\V{x}|,&\textrm{if}\;\nu=2,\\
|\V{x}|^{2-\nu},&\text{if}\;\nu>2,
\end{array}
\right.
\end{align*}
and we recall that a {\em real-valued} measurable function $V:\RR^\nu\to\RR$ is in the Kato-class,
in symbols $V\in\cK(\RR^\nu)$, iff
\begin{equation*}
\lim_{r\downarrow0}\sup_{\V{x}\in\RR^\nu}\int_{\RR^\nu}g_r(\V{x}-\V{y})\,|V(\V{y})|\,\Id\V{y}=0.
\end{equation*}
It is said to be in the local Kato-class, in symbols $V\in\cK_\loc(\RR^\nu)$,
iff $1_KV\in\cK(\RR^\nu)$, for every compact $K\subset\RR^\nu$.
It is called Kato-decomposable, in symbols $V\in\cK_\pm(\RR^\nu)$,
if there exist $V_-\in\cK(\RR^\nu)$ and $V_+\in\cK_\loc(\RR^\nu)$ with
$V_\pm\ge0$ and $V=V_+-V_-$.

We refer to \cite{AizenmanSimon1982,BHL2000,Simon1982,Sznitman1998} for detailed
treatments of Kato decomposable potentials and examples; for instance,
$L^p_{\RR}(\RR^\nu)\subset\cK(\RR^\nu)$ and $L^p_{\RR,\loc}(\RR^\nu)\subset\cK_\loc(\RR^\nu)$, 
if $p>\nu/2$. In view of the subsequent discussion it makes sense to recall the following:

\begin{rem}\label{rem-Kato}
For all $V\in\cK_\loc(\RR^\nu)$ and $\V{x}\in\RR^\nu$, we $\PP$-a.s. have
$|V|(\V{B}_{\bullet}^{\V{x}})\in L^1_\loc([0,\infty))$; see, e.g., \cite[\textsection I.1.2]{Sznitman1998}.
Likewise, if $V\in\cK_\loc(\RR^\nu)$, $\V{x},\V{y}\in\RR^\nu$, and $t>0$, then we $\PP$-a.s. have
$|V|(\V{b}_{\bullet}^{t;\V{x},\V{y}})\in L^1([0,t])$; see, e.g., the proof of Lem.~C.8 in \cite{BHL2000}.
\end{rem}

Of particular importance for us are the standard facts collected in the following lemma, where
the Euclidean heat kernel is denoted by
$$
p_t(\V{x},\V{y}):=(2\pi t)^{-\nf{\nu}{2}}e^{-|\V{x}-\V{y}|^2/2t},\quad t>0,\;\V{x},\V{y}\in\RR^\nu.
$$

\begin{lem}\label{lem-dirk}
Let $V\in\cK_\pm(\RR^\nu)$ and $p>0$. Then the following holds:

\smallskip

\noindent
{\rm(1)} There exists $c>0$, depending only on $pV_-$, such that
for all $t>0$ and $\V{x},\V{y}\in\RR^\nu$, 
\begin{align}\label{dirk0}
\sup_{\V{z}\in\RR^\nu}\EE\big[e^{p\int_0^tV(\V{B}_s^{\V{z}})\Id s}\big]&\le2e^{ct},
\\\label{dirk1}
p_t(\V{x},\V{y})\,\EE\big[e^{p\int_0^tV(\V{b}_s^{t;\V{x},\V{y}})\Id s}\big]&\le ct^{-\nf{\nu}{2}}
e^{ct-(\V{x}-\V{y})^2/4t}.
\end{align}
{\rm(2)} For every compact $K\subset\RR^\nu$,
\begin{align}\label{dirk2a}
\lim_{s\downarrow0}\sup_{\V{y}\in K}\EE\big[|e^{-\int_0^sV(\V{B}_r^{\V{y}})\Id r}-1|^p\big]&=0.
%\\\label{dirk2}
%\lim_{s\downarrow0}\sup_{t\ge\tau_1}\sup_{\V{x},\V{y}\in K}p_t(\V{x},\V{y})\,
%\EE\big[|e^{-\int_{t-s}^tV(\V{b}_r^{t;\V{x},\V{y}})\Id r}-1|^p\big]&=0.
\end{align}
In the case $V\in\cK(\RR^\nu)$, the compact set $K$ in \eqref{dirk2a} 
can be replaced by $\RR^\nu$.
\end{lem}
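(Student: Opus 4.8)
The plan is to treat the three estimates separately, relying on the standard Kato-class machinery, so that the heavy lifting is really about bookkeeping the dependence of constants on $pV_-$ and on the compact set $K$. For part (1), estimate \eqref{dirk0} is the classical Khasminskii-type bound: since $pV_-\in\cK(\RR^\nu)$, by definition there is $t_0>0$ with $\sup_{\V{x}}\EE[\int_0^{t_0}pV_-(\V{B}^{\V{x}}_s)\,\Id s]\le \tfrac12$, and then an iteration of the Markov property (splitting $[0,t]$ into intervals of length $t_0$) gives $\sup_{\V{z}}\EE[e^{p\int_0^t V_-(\V{B}^{\V{z}}_s)\Id s}]\le 2\cdot 2^{t/t_0}=2e^{ct}$ with $c=t_0^{-1}\ln 2$ depending only on $pV_-$. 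Since $V\le V_-$ pointwise would be the wrong direction, note instead that $V\ge -V_-$, hence $e^{p\int_0^t V}\le e^{p\int_0^t V_-}$... wait, that is also wrong; the correct observation is $V = V_+-V_-\le$ nothing uniform, so one instead bounds $e^{p\int_0^t V_-(\V{B}^{\V{z}}_s)\,\Id s}$ from above and uses $e^{p\int_0^t V}\le e^{p\int_0^t V_-}$ only if $V\le V_-$, which holds since $V_+\ge 0$ gives $V=V_+-V_-\ge -V_-$ — so for the \emph{upper} bound on $\EE[e^{p\int_0^t V}]$ one needs $V\le V_-$, i.e. $V_+\le 2V_-$, which is false in general. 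The honest route: write $e^{p\int_0^t V}=e^{p\int_0^t V_+}e^{-p\int_0^t V_-}\le e^{p\int_0^t V_+}$ is again the wrong sign. In fact \eqref{dirk0} as stated bounds $\EE[e^{+p\int V}]$, and this is finite only because... I should reread: yes, the statement is $\sup_{\V{z}}\EE[e^{p\int_0^t V(\V{B}^{\V{z}}_s)\Id s}]\le 2e^{ct}$ with $c$ depending only on $pV_-$, which forces $V\le V_-$ up to the constant; the resolution is that the bound is really applied with $V$ replaced by $-V$ in the semigroup convention, or equivalently the relevant quantity is $\EE[e^{-p\int_0^t V}]\le\EE[e^{p\int_0^t V_-}]$ using $-V\le V_-$. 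I would therefore state it in that form and cite \cite[Thm.~B.1.1 or Prop.~B.6.3]{Simon1982} (Khasminskii's lemma) for the iteration, tracking that the only input is the Kato norm of $pV_-$.

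For \eqref{dirk1} I would reduce the Brownian-bridge expectation to a Brownian-motion expectation via the standard bridge–motion change of measure: for a bounded functional one has $p_t(\V{x},\V{y})\,\EE[\Phi(\V{b}^{t;\V{x},\V{y}})] = \EE[\Phi(\V{B}^{\V{x}})\,\delta_{\V{y}}(\V{B}^{\V{x}}_t)]$ heuristically, made rigorous through the identity relating bridge and motion on $[0,t/2]$ and $[t/2,t]$ and the Chapman–Kolmogorov factorization $p_t = \int p_{t/2}(\V{x},\cdot)p_{t/2}(\cdot,\V{y})$. Concretely, split the time integral at $t/2$, use the Markov property of the bridge at time $t/2$, bound each half by a supremum of a Brownian-motion exponential moment of the type controlled in \eqref{dirk0} (applied with $2p$ in place of $p$, say, after a Cauchy–Schwarz), and absorb the Gaussian factor $p_{t/2}(\V{x},\V{z})p_{t/2}(\V{z},\V{y})$ integrated over $\V{z}$, which produces exactly the $t^{-\nu/2}e^{-(\V{x}-\V{y})^2/4t}$ behaviour (the $1/4t$ rather than $1/2t$ coming from the convolution of the two half-time Gaussians). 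This is carried out in detail in \cite[Lem.~C.8 and its proof]{BHL2000} and \cite[§I.1]{Sznitman1998}; I would just invoke it, noting again that the constant depends only on $pV_-$ through \eqref{dirk0}.

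For part (2), estimate \eqref{dirk2a}, the idea is: on a compact $K$, $1_{\mathrm{nbhd}(K)}V\in\cK(\RR^\nu)$, so it suffices to treat $V\in\cK(\RR^\nu)$ for the $K=\RR^\nu$ claim and then localize. Write $e^{-\int_0^s V(\V{B}^{\V{y}}_r)\Id r}-1 = -\int_0^s V(\V{B}^{\V{y}}_r)e^{-\int_0^r V(\V{B}^{\V{y}}_u)\Id u}\,\Id r$; take $|\cdot|^p$, apply Hölder to separate the exponential (bounded in every $L^q$ uniformly in $\V{y}$ by \eqref{dirk0} with small $s$) from $(\int_0^s |V|(\V{B}^{\V{y}}_r)\Id r)^p$, and then use another Khasminskii-type estimate: $\EE[(\int_0^s|V|(\V{B}^{\V{y}}_r)\Id r)^p]\le p!\,(\sup_{\V{z}}\EE[\int_0^s|V|(\V{B}^{\V{z}}_r)\Id r])^{\cdots}$ — more precisely $\EE[(\int_0^s|V|\,\Id r)^p]\to 0$ as $s\downarrow0$ uniformly in $\V{y}$ because the Kato condition says $\sup_{\V{y}}\EE[\int_0^s|V|(\V{B}^{\V{y}}_r)\Id r]\to0$, combined with the moment bound $\EE[(\int_0^s|V|)^p]\le c_p(\sup_{\V{z}}\EE[\int_0^s|V|(\V{B}^{\V{z}}_r)\Id r])$ valid once that quantity is $\le1$ (again Khasminskii). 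The only subtlety is that for $p\ge1$ one needs the $p$-th moment, not the first, which is exactly what the standard lemma supplies. I expect the main obstacle to be purely expository: getting the constants in \eqref{dirk0}–\eqref{dirk1} to depend \emph{only} on $pV_-$ (and not on $V_+$) requires care about which sign of the exponent one is estimating, and the cleanest fix is to phrase everything via $-V\le V_-$ as indicated above; everything else is a citation to \cite{AizenmanSimon1982,BHL2000,Simon1982,Sznitman1998}.
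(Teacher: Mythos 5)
At the level of strategy you are doing exactly what the paper does: its proof of this lemma consists of the citations you end with (part (1) is quoted from \cite[\textsection I.1.2]{Sznitman1998}, part (2) from Lem.~C.3 and C.5 of \cite{BHL2000}), and your reading of the sign issue is the correct one — as the lemma is actually used later in the paper, the controlled quantities are $\EE[e^{-p\int_0^tV(\V{B}^{\V{z}}_s)\Id s}]\le\EE[e^{p\int_0^tV_-(\V{B}^{\V{z}}_s)\Id s}]$ (via $-V\le V_-$), and Khasminskii's lemma applied to the Kato function $pV_-$ gives the constant depending only on $pV_-$.

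Two of the details you do sketch would not work as written, though. First, for \eqref{dirk1} your mechanism for the Gaussian factor is wrong: convolving the two half-time Gaussians returns $p_t(\V{x},\V{y})$ with exponent $|\V{x}-\V{y}|^2/2t$, not $4t$, and if you bound each half of the bridge by a Brownian-motion exponential moment through the crude density estimate $p_{t-s}(\cdot,\V{y})/p_t(\V{x},\V{y})\le c\,t^{-\nu/2}/p_t(\V{x},\V{y})$, the prefactor $p_t(\V{x},\V{y})$ in \eqref{dirk1} is cancelled and the Gaussian decay is lost altogether. The argument in \cite{BHL2000,Sznitman1998} instead establishes the endpoint-uniform bound $\sup_{\V{x},\V{y}}\EE[e^{p\int_0^tV_-(\V{b}^{t;\V{x},\V{y}}_s)\Id s}]\le c\,e^{ct}$ — using that the one-time marginal of the bridge is an exact Gaussian of variance $s(t-s)/t\le s\wedge(t-s)$, so the Kato smallness of $pV_-$ transfers to the bridge and Khasminskii iterates there — after which $e^{-|\V{x}-\V{y}|^2/4t}$ is nothing more than the trivially weakened exponential already present in $p_t(\V{x},\V{y})$. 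Second, in part (2) your moment bound $\EE[(\int_0^s|V|(\V{B}^{\V{y}}_r)\Id r)^p]$ can be $+\infty$ for every $s>0$ when $V\in\cK_\pm(\RR^\nu)\setminus\cK(\RR^\nu)$, because $|V|$ contains $V_+$, which is only locally Kato and may grow arbitrarily fast at infinity; so the Khasminskii-moment argument applies only to a truncation $1_{B_R}V\in\cK(\RR^\nu)$, and the error coming from $V1_{B_R^c}$ must be controlled separately, by the probability that a path started in the compact $K$ leaves $B_R$ before time $s$ together with the integrable majorant $e^{\int_0^sV_-(\V{B}^{\V{y}}_r)\Id r}$. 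Your phrase ``then localize'' hides exactly this step, which is the substance of Lem.~C.5 (as opposed to C.3) of \cite{BHL2000}.
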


\begin{proof}
(1): See, e.g., \cite[\textsection I.1.2]{Sznitman1998}.
(2) is proved, e.g., in Lem.~C.3 and~C.5 of \cite{BHL2000}.
%\eqref{dirk2} follows from \eqref{dirk2a} and a time reversal argument; see, e.g., again \cite{BHL2000}.
\end{proof}

Let $V\in\cK_\pm(\RR^\nu)$, $F:\RR^\nu\to\RR$ be globally Lipschitz continuous,
and $f\in L^p(\RR^\nu;e^F\Id\V{x})$ with $p\in[1,\infty]$. Then the expectations in 
\begin{equation}\label{def-StV}
(S_t^Vf)(\V{x}):=\EE\big[e^{-\int_0^tV(\V{B}_s^{\V{x}})}f(\V{B}_t^{\V{x}})\big],\quad
\V{x}\in\RR^\nu,\,t\ge0,
\end{equation}
are well-defined and, for all $\V{x}\in\RR^\nu$ and $t>0$, we further have the relation
\begin{align}\label{def-StV(x,y)1}
(S_t^Vf)(\V{x})=\int_{\RR^\nu}S_t^V(\V{x},\V{y})f(\V{y})\Id\V{y},
\end{align}
with
\begin{align}\label{def-StV(x,y)2}
S_t^V(\V{x},\V{y}):=p_t(\V{x},\V{y})\EE\big[e^{-\int_0^tV(\V{b}_s^{t;\V{y},\V{x}})\Id s}\big].
\end{align}
Here the map $(0,\infty)\times\RR^{2\nu}\ni(t,\V{x},\V{y})\mapsto S_t^V(\V{x},\V{y})$
is continuous and can be dominated by means of \eqref{dirk1}; 
see, e.g., \cite[\textsection I.1.3]{Sznitman1998}.
Using \eqref{dirk0}, one can in fact verify that $S_t^V$ with $t>0$ maps $L^p(\RR^\nu;e^F\Id\V{x})$ 
continuously into every $L^q(\RR^\nu;e^F\Id\V{x})$ with $q\in[p,\infty]$. Moreover,
\begin{equation}\label{bd-StV}
\sup_{\mathrm{Lip}(F)\le a}\sup_{\tau_1\le t\le\tau_2}\|e^FS_t^Ve^{-F}\|_{p,q}<\infty,
\quad0<\tau_1\le\tau_2<\infty,\,1\le p\le q\le\infty,
\end{equation}
where $\|\cdot\|_{p,q}$ is the norm on $\LO(L^p(\RR^\nu),L^q(\RR^\nu))$ and the first supremum
is taken over all Lipschitz continuous $F:\RR^\nu\to\RR$ with Lipschitz constant $\le a\in[0,\infty)$.
In the case $p=q$ the choice $\tau_1=0$ is allowed for in \eqref{bd-StV} as well;
compare, e.g., Lem.~C.1 in \cite{BHL2000} or the proof of Thm.~\ref{thm-LNCV}(1) below.

We shall also make use of the following approximation results:

\begin{lem}\label{lem-conny}
{\rm(1)} For every $V\in\cK_\pm(\RR^\nu)$,
there exist $V_n\in C_0^\infty(\RR^\nu,\RR)$, $n\in\NN$, such that, for every compact
$K\subset\RR^\nu$,
\begin{align}\label{approx1}
\lim_{n\to\infty}\sup_{\V{x}\in\RR^\nu}\int_Kg_1(\V{x}-\V{y})|V-V_n|(\V{y})\Id\V{y}&=0,
\end{align}
and, for some $A\ge1$ and every $r\in(0,1]$,
\begin{align}\label{approx2}
\sup_n\sup_{\V{x}\in\RR^\nu}\int_{\RR^\nu}g_r(\V{x}-\V{y})V_{n-}(\V{y})\Id\V{y}
&\le A\sup_{\V{x}\in\RR^\nu}\int_{\RR^\nu}g_r(\V{x}-\V{y})V_{-}(\V{y})\Id\V{y}.
\end{align}
{\rm(2)} Let $V_n\in\cK_\pm(\RR^\nu)$, $n\in\NN$. Then the bound
\begin{align}\label{approx2b}
\sup_n\sup_{\V{x}\in\RR^\nu}\int_{\RR^\nu}g_r(\V{x}-\V{y})V_{n-}(\V{y})\Id\V{y}<\infty,
\end{align}
implies
\begin{align}\label{approx3a}
\forall\,\tau,p>0\,:\quad
\sup_{n\in\NN}\sup_{t\in[0,\tau]}\sup_{\V{x}\in\RR^\nu}
\EE\big[e^{-p\int_0^tV_n(\V{B}^{\V{x}}_s)\Id s}\big]&<\infty.
\end{align}
{\rm (3)} Let $V_n\in\cK(\RR^\nu)$, $n\in\NN$. Then the the analytic condition
\begin{align}\label{approx3c}
\lim_{r\downarrow0}\sup_{n\in\NN}\sup_{\V{x}\in\RR^\nu}\int_{\RR^\nu}
g_r(\V{x}-\V{y})|V_n(\V{y})|\Id\V{y}&=0
\end{align}
is equivalent to the probabilistic condition
\begin{align}\label{approx3b}
\lim_{t\downarrow0}\sup_{n\in\NN}\sup_{\V{x}\in\RR^\nu}
\int_0^t\EE\big[|V_n(\V{B}_s^{\V{x}})|\big]\Id s&=0.
\end{align}
{\rm(4)} Let $V\in\cK_\pm(\RR^\nu)$. Then,
for every sequence $V_n\in\cK_\pm(\RR^\nu)$, $n\in\NN$, satisfying \eqref{approx1}
and \eqref{approx2} and every compact $K\subset\RR^\nu$,
\begin{align}\label{approx3}
\forall\;\tau,p>0:\quad\lim_{n\to\infty}\sup_{t\in[0,\tau]}\sup_{\V{x}\in K}
\EE\big[|e^{-\int_0^tV(\V{B}_s^{\V{x}})\Id s}-e^{-\int_0^tV_n(\V{B}_s^{\V{x}})\Id s}|^p\big]&=0.
\end{align}
If in addition $V-V_n\in\cK(\RR^\nu)$, $n\in\NN$, and \eqref{approx1} holds with $K$ replaced by 
$\RR^\nu$, then the compact set $K$ in \eqref{approx3} can be replaced by $\RR^\nu$.
\end{lem}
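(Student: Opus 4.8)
The plan is to treat the four parts essentially independently, drawing throughout on the characterizations of the Kato and local Kato classes and on standard Brownian-bridge/Brownian-motion estimates already recalled in Lem.~\ref{lem-dirk}. For part~(1), given $V\in\cK_\pm(\RR^\nu)$ with decomposition $V=V_+-V_-$, one mollifies and truncates: set $V_n:=(\varphi_{1/n}*V)\,\theta_n$ where $\varphi_{\ve}$ is a standard mollifier and $\theta_n\in C_0^\infty$ is a cutoff equal to $1$ on the ball $B_n(0)$. Convergence \eqref{approx1} on compacts follows from $\|g_1*|V-\varphi_{1/n}*V|\|_\infty\to0$, which in turn follows because convolution with the locally integrable kernel $g_1$ is continuous on the relevant space and mollification converges in that topology; the truncation by $\theta_n$ contributes nothing on a fixed compact $K$ once $n$ is large. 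For \eqref{approx2} one uses that $V_{n-}\le\varphi_{1/n}*V_-$ (since truncation only shrinks the negative part and mollification is positivity preserving), whence $g_r*V_{n-}\le g_r*(\varphi_{1/n}*V_-)=\varphi_{1/n}*(g_r*V_-)\le\sup_{\V x}(g_r*V_-)(\V x)$ by Young's inequality with the $L^1$-normalized mollifier; so $A=1$ works, and one takes $A\ge1$ for safety.

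For part~(2), the implication \eqref{approx2b}$\Rightarrow$\eqref{approx3a} is exactly the content of the standard quantitative Khasminskii-type bound: the constant $c$ in \eqref{dirk0} of Lem.~\ref{lem-dirk}(1) depends on $V_-$ only through quantities that are monotone in $\sup_{\V x}\int g_r(\V x-\V y)V_-(\V y)\Id\V y$ for small $r$, so a uniform bound \eqref{approx2b} on the $V_{n-}$ yields a single $c$, hence a uniform bound $\sup_n\EE[e^{-p\int_0^tV_n(\V B^{\V x}_s)\Id s}]\le\EE[e^{p\int_0^tV_{n-}(\V B^{\V x}_s)\Id s}]\le2e^{ct}$. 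I would either cite \cite[\textsection I.1.2]{Sznitman1998} (where the dependence of $c$ on the Kato norm is explicit) or reproduce the short Khasminskii iteration. Part~(3) is the classical equivalence between the analytic Kato condition and its probabilistic reformulation, made uniform in $n$: since $\EE[\int_0^t|V_n(\V B_s^{\V x})|\Id s]=\int_0^t\int p_s(\V x,\V y)|V_n(\V y)|\Id\V y\Id s$ and the time-integrated heat kernel $\int_0^tp_s(\V x,\cdot)\Id s$ is comparable to $g_r$ near the diagonal (up to a bounded remainder) for $t\asymp r^2$, the two suprema in \eqref{approx3c} and \eqref{approx3b} control each other with universal comparison constants; the uniformity in $n$ is automatic because these comparisons are pointwise in the potential. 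Again this is in \cite[\textsection I.1.2]{Sznitman1998} or \cite{AizenmanSimon1982}.

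Part~(4) is the substantive one and the main obstacle. The goal is the uniform-in-$n$, uniform-in-$\V x\in K$, uniform-in-$t\in[0,\tau]$ smallness of $\EE\big[|e^{-\int_0^tV(\V B^{\V x}_s)\Id s}-e^{-\int_0^tV_n(\V B^{\V x}_s)\Id s}|^p\big]$. Using $|e^{-a}-e^{-b}|\le|a-b|(e^{-a}\vee e^{-b})$ and H\"older's inequality with exponents $1+\ve$ and $(1+\ve)/\ve$, this is bounded by $\EE\big[\big|\int_0^t(V-V_n)(\V B^{\V x}_s)\Id s\big|^{p(1+\ve)}\big]^{1/(1+\ve)}\cdot\EE\big[(e^{-\int_0^tV(\V B^{\V x}_s)\Id s}\vee e^{-\int_0^tV_n(\V B^{\V x}_s)\Id s})^{p(1+\ve)/\ve}\big]^{\ve/(1+\ve)}$. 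The second factor is bounded uniformly in $n,t,\V x$ by Lem.~\ref{lem-dirk}(1) applied to $V$ and by part~(2) applied to the sequence $V_n$ (whose negative parts satisfy \eqref{approx2b} by \eqref{approx2}). For the first factor one splits $V-V_n=(V_+-V_{n+})-(V_--V_{n-})$ and treats each piece. The term with a $\cK(\RR^\nu)$-bounded difference (the $V_-$ part, or the whole difference under the extra hypothesis) is handled by a Khasminskii moment bound: $\EE[(\int_0^t|W|(\V B^{\V x}_s)\Id s)^m]\le m!\,(\sup_{\V x}\int_0^tp_s(\V x,\V y)|W|(\V y)\Id y\cdot\text{const})^m$, and \eqref{approx1} forces $\sup_{\V x}\int g_1(\V x-\V y)|V_--V_{n-}|(\V y)\Id\V y\to0$, which via the time-integrated-heat-kernel comparison of part~(3) gives $\sup_{t\le\tau,\V x}\EE[\int_0^t|V_--V_{n-}|(\V B^{\V x}_s)\Id s]\to0$; boundedness of higher moments then upgrades $L^1$-smallness to $L^{p(1+\ve)}$-smallness via interpolation, using the uniform higher-moment bound. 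The genuinely local term (the $V_+$ part, when the difference is only \emph{locally} Kato) requires the restriction $\V x\in K$: one chooses a slightly larger compact $K'\supset K$, writes $V_+-V_{n+}=1_{K'}(V_+-V_{n+})+1_{K'^c}(V_+-V_{n+})$, notes that on the time interval $[0,\tau]$ the bridge/motion started in $K$ leaves $K'$ only with exponentially small probability (and $V_{n+}=V_+\theta_n$ eventually agrees with $V_+$ on $K'$, so the far piece is genuinely absent for large $n$), and applies the $\cK$-type argument above to $1_{K'}(V_+-V_{n+})\in\cK(\RR^\nu)$, for which \eqref{approx1} on $K'$ gives the needed $L^1$-decay. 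Assembling the pieces gives \eqref{approx3}, and the same computation with every compact replaced by $\RR^\nu$ gives the final assertion once $V-V_n\in\cK(\RR^\nu)$ globally and \eqref{approx1} holds on all of $\RR^\nu$, since then no localization is needed and all suprema over $\V x$ range over $\RR^\nu$ from the start.
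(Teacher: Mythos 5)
Your treatments of parts (1)--(3) are sound and are essentially the standard arguments that the paper simply delegates to \cite{BHL2000} (App.~A and Lem.~C.1) and to \cite{Sznitman1998,AizenmanSimon1982}: mollification plus cutoff for (1), with the neat observation $V_{n-}\le\varphi_{1/n}*V_-$ giving \eqref{approx2} with $A=1$; Khasminskii's lemma plus the Markov/semigroup iteration for (2); and the comparison of $\int_0^tp_s(\V{x},\cdot)\,\Id s$ with $g_r$ for (3). One caution on (2): Khasminskii requires $\sup_n\sup_{\V{x}}\EE[\int_0^tpV_{n-}(\V{B}^{\V{x}}_s)\Id s]<1$ for some small $t$, i.e.\ smallness of the Kato quantities of the $V_{n-}$ as $r\downarrow0$, \emph{uniformly in} $n$; your phrase ``a uniform bound \eqref{approx2b} yields a single $c$'' invokes only boundedness at a fixed scale, which by itself does not give this (rescaled deep wells keep the fixed-scale quantity bounded while the exponential moments diverge along the sequence for large $p$). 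In all of the paper's applications the hypothesis actually in force is \eqref{approx2}, so the needed uniform smallness is inherited from $V_-\in\cK(\RR^\nu)$; you should make that step explicit when running the iteration, exactly as the paper's reference to Lem.~C.1 of \cite{BHL2000} does.

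The genuine gap is in (4). Your opening move requires finiteness (and then smallness) of $\EE\big[|\int_0^t(V-V_n)(\V{B}^{\V{x}}_s)\Id s|^{p(1+\ve)}\big]$, but $V_+$ and $V_{n+}$ are only in $\cK_\loc(\RR^\nu)$, which imposes no growth restriction at infinity: already $\EE[\int_0^tV_+(\V{B}^{\V{x}}_s)\Id s]$ can be $+\infty$ (take $V_+(\V{y})=e^{|\V{y}|^3}$), so the first H\"older factor need not exist at all. Your patch for the $V_+$-part assumes $V_{n+}=V_+\theta_n$, i.e.\ the special approximants constructed in (1), whereas (4) concerns an \emph{arbitrary} sequence satisfying \eqref{approx1} and \eqref{approx2}; moreover the claim that \eqref{approx1} forces $\sup_{\V{x}}\int_{\RR^\nu}g_1(\V{x}-\V{y})|V_--V_{n-}|(\V{y})\Id\V{y}\to0$ is unwarranted, since \eqref{approx1} only controls the integral over a compact window. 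The localization must be performed \emph{before} H\"older, at the level of the expectation: split over the event $A_{K'}$ that the path stays in a large compact $K'\supset K$ up to time $\tau$ and its complement. Off $A_{K'}$ bound $|e^{-a}-e^{-b}|\le e^{\int_0^tV_-(\V{B}_s)\Id s}+e^{\int_0^tV_{n-}(\V{B}_s)\Id s}$ and apply Cauchy--Schwarz against $\PP(\Omega\setminus A_{K'})$, which is Gaussian-small uniformly in $\V{x}\in K$, the exponential moments being uniformly bounded by \eqref{dirk0} and part (2). On $A_{K'}$ one may replace $V,V_n$ by $1_{K'}V,\,1_{K'}V_n\in\cK(\RR^\nu)$, and for these your moment/Khasminskii argument does apply, \eqref{approx1} with the compact $K'$ supplying the required smallness of $\sup_{\V{z}}\EE[\int_0^\tau 1_{K'}|V-V_n|(\V{B}^{\V{z}}_s)\Id s]$. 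This is in substance the route of \cite[Lem.~C.4 and C.6]{BHL2000}, to which the paper's own proof simply refers; for the final global assertion the paper then upgrades the cited statement by one more application of H\"older together with \eqref{dirk0}, which your closing remark (``every compact replaced by $\RR^\nu$'') reproduces correctly once $V-V_n\in\cK(\RR^\nu)$ and \eqref{approx1} holds globally, since then no localization is needed.
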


\begin{proof}
Detailed proofs of (1) (resp. (4)) can be found, e.g., in App.~A (resp. Lem.~C.4 and Lem.~C.6) of 
\cite{BHL2000}. In fact, Lem.C.4 of \cite{BHL2000} proves a slightly weaker version of the last 
statement of (4), where the condition $V-V_n\in\cK(\RR^\nu)$, $n\in\NN$, is replaced by 
$V\in\cK(\RR^\nu)$ and $V_n\in\cK(\RR^\nu)$, $n\in\NN$.
Combining this weaker statement with \eqref{dirk0} we see, however, that the full assertion
follows from H\"older's inequality. Part~(2) follows from a standard application of H\"older's
inequality and Khasminskii's lemma; compare the proof of Lem.~C.1 in \cite{BHL2000}.
\end{proof}

\begin{lem}\label{lem-Kato-qfb}
Let $V\in\cK(\RR^\nu)$. Then $V$ is infinitesimally form-bounded with respect to the Laplacian.
More precisely,
\begin{align}\label{qfb-Kato}
\int_{\RR^\nu}|V(\V{x})||f(\V{x})|^2\Id\V{x}
&\le\frac{c_\gamma(V)}{2}\int_{\RR^\nu}|\nabla f(\V{x})|^2\Id\V{x}+\gamma c_\gamma(V)\|f\|^2,
\end{align}
for all $\gamma>0$ and $f\in H^1(\RR^\nu)$, where
\begin{align*}
c_\gamma(V)&:=\sup_{\V{x}\in\RR^\nu}\int_0^\infty e^{-t\gamma}\EE\big[|V(\V{B}_t^{\V{x}})|\big]\Id t,
\end{align*}
which, for all $\tau>0$ and $\gamma\ge1$, can be estimated as follows,
\begin{align}\label{bd-cgammaV}
c_\gamma(V)&\le\sup_{\V{x}\in\RR^\nu}\int_0^\tau\EE\big[|V(\V{B}_t^{\V{x}})|\big]\Id t
+Ae^{-\tau\gamma/2}\ln\Big(\sup_{\V{x}\in\RR^\nu}
\EE\big[e^{\int_0^1|V(\V{B}_s^{\V{x}})|\Id s}\big]\Big),
\end{align}
with some universal constant $A>0$. For all $\ve>0$ and every sequence $V_n\in\cK_\pm(\RR^\nu)$, 
$n\in\NN$, satisfying \eqref{approx2}, we find some $\gamma\ge1$ such that
$c_\gamma(V_n)\le\ve$, $n\in\NN$.
\end{lem}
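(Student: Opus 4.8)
The plan is to prove Lemma~\ref{lem-Kato-qfb} by combining the Feynman--Kac representation of the resolvent with the Kato-class hypothesis, following the standard route (as in, e.g., \cite{AizenmanSimon1982,Simon1982,BHL2000}). First I would establish the form bound \eqref{qfb-Kato}. The key identity is that, for $\gamma>0$, the operator $(-\tfrac12\Delta+\gamma)^{-1}$ has integral kernel $\int_0^\infty e^{-t\gamma}p_t(\V{x},\V{y})\,\Id t$, so that by the Feynman--Kac formula (or directly by Tonelli),
\begin{align*}
\int_{\RR^\nu}\int_{\RR^\nu}|V(\V{x})|\,\ol{f(\V{x})}\Big(\int_0^\infty e^{-t\gamma}p_t(\V{x},\V{y})\Id t\Big)|V(\V{y})|^\eh\cdot|V(\V{y})|^\eh|f(\V{y})|\,\Id\V{x}\Id\V{y},
\end{align*}
can be controlled once one knows that $|V|^\eh(-\tfrac12\Delta+\gamma)^{-1}|V|^\eh$ is a bounded operator with norm $c_\gamma(V)$. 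That norm bound is exactly the probabilistic quantity displayed: writing the kernel via Brownian motion and using Khasminskii's lemma one gets $\||V|^\eh(-\tfrac12\Delta+\gamma)^{-1}|V|^\eh\|\le c_\gamma(V)=\sup_{\V{x}}\int_0^\infty e^{-t\gamma}\EE[|V(\V{B}_t^{\V{x}})|]\Id t$, which tends to $0$ as $\gamma\to\infty$ for $V\in\cK(\RR^\nu)$. Then \eqref{qfb-Kato} is the quadratic-form version of $|V|\le c_\gamma(V)(-\tfrac12\Delta+\gamma)$, i.e. $\int|V||f|^2\le c_\gamma(V)(\tfrac12\|\nabla f\|^2+\gamma\|f\|^2)$, valid on $H^1$ by density from $C_0^\infty$.

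Next I would prove the quantitative bound \eqref{bd-cgammaV} on $c_\gamma(V)$. Split the time integral at $\tau$: the piece $\int_0^\tau e^{-t\gamma}\EE[|V(\V{B}_t^{\V{x}})|]\Id t$ is dominated by $\int_0^\tau\EE[|V(\V{B}_t^{\V{x}})|]\Id t$ since $e^{-t\gamma}\le1$. For the tail $\int_\tau^\infty e^{-t\gamma}\EE[|V(\V{B}_t^{\V{x}})|]\Id t$, I would use the Markov property to write $\EE[|V(\V{B}_t^{\V{x}})|]=\EE\big[\EE^{\V{B}_{t-1}^{\V{x}}}[|V(\V{B}_1^{\cdot})|]\big]$ and bound the inner expectation by $\sup_{\V{z}}\int_0^1\EE[|V(\V{B}_s^{\V{z}})|]\Id s\cdot(\text{something})$; more cleanly, one uses the standard estimate $\sup_{\V{x}}\int_0^1\EE[|V(\V{B}_s^{\V{x}})|]\Id s\le\ln(\sup_{\V{x}}\EE[e^{\int_0^1|V(\V{B}_s^{\V{x}})|\Id s}])$ coming from Jensen/Khasminskii, together with the Chapman--Kolmogorov semigroup bound $\sup_\V{x}\int_n^{n+1}e^{-t\gamma}\EE[|V(\V{B}_t^\V{x})|]\Id t\le e^{-n\gamma}\sup_\V{x}\int_0^1\EE[|V(\V{B}_s^\V{x})|]\Id s$, and sum the geometric series $\sum_{n\ge\lfloor\tau\rfloor}e^{-n\gamma}\le A e^{-\tau\gamma/2}$ using $\gamma\ge1$. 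This gives \eqref{bd-cgammaV} with a universal constant $A$.

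Finally, the last sentence: given $\ve>0$ and a sequence $V_n\in\cK_\pm(\RR^\nu)$ satisfying \eqref{approx2}, I need a single $\gamma\ge1$ with $c_\gamma(V_n)\le\ve$ for all $n$. Here I would apply \eqref{bd-cgammaV} to each $V_n$ (note $c_\gamma$ only sees $|V_n|$, but what is uniformly controlled via \eqref{approx2} is $V_{n-}$; however \eqref{approx2} together with $V_n\in\cK_\pm$ and the fact that the estimates only need uniform control of the negative part for \eqref{dirk0}-type bounds---actually one should read \eqref{bd-cgammaV} applied to $V_{n-}$, the relevant Kato quantity). The uniform smallness of the Kato norm of $V_{n-}$ from \eqref{approx2} gives, via Lemma~\ref{lem-conny}(2) and the equivalence in Lemma~\ref{lem-conny}(3), that $\sup_n\sup_{\V{x}}\int_0^\tau\EE[|V_{n-}(\V{B}_s^\V{x})|]\Id s\to0$ as $\tau\downarrow0$ and $\sup_n\sup_\V{x}\EE[e^{\int_0^1|V_{n-}(\V{B}_s^\V{x})|\Id s}]\le C<\infty$ (by \eqref{approx3a}). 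Choosing first $\tau$ small so the first term on the right of \eqref{bd-cgammaV} is $\le\ve/2$ uniformly in $n$, then $\gamma$ large so $Ae^{-\tau\gamma/2}\ln C\le\ve/2$, yields $c_\gamma(V_{n-})\le\ve$ for all $n$, as required. The main obstacle I anticipate is the bookkeeping in \eqref{bd-cgammaV}: getting the tail estimate into precisely the stated form---a single logarithm of the $t=1$ exponential moment times $e^{-\tau\gamma/2}$ with a universal constant---requires combining Khasminskii's lemma with the semigroup (Chapman--Kolmogorov) structure carefully and tracking that the constant does not depend on $V$; everything else is routine once the operator bound $\||V|^\eh(-\tfrac12\Delta+\gamma)^{-1}|V|^\eh\|=c_\gamma(V)$ is in hand.
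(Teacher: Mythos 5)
Your route is essentially the paper's: for \eqref{bd-cgammaV} you split the time integral at $\tau$, control unit-interval integrals of $\EE[|V(\V{B}_t^{\V{x}})|]$ by $\ln\sup_{\V{x}}\EE[e^{\int_0^1|V(\V{B}_s^{\V{x}})|\Id s}]$ via Jensen and the Markov/semigroup structure, and for the final uniform statement you choose first $\tau$ small and then $\gamma$ large, using that \eqref{approx2} gives uniform Kato-smallness of the $V_{n-}$ and hence \eqref{approx3b} and \eqref{approx3a}; that is exactly the paper's argument, and your reading of the last claim as concerning the negative parts is the sensible one. For \eqref{qfb-Kato} the paper simply cites Chung--Zhao, so your Schur-test/Birman--Schwinger argument ($\||V|^{1/2}(-\tfrac12\Delta+\gamma)^{-1}|V|^{1/2}\|\le c_\gamma(V)$, then pass to the quadratic form) is legitimate extra work; only note that this operator bound is a Schur test with weight $|V|^{1/2}$, not an application of Khasminskii's lemma, which enters elsewhere (exponential moments).

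The one step that fails as written is the geometric-series claim $\sum_{n\ge\lfloor\tau\rfloor}e^{-n\gamma}\le A\,e^{-\tau\gamma/2}$ with a universal $A$. For $\tau\in(0,1)$ one has $\lfloor\tau\rfloor=0$, so the left-hand side is at least $1$ (the $n=0$ term), while the right-hand side tends to $0$ as $\gamma\to\infty$; the inequality also fails for $1\le\tau<2$. The defect is that after cutting the tail into unit intervals you only retain a factor $e^{-n\gamma}$ with $n\ge\tau-1$, which does not produce $e^{-\tau\gamma/2}$ uniformly in $\gamma$. The repair is exactly the paper's bookkeeping: on $\{t\ge\tau\}$ first write $e^{-t\gamma}\le e^{-\tau\gamma/2}e^{-t\gamma/2}$, and then estimate $\int_0^\infty e^{-t\gamma/2}\,\EE[|V(\V{B}_t^{\V{x}})|]\Id t\le\sum_{\ell\ge1}e^{-(\ell-1)\gamma/2}\int_{\ell-1}^{\ell}\EE[|V(\V{B}_t^{\V{x}})|]\Id t$, bound each unit-interval integral by $\ln\sup_{\V{x}}\EE[e^{\int_0^1|V(\V{B}_s^{\V{x}})|\Id s}]$ (your Markov-property reduction to $[0,1]$ does this; the paper instead bounds $\int_0^\ell$ by $\ell$ times the logarithm, via Jensen and submultiplicativity of $\|S_1^{|V|}\|_{\infty,\infty}$, which is why its constant is $A=\sum_\ell\ell e^{-(\ell-1)/2}$), and finally use $\gamma\ge1$ to get $\sum_\ell e^{-(\ell-1)\gamma/2}\le\sum_\ell e^{-(\ell-1)/2}$, a universal constant. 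With this correction the rest of your argument goes through and coincides with the paper's proof.
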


\begin{proof}
The bound \eqref{qfb-Kato} is, e.g., identical to \cite[Eqn.~(58) on p.~92]{ChungZhao1995}.
To derive the bound \eqref{bd-cgammaV}, where $A:=\sum_{\ell=1}^\infty \ell e^{-(\ell-1)/2}$,
follows easily from
\begin{align*}
e^{{\int_0^\ell\EE[|V(\V{B}_s^{\V{x}})|]\Id s}}&\le
\EE\big[e^{\int_0^\ell|V(\V{B}_s^{\V{x}})|\Id s}\big]
\\
&=\|S_{\ell}^{|V|}\|_{\infty,\infty}\le\|S_1^{|V|}\|_{\infty,\infty}^\ell
=\Big(\EE\big[e^{\int_0^1|V(\V{B}_s^{\V{x}})|\Id s}\big]\Big)^\ell,
\end{align*}
where we applied Jensen's inequality in the first step. In view of \eqref{approx3a} and 
\eqref{approx3b}, we may first choose $\tau>0$ small enough and then
$\gamma\ge1$ large enough (depending on $\tau$) to make $\sup_{n}c_\gamma(V_n)$ as
small as we please.
\end{proof}

\begin{ex}\label{ex-QED-V}
(1) Let us consider the standard model of non-relativistic QED (Ex.~\ref{ex-NRQED}) 
for $N\in\NN$ electrons in the electrostatic potential of
$K\in\NN$ nuclei with atomic numbers $\V{Z}=(Z_1,\ldots,Z_K)\in[0,\infty)^K$ 
whose positions are given by the components of
$\ul{\V{R}}=(\V{R}_1,\ldots,\V{R}_K)\in\RR^{2K}$. Then the total Hamiltonian reads
\begin{align}\label{HQED}
H_{\V{Z},\ul{\V{R}}}^{\chi,N,\mathfrak{e}}
&:=\sum_{\ell=1}^N\big\{\tfrac{1}{2}(-i\nabla_{\V{x}_\ell}-\vp(\V{G}_{\V{x}_\ell}^\chi))^2
-\vsigma^{(\ell)}\cdot\vp(\V{F}_{\V{x}_\ell}^\chi)\big\}+\Id\Gamma(\omega)
+V_{\V{Z},\ul{\V{R}}}^{N,\mathfrak{e}},
\end{align}
where $\vsigma^{(\ell)}:=(\sigma_{3\ell-2},\sigma_{3\ell-1},\sigma_{3\ell})$
and where we dropped the cumbersome direct integral signs of \eqref{def-A} in the notation.
Here the Coulomb interaction potential is
\begin{align}\label{def-VN}
V_{\V{Z},\ul{\V{R}}}^{N,\mathfrak{e}}(\V{x}_1,\ldots,\V{x}_N)
&:=-\sum_{i=1}^N\sum_{\vk=1}^K\frac{\mathfrak{e}^2Z_\vk}{|\V{x}_i-\V{R}_\vk|}
+\sum_{1\le i<j\le N}\frac{\mathfrak{e}^2}{|\V{x}_i-\V{x}_j|},
\end{align}
where $\mathfrak{e}>0$ is the elementary charge. It is well-known that $V_{\V{Z}}^N$ blongs to 
$\cK(\RR^{3N})$ \cite[p.~216/7]{AizenmanSimon1982}. 
It is also well-known that the Hamiltonian in \eqref{HQED}
is well-defined and self-adjoint on $\dom(\Delta)\cap\dom(\Id\Gamma(\omega))$ 
\cite{HaslerHerbst2008,Hiroshima2000esa,Hiroshima2002}; it is a special
case of the Hamiltonains constructed for general Kato-decomposable potentials in 
Thm.~\ref{thm-FK} below. According to the Pauli principle, the physically relevant Hamiltonian is actually not $H_{\V{Z}}^{\Lambda,N}$, but rather its restriction to the reducing subspace of 
those functions in $\HR$ which are anti-symmetric under simultaneous permutations of their 
$N$ electronic position-spin variables, i.e., to
$$
\HR^N_{\mathrm{phys}}:=\Big(\bigwedge_{j=1}^NL^2(\RR^3,\CC^2)\Big)\otimes\sF,
$$
if it is considered as a subspace of $\HR$ in the canonical way.
Since all results of this paper proven for $H_{\V{Z}}^{\Lambda,N}$ contain analogous results for
$H_{\V{Z}}^{\Lambda,N}\!\!\upharpoonright_{\HR^N_{\mathrm{phys}}}$ as an obvious special case,
we shall most of the time not comment on the Pauli principle anymore.

\smallskip

\noindent(2)
Let $\mathfrak{e}_n>0$, $n\in\NN$, such that $\mathfrak{e}_n\to\mathfrak{e}$, $n\to\infty$,
for some $\mathfrak{e}>0$. Furthermore, let $\V{Z}_n\in[0,\infty)^K$, $n\in\NN$, be a
converging sequence of $K$ atomic numbers with limit $\V{Z}\in[0,\infty)^K$.
Abbreviate $V:=V_{\V{Z},\ul{\V{R}}}^{N,\mathfrak{e}}$ and 
$V_n:=V_{\V{Z}_n,\ul{\V{R}}}^{N,\mathfrak{e}_n}$, $n\in\NN$. Then, quite obviously,
$V_n$ converges to $V$ in the sense that \eqref{approx1} and \eqref{approx2} are satisfied.

Next, we consider $\ul{\V{R}}_n\in\RR^{2K}$, $n\in\NN$, be a converging sequence of nuclear 
positions with limit $\ul{\V{R}}_n\in\RR^{2K}$ and set 
$V_n':=V_{\V{Z},\ul{\V{R}}_n}^{N,\mathfrak{e}}$, $n\in\NN$. Let us explain why 
\eqref{approx1} and \eqref{approx2} are satisfied by $V$ and $V_n'$, $n\in\NN$.
The validity of \eqref{approx2} is again obvious, because neither its right nor its left hand side
change when the potentials is replaced by their translates. To argue that \eqref{approx1} holds
true as well, it suffices to treat a single term of the form  $|\V{y}_i-\V{R}_{\vk,n}|^{-1}$.
Without loss of generality we may further assume that $\V{R}_{\vk,n}\to\V{0}$. Writing
$$
\frac{1}{|\V{y}_i-\V{R}_{\vk,n}|}-\frac{1}{|\V{y}_i|}
=\Big(\frac{1}{|\V{y}_i-\V{R}_{\vk,n}|^\eh}-\frac{1}{|\V{y}_i|^\eh}\Big)
\Big(\frac{1}{|\V{y}_i-\V{R}_{\vk,n}|^\eh}+\frac{1}{|\V{y}_i|^\eh}\Big),
$$
using that
\begin{align*}
\Big|\frac{1}{|\V{y}_i-\V{R}_{\vk,n}|^\eh}-\frac{1}{|\V{y}_i|^\eh}\Big|&\le
\frac{|\V{R}_{\vk,n}|^\eh}{{|\V{y}_i-\V{R}_{\vk,n}|^\eh}{|\V{y}_i|^\eh}}
\\
&\le\frac{|\V{R}_{\vk,n}|^\eh}{2}\Big(\frac{1}{|\V{y}_i-\V{R}_{\vk,n}|}+\frac{1}{|\V{y}_i|}\Big),
\end{align*}
and applying Young's inequality $(\frac{1}{3}+\frac{2}{3}=1)$ twice, we see that
\begin{align*}
\Big|\frac{1}{|\V{y}_i-\V{R}_{\vk,n}|}-\frac{1}{|\V{y}_i|}\Big|
\le{|\V{R}_{\vk,n}|^\eh}\Big(\frac{1}{|\V{y}_i-\V{R}_{\vk,n}|^{\nf{3}{2}}}
+\frac{1}{|\V{y}_i|^{\nf{3}{2}}}\Big).
\end{align*}
Now, the validity of \eqref{approx1} for $V$ and $V_n'$ follows from the fact that
$\RR^{3N}\ni\ul{\V{x}}\mapsto|\V{x}_i|^{-\alpha}$ is in $\cK_\loc(\RR^{3N})$, for every $\alpha<2$
\cite[p.~216/7]{AizenmanSimon1982}.
\end{ex}

%%%%%%%%%%%%%%%%%%%%%%%%%%%%%%%%%%%%%%%%%%%%%%%%%

\subsection{FK operators}\label{ssec-FK}

\noindent
Let $V\in\cK_\pm(\RR^\nu)$ and let $\V{X}^{\V{x}}=(\V{X}_t^{\V{x}})_{t\in I}$ be a Brownian motion 
or a Brownian bridge starting at $\V{x}\in\RR^\nu$ as explained in Subsect.~\ref{ssec-SDE}.  
Then, according to Rem.~\ref{rem-Kato}, it $\PP$-a.s. makes sense to define
\begin{equation}\label{def-WWV}
\WW{t}{V}[\V{X}^{\V{x}}]:=e^{-\int_0^tV(\V{X}_s^{\V{x}})\Id s}\,\WW{t}{0}[\V{X}^{\V{x}}],\quad t\in I,
\end{equation}
where $\WW{}{0}[\V{X}^{\V{x}}]$ is given by Thm.~\ref{thm-WW}.

\begin{defn}[{\bf Feynman-Kac operators}]\label{def-FKO} 
Assume that $V\in\cK_\pm(\RR^\nu)$. If $\Psi\in L^p(\RR^\nu,{\FHR};e^{-a|\V{x}|}\Id\V{x})$,
for some $p\in[1,\infty]$ and $a\ge0$, then we define
\begin{align*}
(T_t^V\Psi)(\V{x})&:=\EE\big[\WW{t}{V}[\V{B}^{\V{x}}]^*\Psi(\V{B}_t^{\V{x}})\big],
\quad\V{x}\in\RR^\nu,\,t\ge0.
\end{align*}
Furthermore, we set
\begin{align*}
T_t^V(\V{x},\V{y})&:=p_t(\V{x},\V{y})\,\EE\big[\WW{t}{V}[\V{b}^{t;\V{y},\V{x}}]\big],\quad
\V{x},\V{y}\in\RR^\nu,\, t>0.
\end{align*}
\end{defn}

Invoking some of the results collected in 
Subsect.~\ref{ssec-Kato}, we may in fact ensure the existence of the expectations
appearing in Def.~\ref{def-FKO}; see Sect.~\ref{sec-SG-CK}. For instance, the vector
$(T_t^V\Psi)(\V{x})$ does not depend on the representative of 
$\Psi\in L^p(\RR^\nu,\FHR;e^{-a|\V{x}|}\Id\V{x})$
because $\PP\{\V{B}_t^{\V{x}}\in N\}=0$, for every Borel set $N\subset\RR^\nu$ of
Lebesgue measure zero. The operator-valued expectation defining $T_t^V(\V{x},\V{y})$
exists due to Thm.~\ref{thm-WW}(1). The set of equivalence classes of functions
$\Psi(\cdot):\RR^\nu\to\FHR$, for which we find $a\ge0$ and $p\in[1,\infty]$ such that
$\Psi\in L^p(\RR^\nu,{\FHR};e^{-a|\V{x}|}\Id\V{x})$, is a vector space and we 
consider $T_t^V$ as a linear map on that vector space with values in the measurable
functions from $\RR^\nu$ to $\FHR$. Many of our results deal with suitable
restrictions of $T_t^V$, which are often again denoted by the same symbol; it should always
be clear from the context what is meant.

We close this survey section by stating the FK formula and clarifying the definition of the
full Hamiltonian including a Kato decomposable potential $V$ in the next theorem. It follows
immediately from the results of \cite[\textsection11]{GMM2014},
because any $V_-\in\cK(\RR^\nu)$ is infinitesimally $-\Delta$-form bounded
and $V_+\in\cK_\loc(\RR^\nu)$ entails $V_+\in L^1_\loc(\RR^\nu)$; see \cite{AizenmanSimon1982}.
The symbol $H^{V_+}$ will denote the semi-bounded self-adjoint operator representing
the closure of the semi-bounded and closable quadratic form
$$
\dom(H^0)\cap\fdom(V_+)\ni\Psi\longmapsto
\SPn{\Psi}{H^0\Psi}+\int_{\RR^\nu}V_+(\V{x})\|\Psi(\V{x})\|^2\Id\V{x}.
$$
Of course, if $A$ is a self-adjoint operator in some Hilbert space which is semi-bounded below, then
$(e^{-tA})_{t\ge0}$ denotes its semi-group defined by means of the spectral calculus;
we also recall the notation \eqref{def-lambda}.

\begin{thm}[\cite{GMM2014}]\label{thm-FK}
 Let $V\in\cK_\pm(\RR^\nu)$. Then ${\Psi}\in\fdom(H^{V_+})$ implies
$\|{\Psi}(\cdot)\|\in\fdom(-\Delta)\cap\fdom(V_+)$ with
\begin{align*}%\label{dia1}
\SPb{\|{\Psi}(\cdot)\|}{(-\tfrac{1}{2}\Delta+V_+)\|{\Psi}(\cdot)\|}
\le\SPb{{\Psi}}{(H^{V_+}+\|\mho\|_\infty){\Psi}}.
\end{align*}
In particular, $V_-$ is infinitesimally $H^{V_+}$-form bounded. Hence, by the KLMN theorem,
there exists a unique semi-bounded self-adjoint operator $H^V$ with 
$\fdom(H^V)=\fdom(H^0)\cap\fdom(V_+)$ representing the closure
of the semi-bounded closable quadratic form
$$
\dom(H^0)\cap\fdom(V_+)\ni\Psi\longmapsto
\SPn{\Psi}{H^0\Psi}+\int_{\RR^\nu}V(\V{x})\|\Psi(\V{x})\|^2\Id\V{x}.
$$
The semi-group of $H^V$ is represented by the following Feynman-Kac formula,
\begin{align}
(e^{-tH^V}\!\Psi)(\V{x})&=(T_t^V\Psi)(\V{x}),\label{feyn}
\end{align}
for all $t>0$, $\Psi\in\HR=L^2(\RR^\nu,\FHR)$, and a.e. $\V{x}\in\RR^{\nu}$.
\end{thm}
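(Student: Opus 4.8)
The plan is to deduce Theorem~\ref{thm-FK} from the results of \cite{GMM2014} quoted (implicitly) in the preceding subsections, by a standard decomposition $V=V_+-V_-$ together with the KLMN theorem, and then to upgrade the Feynman--Kac formula from bounded continuous potentials to Kato decomposable ones via an approximation argument.

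\emph{Step 1: the form inequality and construction of $H^{V_+}$.} First I would recall that $V_+\in\cK_\loc(\RR^\nu)$ implies $V_+\in L^1_\loc(\RR^\nu)$ (see \cite{AizenmanSimon1982}), so the quadratic form $\Psi\mapsto\SPn{\Psi}{H^0\Psi}+\int V_+\|\Psi(\V{x})\|^2\Id\V{x}$ on $\dom(H^0)\cap\fdom(V_+)$ is well defined, non-negative after shifting, and closable; its closure is represented by a self-adjoint operator $H^{V_+}$. The key diamagnetic-type inequality $\SPb{\|\Psi(\cdot)\|}{(-\tfrac12\Delta+V_+)\|\Psi(\cdot)\|}\le\SPb{\Psi}{(H^{V_+}+\|\mho\|_\infty)\Psi}$ is exactly the statement proved for $V_+=0$ in \cite{GMM2014} (the pointwise diamagnetic inequality for $\wh H(\V{x})$, integrated against the bound \eqref{rb-H(x)} and \eqref{qfb-vp}); adding the non-negative multiplication operator $V_+$ on both sides and passing to the form closure preserves the inequality. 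The shift by $\|\mho\|_\infty$ comes from \eqref{def-lambda} and the bound \eqref{bd-WW}.

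\emph{Step 2: the KLMN construction of $H^V$.} From Step~1, for $f=\|\Psi(\cdot)\|\in H^1(\RR^\nu)$ one has $\int|V_-|\,|f|^2\le\tfrac{c_\gamma(V_-)}{2}\|\nabla f\|^2+\gamma c_\gamma(V_-)\|f\|^2$ by Lemma~\ref{lem-Kato-qfb}, with $c_\gamma(V_-)\to0$ as $\gamma\to\infty$; combined with the form inequality this makes $V_-$ infinitesimally $H^{V_+}$-form bounded. The KLMN theorem then yields a unique semi-bounded self-adjoint $H^V$ with $\fdom(H^V)=\fdom(H^{V_+})=\fdom(H^0)\cap\fdom(V_+)$ representing the closure of $\Psi\mapsto\SPn{\Psi}{H^0\Psi}+\int V\|\Psi(\V{x})\|^2\Id\V{x}$.

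\emph{Step 3: the Feynman--Kac formula by approximation.} This is the step I expect to require the most care. The formula \eqref{feyn} for bounded continuous $V$ follows from Theorem~\ref{thm-WW}(3) combined with the reversal identity Theorem~\ref{thm-WW-rev} and It\^o's formula (this is the content of \cite{GMM2014}, so I may simply invoke it). To reach general $V\in\cK_\pm(\RR^\nu)$, pick $V_n\in C_0^\infty(\RR^\nu,\RR)$ as in Lemma~\ref{lem-conny}(1), satisfying \eqref{approx1} and \eqref{approx2}. On the form side, Lemma~\ref{lem-Kato-qfb} gives a common infinitesimal form bound for all $V_{n-}$, hence by a standard KLMN-stability argument $H^{V_n}\to H^V$ in strong resolvent sense, so $e^{-tH^{V_n}}\to e^{-tH^V}$ strongly in $\HR$. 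On the probabilistic side, $T_t^{V_n}\Psi\to T_t^V\Psi$ follows from \eqref{def-WWV}, the uniform exponential moment bound \eqref{approx3a} (via \eqref{approx2} and Lemma~\ref{lem-conny}(2)), the $L^p$-convergence \eqref{approx3} of the scalar exponential factors on compacts, the uniform operator bound \eqref{bd-WW}, dominated convergence, and the Cauchy--Schwarz/H\"older inequality to control the product of the (convergent) scalar factor with the (uniformly bounded) operator factor $\WW{t}{0}[\V{B}^{\V{x}}]^*$; one extracts an a.e.-convergent subsequence to identify the two limits pointwise a.e. in $\V{x}$. The main obstacle is organizing this last limit so that the scalar Feynman--Kac weights, which only converge in $L^p(\PP)$ on compact sets in $\V{x}$ and carry no sign, combine cleanly with the unbounded-in-general operator process; the uniform bounds \eqref{bd-WW} and \eqref{approx3a} are precisely what make a dominated-convergence argument go through after passing to a subsequence.
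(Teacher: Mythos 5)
You should note first that the paper does not actually prove Theorem~\ref{thm-FK}: it is imported wholesale from \cite[\textsection 11]{GMM2014}, and the only content supplied here is the one-line observation that any $V_-\in\cK(\RR^\nu)$ is infinitesimally $-\Delta$-form bounded and $V_+\in\cK_\loc(\RR^\nu)$ implies $V_+\in L^1_\loc(\RR^\nu)$, so that Kato decomposable potentials fall under the hypotheses of that reference. Your reconstruction is therefore, by necessity, a different route, and its overall architecture (diamagnetic-type form inequality, KLMN, then an approximation argument for the FK formula) is the reasonable one. Steps~1 and~2 are sound — Step~2 is essentially spelled out in the statement itself via Lemma~\ref{lem-Kato-qfb} — although the mechanism you ascribe to the diamagnetic inequality (``integrated against \eqref{rb-H(x)} and \eqref{qfb-vp}'') is not how such bounds arise; they come from the FK representation for $H^0$ itself. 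Since you invoke \cite{GMM2014} for the $V_+=0$ case anyway, that slip is harmless.

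The genuine gap is the operator-theoretic half of Step~3. With $V_n\in C_0^\infty(\RR^\nu)$ from Lemma~\ref{lem-conny}(1), the form domain of $H^{V_n}$ is $\fdom(H^0)$, which strictly contains $\fdom(H^V)=\fdom(H^0)\cap\fdom(V_+)$ whenever $V_+$ is unbounded; the difference $V-V_n$ carries the full (locally unbounded, only $L^1_\loc$) positive part and is not infinitesimally form bounded relative to any fixed form; \eqref{approx1} gives convergence only in the local Kato norm; and there is no monotonicity in $n$. Consequently neither the KLMN stability theorem nor the monotone convergence theorem for forms applies as stated, and ``a standard KLMN-stability argument'' does not deliver $H^{V_n}\to H^V$ in strong resolvent sense without substantial extra work (one would at least need a cutoff/core argument plus a Mosco-type convergence of forms). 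The standard repair — and the route the classical Schr\"odinger-semigroup literature and \cite{GMM2014} take — is a truncation scheme: pass first to $V_+\wedge m-V_-$ (or $V_+\wedge m-V_-\wedge k$), using the monotone convergence theorem for nondecreasing closed forms in $m$ and the uniform infinitesimal bound from Lemma~\ref{lem-Kato-qfb} and \eqref{approx2} in the negative part, while on the probabilistic side the corresponding limits in the FK integrand are handled by monotone and dominated convergence exactly as in Step~2 of the proof of Lemma~\ref{lem-bridget1}; mollification is then needed only to pass from bounded continuous to bounded measurable potentials, where the two self-adjoint operators differ by a bounded perturbation and resolvent convergence is elementary. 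Your probabilistic limit $T_t^{V_n}\Psi\to T_t^V\Psi$ (via \eqref{bd-WW}, \eqref{approx3a}, \eqref{approx3} and dominated convergence) is fine; it is the identification of $\lim_n e^{-tH^{V_n}}$ with $e^{-tH^V}$ that your choice of approximants leaves unjustified.
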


\begin{rem}\label{rem-FK}
In the {scalar case} ($\V{F}=\V{0}$) \eqref{feyn} has been proved
(under slightly more restrictive conditions on $\V{G}$) first in \cite{Hiroshima1997}
by intricate applications of Trotter's product formula. That the FK integrand found in
\cite{Hiroshima1997} actually solves a SDE is a result of \cite{GMM2014}.
For one spin-$1/2$ electron, a different version of \eqref{feyn}
has been obtained in \cite{HiroshimaLorinczi2008}, where the spin degrees of freedom are
accounted for by a Poisson jump process. The FK formula found in \cite{HiroshimaLorinczi2008}
involves an additional regularization procedure which is avoided in \eqref{feyn} by putting the spin
degrees of freedom in the target space.
\end{rem}

%%%%%%%%%%%%%%%%%%%%%%%%%%%%%%%%%%%%%%%%%%%%%%
%%%%%%%%%%%%%%%%%%%%%%%%%%%%%%%%%%%%%%%%%%%%%%
%%%%%%%%%%%%%%%%%%%%%%%%%%%%%%%%%%%%%%%%%%%%%%

\section{Chapman-Kolmogoroff equation and semi-group properties}\label{sec-SG-CK}

\noindent
In this section we establish the usual basic identities relating the operators $T_t^V$ and the kernels
$T^V_t(\V{x},\V{y})$, for Kato decomposable potentials. In particular, we shall eventually verify the
Chapman-Kolmogoroff equations for the kernels from which we deduce the semi-group property of
the operators.

Recall the definition of $\mho$ in \eqref{def-lambda} as well as \eqref{def-StV} and \eqref{bd-StV}.

\begin{lem}\label{lem-T-bd}
Let $V\in\cK_\pm(\RR^\nu)$, $t>0$, $1\le p\le q\le\infty$, and $F:\RR^\nu\to\RR$ be globally
Lipschitz continuous. Then the formula for $T_t^V$ in Def.~\ref{def-FKO} yields 
a well-defined element of 
$\LO(L^p(\RR^\nu,\FHR;e^{pF}\Id\V{x}),L^q(\RR^\nu,\FHR;e^{qF}\Id\V{x}))$ 
satisfying
\begin{align}\label{norm-T}
\|e^FT_t^Ve^{-F}\|_{p,q}\le e^{\|\mho\|_\infty t}\,\|e^FS_t^Ve^{-F}\|_{p,q}.
\end{align}
\end{lem}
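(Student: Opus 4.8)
The plan is to reduce the operator-norm bound for $T_t^V$ to the corresponding scalar bound for $S_t^V$ (already recorded in \eqref{bd-StV}) by using the pointwise operator-norm estimate \eqref{bd-WW} on the Feynman-Kac integrand $\WW{t}{V}[\V{B}^{\V{x}}]$. First I would observe that by the conjugation identity at the end of Subsect.~\ref{ssec-models} it suffices to bound $\|e^FT_t^Ve^{-F}\|_{p,q}$, i.e., to treat the operator $\Psi\mapsto e^F(\V{x})(T_t^V(e^{-F}\Psi))(\V{x})$ acting between the unweighted spaces $L^p(\RR^\nu,\FHR)$ and $L^q(\RR^\nu,\FHR)$. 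Writing out the definition of $T_t^V$ from Def.~\ref{def-FKO} and combining \eqref{def-WWV} with the pointwise bound \eqref{bd-WW} (applied with $\tau=0$, $\V{q}=\V{x}$, and the coupling-free process, so that $\|\WW{t}{0}[\V{B}^{\V{x}}]\|\le e^{\int_0^t\mho(\V{B}_s^{\V{x}})\Id s}\le e^{\|\mho\|_\infty t}$), I get the pointwise estimate in $\FHR$
\begin{align*}
\|(T_t^V\Psi)(\V{x})\|_{\FHR}
&\le\EE\big[\|\WW{t}{V}[\V{B}^{\V{x}}]^*\|\,\|\Psi(\V{B}_t^{\V{x}})\|_{\FHR}\big]
\le e^{\|\mho\|_\infty t}\,\EE\big[e^{-\int_0^tV(\V{B}_s^{\V{x}})\Id s}\,\|\Psi(\V{B}_t^{\V{x}})\|_{\FHR}\big].
\end{align*}
The right-hand side is exactly $e^{\|\mho\|_\infty t}$ times $(S_t^V\|\Psi(\cdot)\|_{\FHR})(\V{x})$, the scalar Feynman-Kac operator applied to the nonnegative scalar function $\V{x}\mapsto\|\Psi(\V{x})\|_{\FHR}$.

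Next I would conjugate by the weight: multiplying through by $e^{F(\V{x})}$ and noting that $\|e^{-F}\Psi\|_{\FHR}=e^{-F}\|\Psi\|_{\FHR}$ pointwise, I obtain
\begin{align*}
e^{F(\V{x})}\|(T_t^V(e^{-F}\Psi))(\V{x})\|_{\FHR}
\le e^{\|\mho\|_\infty t}\,\big(e^FS_t^Ve^{-F}\big)\big(\|\Psi(\cdot)\|_{\FHR}\big)(\V{x}),
\end{align*}
where on the right $S_t^V$ acts on the scalar function $\|\Psi(\cdot)\|_{\FHR}\in L^p(\RR^\nu)$ of norm $\|\,\|\Psi(\cdot)\|_{\FHR}\,\|_p=\|\Psi\|_{L^p(\RR^\nu,\FHR)}$. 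Taking $L^q(\RR^\nu)$-norms and invoking the scalar mapping property quoted after \eqref{def-StV(x,y)2} together with \eqref{bd-StV} (which in particular guarantees $e^FS_t^Ve^{-F}\in\LO(L^p(\RR^\nu),L^q(\RR^\nu))$) yields
\begin{align*}
\|e^F(T_t^V(e^{-F}\Psi))(\cdot)\|_{L^q(\RR^\nu,\FHR)}
\le e^{\|\mho\|_\infty t}\,\|e^FS_t^Ve^{-F}\|_{p,q}\,\|\Psi\|_{L^p(\RR^\nu,\FHR)},
\end{align*}
which is \eqref{norm-T}; boundedness of the operator between the weighted vector-valued $L^p$-spaces follows at once.

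The only genuine point requiring care — and the step I expect to be the main obstacle — is the measurability/well-definedness bookkeeping that makes all the displays above legitimate: one must check that $\V{x}\mapsto(T_t^V\Psi)(\V{x})$ is a well-defined measurable $\FHR$-valued function (independence of the representative of $\Psi$, as already noted after Def.~\ref{def-FKO} using $\PP\{\V{B}_t^{\V{x}}\in N\}=0$ for Lebesgue-null $N$), that $\V{x}\mapsto\|\Psi(\V{x})\|_{\FHR}$ is measurable and lies in $L^p(\RR^\nu)$ with the stated norm (Bochner/Fubini), and that the scalar dominating expression is finite for a.e.\ $\V{x}$ so that the vector-valued expectation genuinely exists — for the latter one combines \eqref{dirk0} (applied to $pV_-$, or with $F$ incorporated as in the proof of \eqref{bd-StV}) with H\"older's inequality, exactly as in the scalar theory. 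Once this routine measure-theoretic scaffolding is in place, the estimate is a one-line domination by the scalar case, and all subsequent continuity statements in Sects.~\ref{sec-SG-CK}--\ref{sec-eq-cont} will use \eqref{norm-T} to transfer scalar facts to the present vector-valued setting.
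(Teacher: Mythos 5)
Your argument is correct and coincides with the paper's proof: both dominate the operator norm of the Feynman--Kac integrand pointwise via \eqref{bd-WW} (through \eqref{def-WWV}), reduce to the scalar estimate $\|(T_t^V\Psi)(\V{x})\|\le e^{\|\mho\|_\infty t}(S_t^V\|\Psi(\cdot)\|)(\V{x})$, and then invoke the scalar mapping properties of $S_t^V$ recorded around \eqref{def-StV} and \eqref{bd-StV}. The measurability and well-definedness remarks you add are exactly the "remarks below \eqref{def-StV}" and the comments following Def.~\ref{def-FKO} that the paper relies on implicitly.
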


\begin{proof}
$\Psi\in L^p(\RR^\nu,{\FHR};e^{pF}\Id\V{x})$ says that 
$\|e^F\Psi(\cdot)\|\in L^p(\RR^\nu;\Id\V{x})$ and Thm.~\ref{thm-WW}(1) implies
\begin{align*}
\big\|\EE\big[\WW{t}{V}[\V{B}^{\V{x}}]^*\Psi(\V{B}_t^{\V{x}})\big]\big\|
&\le e^{\|\mho\|_\infty t}\,\EE\big[e^{-\int_0^tV(\V{B}_s^{\V{x}})\,\Id s}
\|\Psi(\V{B}_t^{\V{x}})\|\big],\quad\V{x}\in\RR^\nu.
\end{align*}
Hence, the assertion follows from \eqref{def-StV} and the remarks below it.
\end{proof}

The previous lemma will be complemented later on in Thm.~\ref{thm-LNCV} by
considering additional unbounded weight operators acting in $\FHR$.

Before we turn to the discussion of the integral kernel we first verify that a familiar transformation 
rule for the Brownian bridge applies to the processes given by Thm.~\ref{thm-WW} as well.
To prepare ourselves for the approximation arguments in the proof of the next lemma it might 
make sense to recall that, for all $t>0$, $\V{x},\V{y}\in\RR^\nu$, 
and every Borel set $N\subset\RR^\nu$ of Lebesgue measure zero,
\begin{align}\label{null}
\int_0^t\PP\{\V{B}_s^{\V{x}}\in N\}\Id s=0=
\int_0^t\PP\{\V{b}_s^{t;\V{y},\V{x}}\in N\}\Id s.
%\sup_{\V{x}\in\RR^\nu}\int_0^t\EE[|f(\V{B}_s^{\V{x}})|]\Id s\le c_{p,\nu}t^{1-\nf{\nu}{2p}}\|f\|_p,
\end{align}

\begin{lem}\label{lem-bridget1}
Let $t>s>0$, $\V{x},\V{y}\in\RR^\nu$, $V\in\cK_\pm(\RR^\nu)$, and let
$A:\RR^\nu\to\LO(\FHR)$ be measurable and bounded with a separable image. Then
\begin{align}\label{bridget71}
\int_{\RR^\nu}p_t(\V{x},\V{z})A(\V{z})\EE\big[\WW{t}{V}[\hat{\V{b}}{}^{t;\V{z},\V{x}}]^*\big]\Id\V{z}
&=\EE\big[A(\V{B}_t^{\V{x}})\WW{t}{V}[\V{B}^{\V{x}}]\big],
\\\label{bridget72}
p_t(\V{x},\V{y})\EE\big[A(\V{b}_s^{t;\V{x},\V{y}})\WW{s}{V}[\V{b}^{t;\V{x},\V{y}}]\big]
&=\EE\big[p_{t-s}(\V{B}_{s}^{\V{x}},\V{y})A(\V{B}_s^{\V{x}})\WW{s}{V}[\V{B}^{\V{x}}]\big].
\end{align}
\end{lem}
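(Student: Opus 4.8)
The two identities are both instances of the classical bridge/Brownian-motion transformation rule, lifted to the operator-valued FK integrand by combining the disintegration of Brownian motion over its endpoint with the distributional identities of Ex.~\ref{ex-str-sol}. I would prove \eqref{bridget71} first. Start from the right-hand side and condition on $\V{B}_t^{\V{x}}$: by the standard fact that, given $\V{B}_t^{\V{x}}=\V{z}$, the process $(\V{B}_r^{\V{x}})_{r\in[0,t]}$ is a Brownian bridge from $\V{x}$ to $\V{z}$, and by Thm.~\ref{thm-WW-rev} together with Ex.~\ref{ex-str-sol}(2) (which identifies the law of $\WW{t}{V}[\hat{\V{b}}{}^{t;\V{z},\V{x}}]$ with that of the time-reversed bridge process), one gets
\begin{align*}
\EE\big[A(\V{B}_t^{\V{x}})\WW{t}{V}[\V{B}^{\V{x}}]\big]
&=\int_{\RR^\nu}p_t(\V{x},\V{z})\,A(\V{z})\,
\EE\big[\WW{t}{V}[\V{b}^{t;\V{x},\V{z}}]\big]\Id\V{z},
\end{align*}
and then $\WW{t}{V}[\V{b}^{t;\V{x},\V{z}}]=\WW{t}{V}[\sR_t\V{b}^{t;\V{z},\V{x}}]^{**}$ is rewritten, via Thm.~\ref{thm-WW-rev} applied to the bridge and Ex.~\ref{ex-str-sol}, as the adjoint of $\WW{t}{V}[\hat{\V{b}}{}^{t;\V{z},\V{x}}]$ in distribution. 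The measurability and separable-image hypothesis on $A$, Thm.~\ref{thm-WW}(1), and the bound \eqref{bd-WW} together with \eqref{dirk0}, \eqref{dirk1} guarantee that all Bochner integrals and expectations converge absolutely, so Fubini applies.

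For \eqref{bridget72} I would use the Markov/restriction property of the Brownian bridge at the intermediate time $s$: the process $(\V{b}_r^{t;\V{x},\V{y}})_{r\in[0,s]}$ has, conditionally on its value at time $s$, the law of a Brownian bridge on $[0,s]$, and its endpoint distribution at time $s$ has density $\V{z}\mapsto p_s(\V{x},\V{z})p_{t-s}(\V{z},\V{y})/p_t(\V{x},\V{y})$. Disintegrating the left-hand side over $\V{b}_s^{t;\V{x},\V{y}}=\V{z}$ gives
\begin{align*}
p_t(\V{x},\V{y})\,\EE\big[A(\V{b}_s^{t;\V{x},\V{y}})\WW{s}{V}[\V{b}^{t;\V{x},\V{y}}]\big]
&=\int_{\RR^\nu}p_s(\V{x},\V{z})\,p_{t-s}(\V{z},\V{y})\,A(\V{z})\,
\EE\big[\WW{s}{V}[\V{b}^{s;\V{x},\V{z}}]\big]\Id\V{z},
\end{align*}
where I use Ex.~\ref{ex-str-sol}(2) (the three bridge constructions $\V{b}^{s;\V{z},\V{x}}$, $\bar{\V{b}}{}^{s;\V{z},\V{x}}$, $\hat{\V{b}}{}^{s;\V{z},\V{x}}$ have the same law, and by the time-reversal statement in Ex.~\ref{ex-str-sol}(1) one may freely swap the roles of $\V{x}$ and $\V{z}$ as endpoints) to identify $\EE[\WW{s}{V}[\V{b}^{t;\V{x},\V{y}}]]$ restricted to $\{\V{b}_s=\V{z}\}$ with $\EE[\WW{s}{V}[\V{b}^{s;\V{x},\V{z}}]]$. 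On the other hand, disintegrating the right-hand side of \eqref{bridget72} over $\V{B}_s^{\V{x}}=\V{z}$ gives exactly the same integral, since $(\V{B}_r^{\V{x}})_{r\in[0,s]}$ conditioned on $\V{B}_s^{\V{x}}=\V{z}$ is again a Brownian bridge from $\V{x}$ to $\V{z}$ in time $s$ and the density of $\V{B}_s^{\V{x}}$ is $p_s(\V{x},\cdot)$, while the factor $p_{t-s}(\V{B}_s^{\V{x}},\V{y})$ supplies $p_{t-s}(\V{z},\V{y})$.

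The technical points to watch are: (i) justifying the disintegration of $\WW{}{V}[\V{B}^{\V{x}}]$ (an operator-valued semimartingale, not just a path functional) over the endpoint — this is where I would pass through a smooth bounded approximating sequence $V_n$ as in Lem.~\ref{lem-conny}, for which the FK integrand depends on the path only through an a.s.-convergent Bochner integral, use \eqref{null} to handle the exceptional null sets of path positions, then take limits using \eqref{approx3} together with the uniform bounds \eqref{bd-WW}, \eqref{dirk0}, \eqref{dirk1}; and (ii) the bookkeeping of which reversed/shifted basis each process lives on, handled entirely by the distributional identities in Ex.~\ref{ex-str-sol} and Thm.~\ref{thm-WW-rev}, so that only laws of the relevant random variables (and hence only their expectations against the bounded measurable $A$) matter. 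I expect step (i) — the operator-valued conditioning on the Brownian endpoint, made rigorous by the $C_0^\infty$-approximation of $V$ — to be the main obstacle; once the identities are known for smooth $V$, the Kato-decomposable case is a routine limit.
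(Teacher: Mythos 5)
There is a genuine gap, and it sits exactly at the step you label ``the standard fact''. The difficulty in writing
$\EE\big[A(\V{B}_t^{\V{x}})\WW{t}{V}[\V{B}^{\V{x}}]\big]
=\int_{\RR^\nu}p_t(\V{x},\V{z})A(\V{z})\,\EE\big[\WW{t}{V}[\V{b}^{t;\V{x},\V{z}}]\big]\Id\V{z}$
is not caused by the potential: by \eqref{def-WWV} the $V$-dependence is the pathwise scalar factor $e^{-\int_0^tV(\cdot)\Id s}$, which is harmless, while $\WW{t}{0}[\V{B}^{\V{x}}]$ is defined through the SDE \eqref{SDE-spin}, i.e.\ through stochastic integrals fixed only up to $\PP$-null sets; it is not a pathwise functional to which one may naively apply the regular conditional distribution of the path given its endpoint. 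Moreover $\WW{t}{0}[\V{b}^{t;\V{x},\V{z}}]$ is defined by a \emph{different} SDE, driven by the bridge on a different basis, so identifying the conditional law of the former given $\V{B}_t^{\V{x}}=\V{z}$ with the law of the latter needs a real argument (absolute continuity of the bridge law on $[0,\sigma]$, $\sigma<t$, combined with the strong-solution theorem, or the reversal device below); within this paper such an operator-valued disintegration is essentially the content of \eqref{intK}, whose bounded-continuous case is quoted from \cite{GMM2014}, not something one may simply assert. Consequently your proposed remedy—approximating $V$ by $C_0^\infty$ functions as in Lem.~\ref{lem-conny}—does not touch the actual obstacle: smoothing $V$ does not make the FK integrand a pathwise functional. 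In the paper the approximation in $V$ only performs the routine extension from bounded continuous to Kato decomposable potentials (via \eqref{null}, \eqref{bd-WW}, \eqref{dirk0}, \eqref{dirk1}); that part of your proposal is fine, but it is the easy half. The same criticism applies to your treatment of \eqref{bridget72}, where you disintegrate the bridge over its value at time $s$: again an operator-valued conditioning that is not justified.

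The paper's Step~1 (for bounded continuous $V$) avoids conditioning on the endpoint altogether. One takes adjoints and uses Thm.~\ref{thm-WW-rev}, so that $\WW{t}{V}[\V{B}^{\V{x}}]^*=\WW{t}{V}[\hat{\V{b}}{}^{t;\V{B}_t^{\V{x}},\V{x}}]$ appears: an FK operator driven by the \emph{reversed} process, whose time-zero value is precisely $\V{B}_t^{\V{x}}$ and hence $\fF[\sR_t\V{B}^{\V{x}}]_0$-measurable. The conditioning you want is then literally the Markov property from Thm.~\ref{thm-WW}, applied on the reversed basis at time zero with $\eta=A(\V{B}_t^{\V{x}})\psi$, giving
$\EE\big[\WW{t}{V}[\V{B}^{\V{x}}]^*A(\V{B}_t^{\V{x}})^*\big]
=\EE\big[\hat F(\V{B}_t^{\V{x}})^*A(\V{B}_t^{\V{x}})^*\big]$
with $\hat F(\V{z})=\EE[\WW{t}{V}[\hat{\V{b}}{}^{t;\V{z},\V{x}}]]$, which is the adjoint of \eqref{bridget71}. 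Identity \eqref{bridget72} is obtained the same way (Ex.~\ref{ex-stoch}(3), Thm.~\ref{thm-WW-rev}, Markov property for the reversed bridge), followed by an application of \eqref{bridget71} with $t$ replaced by $s$ and $A(\V{z})$ replaced by $p_{t-s}(\V{z},\V{y})A(\V{z})$, using that $\WW{s}{V}[\bar{\V{b}}{}^{s;\V{z},\V{x}}]$ and $\WW{s}{V}[\hat{\V{b}}{}^{s;\V{z},\V{x}}]$ have the same distribution by Ex.~\ref{ex-str-sol}(2). If you wish to keep your disintegration route, you must supply the missing conditional-law identification for the operator-valued SDE solution; as written, that is the gap, and the $V$-smoothing cannot close it.
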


\begin{proof} 
{\em Step 1.} First, we prove both \eqref{bridget71} and \eqref{bridget72} under the additional
assumption that $V:\RR^\nu\to\RR$ be bounded and continuous.

{\em As to Eqn.~\eqref{bridget71}:}
First, we observe that the left hand side of \eqref{bridget71} is well-defined thanks to
Prop.~\ref{prop-ida} and since the $\LO(\FHR)$-valued Bochner-Lebesgue integral commutes
with the adjoint.

We have seen in Ex.~\ref{ex-stoch}(2) that the reversed process $(\V{B}_{t-\tau}^{\V{x}})_{\tau\in[0,t]}$
is a Brownian bridge from $\V{B}_t^{\V{x}}$ to $\V{x}$ in time $t$. Hence
$\WW{t}{V}[\V{B}^{\V{x}}]^*=\WW{t}{V}[{\hat{\V{b}}{}^{t;\V{B}_t^{\V{x}},\V{x}}}]$, $\PP$-a.s.,
by Thm.~\ref{thm-WW-rev}, where we use the notation introduced in Ex.~\ref{ex-stoch}(2).
By the Markov property stated in Thm.~\ref{thm-WW}(4) we further have, for every
$\fF[\sR_t\V{B}^{\V{x}}]_0$-measurable bounded $\eta:\Omega\to\FHR$,
\begin{align*}
\EE^{\fF[\sR_t\V{B}^{\V{x}}]_0}\big[\WW{t}{V}[\hat{\V{b}}{}^{t;\V{B}_t^{\V{x}},\V{x}}]\eta\big]
=\hat{F}(\V{B}_t^{\V{x}})\eta,\quad\text{$\PP$-a.s.},
\end{align*}
with $\hat{F}(\V{z}):=\EE[\WW{t}{V}[\hat{\V{b}}{}^{s;\V{z},\V{x}}]]$, $\V{z}\in\RR^\nu$. 
(Here we apply the Markov property with $\BB[\sR_t\V{B}^{\V{x}}]$ as underlying stochastic basis, 
$\V{X}^{\V{q}}=\hat{\V{b}}{}^{t;\V{q},\V{x}}$, and with $f(\V{z},\phi):=\chi(\|\phi\|)\phi$,
$\V{z}\in\RR^\nu$, $\phi\in\FHR$, where $\chi\in C_0(\RR)$ equals $1$ on $[0,R]$ where $R>0$ is 
chosen much larger than $e^{(\lambda_0+\|V\|_\infty)t}\|\eta\|_\infty$.) As we may choose
$\eta:=A(\V{B}_t^{\V{x}})\psi$ with an arbitrary $\psi\in\FHR$, this implies
\begin{align*}
\EE\big[\WW{t}{V}[\V{B}^{\V{x}}]^*A(\V{B}_t^{\V{x}})^*\big]
&=\EE\big[\hat{F}(\V{B}_t^{\V{x}})^*A(\V{B}_t^{\V{x}})^*\big].
\end{align*}
This identity is the adjoint of \eqref{bridget71} since $\V{B}_t^{\V{x}}$ is 
$p_t(\V{x},\cdot)$-distributed.

{\em As to Eqn.~\eqref{bridget72}:}
 According to Ex.~\ref{ex-stoch}(3) and Thm.~\ref{thm-WW-rev},
\begin{align*}
\WW{s}{V}[\V{b}^{t;\V{x},\V{y}}]^*=\WW{s}{V}[\bar{\V{b}}{}^{s;\V{b}_s^{t;\V{x},\V{y}},\V{x}}],
\quad\PP\text{-a.s.},
\end{align*}
where we employ the notation introduced in Ex.~\ref{ex-stoch}(3). Furthermore,
the Markov property of Thm.~\ref{thm-WW}(4) yields, 
for all ${\fF}[\sR_s\V{b}^{t;\V{x},\V{y}}]_0$-measurable bounded $\eta:\Omega\to\FHR$,
\begin{align*}
\EE^{{\fF}[\sR_s\V{b}^{t;\V{x},\V{y}}]_0}\big[\WW{s}{V}[\bar{\V{b}}{}^{s;\V{b}_s^{t;\V{x},\V{y}},\V{x}}]
\eta\big]=\bar{F}(\V{b}_s^{t;\V{x},\V{y}})\eta,\quad\PP\text{-a.s.},
\end{align*}
with $\bar{F}(\V{z}):=\EE\big[\WW{s}{V}[\bar{\V{b}}{}^{s;\V{z},\V{x}}]$, $\V{z}\in\RR^\nu$.
Since we may choose $\eta:=A(\V{b}_s^{t;\V{x},\V{y}})\psi$ with $\psi\in\FHR$ and 
the distribution of $\V{b}_s^{t;\V{x},\V{y}}$ is given by
$\RR^\nu\ni\V{z}\mapsto p_s(\V{x},\V{z})p_{t-s}(\V{z},\V{y})/p_t(\V{x},\V{y})$, 
it follows altogether that
\begin{align*}
p_t(\V{x},\V{y})\EE\big[\WW{s}{V}[\V{b}^{t;\V{x},\V{y}}]^*A(\V{b}_s^{t;\V{x},\V{y}})^*\big]
&=\int_{\RR^\nu}p_s(\V{x},\V{z})\EE\big[\WW{s}{V}[\bar{\V{b}}{}^{s;\V{z},\V{x}}]\big]
p_{t-s}(\V{z},\V{y})A(\V{z})^*\Id\V{z}
\\
&=\EE\big[\WW{s}{V}[\V{B}^{\V{x}}]^*p_{t-s}(\V{B}_s^{\V{x}},\V{y})A(\V{B}_s^{\V{x}})^*\big].
\end{align*}
Here we applied \eqref{bridget71} in the second step,
with $t$ replaced by $s$ and $A(\V{z})$ replaced by $p_{t-s}(\V{z},\V{y})A(\V{z})$, and
taking into account that $\WW{s}{V}[\bar{\V{b}}{}^{s;\V{z},\V{x}}]$ and
$\WW{s}{V}[\hat{\V{b}}{}^{s;\V{z},\V{x}}]$ have the same distribution by Ex.~\ref{ex-str-sol}(2).

{\em Step 2.} Next, we extend \eqref{bridget72} to all Kato decomposable
potentials. If $V$ is measurable and bounded, then, by convolution with
standard mollifiers, we find a uniformly bounded sequence of 
potentials $V_n\in C(\RR^\nu)$ such that $V_n\to V$ a.e. and, hence, 
$\int_0^sV_n(\V{b}_r^{t;\V{x},\V{y}})\Id r\to\int_0^sV(\V{b}_r^{t;\V{x},\V{y}})\Id r$
and $\int_0^sV_n(\V{B}_r^{\V{x}})\Id r\to\int_0^sV(\V{B}_r^{\V{x}})\Id r$,
$\PP$-a.s., by \eqref{null}. In view of \eqref{bd-WW}, we may thus extend \eqref{bridget72} to 
bounded $V$ by dominated convergence. 
If $V\in\cK_\pm(\RR^\nu)$ is arbitrary, then
we plug $V_m:=1_{|V|\le m} V$, $m\in\NN$, into \eqref{bridget72} and
employ the dominated convergence theorem together with \eqref{bd-WW}, \eqref{dirk0},
and \eqref{dirk1} to pass to the limit $m\to\infty$.

Since the second identity in \eqref{null} holds true with $\V{b}^{t;\V{y},\V{x}}$ replaced by
$\hat{\V{b}}{}^{t;\V{z},\V{x}}$ as well, it should now be clear how to extend \eqref{bridget71}
to all $V\in\cK_\pm(\RR^\nu)$.
\end{proof}

\begin{prop}\label{lem-T(x,y)-sym}
Let $V\in\cK_\pm(\RR^\nu)$ and $t>0$. Then, for all $\V{x},\V{y}\in\RR^\nu$,
$T_t^V(\V{x},\V{y})$ is a well-defined element of $\LO(\FHR)$ with
\begin{equation}\label{T(x,y)-sym}
T_t^V(\V{x},\V{y})^*=T_t^V(\V{y},\V{x})\quad
\text{and}\quad\|T_t^V(\V{x},\V{y})\|\le
e^{\|\mho\|_\infty t}S_t^V(\V{x},\V{y}).
\end{equation}
Furthermore, the map $\RR^{2\nu}\ni(\V{x},\V{y})\mapsto T_t^V(\V{x},\V{y})\in\LO(\FHR)$ 
is measurable with a separable image and
\begin{align}\label{intK}
(T_t^V\Psi)(\V{x})&=\int_{\RR^{\nu}}T_t^V(\V{x},\V{y})\Psi(\V{y})\,\Id \V{y},\quad\V{x}\in\RR^\nu,
\end{align}
for all $\Psi\in L^p(\RR^\nu,\FHR;e^{-a|\V{x}|}\Id\V{x})$ with $p\in[1,\infty]$ and $a\ge0$.
\end{prop}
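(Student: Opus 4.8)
The plan is to establish the three assertions of Proposition~\ref{lem-T(x,y)-sym} in the order (i) pointwise bounds and the adjoint relation, (ii) joint measurability with separable image, (iii) the integral-kernel identity \eqref{intK}. For (i), fix $\V{x},\V{y}\in\RR^\nu$. Thm.~\ref{thm-WW}(1) guarantees that $\WW{t}{V}[\V{b}^{t;\V{y},\V{x}}]$ is $\fB(\LO(\FHR))$-measurable with $\PP$-separable image, while \eqref{bd-WW} together with $\|\WW{t}{V}[\V{b}^{t;\V{y},\V{x}}]\|\le e^{\|\mho\|_\infty t}e^{-\int_0^tV(\V{b}_s^{t;\V{y},\V{x}})\Id s}$ and the bound \eqref{dirk1} (after multiplying by $p_t(\V{x},\V{y})$) shows that the operator-valued Bochner integral defining $T_t^V(\V{x},\V{y})$ converges absolutely in $\LO(\FHR)$; the estimate in \eqref{T(x,y)-sym} is then immediate. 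For the adjoint relation, I would use Thm.~\ref{thm-WW-rev}: the Brownian bridge $\V{b}^{t;\V{y},\V{x}}$ run backwards in time is again a Brownian bridge, now from $\V{x}$ to $\V{y}$ in time $t$, and its time reversal satisfies, up to indistinguishability, an SDE of the form \eqref{bridge-rev2} (cf.\ Ex.~\ref{ex-stoch}(3)), hence has the same law as $\V{b}^{t;\V{x},\V{y}}$ by Ex.~\ref{ex-str-sol}. Combining $\WW{t}{V}[\V{b}^{t;\V{y},\V{x}}]^* = \WW{t}{V}[\sR_t\V{b}^{t;\V{y},\V{x}}]$ with the distributional identity, taking expectations, and using $p_t(\V{x},\V{y})=p_t(\V{y},\V{x})$ gives $T_t^V(\V{x},\V{y})^*=T_t^V(\V{y},\V{x})$.

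For (ii), joint measurability of $(\V{x},\V{y})\mapsto T_t^V(\V{x},\V{y})$, the key tool is Prop.~\ref{prop-ida}: choose the versions of $\WW{s}{V}[\V{b}^{t;\V{x},\V{y}}]$ furnished there, so that the map $(s,\V{x},\V{y},\vgamma)\mapsto\WW{s}{V}[\V{b}^{t;\V{x},\V{y}}](\vgamma)$ is jointly measurable with separable image. The prefactor $e^{-\int_0^tV(\V{b}_s^{t;\V{y},\V{x}})\Id s}$ is obtained via \eqref{def-WWV} and, since $V\in\cK_\pm(\RR^\nu)$, Rem.~\ref{rem-Kato} gives $\PP$-a.s.\ integrability for each fixed pair; one handles the joint measurability by a standard truncation $V_m:=1_{|V|\le m}V$ followed by a.e.\ convergence (using \eqref{null}), together with dominated convergence controlled by \eqref{bd-WW}, \eqref{dirk0}, \eqref{dirk1}. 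The map $(\V{x},\V{y})\mapsto\WW{t}{V}[\V{b}^{t;\V{y},\V{x}}]$ is then jointly measurable into $\LO(\FHR)$ with separable image, and Fubini/Tonelli for Bochner integrals (justified by the absolute bound from step (i)) transfers these properties to $\EE[\WW{t}{V}[\V{b}^{t;\V{y},\V{x}}]]$, hence to $T_t^V(\V{x},\V{y})=p_t(\V{x},\V{y})\EE[\WW{t}{V}[\V{b}^{t;\V{y},\V{x}}]]$.

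For (iii), the integral-kernel identity, I would start from the defining formula $(T_t^V\Psi)(\V{x})=\EE[\WW{t}{V}[\V{B}^{\V{x}}]^*\Psi(\V{B}_t^{\V{x}})]$ and apply Lem.~\ref{lem-bridget1}, specifically \eqref{bridget71} with the choice $A(\V{z}):=\Psi(\V{z})\otimes(\text{fixed unit vector})$ or, more directly, by testing against an arbitrary $\psi\in\FHR$ and using the scalar-valued version: for $\psi\in\FHR$,
\begin{align*}
\SPn{\psi}{(T_t^V\Psi)(\V{x})}
&=\EE\big[\SPn{\WW{t}{V}[\V{B}^{\V{x}}]\psi}{\Psi(\V{B}_t^{\V{x}})}\big]
=\int_{\RR^\nu}p_t(\V{x},\V{y})\,\EE\big[\SPn{\WW{t}{V}[\hat{\V{b}}{}^{t;\V{y},\V{x}}]\psi}{\Psi(\V{y})}\big]\Id\V{y},
\end{align*}
where the second equality is \eqref{bridget71} read with $A(\V{z})$ replaced by the rank-one map $\phi\mapsto\SPn{\phi}{\Psi(\V{z})}\,\psi$ (or simply by disintegrating the law of $\V{B}_t^{\V{x}}$ and using that the reversed Brownian motion is a Brownian bridge, Ex.~\ref{ex-stoch}(2), Thm.~\ref{thm-WW-rev}, and Ex.~\ref{ex-str-sol}(2) to replace $\hat{\V{b}}{}^{t;\V{y},\V{x}}$ by $\V{b}^{t;\V{y},\V{x}}$ in distribution). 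Since $\EE[\WW{t}{V}[\V{b}^{t;\V{y},\V{x}}]]^* = \EE[\WW{t}{V}[\V{b}^{t;\V{y},\V{x}}]^*]$ and using the adjoint of the FK process again, this equals $\int_{\RR^\nu}\SPn{\psi}{T_t^V(\V{x},\V{y})\Psi(\V{y})}\Id\V{y}$, proving \eqref{intK} after checking absolute integrability via $\|T_t^V(\V{x},\V{y})\|\le e^{\|\mho\|_\infty t}S_t^V(\V{x},\V{y})$, the Gaussian bound \eqref{dirk1}, and $\Psi\in L^p(\RR^\nu,\FHR;e^{-a|\V{x}|}\Id\V{x})$.

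The main obstacle I anticipate is bookkeeping around the time-reversal and the interchange of the operator adjoint with the Bochner integral and the expectation: one must be careful that the versions chosen in Prop.~\ref{prop-ida} are the ones appearing in Lem.~\ref{lem-bridget1}, that $\hat{\V{b}}{}^{t;\V{y},\V{x}}$, $\bar{\V{b}}{}^{t;\V{y},\V{x}}$, and $\V{b}^{t;\V{y},\V{x}}$ may be freely exchanged inside expectations (Ex.~\ref{ex-str-sol}), and that all Fubini-type swaps are licensed by the absolute bounds coming from \eqref{bd-WW} and Lem.~\ref{lem-dirk}. Everything else is a routine assembly of results already established in Sect.~\ref{sec-definitions} and the present section.
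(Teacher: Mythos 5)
Your proposal is correct in substance, and for the norm bound, the measurability statement, and the adjoint relation it follows essentially the paper's own route: the bound from \eqref{dirk1} and Thm.~\ref{thm-WW}(1)--(2), measurability from Prop.~\ref{prop-ida} plus truncation, and the adjoint identity from time reversal of the bridge via Thm.~\ref{thm-WW-rev} -- you are in effect unfolding the argument the paper delegates to \cite[Lem.~10.6]{GMM2014}. The genuine difference is your treatment of \eqref{intK}: the paper quotes Ex.~10.4 of \cite{GMM2014} for bounded continuous $V$ and bounded $\Psi$ and then extends by truncating $\Psi$ and approximating $V$, whereas you derive the identity internally from Lem.~\ref{lem-bridget1}, reading \eqref{bridget71} with the rank-one choice $A(\V{z})=|\phi_0\rangle\langle\Psi(\V{z})|$ and using Ex.~\ref{ex-str-sol}(2) to replace $\hat{\V{b}}{}^{t;\V{z},\V{x}}$ by $\V{b}^{t;\V{z},\V{x}}$ in distribution. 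This is a legitimate and more self-contained route (Lem.~\ref{lem-bridget1} is proved in this paper, for Kato decomposable $V$); what it costs is that Lem.~\ref{lem-bridget1} requires $A$ \emph{bounded}, so you still need the same truncation $\Psi_n:=1_{\{\|\Psi\|\le n\}}\Psi$ and a dominated-convergence passage (using $\|T^V_t(\V{x},\cdot)\|\in L^{p'}(e^{a|\cdot|}\Id\V{x})$ from \eqref{T(x,y)-sym} and \eqref{dirk1}) that you only gesture at.

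Two smaller points you should make explicit. First, Thm.~\ref{thm-WW-rev}, Ex.~\ref{ex-str-sol}, and Prop.~\ref{prop-ida} are stated only for bounded continuous $V$, while you invoke them directly for $V\in\cK_\pm(\RR^\nu)$; either run the approximation scheme of Step~2 of the proof of Lem.~\ref{lem-bridget1} (mollification and cut-off, \eqref{null}, \eqref{bd-WW}, \eqref{dirk0}, \eqref{dirk1}), as the paper does for the adjoint relation, or observe that by \eqref{def-WWV} the $V$-dependence is the scalar factor $e^{-\int_0^tV(\V{X}_s)\Id s}$, which is invariant under time reversal of the path, so the adjoint relation and the distributional identities reduce to the case $V=0$ already covered. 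Second, the fact that the bridge reversed at the \emph{terminal} time $t$ is again a bridge (from $\V{x}$ to $\V{y}$, with a new basis and driving motion) is not literally Ex.~\ref{ex-stoch}(3), which reverses at an interior time $s<t$; it is the statement the paper cites from \cite[Lem.~10.3 \& App.~D]{GMM2014}. Neither point is a fatal gap, but both need a sentence in a complete write-up.
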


\begin{proof}
The existence of the integral defining the kernel 
$T^V_t(\V{x},\V{y})$ and the inequality in \eqref{T(x,y)-sym} 
follow immediately from \eqref{dirk1} and Thm.~\ref{thm-WW}(1)\&(2).
The assertions on its measurability and image are consequences of Prop.~\ref{prop-ida}

For bounded and continuous $V$, the identity in \eqref{T(x,y)-sym} has been shown in 
\cite[Lem.~10.6]{GMM2014}. (It follows from Thms.~\ref{thm-str-sol} and~\ref{thm-WW-rev} 
and the fact that $(\V{b}_{t-\tau}^{t;\V{y},\V{x}})_{\tau\in[0,t]}$ is a Brownian bridge from $\V{x}$
to $\V{y}$ in time $t$ with respect to some new stochastic basis and driving Brownian motion.
The latter fact can be proved by adapting arguments of \cite{Pardoux-LNM1204};
see \cite[Lem.~10.3 \& App.~D]{GMM2014}.) It is, however, straightforward to extend
the identity in \eqref{T(x,y)-sym} to general Kato decomposable potentials by the approximation
scheme applied in Step~2 of the proof of Lem.~\ref{lem-bridget1}.

Likewise, \eqref{intK} follows from Ex.~10.4 in \cite{GMM2014}, provided that $V$ is bounded
and continuous and $\Psi\in L^\infty(\RR^\nu,\FHR)$. 
If $\Psi\in L^p(\RR^\nu,\FHR;e^{-a|\V{x}|}\Id\V{x})$ with
$p\in[1,\infty]$, then we plug the functions 
$\Psi_n\in L^p(\RR^\nu,\FHR)\cap L^\infty(\RR^\nu,\FHR)$
given by $\Psi_n(\V{z}):=1_{\|\Psi(\V{z})\|\le n}\Psi(\V{z})$ into \eqref{intK} and pass to the limit
$n\to\infty$ by means of the dominated convergence theorem, on the right hand side using 
$\|T^V_t(\V{x},\cdot)\|\in L^{p'}(\RR^\nu;e^{a|\V{x}|}\Id\V{x})$ and \eqref{null}, 
on the left hand side employing 
$e^{-\int_0^tV(\V{B}_s^{\V{x}})\Id s}\|\Psi(\V{B}_t^{\V{x}})\|$ as a $\PP$-integrable majorant.
After that we carry through the approximation scheme of Step~2 of the proof of 
Lem.~\ref{lem-bridget1} to generalize the result to Kato decomposable potentials. 
\end{proof}

Before we state the next proposition we 
notice that, in view of \eqref{dirk1} and \eqref{T(x,y)-sym}, for $V\in\cK_\pm(\RR^\nu)$ 
and all fixed $t>0$, $\V{y}\in\RR^\nu$, and $\psi\in\FHR$, the function 
$\RR^\nu\ni\V{x}\mapsto T_t^V(\V{x},\V{y})\psi$ belongs to every $L^p(\RR^\nu,\FHR)$
with $p\in[1,\infty]$.

\begin{prop}\label{prop-CK}
Let $V\in\cK_\pm(\RR^\nu)$, $t>s>0$, and $\V{x},\V{y}\in\RR^\nu$. Then 
\begin{align}\label{CK1}
T_{t}^V(\V{x},\V{y})\psi&=\big(T_{s}^V(T_{t-s}^V(\cdot,\V{y})\psi)\big)(\V{x}),\quad\psi\in\FHR,
\end{align}
and the following Chapman-Kolmogoroff equation is valid,
\begin{align}\label{CK2}
T_{t}^V(\V{x},\V{y})=\int_{\RR^\nu}T_{s}^V(\V{x},\V{z})T_{t-s}^V(\V{z},\V{y})\Id\V{z}.
\end{align}
\end{prop}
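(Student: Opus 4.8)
The plan is to derive the Chapman--Kolmogoroff equation \eqref{CK2} from the identity \eqref{CK1}, and to derive \eqref{CK1} from the Markov property of the flow together with the transformation rule for the Brownian bridge established in Lem.~\ref{lem-bridget1}. First I would reduce to the case of bounded, continuous $V$ and then pass to general $V\in\cK_\pm(\RR^\nu)$ by the now-standard approximation scheme of Step~2 in the proof of Lem.~\ref{lem-bridget1} (cutoff $V_m:=1_{|V|\le m}V$, dominated convergence using \eqref{bd-WW}, \eqref{dirk0}, \eqref{dirk1}); the remark preceding the proposition guarantees that $\RR^\nu\ni\V{x}\mapsto T_{t-s}^V(\V{x},\V{y})\psi$ lies in every $L^p$, so that $T_s^V$ may legitimately be applied to it, and \eqref{intK} then applies to this function. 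So throughout the main argument I assume $V$ bounded and continuous.

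For \eqref{CK1}: unfolding the definitions, the right-hand side is $\EE\big[\WW{s}{V}[\V{B}^{\V{x}}]^*\, (T_{t-s}^V(\cdot,\V{y})\psi)(\V{B}_s^{\V{x}})\big]$, and $(T_{t-s}^V(\V{z},\V{y})\psi)=p_{t-s}(\V{z},\V{y})\,\EE\big[\WW{t-s}{V}[\V{b}^{t-s;\V{y},\V{z}}]\big]\psi$. Using Thm.~\ref{thm-WW-rev} to rewrite $\WW{t-s}{V}[\V{b}^{t-s;\V{y},\V{z}}]^{*}$ as $\WW{t-s}{V}[\sR_{t-s}\V{b}^{t-s;\V{y},\V{z}}]$ — equivalently, working with the time-reversed bridge from $\V{z}$ to $\V{y}$ — together with Ex.~\ref{ex-str-sol}(1)\&(2) to identify distributions of the various bridges, the inner expectation becomes an expectation over a bridge $\V{b}^{t-s;\V{z},\V{y}}$. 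Then I would apply \eqref{bridget72} of Lem.~\ref{lem-bridget1} with the roles arranged so that the bridge from $\V{x}$ to $\V{y}$ in time $t$, evaluated at time $s$, gets converted into a Brownian motion $\V{B}^{\V{x}}$ at time $s$ carrying a heat-kernel weight $p_{t-s}(\V{B}_s^{\V{x}},\V{y})$. Taking $A(\V{z}):=\EE\big[\WW{t-s}{V}[\V{b}^{t-s;\V{y},\V{z}}]\big]\psi$ (bounded, measurable, separably valued by Prop.~\ref{prop-ida}) and combining with the flow/Markov property of Thm.~\ref{thm-WW}(4)--(5) applied at the intermediate time $s$ (splicing $\WW{s}{V}[\V{B}^{\V{x}}]^*$ with $\WW{t-s}{V}$ along the shifted basis, using \eqref{flow1} and \eqref{Markov}) yields $T_t^V(\V{x},\V{y})\psi$ on the left. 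This is the heart of the matter.

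Once \eqref{CK1} is in hand, \eqref{CK2} is immediate: the remark before the proposition shows $\RR^\nu\ni\V{z}\mapsto T_{t-s}^V(\V{z},\V{y})\psi$ belongs to every $L^p(\RR^\nu,\FHR)$, so Prop.~\ref{lem-T(x,y)-sym} — specifically \eqref{intK} — lets me write $\big(T_s^V(T_{t-s}^V(\cdot,\V{y})\psi)\big)(\V{x})=\int_{\RR^\nu}T_s^V(\V{x},\V{z})T_{t-s}^V(\V{z},\V{y})\psi\,\Id\V{z}$, and comparing with \eqref{CK1} gives \eqref{CK2} tested against an arbitrary $\psi\in\FHR$, hence as an operator identity. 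The integrability of $(\V{x},\V{z})\mapsto\|T_s^V(\V{x},\V{z})\|\,\|T_{t-s}^V(\V{z},\V{y})\|$ needed to make the Bochner integral in \eqref{CK2} well-defined follows from the bound in \eqref{T(x,y)-sym}, \eqref{dirk1}, and the Chapman--Kolmogoroff equation for the scalar heat-kernel-type quantity $S_\bullet^V$.

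The main obstacle I expect is the careful bookkeeping in the reduction leading to \eqref{CK1}: one must track which stochastic basis and which driving Brownian motion each of the processes $\WW{s}{V}[\V{B}^{\V{x}}]^*$, $\WW{t-s}{V}[\V{b}^{t-s;\V{z},\V{y}}]$ lives on, invoke time-reversal (Thm.~\ref{thm-WW-rev}) and the bridge transformation (Lem.~\ref{lem-bridget1}, \eqref{bridget72}) in exactly the right order, and check that all the conditioning/measurability hypotheses of the Markov property in Thm.~\ref{thm-WW}(5) are met — in particular that the integrand is bounded (a truncation argument with a cutoff function $\chi\in C_0(\RR)$ equal to $1$ on a large ball, exactly as in the proof of Lem.~\ref{lem-bridget1}, handles the unboundedness of $\Psi$ or of the operator norms). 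The analytic estimates are all routine given \eqref{bd-WW}, \eqref{dirk0}, \eqref{dirk1}; it is the probabilistic identification of laws that requires care.
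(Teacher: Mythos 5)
Your proposal is correct and follows essentially the same route as the paper: reduce to bounded continuous $V$, use the Markov/flow property of Thm.~\ref{thm-WW} (with the $C_0$-cutoff trick), the law identifications of Ex.~\ref{ex-str-sol}, the bridge-to-Brownian-motion formula \eqref{bridget72}, and the symmetry/time-reversal relation \eqref{T(x,y)-sym} to get \eqref{CK1}, then \eqref{intK} for \eqref{CK2}, and finally the dominated-convergence approximation for general Kato-decomposable $V$. The only differences are cosmetic bookkeeping: you take adjoints at the start and extend \eqref{CK1} before \eqref{CK2}, while the paper computes $T_t^V(\V{y},\V{x})$ via the transition-operator identity $P_{0,t}^{\V{y}}=P_{0,s}^{\V{y}}P_{s,t}^{\V{y}}$, takes adjoints at the end, and extends \eqref{CK2} first.
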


\begin{proof}
Let $V$ be bounded and continuous to start with. 
Let $(P_{r,u}^{\V{y}})_{0\le r\le u\le t}$ denote the transition operators associated with
the choices $\V{X}^{\V{q}}=\V{b}^{t;\V{q},\V{y}}$ in Thm.~\ref{thm-WW} and define 
$f:\RR^\nu\times\FHR\to\FHR$ by $f(\V{z},\psi):=\chi(\|\psi\|)\psi$, 
where $\chi\in C_0(\RR)$ is again equal to $1$ on $[0,R]$ with $R>0$ much larger than 
$e^{(\|\mho\|_\infty+\|V\|_\infty)t}$. Then we observe that, 
for all $\V{z}\in\RR^\nu$ and $\psi\in\FHR$,
\begin{align}\label{bridget76}
(P_{s,t}^{\V{y}}f)(\V{z},\psi)&=\EE\big[\WW{t-s}{V}[^s\V{b}^{t;\V{z},\V{y}}]\psi\big]
=\EE\big[\WW{t-s}{V}[\V{b}^{t-s;\V{z},\V{y}}]\psi\big]=A(\V{z})\psi,
\end{align}
with $A(\V{z}):=\EE\big[\WW{t-s}{V}[\V{b}^{t-s;\V{z},\V{y}}]\big]$,
because $\WW{t-s}{V}[^s\V{b}^{t;\V{z},\V{y}}]\psi$ and $\WW{t-s}{V}[\V{b}^{t-s;\V{z},\V{y}}]\psi$
have the same distribution by Ex.~\ref{ex-str-sol}(2). In this notation,
\begin{align}\label{bridget77}
T_{t-s}^V(\V{y},\V{z})&=p_{t-s}(\V{y},\V{z})A(\V{z}).
\end{align}
Therefore, we obtain, for all $\psi\in\FHR$,
\begin{align*}
T_t^V(\V{y},\V{x})\psi&=p_t(\V{x},\V{y})(P_{0,t}^{\V{y}}f)(\V{x},\psi)
=p_t(\V{x},\V{y})(P_{0,s}^{\V{y}}P_{s,t}^{\V{y}}f)(\V{x},\psi)
\\
&=p_t(\V{x},\V{y})
\EE\big[(P_{s,t}^{\V{y}}f)(\V{b}_s^{t;\V{x},\V{y}},\WW{s}{V}[\V{b}^{t;\V{x},\V{y}}]\psi)\big]
\\
&=p_t(\V{x},\V{y})\EE\big[A(\V{b}_s^{t;\V{x},\V{y}})\WW{s}{V}[\V{b}^{t;\V{x},\V{y}}]\psi\big]
\\
&=\EE\big[p_{t-s}(\V{y},\V{B}_s^{\V{x}})A(\V{B}_s^{\V{x}})\WW{s}{V}[\V{B}^{\V{x}}]\big]\psi
\\
&=\EE\big[T_{t-s}^V(\V{y},\V{B}_s^{\V{x}})\WW{s}{V}[\V{B}^{\V{x}}]\big]\psi,
\end{align*}
where we applied \eqref{bridget76} in the fourth step,
Lem.~\ref{lem-bridget1} in the penultimate step, and \eqref{bridget77} in the
last one. In view of \eqref{T(x,y)-sym} this computation implies, for every $\phi\in\FHR$,
\begin{align*}
T_t^V(\V{x},\V{y})\phi&=\EE\big[\WW{s}{V}[\V{B}^{\V{x}}]^*T_{t-s}(\V{B}_s^{\V{x}},\V{y})\phi\big]
=T_s^V(T^V_{t-s}(\cdot,\V{y})\phi)(\V{x}),
\end{align*}
which is \eqref{CK1} with a bounded and continuous $V$. Applying \eqref{intK} we also obtain
\eqref{CK2} for bounded and continuous $V$. In the proof of \eqref{T(x,y)-sym} we have, however,
explained how to approximate $T^V_{\tilde{t}}(\V{x},\V{y})$, $\tilde{t}>0$, 
with a bounded and measurable (resp. Kato  decomposable) $V$ in operator norm by kernels with 
bounded and continuous (resp. bounded and measurable) potentials. Since all kernels can be 
majorized by means of  \eqref{dirk1} and the bound in \eqref{T(x,y)-sym}, 
we may thus employ the dominated convergence theorem to extend \eqref{CK2} to Kato 
decomposable $V$. Together with \eqref{intK} this will also prove \eqref{CK1} for Kato 
decomposable $V$.
\end{proof}

In what follows, we shall sometimes use the symbol $T_t^{V;(p,q),F}$ to denote
the restriction of $T_t^V$ to $L^p(\RR^\nu,\FHR;e^{pF}\Id\V{x})$ considered 
(by means of Lem.~\ref{lem-T-bd}) as an element of
$\LO(L^p(\RR^\nu,\FHR;e^{pF}\Id\V{x}),L^q(\RR^\nu,\FHR;e^{qF}\Id\V{x}))$.
Thanks to \cite{GMM2014} we already know that  $(T_t^{V;(2,2),0})_{t\ge0}$ is a 
strongly continuous semi-group of bounded self-adjoint operators in the Hilbert
space $\HR=L^2(\RR^\nu,\FHR)$. In fact, the following more general statement holds true:

\begin{cor}\label{cor-SASG}
Let $V\in\cK_\pm(\RR^\nu)$ and $F:\RR^\nu\to\RR$ be globally Lipschitz continuous. 
Then $(T_t^V)_{t\ge0}$ defines self-adjoint semi-groups
between the spaces $L^p(\RR^\nu,{\FHR};e^{pF}\Id\V{x})$ in the sense that 
\begin{align}\label{achim}
T_{s+t}^{V;(p,r),F}=T_s^{V;(q,r),F}T_t^{V;(p,q),F},\qquad (T_t^{V;(p,q),F})^*=T_t^{V;(q',p'),-F},
\end{align}
for all $s,t>0$ and $1\le p\le q\le r\le\infty$, where $p'$ is the
exponent conjugate to $p$ and $q'$ the one conjugate to $q$.
\end{cor}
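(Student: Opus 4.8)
The plan is to reduce everything to the already-established kernel identities of Proposition~\ref{prop-CK} together with the boundedness statement of Lemma~\ref{lem-T-bd}, observing that the conjugation by $e^{F}$ is a harmless bookkeeping device. Indeed, by the last displayed formula in Subsect.~\ref{ssec-models}, an operator $T:L^p(\RR^\nu,\FHR;e^{pF}\Id\V{x})\to L^r(\RR^\nu,\FHR;e^{rF}\Id\V{x})$ is bounded with norm $\|e^FTe^{-F}\|_{p,r}$, so the semi-group relation $T_{s+t}^{V;(p,r),F}=T_s^{V;(q,r),F}T_t^{V;(p,q),F}$ is equivalent to the unweighted identity $e^FT_{s+t}^Ve^{-F}=(e^FT_s^Ve^{-F})(e^FT_t^Ve^{-F})$ on the relevant $L^p$-spaces, i.e. to $T_{s+t}^V\Psi=T_s^V(T_t^V\Psi)$ as measurable $\FHR$-valued functions. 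So it suffices to prove the two identities in \eqref{achim} without the weight $F$, and then transport them by conjugation.

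First I would prove the semi-group law. Fix $1\le p\le q\le r\le\infty$ and $\Psi\in L^p(\RR^\nu,\FHR)$. By Lemma~\ref{lem-T-bd}, $T_t^V\Psi\in L^q(\RR^\nu,\FHR)$ and $T_s^V(T_t^V\Psi)\in L^r(\RR^\nu,\FHR)$, and by Proposition~\ref{lem-T(x,y)-sym} all three operators act through their kernels: $(T_t^V\Psi)(\V{x})=\int T_t^V(\V{x},\V{z})\Psi(\V{z})\Id\V{z}$, and similarly for $T_s^V$ and $T_{s+t}^V$, the integrals being absolutely convergent because $\|T_u^V(\V{x},\cdot)\|\le e^{\|\mho\|_\infty u}S_u^V(\V{x},\cdot)$ lies in the appropriate dual Lebesgue space by \eqref{dirk1}. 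Now compute
\begin{align*}
\big(T_s^V(T_t^V\Psi)\big)(\V{x})
&=\int_{\RR^\nu}T_s^V(\V{x},\V{z})\Big(\int_{\RR^\nu}T_t^V(\V{z},\V{y})\Psi(\V{y})\Id\V{y}\Big)\Id\V{z}
\\
&=\int_{\RR^\nu}\Big(\int_{\RR^\nu}T_s^V(\V{x},\V{z})T_t^V(\V{z},\V{y})\Id\V{z}\Big)\Psi(\V{y})\Id\V{y}
=\int_{\RR^\nu}T_{s+t}^V(\V{x},\V{y})\Psi(\V{y})\Id\V{y}
=(T_{s+t}^V\Psi)(\V{x}),
\end{align*}
where the interchange of the $\V{z}$- and $\V{y}$-integrations is justified by Fubini's theorem applied to the scalar majorant $e^{(2\|\mho\|_\infty+\ldots)(s+t)}S_s^{V_-}(\V{x},\V{z})S_t^{V_-}(\V{z},\V{y})\|\Psi(\V{y})\|$ (finite by \eqref{dirk0}--\eqref{dirk1} and $\Psi\in L^p$ with the exponential weight absorbed), and the penultimate equality is the Chapman--Kolmogoroff equation \eqref{CK2}. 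This establishes the first identity in \eqref{achim} for $F=0$; conjugating by $e^F$ as above gives it for general globally Lipschitz $F$.

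For the adjoint relation, note that $T_t^{V;(p,q),F}\in\LO(L^p(\cdot;e^{pF}),L^q(\cdot;e^{qF}))$ has adjoint in $\LO(L^{q'}(\cdot;e^{-q'F}),L^{p'}(\cdot;e^{-p'F}))$, since the dual of $L^q(\RR^\nu,\FHR;e^{qF}\Id\V{x})$ is $L^{q'}(\RR^\nu,\FHR;e^{-q'F}\Id\V{x})$ under the pairing $\int\DPn{\Phi(\V{x})}{\Psi(\V{x})}\Id\V{x}$ (no weight in the pairing), using that $\FHR$ is a Hilbert space. So I must show $\int\DPn{\Phi(\V{x})}{(T_t^V\Psi)(\V{x})}\Id\V{x}=\int\DPn{(T_t^V\Phi)(\V{x})}{\Psi(\V{x})}\Id\V{x}$ for suitable dense classes of $\Phi,\Psi$ (say both in $L^1\cap L^\infty$ with compact support, where all the kernel integrals below converge absolutely by \eqref{dirk1} and \eqref{dirk0}). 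Expanding via \eqref{intK} and using Fubini,
\begin{align*}
\int_{\RR^\nu}\DPn{\Phi(\V{x})}{(T_t^V\Psi)(\V{x})}\Id\V{x}
&=\int_{\RR^\nu}\int_{\RR^\nu}\DPn{\Phi(\V{x})}{T_t^V(\V{x},\V{y})\Psi(\V{y})}\Id\V{y}\Id\V{x}
\\
&=\int_{\RR^\nu}\int_{\RR^\nu}\DPn{T_t^V(\V{x},\V{y})^*\Phi(\V{x})}{\Psi(\V{y})}\Id\V{x}\Id\V{y}
=\int_{\RR^\nu}\DPn{(T_t^V\Phi)(\V{y})}{\Psi(\V{y})}\Id\V{y},
\end{align*}
where in the last step I invoked $T_t^V(\V{x},\V{y})^*=T_t^V(\V{y},\V{x})$ from \eqref{T(x,y)-sym} together with \eqref{intK}. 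Hence $(T_t^{V;(p,q),0})^*=T_t^{V;(q',p'),0}$ on these dense sets, and by density and boundedness on all of $L^q(\cdot;e^{qF})$; inserting the weights via the conjugation identity again yields $(T_t^{V;(p,q),F})^*=T_t^{V;(q',p'),-F}$.

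The main obstacle is purely a matter of care rather than depth: ensuring that every application of Fubini's theorem and of dominated convergence (when passing from the dense class of bounded compactly supported functions to general weighted-$L^p$ elements) is legitimate, which is exactly what the heat-kernel majorants \eqref{dirk0}--\eqref{dirk1}, the norm bound \eqref{T(x,y)-sym}, and the null-set identity \eqref{null} are there to provide. The self-adjointness on $L^2$ (the case $p=q=2$, $F=0$) is then the special case already recorded from \cite{GMM2014}, and the statement for general $p\le q$ and general Lipschitz $F$ follows by the conjugation reduction described above.
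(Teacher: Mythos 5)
Your argument is correct and follows essentially the same route as the paper: the adjoint relation is read off from the kernel symmetry in Prop.~\ref{lem-T(x,y)-sym}, and the semi-group law is obtained from \eqref{intK} together with the Chapman--Kolmogoroff equation \eqref{CK2} and a Fubini interchange justified by the majorants coming from \eqref{T(x,y)-sym} and \eqref{dirk1}. Your extra reduction of the weighted spaces to the unweighted ones via conjugation by $e^F$ is only a cosmetic repackaging of the weighted majorant the paper uses directly.
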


\begin{proof}
The second relation in \eqref{achim} follows immediately from Prop.~\ref{lem-T(x,y)-sym}.
The first asserted relation, expressing the semi-group property, is an easy 
consequence of \eqref{intK}, the Chapman-Kolmogoroff equation, and Fubini's theorem.
The latter is applicable by virtue of \eqref{T(x,y)-sym} which shows that the right hand side of
\begin{align*}
e^{-a|\V{x}|}\int_{\RR^{2\nu}}&\|T_s^V(\V{x},\V{z})T_t^V(\V{z},\V{y})\Psi(\V{y})\|\Id(\V{y},\V{z})
\\
&\le \int_{\RR^\nu}e^{a|\V{x}-\V{z}|}\|T^V_s(\V{x},\V{z})\|\Id\V{z}
\sup_{\tilde{\V{z}}\in\RR^\nu}\big\|e^{a|\tilde{\V{z}}-\cdot|}
\|T_t^V(\tilde{\V{z}},\cdot)\|\big\|_{p'}\|e^{-a|\cdot|}\Psi\|_p
\end{align*}
is finite, for all $\Psi\in L^p(\RR^\nu,\FHR;e^{-a|\V{x}|}\Id\V{x})$ with $p\in[1,\infty]$.
\end{proof}

%%%%%%%%%%%%%%%%%%%%%%%%%%%%%%%%%%%%%%%%%%%%%%
%%%%%%%%%%%%%%%%%%%%%%%%%%%%%%%%%%%%%%%%%%%%%%
%%%%%%%%%%%%%%%%%%%%%%%%%%%%%%%%%%%%%%%%%%%%%%

\section{Weighted estimates on the stochastic flow}\label{sec-weights}

\noindent
In this section we study how the operator-valued process $\WW{}{V}[\V{B}^{\V{q}}]$ 
introduced in Thm.~\ref{thm-WW} behaves when it is multiplied or conjugated with certain weight 
functions, i.e., with certain unbounded multiplication operators acting in the Fock space
which are defined and discussed in Subsect.~\ref{ssec-def-weights}. 
The weighted Burkholder-Davis-Gundy (BDG) type inequalities derived Subsect.~\ref{ssec-wBDG} 
are extensions of similar ones in 
\cite[\textsection7]{GMM2014} in the sense that we consider more general weights and an  
inhomogeneous SDE generalizing \eqref{SDE-spin}. In certain respects the situation here is, 
however, also simpler than in \cite{GMM2014}, 
where we had to study the convergence of a time-ordered integral series defining 
$\WW{}{V}[\V{B}^{\V{q}}]$ in addition. Moreover, we refrain from considering unbounded drift
vectors $\V{\beta}$ in this section to shorten some arguments. A suitable version of the 
stochastic Gronwall lemma \cite{Scheutzow2013} obtained in 
Subsect.~\ref{ssec-stoch-Gronwall} turns out to be convenient
in the derivation of our BDG type bounds. 

In Subsect.~\ref{ssec-flow-cont}, we shall further address continuity properties of 
$\WW{}{V}[\V{B}^{\V{q}}]$ under changes of the initial condition $\V{q}$ and the coefficient 
vector $\V{c}$, employing an elementary algebraic lemma that we prove first in 
Subsect.~\ref{ssec-i1i5}.

The introduction of the inhomogeneity
$\V{R}$ in the SDE \eqref{SDE-inhom} will actually become relevant only in the discussion of
differentiability properties of the stochastic flow in our companion paper \cite{Matte2015}, where the
results of this section will serve as a crucial ingredient. The introduction of the inhomogeneity
$\rho$ in \eqref{SDE-inhom} will already turn out to be useful in Ex.~\ref{ex-antonio}

%%%%%%%%%%%%%%%%%%%%%%%%%%%%%%%%%%%%%%%%%%%%%%%%

\subsection{Definition and discussion of the weight functions}\label{ssec-def-weights}

First, we introduce the weight functions we are interested in: Let $t_0>0$ and set
\begin{align*}
\forall\,t\in[0,t_0]:\quad\tau_\alpha(t)&:=\left\{
\begin{array}{ll}
t/2\alpha,&\alpha>0,\\
(t-t_0)/2\alpha,&\alpha<0.
\end{array}
\right.
\end{align*}
Let $\vo,\vk:\cM\to[0,\infty)$ be measurable functions such that $\vo\le\omega$ and define
\begin{align*}
v_{\alpha,t}&:=
\tau_\alpha(t)\,\vo+\vk,\quad v_{\alpha,\ve,t}:=v_{\alpha,t}(1+\ve\,v_{\alpha,t})^{-1},
\\
\Theta_{\ve,t}^{(\alpha)}
&:=({1+\iota\tau_\alpha(t)+\Id\Gamma(v_{\alpha,\ve,t})})^\alpha\big({1+\ve\,(1+\iota\tau_\alpha(t)+\Id\Gamma(v_{\alpha,\ve,t}}))\big)^{-\alpha},
\end{align*}
for all $t\in[0,t_0]$, $\ve\in[0,1]$, and $\alpha\in\RR\setminus\{0\}$, where
$\iota=0$, if $\vo=0$, and $\iota=1$, if $\vo\not=0$.
If one is only interested in the question whether domains of higher
powers of the radiation field energy or the number operator stay {\em invariant}
under the action of our semi-groups, then it suffices to choose
$\vo=0$ and $\vk=\omega$ or $\vk=1$, respectively.
In order to show that $\WW{t}{0}[\V{B}^{\V{q}}]$ {\em improves} the localization in the
photon momentum spaces, we choose $\vo=\omega$ or $\vo=1_{\{\omega>r\}}\omega$,
for some $r>0$; see, e.g., Ex.~\ref{ex-QED-exp}(1). 
Moreover, we choose $\alpha>0$ to control weights to the left of 
$\WW{t_0}{0}[\V{B}^{\V{q}}]$, and $\alpha<0$ 
to control weights to the right of $\WW{t_0}{0}[\V{B}^{\V{q}}]$;
cf. Rem.~\ref{rem-spin4} below.
In technical proofs the regularization parameter $\ve$ will be chosen strictly positive,
thus rendering the operators bounded, and eventually be send to zero again. We shall also be 
interested in the exponential weights given by
\begin{align*}
\forall\,\ve\in[0,1],\,t\in[0,t_0]:\quad\Xi_{\ve,t}^{(\delta)}&:=
\left\{
\begin{array}{ll}
e^{\delta\Theta_{\ve,t}^{(1)}},&\delta\in(0,1],\\
e^{\delta\Theta_{\ve,t_0-t}^{(1)}},&\delta\in[-1,0).
\end{array}
\right.
\end{align*}

\begin{rem}\label{rem-weights}
For $t\in[0,t_0]$, $\ve\in(0,1]$, $|\alpha|>0$, and $0<|\delta|\le1$, the following holds:

\smallskip

\noindent(1) $\Theta_{\ve,t_0-t}^{(\alpha)}={\Theta_{\ve,t}^{(-\alpha)}}^{-1}$,
$\Xi_{\ve,t_0-t}^{(\delta)}={\Xi_{\ve,t}^{(-\delta)}}^{-1}$.

\smallskip

\noindent(2)
If we consider the $\Id\Gamma(\cdot)$'s as multiplication operators, then the following
pointwise bounds hold,
\begin{align}\label{gabi1}
{\Theta_{\ve,t}^{(\alpha)}}^{-1}\frac{\Id}{\Id t}\,\Theta_{\ve,t}^{(\alpha)}
&\le\alpha\,\tau_\alpha'(t)(1+\Id\Gamma(\omega))
\le \tfrac{1}{2}\,\Id\Gamma(\omega)+\tfrac{1}{2},
\\\label{gabi2}
{\Xi_{\ve,t}^{(\delta)}}^{-1}\,\frac{\Id}{\Id t}\,\Xi_{\ve,t}^{(\delta)}
&\le \tfrac{1}{2}\,\Id\Gamma(\omega)+\tfrac{1}{2}.
\end{align}
\end{rem}

\begin{notation}\label{not-Theta}
In what follows, $\Theta_s$ denotes any of the weights $\Theta^{(\alpha)}_{\ve,s}$ or
$\Xi_{\ve,s}^{(\delta)}$ with {\em non-zero} $\ve\in(0,1]$ as described in the previous paragraphs. 
\end{notation}

Using that the operator norm of $\Theta_s$ is 
bounded uniformly in $s$ by some $\ve$-dependent constant and employing the dominated 
convergence theorem togeher with \eqref{gabi1} and \eqref{gabi2}, it is straightforward to verify 
that, for every $\psi\in\dom(\Id\Gamma(\omega))$, the map $s\mapsto\Theta_s\psi$ belongs
to $C^1([0,t_0],\FHR)$ and its derivative is given by $\dot{\Theta}_s\psi$, where
$\dot{\Theta}_s$ denotes the maximal operator of multiplication with the function obtained
by taking pointwise derivatives of the weight functions at $s\in[0,t_0]$.

Since the $\LO(\FHR)$-valued maps $s\mapsto\Theta_s$ and $s\mapsto\dot{\Theta}_s$ are only
{\em strongly} continuous, it might make sense to verify the following:

\begin{lem}\label{lem-Ito-Theta}
Let $Y:[0,t_0]\times\Omega\to\FHR$ be a semi-martingale whose paths belong 
to $C([0,t_0],\dom(\Id\Gamma(\omega)))$ and which can $\PP$-a.s. be written as
\begin{align}
Y_t=\eta+\int_0^t A_{0,s}\Id s+\int_0^t\V{A}_s\Id\V{B}_s,\quad t\in[0,t_0],
\end{align}
for some $\fF_0$-measurable $\eta:\Omega\to\dom(\Id\Gamma(\omega))$ and an adapted
$\LO(\RR^{1+\nu},\FHR)$-valued process $(A_0,\V{A})$ with continuous paths.
Then $(\Theta_tY_t)_{t\in[0,t_0]}$
is a $\FHR$-valued semi-martingale and it $\PP$-a.s. satisfies
\begin{align}\label{laura0}
\Theta_tY_t&=\Theta_0\eta+\int_0^t\dot\Theta_sY_s\Id s+\int_0^t\Theta_sA_{0,s}\Id s
+\int_0^t\Theta_s\V{A}_s\Id\V{B}_s,\quad t\in[0,t_0],
\\\nonumber
\|\Theta_tY_t\|^2&=\|\Theta_0\eta\|^2+\int_0^t2\Re\SPn{\Theta_sY_s}{\Theta_sA_{0,s}
+\dot\Theta_sY_s}\Id s
\\\label{laura0b}
&\quad+\int_0^t2\Re\SPn{\Theta_sY_s}{\Theta_s\V{A}_s}\Id\V{B}_s+\int_0^t\|\Theta_s\V{A}_s\|^2\Id s,
\quad t\in[0,t_0].
\end{align}
\end{lem}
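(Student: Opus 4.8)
The plan is to approximate the strongly continuous operator-valued processes $s\mapsto\Theta_s$ and $s\mapsto\dot\Theta_s$ by simple (piecewise constant in time) processes and apply the It\^o formula for the product of a deterministic $\LO(\FHR)$-valued step function with the $\FHR$-valued semi-martingale $Y$, for which \eqref{laura0} is elementary on each subinterval. First I would fix $\ve\in(0,1]$, so that $\sup_{s\in[0,t_0]}\|\Theta_s\|=:c_\ve<\infty$ and $\sup_{s}\|\dot\Theta_s\|<\infty$ by \eqref{gabi1}--\eqref{gabi2}; this boundedness is what makes all the stochastic integrals below well-defined and lets dominated convergence run. Partition $[0,t_0]$ into $n$ equal pieces with endpoints $s_j^{(n)}$ and set $\Theta^{(n)}_s:=\Theta_{s_j^{(n)}}$ for $s\in[s_j^{(n)},s_{j+1}^{(n)})$. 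On each such subinterval $\Theta^{(n)}_sY_s$ is just a fixed bounded operator applied to a semi-martingale, so
\begin{align*}
\Theta^{(n)}_{s_{j+1}}Y_{s_{j+1}}-\Theta^{(n)}_{s_j}Y_{s_j}
&=\Theta_{s_j}\!\int_{s_j}^{s_{j+1}}\! A_{0,s}\Id s+\Theta_{s_j}\!\int_{s_j}^{s_{j+1}}\!\V A_s\Id\V B_s,
\end{align*}
and telescoping over $j$ gives, $\PP$-a.s.,
\begin{align*}
\Theta^{(n)}_tY_t&=\Theta_0\eta+\sum_{s_j<t}(\Theta_{s_{j+1}}-\Theta_{s_j})Y_{s_{j+1}}
+\int_0^t\Theta^{(n)}_sA_{0,s}\Id s+\int_0^t\Theta^{(n)}_s\V A_s\Id\V B_s,
\end{align*}
for $t$ a partition point (and then for general $t$ by the same computation on the last fractional interval).

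Next I would pass to the limit $n\to\infty$ term by term. The two integrals converge: $\Theta^{(n)}_s\to\Theta_s$ strongly and pointwise in $s$, with uniform bound $c_\ve$, so $\int_0^t\Theta^{(n)}_sA_{0,s}\Id s\to\int_0^t\Theta_sA_{0,s}\Id s$ $\PP$-a.s. by dominated convergence (the paths of $A_0$ being continuous, hence bounded on $[0,t_0]$), and $\EE\int_0^{t_0}\|(\Theta^{(n)}_s-\Theta_s)\V A_s\|^2\Id s\to0$ again by dominated convergence, giving $L^2$-convergence of the stochastic integrals along a subsequence $\PP$-a.s.; one then upgrades to convergence for all $n$ via uniqueness of the limit. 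The Riemann-sum term $\sum_{s_j<t}(\Theta_{s_{j+1}}-\Theta_{s_j})Y_{s_{j+1}}$ is where the real work sits: I claim it converges $\PP$-a.s.\ to $\int_0^t\dot\Theta_sY_s\Id s$. Writing $\Theta_{s_{j+1}}-\Theta_{s_j}=\int_{s_j}^{s_{j+1}}\dot\Theta_s\Id s$ in the strong sense on $\dom(\Id\Gamma(\omega))$ (justified by the $C^1$-statement proved just before the lemma, applied to the fixed vector $Y_{s_{j+1}}(\vgamma)\in\dom(\Id\Gamma(\omega))$), the sum equals $\int_0^t\dot\Theta_sY_{\lceil s\rceil_n}\Id s$ where $\lceil s\rceil_n$ is the right endpoint of the subinterval containing $s$; since $s\mapsto Y_s$ has continuous paths in $\dom(\Id\Gamma(\omega))$, hence is uniformly continuous on $[0,t_0]$, and $\dot\Theta_s$ is uniformly bounded, $\|\dot\Theta_s(Y_{\lceil s\rceil_n}-Y_s)\|\to0$ uniformly in $s$, yielding the claim.

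This establishes \eqref{laura0}, and shows in particular that $\Theta_tY_t$ is an $\FHR$-valued semi-martingale with the displayed decomposition. For \eqref{laura0b} I would then apply the standard It\^o formula for $\|Z_t\|^2$ with $Z_t:=\Theta_tY_t$ a continuous $\FHR$-valued semi-martingale: its drift is $2\Re\SPn{Z_s}{\dot\Theta_sY_s+\Theta_sA_{0,s}}\Id s$ plus the quadratic-variation correction $\Id\langle Z\rangle_s=\|\Theta_s\V A_s\|^2\Id s$ coming only from the martingale part $\int_0^\bullet\Theta_s\V A_s\Id\V B_s$ (the finite-variation terms contribute nothing to the bracket), and its martingale part is $2\Re\SPn{Z_s}{\Theta_s\V A_s}\Id\V B_s$. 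Collecting these is exactly \eqref{laura0b}. The main obstacle, as indicated, is the rigorous passage from the Riemann sums of the $\Theta$-increments to the Lebesgue integral $\int_0^t\dot\Theta_sY_s\Id s$: this requires simultaneously that one only test the weakly-differentiable operator family against vectors in $\dom(\Id\Gamma(\omega))$ (where the strong derivative exists), that $Y$ have paths in that domain with the graph-norm topology, and that everything be controlled uniformly by the $\ve$-dependent operator-norm bounds on $\Theta_s$ and $\dot\Theta_s$ from \eqref{gabi1}--\eqref{gabi2}; once this is in place the rest is routine It\^o calculus.
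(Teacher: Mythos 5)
Your route is essentially the paper's own: partition $[0,t_0]$, split each increment $\Theta_{s_{j+1}}Y_{s_{j+1}}-\Theta_{s_j}Y_{s_j}$ into a $\Theta$-increment term and a $Y$-increment term, convert the former into $\int\dot\Theta_s Y_{\lceil s\rceil_n}\Id s$ using the $C^1$-property of $s\mapsto\Theta_s\psi$ on $\dom(\Id\Gamma(\omega))$, pass to the limit in all three terms, and then obtain \eqref{laura0b} from the Hilbert-space It\^o formula applied to $\|\Theta_tY_t\|^2$. Two steps, however, are not justified as you wrote them. First, the limit of the stochastic integrals: you argue via $\EE\int_0^{t_0}\|(\Theta^{(n)}_s-\Theta_s)\V{A}_s\|^2\Id s\to0$, but $\V{A}$ is only assumed adapted with continuous paths, so $\EE\int_0^{t_0}\|\V{A}_s\|^2\Id s$ may well be infinite and the It\^o-isometry/$L^2$ argument does not apply. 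You either need to localize with stopping times or, as the paper does, invoke the dominated convergence theorem for stochastic integrals \cite[Thm.~24.2]{Me1982} with the predictable dominating process $R_t=\max_{s\le t}\|\Theta_s(A_{0,s},\V{A}_s)\|$, which gives $\PP$-a.s.\ uniform convergence along a subsequence without any moment hypotheses.

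Second, $\sup_{s\in[0,t_0]}\|\dot\Theta_s\|$ is \emph{not} finite in general, even for $\ve>0$: the bounds \eqref{gabi1} and \eqref{gabi2} only give the relative estimate $\Theta_s^{-1}\dot\Theta_s\le\tfrac12(1+\Id\Gamma(\omega))$, i.e.\ $\dot\Theta_s$ is $\Id\Gamma(\omega)$-bounded after factoring out the bounded operator $\Theta_s$, not operator-norm bounded. Your convergence of the Riemann-sum term survives, but for the reason you state at the very end rather than the one you use in the middle: one estimates $\|\dot\Theta_s(Y_{\lceil s\rceil_n}-Y_s)\|\le\|\Theta_s\|\,\tfrac12\|(1+\Id\Gamma(\omega))(Y_{\lceil s\rceil_n}-Y_s)\|$ and uses that the paths of $Y$ are (uniformly) continuous in the graph norm of $\Id\Gamma(\omega)$; equivalently, the paper dominates pathwise by the constant $\max_{s\le t_0}\|(1+\Id\Gamma(\omega))\Theta_sY_s\|$. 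Finally, a minor point on \eqref{laura0b}: the It\^o formula of \cite[Thm.~4.32]{daPrZa1992} is formulated for real Hilbert spaces, which is why the paper first rewrites \eqref{laura0} as an identity of $\sF_C^{2L}$-valued processes via the conjugation $C$ before applying it; your direct application to $\|Z_t\|^2$ is the same computation once this real structure is made explicit, and your identification of drift, martingale part and quadratic variation is correct.
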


\begin{proof}
Let $0=\tau_0<\tau_1^n<\ldots<\tau_n^n=t_0$, $n\in\NN$, be partitions of $[0,t_0]$ with
$\max_\ell|\tau_{\ell+1}^n-\tau_\ell^n|\to0$, as $n\to\infty$. On $\Omega$ and for
all $t\in[0,t_0]$ and $n\in\NN$, we then have
\begin{align*}
\Theta_tY_t-\Theta_0Y_0
&=\sum_{\ell=0}^{n-1}(\Theta_{\tau_{\ell+1}^n\wedge t}Y_{\tau_{\ell+1}^n\wedge t}
-\Theta_{\tau_\ell^n\wedge t}Y_{\tau_\ell^n\wedge t})
\\
&=\int_0^t\sum_{\ell=0}^{n-1}1_{(\tau_\ell^n,\tau_{\ell+1}^n]}(s)\dot\Theta_sY_{\tau_{\ell+1}^n}\Id s
+\sum_{\ell=0}^{n-1}\Theta_{\tau_\ell^n}(Y_{\tau_{\ell+1}^n\wedge t}-Y_{\tau_\ell^n\wedge t}).
\end{align*}
Employing \eqref{gabi1} and \eqref{gabi2} and using 
$\max_{s\in[0,t_0]}\|(1+\Id\Gamma(\omega))\Theta_sY_s\|$
pathwise as dominating (constant) function, we see that the Bochner-Lebesgue integrals in 
the previous identity converge to $\int_0^t\dot\Theta_sY_s\Id s$, pathwise and for every $t\in[0,t_0]$,
as $n$ goes to infinity. Furthermore, we $\PP$-a.s. have
\begin{align}\nonumber
\sum_{\ell=0}^{n-1}&\Theta_{\tau_\ell^n}(Y_{\tau_{\ell+1}^n\wedge t}-Y_{\tau_\ell^n\wedge t})
=\int_0^t\sum_{\ell=0}^{n-1}1_{(\tau_\ell^n,\tau_{\ell+1}^n]}(s)\Theta_{\tau_\ell^n}\Id Y_s
\\\label{laura1}
&=\int_0^t\sum_{\ell=0}^{n-1}1_{(\tau_\ell^n,\tau_{\ell+1}^n]}(s)\Theta_{\tau_\ell^n}A_{0,s}\Id s+
\int_0^t\sum_{\ell=0}^{n-1}1_{(\tau_\ell^n,\tau_{\ell+1}^n]}(s)\Theta_{\tau_\ell^n}\V{A}_s\Id\V{B}_s,
\end{align}
for all $t\in[0,t_0]$ and $n\in\NN$. Taking the remarks preceding the statement into account, 
we see that $\Theta(A_0,\V{A})$ is again an adapted, continuous 
$\LO(\RR^{1+\nu},\FHR)$-valued process, whose paths are bounded on $[0,t_0]$.
Using $R_t:=\max_{0\le s\le t}\|\Theta_s(A_{0,s},\V{A}_s)\|$, $t\in[0,t_0]$, as a predictable 
dominating process, we may invoke the dominated convergence theorem for stochastic integrals
\cite[Thm.~24.2]{Me1982} to conclude that, along a subsequence, the expression in the last line
of \eqref{laura1} converges $\PP$-a.s. uniformly on $[0,t_0]$ to the sum of the last two
integrals in \eqref{laura0}. 

Recall that a conjugation $C$ was introduced in Hyp.~\ref{hyp-G}.
To infer \eqref{laura0b} from \eqref{laura0}, we observe that $\FHR=\sF_C^L+i\sF_C^L$,
with the completely real subspace $\sF_C:=\{\psi\in\sF:\,\Gamma(-C)\psi=\psi\}$,
and that every $\psi\in\FHR$ has a unique decomposition $\psi=\psi_1+i\psi_2$ with 
$\psi_1,\psi_2\in\sF_C^L$
and $\|\psi\|^2=\|\psi_1\|^2+\|\psi_2\|^2$. Hence, we can interpret the squared norm on $\FHR$
as a function on the real Hilbert space $\sF_C^{2L}$ and read \eqref{laura0} as a $\PP$-equality
between $\sF_C^{2L}$-valued processes. Then \eqref{laura0b} follows from 
\eqref{laura0} and the It\={o} formula of \cite[Thm.~4.32]{daPrZa1992}.
\end{proof}

\begin{notation}\label{not-T}
The operator $\vt$ is equal to $1$, if $\Theta_s=\Theta^{(\alpha)}_{\ve,s}$, and
equal to $1+\Id\Gamma(\omega)$, if $\Theta_s={\Xi_{\ve,s}^{(\delta)}}$. 
Using this convention we set
\begin{align}\label{eva1}
T_{1,s}&:=\tfrac{1}{2}\,\vt^\mh\Theta_s^{-1}\big[[\Theta_s^2,\vp(\V{G}_{\V{B}^{\V{q}}_s})]\,,\,
\vp(\V{G}_{\V{B}^{\V{q}}_s})\big]\,\Theta_s^{-1}\vt^\mh,
\\\label{eva2}
T_{2,s}&:=\tfrac{i}{2}[\Theta_s,\vp(q_{\V{B}^{\V{q}}_s})]\,\Theta_s^{-1}\vt^\mh
+\tfrac{1}{2}\Theta_s^{-1}\big[\Theta_s\,,\,[\Theta_s,\vsigma\cdot\vp(\V{F}_{\V{B}^{\V{q}}_{s}})]\big]
\Theta_s^{-1}\vt^\mh,
\\\label{eva4}
\V{T}_s^\pm&:=2[\Theta_s^{\pm1},\vp(\V{G}_{\V{B}^{\V{q}}_s})]\,\Theta_s^{\mp1}\vt^\mh.
\end{align}
\end{notation}

The operators in \eqref{eva1}--\eqref{eva4} 
are well-defined a priori on $\dom(\Id\Gamma(\omega))$. In fact, under suitable additional
conditions on the coefficient vector $\V{c}$, they
extend to bounded operators on ${\FHR}$ whose norms are bounded on $\Omega$
uniformly in $s$ {\em and $\ve$}. 
For we have the following result, whose proof is deferred to the appendix, where a systematic
treatment of multiple commutator estimates is presented:

\begin{lem}\label{lem-bd-T}
{\rm(1)} If $|\alpha|\ge1/2$ and the operators in \eqref{eva1}--\eqref{eva4} are defined
by means of $\Theta_s=\Theta_{\ve,s}^{(\alpha)}$ and $\vt=1$, 
then we find some $c_\alpha>0$ such that
\begin{align}\nonumber
\|T_{1,s}\|,\|\V{T}^\pm_s\|^2
&\le c_\alpha^2\|(\vo+\vk)^\eh(1+\vo+\vk)^{|\alpha|-\eh}\V{G}_{\V{B}_s^{\V{q}}}\|^2,
\\\label{tina1}
\|T_{2,s}\|&\le c_\alpha\|(\vo+\vk)^\eh(1+\vo+\vk)^{|\alpha|-\eh}(q,\vsigma\cdot\V{F})_{\V{B}_s^{\V{q}}}\|,
\end{align}
for all $s\in[0,t_0]$ and $\ve\in(0,1]$.

\smallskip

\noindent{\rm(2)} If the operators in \eqref{eva1}--\eqref{eva4} are defined
by means of $\Theta_s={\Xi_{\ve,s}^{(\delta)}}$ with 
$\vt=1+\Id\Gamma(\omega)$, then we find some $c>0$ such that
\begin{align}\nonumber\
\|T_{1,s}\|,\|\V{T}^\pm_s\|^2
&\le c^2|\delta|\big\|\big\{(\tfrac{t_0\vo}{2}+\vk)\vee\tfrac{(t_0\vo/2+\vk)^2}{\omega}\big\}^\eh 
e^{|\delta|(t_0\vo/2+\vk)}\V{G}_{\V{B}_s^{\V{q}}}\big\|^2,
\\\label{tina2}
\|T_{2,s}\|&\le c|\delta|^\eh\big\|\big\{(\tfrac{t_0\vo}{2}+\vk)\vee\tfrac{(t_0\vo/2+\vk)^2}{\omega}\big\}^\eh 
e^{|\delta|(t_0\vo/2+\vk)}(q,\vsigma\cdot\V{F})_{\V{B}_s^{\V{q}}}\big\|,
\end{align}
on $\Omega$, for all $s\in[0,t_0]$ and $0<\ve\le|\delta|$.
\end{lem}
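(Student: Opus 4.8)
The plan is to reduce Lemma~\ref{lem-bd-T} to a collection of \emph{multiple commutator estimates} of the schematic form
\[
\big\|\,\vt^{\mh}\,\Theta^{-1}\,[\cdots[\Theta,\vp(f_1)]\cdots,\vp(f_m)]\,\Theta^{-1}\,\vt^{\mh}\,\big\|
\le c\,\prod_{j}\|w\,f_j\|,
\]
where $\Theta$ is one of the weight operators from Notation~\ref{not-Theta}, $\vp(f)$ is a field operator, and $w$ is the appropriate scalar weight appearing on the right-hand sides of \eqref{tina1}--\eqref{tina2}. The three operators $T_{1,s}$, $T_{2,s}$, $\V T_s^{\pm}$ are exactly of this type: $T_{1,s}$ is a double commutator with two copies of $\vp(\V G)$ (conjugated by $\Theta_s^{-1}$ on both sides and dressed with $\vt^{\mh}$), $T_{2,s}$ is a single commutator of $\vp(q)$ plus a double commutator of $\vsigma\cdot\vp(\V F)$, and $\V T_s^{\pm}$ is a single commutator of $\vp(\V G)$. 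So the first step is purely bookkeeping: expand each field operator $\vp(f)=\ad(f)+a(f)$, note that $\Theta_s$ and $\vt$ are functions of $\Id\Gamma$ of \emph{commuting} multiplication symbols, and observe that a commutator $[\Theta_s,a(f)]$ (resp.\ $[\Theta_s,\ad(f)]$) is, by the pull-through formula, again a creation/annihilation-type operator with $f$ replaced by $f$ multiplied by a difference of the weight symbol evaluated at shifted arguments. Iterating, an $m$-fold commutator becomes a sum of generalized creation/annihilation monomials of degree $\le m$, each with a modified symbol on the single-boson argument that carries finite differences of the weight function.

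The second step is to control those finite differences. For $\Theta_s=\Theta^{(\alpha)}_{\ve,s}$ the symbol is $(1+\iota\tau_\alpha(s)+\Id\Gamma(v_{\alpha,\ve,s}))^{\alpha}(1+\ve(\cdots))^{-\alpha}$; since $v_{\alpha,\ve,s}\le\tau_\alpha(s)\vo+\vk$ (as $v_{\alpha,\ve,t}\le v_{\alpha,t}$) and $\tau_\alpha$ is bounded on $[0,t_0]$, a discrete-derivative estimate of the form $|(1+x+y)^\alpha-(1+x)^\alpha|\le |\alpha|\, y\,(1+x)^{\alpha-1}\vee(1+x+y)^{\alpha-1}$ produces exactly the gain $(\vo+\vk)^{\eh}(1+\vo+\vk)^{|\alpha|-\eh}$ per commutator after one uses the relative bounds \eqref{rb-a}--\eqref{rb-ad} with $\vk$ replaced by $\omega$ to absorb the leftover $\Id\Gamma$ into the $\Theta_s^{-1}$ on the outside (this is where $|\alpha|\ge 1/2$ is needed: each commutator spends half a power of the weight to create $\vk^{\mh}f$ and the remaining $|\alpha|-\eh$ powers survive in $(1+\vo+\vk)^{|\alpha|-\eh}$, while the conjugating $\Theta_s^{-1}$'s exactly cancel the two residual $\Id\Gamma$-factors). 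For the exponential weight $\Xi^{(\delta)}_{\ve,s}=e^{\delta\Theta^{(1)}_{\ve,s}}$ one uses instead the bound $|e^{\delta x}-e^{\delta y}|\le |\delta|\,|x-y|\,(e^{\delta x}\vee e^{\delta y})$ together with the choice $\vt=1+\Id\Gamma(\omega)$ to soak up the $\Id\Gamma(\omega)$ that appears because now the symbol grows like $\tau_1(s)\vo+\vk$ itself (not $\Id\Gamma$ of it), which is why the max with $(t_0\vo/2+\vk)^2/\omega$ shows up — it is precisely $\|w^{\mh}f\|$ with $w=\omega$ and the extra $(t_0\vo/2+\vk)$ bundled in, as in \eqref{rb-a}.

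Concretely, I would organize the appendix (App.~\ref{app-comm}) around a master lemma estimating $\|\,\vt^{\mh}\Theta^{-1}C_{f_1}\cdots C_{f_m}\Theta^{-1}\vt^{\mh}\|$ where each $C_{f_j}$ is either $[\,\cdot\,,\vp(f_j)]$ or a field operator, prove it by the pull-through/finite-difference scheme just described, and then read off parts~(1) and~(2) of Lemma~\ref{lem-bd-T} by specializing $m$ and the identity of $\Theta$ and $\vt$; uniformity in $s\in[0,t_0]$ is automatic because $\tau_\alpha(s)$ is bounded there, and uniformity in $\ve\in(0,1]$ (resp.\ $0<\ve\le|\delta|$) follows because the regularizing factor $(1+\ve(\cdots))^{-\alpha}$ only helps — its commutators are estimated by the same finite-difference bounds and the constraint $\ve\le|\delta|$ guarantees $\ve\Theta^{(1)}_{\ve,s}$ stays bounded so that $\Xi^{(\delta)}_{\ve,s}$ and its inverse are genuinely bounded operators. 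The main obstacle — and the reason the appendix is advertised as nontrivial — is getting the \emph{right-hand sides sharp}: a naive commutator expansion loses powers of $\Id\Gamma(\omega)$ that cannot be reabsorbed, so one must track carefully that each commutator with $\Theta_s$ lowers the homogeneity by exactly one unit of the weight symbol and that the two outer $\Theta_s^{-1}$ factors (together with the single $\vt^{\mh}$ on each side) are just enough, and no more, to restore boundedness; the combinatorics of which of the $m$ commutator "slots" hit the weight versus pass through is what requires the systematic treatment.
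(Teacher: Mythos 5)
Your plan is essentially the paper's own proof: the statement is reduced, via the rewritings \eqref{for-T1c}--\eqref{for-vecT} and \eqref{comm-inv1}--\eqref{comm-inv2}, to the master weighted multiple-commutator bounds of Lem.~\ref{lem-comm-omega} and Lem.~\ref{lem-comm-exp} in App.~\ref{app-comm}, which are proved exactly by the scheme you describe — pull-through formula (Lem.~\ref{lem-clelia}), reduction to scalar symbol estimates (Lem.~\ref{lem-a(K)}), and sharp finite-difference bounds on $F_\ve^{\pm\alpha}$ and $e^{aF_\ve}$ (Lem.~\ref{lem-Leibniz-t}) — and then specialized in Ex.~\ref{ex-comm-omega} and Ex.~\ref{ex-comm-exp}. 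Your accounting of the roles of $|\alpha|\ge\tfrac12$, of $\vt=1+\Id\Gamma(\omega)$, and of the weight $(\tfrac{t_0\vo}{2}+\vk)\vee\tfrac{(t_0\vo/2+\vk)^2}{\omega}$ matches the paper's.
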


\begin{proof}
The proof of (1) can be found in Ex.~\ref{ex-comm-omega}, the one of (2) in Ex.~\ref{ex-comm-exp}.
\end{proof}

\begin{notation}\label{not-normcirc}
For any vector $\V{v}$ with components in $\HP$, we write
\begin{align}\label{def-normcirc}
\|\V{v}\|_{\circ}&:=\left\{\begin{array}{ll}
c_\alpha\|(\vo+\vk)^\eh
(1+\vo+\vk)^{|\alpha|-\eh}\V{v}\|_{\HP},&\text{if $\Theta_s=\Theta_{\ve,s}^{(\alpha)}$,}
\\
c|\delta|^\eh\big\|\big\{(\tfrac{t_0\vo}{2}+\vk)\vee\tfrac{(t_0\vo/2+\vk)^2}{\omega}\big\}^\eh 
e^{|\delta|(t_0\vo/2+\vk)}\V{v}\big\|_{\HP},&
\text{if $\Theta_s={\Xi_{\ve,s}^{(\delta)}}$,}
\end{array}\right.
\end{align}
where $c_\alpha$ and $c$ are the constants in \eqref{tina1} and \eqref{tina2}, respectively.
If $\V{v}$ depends on $\V{x}\in\RR^\nu$, then we abbreviate
$\|\V{v}\|_{\circ,\infty}:=\sup_{\V{x}\in\RR^\nu}\|\V{v}_{\V{x}}\|_\circ$.
\end{notation}

We will use the following simple remark without further notice:

\begin{rem}
Let $\RR^\nu\ni\V{x}\mapsto v_{\V{x}}\in\HP$ be continuous and $\kappa:\cM\to[0,\infty)$
be measurable. Then the numerical function 
$\RR^\nu\ni\V{x}\mapsto\int_{\cM}\kappa|{v}_{\V{x}}|^2\Id\mu$ is measurable as well. In fact, 
$\int_{\cM}\kappa|{v}_{\V{x}}|^2\Id\mu=\sup_{n\in\NN}\int_\cM(n\wedge\kappa)|{v}_{\V{x}}|^2\Id\mu$, 
for all $\V{x}\in\RR^\nu$, by the monotone convergence theorem.
\end{rem}

%%%%%%%%%%%%%%%%%%%%%%%%%%%%%%%%%%%%%%%%%%%%%%%%

\subsection{A remark on the stochastic Gronwall lemma}\label{ssec-stoch-Gronwall}

\noindent
In our derivation of the BDK type estimates we shall apply a variant of the stochastic 
Gronwall lemma \cite{Scheutzow2013} that we state first. The proof of the following lemma is a 
simple modification of a proof appearing in \cite{Scheutzow2013}.

\begin{lem}\label{lem-stoch-Gronwall}
Let $t_0>0$. Consider real processes $M$, $Z$, $R$, and $b$ on $[0,t_0]$, where
$M$ is a continuous local martingale starting at $0$, 
$Z$ and $R$ are non-negative and adapted with continuous paths, 
and $b$ is non-negative and progressively measurable.  Assume that, $\PP$-a.s.,
\begin{align*}
Z_t&\le Z_0+\int_0^tb_sZ_s\Id s+M_t+\int_0^tZ_s^\delta R_s\Id s, \quad t\in[0,t_0],
\end{align*}
for some $\delta\in[0,1)$. Pick some $\gamma\in(0,1)$ and set $B_t:=\int_0^tb_s\Id s$. 
Then the a priori bound
\begin{align}\label{inger1}
\EE\big[\sup_{s\le t_0} e^{-\gamma B_s}Z_s^\gamma\big]&<\infty,
\end{align}
implies the following inequality, for all $t\in[0,t_0]$, 
\begin{align*}
\EE\big[\sup_{s\le t} e^{-\gamma B_s}Z_s^\gamma\big]&\le c_{\gamma,\delta}\EE[Z_0^\gamma]
+c_{\gamma,\delta}\EE\Big[\Big(\int_0^te^{-(1-\delta)B_s}R_s\Id s
\Big)^{\frac{\gamma}{1-\delta}}\Big].
\end{align*}
Here the constant is given by
$c_{\gamma,\delta}^{1-\delta}=2^{({\gamma+\delta-1})\vee0}
\{(4\wedge\frac{1}{\gamma})\frac{\pi\gamma}{\sin(\pi\gamma)}+1\}$.
\end{lem}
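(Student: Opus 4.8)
The plan is to follow the strategy of Scheutzow's stochastic Gronwall lemma \cite{Scheutzow2013} and reduce the inhomogeneous, $Z^\delta$-type term to the setting already covered there, or rather to re-run the argument with the extra term carried along. First I would pass to the process $\wt Z_t:=e^{-B_t}Z_t$. Using the product rule and the nonnegativity of $b$, the hypothesis becomes, $\PP$-a.s.,
\begin{align*}
\wt Z_t&\le Z_0+\int_0^t e^{-B_s}\Id M_s+\int_0^t e^{-B_s}Z_s^\delta R_s\Id s,\quad t\in[0,t_0].
\end{align*}
Writing $\wt M_t:=\int_0^t e^{-B_s}\Id M_s$ (a continuous local martingale starting at $0$) and estimating $e^{-B_s}Z_s^\delta=e^{-(1-\delta)B_s}\wt Z_s^\delta\le e^{-(1-\delta)B_s}\wt Z_s^{\,\delta}$, we arrive at
\begin{align*}
\wt Z_t&\le Z_0+\wt M_t+\int_0^t \wt Z_s^{\,\delta}\,\wt R_s\Id s,\qquad \wt R_s:=e^{-(1-\delta)B_s}R_s,
\end{align*}
since $\wt Z_s\le \wt Z_s$ trivially and $\delta<1$ lets us bound $\wt Z_s^{\delta}$ by $\wt Z_s^\delta$ with the remaining $e^{-\delta B_s}$ absorbed (more precisely $e^{-B_s}Z_s^\delta = e^{-(1-\delta)B_s}(e^{-B_s}Z_s)^\delta\le e^{-(1-\delta)B_s}\wt Z_s^\delta$ as $e^{-B_s}\le1$). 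Thus it suffices to prove the lemma in the case $b\equiv0$, $B\equiv0$, with $Z$ replaced by $\wt Z$ and $R$ by $\wt R$, and the a priori bound $\EE[\sup_{s\le t_0}\wt Z_s^\gamma]<\infty$ in force.

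With $b\equiv0$, set $Z_t^*:=\sup_{s\le t}\wt Z_s$. From $\wt Z_t\le Z_0+\wt M_t+\int_0^t \wt Z_s^\delta \wt R_s\Id s$ and monotonicity of the last integral in $t$ we get $Z_t^*\le Z_0+\sup_{s\le t}\wt M_s+\int_0^t (Z_s^*)^\delta \wt R_s\Id s$. Raising to the power $\gamma\in(0,1)$ and using subadditivity of $x\mapsto x^\gamma$,
\begin{align*}
(Z_t^*)^\gamma&\le Z_0^\gamma+\Big(\sup_{s\le t}\wt M_s\Big)^\gamma_{\!+}+\Big(\int_0^t (Z_s^*)^\delta \wt R_s\Id s\Big)^\gamma,
\end{align*}
where $(\cdot)_+$ denotes positive part (the running maximum of $\wt M$ could a priori be handled via $|\wt M|$). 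Taking expectations, the martingale term is controlled by the Burkholder--Davis--Gundy inequality in the sharp form used in \cite{Scheutzow2013}: for a continuous local martingale $N$ starting at $0$ and $\gamma\in(0,1)$ one has $\EE[\sup_{s\le t}|N_s|^\gamma]\le (4\wedge\tfrac1\gamma)\tfrac{\pi\gamma}{\sin(\pi\gamma)}\,\EE[\langle N\rangle_t^{\gamma/2}]$, combined with the Dambis--Dubins--Schwarz time change exactly as in \cite{Scheutzow2013}; this is where the constant $\tfrac{\pi\gamma}{\sin\pi\gamma}$ and the $(4\wedge\frac1\gamma)$ factor enter. The delicate point is that the quadratic variation of $\wt M$ is itself controlled by $Z^*$ through the hypothesis, so one bounds $\EE[\langle\wt M\rangle_t^{\gamma/2}]$ by $\EE[(Z_t^*)^{\gamma/2}\cdot(\text{something})^{\gamma/2}]$ and closes with Young's / Hölder's inequality, the a priori finiteness \eqref{inger1} being exactly what makes the resulting absorption legitimate (one may subtract a finite quantity from both sides).

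For the last term, $\EE[(\int_0^t(Z_s^*)^\delta\wt R_s\Id s)^\gamma]$, the key estimate is $\int_0^t(Z_s^*)^\delta\wt R_s\Id s\le (Z_t^*)^\delta\int_0^t\wt R_s\Id s$, so by Young's inequality with exponents $\tfrac1\delta$ and $\tfrac1{1-\delta}$ applied inside the $\gamma$-th power,
\begin{align*}
\EE\Big[(Z_t^*)^{\gamma\delta}\Big(\int_0^t\wt R_s\Id s\Big)^{\gamma}\Big]
&\le \tfrac12\,\eta^{-\gamma/\delta}\,\EE[(Z_t^*)^{\gamma}]+C_{\gamma,\delta,\eta}\,\EE\Big[\Big(\int_0^t\wt R_s\Id s\Big)^{\frac{\gamma}{1-\delta}}\Big],
\end{align*}
for a free parameter $\eta>0$; choosing $\eta$ large enough the first term is absorbed into the left-hand side (again legitimately, thanks to \eqref{inger1}), and unravelling $\wt R_s=e^{-(1-\delta)B_s}R_s$ recovers the stated bound with the claimed constant $c_{\gamma,\delta}$. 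I expect the main obstacle to be the bookkeeping in the absorption step: one must be scrupulous that every quantity being moved across the inequality is finite (which is precisely the role of the a priori hypothesis \eqref{inger1}, with the exponent $\gamma<1$ chosen so the BDG inequality with a dimension-free constant applies), and one must track the constants through the BDG step and the two applications of Young's inequality to land on $c_{\gamma,\delta}^{1-\delta}=2^{(\gamma+\delta-1)\vee0}\{(4\wedge\tfrac1\gamma)\tfrac{\pi\gamma}{\sin(\pi\gamma)}+1\}$; the factor $2^{(\gamma+\delta-1)\vee0}$ comes from an elementary inequality $(a+b)^{\gamma}\le 2^{(\gamma-1)\vee0}(a^\gamma+b^\gamma)$ type estimate applied at the right place. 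Everything else is a routine rerun of \cite{Scheutzow2013}.
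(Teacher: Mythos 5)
Your reduction to the case $b\equiv 0$ is in substance the paper's first step (although you cannot literally apply the product rule to $e^{-B_t}Z_t$: $Z$ only satisfies an inequality and need not be a semimartingale; the paper obtains the same display by a pathwise Gronwall argument with inhomogeneity $Z_0+M_t+\int_0^t Z_s^\delta R_s\,\Id s$, followed by integration by parts). The genuine gap is the martingale term. You propose to bound $\EE[\sup_{s\le t}|\wt M_s|^\gamma]$ by the quadratic-variation form of the BDG inequality and then to control $\EE[\langle\wt M\rangle_t^{\gamma/2}]$ ``through the hypothesis''; but the hypothesis gives no control whatsoever on $\langle M\rangle$. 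The inequality $Z_t\le Z_0+\int_0^t b_sZ_s\Id s+M_t+\int_0^t Z_s^\delta R_s\Id s$ together with $Z\ge0$ constrains only the negative excursions of $M$, not its bracket: take for instance $Z\equiv0$, $R\equiv0$, $b\equiv0$, $Z_0=1$ and $M_t=e^{W_t-t/2}-1$; the hypothesis holds, the right-hand side of the asserted conclusion is a fixed constant, yet $\EE[\langle M\rangle_t^{\gamma/2}]$ grows without bound in $t$. So there is no ``something'' that closes your estimate, and the subsequent absorption step cannot rescue it.

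What makes the lemma work --- and this is the point of Scheutzow's argument, which the paper follows --- is a different martingale inequality: for a continuous local martingale $L$ with $L_0=0$ one has $\EE[(\sup_{s\le t}L_s)^\gamma]\le(4\wedge\tfrac{1}{\gamma})\tfrac{\pi\gamma}{\sin(\pi\gamma)}\,\EE[(-\inf_{s\le t}L_s)^\gamma]$, Burkholder's inequality with the improved constant of \cite[Prop.~1]{Scheutzow2013}; the constant $\tfrac{\pi\gamma}{\sin(\pi\gamma)}$ you quote belongs to this sup--inf comparison, not to a quadratic-variation BDG bound. Applied to $L_\bullet=\int_0^\bullet e^{-B_s}\Id M_s$ this is exactly what is needed, because the non-negativity of $Z$ yields the pathwise bound $-\inf_{s\le t}L_s=\sup_{s\le t}(-L_s)\le Z_0+\int_0^t e^{-B_s}Z_s^\delta R_s\,\Id s$, so the martingale contribution is dominated by the same data as the remaining terms. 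From there the paper factors out $\big(\sup_{s\le t}e^{-\gamma B_s}Z_s^\gamma\big)^\delta$, applies H\"older with exponents $1/\delta$ and $1/(1-\delta)$, and divides through using the a priori bound \eqref{inger1}; your Young-with-absorption variant of that final step would also work (again legitimized by \eqref{inger1}), but only once the martingale term has been handled as above.
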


\begin{proof}
Let $H_\bullet:=\int_0^\bullet Z_s^\delta R_s\Id s$ and
$L_\bullet:=\int_0^\bullet e^{-B_s}\Id M_s$. Then a pathwise application of Gronwall's lemma 
and a partial integration $\PP$-a.s. yields
\begin{align*}
e^{-B_t}Z_t&\le e^{-B_t}(Z_0+M_t+H_t)+\int_0^t(Z_0+M_s+H_s)b_se^{-B_s}\Id s
\\
&=Z_0+L_t+\int_0^te^{-B_s}Z_s^\delta R_s\Id s,\quad t\in[0,t_0].
\end{align*}
In particular, we $\PP$-a.s. have
\begin{align*}
\sup_{s\le t}e^{-B_s}Z_s&\le Z_0+\sup_{s\le t}L_s+\int_0^te^{-B_s}Z_s^\delta R_s\Id s,
\\
-\inf_{s\le t}L_s&=\sup_{s\le t}(-L_s)\le Z_0+\int_0^te^{-B_s}Z_s^\delta R_s\Id s,
\end{align*}
for all $t\in[0,t_0]$. Now the key step consists in applying the inequality
\begin{align*}
\EE\big[(\sup_{s\le t}L_s)^\gamma\big]\le(c_{\gamma,0}-1)\EE\big[(-\inf_{s\le t}L_s)^\gamma\big],
\quad t\in[0,t_0].
\end{align*}
This is a special case of an inequality for continuous local martingales starting at zero
due to Burkholder with an improved constant obtained in \cite[Prop.~1]{Scheutzow2013}.
In combination with $(x+y+z)^\gamma\le x^\gamma+y^\gamma+z^\gamma$, $x,y,z\ge0$,
the above estimates entail
\begin{align*}
\EE\big[\sup_{s\le t} e^{-\gamma B_s}Z_s^\gamma\big]
&\le c_{\gamma,0}\EE\Big[\Big(Z_0+\int_0^te^{-B_s}Z_s^\delta R_s\Id s\Big)^\gamma\Big]
\\
&\le c_{\gamma,0}\EE\Big[\big(\sup_{s\le t}e^{-\gamma B_s}Z_{s}^{\gamma}\big)^\delta
\Big(Z_0^{1-\delta}+\int_0^te^{-(1-\delta)B_s}R_s\Id s\Big)^\gamma\Big]
\\
&\le c_{\gamma,0}\EE\big[\sup_{s\le t}e^{-\gamma B_s}Z_{s}^{\gamma}\big]^{\delta}
\EE\Big[\Big(Z_0^{1-\delta}+\int_0^te^{-(1-\delta)B_s}R_s\Id s\Big)^{\frac{\gamma}{1-\delta}}
\Big]^{1-\delta}.
\end{align*}
Now the a priori bound \eqref{inger1} permits to get
\begin{align*}
\EE\big[\sup_{s\le t} e^{-\gamma B_s}Z_s^\gamma\big]
&\le c_{\gamma,0}^{\frac{1}{1-\delta}}
\EE\Big[\Big(Z_0^{1-\delta}+\int_0^te^{-(1-\delta)B_s}R_s\Id s\Big)^{\frac{\gamma}{1-\delta}}\Big],
\quad t\in[0,t_0].
\end{align*}
Finally, the desired bound follows from $(x+y)^r\le2^{(r-1)\vee0}(x^r+y^r)$, $x,y,r\ge0$.
\end{proof}

%%%%%%%%%%%%%%%%%%%%%%%%%%%%%%%%%%%%%%%%%%%%%%%%

\subsection{Weighted BDG type estimates}\label{ssec-wBDG}

\noindent
In the following lemma and its proof we will employ the conventions of
Notation~\ref{not-Theta}, \ref{not-T}, and~\ref{not-normcirc}. The reader should keep in mind that
the weights $\Theta_s$ depend on $\ve$ and notice that the constants in \eqref{diane0} and
\eqref{diane1} are $\ve$-independent.

\begin{lem}\label{cor-spin1}
Let $V\in\cK_\pm(\RR^\nu)$ and assume that $\|\V{c}_{\V{x}}\|_\circ<\infty$, 
for every $\V{x}\in\RR^\nu$. Let 
$(\rho_t,\V{R}_t)_{t\in[0,t_0]}$ be an adapted $\LO(\RR^{1+\nu},\FHR)$-valued process 
with continuous paths. Let $p\in\NN$ and assume that $(\psi_t)_{t\in[0,t_0]}$ is a 
$\FHR$-valued semi-martingale whose paths belong $\PP$-a.s. to 
$C([0,t_0],\dom(\Id\Gamma(\omega)))$ and which $\PP$-a.s. satisfies
\begin{align}\nonumber
\psi_t&=\eta-\int_0^t\big(\wh{H}(\V{B}_s^{\V{q}})+V(\V{B}_s^{\V{q}})\big)\psi_s\Id s
+\int_0^ti\vp(\V{G}_{\V{B}_s^{\V{q}}})\psi_s\Id\V{B}_s
\\\label{SDE-inhom}
&\quad+\int_0^t\rho_s\Id s+\int_0^t\V{R}_s\Id\V{B}_s,\quad t\in[0,t_0],
\end{align}
where $(\V{q},\eta):\Omega\to\RR^\nu\times\dom(\Id\Gamma(\omega))$ is $\fF_0$-measurable
with $\|\eta\|_{\FHR}\in L^{p}(\PP)$. Let $\mu>0$, set
\begin{align}\label{def-fpmu}
f_{p,\mu}(s):=
\left\{\begin{array}{ll}
2p\|\V{G}_{\V{B}_s^{\V{q}}}\|_\circ^2+\mu\|(q,\vsigma\cdot\V{F})_{\V{B}_s^{\V{q}}}\|_\circ,
&\text{if $\vt=1+\Id\Gamma(\omega)$,}
\\
0,&\text{if $\vt=1$,}
\end{array}
\right.
\end{align}
and assume that $f_{p,\mu}(s)\le1/8$, for all $s\in[0,t_0]$. Abbreviate
\begin{align}\nonumber
R_{p}(s)&:=
\big\|(1+\Id\Gamma(\omega))^\mh\Theta_s\rho_s\big\|^2
+\big(p+\|\V{G}_{\V{B}_s^{\V{q}}}\|^2_{\mathfrak{k}^\nu}\big)\|\Theta_s\V{R}_s\|^2,
\\\label{def-bpmu}
b_{p,\mu}(s)&:=p\|\V{G}_{\V{B}_s^{\V{q}}}\|_\circ^2+(\tfrac{\mu}{2}
+\tfrac{1}{2\mu}){\|(q,\vsigma\cdot\V{F})_{\V{B}_s^{\V{q}}}\|_\circ}+4{\mho(\V{B}_s^{\V{q}})}
+{V_-(\V{B}_s^{\V{q}})},
\end{align}
and assume that
\begin{align*}
\EE\Big[\Big(\int_0^{t_0}R_p(s)\Id s\Big)^{\nf{p}{2}}\Big]<\infty.
\end{align*}
Then the following bounds hold for all $t\in[0,t_0]$,
\begin{align}\nonumber
\EE\big[&\sup_{s\le t}\{e^{-p\int_0^sb_{p,\mu}(r)\Id r}\|\Theta_s\psi_s\|^p\}\big]
\\\label{diane0}
&\le(7e^{t})^p\EE\big[\|\Theta_0\psi_0\|^p\big]+(28e^{t}p^\eh)^p\EE\Big[\Big(\int_0^t
e^{-2\int_0^sb_{p,\mu}(r)\Id r}R_{p}(s)\Id s\Big)^{\nf{p}{2}}\Big],
\\\nonumber
\EE\Big[&\Big(\int_0^te^{-2\int_0^sb_{p,\mu}(r)\Id r}
\|\Id\Gamma(\omega)^\eh\Theta_s\psi_s\|^2\Id s\Big)^{\nf{p}{2}}\Big]
\\\nonumber
&\le c_pe^{2pt}\EE\big[\|\Theta_0\psi_0\|^{p}\big]
+c_pe^{pt}\EE\Big[\Big(\int_0^te^{-2\int_0^sb_{p,\mu}(r)\Id r}
\|\V{G}_{\V{B}_s^{\V{q}}}\|_\circ^2\|\Theta_s\psi_s\|^2\Id s\Big)^{\nf{p}{2}}\Big]
\\\label{diane1}
&\quad
+c_pe^{2pt}\EE\Big[\Big(\int_0^te^{-2\int_0^sb_{p,\mu}(r)\Id r}R_p(s)\Id s\Big)^{\nf{p}{2}}\Big].
\end{align}
Here $c_p>0$ depends only on $p$.
\end{lem}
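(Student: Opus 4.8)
The plan is to apply It\^o's formula to the semi-martingale $\Theta_s\psi_s$ via Lemma~\ref{lem-Ito-Theta}, then bound the resulting drift terms using the commutator estimates of Lemma~\ref{lem-bd-T}, and finally feed the outcome into the stochastic Gronwall lemma, Lemma~\ref{lem-stoch-Gronwall}. First I would rewrite the SDE \eqref{SDE-inhom} in the form required by Lemma~\ref{lem-Ito-Theta}, identifying $A_{0,s}=-(\wh{H}(\V{B}_s^{\V{q}})+V(\V{B}_s^{\V{q}}))\psi_s+\rho_s$ and $\V{A}_s=i\vp(\V{G}_{\V{B}_s^{\V{q}}})\psi_s+\V{R}_s$, noting that the paths lie in $C([0,t_0],\dom(\Id\Gamma(\omega)))$ so that everything is well-defined. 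Applying \eqref{laura0b} then yields an expression for $\|\Theta_s\psi_s\|^2$ whose drift contains the terms $2\Re\SPn{\Theta_s\psi_s}{\Theta_s A_{0,s}}$, $2\Re\SPn{\Theta_s\psi_s}{\dot\Theta_s\psi_s}$, and $\|\Theta_s\V{A}_s\|^2$, plus a genuine local-martingale part $\int_0^s 2\Re\SPn{\Theta_r\psi_r}{\Theta_r\V{A}_r}\Id\V{B}_r$.

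The heart of the matter is to extract from these drift terms a bound of the shape
\begin{align*}
\|\Theta_t\psi_t\|^2 &\le \|\Theta_0\psi_0\|^2 + \int_0^t 2b_{p,\mu}(s)\|\Theta_s\psi_s\|^2\Id s
- c\int_0^t\|\Id\Gamma(\omega)^\eh\Theta_s\psi_s\|^2\Id s \\
&\quad + M_t + \int_0^t \big(\text{lower-order terms involving }\|\Theta_s\psi_s\|\,R_p(s)^\eh \text{ and } R_p(s)\big)\Id s.
\end{align*}
To get there I would commute $\Theta_s$ through $\wh{H}(\V{B}_s^{\V{q}})$: writing $\Theta_s\wh{H}(\V{x})\Theta_s^{-1}=\wh{H}(\V{x})+[\Theta_s,\wh{H}(\V{x})]\Theta_s^{-1}$ and expanding $\wh{H}$ into its constituent pieces $\tfrac12\vp(\V{G})^2$, $-\tfrac{i}{2}\vp(q)$, $-\vsigma\cdot\vp(\V{F})$, $\Id\Gamma(\omega)$. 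The commutator with $\Id\Gamma(\omega)$ is controlled by \eqref{gabi1}--\eqref{gabi2}, which is exactly why $\dot\Theta_s$ combines favorably with the $\Id\Gamma(\omega)$ coming from $\wh{H}$; the commutators with the $\vp$-terms are precisely the operators $T_{1,s},T_{2,s},\V{T}^\pm_s$ of Notation~\ref{not-T}, whose $\ve$-uniform bounds are \eqref{tina1}--\eqref{tina2}, i.e.\ $\|\cdot\|_\circ$-type bounds. Using the basic relative bound that $\Re\SPn{\psi}{\wh{H}(\V{x})\psi}\ge \|\Id\Gamma(\omega)^\eh\psi\|^2 - \mho(\V{x})\|\psi\|^2$ (which follows from \eqref{qfb-vp} and the definition of $\mho$, the spin term $-\vsigma\cdot\vp(\V{F})$ being absorbed into $\mho$), together with Young's inequality to split the cross terms $\|(q,\vsigma\cdot\V{F})\|_\circ$ via the $\mu/2+1/(2\mu)$ weighting, and the smallness hypothesis $f_{p,\mu}(s)\le 1/8$ to absorb the bad $\Id\Gamma(\omega)$-contributions of the $\vt=1+\Id\Gamma(\omega)$ case back into the good negative term, one arrives at the displayed inequality with $b_{p,\mu}$ as in \eqref{def-bpmu}. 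The term $\|\Theta_s\V{A}_s\|^2$ contributes $\lesssim \|\V{G}_{\V{B}_s^{\V{q}}}\|_{\mathfrak{k}^\nu}^2\|\Id\Gamma(\omega)^\eh\Theta_s\psi_s\|^2 + \|\Theta_s\V{R}_s\|^2$-type pieces plus the cross term with $\rho$, all of which land inside $R_p(s)$ or get absorbed.

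Next I would pass from the $L^2$-level It\^o identity to the $L^p$-level statement. Setting $Z_s:=\|\Theta_s\psi_s\|^2$, $b_s:=2b_{p,\mu}(s)$, and collecting the local-martingale part into $M$, the inequality above has the form $Z_t\le Z_0+\int_0^t b_s Z_s\Id s+M_t+\int_0^t Z_s^{1/2}R_p(s)^{1/2}\Id s + \int_0^t R_p(s)\Id s$; absorbing the last integral by enlarging $R_p$ slightly (or treating it as an additive term handled identically), this is exactly the hypothesis of Lemma~\ref{lem-stoch-Gronwall} with $\delta=1/2$. I would apply that lemma with $\gamma=p/(2\cdot?)$ — more precisely choose the Gronwall exponent so that raising to the relevant power reproduces $\EE[\sup_s e^{-p\int b_{p,\mu}}\|\Theta_s\psi_s\|^p]$; since Lemma~\ref{lem-stoch-Gronwall} outputs $\EE[\sup_{s\le t}e^{-\gamma B_s}Z_s^\gamma]$ with $B_s=\int_0^s b_r\Id r$, taking $\gamma$ appropriately and noting $Z_s^\gamma=\|\Theta_s\psi_s\|^{2\gamma}$ gives \eqref{diane0} after identifying $\int_0^t e^{-(1-\delta)B_s}R_p(s)\Id s$ with the stated $\int_0^t e^{-2\int_0^s b_{p,\mu}}R_p(s)\Id s$ (here $1-\delta=1/2$ and $B_s=2\int b_{p,\mu}$ conspire correctly) and tracking the explicit constant $c_{\gamma,\delta}$ through. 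The a priori integrability hypothesis \eqref{inger1} needed by Lemma~\ref{lem-stoch-Gronwall} follows from the assumption $\|\eta\|\in L^p(\PP)$, the finiteness of $\EE[(\int R_p)^{p/2}]$, and the crude pathwise bound $\|\Theta_s\|\le c_\ve$ (valid since $\ve>0$ by Notation~\ref{not-Theta}) combined with \eqref{bd-WW}-type estimates on the unweighted process. Finally, for \eqref{diane1} I would go back to the It\^o identity \emph{before} discarding the negative term $-c\int_0^t\|\Id\Gamma(\omega)^\eh\Theta_s\psi_s\|^2\Id s$: rearranging gives $c\int_0^t\|\Id\Gamma(\omega)^\eh\Theta_s\psi_s\|^2\Id s \le Z_0 + \int_0^t b_s Z_s\Id s + M_t + (\text{lower order})$, then multiply through by $e^{-2\int_0^\bullet b_{p,\mu}}$, use the partial-integration trick from the proof of Lemma~\ref{lem-stoch-Gronwall} to handle the $\int b_s Z_s$ term, take $p/2$-th powers, apply the Burkholder--Davis--Gundy inequality to the martingale part (estimating its quadratic variation by $\int_0^t \|\Theta_s\psi_s\|^2\|\Theta_s\V{A}_s\|^2\Id s$ and then by $\sup_s\|\Theta_s\psi_s\|^p$ times $(\int R_p)^{p/2}$ plus the $\|\V{G}\|_\circ^2\|\Theta_s\psi_s\|^2$ integral), and close using \eqref{diane0} to control $\EE[\sup_s e^{-p\int b_{p,\mu}}\|\Theta_s\psi_s\|^p]$.

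The main obstacle I expect is the bookkeeping in the commutator step — organizing the expansion of $\Theta_s\wh{H}(\V{B}_s^{\V{q}})\psi_s$ so that every term is either (i) controlled by a $\|\cdot\|_\circ$-norm of a component of $\V{c}$ via Lemma~\ref{lem-bd-T}, (ii) a multiple of the good negative term $\|\Id\Gamma(\omega)^\eh\Theta_s\psi_s\|^2$ that can be absorbed using $f_{p,\mu}\le 1/8$, or (iii) genuinely lower order and collectible into $R_p$ — all while keeping the constants $\ve$-independent, which forces one to use the $\ve$-uniform versions in \eqref{tina1}--\eqref{tina2} rather than naive bounds, and to be careful that $\dot\Theta_s$ is estimated by \eqref{gabi1}--\eqref{gabi2} and not by something $\ve$-dependent. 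A secondary technical point is ensuring that the various Bochner and stochastic integrals are legitimate (dominated-convergence/localization arguments), for which the hypothesis that paths lie in $C([0,t_0],\dom(\Id\Gamma(\omega)))$ and $\ve>0$ are exactly what is needed; these are routine but must be stated.
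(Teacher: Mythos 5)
Your overall architecture (It\^o via Lemma~\ref{lem-Ito-Theta}, commutator bounds from Lemma~\ref{lem-bd-T}, stochastic Gronwall) is the one used in the paper, but as written your plan has two genuine gaps.

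First, your passage from the $L^2$-level inequality to the $p$-th moment does not go through. You propose to apply Lemma~\ref{lem-stoch-Gronwall} to $Z_s=\|\Theta_s\psi_s\|^2$ with $\delta=1/2$ and then to choose the exponent $\gamma$ so that $Z_s^\gamma=\|\Theta_s\psi_s\|^{p}$, i.e.\ $\gamma=p/2$. But the stochastic Gronwall lemma is stated (and its proof, via the Burkholder-type inequality $\EE[(\sup_{s\le t}L_s)^\gamma]\le c\,\EE[(-\inf_{s\le t}L_s)^\gamma]$, only works) for $\gamma\in(0,1)$, so your choice is admissible only for $p=1$. The paper instead first raises the level: it applies It\^o's formula to $\|\Theta_t\psi_t\|^{2p}$, so that the inhomogeneity enters through a drift term $\|\Theta_s\psi_s\|^{2p-2}R_p(s)=Z_s^{1-1/p}R_p(s)$, and then invokes Lemma~\ref{lem-stoch-Gronwall} with $Z_s=\|\Theta_s\psi_s\|^{2p}$, $\gamma=1/2$, $\delta=1-1/p$, $b_s=2p(b_{p,\mu}(s)+1)$, $R_s=16pR_p(s)$; this is what produces the precise constants in \eqref{diane0}. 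Relatedly, the a priori bound \eqref{inger1} is not obtainable from \eqref{bd-WW}: here $\psi$ solves an inhomogeneous equation and is not of the form $\WW{}{V}[\V{B}^{\V{q}}]\eta$, so the paper derives the needed integrability by running the same It\^o argument with $\Theta_s=\vt=\id$ and a pathwise Gronwall/partial-integration step before the stochastic Gronwall lemma is ever used.

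Second, your organization of the drift loses the essential cancellation with the quadratic variation. If you bound $\Re\SPn{\psi_s}{\Theta_s^2\wh{H}(\V{B}_s^{\V{q}})\psi_s}$ from below by discarding the non-negative $\tfrac12\|\vp(\V{G}_{\V{B}_s^{\V{q}}})\psi_s\|^2$ part (as your ``basic relative bound'' does) and then estimate the It\^o correction $\|\Theta_s\V{A}_s\|^2$ by $\|\V{G}_{\V{B}_s^{\V{q}}}\|_{\mathfrak{k}}^2\|(1+\Id\Gamma(\omega))^{\eh}\Theta_s\psi_s\|^2$ plus remainders, you are left with an un-cancelled term proportional to $\|\V{G}\|_{\mathfrak{k}}^2\|\Id\Gamma(\omega)^{\eh}\Theta_s\psi_s\|^2$ which cannot be absorbed by the single good term $-\|\Id\Gamma(\omega)^{\eh}\Theta_s\psi_s\|^2$: no smallness of $\|\V{G}\|_{\mathfrak{k}}$ is assumed, and the hypothesis $f_{p,\mu}\le 1/8$ concerns only the $\|\cdot\|_\circ$-norms, and only when $\vt=1+\Id\Gamma(\omega)$. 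The paper keeps the drift term $-2\Re\SPn{\psi_s}{\Theta_s^2\tfrac12\vp(\V{G})^2\psi_s}$ and the quadratic-variation term $+\|\Theta_s\vp(\V{G})\psi_s\|^2$ together; after commuting one factor of $\Theta_s^2$ they cancel exactly up to the double commutator $T_{1,s}$ (and cross terms with $\V{R}_s$), so that \emph{only} $\|\cdot\|_\circ$-controlled commutators survive and can be absorbed via $f_{p,\mu}\le1/8$ and Young splittings with $\mu,\mu'$. You mention $T_{1,s},T_{2,s},\V{T}^\pm_s$, but as your estimates are arranged the dangerous $\mathfrak{k}$-norm term is still present; this must be repaired before the absorption step, and hence before either \eqref{diane0} or \eqref{diane1} can be closed.
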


\begin{proof}
{\it Step 1.} If $f$ is one of the components of $\V{c}=(\V{G},q,\V{F})$,
then $\V{x}\mapsto\vp(f_{\V{x}})$, $\V{x}\mapsto\vp(\V{G}_{\V{x}})^2$, and,
hence, $\V{x}\mapsto\wh{H}(\V{x})$ are well-defined continuous maps from
$\RR^\nu$ into $\sB(\dom(\Id\Gamma(\omega)),\FHR)$ as a consequence of 
\eqref{rb-a}, \eqref{rb-ad}, Hyp.~\ref{hyp-G}, and Ex.~\ref{ex-isi}.
This shows that Lem.~\ref{lem-Ito-Theta} applies to the semi-martingale $\psi$.
It is also known that $\vp(f_{\V{x}})$ maps $\dom(\Id\Gamma(\omega))$ continuously
into $\dom(\Id\Gamma(\omega)^\eh)\subset\dom(\vp(f_{\V{x}}))$, which altogether shows that
all algebraic manipulations in the following steps are justified.

{\it Step 2.} 
Combining Lem.~\ref{lem-Ito-Theta} with \eqref{SDE-inhom} we $\PP$-a.s. find, for all $t\in[0,t_0]$,
\begin{align}\nonumber
\|\Theta_t\psi_t\|^2
&=\|\Theta_0\psi_0\|^2-\int_0^t2\Re\SPb{\psi_s}{
\Theta_s^2\,\tfrac{1}{2}\vp(\V{G}_{\V{B}_s^{\V{q}}})^2\,\psi_s}\Id s
\\\nonumber
&\quad-\int_0^t2\Re\SPb{\psi_s}{\Theta_s^2\big(\Id\Gamma(\omega)
-\tfrac{i}{2}\vp(q_{\V{B}^{\V{q}}_s})-\vsigma\cdot\vp(\V{F}_{\V{B}^{\V{q}}_s})
-\tfrac{\dot\Theta_s}{\Theta_s}+V(\V{B}_s^{\V{q}})\big)\psi_s}\Id s
\\\nonumber
&\quad+\int_0^t
2\Re\SPb{\psi_s}{\Theta_s^2\,i\vp(\V{G}_{\V{B}^{\V{q}}_s})\psi_s}\Id \V{B}_s+\int_0^t
\big\|\Theta_s\vp(\V{G}_{\V{B}^{\V{q}}_s})\psi_s\big\|^2\Id s
\\\nonumber
&\quad+\int_0^t2\Re\SPn{\Theta_s\psi_s}{\Theta_s\rho_s}\Id s
+\int_0^t2\Re\SPn{\Theta_s\psi_s}{\Theta_s\V{R}_s}\Id\V{B}_s
\\\nonumber
&\quad
+\int_0^t2\Re\SPn{\Theta_s\V{R}_s}{\Theta_s\,i\vp(\V{G}_{\V{B}^{\V{q}}_s})\psi_s}\Id s
+\int_0^t\|\Theta_s\V{R}_s\|^2\Id s.
\end{align}
Now we commute $\Theta_s^2$ with one of the factors $\vp(\V{G}_{\V{B}^{\V{q}}_s})$
in the integral in the first line, take the cancellation with the second integral
in the third line into account, and observe that 
\begin{align*}
\Re\SPb{\psi_s}{[\Theta_s^2,\vp(\V{G}_{\V{B}^{\V{q}}_s})]\vp(\V{G}_{\V{B}_s^{\V{q}}})\psi}
&=\SPb{\vt^\eh\Theta_s\psi_s}{T_1(s)\vt^\eh\Theta_s\psi_s},
\\
2\Re\SPb{\psi_s}{\Theta_s[\Theta_s,\vsigma\cdot\vp(\V{F}_{\V{B}^{\V{q}}_s})]\psi_s}
&=\SPb{\psi_s}{\big[\Theta_s\,,\,[\Theta_s,\vsigma\cdot\vp(\V{F}_{\V{B}^{\V{q}}_{s}})]\big]\psi_s}.
\end{align*}
Moreover, we commute one factor $\Theta_s$
with $\vp(\V{G}_{\V{B}^{\V{q}}_s})$ in the first $\Id\V{B}$-integral and use that
$\Re\SPn{\psi_s}{\Theta_si\vp(\V{G}_{\V{B}^{\V{q}}_s})\Theta_s\psi_s}=0$. 
The same observation can be used in the second line 
with $\tfrac{i}{2}\vp(q_{\V{B}^{\V{q}}_s})$ in place of $i\vp(\V{G}_{\V{B}^{\V{q}}_s})$.
In this way we $\PP$-a.s. arrive at 
\begin{align}\label{spin80}
\|\Theta_t\psi_{t}\|^2&=\|\Theta_0\psi_{0}\|^2+\int_0^t\sJ(s)\Id s+\int_0^t\V{J}(s)\Id \V{B}_s,
\end{align}
for all $t\in[0,t_0]$, with
\begin{align}\nonumber
\sJ(s)&:=-2\SPb{\Theta_s\psi_s}{
\big(\Id\Gamma(\omega)-\vsigma\cdot\vp(\V{F}_{\V{B}^{\V{q}}_s})-\tfrac{\dot\Theta_s}{\Theta_s}
+V(\V{B}_s^{\V{q}})\big)\Theta_s\psi_s}
\\\nonumber
&\quad-\SPb{\vt^\eh\Theta_s\psi_s}{T_{1,s}\vt^\eh\Theta_s\psi_s}+
2\Re\SPb{\Theta_s\psi_s}{T_{2,s}\vt^\eh\Theta_s\psi_{s}}
\\\nonumber
&\quad+2\Re\SPn{\Theta_s\psi_s}{\Theta_s\rho_s}+\|\Theta_s\V{R}_s\|^2
\\\nonumber
&\quad
+2\Re\SPn{\Theta_s\V{R}_s}{i\vp(\V{G}_{\V{B}^{\V{q}}_s})\Theta_s\psi_s}
+\Re\SPn{\Theta_s\V{R}_s}{i\V{T}_s^+\vt^\eh\Theta_s\psi_s},
\\\label{def-vecJ}
\V{J}(s)&:=\Re\SPb{\Theta_s\psi_s}{i\V{T}_s^+\vt^\eh\Theta_s\psi_s+2\Theta_s\V{R}_s}.
\end{align}
Next, we apply the bounds
\begin{align}\label{flora}
\vsigma\cdot\vp(\V{F}_{\V{B}^{\V{q}}_s})+\tfrac{\dot\Theta_s}{\Theta_s}
&\le(1/4+1/2)\Id\Gamma(\omega)+4\mho(\V{B}_s^{\V{q}})+1/2,
\\\label{flora2}
\|\vp(\V{G}_{\V{B}_s^{\V{q}}})(1+\Id\Gamma(\omega))^\mh\|&
\le2^\eh\|\V{G}_{\V{B}_s^{\V{q}}}\|_{\mathfrak{k}}.
\end{align}
The first one of them
follows from \eqref{qfb-vp} (with $\vk=\omega/4$) and \eqref{gabi1} (resp. \eqref{gabi2}), 
and the second one is implied by \eqref{rb-a} and \eqref{rb-ad}. We thus obtain
\begin{align}\nonumber
\sJ(s)&\le-\big(\tfrac{1}{2}-2\mu'\big)\|\Id\Gamma(\omega)^\eh\Theta_s\psi_s\|^2
\\\nonumber
&\quad+\big(\|T_{1,s}\|+\|\V{T}_s^+\|^2+\mu\|T_{2,s}\|\big)\|\vt^\eh\Theta_s\psi_s\|^2
\\\nonumber
&\quad+\Big(8\mho(\V{B}_s^{\V{q}})+1-2V(\V{B}_s^{\V{q}})
+\frac{\|T_{2,s}\|}{\mu}+2\mu'\Big)\|\Theta_s\psi_s\|^2
\\\label{diane99}
&\quad+\frac{1}{\mu'}\big\|(1+\Id\Gamma(\omega))^\mh\Theta_s\rho_s\big\|^2
+\Big(\frac{5}{4}+\frac{2\|\V{G}_{\V{B}_s^{\V{q}}}\|^2_{\mathfrak{k}}}{\mu'}\Big)
\|\Theta_s\V{R}_s\|^2,
\end{align}
for all $s\in[0,t_0]$ and $\mu,\mu'>0$. We further have
\begin{align}\label{diane99b}
\tfrac{1}{2}\V{J}(s)^2&\le2\|\Theta_s\psi_s\|^2\|\V{T}_s^+\|^2\|\vt^\eh\Theta_s\psi_s\|^2
+4\|\Theta_s\psi_s\|^2\|\Theta_s\V{R}_s\|^2.
\end{align}

{\em Step 3.} Let $p\ge 2$. 
In view of \eqref{spin80} an application of It\={o}'s formula $\PP$-a.s. yields
\begin{align}\nonumber
\|\Theta_t\psi_t\|^{2p}&=\|\Theta_0\psi_0\|^{2p}+{p}\int_0^t\|\Theta_s\psi_s\|^{2p-2}\sJ(s)\Id s
\\\label{robert2000}
&\quad+p\int_0^t\|\Theta_s\psi_s\|^{2p-2}\V{J}(s)\Id \V{B}_s+\frac{p(p-1)}{2}\int_0^t
\|\Theta_s\psi_s\|^{2p-4}\V{J}(s)^2\Id s,
\end{align}
for all $t\in[0,t_0]$. Using \eqref{diane99} with $\mu'=1/8$, \eqref{diane99b}, 
and Lem.~\ref{lem-bd-T} together with \eqref{def-normcirc}, we $\PP$-a.s. obtain
\begin{align}\nonumber
\|\Theta_t\psi_t\|^{2p}
&\le\|\Theta_t\psi_t\|^{2p}+p\int_0^t\big(\tfrac{1}{4}-f_{p,\mu}(s)\big)\|\Theta_s\psi_s\|^{2p-2}
\|\Id\Gamma(\omega)^\eh\Theta_s\psi_s\|^2\Id s
\\\nonumber
&\le\|\Theta_0\psi_0\|^{2p}+
2p\!\int_0^t(b_{p,\mu}(s)+1)\|\Theta_s\psi_s\|^{2p}\Id s
+p\!\int_0^t\|\Theta_s\psi_s\|^{2p-2}\V{J}(s)\Id \V{B}_s
\\\nonumber
&\quad+8p\int_0^t\|\Theta_s\psi_s\|^{2p-2}
\big\|(1+\Id\Gamma(\omega))^\mh\Theta_s\rho_s\big\|^2\Id s
\\\label{spin81n}
&\quad+p\int_0^t\|\Theta_s\psi_s\|^{2p-2}
\Big(\frac{5}{4}+4(p-1)+{16\|\V{G}_{\V{B}_s^{\V{q}}}\|^2_{\mathfrak{k}}}\Big)
\|\Theta_s\V{R}_s\|^2\Id s,
\end{align}
for all $t\in[0,t_0]$ and $\mu>0$. Taking \eqref{spin80} and \eqref{diane99} into account,
we see that \eqref{spin81n} is actually valid for $p=1$ as well.

{\em Step 4.}
Applying \eqref{diane99}--\eqref{robert2000} with $\Theta_s:=\vt:=\id$ and $\mu'=1/8$, 
we $\PP$-a.s. obtain
\begin{align}\nonumber
\|\psi_t\|^{2p}+\frac{p}{4}\int_0^t\|\psi_s\|^{2p-2}\|\Id\Gamma(\omega)^\eh&\psi_s\|^2\Id s
\le\|\eta\|^{2p}+16p\int_0^t\|\psi_s\|^{2p-2}R_p(s)\Id s
\\\label{pernille1}
&+2p\int_0^t(4\mho(\V{B}_s^{\V{q}})+1+V_-(\V{B}_s^{\V{q}}))\|\psi_s\|^{2p}\Id s,
\end{align}
for all $t\in[0,t_0]$ and $p\in\{1\}\cup[2,\infty)$. 
Together with Gronwall's lemma and an integration by parts \eqref{pernille1} $\PP$-a.s. yields
\begin{align*}
\sup_{s\le t}\{e^{-2pB_s}\|\psi_s\|^{2p}\}&\le\|\eta\|^{2p}
+16p\int_0^t\{e^{-(2p-2)B_s}\|\psi_s\|^{2p-2}\}e^{-2B_s}R_p(s)\Id s
\\
&\le\sup_{s\le t}\{e^{-2pB_s}\|\psi_s\|^{2p}\}^{1-\nf{1}{p}}\Big(\|\eta\|^2
+16p\int_0^te^{-2B_s}R_p(s)\Id s\Big),
\end{align*}
with $B_{\bullet}:=\int_0^\bullet(4\mho(\V{B}_s^{\V{q}})+1+V_-(\V{B}_s^{\V{q}}))\Id s$. 
Hence, we $\PP$-a.s. have
\begin{align}
\sup_{s\le t}\{e^{-pB_s}\|\psi_s\|^{p}\}&\label{pernille1a}
\le 2^{\nf{p}{2}}\Big\{\|\eta\|^{p}+4^p
\Big(p\int_0^te^{-2B_s}R_p(s)\Id s\Big)^{\nf{p}{2}}\Big\},\quad t\in[0,t_0].
\end{align}
By the assumptions on $\eta$, $\rho$, and $\V{R}$, and since $\Theta_s$ is uniformly bounded by
some $\ve$-dependent constant, \eqref{pernille1a} implies that the
expectation on the left hand side of \eqref{diane0} is finite for all $t\in[0,t_0]$; 
this a priori information is needed in the next step. 
 
{\em Step 5.}
By the remarks in the first step, $(\|\Theta_s\psi_s\|^{2p-2}\V{J}(s))_{s\in[0,t_0]}$
is an adapted continuous $\RR^\nu$-valued process, whence its
integral with respect to the Brownian motion in \eqref{spin81n} is a continuous local martingale
starting from zero.
Thanks to this and the remarks in Step 4, we may apply Lem.~\ref{lem-stoch-Gronwall} with 
$\gamma=1/2$, $\delta=1-1/p$, $Z_s=\|\Theta_s\psi_s\|^{2p}$, $b_s=2p(b_{p,\mu}(s)+1)$,
and $R_s=16pR_p(s)$ to infer \eqref{diane0} from \eqref{spin81n}.
(We also use that $c_{\eh,1-\nf{1}{p}}^{\nf{1}{p}}=2^\eh(\pi+1)<7$.)

{\em Step 6.} 
Finally, we verify \eqref{diane1}. 
We abbreviate $B_{p,t}:=t+\int_0^tb_{p,\mu}(s)\Id s$. Then \eqref{spin80} and
It\={o}'s formula $\PP$-a.s. imply
\begin{align*}
e^{-2B_{p,t}}\|\Theta_t\psi_t\|^2
&=\|\Theta_0\psi\|^2+\int_0^te^{-2B_{p,s}}\sJ(s)\Id s
\\
&\quad-2\int_0^t(b_{p,\mu}(s)+1)e^{-2B_{p,s}}\|\Theta_s\psi_s\|^2\Id s
+\int_0^te^{-2B_{p,s}}\V{J}(s)\Id\V{B}_s,
\end{align*} 
for all $t\in[0,t_0]$.
Next, we employ \eqref{diane99} with $\mu':=1/8$ and taking into account that
$\|T_{1,s}\|+\|\V{T}_s^+\|^2+\mu\|T_{2,s}\|\le f_{1,p}(s)\le1/8$ in the case $\vt=1+\Id\Gamma(\omega)$, by Lem.~\ref{lem-bd-T} and our assumption on $f_{p,\mu}$.
In all cases this $\PP$-a.s. yields
\begin{align}\nonumber
e^{-2B_{p,t}}&\|\Theta_t\psi_t\|^2
+\frac{1}{4}\int_0^te^{-2B_{p,s}}\|\Id\Gamma(\omega)^\eh\Theta_s\psi_s\|^2\Id s
\\\label{lola1}
&\le\|\Theta_0\psi_0\|^2+16\int_0^te^{-2B_{p,s}}R_{1}(s)\Id s
+\int_0^te^{-2B_{p,s}}\V{J}(s)\Id\V{B}_s,
\end{align}
for all $t\in[0,t_0]$.
Let us consider the previous bound in the case $\Theta_s=\id$ for the moment.
Then $\V{J}=\V{0}$ and, hence, the assumptions on $\eta=\psi_0$, $\rho$, and $\V{R}$ entail 
$\int_0^{t_0}e^{-2B_{p,s}}\|\Id\Gamma(\omega)^\eh\psi_s\|^2\Id s\in L^{\nf{p}{2}}(\PP)$.
Since $\|\Theta_s\|$ is uniformly bounded by some $\ve$-dependent
constant this ensures a priori that the left hand side of \eqref{diane1} is finite.

Returning to general $\Theta_s$ we raise \eqref{lola1} to the power
$p/2$ and employ the bound
$(a_1+\dots+a_n)^{q}\le n^{(q-1)\vee0}(a_1^q+\dots+a_n^q)$ afterwards, which leads to 
\begin{align}\nonumber
\EE\Big[\Big(\int_0^t&e^{-2B_{p,s}}
\|\Id\Gamma(\omega)^\eh\Theta_s\psi_s\|^2\Id s\Big)^{\nf{p}{2}}\Big]
\le 4^p\EE\big[\|\Theta_0\psi_0\|^{p}\big]
\\\label{diane2}
&+4^{2p}\EE\Big[\Big(\int_0^te^{-2B_{p,s}}R_1(s)\Id s\Big)^{\nf{p}{2}}\Big]
+4^{p}\EE\Big[\Big|\int_0^te^{-2B_{p,s}}\V{J}(s)\Id\V{B}_s\Big|^{\nf{p}{2}}\Big],
\end{align}
for all $t\in[0,t_0]$. Furthermore, there exists $c_p>0$ such that the
following special case of the Burkholder-Davis-Gundy inequality holds for every $t\in[0,t_0]$,
\begin{align}\label{Lpbd}
\EE\Big[\sup_{r\le t}\Big|\int_0^re^{-2B_{p,s}}\V{J}(s)\Id\V{B}_s\Big|^{\nf{p}{2}}\Big]
&\le c_p\EE\Big[\Big(\int_0^t|e^{-2B_{p,s}}\V{J}(s)|^2\Id s\Big)^{\nf{p}{4}}\Big];
\end{align}
see, e.g., \cite[Thm.~4.36]{daPrZa1992}.
In particular, the last term in \eqref{diane2} can be estimated as
\begin{align*}
\EE\Big[\Big|\int_0^te^{-2B_{p,s}}\V{J}(s)&\Id\V{B}_s\Big|^{\nf{p}{2}}\Big]
\le c_p\EE\big[\sup_{s\le t}\{e^{-pB_{p,s}}\|\Theta_s\psi_s\|^{p}\}\big]^\eh
\\
&\cdot\EE\Big[\Big(\int_0^te^{-2B_{p,s}}\big\|i\V{T}_s^+\vt^\eh\Theta_s\psi_s
+2\Theta_s\V{R}_s\big\|^2\Id s\Big)^{\nf{p}{2}}\Big]^\eh.
\end{align*}
Using that, by Lem.~\ref{lem-bd-T} and our assumption on $f_{p,\mu}$, 
$\vt=1+\Id\Gamma(\omega)$ implies $\|\V{T}_s^+\|\le1$, 
we readily infer from the previous inequality that, in all cases and for all $t\in[0,t_0]$,
\begin{align}\nonumber
\EE\Big[\Big|\int_0^te^{-2B_{p,s}}\V{J}(s)\Id\V{B}_s\Big|^{\nf{p}{2}}\Big]
\le \frac{1}{2\cdot 4^p}\EE\Big[\Big(\int_0^t
e^{-2B_{p,s}}\|\Id\Gamma(\omega)^\eh\Theta_s\psi_s\|^2\Id s\Big)^{\nf{p}{2}}\Big]\:
\\\nonumber
+c_p'\EE\big[\sup_{s\le t}\{e^{-pB_{p,s}}\|\Theta_s\psi_s\|^{p}\}\big]
+c_p'\EE\Big[\Big(\int_0^te^{-2B_{p,s}}\|\V{T}^+_s\|^2\|\Theta_s\psi_s\|^2\Id s\Big)^{\nf{p}{2}}\Big]\:
\\\label{diane4}
+c_p'\EE\Big[\Big(\int_0^te^{-2B_{p,s}}R_1(s)\Id s\Big)^{\nf{p}{2}}\Big],
\end{align}
for some $c_p'>0$ depending only on $p$.
Inserting \eqref{diane4} into \eqref{diane2}, solving the resulting inequality for 
the left hand side of \eqref{diane2} (which is finite as we observed above),
and applying \eqref{diane0} afterwards, we obtain \eqref{diane1}.
\end{proof}

In the rest of this subsection we apply Lem.~\ref{cor-spin1} to the process given by
Thm.~\ref{thm-WW} extending the bounds \eqref{diane0} and \eqref{diane1} to unbounded
weight functions ($\ve=0$) at the same time.
The reader might want to recall the definition of the weights $\Theta_{\ve,s}^{(\alpha)}$
and $\Xi_{\ve,s}^{(\delta)}$ in the beginning of Subsect.~\ref{ssec-def-weights} before
reading the next lemmas.

\begin{lem}\label{lem-spin4}
{\rm(1)} Let $t_0>0$, $|\alpha|\ge1/2$, and assume that the coefficient vector 
satisfies $\|\V{c}\|_{\circ,\infty}:=\sup_{\V{x}\in\RR^\nu}\|\V{c}_{\V{x}}\|_\circ<\infty$ 
with $\|\cdot\|_\circ$ given by the first line in \eqref{def-normcirc}.
Then, $\PP$-a.s., $\WW{t}{0}[\V{B}^{\V{q}}]$ maps $\dom({\Theta^{(\alpha)}_{0,0}})$
into $\dom(\Theta_{0,t}^{(\alpha)})$, for all $t\in[0,t_0]$, and
\begin{align}\label{pol-w}
\EE\big[\sup_{s\le t}\|\Theta_{0,s}^{(\alpha)}\WW{s}{0}[\V{B}^{\V{q}}]\eta\|^{p}\big]
&\le\big(7e^{(p\|\V{c}\|_{\circ,\infty}^2+4\|\mho\|_\infty+2)t}\big)^p
\EE\big[\|\Theta_{0,0}^{(\alpha)}\eta\|^{p}\big],
\end{align}
for all $p\in\NN$, $t\in[0,t_0]$, and $\fF_0$-measurable 
$(\V{q},\eta):\Omega\to\RR^\nu\times{\dom}(\Theta^{(\alpha)}_{0,0})$ 
with $\|\Theta_{0,0}^{(\alpha)}\eta\|\in L^{p}(\PP)$.

\smallskip

\noindent{\rm(2)}
Let $t_0>0$, $0<|\delta|\le1$, $p\in\NN$, and assume that 
$\|(q,\V{F})\|_{\circ,\infty}<\infty$ and $p\|\V{G}\|_{\circ,\infty}^2\le1/32$,
where $\|\cdot\|_\circ$ is given by the second line in \eqref{def-normcirc}. 
Then, $\PP$-a.s., $\WW{t}{0}[\V{B}^{\V{q}}]$ maps $\dom({\Xi^{(\delta)}_{0,0}})$
into $\dom(\Xi_{0,t}^{(\delta)})$, for all $t\in[0,t_0]$, and
\begin{align}\label{exp-w}
\EE\big[\sup_{s\le t}\|\Xi_{0,s}^{(\delta)}\WW{s}{0}\eta\|^{p}\big]
&\le\big(7e^{(8\|\V{c}\|_{\circ,\infty}^2+4\|\mho\|_\infty+2)t}\big)^p 
\EE\big[\|\Xi^{(\delta)}_{0,0}\eta\|^{p}\big],
\end{align}
for all $t\in[0,t_0]$ and $\fF_0$-measurable 
$(\V{q},\eta):\Omega\to\RR^\nu\times{\dom}(\Xi^{(\delta)}_{0,0})$ 
with $\|\Xi_{0,0}^{(\delta)}\eta\|\in L^{p}(\PP)$.
\end{lem}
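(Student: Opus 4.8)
The plan is to obtain both estimates directly from Lemma~\ref{cor-spin1}, specialized to $V=0$, $\rho=0$, $\V{R}=0$, and then to pass to the limit in two auxiliary parameters. First fix $\ve\in(0,1]$ and assume temporarily that $\eta$ is $\dom(\Id\Gamma(\omega))$-valued. By Theorem~\ref{thm-WW}(3) (with $V=0$) the process $\psi_\bullet:=\WW{}{0}[\V{B}^{\V{q}}]\eta$ has paths $\PP$-a.s.\ in $C([0,t_0],\dom(\Id\Gamma(\omega)))$ and $\PP$-a.s.\ solves \eqref{SDE-spin} with $V=0$, which is exactly \eqref{SDE-inhom} with $\rho=\V{R}=0$ and $V=0$; hence $R_p\equiv0$ and the integrability hypothesis of Lemma~\ref{cor-spin1} holds trivially. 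For part~(1) we choose the weights $\Theta_s=\Theta^{(\alpha)}_{\ve,s}$ ($|\alpha|\ge1/2$, $\vt=1$), so that $f_{p,\mu}\equiv0\le1/8$ by \eqref{def-fpmu} and Lemma~\ref{cor-spin1} applies with $\mu=1$; for part~(2) we take $\Theta_s=\Xi^{(\delta)}_{\ve,s}$ ($\vt=1+\Id\Gamma(\omega)$), where $f_{p,\mu}(s)=2p\|\V{G}_{\V{B}^{\V{q}}_s}\|_\circ^2+\mu\|(q,\vsigma\cdot\V{F})_{\V{B}^{\V{q}}_s}\|_\circ$; the hypothesis $p\|\V{G}\|_{\circ,\infty}^2\le1/32$ yields $2p\|\V{G}_{\V{B}^{\V{q}}_s}\|_\circ^2\le\tfrac1{16}$, and then $\mu:=(16\|(q,\vsigma\cdot\V{F})\|_{\circ,\infty})^{-1}$ forces $f_{p,\mu}(s)\le1/8$.

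In both cases Lemma~\ref{lem-bd-T} bounds $\|T_{1,s}\|,\|\V{T}^\pm_s\|^2\le\|\V{G}_{\V{B}^{\V{q}}_s}\|_\circ^2$ and $\|T_{2,s}\|\le\|(q,\vsigma\cdot\V{F})_{\V{B}^{\V{q}}_s}\|_\circ$ (with $\|\cdot\|_\circ$ the relevant line of \eqref{def-normcirc}), so $b_{p,\mu}$ in \eqref{def-bpmu} is bounded uniformly in $s$ and $\vgamma$. Using $\|\V{G}_{\V{x}}\|_\circ^2+\|(q,\vsigma\cdot\V{F})_{\V{x}}\|_\circ^2=\|\V{c}_{\V{x}}\|_\circ^2$ and Young's inequality on the middle term of \eqref{def-bpmu}, one checks that $\|b_{p,\mu}\|_\infty+1\le p\|\V{c}\|_{\circ,\infty}^2+4\|\mho\|_\infty+2$ in case~(1) and $\|b_{p,\mu}\|_\infty+1\le8\|\V{c}\|_{\circ,\infty}^2+4\|\mho\|_\infty+2$ in case~(2) (the factor $8$ coming from $\tfrac1{2\mu}=8\|(q,\vsigma\cdot\V{F})\|_{\circ,\infty}$). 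Since $R_p\equiv0$, the right-hand side of \eqref{diane0} is just $(7e^t)^p\EE[\|\Theta_{\ve,0}\eta\|^p]$; bounding $\sup_{s\le t}\|\Theta_{\ve,s}\psi_s\|^p\le e^{pt\|b_{p,\mu}\|_\infty}\sup_{s\le t}\{e^{-p\int_0^sb_{p,\mu}(r)\Id r}\|\Theta_{\ve,s}\psi_s\|^p\}$ and invoking \eqref{diane0} produces exactly \eqref{pol-w}, resp.\ \eqref{exp-w}, but with the regularized weights $\Theta^{(\alpha)}_{\ve,s}$, $\Xi^{(\delta)}_{\ve,s}$ and the temporary restriction on $\eta$.

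It remains to remove the restriction $\eta\in\dom(\Id\Gamma(\omega))$ and to let $\ve\downarrow0$. For the former, given $\eta:\Omega\to\dom(\Theta_{0,0}^{(\alpha)})$ with $\|\Theta_{0,0}^{(\alpha)}\eta\|\in L^p(\PP)$, put $\eta_m:=(1+m^{-1}\Id\Gamma(\omega))^{-1}\eta$; these are $\fF_0$-measurable, $\dom(\Id\Gamma(\omega))$-valued, converge to $\eta$ in $\FHR$ pointwise on $\Omega$, and, since the cutoff commutes with the weight (both are multiplication operators diagonal in the same basis) and is a contraction, satisfy $\|\Theta^{(\alpha)}_{\ve,0}\eta_m\|\le\|\Theta^{(\alpha)}_{\ve,0}\eta\|\le\|\Theta_{0,0}^{(\alpha)}\eta\|$ for $\ve>0$, $\alpha>0$. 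Since $\WW{s}{0}[\V{B}^{\V{q}}](\vgamma)$ and (for $\ve>0$) $\Theta^{(\alpha)}_{\ve,s}$ are bounded, one has pointwise convergence $\Theta^{(\alpha)}_{\ve,s}\WW{s}{0}[\V{B}^{\V{q}}]\eta_m\to\Theta^{(\alpha)}_{\ve,s}\WW{s}{0}[\V{B}^{\V{q}}]\eta$, so Fatou's lemma on the left and the above bound on the right upgrade the regularized estimate to arbitrary $\eta\in\dom(\Theta_{0,0}^{(\alpha)})$. Finally, from the explicit form $v_{\alpha,\ve,t}=v_{\alpha,t}(1+\ve v_{\alpha,t})^{-1}$ one reads off that $\Theta^{(\alpha)}_{\ve,t}\uparrow\Theta^{(\alpha)}_{0,t}$ (for $\alpha>0$) as $\ve\downarrow0$, and likewise $\Xi^{(\delta)}_{\ve,t}\uparrow\Xi^{(\delta)}_{0,t}$ (for $\delta>0$); monotone convergence on both sides of the regularized bound then yields \eqref{pol-w}, resp.\ \eqref{exp-w}, and finiteness of the resulting left-hand side shows $\sup_{s\le t}\|\Theta^{(\alpha)}_{0,s}\WW{s}{0}[\V{B}^{\V{q}}]\eta\|<\infty$ $\PP$-a.s., i.e.\ the claimed domain invariance. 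For the negative-exponent cases $\alpha<0$, $\delta<0$ the regularized weights are uniformly bounded and decrease to their limits, the domain invariance is trivial, and the limiting arguments are only simpler.

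The genuine analytic input — the multiple-commutator bounds of Lemma~\ref{lem-bd-T} and the weighted BDG/stochastic-Gronwall estimate of Lemma~\ref{cor-spin1} — is already available, so the expected difficulty here is purely in executing the two limit passages. The delicate point is the removal of $\ve$: one must verify the monotonicity and uniform-boundedness of the regularized weight families $\{\Theta^{(\alpha)}_{\ve,s}\}_\ve$ and $\{\Xi^{(\delta)}_{\ve,s}\}_\ve$ straight from their definitions, and interleave this correctly with the $\eta_m$-approximation so that $L^p(\PP)$-dominating functions are available at each stage.
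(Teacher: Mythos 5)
Your core argument is the paper's own: apply Lemma~\ref{cor-spin1} with $V=0$, $\rho=0$, $\V{R}=\V{0}$ to $\WW{}{0}[\V{B}^{\V{q}}]\eta$, take $\mu=1$ in case~(1) and $\mu$ with $\mu\|(q,\vsigma\cdot\V{F})\|_{\circ,\infty}=1/16$ in case~(2), and the constants do come out as you claim provided the Young-inequality step is done pointwise at $\V{x}=\V{B}_s^{\V{q}}$ so that $\|\V{G}_{\V{x}}\|_\circ^2+\|(q,\vsigma\cdot\V{F})_{\V{x}}\|_\circ^2=\|\V{c}_{\V{x}}\|_\circ^2$ can be used. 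Your removal of the regularizations in the positive-exponent cases $\alpha>0$, $\delta>0$ is also sound and essentially the paper's argument.

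There is, however, a genuine gap in the negative-exponent cases $\alpha\le-1/2$ and $\delta<0$, which you dismiss as ``only simpler''. Two things go wrong. First, to apply Lemma~\ref{cor-spin1} (even with $\ve>0$) to the initial condition $\eta_m=(1+m^{-1}\Id\Gamma(\omega))^{-1}\eta$ you need $\|\eta_m\|_{\FHR}\in L^p(\PP)$; this hypothesis is genuinely used (it yields the a priori bound \eqref{inger1} required for the stochastic Gronwall step). For $\alpha<0$ the operator $\Theta_{0,0}^{(\alpha)}$ is bounded with unbounded inverse, so the assumption $\|\Theta_{0,0}^{(\alpha)}\eta\|\in L^p(\PP)$ does \emph{not} give $\|\eta\|\in L^p(\PP)$, and $\|\eta_m\|\le\|\eta\|$ does not help. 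Second, for $\alpha<0$ (resp.\ $\delta<0$) the regularized weights \emph{decrease} as $\ve\downarrow0$, so monotone convergence/Fatou goes the wrong way on the left-hand side; you need dominated convergence and hence an $L^p(\PP)$-majorant, which is again unavailable without integrability of $\|\eta\|$. The paper repairs both points with one extra truncation: $\eta_m:=1_{\{\|\eta\|\le m\}}\eta$ and $\eta_{n,m}:=(1+n^{-1}\Id\Gamma(\omega))^{-1}\eta_m$, so that $\|\eta_{n,m}\|\le m$; then one passes to the limits by dominated convergence in $n$, dominated convergence in $\ve$ (majorant $c_{\alpha}e^{\|\mho\|_\infty t}m$, using $\Theta_{\ve,s}^{(\alpha)}\le2^{|\alpha|}$ and \eqref{bd-WW}), the bound $\|\Theta_{0,0}^{(\alpha)}\eta_m\|\le\|\Theta_{0,0}^{(\alpha)}\eta\|$, and finally monotone convergence in $m$. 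Note that these cases are not a side issue: they are exactly what puts the unbounded weight to the \emph{right} of the flow (Remark~\ref{rem-spin4}) and are what is used in Sections~\ref{sec-cont-V}--\ref{sec-cont-ker}, so your proof as written establishes only half of the lemma.
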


\begin{rem}\label{rem-spin4}
Before we prove the lemma, let us clarify the purpose of the inverse weights in the polynomial 
case. In the situation of Lem.~\ref{lem-spin4} assume in addition that $\alpha\le-1/2$.
By Rem.~\ref{rem-weights} we have $\Theta_{0,t_0}^{(\alpha)}={\Theta_{0,0}^{(-\alpha)}}^{-1}$
and ${\Theta_{0,0}^{(\alpha)}}^{-1}={\Theta_{0,t_0}^{(-\alpha)}}$.
Substituting $\eta={\Theta_{0,0}^{(\alpha)}}^{-1}\zeta$ in \eqref{pol-w} we thus arrive at
\begin{align}\label{pol-w-inv}
\EE\big[\|{\Theta_{0,0}^{(-\alpha)}}^{-1}\WW{s}{0}[\V{B}^{\V{q}}]{\Theta_{0,t_0}^{(-\alpha)}}\zeta\|^{p}\big]
&\le\big(7e^{(p\|\V{c}\|_{\circ,\infty}^2+4\|\mho\|_\infty+2)t}\big)^p
\EE\big[\|\zeta\|^{p}\big],
\end{align}
for all $p\in\NN$ and $\fF_0$-measurable 
$(\V{q},\zeta):\Omega\to\RR^\nu\times{\dom}({\Theta_{0,t_0}^{(-\alpha)}})$ 
with $\|\zeta\|\in L^{p}(\PP)$. That is, the purpose of the inverse weights is to obtain bounds
where an unbounded weight stands to the right of $\WW{s}{0}[\V{B}^{\V{q}}]$.
\end{rem}

\begin{proof}
By virtue of Thm.~\ref{thm-WW} we know that Lem.~\ref{cor-spin1} applies to the semi-martingale
$(\WW{t}{0}[\V{B}^{\V{q}}]\eta)_{t\ge0}$, if we set $V$, $\rho$, and $\V{R}$ 
equal to zero in the statement of the lemma. Below we put $\eta_m:=1_{\{\|\eta\|\le m\}}\eta$ and
$\eta_{n,m}:=(1+n^{-1}\Id\Gamma(\omega))^{-1}\eta_m$, for
$n,m\in\NN$, so that each $\eta_{n,m}$ is a $\fF_0$-measurable 
$\dom(\Id\Gamma(\omega))$-valued initial condition. 

(1): For all $\ve\in(0,1]$ and $n,m\in\NN$, the bound 
\begin{align}\label{pol-w-eps}
\EE\big[\sup_{s\le t}\|\Theta_{\ve,s}^{(\alpha)}\WW{s}{0}[\V{B}^{\V{q}}]\eta_{n,m}\|^{p}\big]
&\le\big(7e^{(p\|\V{c}\|_{\circ,\infty}^2+4\|\mho\|_\infty+2)t}\big)^p
\EE\big[\|\Theta_{0,0}^{(\alpha)}\eta_{n,m}\|^{p}\big]
\end{align}
follows easily from 
\eqref{def-bpmu} and \eqref{diane0} with $\mu=1$. Thanks to \eqref{bd-WW}, we further know that
\begin{equation}\label{maria0}
\sup_{s\le t}\|\Theta_{\ve,s}^{(\alpha)}\WW{s}{0}[\V{B}^{\V{q}}]\eta_{n,m}\|
\le c_{\alpha,\ve} e^{\|\mho\|_\infty t}\|\eta_{m}\|\in L^{p}(\PP),\;\;\;\ve\in(0,1],\,m,n\in\NN,
\end{equation}
and the dominated convergence theorem implies the bounds
\begin{align}\label{pol-w-m}
\EE\big[\sup_{s\le t}\|\Theta_{\ve,s}^{(\alpha)}\WW{s}{0}[\V{B}^{\V{q}}]\eta_m\|^{2p}\big]
&\le\big(7e^{(p\|\V{c}\|_{\circ,\infty}^2+4\|\mho\|_\infty+2)t}\big)^p 
\EE\big[\|\Theta_{\ve,0}^{(\alpha)}\eta_m\|^{2p}\big],
\end{align}
for all $t\in[0,t_0]$, $\ve\in(0,1]$, and $m\in\NN$.

Consider the case $\alpha>0$. Then 
$\|\Theta_{\ve,0}^{(\alpha)}\eta_m\|\le\|\Theta_{0,0}^{(\alpha)}\eta\|\in L^{p}(\PP)$.
Employing the monotone convergence theorem we may thus pass to the limits $m\to\infty$ 
and $\ve\downarrow0$ on the left hand side of \eqref{pol-w-m}, which
proves \eqref{pol-w} in this case.

In the case $\alpha<0$, where 
$\Theta_{\ve,s}^{(\alpha)}\le2^{|\alpha|}$
we apply the dominated convergence theorem first, to pass to the limit $\ve\downarrow0$ on both
sides of \eqref{pol-w-m}. After that we estimate 
$\|\Theta_{0,0}^{(\alpha)}\eta_m\|\le\|\Theta_{0,0}^{(\alpha)}\eta\|\in L^{p}(\PP)$ on the right hand
side and, finally, we let $m$ go to infinity on the left hand side using the monotone convergence
theorem.

(2) can be proved analogously to (1), distinguishing the cases $\delta>0$ and $\delta<0$.
The only difference is that we choose $\mu$ in \eqref{def-bpmu}
such that $\mu\|(q,\V{F})\|_{\circ,\infty}=1/16$, for non-zero $(q,\V{F})$, 
when we apply \eqref{diane0}.
For then the condition $f_{p,\mu}\le1/8$ in Lem.~\ref{cor-spin1} is satisfied, since we are assuming
that $2p\|\V{G}\|_{\circ,\infty}^2\le1/16$.
\end{proof}

\begin{lem}\label{lem-spin42}
{\rm(1)} Let $|\alpha|\ge1/2$, $t_0>0$, $p\in\NN$, and let $\V{c}$, $\V{q}$, and $\eta$ 
fulfill the conditions in Lem.~\ref{lem-spin4}(1). Then, for $\Id t\otimes\PP$-a.e.
$(t,\vgamma)\in[0,t_0]\times\Omega$, 
$\WW{t}{0}[\V{B}^{\V{q}}](\vgamma)$ maps $\dom({\Theta^{(\alpha)}_{0,0}})$
into $\dom(\Id\Gamma(\omega)^\eh\Theta_{0,t}^{(\alpha)})$, and
\begin{align}\nonumber
\EE\Big[&\Big(\int_0^t\big\|\Id\Gamma(\omega)^\eh\Theta_{0,s}^{(\alpha)}
\WW{s}{0}[\V{B}^{\V{q}}]\eta\big\|^2\Id s\Big)^{\nf{p}{2}}\Big]
\\\label{cleo1}
&\le c_pe^{c(p\|\V{c}\|_{\circ,\infty}^2+\|\mho\|_\infty+1)pt}
\big(1+t^{\nf{p}{2}}\|\V{G}\|_{\circ,\infty}^p\big)\EE\big[\|\Theta_{0,0}^{(\alpha)}\eta\|^p\big],
\quad t\in[0,t_0],
\end{align}
for some universal constant $c>0$ and some $c_p>0$ depending only on $p$.

\smallskip

\noindent{\rm(2)} Let $0<|\delta|\le1$, $t_0>0$, $p\in\NN$, and let $\V{c}$, $\V{q}$, and $\eta$ 
fulfill the conditions in Lem.~\ref{lem-spin4}(2). Then, for $\Id t\otimes\PP$-a.e.
$(t,\vgamma)\in[0,t_0]\times\Omega$, $\WW{t}{0}[\V{B}^{\V{q}}](\vgamma)$ 
maps $\dom({\Xi^{(\delta)}_{0,0}})$ into $\dom(\Id\Gamma(\omega)^\eh\Xi_{0,t}^{(\delta)})$, and
\eqref{cleo1} holds true with $\Theta_{0,s}^{(\alpha)}$ replaced by $\Xi_{0,s}^{(\delta)}$.
\end{lem}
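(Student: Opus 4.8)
The plan is to derive \eqref{cleo1} as a direct consequence of the second weighted BDG estimate \eqref{diane1} in Lemma~\ref{cor-spin1}, applied to the operator-valued semi-martingale supplied by Thm.~\ref{thm-WW}, followed by the same monotone/dominated convergence limiting argument used in the proof of Lemma~\ref{lem-spin4} to pass from the regularized weights $\Theta_{\ve,s}^{(\alpha)}$ (resp. $\Xi_{\ve,s}^{(\delta)}$) with $\ve>0$ to the genuine unbounded weights at $\ve=0$. As in the proof of Lemma~\ref{lem-spin4}, I would first treat the approximating initial conditions $\eta_{n,m}:=(1+n^{-1}\Id\Gamma(\omega))^{-1}1_{\{\|\eta\|\le m\}}\eta$, which are $\fF_0$-measurable and $\dom(\Id\Gamma(\omega))$-valued, so that the semi-martingale $(\WW{t}{0}[\V{B}^{\V{q}}]\eta_{n,m})_{t\in[0,t_0]}$ has paths in $C([0,t_0],\dom(\Id\Gamma(\omega)))$ by Thm.~\ref{thm-WW}(3) and hence satisfies the hypotheses of Lemma~\ref{cor-spin1} with $V=0$, $\rho=0$, $\V{R}=\V{0}$.

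With $V$, $\rho$, $\V{R}$ all zero, the inhomogeneous terms $R_p(s)$ in \eqref{def-bpmu} vanish, and \eqref{def-bpmu} gives $b_{p,\mu}(s)\le p\|\V{c}_{\V{B}_s^{\V{q}}}\|_\circ^2+(\tfrac{\mu}{2}+\tfrac{1}{2\mu})\|\V{c}_{\V{B}_s^{\V{q}}}\|_\circ+4\|\mho\|_\infty\le c(p\|\V{c}\|_{\circ,\infty}^2+\|\mho\|_\infty+1)$ after choosing $\mu$ as in the proof of Lemma~\ref{lem-spin4} (namely $\mu=1$ in case (1), and $\mu\|(q,\V{F})\|_{\circ,\infty}=1/16$ in case (2), which ensures the standing hypothesis $f_{p,\mu}(s)\le1/8$ of Lemma~\ref{cor-spin1} given the assumption $p\|\V{G}\|_{\circ,\infty}^2\le1/32$). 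Then \eqref{diane1} collapses to
\begin{align*}
\EE\Big[\Big(\int_0^te^{-2\int_0^sb_{p,\mu}(r)\Id r}\|\Id\Gamma(\omega)^\eh\Theta_{\ve,s}\WW{s}{0}[\V{B}^{\V{q}}]\eta_{n,m}\|^2\Id s\Big)^{\nf{p}{2}}\Big]
&\le c_pe^{2pt}\EE\big[\|\Theta_{\ve,0}\eta_{n,m}\|^p\big]\\
&\quad+c_pe^{pt}\EE\Big[\Big(\int_0^te^{-2\int_0^sb_{p,\mu}(r)\Id r}\|\V{G}_{\V{B}_s^{\V{q}}}\|_\circ^2\|\Theta_{\ve,s}\WW{s}{0}[\V{B}^{\V{q}}]\eta_{n,m}\|^2\Id s\Big)^{\nf{p}{2}}\Big].
\end{align*}
Bounding $\|\V{G}_{\V{B}_s^{\V{q}}}\|_\circ^2\le\|\V{G}\|_{\circ,\infty}^2$, pulling it out of the $s$-integral, and controlling $\sup_{s\le t}\{e^{-p\int_0^sb_{p,\mu}}\|\Theta_{\ve,s}\WW{s}{0}[\V{B}^{\V{q}}]\eta_{n,m}\|^p\}$ via \eqref{diane0} (whose right-hand side likewise reduces to a constant times $\EE[\|\Theta_{\ve,0}\eta_{n,m}\|^p]$), one gets the second term $\le c_pe^{cpt}\,t^{\nf{p}{2}}\|\V{G}\|_{\circ,\infty}^p\EE[\|\Theta_{\ve,0}\eta_{n,m}\|^p]$. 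Reinstating the exponential prefactor $e^{2\int_0^sb_{p,\mu}}\le e^{c(p\|\V{c}\|_{\circ,\infty}^2+\|\mho\|_\infty+1)t}$ inside the $s$-integral on the left and noting $\|\Theta_{\ve,0}\eta_{n,m}\|\le\|\Theta_{0,0}^{(\alpha)}\eta\|$ when $\alpha>0$ (resp.\ $\delta>0$), while $\Theta_{\ve,s}\le2^{|\alpha|}$ when $\alpha<0$ (resp.\ $\Xi_{\ve,s}^{(\delta)}$ bounded when $\delta<0$), yields a bound of the asserted form \eqref{cleo1} for each $\eta_{n,m}$, uniformly in $\ve\in(0,1]$ and $n,m$.

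It remains to remove the regularizations. For $\alpha>0$ (resp.\ $\delta>0$) I would send $n\to\infty$ and $m\to\infty$ first, using the dominated convergence theorem on the left — dominated via \eqref{bd-WW}, which gives $\|\Theta_{\ve,s}\WW{s}{0}[\V{B}^{\V{q}}]\eta_{n,m}\|\le c_{\ve}e^{\|\mho\|_\infty t}\|\eta\|$, together with the bound already obtained for $\eta$ itself after one such limit — and the monotone convergence theorem together with $\|\Theta_{\ve,0}\eta_{n,m}\|\uparrow\|\Theta_{0,0}^{(\alpha)}\eta\|$ on the right; then $\ve\downarrow0$ by monotone convergence on the left (the integrand increases as $\ve\downarrow0$ since $\Theta_{\ve,s}\uparrow\Theta_{0,s}$ on $\dom(\Theta_{0,s})$), which also shows the domain inclusion $\WW{t}{0}[\V{B}^{\V{q}}]\dom(\Theta_{0,0}^{(\alpha)})\subset\dom(\Id\Gamma(\omega)^\eh\Theta_{0,t}^{(\alpha)})$ for $\Id t\otimes\PP$-a.e.\ $(t,\vgamma)$ via Fatou. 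For $\alpha<0$ (resp.\ $\delta<0$) I would instead let $\ve\downarrow0$ first by dominated convergence on both sides (legitimate since $\Theta_{\ve,s}\le2^{|\alpha|}$), then estimate $\|\Theta_{0,0}^{(\alpha)}\eta_m\|\le\|\Theta_{0,0}^{(\alpha)}\eta\|$ on the right, and finally let $m\to\infty$ by monotone convergence on the left. Part (2) is identical word for word with $\Theta_{\ve,s}^{(\alpha)}$ replaced by $\Xi_{\ve,s}^{(\delta)}$ throughout and $\vt=1+\Id\Gamma(\omega)$. The only genuine subtlety — and the step I expect to be the main obstacle — is verifying that the a priori finiteness hypotheses of Lemma~\ref{cor-spin1} (the integrability of $(\int_0^{t_0}R_p(s)\Id s)^{p/2}$ and, implicitly in Step~4 and Step~6 of that proof, of the left-hand sides of \eqref{diane0}–\eqref{diane1}) are met for the $\eta_{n,m}$; but this is immediate here because $R_p\equiv0$ and because \eqref{maria0} provides the $\ve$-dependent but finite bound $\sup_{s\le t}\|\Theta_{\ve,s}\WW{s}{0}[\V{B}^{\V{q}}]\eta_{n,m}\|\le c_{\alpha,\ve}e^{\|\mho\|_\infty t}\|\eta_m\|\in L^p(\PP)$, exactly as exploited in the proof of Lemma~\ref{lem-spin4}.
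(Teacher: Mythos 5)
Your main line — apply \eqref{diane1} to $\psi_s=\WW{s}{0}[\V{B}^{\V{q}}]\eta_{n,m}$ with $V=0$, $(\rho,\V{R})=(0,\V{0})$, absorb the term containing $\|\V{G}_{\V{B}_s^{\V{q}}}\|_\circ^2\|\Theta_{\ve,s}\psi_s\|^2$ via \eqref{diane0} (equivalently \eqref{pol-w-eps}), and then strip off the regularizations — is exactly the paper's strategy, and the choice of $\mu$ and the resulting form of $b_{p,\mu}$ are handled correctly. The gap is in the limiting argument, i.e.\ in the step you dismiss as routine. To remove the regularization $n$ of the initial condition (and likewise $m$ in your order of limits) you invoke dominated convergence with the majorant coming from \eqref{bd-WW}, namely $\|\Theta_{\ve,s}\WW{s}{0}[\V{B}^{\V{q}}]\eta_{n,m}\|\le c_\ve e^{\|\mho\|_\infty t}\|\eta\|$. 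But the integrand on the left-hand side of \eqref{diane1} is $\|\Id\Gamma(\omega)^\eh\Theta_{\ve,s}^{(\alpha)}\WW{s}{0}[\V{B}^{\V{q}}]\eta_{n,m}\|^2$: the factor $\Id\Gamma(\omega)^\eh$ is unbounded even for $\ve>0$, so this majorant does not dominate the integrand, and dominated convergence does not apply as stated. Worse, pointwise convergence of the integrand as $n\to\infty$ is itself not clear, since one does not know beforehand that $\WW{s}{0}[\V{B}^{\V{q}}]\eta_m$ (or $\eta$) lands in $\dom(\Id\Gamma(\omega)^\eh\Theta_{\ve,s}^{(\alpha)})$ — that domain statement is part of the conclusion to be proved. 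The same defect appears in your $\alpha<0$ (resp.\ $\delta<0$) branch, where the bound $\Theta_{\ve,s}^{(\alpha)}\le2^{|\alpha|}$ again controls only the weight, not its composition with $\Id\Gamma(\omega)^\eh$.

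The paper closes exactly this gap by inserting an additional bounded regularization: one first proves the estimate for $\EE[(\int_0^t\|\Id\Gamma(\omega)^\eh(1+\ell^{-1}\Id\Gamma(\omega))^\mh\Theta_{\ve,s}^{(\alpha)}\WW{s}{0}[\V{B}^{\V{q}}]\eta_{n,m}\|^2\Id s)^{\nf{p}{2}}]$. Since $\Id\Gamma(\omega)^\eh(1+\ell^{-1}\Id\Gamma(\omega))^\mh\Theta_{\ve,s}^{(\alpha)}$ is bounded for fixed $\ell$ and $\ve>0$, the integrand is dominated by $c_{\alpha,\ve,\ell}e^{\|\mho\|_\infty t_0}\|\eta_m\|\in L^p(\PP)$, so dominated convergence legitimately removes $n$; the limits $\ve\downarrow0$ and $m\to\infty$ are then taken as in Lem.~\ref{lem-spin4}, distinguishing the signs of $\alpha$ (resp.\ $\delta$); and only at the very end is $\ell\to\infty$ performed by monotone convergence, which simultaneously yields the bound \eqref{cleo1} and the asserted $\Id t\otimes\PP$-a.e.\ domain inclusion (finiteness of the supremum over $\ell$ forces $\Theta_{0,s}^{(\alpha)}\WW{s}{0}[\V{B}^{\V{q}}]\eta\in\dom(\Id\Gamma(\omega)^\eh)$ a.e.). Your proposal would need this $\ell$-truncation, or an equivalent lower-semicontinuity/Fatou argument exploiting that all weights are multiplication operators, to be complete; as written, the convergence step fails.
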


\begin{proof}
(1): Let $\eta_{n,m}$ and $\eta_m$
be defined as in the proof of Lem.~\ref{lem-spin4}. Then we can apply
\eqref{diane1} with $\psi_s=\WW{s}{0}[\V{B}^{\V{q}}]\eta_{n,m}$, $(\rho,\V{R})=0$, 
and \eqref{pol-w-eps} afterwards. In this way we easily arrive at
\begin{align}\nonumber
\EE\Big[&\Big(\int_0^t\big\|\Id\Gamma(\omega)^\eh\Theta_{\ve,s}^{(\alpha)}
\WW{s}{0}[\V{B}^{\V{q}}]\eta_{n,m}\big\|^2\Id s\Big)^{\nf{p}{2}}\Big]
\\\label{cleo2}
&\le c_pe^{c(p\|\V{c}\|_{\circ,\infty}^2+\|\mho\|_\infty+1)pt}
\big(1+t^{\nf{p}{2}}\|\V{G}\|_{\circ,\infty}^p\big)\EE\big[\|\Theta_{\ve,0}^{(\alpha)}\eta_{n,m}\|^p\big],
\end{align}
for all $t\in[0,t_0]$, $m,n,\in\NN$, and $\ve\in(0,1]$. 
To remove the regularization parameter $n$, we start by considering the expressions
\begin{align}\nonumber
\EE\Big[\Big(\int_0^t\big\|\Id\Gamma(\omega)^\eh(1+\ell^{-1}\Id\Gamma(\omega))^\mh
\Theta_{\ve,s}^{(\alpha)}\WW{s}{0}[\V{B}^{\V{q}}]&\eta_{n,m}\big\|^2\Id s\Big)^{\nf{p}{2}}\Big]
\\\label{cleo3}
&\le c_{p,t}\EE\big[\|\Theta_{\ve,0}^{(\alpha)}\eta_m\|^p\big],
\end{align}
where $c_{p,t}$ denotes the constant on the right hand side of \eqref{cleo2}.
Because of the bounds
$$
\big\|\Id\Gamma(\omega)^\eh(1+\ell^{-1}\Id\Gamma(\omega))^\mh
\Theta_{\ve,s}^{(\alpha)}\WW{s}{0}[\V{B}^{\V{q}}]\eta_{n}\big\|
\le c_{\alpha,\ve,\ell} e^{\|\mho\|_\infty t_0}\|\eta_m\|\in L^p(\PP),
$$
that hold pointwise on $[0,t_0]\times\Omega$, for $\ve\in(0,1]$ and $m,n\in\NN$, 
we may apply the dominated convergence
theorem to obtain \eqref{cleo3} with $\eta_{n,m}$ replaced by $\eta_m$ on its left hand side.
After that we can pass to the limits $\ve\downarrow0$ and $m\to\infty$ as in the proof of
Lem.~\ref{lem-spin4} distinguishing the cases $\alpha>0$ and $\alpha<0$. This yields
\eqref{cleo3} with $\eta_{n,m}$ and $\eta_m$ replaced by $\eta$ and for $\ve=0$.
Finally, we let $\ell$ go to infinity with the help of the monotone convergence theorem.

The proof of (2) is analogous.
\end{proof}

%%%%%%%%%%%%%%%%%%%%%%%%%%%%%%%%%%%%%%%%%%%%%%%

\subsection{An elementary algebraic identity}\label{ssec-i1i5}

\noindent
Before studying the behavior of $\WW{}{V}[\V{B}^{\V{q}}]$ under changes of $\V{q}$ and
perturbations of the coefficient vector, we derive the following algebraic lemma which is needed
in the proof of Lem.~\ref{lem-diff1} below.

\begin{lem}\label{lem-i1i5}
Let $\sK$ be a Hilbert space, $\V{v}=(v_1,\ldots,v_\nu)$ and 
$\tilde{\V{v}}=(\tilde{v}_1,\ldots,\tilde{v}_\nu)$ be tuples of self-adjoint operators acting in $\sK$,
and $\Theta=\Theta^*\in\LO(\sK)$ be such that $\Theta\dom(w)\subset\dom(w)$ and
$\Theta\dom(w^2)\subset\dom(w^2)$, if the operator
$w$ is any of the components of $\V{v}$ or $\tilde{\V{v}}$. Suppose that the vectors
$\eta$ and $\tilde{\eta}$ are elements of the domain
$\cC:=\bigcap_{j=1}^\nu\dom(v_j^2)\cap\dom(v_j\tilde{v}_j)
\cap\dom(\tilde{v}_jv_j)\cap\dom(\tilde{v}_j^2)$.  Then the following identity holds,
\begin{align}\nonumber
A&:=
-\Re\SPn{\eta-\tilde{\eta}}{\Theta^2\V{v}^2\eta-\Theta^2\tilde{\V{v}}^2\tilde\eta}
+\|\Theta\V{v}\eta-\Theta\tilde{\V{v}}\tilde\eta\|^2
\\\nonumber
&=\|\Theta(\V{v}-\tilde{\V{v}})\tilde\eta\|^2
+\tfrac{1}{2}\SPb{\eta-\tilde\eta}{\big[[\V{v},\Theta^2],\V{v}\big](\eta-\tilde\eta)}
\\\nonumber
&\quad-2\Re\SPb{\{\Theta[\V{v},\Theta]+[\V{v},\Theta]\Theta\}
(\eta-\tilde\eta)}{(\V{v}-\tilde{\V{v}})\tilde\eta}
\\\nonumber
&\quad+\Re\SPb{(\V{v}-\tilde{\V{v}})\Theta(\eta-\tilde\eta)}{\Theta(\V{v}-\tilde{\V{v}})\tilde\eta}
\\\nonumber
&\quad+\Re\SPb{[\V{v}-\tilde{\V{v}},\Theta]\Theta(\eta-\tilde\eta)}{(\V{v}-\tilde{\V{v}})\tilde\eta}
\\\label{i1i5}
&\quad-\Re\SPn{\Theta(\eta-\tilde\eta)}{\Theta[\V{v},\tilde{\V{v}}]\tilde\eta}.
\end{align}
\end{lem}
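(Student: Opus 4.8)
The identity \eqref{i1i5} is purely algebraic, and the plan is to establish it by a direct computation. Every term on both sides is a sum over $j=1,\dots,\nu$ of an expression built solely from the single pair $(v_j,\tilde v_j)$ and the bounded operator $\Theta$ --- for instance $[\V{v},\Theta^2]=([v_j,\Theta^2])_j$, so that $[[\V{v},\Theta^2],\V{v}]=\sum_j[[v_j,\Theta^2],v_j]$, and $[\V{v},\tilde{\V{v}}]=\sum_j[v_j,\tilde v_j]$ --- hence it suffices to prove the scalar version, i.e.\ to replace the tuples by a single pair $v,\tilde v$ of self-adjoint operators. Throughout I would write $\delta:=\eta-\tilde\eta$ and $u:=v-\tilde v$, so that $\eta=\tilde\eta+\delta$ and $v=\tilde v+u$. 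Expanding $v^2\eta-\tilde v^2\tilde\eta=v^2\delta+(vu+u\tilde v)\tilde\eta$ and $\|\Theta v\eta-\Theta\tilde v\tilde\eta\|^2=\|\Theta v\delta+\Theta u\tilde\eta\|^2$ in the definition of $A$ splits it as
\[
A=\Big(-\Re\SPn{\delta}{\Theta^2v^2\delta}+\|\Theta v\delta\|^2\Big)+\|\Theta u\tilde\eta\|^2+X,\qquad X:=-\Re\SPn{\delta}{\Theta^2(vu+u\tilde v)\tilde\eta}+2\Re\SPn{\Theta v\delta}{\Theta u\tilde\eta},
\]
and the middle term $\|\Theta u\tilde\eta\|^2$ already matches the first term on the right-hand side of \eqref{i1i5}.

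For the $\delta$-part I would use $\|\Theta v\delta\|^2=\SPn{\delta}{v\Theta^2v\delta}$ (self-adjointness of $v$ and $\Theta$, legitimate since $v\delta\in\dom(v)$ and $\Theta^2$ preserves $\dom(v)$) together with the elementary operator identity $2v\Theta^2v-\Theta^2v^2-v^2\Theta^2=[[v,\Theta^2],v]$, which turns the bracket into $\tfrac12\SPn{\delta}{[[v,\Theta^2],v]\delta}$, the second term on the right-hand side. The cross terms $X$ are the real work. The recurring tools are: the commutator identity $[v,\Theta^2]=\Theta[v,\Theta]+[v,\Theta]\Theta$ (and its analogue with $\tilde v$, and the variant $[u,\Theta^2]=\Theta[u,\Theta]+[u,\Theta]\Theta$); the anti-self-adjointness of $[v,\Theta]$, $[v,\Theta^2]$, $[\tilde v,\Theta^2]$ on the relevant domains, which lets one move such a factor across an inner product at the cost of a sign inside $\Re$; and the two linear identities $[\tilde v,u]=-[v,\tilde v]$ and $[\tilde v,\Theta^2]+[u,\Theta^2]=[v,\Theta^2]$. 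Concretely, I would rewrite $2\Re\SPn{\Theta v\delta}{\Theta u\tilde\eta}=2\Re\SPn{v\delta}{\Theta^2u\tilde\eta}$, expand $\Theta^2vu\tilde\eta$ via the commutator identity to peel off a term $-\Re\SPn{\{\Theta[v,\Theta]+[v,\Theta]\Theta\}\delta}{u\tilde\eta}$, then substitute $v=\tilde v+u$ in the remaining $\SPn{v\delta}{\Theta^2u\tilde\eta}$-type term and commute one $\Theta$ past $u$ (producing $\Re\SPn{u\Theta\delta}{\Theta u\tilde\eta}$ and, after using $[u,\Theta]\Theta+\Theta[u,\Theta]=[u,\Theta^2]$, the term $\Re\SPn{[u,\Theta]\Theta\delta}{u\tilde\eta}$), and finally rewrite $\tilde v\Theta^2u-\Theta^2u\tilde v=-\Theta^2[v,\tilde v]+[\tilde v,\Theta^2]u$ to extract $-\Re\SPn{\Theta\delta}{\Theta[v,\tilde v]\tilde\eta}$. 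Collecting the leftover commutator pieces and applying $[\tilde v,\Theta^2]+[u,\Theta^2]=[v,\Theta^2]=\Theta[v,\Theta]+[v,\Theta]\Theta$ combines them into a second copy of $-\Re\SPn{\{\Theta[v,\Theta]+[v,\Theta]\Theta\}\delta}{u\tilde\eta}$, so that $X$ equals precisely the last four terms on the right-hand side of \eqref{i1i5}.

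The step I expect to be the main obstacle is not any single algebraic identity but the cumulative domain bookkeeping: each of the roughly ten times an unbounded operator $v$, $\tilde v$, or $u=v-\tilde v$ is shifted from one side of an inner product to the other, one must check that the vector on the other side lies in its domain. This is exactly what the choice $\cC=\bigcap_j\dom(v_j^2)\cap\dom(v_j\tilde v_j)\cap\dom(\tilde v_jv_j)\cap\dom(\tilde v_j^2)$ is designed to guarantee --- e.g.\ $u\tilde\eta=v\tilde\eta-\tilde v\tilde\eta$ lies in $\dom(v)\cap\dom(\tilde v)$ because $\tilde\eta$ lies in all four mixed domains --- while the hypotheses $\Theta\dom(w)\subset\dom(w)$ and $\Theta\dom(w^2)\subset\dom(w^2)$, for $w$ any component of $\V{v}$ or $\tilde{\V{v}}$, ensure that inserting $\Theta$ or $\Theta^2$ never leaves the relevant domain. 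Once each manipulation is justified in this way, the terms assemble into \eqref{i1i5}.
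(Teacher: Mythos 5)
Your proposal is correct and is essentially the paper's own argument: a direct algebraic expansion in which the key tools are the Leibniz rule $[\Theta^2,a]=\Theta[\Theta,a]+[\Theta,a]\Theta$, the skew-symmetry of $\Theta[v_j,\Theta]+[v_j,\Theta]\Theta$, the relation $[\V{v}-\tilde{\V{v}},\tilde{\V{v}}]=[\V{v},\tilde{\V{v}}]$, and the domain hypotheses to justify moving the unbounded factors across inner products (and your cross-term bookkeeping does assemble exactly into the last four terms of \eqref{i1i5}). The only differences are organizational — you reduce to a single component and substitute $\eta=\tilde\eta+\delta$, $v=\tilde v+u$ at the outset, whereas the paper first writes $\Theta^2\V{v}^2=\V{v}\Theta^2\V{v}+[\Theta^2,\V{v}]\V{v}$, exploits the cancellation with the squared norm, and bundles the cross terms into an auxiliary operator $B$ — but these are cosmetic.
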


\begin{proof}
First we write 
$\Theta^2\V{v}^2=\V{v}\Theta^2\V{v}+[\Theta^2,\V{v}]\V{v}$ and analogously for $\tilde{\V{v}}$
and take the cancelations between the two terms in the first line
of \eqref{i1i5} into account to get
\begin{align}\nonumber
A&=\Re\SPb{\eta}{\V{v}\Theta^2(\V{v}-\tilde{\V{v}})\tilde\eta}
-\Re\SPb{\eta}{(\V{v}-\tilde{\V{v}})\Theta^2\tilde{\V{v}}\tilde\eta}
\\\label{isa1}
&\quad
-\Re\SPb{\eta-\tilde\eta}{[\Theta^2,\V{v}]\V{v}\eta
-[\Theta^2,\tilde{\V{v}}]\tilde{\V{v}}\tilde\eta}.
\end{align}
In both terms appearing on the right hand side of the first line in \eqref{isa1} and in the
right entry of scalar product in the second line of \eqref{isa1} we now write
$\eta=(\eta-\tilde\eta)+\tilde\eta$ and apply distributive laws to get
\begin{align}\nonumber
A&=\Re\SPb{\eta-\tilde\eta}{\V{v}\Theta^2(\V{v}-\tilde{\V{v}})\tilde\eta}
-\Re\SPb{\eta-\tilde\eta}{(\V{v}-\tilde{\V{v}})\Theta^2\tilde{\V{v}}\tilde\eta}
\\\nonumber
&\quad+\Re\SPb{\tilde\eta}{\V{v}\Theta^2(\V{v}-\tilde{\V{v}})\tilde\eta}
-\Re\SPb{\tilde\eta}{(\V{v}-\tilde{\V{v}})\Theta^2\tilde{\V{v}}\tilde\eta}
\\\nonumber
&\quad-\Re\SPb{\eta-\tilde\eta}{[\Theta^2,\V{v}]\V{v}(\eta-\tilde\eta)}
\\\label{isa2}
&\quad-\Re\SPb{\eta-\tilde\eta}{\big([\Theta^2,\V{v}]\V{v}
-[\Theta^2,\tilde{\V{v}}]\tilde{\V{v}}\big)\tilde\eta}.
\end{align}
Using $\Re\SPn{\phi}{\psi}=\Re\SPn{\psi}{\phi}$, we see that the term in the second
line of \eqref{isa2} is equal to the first term in the second line of \eqref{i1i5}.
Computing the real part we further see that the term in the third
line of \eqref{isa2} is equal to the second term in the second line of \eqref{i1i5}.
(Here we use $\Theta^2\dom(v_j^2)\subset\dom(v_j^2)$.)
The sum of the first and fourth line of \eqref{isa2} is equal to
$\Re\SPn{\eta-\tilde\eta}{B\tilde\eta}$, where $B$ is defined on the domain $\cC$ by
\begin{align}\nonumber
B&:=\V{v}\Theta^2(\V{v}-\tilde{\V{v}})-(\V{v}-\tilde{\V{v}})\Theta^2\tilde{\V{v}}
\\\label{isa3}
&\quad-[\Theta^2,\V{v}](\V{v}-\tilde{\V{v}})
-[\Theta^2,\V{v}-\tilde{\V{v}}]\tilde{\V{v}}.
\end{align}
Here we added and subtracted $[\Theta^2,\V{v}]\tilde{\V{v}}$ in the second line.
In the first line of \eqref{isa3} we now commute one factor $\Theta$ to the left,
and in the second line we use the Leibnitz rule $[\Theta^2,a]=\Theta[\Theta,a]+[\Theta,a]\Theta$.
This results in
\begin{align}\nonumber
B=&\;\Theta\big\{\V{v}\Theta(\V{v}-\tilde{\V{v}})
-(\V{v}-\tilde{\V{v}})\Theta\tilde{\V{v}}
-[\Theta,\V{v}](\V{v}-\tilde{\V{v}})
-[\Theta,\V{v}-\tilde{\V{v}}]\tilde{\V{v}}\big\}
\\\nonumber
&\quad+[\V{v},\Theta]\Theta(\V{v}-\tilde{\V{v}})
-[\V{v}-\tilde{\V{v}},\Theta]\Theta\tilde{\V{v}}
\\\nonumber
&\quad-[\Theta,\V{v}]\Theta(\V{v}-\tilde{\V{v}})
-[\Theta,\V{v}-\tilde{\V{v}}]\Theta\tilde{\V{v}}
\\\nonumber
=&\;
\Theta\big\{2[\V{v},\Theta]+\Theta(\V{v}-\tilde{\V{v}})\big\}(\V{v}-\tilde{\V{v}})
-\Theta^2[\V{v},\tilde{\V{v}}]
\\\nonumber%\label{isa4}
&\quad+2[\V{v},\Theta]\Theta(\V{v}-\tilde{\V{v}})\quad\text{on $\cC$.}
\end{align}
Since $\Theta[{v}_j,\Theta]+[{v}_j,\Theta]\Theta$ is skew symmetric on $\dom(v_j)$,
for every $j\in\{1,\ldots,\nu\}$, this yields the desired identity.
\end{proof}

%%%%%%%%%%%%%%%%%%%%%%%%%%%%%%%%%%%%%%%%%%%%%%%%

\subsection{Continuity of the stochastic flow in weighted spaces}\label{ssec-flow-cont}

\noindent
In this subsection we prove the announced Lem.~\ref{lem-diff1} on the behavior of the
stochastic flow under perturbations of the initial condition $\V{q}$ and the coefficient vector.
We shall again employ Notation~\ref{not-Theta}, \ref{not-T}, and~\ref{not-normcirc}.
The following abbreviations will be useful as well:

\begin{notation}\label{not-diff}
Given two coeffcient vectors $\V{c}=(\V{G},q,\vsigma\cdot\V{F})$ and 
$\tilde{\V{c}}=(\wt{\V{G}},\tilde{q},\vsigma\cdot\wt{\V{F}})$
satisfying Hyp.~\ref{hyp-G} with the same $\omega$ and the same conjugation $C$
and two $\fF_0$-measurable initial conditions $\V{q},\tilde{\V{q}}:\Omega\to\RR^\nu$, we set
\begin{align}\label{def-vecTxy}
\V{D}^{\pm}_{s}&:=[{\vp(\V{G}_{\V{B}^{\V{q}}_s}-\wt{\V{G}}_{\V{B}_s^{\tilde{\V{q}}}})},
\Theta_{s}^{\pm1}]\Theta_{s}^{\mp1}\vt^\mh,
\\\label{def-T2xy}
D_s&:=\big[\tfrac{i}{2}\vp(q_{\V{B}_s^{\V{q}}}-\tilde{q}_{\V{B}_s^{\tilde{\V{q}}}})
+\vsigma\cdot\vp(\V{F}_{\V{B}_s^{\V{q}}}-\wt{\V{F}}_{\V{B}_s^{\tilde{\V{q}}}}),\Theta_s\big]
\Theta_s^{-1}\vt^\mh,
\end{align}
for all $s\in[0,t_0]$. For any vector $\V{v}$ with components in $\HP$ we further write
\begin{align*}
\|\V{v}\|_*&:=\|\V{v}\|_{\mathfrak{k}}+\|\V{v}\|_\circ.
\end{align*}
Finally, we abbreviate
\begin{equation}\label{short:theta}
\theta:=1+\Id\Gamma(\omega).
\end{equation}
\end{notation}

Again the operators in \eqref{def-vecTxy} and \eqref{def-T2xy} are well-defined a priori on 
$\dom(\Id\Gamma(\omega))$ and have bounded 
closures under suitable extra conditions on $\V{c}$. 
For we have the following analogue of Lem.~\ref{lem-bd-T}: 

\begin{lem}\label{lem-bd-Txy}
Let $s\in[0,t_0]$ and $\ve\in(0,1]$. Assume in addition that $\ve\le|\delta|$, if the
operators in \eqref{def-vecTxy} and \eqref{def-T2xy} are defined
by means of the exponential weights $\Theta_s={\Xi_{\ve,s}^{(\delta)}}$ and 
$\vt=1+\Id\Gamma(\omega)$. Then the following bounds hold on $\Omega$,
\begin{align}\label{tina-diff}
\|\V{D}^\pm_s\|&\le \|\V{G}_{\V{B}_s^{\V{q}}}-\wt{\V{G}}_{\V{B}_s^{\tilde{\V{q}}}}\|_\circ,
\quad
\|D_{s}\|\le \|(q,\vsigma\cdot\V{F})_{\V{B}_s^{\V{q}}}
-(\tilde{q},\vsigma\cdot\wt{\V{F}})_{\V{B}_s^{\tilde{\V{q}}}}\|_\circ.
\end{align}
\end{lem}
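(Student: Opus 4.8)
The plan is to notice that the operators $\V{D}^\pm_s$ and $D_s$ are built from exactly the same single commutators of field operators against the weights $\Theta_s$, $\Theta_s^{\pm1}$ — followed by $\Theta_s^{\mp1}\vt^\mh$, respectively $\Theta_s^{-1}\vt^\mh$ — as the operators $\V{T}^\pm_s$ and the two terms of $T_{2,s}$ from Notation~\ref{not-T}, the only difference being that the coupling functions $\V{G}_{\V{B}^{\V{q}}_s}$, $q_{\V{B}^{\V{q}}_s}$, $\vsigma\cdot\V{F}_{\V{B}^{\V{q}}_s}$ are replaced by the differences $\V{G}_{\V{B}^{\V{q}}_s}-\wt{\V{G}}_{\V{B}^{\tilde{\V{q}}}_s}$, $q_{\V{B}^{\V{q}}_s}-\tilde{q}_{\V{B}^{\tilde{\V{q}}}_s}$, and $\vsigma\cdot\V{F}_{\V{B}^{\V{q}}_s}-\vsigma\cdot\wt{\V{F}}_{\V{B}^{\tilde{\V{q}}}_s}$. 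Since $[\,\cdot\,,\Theta_s^{\pm1}]\Theta_s^{\mp1}\vt^\mh$ is linear in its first slot, and since — by the hypotheses recorded in Notation~\ref{not-diff} — the difference coefficient vector $\V{c}-\tilde{\V{c}}$ again satisfies the structural requirements of Hyp.~\ref{hyp-G} with the same $\omega$ and conjugation $C$, the bounds \eqref{tina-diff} come straight out of the commutator estimates of App.~\ref{app-comm}, namely Ex.~\ref{ex-comm-omega} and Ex.~\ref{ex-comm-exp}, which already underlie Lem.~\ref{lem-bd-T}.

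In detail, first I would unfold \eqref{def-vecTxy} and \eqref{def-T2xy} using $[\vp(f),\Theta_s^{\pm1}]=-[\Theta_s^{\pm1},\vp(f)]$: this exhibits $\V{D}^\pm_s$ as $-\tfrac12$ times $\V{T}^\pm_s$ with $\V{G}_{\V{B}^{\V{q}}_s}$ replaced by $\V{G}_{\V{B}^{\V{q}}_s}-\wt{\V{G}}_{\V{B}^{\tilde{\V{q}}}_s}$, the $q$-contribution to $D_s$ as $-1$ times the first (single-commutator) term of $T_{2,s}$ with $q_{\V{B}^{\V{q}}_s}$ replaced by $q_{\V{B}^{\V{q}}_s}-\tilde{q}_{\V{B}^{\tilde{\V{q}}}_s}$, and the $\vsigma\cdot\V{F}$-contribution to $D_s$ as $-1$ times the single-commutator expression $[\Theta_s,\vsigma\cdot\vp(\V{F}_{\V{B}^{\V{q}}_s}-\wt{\V{F}}_{\V{B}^{\tilde{\V{q}}}_s})]\Theta_s^{-1}\vt^\mh$, which is structurally simpler than the double commutator in the second term of $T_{2,s}$ and, up to the spin matrices, of the same type as $\tfrac12\V{T}^+_s$. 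Then I would invoke Ex.~\ref{ex-comm-omega} — for $\Theta_s=\Theta_{\ve,s}^{(\alpha)}$, $|\alpha|\ge1/2$, $\vt=1$, $\ve\in(0,1]$ — and Ex.~\ref{ex-comm-exp} — for $\Theta_s=\Xi_{\ve,s}^{(\delta)}$, $0<|\delta|\le1$, $\vt=1+\Id\Gamma(\omega)$, $0<\ve\le|\delta|$ — which bound each such single commutator in operator norm by the $\|\cdot\|_\circ$-norm of the inserted coupling function ($\|\cdot\|_\circ$ being the norm of \eqref{def-normcirc}, into which the constants $c_\alpha$, $c$ are absorbed). Applying these with the field-operator arguments ranging over the components of the difference vectors, and then collecting terms exactly as in the passage from the commutator bounds to \eqref{tina1}--\eqref{tina2}, one obtains \eqref{tina-diff} on $\dom(\Id\Gamma(\omega))$; and whenever the right-hand sides are finite — which is guaranteed by the standing finiteness conditions on the coefficient vectors — the same inequalities furnish the bounded extensions to $\FHR$ announced just before the lemma.

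I do not expect a genuine obstacle at this stage: the delicate work — multiple-commutator bounds of creation and annihilation operators against functions of $\Id\Gamma(\omega)$ with right-hand sides sharp enough to reproduce $\|\cdot\|_\circ$ — has been isolated in App.~\ref{app-comm} and already exploited for Lem.~\ref{lem-bd-T}, and the difference coupling functions obey the same structural hypotheses as the original ones. The only points requiring care are bookkeeping: keeping the operator orderings straight (that $\Theta_s^{-1}\vt^\mh$ sits to the right, that the relevant building blocks are single and not double commutators, that no conjugation-invariance of $q$ or $\V{F}$ is needed for these single-commutator bounds), and checking that the $q$- and $\V{F}$-contributions to $D_s$ recombine into $\|(q,\vsigma\cdot\V{F})_{\V{B}^{\V{q}}_s}-(\tilde{q},\vsigma\cdot\wt{\V{F}})_{\V{B}^{\tilde{\V{q}}}_s}\|_\circ$ precisely as in the derivation of \eqref{tina1}--\eqref{tina2}.
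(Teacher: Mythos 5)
Your proposal is correct and follows essentially the same route as the paper: the paper's proof consists of exactly the observation that the single-commutator estimates of Ex.~\ref{ex-comm-omega} and Ex.~\ref{ex-comm-exp} apply verbatim once $\V{G}_{\V{B}_s^{\V{q}}}$, $q_{\V{B}_s^{\V{q}}}$, $\vsigma\cdot\V{F}_{\V{B}_s^{\V{q}}}$ are replaced by the corresponding differences, which is precisely your argument (your additional bookkeeping about signs, the factor $\tfrac12$, and the single- versus double-commutator structure is accurate but not needed beyond what the paper states).
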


\begin{proof}
Ex.~\ref{ex-comm-omega} and Ex.~\ref{ex-comm-exp} immediately imply the asserted bounds,
if we replace $\V{G}_{\V{B}_s^{\V{q}}}$ by 
$\V{G}_{\V{B}_s^{\V{q}}}-\wt{\V{G}}_{\V{B}_s^{\tilde{\V{q}}}}$, etc., in the corresponding arguments.
\end{proof}

\begin{lem}\label{lem-diff1}
Let $p\in\NN$ and $V,\wt{V}:\RR^\nu\to\RR$ be bounded and continuous.
Let $\V{c}=(\V{G},q,\vsigma\cdot\V{F})$ and 
$\tilde{\V{c}}=(\wt{\V{G}},\tilde{q},\vsigma\cdot\wt{\V{F}})$ be two coefficient 
vectors satisfying Hyp.~\ref{hyp-G} with the same $\omega$ and the same conjugation $C$
such that $\|\V{c}_{\V{x}}\|_\circ,\|\tilde{\V{c}}_{\V{x}}\|_\circ<\infty$, for every $\V{x}\in\RR^\nu$.
Let $(\rho_t,\V{R}_t)_{t\in[0,t_0]}$ and $(\tilde{\rho}_t,\wt{\V{R}}_t)_{t\in[0,t_0]}$
be adapted $\LO(\RR^{1+\nu},\FHR)$-valued processes with 
continuous paths such that $\int_0^{t_0}\|(\theta^\mh\rho_s,\V{R}_s)\|^2\Id s$ and
$\int_0^{t_0}\|(\theta^\mh\tilde{\rho}_s,\wt{\V{R}}_s)\|^2\Id s$ belong to $L^{\nf{p}{2}}(\PP)$.
Define a family of closed operators 
$$
\wt{H}(\V{x}):=\tfrac{1}{2}\vp(\wt{\V{G}}_{\V{x}})^2-\tfrac{i}{2}\vp(\tilde{q}_{\V{x}})
-\vsigma\cdot\vp(\wt{\V{F}}_{\V{x}})+\Id\Gamma(\omega),\quad\V{x}\in\RR^\nu,
$$
on the constant domain $\dom(\Id\Gamma(\omega))$. 
Assume in addition that $f_{2p,\mu}(s)\le1/8$, for all $s\in[0,t_0]$, where $f_{2p,\mu}$ is
defined by \eqref{def-fpmu}.
Let $(\V{q},\tilde{\V{q}},\eta):\Omega\to\RR^\nu\times\RR^\nu\times\dom(\Id\Gamma(\omega))$ 
be $\fF_0$-measurable and
$(\psi_t)_{t\in[0,t_0]}$ and $(\tilde{\psi}_t)_{t\in[0,t_0]}$ be $\FHR$-valued semi-martingales
whose paths belong $\PP$-a.s. to $C([0,t_0],\dom(\Id\Gamma(\omega)))$ and which 
$\PP$-a.s. satisfy
\begin{align*}
\psi_t&=\eta-\int_0^t\big(\wh{H}(\V{B}_s^{\V{q}})+V(\V{B}_s^{\V{q}})\big)\psi_s\Id s
+\int_0^ti\vp(\V{G}_{\V{B}_s^{\V{q}}})\psi_s\Id\V{B}_s
\\
&\quad+\int_0^t\rho_s\Id s+\int_0^t\V{R}_s\Id\V{B}_s,
\\
\tilde{\psi}_t&=\eta-\int_0^t\big(\wt{H}(\V{B}_s^{\tilde{\V{q}}})
+\wt{V}(\V{B}_s^{\tilde{\V{q}}})\big)\tilde{\psi}_s\Id s
+\int_0^ti\vp(\wt{\V{G}}_{\V{B}_s^{\tilde{\V{q}}}})\tilde{\psi}_s\Id\V{B}_s
\\
&\quad+\int_0^t\tilde{\rho}_s\Id s+\int_0^t\wt{\V{R}}_s\Id\V{B}_s,
\end{align*}
for all $t\in[0,t_0]$. Abbreviate
\begin{align}\nonumber
\phi_s&:=\Theta_{s}(\psi_s-\tilde{\psi}_s),
\\\nonumber
d_{p}(s)&:=|V(\V{B}_s^{\V{q}})-\wt{V}(\V{B}_s^{\tilde{\V{q}}})|
+\|(q,\vsigma\cdot\V{F})_{\V{B}_s^{\V{q}}}
-(\tilde{q},\vsigma\cdot\wt{\V{F}})_{\V{B}_s^{\tilde{\V{q}}}}\|_*
\\\label{def-dps}
&\quad+\big(p+\|\V{G}_{\V{B}_s^{\V{q}}}-\wt{\V{G}}_{\V{B}_s^{\tilde{\V{q}}}}\|_{\mathfrak{k}}\big)
\|\V{G}_{\V{B}_s^{\V{q}}}-\wt{\V{G}}_{\V{B}_s^{\tilde{\V{q}}}}\|_*,
\\\nonumber
Q_p(s)&:=d_p(s)^2\|\theta^\eh\Theta_s\tilde\psi_s\|^2
\\\nonumber
&\quad+\big\|\theta^\mh\Theta_s(\rho_s-\tilde\rho_s)\big\|^2+(p+\|\V{G}_{\V{B}_s^{\V{q}}}\|_\circ^2)
\|\Theta_{s}(\V{R}_s-\wt{\V{R}}_s)\|^2,
\end{align}
for all $s\in[0,t_0]$. Then the following bounds hold, for all $t\in[0,t_0]$,
\begin{align}\nonumber
\EE\big[&\sup_{s\le t}e^{-p\int_0^tb_{2p,\mu}(s)\Id s}\|\phi_s\|^p\big]
\\\label{tom1}
&\le(ce^{t}p^\eh)^p\EE\Big[\Big(\int_0^te^{-2\int_0^sb_{2p,\mu}(r)\Id r}
Q_{p}(s)\Id s\Big)^{\nf{p}{2}}\Big],
\\\nonumber
\EE\Big[&\Big(\int_0^t\|\Id\Gamma(\omega)^\eh\phi_s\|^2\Id s\Big)^{\nf{p}{2}}\Big]
\le c_pe^{pt}\EE\Big[\Big(\int_0^te^{-2\int_0^sb_{2p,\mu}(r)\Id r}Q_{p}(s)\Id s\Big)^{\nf{p}{2}}\Big]
\\\label{tom2}
&\quad+c_pe^{pt}\EE\Big[\Big(\int_0^te^{-2\int_0^sb_{2p,\mu}(r)\Id r}
\|\V{G}_{\V{B}_s^{\V{q}}}\|_\circ^2\|\phi_s\|^2\Id s\Big)^{\nf{p}{2}}\Big].
\end{align}
Here $c>0$ is some universal constant and $c_p>0$ depends only on $p$. 
The process $b_{2p,\mu}$ is defined by \eqref{def-bpmu}.
\end{lem}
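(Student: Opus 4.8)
The plan is to subtract the two stochastic integral equations, so that $\psi_\bullet-\tilde\psi_\bullet$ becomes an $\FHR$-valued semi-martingale whose paths lie $\PP$-a.s.\ in $C([0,t_0],\dom(\Id\Gamma(\omega)))$, and then to apply Lemma~\ref{lem-Ito-Theta} to $\phi_\bullet=\Theta_\bullet(\psi_\bullet-\tilde\psi_\bullet)$, proceeding exactly as in Steps~1--2 of the proof of Lemma~\ref{cor-spin1} (the continuity of $\V{x}\mapsto\wh H(\V{x}),\wt H(\V{x}),\vp(\V{G}_{\V{x}}),\vp(\wt{\V{G}}_{\V{x}})$ into $\sB(\dom(\Id\Gamma(\omega)),\FHR)$ needed here is supplied there, now also for the tilded coefficient vector). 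Since the two initial conditions coincide, $\phi_0=0$, which is why no $\|\Theta_0\psi_0\|^p$-term occurs in \eqref{tom1}--\eqref{tom2}. Expanding $\|\phi_t\|^2$ by \eqref{laura0b} produces, besides the terms already handled in the proof of Lemma~\ref{cor-spin1}, the quadratic mismatch $-\Re\SPb{\psi_s-\tilde\psi_s}{\Theta_s^2\vp(\V{G}_{\V{B}_s^{\V{q}}})^2\psi_s-\Theta_s^2\vp(\wt{\V{G}}_{\V{B}_s^{\tilde{\V{q}}}})^2\tilde\psi_s}+\big\|\Theta_s\vp(\V{G}_{\V{B}_s^{\V{q}}})\psi_s-\Theta_s\vp(\wt{\V{G}}_{\V{B}_s^{\tilde{\V{q}}}})\tilde\psi_s\big\|^2$, and Lemma~\ref{lem-i1i5} is designed precisely to rewrite this: I would apply it with $\sK=\FHR$, $\V{v}=\vp(\V{G}_{\V{B}_s^{\V{q}}})$, $\tilde{\V{v}}=\vp(\wt{\V{G}}_{\V{B}_s^{\tilde{\V{q}}}})$, $\Theta=\Theta_s$, $\eta=\psi_s$, $\tilde\eta=\tilde\psi_s$. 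The last summand of \eqref{i1i5} carries $[\vp(\V{G}_{\V{B}_s^{\V{q}}}),\vp(\wt{\V{G}}_{\V{B}_s^{\tilde{\V{q}}}})]$, which vanishes because the common conjugation $C$ fixes all components of $\V{G}$ and $\wt{\V{G}}$, so the relevant inner products are real.

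After this reorganization one arrives at $\|\phi_t\|^2=\int_0^t\sJ(s)\Id s+\int_0^t\V{J}(s)\Id\V{B}_s$, in which $\sJ$ and $\V{J}$ split into: (i) the commutator operators $T_{1,s},T_{2,s},\V{T}^\pm_s$ of Notation~\ref{not-T}, now acting on $\vt^\eh\phi_s$, estimated by Lemma~\ref{lem-bd-T}; (ii) a good negative term $-(\tfrac12-\dots)\|\Id\Gamma(\omega)^\eh\phi_s\|^2$, produced as in the derivation of \eqref{diane99} from $\Id\Gamma(\omega)$, \eqref{flora} and \eqref{qfb-vp}; (iii) the new mismatch terms delivered by \eqref{i1i5} --- namely $\|\Theta_s\vp(\V{G}_{\V{B}_s^{\V{q}}}-\wt{\V{G}}_{\V{B}_s^{\tilde{\V{q}}}})\tilde\psi_s\|^2$ together with the $\V{D}^\pm_s$- and $D_s$-terms of Notation~\ref{not-diff} and the scalar $|V(\V{B}_s^{\V{q}})-\wt V(\V{B}_s^{\tilde{\V{q}}})|$ --- which, after commuting $\Theta_s$ past $\vp(\cdot)$ and using \eqref{flora2}, Lemma~\ref{lem-bd-Txy} and the definition \eqref{def-dps} of $d_p(s)$, are bounded by combinations of $d_p(s)\,\|\phi_s\|\,\|\theta^\eh\Theta_s\tilde\psi_s\|$ and $d_p(s)^2\|\theta^\eh\Theta_s\tilde\psi_s\|^2$; and (iv) the inhomogeneity differences $\rho_s-\tilde\rho_s$ and $\V{R}_s-\wt{\V{R}}_s$, treated just as $\rho_s,\V{R}_s$ were in Lemma~\ref{cor-spin1}. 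Young's inequality then bundles everything except the good negative term into $b_{2p,\mu}(s)\|\phi_s\|^2$ plus a constant times $Q_p(s)$; the hypothesis $f_{2p,\mu}(s)\le1/8$ serves, exactly as in Lemma~\ref{cor-spin1}, to keep the $\Id\Gamma(\omega)^\eh$-term non-positive after the It\^o correction, and the occurrence of $2p$ rather than $p$ in $b_{2p,\mu}$ and $f_{2p,\mu}$ reflects the doubling incurred when the cross-terms pairing $\phi_s$ with $\tilde\psi_s$-dependent quantities are split off and when $\V{J}(s)^2$ is estimated.

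From this point the argument copies Steps~3--6 of the proof of Lemma~\ref{cor-spin1} almost verbatim, with $\|\phi_s\|^{2p}$ in place of $\|\Theta_s\psi_s\|^{2p}$: apply It\^o's formula to $\|\phi_t\|^{2p}$; establish a priori finiteness of the left-hand sides of \eqref{tom1}--\eqref{tom2} by a crude Gronwall estimate with trivial weight, using $\phi_0=0$, the integrability hypotheses on the data, and the a priori bounds for $\tilde\psi$ furnished by Lemma~\ref{cor-spin1} applied to $\tilde\psi$; invoke the stochastic Gronwall Lemma~\ref{lem-stoch-Gronwall} with $\gamma=1/2$, $\delta=1-1/p$, $Z_s=\|\phi_s\|^{2p}$, $b_s=2p(b_{2p,\mu}(s)+1)$ and $R_s$ a multiple of $Q_p(s)$ to obtain \eqref{tom1}; and finally combine an integration by parts, the Burkholder--Davis--Gundy inequality \eqref{Lpbd} and Cauchy--Schwarz --- using \eqref{tom1} to control $\EE[\sup_{s\le t}\{e^{-p\int_0^sb_{2p,\mu}(r)\Id r}\|\phi_s\|^p\}]$ --- to derive \eqref{tom2}, mirroring how \eqref{diane1} is obtained in Step~6.

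I expect the main obstacle to be the bookkeeping of point (iii): matching the six lines on the right-hand side of \eqref{i1i5} against the operators of Notation~\ref{not-T} and~\ref{not-diff}, and checking that every resulting term can be absorbed either into $-\|\Id\Gamma(\omega)^\eh\phi_s\|^2$, into $b_{2p,\mu}(s)\|\phi_s\|^2$, or into $Q_p(s)$ --- in particular, that the terms coupling $\phi_s$ to $\tilde\psi_s$ only ever call for the form-level bound $\|\Id\Gamma(\omega)^\eh\Theta_s\tilde\psi_s\|$ and never the stronger $\|\Id\Gamma(\omega)\Theta_s\tilde\psi_s\|$. This is exactly why one cannot merely recast $\psi-\tilde\psi$ as an inhomogeneous equation of the form \eqref{SDE-inhom} and quote Lemma~\ref{cor-spin1}: that inhomogeneity would contain $\tfrac12(\vp(\wt{\V{G}}_{\V{B}_s^{\tilde{\V{q}}}})^2-\vp(\V{G}_{\V{B}_s^{\V{q}}})^2)\tilde\psi_s$, whose control through the $R_p$-term of \eqref{def-bpmu} would require $\|\Id\Gamma(\omega)\tilde\psi_s\|$, a bound not at our disposal; handling the second-order mismatch at the level of quadratic forms via Lemma~\ref{lem-i1i5} is precisely what removes this difficulty.
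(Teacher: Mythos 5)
Your proposal is correct and follows essentially the same route as the paper's proof: subtracting the two equations, applying Lemma~\ref{lem-Ito-Theta} and It\^o's formula to $\|\phi_t\|^{2}$ and $\|\phi_t\|^{2p}$, invoking the algebraic Lemma~\ref{lem-i1i5} with $\V{v}=\vp(\V{G}_{\V{B}_s^{\V{q}}})$, $\tilde{\V{v}}=\vp(\wt{\V{G}}_{\V{B}_s^{\tilde{\V{q}}}})$ (with $[\V{v},\tilde{\V{v}}]=0$ by realness of the inner products under $C$), bounding the resulting commutators via Lemmas~\ref{lem-bd-T} and~\ref{lem-bd-Txy}, and then applying the stochastic Gronwall Lemma~\ref{lem-stoch-Gronwall} with $\gamma=1/2$, $\delta=1-1/p$ for \eqref{tom1} and the Step~6 argument of Lemma~\ref{cor-spin1} for \eqref{tom2}. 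Your closing observation about why the mismatch of the squared field operators must be handled at the quadratic-form level via Lemma~\ref{lem-i1i5}, rather than as an inhomogeneity in Lemma~\ref{cor-spin1}, correctly identifies the key point of the argument.
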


\begin{proof}
{\em Step 1.} Exactly as in the first step of the proof of Lem.~\ref{cor-spin1} we see that 
Lem.~\ref{lem-Ito-Theta} applies to both $(\psi_t)_{t\in[0,t_0]}$ and $(\tilde{\psi}_t)_{t\in[0,t_0]}$. 
As a consequence, $(\phi_t)_{t\in[0,t_0]}$ can $\PP$-a.s. be written as
\begin{align*}
\phi_t&=-\int_0^t\Theta_s\big(\tfrac{1}{2}\vp({\V{G}}_{\V{B}_s^{\V{q}}})^2-\Phi_s\big)\psi_s\Id s
-\int_0^t\Theta_s\big(\tfrac{1}{2}\vp(\wt{\V{G}}_{\V{B}_s^{\tilde{\V{q}}}})^2
-\wt{\Phi}_s\big)\tilde{\psi}_s\Id s
\\
&\quad-\int_0^t\big(\Id\Gamma(\omega)-\tfrac{\dot{\Theta}_s}{\Theta_s}\big)\phi_s\Id s
+\int_0^t\Theta_s(\rho_s-\tilde{\rho}_s)\Id s
\\
&\quad+\int_0^t\Theta_s\big(i\vp(\V{G}_{\V{B}_s^{\V{q}}})\psi_s-
i\vp(\wt{\V{G}}_{\V{B}_s^{\tilde{\V{q}}}})\tilde{\psi}_s\big)\Id\V{B}_s
+\int_0^t\Theta_s(\V{R}_s-\wt{\V{R}}_s)\Id\V{B}_s,
\end{align*}
for all $t\in[0,t_0]$, with
\begin{align*}
\Phi_s&:=\tfrac{i}{2}\vp(q_{\V{B}_s^{\V{q}}})+\vsigma\cdot\vp(\V{F}_{\V{B}_s^{\V{q}}})
-V(\V{B}_s^{\V{q}}),
\\
\wt{\Phi}_s&:=\tfrac{i}{2}\vp(\tilde{q}_{\V{B}_s^{\tilde{\V{q}}}})
+\vsigma\cdot\vp(\wt{\V{F}}_{\V{B}_s^{\tilde{\V{q}}}})-\wt{V}(\V{B}_s^{\tilde{\V{q}}}).
\end{align*}
Now It\={o}'s formula $\PP$-a.s. implies
\begin{align}\label{clara00}
\|\phi_t\|^{2}&=\sum_{\ell=1}^{7}\int_0^t\sI_\ell(s)\Id s
+\int_0^t\V{I}(s)\Id\V{B}_s,\quad t\in[0,t_0],
\end{align}
where 
\begin{align*}
\sI_1(s)&:=-\Re\SPb{\phi_s}{\Theta_{s}\big(\vp(\V{G}_{\V{B}_s^{\V{q}}})^2\psi_s
-\vp(\wt{\V{G}}_{\V{B}_s^{\tilde{\V{q}}}})^2\tilde{\psi}_s\big)},
\\
\sI_2(s)&:=-2\Re\SPb{\phi_s}{\big(\Id\Gamma(\omega)
-\tfrac{\dot{\Theta}_{s}}{\Theta_{s}}\big)\phi_s},
\\
\sI_3(s)&:=2\Re\SPb{\phi_s}{\Theta_s\big(\Phi_s\psi_s-\wt{\Phi}_s\tilde{\psi}_s\big)},
\\
\sI_4(s)&:=\big\|\Theta_{s}\big(\vp(\V{G}_{\V{B}_s^{\V{q}}})\psi_s
-\vp(\wt{\V{G}}_{\V{B}_s^{\tilde{\V{q}}}})\tilde{\psi}_s\big)\big\|^2,
\\
\sI_5(s)&:=2\Re\SPb{\phi_s}{\Theta_{s}(\rho_s-\tilde{\rho}_s)},
\\
\sI_6(s)&:=2\Re\SPb{\Theta_{s}(\V{R}_s-\wt{\V{R}}_s)}{
i\Theta_{s}\big(\vp(\V{G}_{\V{B}_s^{\V{q}}})\psi_s-
\vp(\wt{\V{G}}_{\V{B}_s^{\tilde{\V{q}}}})\tilde{\psi}_s\big)},
\\
\sI_7(s)&:=\|\Theta_{s}(\V{R}_s-\wt{\V{R}}_s)\|^2,
\\
\V{I}(s)&:=2\Re\SPb{\phi_s}{i\Theta_{s}\big(\vp(\V{G}_{\V{B}_s^{\V{q}}})\psi_s
-\vp(\wt{\V{G}}_{\V{B}_s^{\tilde{\V{q}}}})\tilde{\psi}_s\big)+\Theta_{s}(\V{R}_s-\wt{\V{R}}_s)}.
\end{align*}
Furthermore, 
the following identity \eqref{gundula} is a direct consequence of the algebraic Lem.~\ref{lem-i1i5}, 
the commutation relation 
$$
[\vp(\V{G}_{\V{B}_s^{\V{q}}}),\vp(\wt{\V{G}}_{\V{B}_s^{\tilde{\V{q}}}})]=2\Im
\SPn{\V{G}_{\V{B}_s^{\V{q}}}}{\wt{\V{G}}_{\V{B}_s^{\tilde{\V{q}}}}}\id=0,
\quad\text{on $\dom(\Id\Gamma(\omega))$,}
$$ 
and the definitions \eqref{eva1}, \eqref{eva4}, and \eqref{def-vecTxy}, 
\begin{align}\nonumber
\sI_1(s)+\sI_4(s)&=\SPn{\vt^\eh\phi_s}{T_{1,s}\vt^\eh\phi_s}
-2\Re\SPb{(\V{T}^+_s-\V{T}^-_s)\vt^\eh\phi_s}{\V{\zeta}_{s}}
\\\label{gundula}
&\quad+\Re\SPb{\vp(\V{G}_{\V{B}_s^{\V{q}}}-\wt{\V{G}}_{\V{B}_s^{\tilde{\V{q}}}})\phi_s}{
\V{\zeta}_{s}}-\Re\SPn{\V{D}_{s}^-\vt^\eh\phi_s}{\V{\zeta}_{s}}+\|\V{\zeta}_{s}\|^2.
\end{align}
Here we abbreviate 
\begin{align}\nonumber
\V{\zeta}_{s}&:=\Theta_{s}\vp(\V{G}_{\V{B}_s^{\V{q}}}-\wt{\V{G}}_{\V{B}_s^{\tilde{\V{q}}}})\tilde{\psi}_s
=\big(\vp(\V{G}_{\V{B}_s^{\V{q}}}-\wt{\V{G}}_{\V{B}_s^{\tilde{\V{q}}}})-\V{D}_{s}^+\vt^\eh\big)
\Theta_{s}\tilde{\psi}_s.
\end{align}
Using the shorthand \eqref{short:theta}, we further have
\begin{align}\label{clara77}
\|\V{\zeta}_{s}\|&\le c\|\V{G}_{\V{B}_s^{\V{q}}}-\wt{\V{G}}_{\V{B}_s^{\tilde{\V{q}}}}\|_*
\|\theta^{\eh}\Theta_{s}\tilde{\psi}_s\|,
\end{align}
for some universal constant $c>0$, where we used \eqref{tina-diff} and
\begin{align*}
\|\vp(\V{G}_{\V{B}_s^{\V{q}}}-\wt{\V{G}}_{\V{B}_s^{\tilde{\V{q}}}})\phi'\|
&\le 2^\eh\|\V{G}_{\V{B}_s^{\V{q}}}-\V{G}_{\V{B}_s^{\tilde{\V{q}}}}\|_{\mathfrak{k}}\|\theta^\eh\phi'\|,
\quad \phi'\in\dom(\theta^\eh).
\end{align*}
We readily infer from these remarks that
\begin{align}\nonumber
|\sI_1(s)+\sI_4(s)|&\le\|\phi_s\|^2/8+\|\Id\Gamma(\omega)^\eh\phi_s\|^2/8
+\|T_{1,s}\|\,\|\vt^\eh\phi_s\|^2
\\\label{clara1}
&\quad+c'd_1^{2}(s)\|\theta^{\eh}\Theta_{s}\tilde{\psi}_s\|^2,
\end{align}
with some universal constant $c'>0$.
Since $\Re\SPn{\phi_s}{\tfrac{i}{2}\vp(q_{\V{B}_s^{\V{q}}})\phi_s}=0$, we further have
\begin{align*}
\sI_3(s)&=-2\Re\SPn{\phi_s}{T_{2,s}\vt^\eh\phi_s}
+2\Re\SPb{\phi_s}{(\vsigma\cdot\vp(\V{F}_{\V{B}_s^{\V{q}}})-V(\V{B}_s^{\V{q}}))\phi_s}
+2\Re\SPn{\phi_s}{\xi_s},
\end{align*} 
with
\begin{align}\nonumber
\xi_s&:=\Theta_s(\Phi_s-\wt{\Phi}_s)\tilde\psi_s
=(\Phi_s-\wt{\Phi}_s)\Theta_s\tilde\psi_s-D_s\vt^\eh\Theta_s\tilde\psi_s,
\\\nonumber
\|\xi_s\|&\le c''d_1(s)\|\theta^{\eh}\Theta_{s}\tilde\psi_s\|.
\end{align}
Hence, using also \eqref{flora}, it is easy to see that
\begin{align}\nonumber
\sI_2(s)+\sI_3(s)&\le\big(8\lambda(\V{B}_s^{\V{q}})-2V(\V{B}_s^{\V{q}})
+\tfrac{1}{\mu}\|T_{2,s}\|+\tfrac{3}{2}\big)\|\phi_s\|^2-\|\Id\Gamma(\omega)^\eh\phi_s\|^2/2
\\\label{clara11}
&\quad+\mu\|T_{2,s}\|\|\vt^\eh\phi_s\|^2+c'''d_1(s)^2\|\theta^{\eh}\Theta_{s}\tilde\psi_s\|^2,
\end{align}
for all $\mu>0$. Moreover, we have, again for every $\mu'>0$,
\begin{align*}
|\sI_5(s)|&\le\mu'\|\theta^\eh\phi_s\|^2
+\tfrac{1}{\mu'}\big\|\theta^\mh\Theta_s(\rho_s-\tilde{\rho}_s)\big\|^2,
\\
|\sI_6(s)|&\le2\mu'\|\theta^\eh\phi_s\|^2
+\big(1+\tfrac{1}{\mu'}\|\vp(\V{G}_{\V{B}_s^{\V{q}}})\theta^\mh\|^2
+\tfrac{1}{\mu'}\|\V{T}_{s}^+\|^2\big)\sI_7(s)+\|\V{\zeta}_s\|^2.
\end{align*}
Summarizing the above bounds on $\sI_1,\ldots,\sI_7$, we obtain
\begin{align}\nonumber
\sum_{\ell=1}^7\sI_\ell(s)
&\le-\big(\tfrac{3}{8}-3\mu'\big)\|\Id\Gamma(\omega)^\eh\phi_s\|^2
+\big(\|T_{1,s}\|+\mu\|T_{2,s}\|\big)\|\vt^\eh\phi_s\|^2
\\\nonumber
&\quad+\big(8\lambda(\V{B}_s^{\V{q}})-2V(\V{B}_s^{\V{q}})
+\tfrac{1}{\mu}\|T_{2,s}\|+3\mu'+\tfrac{13}{8}\big)\|\phi_s\|^2
\\\nonumber
&\quad+cd_1(s)^2\|\theta^\eh\Theta_s\tilde{\psi}_s\|^2
\\\label{diane2000}
&\quad+\tfrac{1}{\mu'}\big\|\theta^\mh\Theta_s(\rho_s-\tilde{\rho}_s)\big\|^2+
c'\big(1+\tfrac{1}{\mu'}\|\V{G}_{\V{B}_s^{\V{q}}}\|_*^2\big)\sI_7(s),
\end{align}
for all $s\in[0,t_0]$ and $\mu,\mu'>0$.
Using $\Re\SPn{\phi_s}{i\vp(\V{G}_{\V{B}_s^{\V{q}}})\phi_s}=0$, we further have
\begin{align*}
\V{I}(s)&=-2\Re\SPb{\phi_s}{i\V{T}_{s}^+\vt^\eh\phi_s}+2\Re\SPb{\phi_s}{i\V{\zeta}_{s}}
+2\Re\SPb{\phi_s}{\Theta_{s}(\V{R}_s-\wt{\V{R}}_s)},
\end{align*}
which together with \eqref{clara77} permits to get
\begin{align}\label{diane1999}
\tfrac{1}{2}\V{I}(s)^2&\le\|\phi_s\|^2\big(4\|\V{T}_{s}^+\|^2\|\vt^\eh\phi_s\|^2
+c'd_1(s)^2\|\theta^\eh\Theta_s\tilde{\psi}_s\|^2+c''\sI_7(s)\big).
\end{align}

{\em Step 2.} 
Let $p\ge2$. Then, in view of \eqref{clara00}, an application of It\={o}'s formula $\PP$-a.s. yields
\begin{align}\nonumber
\|\phi_t\|^{2p}&=p\sum_{\ell=1}^{7}\int_0^t\|\phi_s\|^{2p-2}\sI_\ell(s)\Id s
+p\int_0^t\|\phi_s\|^{2p-2}\V{I}(s)\Id\V{B}_s
\\\label{herbie1}
&\quad+\frac{p(p-1)}{2}\int_0^t\|\phi_s\|^{2p-4}\V{I}(s)^2\Id s,\quad t\in[0,t_0].
\end{align}
Applying the bound \eqref{diane2000} with $\mu':=1/24$ and \eqref{diane1999}, 
we may $\PP$-a.s. deduce that
\begin{align}\nonumber
\|\phi_t\|^{2p}&\le\|\phi_t\|^{2p}
-\int_0^t\big(\tfrac{1}{4}-f_{2p,\mu}(s)\big)\|\phi_s\|^{2p-2}\|\Id\Gamma(\omega)^\eh\phi_s\|^2\Id s
\\\nonumber
&\le2p\int_0^t\big(b_{2p,\mu}(s)+1\big)\|\phi_s\|^{2p}\Id s
\\\nonumber
&\quad
+cp\int_0^t\|\phi_s\|^{2p-2}Q_p(s)\Id s+p\int_0^t\|\phi_s\|^{2p-2}\V{I}(s)\Id\V{B}_s,
\end{align}
for all $t\in[0,t_0]$. The previous bound holds actually true for $p=1$ as well;
see \eqref{clara00} and \eqref{diane2000}. Applying Lem.~\ref{lem-stoch-Gronwall} with
$\gamma=1/2$ and $\delta=1-1/p$ (exactly as in the proof of Lem.~\ref{cor-spin1}),
we arrive at \eqref{tom1}. In fact, we know that the a priori bound \eqref{inger1} is fulfilled
by the right hand side of \eqref{tom1} since the arguments of Step~4 in the proof of 
Lem.~\ref{cor-spin1} apply to both $(\psi_t)_{t\in[0,t_0]}$ and $(\tilde\psi_t)_{t\in[0,t_0]}$.

Finally, \eqref{tom2} is derived from \eqref{diane2000} (with $\mu':=1/24$)
exactly in the same fashion as \eqref{diane1}
was derived from \eqref{diane99} in Step~6 of the proof of Lem.~\ref{cor-spin1}.
\end{proof}

%\begin{ex}\label{ex-antonia}
%Let us apply Lem.~\ref{lem-diff1} with with constant $\eta=\phi\in\dom(\Id\Gamma(\omega))$ and with $\tilde{\V{c}}=\V{0}$, $V=\wt{V}=0$, $\rho=\tilde{\rho}=0$, and $\V{R}=\wt{\V{R}}=\V{0}$. Then, up to indistinguishability, $\tilde{\psi}_t=e^{-t\Id\Gamma(\omega)}\phi$, $t\ge0$, and \eqref{tom1} implies
%\begin{align*}&\EE\big[\sup_{s\le t}\big\|\Theta_s(\WW{s}{0}[\V{B}^{\V{q}}]-e^{-s\Id\Gamma(\omega)})\phi\big\|^p\big]\\&\le(ce^t\|\V{c}\|_{*,\infty}p^\eh)^p(p+\|\V{G}\|_{\mathfrak{k},\infty})^p\Big(\int_0^t\|(1+\Id\Gamma(\omega))^\eh e^{-s\Id\Gamma(\omega)}\Theta_s\phi\|^2\Id s\Big)^{\nf{p}{2}},\end{align*}
%for all $p\ge1$ and $t\in[0,t_0]$. By \eqref{gabi1} and \eqref{gabi2} we have, however, the following trivial analogue of \eqref{diane1},
%\begin{align*}&\int_0^t\|(1+\Id\Gamma(\omega))^\eh e^{-s\Id\Gamma(\omega)}\Theta_s\phi\|^2\Id s&\le2\int_0^t\SPb{\Theta_se^{-s\Id\Gamma(\omega)}\phi}{\big(\Id\Gamma(\omega)-\tfrac{\dot{\Theta}_s}{\Theta_s}\big)\Theta_se^{-s\Id\Gamma(\omega)}\phi}\Id s=\|\Theta_0\phi\|^2-\|\Theta_te^{-t\Id\Gamma(\omega)}\|^2,\end{align*}
%for all $t\in[0,t_0]$, whence
%\begin{align}\label{antonia}\EE\big[\sup_{s\le t}\big\|\Theta_s(\WW{s}{0}[\V{B}^{\V{q}}]-e^{-s\Id\Gamma(\omega)})\phi\big\|^p\big]\!\le(ce^t\|\V{c}\|_{*,\infty}p^\eh)^p(p+\|\V{G}\|_{\mathfrak{k},\infty})^p\|\phi\|^p.\end{align}
%On account of the dominated convergence theorem, \eqref{antonia} is actually available for all $\phi\in\FHR$.
%\end{ex}

\begin{ex}\label{ex-antonio}
Let us apply Lem.~\ref{lem-diff1} with with constant $\eta=\phi\in\dom(\Id\Gamma(\omega))$
and with $\tilde{\V{c}}=\V{0}$, $V=\wt{V}=0$, $\rho=0$, 
$\V{R}=\wt{\V{R}}=\V{0}$, and $\tilde{\rho}=\Id\Gamma(\omega)\phi$. 
Then, up to indistinguishability, we simply have
$\tilde{\psi}_t=\phi$, $t\ge0$. Choosing the $s$-independent weight
$\Theta_s:=\theta_\ve^\mh:=(1+\ve(1+\Id\Gamma(\omega_\ve)))^\eh
(1+\Id\Gamma(\omega_\ve))^\mh$
with $\omega_\ve:=\omega(1+\ve\omega)^{-1}$. In this case we have $|\alpha|=1/2$, so that
the norm $\|\cdot\|_\circ$ is dominated by $\|\cdot\|_{\mathfrak{k}}$. We again write
$\theta=\theta_0$ and 
$\|\V{c}\|_{\mathfrak{k},\infty}:=\sup_{\V{x}\in\RR^\nu}\|\V{c}_{\V{x}}\|_{\mathfrak{k}}$. 
Employing \eqref{tom1} and the dominated convergence theorem, 
we then find a universal constant $c>0$ and $c_p>0$, depending only on $p\in\NN$, such that
\begin{align*}
&\EE\big[\sup_{s\le t}\big\|\theta^\mh(\WW{s}{0}[\V{B}^{\V{q}}]-\id)\phi\big\|^p\big]
\le\EE\big[\sup_{s\le t}\big\|\theta_\ve^\mh(\WW{s}{0}[\V{B}^{\V{q}}]-\id)\phi\big\|^p\big]
\\
&\le c_p(\|\V{c}\|_{\mathfrak{k},\infty}\vee\|\V{c}\|_{\mathfrak{k},\infty}^2)^pt^{\nf{p}{2}}
e^{(1+\|\V{c}\|_{\mathfrak{k},\infty}^2)p^2t}\big(\|\theta^\eh\theta_\ve^\mh \phi\|^2
+\|\theta^\mh\theta_\ve^\mh\Id\Gamma(\omega)\phi\|^2\big)^{\nf{p}{2}}
\\
&\xrightarrow{\;\;\ve\downarrow0\;\;}c_p
(\|\V{c}\|_{\mathfrak{k},\infty}\vee\|\V{c}\|_{\mathfrak{k},\infty}^2)^pt^{\nf{p}{2}}
e^{(1+\|\V{c}\|_{\mathfrak{k},\infty}^2)p^2t}\big(
\|\phi\|^2+\|\Id\Gamma(\omega)\theta^{-1}\phi\|^2\big)^{\nf{p}{2}}.
\end{align*}
Employing the dominated convergence
theorem once more to replace $\phi\in\dom(\Id\Gamma(\omega))$ by any arbitrary 
element of $\FHR$, we arrive at
\begin{align}\nonumber
\EE\big[\sup_{s\le t}&\big\|(1+\Id\Gamma(\omega))^\mh(\WW{s}{0}[\V{B}^{\V{q}}]-\id)\psi\big\|^p\big]
\\\label{antonio}
&\le c_p'(\|\V{c}\|_{\mathfrak{k},\infty}\vee\|\V{c}\|_{\mathfrak{k},\infty}^2)^pt^{\nf{p}{2}}
e^{(1+\|\V{c}\|_{\mathfrak{k},\infty}^2)p^2t}\|\psi\|^p,
\end{align}
for all $t\ge0$, $\psi\in\FHR$, and $p\in\NN$, where $c_p'>0$ depends only on $p$.
\end{ex}

Finally, we apply Lem.~\ref{lem-diff1} to two solution processes given by Thm.~\ref{thm-WW}
extending \eqref{tom1} to unbounded weights at the same time. 
We could also extend \eqref{tom2}, of course,
but refrain from doing so here, as \eqref{tom2} will only become important in our companion
paper \cite{Matte2015}.

\begin{lem}\label{lem-maria}
{\rm(1)} Let $t_0>0$, $|\alpha|\ge1/2$, and let $\V{c}$ and $\tilde{\V{c}}$ be two coefficient vectors
satisfying Hyp.~\ref{hyp-G} with the same $\omega$ and $C$ such that 
$\|\V{c}\|_{\circ,\infty}<\infty$ and $\|\tilde{\V{c}}\|_{\circ,\infty}<\infty$.
Let $V,\wt{V}:\RR^\nu\to\RR$ be bounded and continuous. Finally, let 
$(\V{q},\tilde{\V{q}},\eta):\Omega\to\RR^{2\nu}\times{\dom}(\Theta^{(\alpha)}_{0,0})$ 
be $\fF_0$-measurable with $\|\Theta_{0,0}^{(\alpha)}\eta\|\in L^{p}(\PP)$, and let
$\wt{\mathbb{W}}^{\wt{V}}[\V{B}^{\tilde{\V{q}}}]$ be the solution process given by
Thm.~\ref{thm-WW} applied to $\tilde{\V{c}}$, $\wt{V}$, and $\tilde{\V{q}}$.
Then there exist a universal constant $c>0$ and, for every $p\in\NN$, some
$c_p>0$, depending only on $p$, such that with
$$
c_{p,t}:=c_pe^{c(p\|\V{c}\|_{\circ,\infty}^2+\|\mho\|_\infty+\|V\|_\infty
+p\|\tilde{\V{c}}\|_{\circ,\infty}^2+\|\wt{\mho}\|_\infty+\|\wt{V}\|_\infty+1)pt}
$$
the following bound holds true,
\begin{align}\nonumber
\EE\Big[&\sup_{s\le t}\big\|\Theta_{0,s}^{(\alpha)}\big(\WW{s}{V}[\V{B}^{\V{q}}]
-\wt{\mathbb{W}}_s^{\wt{V}}[\V{B}^{\tilde{\V{q}}}]\big)\eta\big\|^p\Big]
\\\label{maria1}
&\le c_{p,t}\EE\big[\sup_{s\le t}d_p(s)^{2p}\big]^\eh
\big(1+t^{\nf{p}{2}}\|\wt{\V{G}}\|_{\circ,\infty}^p\big)
\EE\big[\|\Theta_{0,0}^{(\alpha)}\eta\|^{2p}\big]^\eh.
\end{align}
Here $d_p(s)$ is defined in \eqref{def-dps}.

\smallskip

\noindent{\rm(2)} The assertion in (1) holds true with $\alpha$ replaced by $0<|\delta|\le1$ and
$\Theta_{0,s}^{(\alpha)}$ replaced by $\Xi_{0,s}^{(\delta)}$ everywhere.
\end{lem}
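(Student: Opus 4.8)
The plan is to run the limiting scheme from the proof of Lemma~\ref{lem-spin4}, but with Lemma~\ref{lem-diff1} (specifically the bound \eqref{tom1}) in place of Lemma~\ref{cor-spin1}. We may assume $\EE[\|\Theta_{0,0}^{(\alpha)}\eta\|^{2p}]<\infty$, since otherwise the right-hand side of \eqref{maria1} is infinite and there is nothing to prove. First I would fix the auxiliary parameter $\mu$: in case (1) we have $\vt=1$, hence $f_{2p,\mu}\equiv0$, and $\mu:=1$ is admissible; in case (2) we have $\vt=1+\Id\Gamma(\omega)$, and we choose $\mu$ small enough and use the smallness conditions on $\|\V{G}\|_{\circ,\infty}$ and $\|\wt{\V{G}}\|_{\circ,\infty}$ (carried over from Lemma~\ref{lem-spin4}(2)) to guarantee $f_{2p,\mu}(s)\le1/8$ for all $s$. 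With $\mu$ fixed, observe that under our hypotheses every coefficient-dependent quantity entering the process $b_{2p,\mu}$ in \eqref{def-bpmu} and the process $d_p$ in \eqref{def-dps} is bounded on $[0,t_0]\times\Omega$ --- for instance $V_-\le\|V\|_\infty$, $\mho\le\|\mho\|_\infty$, $\|\V{G}_{\V{x}}\|_\circ\le\|\V{c}\|_{\circ,\infty}$, $\|(q,\vsigma\cdot\V{F})_{\V{x}}\|_\circ\le\|\V{c}\|_{\circ,\infty}$ --- so that $b_{2p,\mu}(s)\le\bar{b}_{2p,\mu}$ for a constant depending only on $p,\mu,\|\V{c}\|_{\circ,\infty},\|\mho\|_\infty,\|V\|_\infty$.

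Next I would introduce the regularized initial conditions $\eta_m:=1_{\{\|\eta\|\le m\}}\eta$ and $\eta_{n,m}:=(1+n^{-1}\Id\Gamma(\omega))^{-1}\eta_m$ as in the proof of Lemma~\ref{lem-spin4}, so that $\eta_{n,m}$ is $\fF_0$-measurable with values in $\dom(\Id\Gamma(\omega))$, and apply Lemma~\ref{lem-diff1} with weight $\Theta_{\ve,s}^{(\alpha)}$ (resp. $\Xi_{\ve,s}^{(\delta)}$), $\ve\in(0,1]$, and $\rho=\V{R}=\tilde{\rho}=\wt{\V{R}}=0$, to the semi-martingales $\psi_s=\WW{s}{V}[\V{B}^{\V{q}}]\eta_{n,m}$ and $\tilde{\psi}_s=\wt{\mathbb{W}}_s^{\wt{V}}[\V{B}^{\tilde{\V{q}}}]\eta_{n,m}$ (which are of the required type by Thm.~\ref{thm-WW}(3)). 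Then $Q_p(s)=d_p(s)^2\,\|\theta^\eh\Theta_{\ve,s}^{(\alpha)}\tilde{\psi}_s\|^2$; after multiplying \eqref{tom1} by the constant $e^{p\bar{b}_{2p,\mu}t}$ and using $e^{-2\int_0^sb_{2p,\mu}}\le1$ inside the time-integral, I would apply the Cauchy--Schwarz inequality to split off the factor $\EE[\sup_{s\le t}d_p(s)^{2p}]^\eh$ and reduce to estimating $\EE[(\int_0^t\|\theta^\eh\Theta_{\ve,s}^{(\alpha)}\tilde{\psi}_s\|^2\Id s)^{p}]^\eh$. Writing $\|\theta^\eh\cdot\|^2\le2\|\cdot\|^2+2\|\Id\Gamma(\omega)^\eh\cdot\|^2$, bounding $\int_0^t\|\Theta_{\ve,s}^{(\alpha)}\tilde{\psi}_s\|^2\Id s\le t\sup_{s\le t}\|\Theta_{\ve,s}^{(\alpha)}\tilde{\psi}_s\|^2$, and pulling the bounded factor $e^{-\int_0^s\wt{V}(\V{B}_r^{\tilde{\V{q}}})\Id r}$ out of $\wt{\mathbb{W}}^{\wt{V}}$, this last expectation is controlled by Lemma~\ref{cor-spin1} applied to $\tilde{\psi}$ (equivalently, by the $\ve$-regularized bounds established in the proofs of Lemma~\ref{lem-spin4}(1) and Lemma~\ref{lem-spin42}(1), with $2p$ in place of $p$ and with $\tilde{\V{c}},\tilde{\V{q}},\wt{V}$): this yields a bound of the asserted form with $\eta$ replaced by $\eta_{n,m}$ and $\Theta_{0,0}^{(\alpha)}$ by $\Theta_{\ve,0}^{(\alpha)}$, the polynomial-in-$t$ factor of Lemma~\ref{lem-spin4}(1) being absorbed into the exponential at the cost of a $p$-dependent constant and the factor $(1+t^{\nf{p}{2}}\|\wt{\V{G}}\|_{\circ,\infty}^p)$ coming from Lemma~\ref{lem-spin42}(1).

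Then I would remove the regularizations exactly as in the proof of Lemma~\ref{lem-spin4}. From \eqref{bd-WW} one has $\|\Theta_{\ve,s}^{(\alpha)}(\WW{s}{V}[\V{B}^{\V{q}}]-\wt{\mathbb{W}}_s^{\wt{V}}[\V{B}^{\tilde{\V{q}}}])\eta_{n,m}\|\le c_{\alpha,\ve}\big(e^{(\|\mho\|_\infty+\|V\|_\infty)t_0}+e^{(\|\wt{\mho}\|_\infty+\|\wt{V}\|_\infty)t_0}\big)\|\eta_m\|\in L^p(\PP)$, so dominated convergence lets $n\to\infty$; then, using the identity $(\WW{s}{V}[\V{B}^{\V{q}}]-\wt{\mathbb{W}}_s^{\wt{V}}[\V{B}^{\tilde{\V{q}}}])\eta_m=1_{\{\|\eta\|\le m\}}(\WW{s}{V}[\V{B}^{\V{q}}]-\wt{\mathbb{W}}_s^{\wt{V}}[\V{B}^{\tilde{\V{q}}}])\eta$, I would pass to the limits $\ve\downarrow0$ and $m\to\infty$ by monotone convergence when $\alpha>0$ (resp. $\delta>0$), where $\Theta_{\ve,s}^{(\alpha)}\uparrow\Theta_{0,s}^{(\alpha)}$ and $\|\Theta_{\ve,0}^{(\alpha)}\eta_m\|\le\|\Theta_{0,0}^{(\alpha)}\eta\|$; and, when $\alpha<0$ (resp. $\delta<0$), by dominated convergence in $\ve$ (using $\Theta_{\ve,s}^{(\alpha)}\le2^{|\alpha|}$) on both sides, followed by the estimate $\|\Theta_{0,0}^{(\alpha)}\eta_m\|\le\|\Theta_{0,0}^{(\alpha)}\eta\|$ on the right-hand side and monotone convergence in $m$ on the left-hand side. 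Part (2) is proved in the same way, with $\Theta_{\ve,s}^{(\alpha)}$ replaced by $\Xi_{\ve,s}^{(\delta)}$, $\delta$ in place of $\alpha$, and Lemmas~\ref{lem-spin4}(2) and~\ref{lem-spin42}(2) in place of their (1)-counterparts.

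The main obstacle is matching the hypotheses of Lemma~\ref{lem-diff1} (and of Lemma~\ref{cor-spin1} applied to $\tilde{\psi}$), above all the requirement $f_{2p,\mu}(s)\le1/8$ in the exponential-weight case, which is exactly why part (2) must carry a smallness assumption on $\|\V{G}\|_{\circ,\infty}$ and $\|\wt{\V{G}}\|_{\circ,\infty}$; and keeping the $\theta^\eh$-weighted $2p$-th moments of $\wt{\mathbb{W}}^{\wt{V}}[\V{B}^{\tilde{\V{q}}}]\eta$ under control, which is precisely what Lemma~\ref{lem-spin42} delivers. The bookkeeping in the limits $\ve\downarrow0$ and $m\to\infty$ is routine given the sign-of-$\alpha$ case distinction already used for Lemma~\ref{lem-spin4}.
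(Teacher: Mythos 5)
Your proposal is correct and follows essentially the same route as the paper: apply \eqref{tom1} to the two flows with regularized weights $\Theta^{(\alpha)}_{\ve,s}$ (resp. $\Xi^{(\delta)}_{\ve,s}$) and regularized initial data $\eta_{n,m}$, split off $\EE[\sup_s d_p(s)^{2p}]^{\eh}$ by Cauchy--Schwarz, control $\EE[(\int_0^t\|\theta^{\eh}\Theta_{\ve,s}\tilde\psi_s\|^2\Id s)^{p}]^{\eh}$ via the $\ve$-regularized bounds \eqref{pol-w-eps} and \eqref{cleo2} with $2p$ in place of $p$ (absorbing $e^{-\int\wt V}$ and the $b_{2p,\mu}$-exponential into the constant), and then remove $n$, $\ve$, $m$ exactly as in Lem.~\ref{lem-spin4}, distinguishing the sign of $\alpha$ (resp. $\delta$). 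Your explicit remarks on the choice of $\mu$, the $f_{2p,\mu}\le 1/8$ condition in the exponential case, and the extraction of the bounded potential factor only spell out details the paper leaves implicit.
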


\begin{proof}
(1): Let $\eta_{n,m}$ be defined as in the proof of Lem.~\ref{lem-spin4}.
By virtue of \eqref{tom1} we then obtain
\begin{align}\nonumber
&e^{-c(p\|\V{c}\|_{\circ,\infty}^2+\|\mho\|_\infty+1)pt}
\EE\Big[\sup_{s\le t}\big\|\Theta_{\ve,s}^{(\alpha)}\big(\WW{s}{V}[\V{B}^{\V{q}}]
-\wt{\mathbb{W}}_s^{\wt{V}}[\V{B}^{\tilde{\V{q}}}]\big)\eta_{n,m}\big\|^p\Big]
\\\nonumber
&\le c_{p}
\EE\Big[\Big(\int_0^td_p(s)^2\big\|\theta^\eh\Theta_{\ve,s}^{(\alpha)}
\wt{\mathbb{W}}_s^{\wt{V}}[\V{B}^{\tilde{\V{q}}}]\big)\eta_{n,m}\big\|^2\Id s\Big)^{\nf{p}{2}}\Big]
\\\nonumber
&\le c_{p}
\EE\big[\sup_{s\le t}d_p(s)^2\big]^\eh
\EE\Big[\Big(\int_0^t\big\|\theta^\eh\Theta_{\ve,s}^{(\alpha)}
\wt{\mathbb{W}}_s^{\wt{V}}[\V{B}^{\tilde{\V{q}}}]\eta_{n,m}\big\|^2\Id s\Big)^{p}\Big]^\eh
\\\label{maria2}
&\le c_{p,t}\EE\big[\sup_{s\le t}d_p(s)^{2p}\big]^\eh
\big(1+t^{\nf{p}{2}}\|\wt{\V{G}}\|_{\circ,\infty}^p\big)
\EE\big[\|\Theta_{\ve,0}^{(\alpha)}\eta_{m}\|^{2p}\big]^\eh.
\end{align}
Here we also applied \eqref{pol-w-eps} and \eqref{cleo2} in the last step.
Next, we let $n$ go to infinity on the left hand side of \eqref{maria2}
using the dominated convergence theorem with majorizing functions analogous
to the ones in \eqref{maria0}. After that the regularization parameters $m$ and $\ve$ are
removed in the same way as in the end of the proof of Lem.~\ref{lem-spin4}(1), distinguishing
the cases $\alpha>0$ and $\alpha<0$.

The proof of (2) is again completely analogous.
\end{proof}

%%%%%%%%%%%%%%%%%%%%%%%%%%%%%%%%%%%%%%%%%%%%%%
%%%%%%%%%%%%%%%%%%%%%%%%%%%%%%%%%%%%%%%%%%%%%%
%%%%%%%%%%%%%%%%%%%%%%%%%%%%%%%%%%%%%%%%%%%%%%

\section{Weighted $L^p$-estimates and continuity in the potential}\label{sec-cont-V}

\noindent
The next theorem complements the $L^p$-estimates of Lem.~\ref{lem-T-bd}
by including unbounded multiplication operators in Fock space. 
The convergence result in the next theorem will imply strong convergence of the
semi-group when its Kato decomposable potential is approximated in the sense
of Lem.~\ref{lem-conny}; see Cor.~\ref{cor-SCV} below. 
It is also used to prove Thm.~\ref{thm-eq-cont} later on.
To shorten statements we introduce the following convention:

\begin{notation}\label{not-krass}
In the following table we introduce weight functions and corresponding norms used to state our
main theorems in this and the subsequent sections. These theorems hold for the weights in the
first column provided that the coefficient vector $\V{c}$ fulfills the corresponding hypothesis in the 
third column {\em in addition} to our standing hypothesis Hyp.~\ref{hyp-G}.
In the third column of the table we use the notation introduced in the second one.
We always assume that $\alpha\ge1/2$, $\delta\in(0,1]$, $t_*\ge2$, and that $\vo$ and $\vk$ 
are non-negative measurable functions on $\cM$ with $\vo\le\omega$. 
The constant $c_\alpha$ is the one appearing in 
Lem.~\ref{lem-bd-T}(1), $c$ is the one appearing in Lem.~\ref{lem-bd-T}(2).
Recall that $\sup_{\V{x}}\|\V{c}_{\V{x}}\|_{\mathfrak{k}}<\infty$ is part of Hyp.~\ref{hyp-G}.

\medskip

\begin{center}
\renewcommand{\arraystretch}{1.5}
\begin{tabular}{|l|l|l|l|}
\hline
&$\Upsilon_t$&$\|\V{v}\|_*:=\|\V{v}\|_{\mathfrak{k}}+\|\V{v}\|_{**}$&Hyp($\Upsilon$)\\
&  &\text{with $\|\V{v}\|_{**}:=$}&\\
\hline\hline
1.&$(1+\Id\Gamma(\vk)/2\alpha)^{\alpha}$&$c_{\alpha}\|\vk^\eh(1+{\vk})^{\alpha-\eh}\V{v}\|_{\HP}$
&$\sup_{\V{x}}\|{\vk}^\alpha\V{c}_{\V{x}}\|_{\HP}<\infty$\\
 \hline
 2.&$(1+(t+t\Id\Gamma(\vo))/2\alpha)^{\alpha}$&$c_{\alpha}\|\vo^\eh{(1+\vo)}^{\alpha-\eh}\V{v}\|_{\HP}$
 &$\sup_{\V{x}}\|{\vo}^\alpha\V{c}_{\V{x}}\|_{\HP}<\infty$\\
 \hline
 3.&$e^{\delta\Id\Gamma(\vk)}$&$c\delta^\eh\|(\vk\vee\tfrac{\vk^2}{\omega})^\eh 
 e^{\delta\vk}\V{v}\|_{\HP}$&$\sup_{\V{x}}\|(\V{F}_{\V{x}},q_{\V{x}})\|_{**}<\infty$\\
 &&&$\sup_{\V{x}}\|\V{G}_{\V{x}}\|_{**}\le\nf{1}{9}$\\
\hline
 4.&$e^{\delta(t\wedge t_*)(1+\Id\Gamma(\vo))/2}$&
 $ct_*\delta^\eh\|\vo^\eh e^{\delta t_*\vo/2}\V{v}\|_{\HP}$&
 $\sup_{\V{x}}\|(\V{F}_{\V{x}},q_{\V{x}})\|_{**}<\infty$\\
&&& $\sup_{\V{x}}\|\V{G}_{\V{x}}\|_{**}\le\nf{1}{9}$\\
\hline
\end{tabular}
\\
\medskip

\small{{\bf Table 1}: Weight functions with corresponding norm $\|\cdot\|_*$ and additional hypothesis. }
\end{center}

\medskip

\noindent
Furthermore, we abbreviate $\|\V{c}\|_{*,\infty}:=\sup_{\V{x}\in\RR^\nu}\|\V{c}_{\V{x}}\|_*$.
\end{notation}

\begin{thm}\label{thm-LNCV}
Let $V\in\cK_\pm(\RR^\nu)$, let $F:\RR^\nu\to\RR$ be globally Lipschitz
continuous, i.e., $|F(\V{x})-F(\V{y})|\le a|\V{x}-\V{y}|$, $\V{x},\V{y}\in\RR^\nu$, for some $a\ge0$,
and let $1\le p\le q\le\infty$. Let $(\Upsilon_t)_{t\ge0}$ and $\|\cdot\|_*$ be given by one of the lines 
in Table~1 and assume that $\V{c}$ fulfills {\rm Hyp($\Upsilon$)} given by the same line.
Then the following holds:

\smallskip

\noindent{\rm(1)} 
For all $t>0$, $T_t^{V}$ maps $\Upsilon_0^{-1}L^p(\RR^\nu,\FHR;e^{pF}\Id\V{x})$ 
into the domain of $\Upsilon_t$, considered as a densely defined operator in 
$L^q(\RR^\nu,\FHR;e^{qF}\Id\V{x})$, and
\begin{align}\label{norm-T-F-Theta}
\big\|e^F\Upsilon_tT_t^V{\Upsilon}_0^{-1}e^{-F}\big\|_{p,q}
\le c_{\nu,p,q}\,\frac{e^{8(1+\|\V{c}\|_{*,\infty}^2+a^2)t}}{t^{\nu(\nf{1}{p}-\nf{1}{q})/2}}
\sup_{\V{z}\in\RR^\nu}\EE\big[e^{8\int_0^tV_-(\V{B}_s^{\V{z}})\Id s}\big]^{\nf{1}{4}}.
\end{align}
Here the constant $c_{\nu,p,q}>0$ depends only on $p$, $q$, and $\nu$. 

\smallskip

\noindent{\rm(2)}
If $V_n\in\cK_\pm(\RR^\nu)$, $n\in\NN$, satisfy \eqref{approx1} and \eqref{approx2}, then 
\begin{align}\label{LNCV1}
\lim_{n\to\infty}\sup_{t\in[\tau_1,\tau_2]}\big\|1_Ke^{F}\Upsilon_t(T^{V_n}_t-T^V_t)
{\Upsilon}_0^{-1}e^{-F}\big\|_{p,q}&=0,
\\\label{LNCV2}
\lim_{n\to\infty}\sup_{t\in[\tau_1,\tau_2]}\big\|e^F\Upsilon_t(T^{V_n}_t-T^V_t)
{\Upsilon}_0^{-1}e^{-F}1_K\big\|_{p,q}&=0,
\end{align}
for all compact $K\subset\RR^\nu$ and $\tau_2>\tau_1>0$.
If $p=q$, then the choice $\tau_1=0$ is allowed for in \eqref{LNCV1} and \eqref{LNCV2} as well.
If $V-V_n\in\cK(\RR^\nu)$, for all $n\in\NN$, and \eqref{approx1} holds with $K$
replaced by $\RR^\nu$, then $K$ can be replaced by $\RR^\nu$ in
\eqref{LNCV1} and \eqref{LNCV2}, too. 
In all cases, the convergences \eqref{LNCV1} and \eqref{LNCV2} are uniform in all Lipschitz 
continuous $F$ with Lipschitz constant $\le a$ and all coefficient vectors $\V{c}$ satisfying
{\rm Hyp($\Upsilon$)} and $\|\V{c}\|_{*,\infty}\le A$, for some given $a,A\in(0,\infty)$.
\end{thm}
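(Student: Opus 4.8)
The plan is to reduce Theorem~\ref{thm-LNCV} to a combination of the Feynman--Kac structure of $T_t^V$, the weighted BDG estimates of Section~\ref{sec-weights}, and the scalar Schr\"odinger semi-group facts of Subsection~\ref{ssec-Kato}. First I would deal with part~(1). Writing $\Psi=\Upsilon_0^{-1}\Phi$ with $\Phi\in L^p(\RR^\nu,\FHR;e^{pF}\Id\V{x})$, the definition of $T_t^V$ gives pointwise
\begin{align*}
(T_t^V\Upsilon_0^{-1}\Phi)(\V{x})
&=\EE\big[e^{-\int_0^tV(\V{B}_s^{\V{x}})\Id s}\,\WW{t}{0}[\V{B}^{\V{x}}]^*\,\Upsilon_0^{-1}\Phi(\V{B}_t^{\V{x}})\big].
\end{align*}
Here I would use Theorem~\ref{thm-WW-rev} to rewrite $\WW{t}{0}[\V{B}^{\V{x}}]^*$ as $\WW{t}{0}[\sR_t\V{B}^{\V{x}}]$, i.e. the solution process for the time-reversed Brownian motion, which on the Wiener space is again a Brownian motion by Ex.~\ref{ex-stoch}; hence the weighted bound $\|\Upsilon_t\WW{t}{0}[\sR_t\V{B}^{\V{x}}]\Upsilon_0^{-1}\|\in L^p(\PP)$ is controlled by Lem.~\ref{lem-spin4} (precisely the $p$-th moment bound \eqref{pol-w}, resp. \eqref{exp-w}, with $\alpha\ge1/2$ or $0<\delta\le1$, applied to the reversed driving motion). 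This produces a bound of the form $\|\Upsilon_t\WW{t}{0}[\cdot]\Upsilon_0^{-1}\eta\|\le$ (an $L^p$-random variable of controlled exponential growth) $\cdot\,\|\eta\|$. Combining this via H\"older's inequality with the scalar factor $e^{-\int_0^tV(\V{B}_s^{\V{x}})\Id s}$, one gets
\begin{align*}
\|\Upsilon_t(T_t^V\Upsilon_0^{-1}\Phi)(\V{x})\|
&\le C_t\,\EE\big[e^{4\int_0^tV_-(\V{B}_s^{\V{x}})\Id s}\big]^{1/4}\,
\EE\big[\|\Phi(\V{B}_t^{\V{x}})\|^{4/3}\big]^{3/4},
\end{align*}
with $C_t$ carrying the factor $e^{c(1+\|\V{c}\|_{*,\infty}^2)t}$ from Lem.~\ref{lem-spin4}. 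The last expectation is a scalar heat-semigroup quantity, so the $L^p\to L^q$ mapping with the $t^{-\nu(1/p-1/q)/2}$ loss follows exactly as in the scalar theory (the standard interpolation between $L^p\to L^p$ via $\|S_t^{|V|}\|_{p,p}$-type bounds and $L^1\to L^\infty$ via the heat kernel, conjugated by $e^F$ using $|F(\V{x})-F(\V{y})|\le a|\V{x}-\V{y}|$ and completing the square to absorb $e^{a|\V{x}-\V{y}|}$ into the Gaussian at the price of $e^{ca^2t}$); this is where the factor $\sup_{\V{z}}\EE[e^{8\int_0^tV_-}]^{1/4}$ and the $e^{a^2t}$-growth enter \eqref{norm-T-F-Theta}. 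One small point to check is measurability/joint measurability of $\V{x}\mapsto\Upsilon_t(T_t^V\Psi)(\V{x})$, which follows from Prop.~\ref{prop-ida}-type statements and Fubini, so that $\Upsilon_t T_t^V\Upsilon_0^{-1}$ is genuinely a densely defined operator with the asserted norm.

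For part~(2) I would first reduce the difference $T_t^{V_n}-T_t^V$ to the scalar level. Since $\WW{t}{0}$ does not depend on the potential, one has, after the same reversal trick,
\begin{align*}
(\Upsilon_t(T_t^{V_n}-T_t^V)\Upsilon_0^{-1}\Phi)(\V{x})
&=\EE\big[\big(e^{-\int_0^tV_n(\V{B}_s^{\V{x}})\Id s}-e^{-\int_0^tV(\V{B}_s^{\V{x}})\Id s}\big)
\Upsilon_t\WW{t}{0}[\sR_t\V{B}^{\V{x}}]\Upsilon_0^{-1}\Phi(\V{B}_t^{\V{x}})\big],
\end{align*}
and H\"older separates the scalar increment $|e^{-\int_0^tV_n}-e^{-\int_0^tV}|$ from the ($n$-independent!) weighted operator factor and from $\|\Phi(\V{B}_t^{\V{x}})\|$. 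The weighted operator factor is bounded in $L^{p'}(\PP)$ uniformly in $n$ by Lem.~\ref{lem-spin4} (using only $\|\V{c}\|_{*,\infty}\le A$ and Hyp($\Upsilon$)); the scalar increment is handled by Lem.~\ref{lem-conny}(4), which gives exactly \eqref{approx3} — local-uniform (or global-uniform, under the extra $V-V_n\in\cK(\RR^\nu)$ hypothesis) convergence to $0$ of $\EE[|e^{-\int_0^tV}-e^{-\int_0^tV_n}|^r]$, uniformly for $t$ in compacts and with the appropriate control of $\EE[e^{r\int_0^t(V_{n-}+V_-)}]$ from \eqref{approx3a}. Putting these together and integrating against the heat kernel exactly as in part~(1) yields \eqref{LNCV1}–\eqref{LNCV2}, with the localization $1_K$ coming from the localization in Lem.~\ref{lem-conny}(4) (on the left for \eqref{LNCV1}, producing $1_K$ in front, and on the right for \eqref{LNCV2} via the heat-kernel/Gaussian integral, after noting that the $L^1$-type mass in the $\V{y}$-variable concentrates near $K$). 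For $p=q$ the absence of the $t^{-\nu(1/p-1/q)/2}$ singularity is what allows $\tau_1=0$, and the uniformity in $F$ (Lipschitz constant $\le a$) and $\V{c}$ ($\|\V{c}\|_{*,\infty}\le A$, Hyp($\Upsilon$)) is automatic since all constants produced depend on $V,V_n$ only through the Kato data fixed by \eqref{approx2} and on $F,\V{c}$ only through $a,A$.

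The main obstacle, I expect, is the bookkeeping that makes the weighted operator estimates from Section~\ref{sec-weights} actually applicable here: Lem.~\ref{lem-spin4} is stated for $\WW{t}{0}[\V{B}^{\V{q}}]$ with $\V{B}$ a Brownian motion, whereas the adjoint $\WW{t}{0}[\V{B}^{\V{x}}]^*$ appearing in $T_t^V$ is, by Thm.~\ref{thm-WW-rev}, the solution process along the \emph{reversed} process $\sR_t\V{B}^{\V{x}}$, which lives on a different stochastic basis; one must invoke Ex.~\ref{ex-stoch}(2) and Thm.~\ref{thm-str-sol} to see that its distribution coincides with that of the solution along a genuine Brownian motion, so that the $L^p$-moment bounds \eqref{pol-w}/\eqref{exp-w} transfer. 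A secondary nuisance is matching the four different weight families in Table~1 with Lem.~\ref{lem-bd-T}: one has to verify that each entry in the third column of Table~1 is exactly the hypothesis $\|\V{c}\|_\circ<\infty$ (resp. the smallness $\|\V{G}\|_{**}\le1/9$, which on accounting for the constants in \eqref{def-fpmu} ensures $f_{p,\mu}(s)\le1/8$ for the relevant $p$, $\mu$), and that the $t$-dependent weights $\Upsilon_t$ of rows~2 and~4 are the specializations $\Theta_{0,t}^{(\alpha)}$ and $\Xi_{0,t}^{(\delta)}$ with the appropriate choice of $\vo$, $\vk$ — so that Rem.~\ref{rem-weights} and Notation~\ref{not-normcirc} make $\|\cdot\|_*$ coincide with the $\|\cdot\|_{\mathfrak{k}}+\|\cdot\|_\circ$ of the lemmas. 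Everything else is the routine scalar Schr\"odinger-semigroup machinery of \cite{BHL2000,Simon1982}, reproduced verbatim with the scalar integrand replaced by the $\FHR$-valued one and the extra weighted factor estimated by Section~\ref{sec-weights}.
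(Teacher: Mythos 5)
Your reduction of part~(1) hinges on replacing $\WW{t}{0}[\V{B}^{\V{x}}]^*$ by $\WW{t}{0}[\sR_t\V{B}^{\V{x}}]$ via Thm.~\ref{thm-WW-rev} and then applying the weighted moment bounds of Lem.~\ref{lem-spin4} to the reversed flow, on the grounds that the reversed process is ``again a Brownian motion''. This is where the argument breaks: by Ex.~\ref{ex-stoch}(2) the reversed process is a Brownian \emph{bridge}, i.e.\ it solves an SDE with the unbounded drift $(\V{x}-\V{X}_u)/(t-u)$, and the weighted BDG estimates of Sect.~\ref{sec-weights} are established only for the flow driven by a Brownian motion (the section explicitly refrains from unbounded drifts); moreover the distributional identification available from Ex.~\ref{ex-str-sol}(2) matches bridge-driven flows with bridge-driven flows, not with Brownian-motion-driven ones, so the transfer of \eqref{pol-w}/\eqref{exp-w} you invoke is not justified. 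In addition, Lem.~\ref{lem-spin4} bounds $\EE[\sup_s\|\Theta_{0,s}^{(\alpha)}\WW{s}{0}\eta\|^p]$ for a \emph{fixed} initial vector $\eta$; it does not furnish the $L^p(\PP)$-bound on the random operator norm $\|\Upsilon_t\WW{t}{0}[\,\cdot\,]\Upsilon_0^{-1}\|_{\LO(\FHR)}$ that your H\"older step uses. The paper's proof avoids both problems by dualizing instead of reversing: it pairs $(T_\tau^Ve^{-F}\hat{\Upsilon}_{t-\tau}^{-1}\Psi)(\V{x})$ against a test vector $\phi\in\dom(\hat{\Upsilon}_t)$, moves the adjoint onto $\phi$ inside the inner product, and then applies \eqref{pol-w}/\eqref{exp-w} together with the inverse-weight form of Rem.~\ref{rem-spin4} to the \emph{forward} flow $\WW{\tau}{0}[\V{B}^{\V{x}}]$ acting on $\phi$, taking the supremum over normalized $\phi$ outside the expectation; this is also exactly why the two weight families $\hat\Upsilon_s$ and $\wt\Upsilon_s=\hat\Upsilon_{t-s}^{-1}$ appear.

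A second gap concerns small $p$: with your choice of exponents the H\"older splitting does not close for $p$ near $1$ (the conjugate exponent for the Lipschitz factor becomes negative), and the a.s.\ bound \eqref{bd-WW} cannot replace a moment bound once the unbounded weight is present, so ``the scalar machinery carries over verbatim'' is not enough. The paper runs the direct argument only for $2\le p\le q\le\infty$, obtains $1\le p\le q\le2$ by $L^p$-duality from $(T_t^{V;(p,q),F})^*=T_t^{V;(q',p'),-F}$ (Cor.~\ref{cor-SASG}), and covers $1\le p<2<q\le\infty$ by factoring through $L^2$ with the semigroup property; for part~(2) the localization is not simply inherited from Lem.~\ref{lem-conny}(4) but requires the extra Lipschitz weight $e^{-D}$ with $D=\dist(\cdot,K)$ and a large-$R$ cutoff, combined with the same duality scheme. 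Your treatment of the scalar increment via \eqref{approx3}/\eqref{approx3a} and the uniformity statements is in the right spirit, but without repairing the reversal step (or replacing it by the duality argument) neither part of the theorem is proved.
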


\begin{proof}
In the subequent five steps of
this proof we {\em fix} $t>0$, that plays the role of $t_0$ in Lem.~\ref{lem-spin4}.
The weight function $(\hat{\Upsilon}_s)_{s\in[0,t]}$ will either be
$({\Upsilon}_s)_{s\in[0,t]}$ or $({\Upsilon}_{t-s}^{-1})_{s\in[0,t]}$. We have to consider these two 
choices because of the duality arguments used below.
We shall make use of Rem.~\ref{rem-weights}(1) without further notice.
Moreover, we set $\wt{\Upsilon}_s:=\hat{\Upsilon}_{t-s}^{-1}$, $s\in[0,t]$.
One subtlety appears when $\Upsilon$ is given by Line~4 of Table~1:
If $t>t_*$, then $\hat{\Upsilon}$ (resp. $\wt{\Upsilon}$) is chosen to be equal to
$(e^{\delta_*s(1+\Id\Gamma(\vo))/2})_{s\in[0,t]}$ with $\delta_*:=\delta t^*/t$.

{\em Step 1.}
Let $p\in[2,\infty]$. Pick $\tau\in[0,t]$ and some measurable 
$\Psi:\RR^\nu\to\dom(\hat{\Upsilon}_{t-\tau}^{-1})$ with $\Psi\in L^p(\RR^\nu,\FHR)$
and $\hat{\Upsilon}_{t-\tau}^{-1}\Psi\in L^p(\RR^\nu,\FHR)$.
For all  $\V{x}\in\RR^\nu$ and $\phi\in\dom(\hat{\Upsilon}_t)$, we may then write
\begin{align*}
\SPb{\hat{\Upsilon}_{t}\phi}{(T_\tau^V&e^{-F}\hat{\Upsilon}_{t-\tau}^{-1}\Psi)(\V{x})}
\\
&=\EE\big[e^{-\int_0^\tau V(\V{B}_s^{\V{x}})\Id s}\SPb{\wt{\Upsilon}_\tau\WW{\tau}{0}[\V{B}^{\V{x}}]
\wt{\Upsilon}_0^{-1}\phi}{\Psi(\V{B}^{\V{x}}_\tau)}e^{-F(\V{B}^{\V{x}}_\tau)}\big],
\end{align*}
where we took Lem.~\ref{lem-spin4} into account.
An analogous formula holds true with $T^V_\tau$ replaced by $(T^{V_n}_\tau-T^V_\tau)$ 
on the left hand side and $e^{-\int_0^\tau V(\V{B}_s^{\V{x}})\Id s}$ replaced by 
$(e^{-\int_0^\tau V_n(\V{B}_s^{\V{x}})\Id s}-e^{-\int_0^\tau V(\V{B}_s^{\V{x}})\Id s})$ 
on the right hand side.
Define $q_1\in(1,\infty)$ by $q_1^{-1}=1-p^{-1}-4^{-1}-8^{-1}$, so that $q_1\le8$, and
let us agree that, for $p=\infty$, the symbol $\EE[\|\Psi(\V{B}^{\V{x}}_\tau)\|^{p}]^{\nf{1}{p}}$ should 
be read as $\|\Psi\|_\infty$ in what follows. Then the above remarks permit to get
\begin{align}\nonumber
\sup_{{\phi\in\dom(\hat{\Upsilon}_t)\atop\|\phi\|=1}}&e^{F(\V{x})}
\big|\SPb{\hat{\Upsilon}_{t}\phi}{(T_\tau^Ve^{-F}\hat{\Upsilon}_{t-\tau}^{-1}\Psi)(\V{x})}\big|
\\\nonumber
&\le\sup_{\V{z}\in\RR^\nu}\EE[e^{-8\int_0^\tau V(\V{B}^{\V{z}}_s)\Id s}]^{\nf{1}{8}}
\\\label{conny1}
&\quad\cdot\EE[e^{aq_1|\V{B}_\tau|}]^{\nf{1}{q_1}}
\sup_{\V{y}\in\RR^\nu}\sup_{{\phi\in\dom(\wt{\Upsilon}_0^{-1})\atop\|\phi\|=1}}
\EE[\|\wt{\Upsilon}_\tau\WW{\tau}{0}[\V{B}^{\V{y}}]\wt{\Upsilon}_0^{-1}\phi\|^{4}]^{\nf{1}{4}}
\EE[\|\Psi(\V{B}^{\V{x}}_\tau)\|^{p}]^{\nf{1}{p}},
\end{align}
and, for all $K\subset\RR^\nu$, $\V{x}\in K$, and $\tau_2\ge t$,
\begin{align}\nonumber
\sup_{{\phi\in\dom(\hat{\Upsilon}_t)\atop\|\phi\|=1}}&e^{F(\V{x})}
\big|\SPb{\hat{\Upsilon}_{t}\phi}{\big((T^{V_n}_\tau-T_\tau^V)
e^{-F}\hat{\Upsilon}_{t-\tau}^{-1}\Psi\big)(\V{x})}\big|
\\\nonumber
&\le\sup_{{\V{z}\in K\atop \tilde{t}\in[0,\tau_2]}}
\EE\big[|e^{-\int_0^{\tilde{t}}V_n(\V{B}_s^{\V{z}})\Id s}
-e^{-\int_0^{\tilde{t}}V(\V{B}_s^{\V{z}})\Id s}|^{8}\big]^{\nf{1}{8}}
\\\label{conny2}
&\quad\cdot\EE[e^{aq_1|\V{B}_\tau|}]^{\nf{1}{q_1}}
\sup_{\V{y}\in\RR^\nu}\sup_{{\phi\in\dom(\wt{\Upsilon}_0^{-1})\atop\|\phi\|=1}}
\EE[\|\wt{\Upsilon}_\tau\WW{\tau}{0}[\V{B}^{\V{y}}]\wt{\Upsilon}_0^{-1}\phi\|^{4}]^{\nf{1}{4}}
\EE[\|\Psi(\V{B}^{\V{x}}_\tau)\|^{p}]^{\nf{1}{p}}.
\end{align}
By virtue of Lem.~\ref{lem-spin4} (see also Rem.~\ref{rem-spin4}) and Hyp($\Upsilon$),
\begin{align}\label{aw}
\sup_{\V{y}\in\RR^\nu}
\EE[\|\wt{\Upsilon}_\tau\WW{\tau}{0}[\V{B}^{\V{y}}]\wt{\Upsilon}_0^{-1}\phi\|^{4}]^{\nf{1}{4}}
&\le7e^{4(1+\|\V{c}\|_{*,\infty}^2)\tau}\|\phi\|,\quad\phi\in\FHR,\,\tau\in[0,t].
\end{align}
Furthermore,
\begin{equation}\label{EEexp}
\EE[e^{\vr a|\V{B}_\tau|}]=\int_{\RR^\nu}p_1(\V{y},\V{0})e^{\vr a\tau^\eh|\V{y}|}\Id\V{y}
\le2^{\nf{\nu}{2}}e^{\vr^2a^2\tau},\quad\vr\ge0.
\end{equation}
In the case $p=q=\infty$, the term appearing in the last lines of \eqref{conny1} and \eqref{conny2}
 is thus bounded by $7\cdot2^{\nf{\nu}{2}}e^{8(1+\|\V{c}\|_{*,\infty}^2+a^2)\tau}\|\Psi\|_\infty$.
In the case $2\le p\le q<\infty$, we integrate
the $q$-th power of  \eqref{conny1} and \eqref{conny2} with respect to $\V{x}$ and use that
\begin{align}\nonumber
\int_{\RR^\nu}&\EE[\|\Psi(\V{B}^{\V{x}}_\tau)\|^{p}]^{\nf{q}{p}}\,\Id\V{x}
=\big\|e^{\tau\Delta/2}\|\Psi(\cdot)\|^p\big\|_{\nf{q}{p}}^{\nf{q}{p}}
\\\label{gandalf}
&\le c_{\nu,p,q}'\tau^{-q\nu(1-\nf{p}{q})/2p}\,\big\|\,\|\Psi(\cdot)\|^p\big\|_1^{\nf{q}{p}}
=c_{\nu,p,q}'\tau^{-q\nu(\nf{1}{p}-\nf{1}{q})/2}\,\|\Psi\|_p^{q},\quad\tau>0,
\end{align}
with a constant, $c_{\nu,p,q}'>0$, depending $\nu$, $p$, and $q$.
The case $2\le p<q=\infty$ is dealt with analogously, using
$$
\sup_{\V{x}\in\RR^\nu}\EE[\|\Psi(\V{B}^{\V{x}}_\tau)\|^{p}]=\sup_{\V{x}\in\RR^\nu}
\int_{\RR^\nu}p_\tau(\V{x},\V{y})\|\Psi(\V{y})\|^p\Id\V{y}\le(2\pi\tau)^{-\nf{\nu}{2}}\|\Psi\|_p^p,
\quad\tau>0.
$$
We may thus conclude that 
$(T_\tau^Ve^{-F}\hat{\Upsilon}_{t-\tau}^{-1}\Psi)(\V{x})\in\dom(\hat{\Upsilon}_t)$, 
for every $\V{x}\in\RR^\nu$, and obtain the following bounds, for all $2\le p\le q\le\infty$, 
\begin{align}\label{per1}
&\big\|e^F\hat{\Upsilon}_{t}T_\tau^V\hat{\Upsilon}_{t-\tau}^{-1}e^{-F}\big\|_{p,q}
\le c_{\nu,p,q}''\,\frac{e^{8(1+\|\V{c}\|_{*,\infty}^2+a^2)\tau}}{\tau^{\nu(\nf{1}{p}-\nf{1}{q})/2}}
\sup_{\V{z}\in\RR^\nu}\EE\big[e^{-8\int_0^\tau V(\V{B}_s^{\V{z}})\Id s}\big]^{\nf{1}{8}},
\\\nonumber
&\big\|1_Ke^{F}\hat{\Upsilon}_{t}(T^{V_n}_\tau-T^V_\tau)\hat{\Upsilon}_{t-\tau}^{-1}e^{-F}\big\|_{p,q}
\le c_{\nu,p,q}''\,\frac{e^{8(1+\|\V{c}\|_{*,\infty}^2+a^2)\tau}}{\tau^{\nu(\nf{1}{p}-\nf{1}{q})/2}}
\\\label{per2}
&\hspace{4.3cm}\cdot\sup_{{\V{z}\in K\atop \tilde{t}\in[0,\tau_2]}}\!\!
\EE\big[|e^{-\int_0^{\tilde{t}}V_n(\V{B}_s^{\V{z}})\Id s}-e^{-\int_0^{\tilde{t}}V(\V{B}_s^{\V{z}})\Id s}|^{8}
\big]^{\nf{1}{8}}.
\end{align}
{\em Step 2.} 
Again, let $2\le p\le q\le\infty$ and $\tau\in(0,t]$. 
By their definition, we may replace $\hat{\Upsilon}$ by $\wt{\Upsilon}$ in 
\eqref{per1} and \eqref{per2}.
Then, in the analogue of \eqref{per1}, the expression 
$e^F\wt{\Upsilon}_{t}T_\tau^V\wt{\Upsilon}_{t-\tau}^{-1}e^{-F}$ is a densely defined operator
in $L^p(\RR^\nu,\FHR)$ with domain 
$$
\big\{\Psi\in L^p(\RR^\nu,\FHR):\,\Psi\in\dom(\wt{\Upsilon}_{t-\tau}^{-1})\,\text{a.e.,}\,
\wt{\Upsilon}_{t-\tau}^{-1}\Psi\in L^p(\RR^\nu,\FHR)\big\};
$$
recall that $T^V_\tau$ was well-defined on $e^{-F}L^p(\RR^\nu,\FHR)$.
Hence, \eqref{per1} can be rephrased by saying that 
$\wt{\Upsilon}_{t}T_\tau^{V;(p,q),F}\wt{\Upsilon}_{t-\tau}^{-1}$ is a densely defined bounded
operator between the weighted spaces $L^p(\RR^\nu,\FHR;e^{pF}\Id\V{x})$ and 
$L^q(\RR^\nu,\FHR;e^{qF}\Id\V{x})$ with domain
$$
\sX:=\big\{\Psi\in L^p(\RR^\nu,\FHR;e^{pF}\Id\V{x}):\,
\Psi\in\dom(\wt{\Upsilon}_{t-\tau}^{-1})\,\text{a.e.,}\,
\wt{\Upsilon}_{t-\tau}^{-1}\Psi\in L^p(\RR^\nu,\FHR;e^{pF}\Id\V{x})\big\},
$$
whose norm is bounded by the right hand side of \eqref{per1};
recall the notation introduced for certain restrictions of $T_\tau^V$ in the parapraph
preceding Cor.~\ref{cor-SASG}. For all $\Psi\in\sX$ and 
$\Phi\in L^{q'}(\RR^\nu,\FHR;e^{-q'F}\Id\V{x})$ such that 
$\Phi\in\dom(\wt{\Upsilon}_t)$ a.e. and 
$\wt{\Upsilon}_t\Phi\in L^{q'}(\RR^\nu,\FHR;e^{-q'F}\Id\V{x})$,
we further infer from Cor.~\ref{cor-SASG} and Step~1 that
\begin{align*}
\Big|&\int_{\RR^\nu}\SPb{(T_\tau^{V;(q',p'),-F}\wt{\Upsilon}_t\Phi)(\V{x})}{
\wt{\Upsilon}_{t-\tau}^{-1}\Psi(\V{x})}\Id\V{x}\Big|
\\
&=
\Big|\int_{\RR^\nu}\SPb{\Phi(\V{x})}{
\big(\wt{\Upsilon}_{t}T_\tau^{V;(p,q),F}\wt{\Upsilon}_{t-\tau}^{-1}\Psi\big)(\V{x})}\Id\V{x}\Big|
\\
&\le \big\|e^F\wt{\Upsilon}_{t}T_\tau^V\wt{\Upsilon}_{t-\tau}^{-1}e^{-F}\big\|_{p,q}
\|\Phi\|_{L^{q'}(\RR^\nu,\FHR;e^{-q'F}\Id\V{x})}
\|\Psi\|_{L^{p}(\RR^\nu,\FHR;e^{pF}\Id\V{x})}.
\end{align*}
Observing also that $\hat{\Upsilon}_\tau=\wt{\Upsilon}_{t-\tau}^{-1}$, considered as a
densely defined operator in $L^{p'}(\RR^\nu,\FHR;e^{-p'F}\Id\V{x})$,
is the adjoint of itself, considered as a
densely defined operator in $L^{p}(\RR^\nu,\FHR;e^{pF}\Id\V{x})$,
we conclude that $T_\tau^{V;(q',p'),-F}\hat{\Upsilon}_{t-\tau}\Phi$ is in the domain of
$\hat{\Upsilon}_\tau$, acting in $L^{p'}(\RR^\nu,\FHR;e^{-p'F}\Id\V{x})$, and that
\begin{align*}
\big\|e^{-F}\hat{\Upsilon}_\tau T_\tau^{V}&\hat{\Upsilon}_{0}^{-1}e^F\big\|_{q',p'}
\le c_{\nu,p,q}''\,\frac{e^{8(1+\|\V{c}\|_{*,\infty}^2+a^2)\tau}}{\tau^{\nu(\nf{1}{p}-\nf{1}{q})/2}}
\sup_{\V{z}\in\RR^\nu}\EE\big[e^{-8\int_0^\tau V(\V{B}_s^{\V{z}})\Id s}\big]^{\nf{1}{8}},
\end{align*}
where $\nf{1}{p}-\nf{1}{q}=\nf{1}{q'}-\nf{1}{p'}$.
Altogether this proves \eqref{norm-T-F-Theta} in the case $1\le p\le q\le 2$.

{\em Step 3.} Let $1\le p<2< q\le\infty$. Then the semi-group property 
and the mapping properties of $T^V$ established in Steps~1 and~2 imply
\begin{align*}
\big\|e^F\Upsilon_tT_t^V{\Upsilon}_0^{-1}e^{-F}\big\|_{p,q}\le
\big\|e^F\Upsilon_tT^V_{\nf{t}{2}}{\Upsilon}_{\nf{t}{2}}^{-1}e^{-F}\big\|_{2,q}
\big\|e^F{\Upsilon}_{\nf{t}{2}}T_{\nf{t}{2}}^V{\Upsilon}_0^{-1}e^{-F}\big\|_{p,2}.
\end{align*}
Since Steps~1 and~2 also yield bounds on the first and second factor on the right hand side,
respectively, this completes the proof of Part~(1).

{\em Step 4.} Let $p\in[1,2)$ and $q\in[2,\infty]$. Analogously as in Step~3 we then obtain
\begin{align}\nonumber
\big\|1_Ke^F&\Upsilon_t(T^{V_n}_t-T^V_t){\Upsilon}_0^{-1}e^{-F}\big\|_{p,q}
\\\nonumber
&\le\big\|1_Ke^F\Upsilon_t(T^{V_n}_{\nf{t}{2}}-T^V_{\nf{t}{2}})
{\Upsilon}_{\nf{1}{2}}^{-1}e^{-F}\big\|_{2,q}
\big\|e^F{\Upsilon}_{\nf{t}{2}}T^{V}_{\nf{t}{2}}{\Upsilon}_0^{-1}e^{-F}\big\|_{p,2}
\\\label{conny7}
&\quad+\big\|1_Ke^{F-D}\Upsilon_tT^{V_n}_{\nf{t}{2}}{\Upsilon}_{\nf{t}{2}}^{-1}e^{-F+D}\big\|_{2,q}
\big\|e^{F-D}{\Upsilon}_{\nf{t}{2}}(T^{V_n}_{\nf{t}{2}}-T^V_{\nf{t}{2}})
{\Upsilon}_0^{-1}e^{-F}\big\|_{p,2},
\end{align}
where $D(\V{x}):=\dist(\V{x},K)$, $\V{x}\in\RR^\nu$, is globally Lipschitz continuous.
According to the first two steps, the term in the second line of \eqref{conny7} 
can be estimated using \eqref{per1} and \eqref{per2}. The duality arguments of Step~2
and the trivial bound $\|e^{-D}\|_{2,2}\le1$ further imply
\begin{align}\nonumber
\big\|e^{F-D}{\Upsilon}_{\nf{t}{2}}(T^{V_n}_{\nf{t}{2}}-T^V_{\nf{t}{2}})
{\Upsilon}_0^{-1}e^{-F}\big\|_{p,2}
&\le\big\|1_{K_R}e^{D-F}\wt{\Upsilon}_t
(T^{V_n}_{\nf{t}{2}}-T^V_{\nf{t}{2}})\wt{\Upsilon}_{\nf{t}{2}}^{-1}e^{F-D}\big\|_{2,p'}
\\\label{per3}
&+e^{-R}\big\|e^{-F+D}\wt{\Upsilon}_t(T^{V_n}_{\nf{t}{2}}-T^V_{\nf{t}{2}})
\wt{\Upsilon}_{\nf{t}{2}}^{-1}e^{F-D}\big\|_{2,p'},
\end{align}
where $K_R:=D^{-1}([0,R])$, $R\ge1$, so that $\|e^{-D}1_{K_R^c}\|_{p',p'}\le e^{-R}$.
By \eqref{approx3a} and \eqref{per1}, the norm in the second line of \eqref{per3} is
bounded uniformly in $n\in\NN$, all $F$ with Lipschitz constant $\le a$, and all coefficient vectors 
$\V{c}$ with $\|\V{c}\|_{*,\infty}\le A$. Therefore, the whole
term in the second line can be made arbitrarily small by choosing $R$ large.
Moreover, \eqref{per2} applies to the norm in the first line of the right hand side
of \eqref{per3} after we have fixed some large value of $R$.
Finally, we note that the first norm in the last line of \eqref{conny7} is bounded uniformly in 
$n\in\NN$ by \eqref{approx3a} and \eqref{per1} as well.

{\em Step 5.} We have already proved \eqref{LNCV1} in all cases except for
$1\le p\le q<2$. By duality, we have also proved \eqref{LNCV2} in all cases
except for $2<p\le q\le \infty$. 

We will now treat \eqref{LNCV2} in these missing cases. So, let $2<p\le q\le \infty$.
Define $K_R$ and $D$ as in Step~4.
On account of the already proven cases of \eqref{LNCV1} and $1_K=e^{-D}1_K$,
it is enough to show that
\begin{align*}
\sup_{{n\in\NN}}\sup_{t\in[\tau_1,\tau_2]}\big\|e^{-D}1_{K_R^c}e^{F+D}
\Upsilon_t(T_t^{V_n}-T^V_t){\Upsilon}_0^{-1}e^{-F-D}1_K\big\|_{p,q}\xrightarrow{\;\;R\to\infty\;\;}0,
\end{align*}
where the convergence should in addition be uniform in $F$ and $\V{c}$ as in the last assertion
of Part~(2). This is, however, obvious from 
$\|e^{-D}1_{K_R^c}\|\le e^{-R}$, \eqref{approx3a}, and \eqref{norm-T-F-Theta}.
Again by duality, these arguments can also be used to cover \eqref{LNCV1} 
in the remaining cases$1\le p\le q<2$.
\end{proof}

\begin{cor}\label{cor-WEK}
Let $V\in\cK_\pm(\RR^\nu)$. Let $(\Upsilon_t)_{t\ge0}$ and $\|\cdot\|_*$ be given by one of the 
lines in Table~1 and assume that $\V{c}$ fulfills {\rm Hyp($\Upsilon$)} given by the same line.
Then there exist a universal constant $c>0$ such that 
$\Ran(T^V_t(\V{x},\V{y}))\subset\dom(\Upsilon_{\nf{t}{12}})$ with
\begin{align*}
\|\Upsilon_{\nf{t}{12}}T^V_t(\V{x},\V{y})\|&\le c_\nu t^{-\nf{\nu}{2}}
e^{c\|\V{c}\|_{*,\infty}^2t-|\V{x}-\V{y}|^2/ct}\sup_{\V{z}\in\RR^\nu}
\EE\big[e^{8\int_0^tV_-(\V{B}_r^{\V{z}})\Id r}\big]^{\nf{3}{8}},
\end{align*}
for all $t>0$ and $\V{x},\V{y}\in\RR^\nu$.
\end{cor}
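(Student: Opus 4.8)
The plan is to deduce the bound on $\Upsilon_{\nf{t}{12}}T_t^V(\V{x},\V{y})$ from Theorem~\ref{thm-LNCV}(1) by combining it with the Chapman--Kolmogoroff equation and the pointwise Gaussian bounds on the kernel. First I would factorize the time: using Proposition~\ref{prop-CK} (the Chapman--Kolmogoroff equation \eqref{CK2}) we write
\begin{align*}
T_t^V(\V{x},\V{y})=\int_{\RR^{2\nu}}T_{\nf{t}{3}}^V(\V{x},\V{z}_1)\,T_{\nf{t}{3}}^V(\V{z}_1,\V{z}_2)\,
T_{\nf{t}{3}}^V(\V{z}_2,\V{y})\,\Id(\V{z}_1,\V{z}_2),
\end{align*}
and I would further split the first factor $T_{\nf{t}{3}}^V(\V{x},\V{z}_1)=\big(T_{\nf{t}{12}}^V T_{\nf{t}{4}}^V(\cdot,\V{z}_1)\big)(\V{x})$ via \eqref{CK1}, so that $\Upsilon_{\nf{t}{12}}$ is applied to the output of the operator $T_{\nf{t}{12}}^V$ acting in the $\V{x}$-variable. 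The point is that $\Upsilon_{\nf{t}{12}}T_{\nf{t}{12}}^V$ is a bounded operator between suitable weighted $L^p$-spaces by Theorem~\ref{thm-LNCV}(1), which allows us to pull the weight through the innermost expectation and onto the remaining kernel factors, which are then controlled by scalar Gaussian estimates.

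More concretely, the strategy for the operator norm estimate is as follows. Fix $\psi\in\FHR$ with $\|\psi\|\le1$ and consider $\Upsilon_{\nf{t}{12}}T_t^V(\V{x},\V{y})\psi$. Writing $\phi:=T_{\nf{t}{3}}^V(\V{z}_1,\V{z}_2)T_{\nf{t}{3}}^V(\V{z}_2,\V{y})\psi$ (an $\FHR$-valued function of $(\V{z}_1,\V{z}_2)$), by \eqref{intK} applied in reverse we have that $\RR^\nu\ni\V{w}\mapsto T_{\nf{t}{4}}^V(\V{w},\V{z}_1)\phi$ is the kernel representation of $T_{\nf{t}{4}}^V$ applied to $\delta$-like data localized at $\V{z}_1$; the cleanest route is to note that $\V{x}\mapsto\big(\Upsilon_{\nf{t}{12}}T_{\nf{t}{12}}^V\Psi\big)(\V{x})$, for $\Psi\in L^1(\RR^\nu,\FHR;e^{-a|\cdot|})$, has its $L^\infty$-norm (in $\V{x}$) controlled by $\big\|\Upsilon_{\nf{t}{12}}T_{\nf{t}{12}}^V\big\|_{1,\infty}\|\Psi\|_1$, and then to apply this with $\Psi=T_{\nf{t}{4}}^V(\cdot,\V{z}_1)\phi$, whose $L^1$-norm is estimated by $\int_{\RR^\nu}\|T_{\nf{t}{4}}^V(\V{w},\V{z}_1)\|\,\Id\V{w}\cdot\|\phi\|$. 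Using \eqref{T(x,y)-sym}, \eqref{dirk1} and the Chapman--Kolmogoroff property once more, the spatial integrations over $\V{w},\V{z}_1,\V{z}_2$ of products of scalar kernels $S_\bullet^V(\cdot,\cdot)$ reproduce (up to constants and powers of $t$) a single Gaussian $p_t(\V{x},\V{y})\EE[e^{-8\int_0^tV(\V{b})}]^{1/8}$-type factor. Finally, taking $F\equiv0$ in \eqref{norm-T-F-Theta} with $(p,q)=(1,\infty)$ gives $\big\|\Upsilon_{\nf{t}{12}}T_{\nf{t}{12}}^V\big\|_{1,\infty}\le c_\nu\,(t/12)^{-\nu/2}e^{8(1+\|\V{c}\|_{*,\infty}^2)(t/12)}\sup_{\V{z}}\EE[e^{8\int_0^{t/12}V_-(\V{B}_s^{\V{z}})}]^{1/4}$, and the three scalar kernel factors contribute $e^{ct-|\V{x}-\V{y}|^2/ct}$ together with the exponential moment raised to the remaining power $1/8+1/8=1/4$, which when multiplied with the $1/4$ from the weighted bound yields the stated exponent $3/8$ (after bounding $V$ by $V_-$ everywhere and using \eqref{dirk0} to convert bridge moments to Brownian moments).

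The routine but slightly delicate bookkeeping is to track the powers of $t$: each of the three scalar factors gives $t^{-\nu/2}$ from \eqref{dirk1}, each spatial integration over $\RR^\nu$ gives back a $t^{+\nu/2}$ (by the semi-group/convolution property of the heat kernel, i.e. $\int p_s(\V{x},\V{w})p_{s'}(\V{w},\V{y})\Id\V{w}=p_{s+s'}(\V{x},\V{y})$), the $(1,\infty)$-weighted bound gives one more $t^{-\nu/2}$; altogether the $t$-powers must collapse to the single factor $t^{-\nu/2}$ demanded in the statement, with all the absorbed Gaussians recombining into one $e^{-|\V{x}-\V{y}|^2/ct}$. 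One has to be mildly careful here because the Gaussian $e^{-(\V{x}-\V{y})^2/4s}$ in \eqref{dirk1} has a worse constant in the exponent than $p_s$ itself, but iterating the elementary inequality $e^{-u^2/4s}e^{-v^2/4s}\le e^{-(u+v)^2/8s}$ (a convexity/Cauchy--Schwarz bound) over the finitely many convolutions only costs a universal constant $c$, which is exactly what the statement allows.

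The main obstacle I anticipate is not any single hard estimate but rather getting the weighted operator bound $\big\|\Upsilon_{\nf{t}{12}}T_{\nf{t}{12}}^V\big\|_{1,\infty}$ to interact correctly with the kernel representation: one must justify that applying the bounded operator $\Upsilon_{\nf{t}{12}}T_{\nf{t}{12}}^V$ to the $L^1$-function $\Psi=T_{\nf{t}{4}}^V(\cdot,\V{z}_1)\phi$ and then evaluating at $\V{x}$ indeed reproduces $\Upsilon_{\nf{t}{12}}$ applied to the kernel $T_{\nf{t}{3}}^V(\V{x},\V{z}_1)\phi$ — in particular that $T_{\nf{t}{3}}^V(\V{x},\V{z}_1)\phi$ lands in $\dom(\Upsilon_{\nf{t}{12}})$ for a.e.\ $\V{x}$ and that the bound is pointwise in $\V{x}$ rather than merely in $L^\infty$. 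This is handled by appealing to the $L^1$-to-$L^\infty$ smoothing together with the continuity of the kernel $(t,\V{x},\V{y})\mapsto T_t^V(\V{x},\V{y})$ (Prop.~\ref{lem-T(x,y)-sym}, Prop.~\ref{prop-ida}), so that the a.e.\ statement upgrades to a genuine pointwise one; since the statement is for fixed $(t,\V{x},\V{y})$ this is legitimate, but it needs to be said explicitly. With that in place the proof reduces to assembling the pieces and carrying out the elementary $t$-power and Gaussian arithmetic described above.
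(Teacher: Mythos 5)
Your overall skeleton (Chapman--Kolmogoroff splitting, the weighted $(1,\infty)$ bound of Thm.~\ref{thm-LNCV}(1), and the scalar bounds \eqref{T(x,y)-sym}, \eqref{dirk1}) is the same one the paper uses, but there is a genuine gap at the one point that actually matters: the Gaussian factor $e^{-|\V{x}-\V{y}|^2/ct}$ cannot be produced by your scheme. You invoke \eqref{norm-T-F-Theta} with $F\equiv0$ and $(p,q)=(1,\infty)$. That bound is uniform in the evaluation point, so when you estimate
$\|\Upsilon_{\nf{t}{12}}T^V_{\nf{t}{3}}(\V{x},\V{z}_1)\phi\|\le\|\Upsilon_{\nf{t}{12}}T^V_{\nf{t}{12}}\|_{1,\infty}\int\|T^V_{\nf{t}{4}}(\V{w},\V{z}_1)\|\Id\V{w}\,\|\phi\|$,
the right-hand side carries no decay in $|\V{x}-\V{z}_1|$ at all: the $\V{w}$-integration washes out the Gaussian of the inner kernel, and the $(1,\infty)$-norm forgets where you evaluate. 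Consequently the chain linking $\V{x}$ to $\V{y}$ is broken at its first link, and after the $\V{z}_1,\V{z}_2$ integrations you obtain a bound that is uniform in $\V{x}$; the announced "recombination" $e^{-u^2/4s}e^{-v^2/4s}\le e^{-(u+v)^2/8s}$ only works if every link carries a Gaussian, which yours does not. The missing idea -- and the heart of the paper's proof -- is to use the Lipschitz-weighted version of \eqref{norm-T-F-Theta} with $F_{\V{x}}(\V{z})=a|\V{z}-\V{x}|$ and to choose the slope depending on the data, $a=|\V{x}-\V{y}|/2(t-s)$. One then pays the penalty $e^{8a^2s}$ from the weighted operator bound, gains $e^{-a|\V{x}-\V{z}|}$ inside the spatial integral $\int e^{-a|\V{x}-\V{z}|}\,p_{t-s}(\V{y},\V{z})\EE[\cdots]\Id\V{z}$ (estimated by Cauchy--Schwarz against $S^{2V}_{t-s}$), and balances: the net exponent is of the form $\frac{|\V{x}-\V{y}|^2}{t-s}\big(\frac{2s}{t-s}-\frac14\big)$, which is negative only if $s$ is a sufficiently small fraction of $t$ -- this is exactly why the weight index in the statement is $\nf{t}{12}$ rather than, say, $\nf{t}{2}$ or $\nf{t}{3}$. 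Note also that the paper needs only one splitting $T^V_t(\V{x},\V{y})\psi=T^V_s\big(T^V_{t-s}(\cdot,\V{y})\psi\big)(\V{x})$ via \eqref{CK1}; your three-fold factorization adds nothing once the weight trick is in place.

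Two smaller points. First, your moment bookkeeping does not yield the stated exponent: multiplying a factor $\sup_{\V{z}}\EE[e^{8\int V_-}]^{\nf14}$ from the weighted bound with further factors of power $\nf14$ gives $\nf12$, not $\nf38$ (and since these expectations are $\ge1$ you cannot lower the exponent afterwards); the paper gets $\nf38=\nf14+\nf18$ because the Cauchy--Schwarz step produces $\big(\int S^{2V}_{t-s}(\V{z},\V{y})\Id\V{z}\big)^{\nf12}\le\EE[e^{8\int_0^tV_-(\V{B}^{\V{y}}_r)}]^{\nf18}$. Second, your concern about upgrading the a.e.\ identity $(\,T^V_s\Psi)(\V{x})=\int T^V_s(\V{x},\V{z})\Psi(\V{z})\Id\V{z}$ to the fixed point $\V{x}$ is unnecessary: \eqref{intK} and the duality argument in the proof of Thm.~\ref{thm-LNCV} already give pointwise statements for every $\V{x}$, so no appeal to kernel continuity (which is only proved later, and in part relies on this corollary) is needed or appropriate.
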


\begin{proof}
Set $F_{\V{x}}(\V{z}):=a|\V{z}-\V{x}|$, $\V{z}\in\RR^\nu$,
for some $a>0$. By virtue of Prop.~\ref{prop-CK} we can then write, for any $s\in(0,t)$,
\begin{align*}
&\|\Upsilon_sT^V_t(\V{x},\V{y})\|
=\sup_{\|\psi\|=1}\big\|(e^{-F_{\V{x}}}\Upsilon_tT^V_{s})(T_{t-s}^V(\cdot,\V{y})\psi)(\V{x})\big\|
\\
&\le\|e^{-F_{\V{\V{x}}}}\Upsilon_sT^V_{s}e^{F_{\V{x}}}\|_{1,\infty}
\int_{\RR^\nu}e^{-F_{\V{x}}(\V{z})}\|T_{t-s}^V(\V{z},\V{y})\|\Id\V{z}
\\
&\le c_\nu s^{-\nf{\nu}{2}}{e^{8(1+\|\V{c}\|_{*,\infty}^2+a^2)s}}\sup_{\V{z}\in\RR^\nu}
\EE\big[e^{-8\int_0^sV(\V{B}_r^{\V{z}})\Id r}\big]^{\nf{1}{4}}
\int_{\RR^\nu}e^{-F_{\V{x}}(\V{z})}\|T_{t-s}^V(\V{z},\V{y})\|\Id\V{z},
\end{align*}
where we used \eqref{norm-T-F-Theta} in the last estimate. Furthermore, \eqref{T(x,y)-sym} yields
\begin{align*}
&\int_{\RR^\nu}e^{-F_{\V{x}}(\V{z})}\|T_{t-s}^V(\V{z},\V{y})\|\Id\V{z}
\\
&=\int_{\RR^\nu}e^{-a|\V{x}-\V{z}|+\|\mho\|_\infty(t-s)}p_{t-s}(\V{y},\V{z})
\EE\big[e^{-\int_0^{t-s}V(\V{b}_r^{t-s;\V{y},\V{z}})\Id r}\big]\Id\V{z}
\\
&\le e^{\|\mho\|_\infty t}\Big(\int_{\RR^\nu}\!\!e^{-2F_{\V{x}}(\V{z})}
p_{t-s}(\V{y},\V{z})\Id\V{z}\Big)^{\frac{1}{2}}
\Big(\int_{\RR^\nu}\!p_{t-s}(\V{y},\V{z})\EE\big[e^{-\int_0^{t-s}V(\V{b}_r^{t-s;\V{y},\V{z}})\Id r}
\big]^2\Id\V{z}\Big)^{\frac{1}{2}}
\\
&\le e^{\|\mho\|_\infty t}\Big((2\pi(t-s))^{-\nf{\nu}{2}}\int_{\RR^\nu}\!
e^{-2a|\V{z}|-\frac{|\V{z}+\V{x}-\V{y}|^2}{2(t-s)}}\Id\V{z}\Big)^{\frac{1}{2}}
\Big(\int_{\RR^\nu}S_{t-s}^{2V}(\V{z},\V{y})\Id\V{z}\Big)^{\frac{1}{2}}
\end{align*}
On account of $S_{t-s}^{2V}(\V{z},\V{y})=S_{t-s}^{2V}(\V{y},\V{z})$ and \eqref{def-StV(x,y)1}
we further observe that
\begin{align*}
\int_{\RR^\nu}S_{t-s}^{2V}(\V{z},\V{y})\Id\V{z}=(S_{t-s}^{2V}1)(\V{y})
\le\EE\big[e^{8\int_0^tV_-(\V{B}_r^{\V{y}})\Id r}\big]^{\nf{1}{4}}.
\end{align*}
Choosing $a:=|\V{x}-\V{y}|/2(t-s)$, we finally have
$$
2a|\V{z}|+\frac{|\V{z}+
\V{x}-\V{y}|^2}{2(t-s)}\ge\frac{|\V{x}-\V{y}||\V{z}|}{(t-s)}+\frac{(|\V{z}|-|\V{x}-\V{y}|)^2}{2(t-s)}
=\frac{|\V{z}|^2}{2(t-s)}+\frac{|\V{x}-\V{y}|^2}{2(t-s)}.
$$
Inserting this bound and choosing $s:=t/8$ we arrive at the asserted estimate.
\end{proof}

\begin{cor}\label{cor-SCV}
Let $V\in\cK_{\pm}(\RR^\nu)$, let $V_n\in\cK(\RR^\nu)$, $n\in\NN$, satisfy \eqref{approx1}
and \eqref{approx2}, and let $F:\RR^\nu\to\RR$ be globally Lipschitz continuous. 
Let $(\Upsilon_t)_{t\ge0}$ and $\|\cdot\|_*$ be given by one of the lines 
in Table~1 and assume that $\V{c}$ fulfills {\rm Hyp($\Upsilon$)} given by the same line.
Furthermore, let $p\in[1,\infty)$ be finite and $q\in[p,\infty]$. Then
\begin{align}\label{stefan}
\lim_{n\to\infty}\sup_{t\in[\tau_1,\tau_2]}
\big\|e^F\Upsilon_t(T^{V_n}_t-T^V_t){\Upsilon}_0^{-1}e^{-F}\Psi\big\|_{q}=0,
\end{align}
for all $\tau_2>\tau_1>0$ and $\Psi\in L^p(\RR^\nu,\FHR)$. 
In the case $p=q$, the value $\tau_1=0$ is allowed for in \eqref{stefan} as well.
The convergence \eqref{stefan} is in fact uniform in all Lipschitz 
continuous $F$ with Lipschitz constant $\le a$ and all coefficient vectors $\V{c}$ satisfying
{\rm Hyp($\Upsilon$)} and $\|\V{c}\|_{*,\infty}\le A$, for some given $a,A\in(0,\infty)$.
\end{cor}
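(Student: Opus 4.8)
The plan is to deduce Corollary~\ref{cor-SCV} from Theorem~\ref{thm-LNCV}(2) by an approximation argument in the $L^p$-variable, very much in the spirit of the usual passage from operator norm bounds to strong convergence for Schrödinger semigroups.

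\emph{Step 1: Reduction to compactly supported $\Psi$.} Fix $\Psi\in L^p(\RR^\nu,\FHR)$ with $p<\infty$ and let $\ve>0$. Since compactly supported functions are dense in $L^p(\RR^\nu,\FHR)$ (as $p<\infty$), choose $\Psi_K:=1_K\Psi$ with $K\subset\RR^\nu$ compact so that $\|\Psi-\Psi_K\|_p<\ve$. Write
\begin{align*}
e^F\Upsilon_t(T^{V_n}_t-T^V_t)\Upsilon_0^{-1}e^{-F}\Psi
&=e^F\Upsilon_t(T^{V_n}_t-T^V_t)\Upsilon_0^{-1}e^{-F}(\Psi-\Psi_K)
\\
&\quad+e^F\Upsilon_t(T^{V_n}_t-T^V_t)\Upsilon_0^{-1}e^{-F}\Psi_K.
\end{align*}
The first term is bounded in $L^q$-norm by $(\|e^F\Upsilon_tT^{V_n}_t\Upsilon_0^{-1}e^{-F}\|_{p,q}+\|e^F\Upsilon_tT^{V}_t\Upsilon_0^{-1}e^{-F}\|_{p,q})\,\ve$, and by \eqref{norm-T-F-Theta} together with \eqref{approx3a} this prefactor is bounded uniformly in $n\in\NN$, in $t\in[\tau_1,\tau_2]$, in all $F$ with $\mathrm{Lip}(F)\le a$, and in all $\V{c}$ with $\|\V{c}\|_{*,\infty}\le A$; in the case $p=q$ the same is true with $\tau_1=0$ allowed (using the $\tau_1=0$ version of \eqref{norm-T-F-Theta}, which holds for $p=q$ as noted after \eqref{bd-StV}). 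Hence this term contributes at most $c\,\ve$ with $c$ independent of all the relevant parameters.

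\emph{Step 2: The compactly supported part via Theorem~\ref{thm-LNCV}(2).} For the second term, $1_K=1_K1_{K'}$ for any slightly larger compact $K'\supset K$, so
\begin{align*}
\big\|e^F\Upsilon_t(T^{V_n}_t-T^V_t)\Upsilon_0^{-1}e^{-F}\Psi_K\big\|_q
\le\big\|e^F\Upsilon_t(T^{V_n}_t-T^V_t)\Upsilon_0^{-1}e^{-F}1_{K}\big\|_{p,q}\,\|\Psi_K\|_p,
\end{align*}
and by \eqref{LNCV2} the operator norm on the right tends to $0$ as $n\to\infty$, uniformly in $t\in[\tau_1,\tau_2]$ (with $\tau_1=0$ admissible when $p=q$), and uniformly in $F$ and $\V{c}$ as described. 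Combining Steps~1 and~2, for $n$ large enough
\begin{align*}
\sup_{t\in[\tau_1,\tau_2]}\big\|e^F\Upsilon_t(T^{V_n}_t-T^V_t)\Upsilon_0^{-1}e^{-F}\Psi\big\|_q
\le c\,\ve+\|\Psi\|_p\sup_{t\in[\tau_1,\tau_2]}\big\|e^F\Upsilon_t(T^{V_n}_t-T^V_t)\Upsilon_0^{-1}e^{-F}1_{K}\big\|_{p,q},
\end{align*}
and since $\ve>0$ was arbitrary and the constant $c$ does not depend on $F$ or $\V{c}$, this proves \eqref{stefan} with the asserted uniformity.

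\emph{Main obstacle.} There is no serious obstacle here: the corollary is essentially a soft consequence of Theorem~\ref{thm-LNCV}(2) once one observes that $L^p$ with $p<\infty$ has compactly supported functions dense, which is exactly why the hypothesis requires $p$ finite. The only point requiring a little care is to keep the uniformity in $F$ and $\V{c}$ through Step~1, which is where one must invoke \eqref{approx3a} (to bound $\sup_{\V{z}}\EE[e^{8\int_0^tV_{n,-}(\V{B}_s^{\V{z}})\Id s}]$ uniformly in $n$) so that the prefactor coming from \eqref{norm-T-F-Theta} is controlled by $e^{c(1+A^2+a^2)\tau_2}$ and by a fixed power of $\tau_1^{-1}$ (or is bounded outright when $p=q$). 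The case $q=\infty$ is included automatically since \eqref{norm-T-F-Theta} and \eqref{LNCV2} both allow $q=\infty$.
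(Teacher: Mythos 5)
Your proof is correct and follows essentially the same route as the paper: split $\Psi$ into a compactly supported piece handled by \eqref{LNCV2} and a small remainder controlled by the uniform weighted norm bound \eqref{norm-T-F-Theta} (made $n$-uniform via \eqref{approx2} and \eqref{approx3a}). The paper's proof is just a condensed version of exactly this argument, with $K=\ol{B}_R$.
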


\begin{proof}
Set $\ol{B}_R:=\{\V{x}\in\RR^\nu:|\V{x}|\le R\}$, $R>0$. Then $\|1_{\ol{B}_R^c}\Psi\|_p\to0$,
$R\to\infty$, and the claim follows easily from \eqref{norm-T-F-Theta} and \eqref{LNCV2} 
with $K=\ol{B}_R$.
\end{proof}

%%%%%%%%%%%%%%%%%%%%%%%%%%%%%%%%%%%%%%%%%%%
%%%%%%%%%%%%%%%%%%%%%%%%%%%%%%%%%%%%%%%%%%%
%%%%%%%%%%%%%%%%%%%%%%%%%%%%%%%%%%%%%%%%%%%

\section{Continuity in the coupling functions}\label{sec-coup}

\noindent
In the following theorem we complement the discussion of the previous section by considering 
the behavior of the semi-group under perturbations of the coefficient vector $\V{c}$. 
Thanks to its last assertion, the following theorem can be combined with 
Thm.~\ref{thm-LNCV}(2), where the convergences are uniform in coefficient vectors
as in the statement of Thm.~\ref{thm-coup}.

\begin{thm}\label{thm-coup}
Let $V\in\cK_\pm(\RR^\nu)$, $F:\RR^\nu\to\RR$ be globally Lipschitz continuous with Lipschitz
constant $a\ge0$, and $1\le p\le q\le\infty$. 
Let $(\Upsilon_t)_{t\ge0}$ and $\|\cdot\|_*$ be given by one of the lines in Table~1. 
Assume that $\V{c}=(\V{G},q,\V{F})$ and $\V{c}_n=({\V{G}}_n,{q}_n,{\V{F}}_n)$, $n\in\NN$, 
are coefficient vectors satisfying
Hyp.~\ref{hyp-G} with the same $\omega$ and the same conjugation $C$ and {\rm Hyp($\Upsilon$)}.
Assume further that $\|\V{c}_{\V{x}}\|,\|\V{c}_{n,\V{x}}\|_*\le A$, for some $A\in[1,\infty)$, 
all $\V{x}\in\RR^\nu$, and all $n\in\NN$, and that
\begin{align}\label{marie1}
%\lim_{t\downarrow0}\sup_{n\in\NN}\sup_{\V{x}\in\RR^\nu}\EE\Big[
%\int_0^t\big\{\|\V{G}_{n,\V{B}_s^{\V{x}}}\|_*^2+\|(q_{n,\V{B}_s^{\V{x}}},\V{F}_{n,\V{B}_s^{\V{x}}})\|_*
%\big\}\Id s\Big]&=0,
%\\
\lim_{n\to\infty}\sup_{\V{x}\in K}\|\V{c}_{n,\V{x}}-\V{c}_{\V{x}}\|_*&=0,
\end{align}
for all compact $K\subset\RR^\nu$.
Let $({T}_t^{V,n})_{t\ge0}$ denote the semi-group defined by means of ${\V{c}}_n$. Then
\begin{align}\label{marie3}
\lim_{n\to\infty}\sup_{t\in[\tau_1,\tau_2]}
\big\|1_Ke^F\Upsilon_t(T^V_t-{T}^{V,n}_t){\Upsilon}_0^{-1}e^{-F}\big\|_{p,q}&=0,
\\\label{marie4}
\lim_{n\to\infty}\sup_{t\in[\tau_1,\tau_2]}
\big\|e^F\Upsilon_t(T^V_t-{T}^{V,n}_t){\Upsilon}_0^{-1}e^{-F}1_K\big\|_{p,q}&=0,
\end{align}
for all  $1\le p\le q\le\infty$, $0<\tau_1\le\tau_2$, and
compact $K\subset\RR^\nu$. If $K$ can be replaced by $\RR^\nu$ in \eqref{marie1},
then it may also be replaced by $\RR^\nu$ in \eqref{marie3} and \eqref{marie4}.
If $p=q$, then the choice $\tau_1=0$ is allowed for in \eqref{marie3} and \eqref{marie4} as well.
In all cases, the convergences in \eqref{marie3} and \eqref{marie4} are uniform
in all globally Lipschitz continuous $F:\RR^\nu\to\RR$ with Lipschitz constant $\le a$ and all
$V\in\cK_\pm(\RR^\nu)$ satisfying 
$\sup_{t\in[\tau_1,\tau_2]}\sup_{\V{x}\in\RR^\nu}\EE[e^{-\int_0^{t}V(\V{B}_s^{\V{x}})\Id s}]\le D$,
for some fixed $a,D\in(0,\infty)$. 
\end{thm}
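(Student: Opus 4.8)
The strategy is to reduce \eqref{marie3}--\eqref{marie4} to the weighted BDG-type estimate of Lemma~\ref{lem-maria}, mimicking the scheme used in the proof of Theorem~\ref{thm-LNCV}. The difference $T_t^V-T_t^{V,n}$ now comes from the difference of two solution processes $\WW{}{V}[\V{B}^{\V{x}}]$ and $\WW{}{V,n}[\V{B}^{\V{x}}]$ built from $\V{c}$ and $\V{c}_n$ with the \emph{same} potential $V$ and the \emph{same} driving Brownian motion, so only the coefficient vectors move. First I would record the pointwise identity, for $\Psi\in L^p$, $t>0$, $\V{x}\in\RR^\nu$, and $\phi\in\dom(\hat\Upsilon_t)$,
\begin{align*}
\SPb{\hat\Upsilon_t\phi}{\big((T_t^V-T_t^{V,n})e^{-F}\hat\Upsilon_{0}^{-1}\Psi\big)(\V{x})}
=\EE\big[e^{-\int_0^tV(\V{B}_s^{\V{x}})\Id s}\SPb{\wt\Upsilon_t\big(\WW{t}{0}[\V{B}^{\V{x}}]-\WW{t}{0,n}[\V{B}^{\V{x}}]\big)\wt\Upsilon_0^{-1}\phi}{\Psi(\V{B}_t^{\V{x}})}e^{-F(\V{B}_t^{\V{x}})}\big],
\end{align*}
where $\wt\Upsilon_s:=\hat\Upsilon_{t-s}^{-1}$ as in the proof of Theorem~\ref{thm-LNCV}, and one keeps the same trick of replacing the genuinely time-dependent exponential weight by a rescaled one when $t>t_*$. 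The $V$-factor is handled exactly as before by H\"older, pulling out $\sup_{\V{z}}\EE[e^{-8\int_0^tV(\V{B}_s^{\V{z}})\Id s}]^{\nf18}$ (which is $\le D^{\nf14}$ on the quoted class), and the Gaussian factor $\EE[e^{aq_1|\V{B}_t|}]^{\nf1{q_1}}$ is bounded via \eqref{EEexp}.

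The new ingredient is the middle factor, which must be estimated by Lemma~\ref{lem-maria}: applied with $\wt V=V$, $\tilde{\V{c}}=\V{c}_n$, $\eta=\wt\Upsilon_0^{-1}\phi$, $\Theta_{0,s}^{(\alpha)}=\wt\Upsilon_s$ (or $\Xi_{0,s}^{(\delta)}=\wt\Upsilon_s$ in the exponential cases), it gives
\begin{align*}
\EE\big[\sup_{s\le t}\big\|\wt\Upsilon_s\big(\WW{s}{0}[\V{B}^{\V{x}}]-\WW{s}{0,n}[\V{B}^{\V{x}}]\big)\wt\Upsilon_0^{-1}\phi\big\|^p\big]
\le c_{p,t}\,\EE\big[\sup_{s\le t}d_p(s)^{2p}\big]^{\eh}\big(1+t^{\nf p2}\|\V{G}_n\|_{*,\infty}^p\big)\,\|\phi\|^p,
\end{align*}
with $c_{p,t}$ depending only on $p$, $t$, $A$, $\|V\|$ — but note $V$ is only Kato here, not bounded, so Lemma~\ref{lem-maria} is not directly applicable. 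Hence the actual order of operations must be: (i) first prove \eqref{marie3}--\eqref{marie4} for \emph{bounded continuous} $V$, where Lemma~\ref{lem-maria} applies verbatim; (ii) then, because all weighted kernels are uniformly dominated via Corollary~\ref{cor-WEK} and the constants in step (i) are controlled solely by $\sup_{\V{x}}\EE[e^{-\int_0^tV(\V{B}_s^{\V{x}})}]\le D$, extend to Kato decomposable $V$ by the approximation scheme of Step~2 of the proof of Lemma~\ref{lem-bridget1} together with dominated convergence, exactly as \eqref{CK2} was extended to Kato $V$ in the proof of Proposition~\ref{prop-CK}. The hypothesis \eqref{marie1} enters only through $\EE[\sup_{s\le t}d_p(s)^{2p}]$: on the set $\{\V{B}_s^{\V{x}}\in K\}$ one has $d_p(s)\le c(A)\sup_{\V{y}\in K}\|\V{c}_{n,\V{y}}-\V{c}_{\V{y}}\|_*\to0$ by \eqref{marie1}, while on $\{\V{B}_s^{\V{x}}\notin K\}$ one uses $d_p(s)\le c(A)$ and the smallness of $\PP\{\sup_{s\le t}|\V{B}_s^{\V{x}}-\V{x}|\ge\dist(\V{x},K^c)\}$; for \eqref{marie3} one multiplies by $1_K(\V{x})$ so $\V{x}\in K$ and can enlarge $K$ slightly, and for \eqref{marie4} one inserts the distance weight $e^{\pm D_R}$ with $D_R(\V{x})=\dist(\V{x},K_R)$ and splits as in Steps~4--5 of the proof of Theorem~\ref{thm-LNCV}, using \eqref{norm-T-F-Theta} and the uniform domination to kill the tail $e^{-R}$.

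The passage between the ranges of $p,q$ is done exactly as in Theorem~\ref{thm-LNCV}: first establish the bilinear estimate for $2\le p\le q\le\infty$ by integrating the $q$-th power against $\Id\V{x}$ and using the heat-kernel smoothing bound \eqref{gandalf}; then pass to $1\le p\le q\le2$ by the duality argument of Step~2 (here one needs $(T_t^{V,n})^*=T_t^{V,n;(q',p'),-F}$, which is Corollary~\ref{cor-SASG} applied with $\V{c}_n$, valid because $\V{c}_n$ satisfies Hyp.~\ref{hyp-G}); and finally the mixed range $1\le p<2<q\le\infty$ by splitting at time $t/2$ and composing the two halves, one estimated by Theorem~\ref{thm-LNCV}-type bounds and one by the just-proved convergence. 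The main obstacle I anticipate is purely bookkeeping rather than conceptual: ensuring that \emph{all} constants produced by Lemma~\ref{lem-maria}, by \eqref{norm-T-F-Theta}, and by the duality reshuffling are controlled uniformly in $n$, in $F$ (through $a$ only), in $\V{c},\V{c}_n$ (through $A$ and Hyp($\Upsilon$) only), and in $V$ (through $D$ only), so that the final statement's uniformity claims genuinely follow — this is why one must be careful to apply Lemma~\ref{lem-maria} with $d_p$ estimated by $A$-dependent constants times $\sup_K\|\V{c}_n-\V{c}\|_*$, and to carry the $e^{\pm D_R}$ weights through the duality step without ever invoking $n$- or $V$-dependent quantities. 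The uniform integrability estimate $\EE[\sup_{s\le t}|\V{B}_s^{\V{x}}-\V{x}|^{2p}]\le c_{p}t^{p}$, needed to make the off-$K$ contribution to $\EE[\sup_s d_p(s)^{2p}]$ small, is standard (Doob's maximal inequality) and I would not spell it out.
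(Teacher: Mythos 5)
Your core strategy is the paper's: factor the potential out of the Feynman--Kac integrand via \eqref{def-WWV}, so that $T^V_\tau-T^{V,n}_\tau$ is governed by the difference of the \emph{potential-free} flows $\WW{\tau}{0}[\V{B}^{\V{x}}]-\mathbb{W}_\tau^{0,n}[\V{B}^{\V{x}}]$ multiplied by the common scalar $e^{-\int_0^\tau V(\V{B}_s^{\V{x}})\Id s}$, apply H\"older exactly as in Step~1 of the proof of Thm.~\ref{thm-LNCV}, control the flow difference by Lem.~\ref{lem-maria}, use \eqref{marie1} to send $\EE[\sup_{s\le\tau}\|\V{c}_{n,\V{B}_s^{\V{y}}}-\V{c}_{\V{B}_s^{\V{y}}}\|_*^{2p}]$ to zero (the paper does this by noting that, pathwise, $\{\V{B}_s^{\V{y}}(\vgamma):s\le\tau,\V{y}\in K\}$ is compact and invoking dominated convergence, which is equivalent to your exit-probability splitting), and then rerun the remaining steps of the proof of \eqref{LNCV1}--\eqref{LNCV2} (duality, time-splitting for the mixed $p,q$ range, the $e^{\pm D_R}$ weights). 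Up to this point you and the paper coincide.

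The genuine problem is the inconsistency in the middle of your argument and the detour it produces. Having written your identity with the $V$-free flows and the scalar $e^{-\int_0^tV}$ outside, the correct invocation of Lem.~\ref{lem-maria} is with $V=\wt V=0$, not with $\wt V=V$; then the boundedness hypothesis on the potentials in Lem.~\ref{lem-maria} is trivially satisfied, the Kato class of $V$ never interferes, and your stage (ii) -- proving the result first for bounded continuous $V$ and then extending by the approximation scheme of Step~2 of Lem.~\ref{lem-bridget1} -- is superfluous: $V$ enters only through the scalar factor, which H\"older turns into $\sup_{\V{z}}\EE[e^{-8\int_0^\tau V(\V{B}_s^{\V{z}})\Id s}]^{\nf{1}{8}}$, directly for Kato decomposable $V$. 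This is not merely a matter of economy. The constant in Lem.~\ref{lem-maria} contains $e^{c\|V\|_\infty pt}$, and the rate at which $T^{V_m}_t$ approximates $T^V_t$ for a bounded truncation $V_m$ of a Kato potential depends on $V$ itself; consequently, along your two-stage route the last assertion of the theorem -- uniformity of \eqref{marie3}--\eqref{marie4} over all $V\in\cK_\pm(\RR^\nu)$ with $\sup_{t\in[\tau_1,\tau_2]}\sup_{\V{x}}\EE[e^{-\int_0^tV(\V{B}_s^{\V{x}})\Id s}]\le D$ -- would not follow, since the choice of the approximating $V_m$ and the associated rates are $V$-dependent. With the application of Lem.~\ref{lem-maria} corrected to $V=\wt V=0$, all constants depend on $(p,q,\nu,a,A,\tau_1,\tau_2,D)$ only, the detour disappears, and the rest of your outline goes through as you describe.
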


\begin{proof}
We shall only explain a substitute of Step~1 of the proof of Thm.~\ref{thm-LNCV},
as the remaining steps of this proof are completely analogous to the proof of \eqref{LNCV1} 
and \eqref{LNCV2}. 
To this end, we let $({\mathbb{W}}_\tau^{0,n}[\V{B}^{\V{x}}])_{\tau\ge0}$, $\V{x}\in\RR^\nu$, 
denote the solution processes of Thm.~\ref{thm-WW} defined by means of ${\V{c}}_n$.
Again, we fix $t>0$ and define $(\hat{\Upsilon}_s)_{s\in[0,t]}$ and 
$\wt{\Upsilon}_s:=\hat{\Upsilon}_{t-s}^{-1}$, 
$s\in[0,t]$, in the same way as in the beginning of the proof of Thm.~\ref{thm-LNCV}.

Assume first that $p\ge2$. Define $q_1>0$ by $q_1^{-1}=1-p^{-1}-4^{-1}-8^{-1}$.
For all $\V{x}\in K\subset\RR^\nu$ and $\Psi\in L^p(\RR^\nu,\FHR)$
with $\Psi(\V{z})\in\dom(\hat{\Upsilon}_{t-\tau}^{-1})$, a.e. $\V{z}$, and
$\hat{\Upsilon}_{t-\tau}^{-1}\Psi\in L^p(\RR^\nu,\FHR)$, we then obtain
\begin{align}\nonumber
\big\|\big(e^F\hat{\Upsilon}_t(T^V_\tau&-{T}^{V,n}_\tau)
\hat{\Upsilon}_{t-\tau}^{-1}e^{-F}\Psi\big)(\V{x})\big\|
\\\nonumber
\le\sup_{{\phi\in\dom(\wt{\Upsilon}_0^{-1})\atop\|\phi\|=1}}&
\EE\big[e^{F(\V{x})-F(\V{B}_t^{\V{x}})-\int_0^tV(\V{B}_s^{\V{x}})\Id s}
\|\wt{\Upsilon}_\tau(\WW{\tau}{0}[\V{B}^{\V{x}}]-{\mathbb{W}}_\tau^{0,n}[\V{B}^{\V{x}}])
\wt{\Upsilon}_0^{-1}\phi\|\|\Psi(\V{B}_\tau^{\V{x}})\|\big]
\\\nonumber
\le\sup_{{\phi\in\dom(\wt{\Upsilon}_0^{-1})\atop\|\phi\|=1}}
&\sup_{\V{y}\in K}\EE\big[\|\wt{\Upsilon}_\tau(\WW{\tau}{0}[\V{B}^{\V{y}}]
-{\mathbb{W}}_\tau^{0,n}[\V{B}^{\V{y}}])\wt{\Upsilon}_0^{-1}\phi\|^4\big]^{\nf{1}{4}}
\\\label{per6}
&\cdot
\EE[e^{aq_1|\V{B}_\tau|}]^{\nf{1}{q_1}}\sup_{\V{y}\in\RR^\nu}
\EE[e^{-8\int_0^\tau V(\V{B}_s^{\V{y}})\Id s}]^{\nf{1}{8}}
\EE[\|\Psi(\V{B}_\tau^{\V{x}})\|^p]^{\nf{1}{p}},
\end{align}
with $\EE[\|\Psi(\V{B}_\tau^{\V{x}})\|^\infty]^{\nf{1}{\infty}}:=\|\Psi\|_\infty$ in the case $p=\infty$.
Applying Lem.~\ref{lem-maria} with $\tilde{\V{c}}:=\V{c}_n$, we further obtain
\begin{align}\nonumber
\EE\big[&\|\wt{\Upsilon}_\tau(\WW{\tau}{0}[\V{B}^{\V{y}}]
-{\mathbb{W}}_\tau^{0,n}[\V{B}^{\V{y}}])\wt{\Upsilon}_0^{-1}\phi\|^{4}\big]
\\\label{per7a}
&\le cA^4e^{cA^2\tau}\EE\big[\sup_{s\le\tau}\|\V{c}_{n,\V{B}_s^{\V{y}}}
-\V{c}_{\V{B}_s^{\V{y}}}\|_*^4\big]^\eh\|\phi\|^4,
\end{align}
for all $\V{y}\in\RR^\nu$, $\phi\in\dom(\wt{\Upsilon}_0^{-1})$, and some universal constant $c>0$.
Since $\{\V{B}_s^{\V{y}}(\vgamma):\,s\in[0,t],\V{y}\in K\}\subset\RR^\nu$ is compact,
for all $\vgamma\in\Omega$ and all compact $K\subset\RR^\nu$, the dominated convergence
theorem and the validity of \eqref{marie1} for all compact sets imply
\begin{align}\label{per7}
\sup_{\V{y}\in K}\EE\big[\sup_{s\le\tau}\|\V{c}_{n,\V{B}_s^{\V{y}}}
-\V{c}_{\V{B}_s^{\V{y}}}\|_*^{4}\big]
&\le\EE\big[\sup_{{s\le \tau\atop\V{y}\in K}}\|\V{c}_{n,\V{B}_s^{\V{y}}}
-\V{c}_{\V{B}_s^{\V{y}}}\|_*^{4}\big]\xrightarrow{\;\;n\to\infty\;\;}0,\quad\tau>0.
\end{align}
Therefore, the term in the third line of \eqref{per6} converges to zero as $n\to\infty$,
locally uniformly in $\tau\ge0$ and for all compact $K$. 
If $K$ can be replaced by $\RR^\nu$ in \eqref{marie1}, then it can be replaced by $\RR^\nu$
in \eqref{per7} as well, which then shows that the term in the third line of \eqref{per6} with
$K=\RR^\nu$ tends to zero as $n\to\infty$. We may now proceed along the lines of the proof
of \eqref{LNCV1} and \eqref{LNCV2} to conclude.
\end{proof}

%%%%%%%%%%%%%%%%%%%%%%%%%%%%%%%%%%%%%%%%%%%%%%
%%%%%%%%%%%%%%%%%%%%%%%%%%%%%%%%%%%%%%%%%%%%%%
%%%%%%%%%%%%%%%%%%%%%%%%%%%%%%%%%%%%%%%%%%%%%%

\section{Strong continuity in the time parameter}\label{sec-SG}

\noindent Next, we consider the strong continuity with respect to $t\ge0$ of our semi-groups
in the $L^p$-spaces with a finite $p$.
The somewhat technical statement of Part~(1) of the next theorem, which implies its second part,
will be needed in the proof of Thm.~\ref{thm-kern-cont} on the {\em operator norm}-continuity of the 
integral kernel. It is perhaps needless to recall that a set, $\cM$, 
of functions from $\RR^\nu$ to $\FHR$
is called {\em uniformly equicontinuous on a subset $\cQ\subset\RR^\nu$}, iff
$$
\lim_{r\downarrow0}\sup_{{\V{x},\V{y}\in\cQ\atop|\V{x}-\V{y}|<r}}\sup_{\Psi\in\cM}
\|\Psi(\V{x})-\Psi(\V{y})\|=0.
$$
In Part~(2) of Thm.~\ref{thm-SC} we again use the notation for restrictions of $T^V_t$ introduced
in the paragraph preceding Cor.~\ref{cor-SASG}.

\begin{thm}\label{thm-SC}
Let $V\in\cK_\pm(\RR^\nu)$ and $p\in[1,\infty)$. Then the following holds:

\smallskip

\noindent{\rm(1)}
Let  $\cM\subset L^p(\RR^\nu,\FHR)\cap L^\infty(\RR^\nu,\FHR)$ be such that
\begin{enumerate}
\item[{\rm(a)}] $\cM$ is uniformly equicontinuous on every compact subset of $\RR^\nu$;
\item[{\rm(b)}]  $\sup_{\Psi\in\cM}\|\Psi\|_\infty<\infty$ and, for every $\Psi\in\cM$, there exists
${\V{a}}_\Psi\in\RR^\nu$ such that
$\sup_{\Psi\in\cM}\|\Psi-\Psi_R\|_p\to0$, $R\to\infty$, where
$\Psi_R(\V{x}):=1_{|\V{x}-\V{a}_\Psi|<R}\Psi(\V{x})$;
if $V\notin\cK(\RR^\nu)$, then we assume that the points $\V{a}_\Psi$, $\Psi\in\cM$, 
are contained in a compact set;
\item[{\rm(c)}] for every $\Psi\in\cM$, we have $\Psi(\V{x})\in\dom(\Id\Gamma(\omega)^\eh)$, a.e. 
$\V{x}\in\RR^\nu$, and
$\sup_{\Psi\in\cM}\|f_\Psi\|_p<\infty$, where $f_\Psi(\V{x}):=\|(1+\Id\Gamma(\omega))^\eh\Psi(\V{x})\|$.
\end{enumerate}
Furthermore, let $F:\RR^\nu\to\RR$ be globally Lipschitz continuous and set
$F_\Psi(\V{x}):=F(\V{x}-\V{a}_\Psi)$, $\V{x}\in\RR^\nu$. 
Then
\begin{align*}
\lim_{t\downarrow0}\sup_{\Psi\in\cM}\|e^{-F_\Psi}(T_t^{V}-\id)e^{F_\Psi}\Psi\|_p=0.
\end{align*}
{\rm(2)}
The semi-group $(T_t^{V;(p,p),0})_{t\ge0}$ is strongly continuous.
\end{thm}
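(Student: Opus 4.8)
\textbf{Plan for the proof of Theorem~\ref{thm-SC}.} Part~(2) is a corollary of Part~(1), so the real work is Part~(1), and I would organize the argument in the now-familiar pattern: prove the estimate first for bounded continuous $V$, then pass to Kato decomposable $V$ by the approximation machinery of Lem.~\ref{lem-conny}, Thm.~\ref{thm-LNCV}(2), and Cor.~\ref{cor-SCV}. For the reduction of Part~(2): given $\Psi\in L^p(\RR^\nu,\FHR)$ with $p<\infty$, approximate it in $L^p$-norm by $\Psi_n\in C_c(\RR^\nu,\FHR)$ (mollify and truncate); each such $\Psi_n$ lies in $L^\infty$, is compactly supported (so condition (b) holds trivially with $\V{a}_{\Psi_n}=\V0$ and a common compact set), is uniformly continuous (condition (a)), and has finite-energy values if we mollify in the base variable only --- actually we can arrange $\Psi_n(\V x)\in\dom(\Id\Gamma(\omega))$ by convolving with a vector in $\dom(\Id\Gamma(\omega))$, but it is cleaner just to note that $C_c(\RR^\nu,\dom(\Id\Gamma(\omega)^\eh))$ is dense in $L^p$ and apply Part~(1) with $\cM=\{\Psi_n\}$ (a singleton, or the finite family). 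Then $\|T_t^V\Psi-\Psi\|_p\le\|T_t^V(\Psi-\Psi_n)\|_p+\|T_t^V\Psi_n-\Psi_n\|_p+\|\Psi_n-\Psi\|_p$; the first term is controlled uniformly in $t$ on $[0,1]$ by Lem.~\ref{lem-T-bd} (the $p=p$ case with $\tau_1=0$), the second $\to0$ as $t\downarrow0$ by Part~(1), and the third is small by choice of $n$; strong continuity at later times follows from the semi-group property of Cor.~\ref{cor-SASG}.

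\textbf{The core estimate in Part~(1), bounded continuous $V$.} I would split
\[
e^{-F_\Psi}(T_t^V-\id)e^{F_\Psi}\Psi
= e^{-F_\Psi}T_t^V e^{F_\Psi}(\Psi-\Psi_R)
- e^{-F_\Psi}(\Psi-\Psi_R)
+ e^{-F_\Psi}(T_t^V-\id)e^{F_\Psi}\Psi_R,
\]
where $\Psi_R$ is the truncation from hypothesis~(b). The first two terms are estimated by Lem.~\ref{lem-T-bd} (with the shifted Lipschitz function $F_\Psi$, whose Lipschitz constant equals that of $F$, so the bound \eqref{bd-StV} with $\tau_1=0$ applies) and tend to zero uniformly in $\Psi\in\cM$ and $t\in[0,1]$ as $R\to\infty$, by hypothesis~(b). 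So it remains, for fixed large $R$, to handle the compactly-supported-in-a-ball pieces $\Psi_R$. For these I would write, pointwise,
\begin{align*}
(T_t^V\Psi_R)(\V x)-\Psi_R(\V x)
&=\EE\big[(\WW{t}{V}[\V{B}^{\V{x}}]^*-\id)\Psi_R(\V{B}_t^{\V{x}})\big]
+\EE\big[\Psi_R(\V{B}_t^{\V{x}})-\Psi_R(\V{x})\big],
\end{align*}
and treat the two expectations separately. For the second (the ``heat-semigroup'' term) one uses uniform equicontinuity of $\cM$ on compacts (hypothesis~(a)) together with the Gaussian tail of $\V B_t^{\V x}$ and $\sup_{\Psi}\|\Psi\|_\infty<\infty$: splitting the event $\{|\V B_t^{\V x}-\V x|<r\}$ versus its complement gives a bound $\sup_{|\V z-\V x|<r}\|\Psi_R(\V z)-\Psi_R(\V x)\|+2\|\Psi\|_\infty\,\PP\{|\V B_t|\ge r\}$, which after taking the $L^p$-norm over the ball and then $t\downarrow0$, $r\downarrow0$, goes to zero uniformly in $\Psi\in\cM$ --- this is exactly the classical Schr\"odinger-semigroup argument. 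For the first expectation one inserts the factor $e^{-\int_0^tV(\V B_s^{\V x})\Id s}$ (harmless: bounded $V$, and by \eqref{bd-WW} $\|\WW{t}{V}[\V{B}^{\V{x}}]\|\le e^{(\|\mho\|_\infty+\|V\|_\infty)t}$), reducing matters to controlling $\EE\big[\|(\WW{t}{0}[\V{B}^{\V{x}}]^*-\id)\Psi_R(\V{B}_t^{\V{x}})\|^p\big]$, and this is precisely where hypothesis~(c) enters: by Ex.~\ref{ex-antonio}, in the form \eqref{antonio} applied after conjugating with $(1+\Id\Gamma(\omega))^{\pm\eh}$ (note $\WW{t}{0}[\V B^{\V x}]^*$ is again of the form covered there, via Thm.~\ref{thm-WW-rev}), one gets
\[
\EE\big[\|(1+\Id\Gamma(\omega))^{-\eh}(\WW{t}{0}[\V{B}^{\V{x}}]^*-\id)\psi\|^p\big]
\le c_p'(\ldots)^p t^{\nf{p}{2}} e^{(\ldots)p^2 t}\|\psi\|^p,
\]
and then write $(\WW{t}{0}[\V B^{\V x}]^*-\id)\Psi_R(\V B_t^{\V x}) = (1+\Id\Gamma(\omega))^{\eh}\cdot(1+\Id\Gamma(\omega))^{-\eh}(\WW{t}{0}[\V B^{\V x}]^*-\id)(1+\Id\Gamma(\omega))^{-\eh}\cdot(1+\Id\Gamma(\omega))^{\eh}\Psi_R(\V B_t^{\V x})$. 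The outer $(1+\Id\Gamma(\omega))^\eh$ is unbounded, so one must be slightly more careful: instead apply \eqref{antonio} directly with $\psi=(1+\Id\Gamma(\omega))^\eh\Psi_R(\V B_t^{\V x})$ to the adjoint process (which \eqref{antonio} covers since it holds for $\WW{s}{0}[\V B^{\V q}]$ with an arbitrary driving basis, and by Thm.~\ref{thm-WW-rev} the adjoint is such a process), obtaining a bound of the shape $c_p\, t^{\nf p2}\, \EE\big[\|(1+\Id\Gamma(\omega))^\eh\Psi_R(\V B_t^{\V x})\|^p\big]$; integrating over $\V x$ in the ball and using hypothesis~(c) together with $\|e^{t\Delta/2}f_\Psi^p\|_1\le\|f_\Psi\|_p^p$ gives a factor $\sup_\Psi\|f_\Psi\|_p^p$, hence the whole term is $O(t^{\nf p2})$ uniformly in $\Psi\in\cM$. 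Collecting the three pieces and sending $t\downarrow0$ (then $R\to\infty$) finishes Part~(1) for bounded continuous $V$.

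\textbf{Passage to Kato decomposable $V$ and the main obstacle.} Given $V\in\cK_\pm(\RR^\nu)$, choose $V_n\in C_c^\infty(\RR^\nu,\RR)$ as in Lem.~\ref{lem-conny}(1), satisfying \eqref{approx1}--\eqref{approx2}. Then $\|e^{-F_\Psi}(T_t^V-\id)e^{F_\Psi}\Psi\|_p \le \|e^{-F_\Psi}(T_t^V-T_t^{V_n})e^{F_\Psi}\Psi\|_p + \|e^{-F_\Psi}(T_t^{V_n}-\id)e^{F_\Psi}\Psi\|_p + \|e^{-F_\Psi}(\id-\id)\ldots\|=\ldots$; wait, the middle uses the already-proven bounded-continuous case (with the shifted weight $F_\Psi$, uniformly over $\Psi\in\cM$ since all $F_\Psi$ have the same Lipschitz constant), and the first term is handled by Cor.~\ref{cor-SCV} with $q=p$, $\tau_1=0$, which gives $\sup_{t\in[0,1]}\|e^{-F_\Psi}(T_t^{V_n}-T_t^V)e^{F_\Psi}\Psi\|_p\to0$ as $n\to\infty$, uniformly in $F$ with Lipschitz constant $\le a$ --- hence uniformly over $\{F_\Psi:\Psi\in\cM\}$. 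The one subtlety is the case $V\notin\cK(\RR^\nu)$: then Cor.~\ref{cor-SCV} (equivalently Thm.~\ref{thm-LNCV}(2)) only gives convergence with a cutoff $1_K$, so one must use hypothesis~(b)'s clause that the centers $\V a_\Psi$ lie in a compact set; combined with the uniform $L^p$-tail control $\sup_\Psi\|\Psi-\Psi_R\|_p\to0$, one reduces to $\Psi_R$ supported in a fixed compact $K_R$ and applies the localized \eqref{LNCV2}. I expect the main obstacle to be precisely this bookkeeping: making the three-term split and the approximation $V_n\to V$ genuinely \emph{uniform over the family $\cM$} while simultaneously tracking the $\Psi$-dependent weights $F_\Psi$ and (when $V\notin\cK$) the compactness of supports --- each individual estimate is classical or already in hand, but arranging the order of limits ($t\downarrow0$ vs.\ $R\to\infty$ vs.\ $n\to\infty$) so that all the $\sup_{\Psi\in\cM}$'s survive requires care; the genuinely new analytic input, the finite-energy regularization \eqref{antonio} needed because $\WW{}{}$ acts on Fock space, is the easy part once Ex.~\ref{ex-antonio} is granted.
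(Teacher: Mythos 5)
Your overall architecture (truncation via hypothesis (b), a heat-kernel/equicontinuity step via (a), a Fock-space regularization step via (c) and \eqref{antonio}, and the reduction of Part~(2) to Part~(1) through $C_0$-approximation and $(1+\ve\Id\Gamma(\omega))^{-1}$-mollification) matches the paper's, but your treatment of the crucial Fock-space term has a genuine gap. You isolate $\EE\big[(\WW{t}{V}[\V{B}^{\V{x}}]^*-\id)\Psi_R(\V{B}_t^{\V{x}})\big]$, so the operator difference acts on the \emph{random, endpoint-measurable} vector, and you then invoke \eqref{antonio} ``for the adjoint process'' with $\psi=(1+\Id\Gamma(\omega))^{\eh}\Psi_R(\V{B}_t^{\V{x}})$. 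This is not covered by \eqref{antonio}: first, \eqref{antonio} is a fixed-vector estimate with the weight $(1+\Id\Gamma(\omega))^{\mh}$ standing to the \emph{left} of $(\WW{s}{0}[\V{B}^{\V{q}}]-\id)$, whereas you need the weight absorbed on the right, i.e. either an operator-norm moment bound $\EE[\|(\WW{t}{0}-\id)(1+\Id\Gamma(\omega))^{\mh}\|^p]=\cO(t^{\nf{p}{2}})$ or a ``weight inside'' variant; neither is provided by Ex.~\ref{ex-antonio} (the supremum over unit vectors cannot be pulled inside the expectation), although a variant could be rederived from Lem.~\ref{lem-diff1}. Second, and more seriously, the test vector in those estimates must be $\fF_0$-measurable for the basis carrying the process; via Thm.~\ref{thm-WW-rev} the adjoint is $\WW{t}{0}[\sR_t\V{B}^{\V{x}}]$, driven by a Brownian \emph{bridge} with singular drift, and Sect.~\ref{sec-weights} explicitly restricts its weighted BDG bounds (hence \eqref{antonio}) to the driftless $\V{B}^{\V{q}}$, so ``the adjoint is of the form covered there'' is not correct as stated. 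A H\"{o}lder decoupling of $\EE[\,\cdots f_{\Psi_R}(\V{B}_t^{\V{x}})\,]$ can rescue matters for $p\ge2$, but for $p\in[1,2)$ the resulting expression is no longer controlled by $\|f_\Psi\|_p$, so the gap is real for the full range of $p$.

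The paper's proof avoids this by a different grouping: it writes the integrand as $\WW{t}{0}[\V{B}^{\V{x}}]^*\big(e^{-\int_0^tV(\V{B}_s^{\V{x}})\Id s+F_\Psi(\V{B}_t^{\V{x}})-F_\Psi(\V{x})}\Psi_R'(\V{B}_t^{\V{x}})-\Psi_R'(\V{x})\big)+(\WW{t}{0}[\V{B}^{\V{x}}]^*-\id)\Psi_R'(\V{x})$, so that the operator difference hits the \emph{deterministic} value at the starting point; then $\|\EE[(\WW{t}{0}[\V{B}^{\V{x}}]^*-\id)\Psi_R'(\V{x})]\|\le\sup_{\|\phi\|=1}\|(1+\Id\Gamma(\omega))^{\mh}\EE[(\WW{t}{0}[\V{B}^{\V{x}}]-\id)\phi]\|\,f_\Psi(\V{x})$, and \eqref{antonio} applies exactly as stated (forward, driftless process, deterministic $\phi$), yielding a bound $\cO(t^{\nf{p}{4}})\|f_\Psi\|_p^p$; the first group is estimated in norm by $e^{\|\mho\|_\infty t}$ times terms handled by (a), (b), \eqref{dirk0} and \eqref{dirk2a} -- directly for Kato decomposable $V$, the compactness of the centers $\V{a}_\Psi$ entering precisely through \eqref{dirk2a}. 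If you reorganize your splitting this way, the rest of your plan goes through; note also that your detour through bounded continuous $V$ plus Thm.~\ref{thm-LNCV}(2)/Cor.~\ref{cor-SCV} is workable but unnecessary (Lem.~\ref{lem-dirk} treats Kato class potentials directly), and that in your core estimate you dropped the factor $e^{F_\Psi(\V{B}_t^{\V{x}})-F_\Psi(\V{x})}$, which is harmless but must be handled by the $|e^{F_\Psi(\V{B}_t^{\V{x}})-F_\Psi(\V{x})}-1|$ Gaussian estimate as in the paper.
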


\begin{proof}
(1): Let $\chi\in C(\RR,[0,1])$ satisfy $\chi_0=1$ on $(-\infty,0]$ and $\chi=0$ on $[1,\infty)$ and set
$\Psi_R'(\V{x}):=\chi(|\V{x}-\V{a}_\Psi|-R)\Psi(\V{x})$, $\V{x}\in\RR^\nu$, $R\ge1$. On account of 
$$
\sup_{\Psi\in\cM}\sup_{0\le t\le 1}\|e^{-F_\Psi}T_t^Ve^{F_\Psi}\|_{p,p}<\infty,
$$ 
Condition (b), and $\|\Psi-\Psi_R'\|_p\le\|\Psi-\Psi_R\|_p$ it suffices to show that
\begin{align*}
\lim_{t\downarrow0}\sup_{\Psi\in\cM}\|e^{-F_\Psi}(T_t^V-\id)e^{F_\Psi}\Psi_R'\|_p=0,
\end{align*}
for every $R\ge2$. To do so we estimate, using 
$\|\WW{t}{0}[\V{B}^{\V{x}}]\|\le e^{\|\mho\|_\infty t}$, 
\begin{align*}
&\|e^{-F_\Psi}(T_t^V-\id)e^{F_\Psi}\Psi_R'\|_p^p
\\
&=\int_{\RR^\nu}\big\|\EE\big[
\WW{t}{0}[\V{B}^{\V{x}}]^*\big(e^{-\int_0^tV(\V{B}_s^{\V{x}})\Id s
-F_\Psi(\V{x})+F_\Psi(\V{B}^{\V{x}}_t)}
\Psi_R'(\V{B}^{\V{x}}_t)-\Psi_R'(\V{x})\big)
\\
&\hspace{6cm}
+(\WW{t}{0}[\V{B}^{\V{x}}]^*-\id)\Psi_R'(\V{x})\big]\big\|^p\Id\V{x}
\\
&\le 2^{p-1}e^{p\|\mho\|_\infty t}\big(I_{1,1}(\Psi,R,t)+I_{1,2}(\Psi,R,t)\big)+2^{p-1}\,I_2(\Psi,t)\,,
\end{align*}
for all $\Psi\in\cM$ and $R\ge2$, where
\begin{align*}
I_{1,1}(\Psi,R,t)&:=\int\limits_{|\V{x}-\V{a}_\Psi|<2R}\EE\big[\|e^{-\int_0^tV(\V{B}^{\V{x}}_s)\Id s
-F_\Psi(\V{x})+F_\Psi(\V{B}^{\V{x}}_t)}\Psi_R'(\V{B}_t^{\V{x}})-\Psi_R'(\V{x})\|^p\big]\Id\V{x},
\\
I_{1,2}(\Psi,R,t)&:=\int\limits_{|\V{x}-\V{a}_\Psi|\ge2R}\EE\big[e^{ap|\V{B}_t|}
\|e^{-\int_0^tV(\V{B}^{\V{x}}_s)\Id s}\Psi_R'(\V{B}_t^{\V{x}})\|^p\big]\Id\V{x},
\\
I_2(\Psi,t)&:=\int_{\RR^\nu}\big\|
\EE\big[(\WW{t}{0}[\V{B}^{\V{x}}]^*-\id)\Psi(\V{x})\big]\big\|^p\Id\V{x}.
\end{align*}
The first integral can be estimated as
\begin{align}\nonumber
\sup_{\Psi\in\cM}&I_{1,1}(\Psi,R,t)
\\\nonumber
&\le 
3^{p-1}c_\nu R^\nu\sup_{\Psi\in\cM}\sup_{|\V{x}-\V{a}_\Psi|<2R}
\EE\big[|e^{F_\Psi(\V{B}^{\V{x}}_t)-F_\Psi(\V{x})}-1|^{p}
e^{-p\int_0^tV(\V{B}^{\V{x}}_s)\Id s}\big]\|\Psi\|_\infty^p
\\\nonumber
&\quad +3^{p-1}c_{\nu} R^\nu\sup_{\Psi\in\cM}\sup_{|\V{x}-\V{a}_\Psi|\le2R}
\EE\big[|e^{-\int_0^tV(\V{B}_s^{\V{x}})\Id s}-1|^p\big]
\,\|\Psi\|_\infty^p
\\\label{regina2}
&\quad+3^{p-1}\sup_{\Psi\in\cM}
\int_{|\V{x}-\V{a}_\Psi|<2R}\int_{\RR^\nu}p_t(\V{x},\V{y})
\|\Psi_R'(\V{y})-\Psi_R'(\V{x})\|^p\Id\V{y}\,\Id\V{x}.
\end{align}
Here the term in the first line of the right hand side goes to zero, as $t\downarrow0$,
due to \eqref{dirk0} (with $p$ replaced by $2p$) and 
\begin{align*}
\EE\big[|e^{F_\Psi(\V{B}^{\V{x}}_t)-F_\Psi(\V{x})}-1|^{2p}\big]
&\le \EE\big[(a|\V{B}_t|e^{a|\V{B}_t|})^{2p}\big]
=\int_{\RR^\nu}p_1(\V{y},\V{0})(at^\eh|\V{y}|e^{at^\eh|\V{y}|})^{2p}\Id\V{y},
\end{align*}
where $a$ denotes a Lipschitz constant for $F$.
The term in the second line of the right hand side vanishes in the limit $t\downarrow0$
by virtue of Lem.~\ref{lem-dirk}(2) and Condition~(b). The term in the last line of \eqref{regina2}
goes to zero as a consequence of Condition~(a). Furthermore,
\begin{align*}
I_{1,2}(\Psi,R,t)&\le\|\Psi\|_\infty^p\sup_{\V{x}\in\RR^\nu}
\EE\big[e^{-2p\int_0^tV(\V{B}_s^{\V{x}})\Id s}\big]^\eh\EE[e^{2ap|\V{B}_t|}]^\eh
\\
&\qquad\cdot
\int_{|\V{x}|\ge2R}\int_{|\V{y}|<R+1}p_t(\V{x},\V{y})\Id\V{y}\,\Id\V{x},
\end{align*}
so that $\sup_{\Psi\in\cM}I_{1,2}(\Psi,R,t)\to0$, $t\downarrow0$, by Condition~(b),
\eqref{dirk0}, and \eqref{EEexp}.
Here we also used that $2R-(R+1)\ge1$, for $R\ge2$. Finally,  
\begin{align*}
\big\|\EE\big[(\WW{t}{0}[\V{B}^{\V{x}}]^*-\id)\Psi(\V{x})\big]\big\|
&=\sup_{\|\phi\|=1}\big\|(1+\Id\Gamma(\omega))^\mh
\EE\big[(\WW{t}{0}[\V{B}^{\V{x}}]-\id)\phi\big]\big\|\,f_\Psi(\V{x}),
\end{align*}
where the supremum runs over normalized elements of $\FHR$,
which together with \eqref{antonio} leads to the bound $I_2(\Psi,t)\le \cO(t^{\nf{p}{4}})\|f_\Psi\|_p^p$.
Invoking Condition~(c) we see that $\sup_{\Psi\in\cM}I_2(\Psi,t)\to0$, $t\downarrow0$.

(2): By the semi-group property it suffices to prove the strong continuity at zero only. 
Since $C_0(\RR^\nu,\FHR)$ is dense in $L^p(\RR^\nu\FHR)$ with
$p\in[1,\infty)$, $\|T_t^V\|_{p,p}$ is bounded uniformly in $t\in[0,1]$, and
$\Psi_\ve:=(1+\ve\Id\Gamma(\omega))^\mh\Psi\to\Psi$, $\ve\downarrow0$, 
in $L^p(\RR^\nu,\FHR)$, it is even sufficient to show that
$T_t^{V}\Psi_\ve\to\Psi_\ve$, $t\downarrow0$, in $L^p(\RR^\nu,\FHR)$, 
for every $\Psi\in C_0(\RR^\nu,\FHR)$ and $\ve>0$.
This follows, however, immediately from (2) upon choosing $\cM=\{\Psi_\ve\}$.
\end{proof}

%%%%%%%%%%%%%%%%%%%%%%%%%%%%%%%%%%%%%%%%%%%
%%%%%%%%%%%%%%%%%%%%%%%%%%%%%%%%%%%%%%%%%%%
%%%%%%%%%%%%%%%%%%%%%%%%%%%%%%%%%%%%%%%%%%%

\section{Equicontinuity in the image of the semi-group}\label{sec-eq-cont}

\noindent
Our next theorem, implying that $T^V_t$ with $t>0$ maps bounded sets in $L^p$
into equicontinuous ones, will be needed to prove the joint continity of the integral kernel
in Thm.~\ref{thm-kern-cont} as well. In the succeeding corollary we combine
the next theorem with Thms.~\ref{thm-LNCV} and~\ref{thm-coup}.

\begin{thm}\label{thm-eq-cont}
Let $V\in\cK_\pm(\RR^\nu)$, $0<\tau_1\le\tau_2<\infty$, $p\in[1,\infty]$, and 
let $(\Upsilon_t)_{t\ge0}$ and $\|\cdot\|_*$ be given by either Line~2 or Line~4 of Table~1. 
Assume that $\RR^\nu\ni\V{x}\mapsto\V{c}_{\V{x}}$ is bounded and continuous
with respect to the norm $\|\cdot\|_*$. Then the following set of $\FHR$-valued functions,
\begin{align}\label{thyra}
&\big\{\Upsilon_{\nf{t}{2}}T_t^V\Psi:\,\Psi\in L^p(\RR^\nu,\FHR),\,\|\Psi\|_p\le1,\,t\in[\tau_1,\tau_2]\big\},
\end{align}
is uniformly equicontinuous on every compact subset of $\RR^\nu$.
If $V\in\cK(\RR^\nu)$ and if $\RR^\nu\ni\V{x}\mapsto\V{c}_{\V{x}}$ is bounded and 
uniformly continuous with respect to $\|\cdot\|_*$, then the set in \eqref{thyra} is 
uniformly equicontinuous on $\RR^\nu$.
\end{thm}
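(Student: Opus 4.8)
The plan is to prove the equicontinuity of the set \eqref{thyra} by reducing it, via the Chapman--Kolmogoroff equation, to an estimate on the map $\V{x}\mapsto(\Upsilon_{\nf{t}{2}}T^V_{\nf{t}{2}}\Phi)(\V{x})$ for $\Phi$ ranging over a suitable set of functions, and then exploiting the explicit Brownian-motion representation of $T^V_s$ together with the weighted flow bounds of Section~\ref{sec-weights}. First I would write, for $t\in[\tau_1,\tau_2]$ and $s:=t/2$,
$$
(\Upsilon_{s}T_t^V\Psi)(\V{x})=\EE\big[\wh{\WW{}{}}\big]\quad\text{expressed through}\quad
(\Upsilon_sT_s^V\Phi)(\V{x}),\qquad \Phi:=T^V_{s}(\cdot)\Psi,
$$
using \eqref{CK1}/\eqref{intK}; by Cor.~\ref{cor-WEK} and the $L^p$-to-$L^\infty$ bound \eqref{norm-T-F-Theta} the intermediate function $\Phi$ is bounded with $\|\Phi\|_\infty\le c(\tau_1,\tau_2,\V{c})$ uniformly in $\|\Psi\|_p\le1$ and $t\in[\tau_1,\tau_2]$, and moreover $\Phi(\V{x})\in\dom(\Id\Gamma(\omega)^\eh)$ with $\|(1+\Id\Gamma(\omega))^\eh\Phi(\cdot)\|$ bounded as well (again via a line of Table~1 applied at an intermediate time, or directly from Cor.~\ref{cor-WEK}). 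So it suffices to prove that the set $\{\Upsilon_sT_s^V\Phi:\Phi\in\cN_s\}$ is uniformly equicontinuous on compacta, where $\cN_s$ is the (bounded, but not a priori equicontinuous) family of such intermediate functions.

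Next I would use the Feynman--Kac representation $(\Upsilon_sT_s^V\Phi)(\V{x})=\EE\big[\Upsilon_s\WW{s}{0}[\V{B}^{\V{x}}]^*e^{-\int_0^sV(\V{B}^{\V{x}}_r)\Id r}\Phi(\V{B}^{\V{x}}_s)\big]$ and split the increment $(\Upsilon_sT_s^V\Phi)(\V{x})-(\Upsilon_sT_s^V\Phi)(\V{y})$ into three contributions in the spirit of the proof of Thm.~\ref{thm-SC}(1): (i) the difference coming from $\Phi(\V{B}^{\V{x}}_s)$ versus $\Phi(\V{B}^{\V{y}}_s)$, i.e.\ from comparing the two Brownian motions started at $\V{x}$ and $\V{y}$; (ii) the difference coming from the two exponential potential weights $e^{-\int_0^sV(\V{B}^{\V{x}}_r)\Id r}$; and (iii) the difference coming from $\Upsilon_s\WW{s}{0}[\V{B}^{\V{x}}]^*$ versus $\Upsilon_s\WW{s}{0}[\V{B}^{\V{y}}]^*$. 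For (iii), the key tool is Lem.~\ref{lem-maria}(1)/(2): with $\tilde{\V{c}}:=\V{c}$ but $\tilde{\V{q}}:=\V{y}$, $\V{q}:=\V{x}$, it bounds $\EE[\sup_{r\le s}\|\Upsilon_r(\WW{r}{0}[\V{B}^{\V{x}}]-\WW{r}{0}[\V{B}^{\V{y}}])\eta\|^p]$ in terms of $\EE[\sup_{r\le s}d_p(r)^{2p}]^{1/2}$, where $d_p(r)$ collapses to $\|\V{c}_{\V{B}^{\V{x}}_r}-\V{c}_{\V{B}^{\V{y}}_r}\|_*$ since the coefficient vectors coincide and $V,\tilde V$ coincide; by continuity of $\V{x}\mapsto\V{c}_{\V{x}}$ in $\|\cdot\|_*$ and dominated convergence this tends to $0$ as $|\V{x}-\V{y}|\to0$, uniformly for $\V{x},\V{y}$ in any fixed compact set (and uniformly on $\RR^\nu$ if $\V{x}\mapsto\V{c}_{\V{x}}$ is uniformly continuous). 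One uses $\WW{}{0}[\V{B}^{\V{y}}]^*=\WW{}{0}[\sR_s\V{B}^{\V{y}}]$ (Thm.~\ref{thm-WW-rev}) to transfer the starred process into the unstarred form to which Lem.~\ref{lem-maria} applies, much as in Lem.~\ref{lem-bridget1}. For (ii), the relevant input is Lem.~\ref{lem-dirk}(2) and Lem.~\ref{lem-conny}, giving $\sup_{|\V{x}-\V{y}|<\varrho,\ \V{x},\V{y}\in K}\EE[|e^{-\int_0^sV(\V{B}^{\V{x}}_r)}-e^{-\int_0^sV(\V{B}^{\V{y}}_r)}|^{q}]\to0$ as $\varrho\downarrow0$ (and uniformly over $\RR^\nu$ when $V\in\cK(\RR^\nu)$, using the last clauses of Lems.~\ref{lem-dirk},~\ref{lem-conny}); one couples the two Brownian motions by $\V{B}^{\V{y}}=\V{B}^{\V{x}}+(\V{y}-\V{x})$, so that $\V{B}^{\V{y}}_r$ is a fixed translate of $\V{B}^{\V{x}}_r$, reducing it to the modulus-of-continuity estimates already available. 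For (i), after the same coupling, the increment $\Phi(\V{B}^{\V{x}}_s)-\Phi(\V{B}^{\V{x}}_s+(\V{y}-\V{x}))$ is controlled: one uses that $\Phi$ is itself in the range of $T^V_{s'}$ for a slightly smaller time and hence, by Thm.~\ref{thm-SC}(1) applied to the bounded family $\cN_s$ or directly by the $L^\infty$ heat-kernel smoothing, is uniformly continuous on compacta with a modulus independent of $\Psi$; alternatively, one pushes the translation through the whole Feynman--Kac integrand and reduces (i) to the analysis of $(T^V_{s}\Phi)(\cdot)-(T^V_s\Phi)(\cdot+(\V{y}-\V{x}))$, i.e.\ to translation-continuity in $L^\infty$ of the semi-group acting on a bounded set, which follows from the heat-kernel bound \eqref{dirk1} together with the $L^1$-continuity of translations.

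Collecting (i)--(iii), each contribution is bounded by an expression of the form $C(K,\tau_1,\tau_2,\V{c},V)\cdot\omega(|\V{x}-\V{y}|)$ with $\omega(\varrho)\to0$ as $\varrho\downarrow0$, uniformly for $\V{x},\V{y}$ in a fixed compact $K$; in the cases $V\in\cK(\RR^\nu)$ and $\V{x}\mapsto\V{c}_{\V{x}}$ uniformly continuous, all the constants and moduli are uniform over $\RR^\nu$, giving uniform equicontinuity on $\RR^\nu$. I expect the main obstacle to be part (iii): verifying that Lem.~\ref{lem-maria} applies cleanly with the starred process (which requires the time-reversal identity Thm.~\ref{thm-WW-rev} together with the observation that $\sR_s\V{B}^{\V{y}}$ is a Brownian bridge, so that an approximation/localization step in $\eta$ is needed to land in $\dom(\Id\Gamma(\omega))$ where the lemma is stated), and assembling the resulting second-moment bound with the $L^\infty$-bounds on $\Phi$ from Cor.~\ref{cor-WEK} via Cauchy--Schwarz so that the final estimate is genuinely uniform in $\|\Psi\|_p\le1$ and $t\in[\tau_1,\tau_2]$. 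The restriction to Lines~2 and~4 of Table~1 enters precisely because there the weight $\Upsilon_s$ degenerates as $s\downarrow0$ (the prefactor $\tau_\alpha(s)\to0$, resp.\ the exponent $\delta(s\wedge t_*)/2\to0$), which is what lets the weighted flow estimate \eqref{maria1} be applied at the small intermediate time $s=t/2$ without a blow-up; this point should be highlighted when invoking Lem.~\ref{lem-maria} and Cor.~\ref{cor-WEK}.
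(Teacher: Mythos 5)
Your overall architecture (split $T^V_t=T^V_{t/2}T^V_{t/2}$, couple the two Brownian motions, and compare the flows via Lem.~\ref{lem-maria}) is close in spirit to the paper's Step~2, but as written it has two genuine gaps. The most serious one is the ``seed'' of equicontinuity, i.e.\ your item (i): knowing only that $\Phi=T^V_{t/2}\Psi$ is uniformly bounded (with $\theta^{\eh}\Phi$ bounded), you must still produce a modulus of continuity for $\Phi$ that is independent of $\Psi$, and neither of your two justifications does this. Thm.~\ref{thm-SC}(1) is a statement about strong continuity in the \emph{time} parameter, not about spatial equicontinuity, so citing it here is simply the wrong tool; and the alternative of ``pushing the translation through the Feynman--Kac integrand'' is circular, because translating the starting point also translates the coefficient vector and the potential, i.e.\ it regenerates exactly the terms (ii)--(iii), while uniform $L^\infty$-translation continuity of $T^V_s$ on a bounded set is essentially the assertion being proved. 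The paper supplies this seed by Carmona's device: it compares $T^V_t$ with $e^{\tau\Delta/2}T^V_{t-\tau}$ (whose equicontinuity comes from the heat kernel) and controls the error $(e^{\tau\Delta/2}-T^V_\tau)T^V_{t-\tau}$ uniformly as $\tau\downarrow0$ using the small-time estimates \eqref{dirk2a} and \eqref{antonio} together with the $L^\infty$-bound on $\theta^{\eh}T^V_{t-\tau}$. Some step of this kind, where the Gaussian smoothing is actually harvested, is missing from your proposal.

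The second gap is your treatment of the adjoint in (iii). Lem.~\ref{lem-maria} (and the underlying Lem.~\ref{lem-diff1}) is stated only for two solution processes driven by the \emph{same} Brownian motion with zero drift; after time reversal via Thm.~\ref{thm-WW-rev} you obtain processes driven by Brownian bridges with \emph{different} drifts (pinning to $\V{x}$ resp.\ $\V{y}$) and, a priori, different new driving Brownian motions (Ex.~\ref{ex-stoch}(2)), so the comparison lemma does not apply there; the domain/localization issue you flag is not the real obstruction. No reversal is needed: as in the paper, estimate $\|\Upsilon_s\,\EE[(\WW{s}{0}[\V{B}^{\V{x}}]-\WW{s}{0}[\V{B}^{\V{y}}])^*\Phi(\V{B}^{\V{x}}_s)]\|$ by $\sup_{\phi\in\dom(\Upsilon_s),\,\|\phi\|=1}\EE[\|(\WW{s}{0}[\V{B}^{\V{x}}]-\WW{s}{0}[\V{B}^{\V{y}}])\Upsilon_s\phi\|]\,\|\Phi\|_\infty$ and then apply Lem.~\ref{lem-maria} directly to the unstarred difference. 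Finally, your step (ii) asserts a two-point continuity estimate for $\EE[|e^{-\int_0^sV(\V{B}^{\V{x}})}-e^{-\int_0^sV(\V{B}^{\V{y}})}|^q]$ for general Kato-decomposable $V$ (and uniformly on $\RR^\nu$ for $V\in\cK$); this is not contained in \eqref{dirk2a} or Lem.~\ref{lem-conny} as quoted (those concern small times, resp.\ fixed approximating sequences) and would need a separate argument with translated potentials. The paper avoids it by first proving the weighted statement only for $V\in C_0(\RR^\nu)$, where $|V(\V{B}^{\V{x}}_s)-V(\V{B}^{\V{y}}_s)|$ is handled inside $f(s,\V{x},\V{y})$ by uniform continuity of $V$, and then transferring to general Kato $V$ through the approximation result \eqref{LNCV1} of Thm.~\ref{thm-LNCV}(2); adopting that reduction would also repair this part of your argument.
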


\begin{proof}
{\em Step 1.} First, we consider the case $\Upsilon=\id$. To this end
we shall adapt an argument by Carmona~\cite{Carmona1979}; see also Sect.~4 in \cite{BHL2000}.
On account of the semi-group property and
\eqref{norm-T-F-Theta} it suffices to treat the case $p=\infty$.
In view of \eqref{norm-T-F-Theta} and the well-known properties
of the semi-group of the free Laplacian we know that, for fixed $0<\tau<\tau_1$, the set
$$
\big\{e^{\tau\Delta/2}T_{t-\tau}^V\Psi:\,
\Psi\in L^\infty(\RR^\nu,\FHR),\,\|\Psi\|_\infty\le1,\,t\in[\tau_1,\tau_2]\big\}
$$
is uniformly equicontinuous on $\RR^\nu$. Let $K\subset\RR^\nu$ be compact, 
if $V\in\cK_\pm(\RR^\nu)\setminus\cK(\RR^\nu)$, and $K=\RR^\nu$, if $V\in\cK(\RR^\nu)$.
Then it suffices to show that
$$
\lim_{\tau\downarrow0}\sup_{t\in[\tau_1,\tau_2]}\sup_{\V{x}\in K}\sup_{\|\Psi\|_\infty\le1}
\|(D_{t,\tau}\Psi)(\V{x})\|=0,
$$
where
\begin{align*}
D_{t,\tau}&:=e^{\tau\Delta/2}T_{t-\tau}^V-T_t^V=(e^{\tau\Delta/2}-T_\tau^V)T_{t-\tau}^V.
\end{align*}
Again by \eqref{norm-T-F-Theta} we know that
$\sup_{0\le\tau\le\tau_1/2}\sup_{t\in[\tau_1,\tau_2]}\|\theta^\eh T_{t-\tau}^V\|_{\infty,\infty}<\infty$,
where $\theta:=1+\Id\Gamma(\omega)$, and it remains to show that
\begin{equation}\label{anna1}
\lim_{\tau\downarrow0}\sup_{\V{x}\in K}\sup_{\|\Psi\|_\infty\le1}
\big\|\EE\big[(\id-\WW{\tau}{V}[\V{B}^{\V{x}}]^*)\theta^\mh\Psi(\V{B}_\tau^{\V{x}})\big]\big\|=0,
\end{equation}
for every compact $K\subset\RR^\nu$. The convergence \eqref{anna1} follows, however, from
\begin{align*}
\big\|\EE\big[\WW{\tau}{0}[\V{B}^{\V{x}}]^*(1-e^{-\int_0^\tau V(\V{B}_s^{\V{x}})\Id s}&)
\theta^\mh\Psi(\V{B}_\tau^{\V{x}})\big]\big\|
\\
&\le e^{\|\mho\|_\infty\tau}\EE\big[|1-e^{-\int_0^\tau V(\V{B}_s^{\V{x}})\Id s}|\big]\|\Psi\|_\infty
\end{align*}
in combination with \eqref{dirk2a} and from
\begin{align*}
\big\|\EE\big[(\WW{\tau}{0}[\V{B}^{\V{x}}]^*-\id)\theta^\mh
\Psi(\V{B}_\tau^{\V{x}})\big]\big\|
&=\sup_{\|\phi\|=1}\big|\EE\big[
\SPn{\theta^\mh(\WW{\tau}{0}[\V{B}^{\V{x}}]-\id)\phi}{\Psi(\V{B}_\tau^{\V{x}})}\big]\big|
\\
&\le\sup_{\|\phi\|=1}
\EE\big[\|\theta^\mh(\WW{\tau}{0}[\V{B}^{\V{x}}]-\id)\phi\|\big]\|\Psi\|_\infty
\end{align*}
in combination with \eqref{antonio}. In the case $\Upsilon_t=\id$, these
remarks already prove the theorem.

{\em Step 2.} Next, we consider arbitrary $\Upsilon$ as in the statement but restrict our attention
to $V\in C_0(\RR^\nu)$. Let $\cE$ be any uniformly bounded set of functions
from $\RR^\nu$ to $\FHR$ which is uniformly equicontinuous on $\RR^\nu$.
Then, by the semi-group property, by Step 1, and by the obvious inclusion 
$C_0(\RR^\nu)\subset\cK(\RR^\nu)$, it suffices to show that the set
$\{\Upsilon_{t}T^V_{t}\Psi:\Psi\in\cE,\,t\in[\tau_1/2,\tau_2/2]\}$ is uniformly equicontinuous as well. 

To do so we fix $t\in [\tau_1/2,\tau_2/2]$ and set $\wt{\Upsilon}_s:=\Upsilon_{t-s}^{-1}$, $s\in[0,t]$.
Then we observe that, for all $\V{x},\V{y}\in\RR^\nu$ and $\Psi\in\cE$,
\begin{align}\nonumber
\|\Upsilon_t(T^V_t\Psi(\V{x})-T^V_t\Psi(\V{y}))\|
&\le\big\|\Upsilon_t\EE\big[(\WW{t}{V}[\V{B}^{\V{x}}]-\WW{t}{V}[\V{B}^{\V{y}}])^*
\Psi(\V{B}_t^{\V{x}})\big]\big\|
\\\nonumber
&\qquad
+\big\|\Upsilon_t\EE\big[\WW{t}{V}[\V{B}^{\V{y}}]^*(\Psi(\V{B}_t^{\V{x}})
-\Psi(\V{B}_t^{\V{y}}))\big]\big\|
\\\nonumber
&\le\sup_{{\phi\in\dom(\Upsilon_t)\atop\|\phi\|=1}}
\EE\big[\|(\WW{t}{V}[\V{B}^{\V{x}}]-\WW{t}{V}[\V{B}^{\V{y}}])\Upsilon_t\phi\|\big]\,\|\Psi\|_\infty
\\\label{thyra0}
&\quad+\sup_{{\phi\in\dom(\Upsilon_t)\atop\|\phi\|=1}}
\EE\big[\|\WW{t}{V}[\V{B}^{\V{y}}]\Upsilon_t\phi\|\big]\!\!
\sup_{{\tilde{\V{x}},\tilde{\V{y}}\in\RR^\nu\atop|\tilde{\V{x}}-\tilde{\V{y}}|
=|\V{x}-\V{y}|}}\!\!\|\Psi(\tilde{\V{x}})-\Psi(\tilde{\V{y}})\|.
\end{align}
Next, we employ Lem.~\ref{lem-maria} with $\tilde{\V{c}}:=\V{c}$, $V=\wt{V}$, $\V{q}=\V{x}$, and
$\tilde{\V{q}}=\V{y}$. This yields
\begin{align*}
\sup_{{\phi\in\dom(\Upsilon_t)\atop\|\phi\|=1}}
\EE\big[&\|\wt{\Upsilon}_t(\WW{t}{V}[\V{B}^{\V{x}}]
-{\mathbb{W}}_t^{V,n}[\V{B}^{\V{y}}])\wt{\Upsilon}_0^{-1}\phi\|\big]
\le cA^2e^{cA^2t}\EE\big[\sup_{s\le t}f(s,\V{x},\V{y})^2\big]^\eh,
\end{align*}
for all $\V{x},\V{y}\in\RR^\nu$, with a universal constant $c>0$ and random variables
\begin{align*}
f(s,\V{x},\V{y})&:=|V(\V{B}_s^{\V{x}})-V(\V{B}_s^{\V{y}})|+
\|\V{c}_{\V{B}_s^{\V{x}}}-\V{c}_{\V{B}_s^{\V{y}}}\|_*.
\end{align*}
Since $\{\V{B}_s^{\V{z}}(\vgamma):\,s\in[0,t],\mathrm{dist}(\V{z},K)\le1\}\subset\RR^\nu$ is compact,
for all $\vgamma\in\Omega$ and all compact $K\subset\RR^\nu$, since $V$ is uniformly 
continuous, and since $\V{x}\mapsto\V{c}_{\V{x}}$ is uniformly continuous on every 
compact subset of $\RR^\nu$ 
with respect to the norm $\|\cdot\|_*$, the dominated convergence theorem implies
\begin{align}\label{per700}
\lim_{r\downarrow0}\sup_{{\V{x},\V{y}\in K\atop|\V{x}-\V{y}|<r}}
\EE\big[\sup_{s\le\tau}f(s,\V{x},\V{y})^2\big]&=0,\quad\tau>0.
\end{align}
If $\V{x}\mapsto\V{c}_{\V{x}}$ is uniformly continuous on $\RR^\nu$ 
with respect to $\|\cdot\|_*$, then $K$ can be replaced by $\RR^\nu$ in \eqref{per700}.
Taking also \eqref{thyra0}, the uniform boundedness of $\cE$, and the uniform equicontinuity of 
$\cE$ on $\RR^\nu$ into account, we conclude that
$$
\lim_{r\downarrow0}\sup_{t\in[2\tau_1,2\tau_2]}
\sup_{{\V{x},\V{y}\in\RR^\nu\atop|\V{x}-\V{y}|<r}}\sup_{\Psi\in\cE}
\|\Upsilon_t(T^V_t\Psi(\V{x})-T^V_t\Psi(\V{y}))\|=0.
$$

{\em Step 3.}
Now let $V\in\cK_\pm(\RR^\nu)$ be arbitrary and $\Upsilon_{\nf{t}{2}}$ as in the statement. 
Let $p\in[1,\infty]$. We pick $V_n\in C_0^\infty(\RR^\nu)$, $n\in\NN$, approximating
$V$ in the sense made precise in Lem.~\ref{lem-conny}(1). For each $n\in\NN$, 
we then know from Step 2 that the set 
$\{\Theta_{\nf{t}{2}}T^{V_n}_t\Psi:\,\Psi\in L^p(\RR^\nu,\FHR),\,\|\Psi\|\le1,\,t\in[\tau_1,\tau_2]\}$ 
is uniformly equicontinuous on $\RR^\nu$.
On account of \eqref{LNCV1} (with $q=\infty$) and the boundedness of $\cE$ we further have
\begin{align}\label{hans}
\lim_{n\to\infty}\sup_{t\in[0,\tau_2]}\sup_{\|\Psi\|_p\le1}\sup_{\V{x}\in K}
\|\Upsilon_{\nf{t}{2}}(T^{V_n}_t\Psi-\Theta_tT^{V}_t\Psi)(\V{x})\|=0,
\end{align}
for every compact $K\subset\RR^\nu$. In the case $V\in\cK(\RR^\nu)$ we may even
choose $K=\RR^\nu$ in \eqref{hans} according to Thm.~\ref{thm-LNCV}(2).
Altogether this proves the theorem in full generality.
\end{proof}

\begin{cor}\label{cor-eq-cont}
Let  $t>0$, $p\in[1,\infty]$, and let
$V,V_n\in\cK_\pm(\RR^\nu)$, $n\in\NN$, satisfy \eqref{approx1} and \eqref{approx2}.
Let $(\Upsilon_t)_{t\ge0}$ and $\|\cdot\|_*$ be given by either Line~2 or Line~4 of Table~1. 
Suppose that $\V{c},\V{c}_n$, $n\in\NN$, are coefficient vectors satisfying Hyp.~\ref{hyp-G} 
with the same $\omega$ and $C$ such that the maps
$\RR^\nu\ni\V{x}\mapsto\V{c}_{\V{x}}$ and $\RR^\nu\ni\V{x}\mapsto\V{c}_{n,\V{x}}$, $n\in\NN$,
are continuous and uniformly bounded with respect to the norm $\|\cdot\|_*$.
Assume that the $\V{c}_n$ converge to $\V{c}$ in the sense that \eqref{marie1} holds,
for all compact $K\subset\RR^\nu$. Finally, let $\{\Psi_n\}_{n\in\NN}$ be a converging
sequence in $L^p(\RR^\nu,\FHR)$ with limit $\Psi$, $\{\V{x}_n\}_{n\in\NN}$ be a converging
sequence in $\RR^\nu$ with limit $\V{x}$, and denote the semi-group defined by
means of $V_n$ and $\V{c}_n$ by $(T_t^{V_n,n})_{t\ge0}$. Then
\begin{align}\label{alfred1}
\lim_{n\to\infty}\Upsilon_{\nf{t}{2}}(T_t^{V_n,n}\Psi_n)(\V{x}_n)=\Upsilon_{\nf{t}{2}}(T_t^V\Psi)(\V{x}).
\end{align}
\end{cor}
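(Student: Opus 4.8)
The plan is to prove \eqref{alfred1} by a triangle-inequality decomposition that isolates three independent sources of variation: the potential $V_n\to V$, the coefficient vector $\V{c}_n\to\V{c}$, and the data $(\Psi_n,\V{x}_n)\to(\Psi,\V{x})$. First I would write
\begin{align*}
\big\|\Upsilon_{\nf{t}{2}}(T_t^{V_n,n}\Psi_n)(\V{x}_n)-\Upsilon_{\nf{t}{2}}(T_t^V\Psi)(\V{x})\big\|
&\le\big\|\Upsilon_{\nf{t}{2}}(T_t^{V_n,n}(\Psi_n-\Psi))(\V{x}_n)\big\|\\
&\quad+\big\|\Upsilon_{\nf{t}{2}}\big((T_t^{V_n,n}-T_t^{V_n})\Psi\big)(\V{x}_n)\big\|\\
&\quad+\big\|\Upsilon_{\nf{t}{2}}\big((T_t^{V_n}-T_t^{V})\Psi\big)(\V{x}_n)\big\|\\
&\quad+\big\|\Upsilon_{\nf{t}{2}}(T_t^{V}\Psi)(\V{x}_n)-\Upsilon_{\nf{t}{2}}(T_t^{V}\Psi)(\V{x})\big\|,
\end{align*}
where in the second line $T_t^{V_n}$ denotes the semi-group built from the potential $V_n$ but the \emph{limiting} coefficient vector $\V{c}$. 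Each of the four terms will be handled by a result already available in the excerpt, after fixing a compact set $K\subset\RR^\nu$ containing $\{\V{x}_n:n\in\NN\}\cup\{\V{x}\}$.

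For the first term I would use Cor.~\ref{cor-WEK} (with $V$ replaced by $V_n$) to bound $\|\Upsilon_{\nf{t}{2}}T_t^{V_n,n}(\Psi_n-\Psi)\|$ pointwise by a constant times $\|\Psi_n-\Psi\|_p$; here one needs the uniform bound $\|\V{c}_n\|_{*,\infty}\le A$ and, crucially, the uniformity in $V_n$ supplied by \eqref{approx3a}, so that $\sup_n\sup_{\V{z}}\EE[e^{8\int_0^tV_{n-}(\V{B}^{\V{z}}_r)\Id r}]<\infty$ — this is exactly why hypotheses \eqref{approx1}--\eqref{approx2} are imposed. (One should double-check that Cor.~\ref{cor-WEK} is stated for $T_t^V(\V{x},\V{y})$; to get the pointwise bound on $T_t^V\Psi$ at a single point, either invoke the already-proven boundedness $\|\Upsilon_{\nf{t}{2}}T_t^{V_n,n}e^{F}\|_{p,\infty}$ from Thm.~\ref{thm-LNCV}(1) with $F_{\V{x}_n}(\V{z})=a|\V{z}-\V{x}_n|$ as in the proof of Cor.~\ref{cor-WEK}, which is cleaner.) The second term is controlled by Thm.~\ref{thm-coup}, specifically \eqref{marie3} with $q=\infty$ and $K$ as above: this gives $\sup_{\V{x}_n\in K}\|\Upsilon_{\nf{t}{2}}((T_t^{V_n,n}-T_t^{V_n})\Psi)(\V{x}_n)\|\le\|1_K e^{F}\Upsilon_t(T_t^{V_n}-T_t^{V_n,n})e^{-F}\|_{p,\infty}\,\|e^{-F}\Psi\|_p\to0$, where the convergence is uniform in the potential because of the hypothesis in Thm.~\ref{thm-coup} on $\sup_{\V{x}}\EE[e^{-\int_0^tV_n(\V{B}^{\V{x}}_s)\Id s}]$, again guaranteed by \eqref{approx3a}. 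The third term is handled directly by Thm.~\ref{thm-LNCV}(2), namely \eqref{LNCV1} with $q=\infty$ and the compact set $K$. The fourth term is the pointwise continuity of the single function $\Upsilon_{\nf{t}{2}}T_t^V\Psi$, which is the content of Thm.~\ref{thm-eq-cont}: for the fixed potential $V$ and fixed coefficient vector $\V{c}$, the singleton $\{\Upsilon_{\nf{t}{2}}T_t^V\Psi\}$ belongs to the uniformly equicontinuous family \eqref{thyra} (here one uses that $\V{x}\mapsto\V{c}_{\V{x}}$ is continuous and bounded in $\|\cdot\|_*$, as assumed), so $\|\Upsilon_{\nf{t}{2}}T_t^V\Psi(\V{x}_n)-\Upsilon_{\nf{t}{2}}T_t^V\Psi(\V{x})\|\to0$ by equicontinuity on $K$ together with $\V{x}_n\to\V{x}$.

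The main obstacle I anticipate is bookkeeping around the interchange of the two perturbations (potential and coefficient vector) in the middle two terms: one must be careful that the intermediate object $T_t^{V_n}$ — same potential $V_n$, limiting coefficients $\V{c}$ — is a legitimate semi-group to which both Thm.~\ref{thm-coup} and Thm.~\ref{thm-LNCV} apply, i.e.\ that $\V{c}$ satisfies Hyp.~\ref{hyp-G}, Hyp($\Upsilon$), and the uniform bound $\|\V{c}\|_{*,\infty}\le A$, and that $V_n\in\cK_\pm(\RR^\nu)$ with the required uniform exponential-moment bound; all of this is in the hypotheses, but it must be checked that the \emph{uniformity in $n$} in Thms.~\ref{thm-LNCV}(2) and~\ref{thm-coup} is strong enough to let $n\to\infty$ simultaneously in both places — it is, because the error bounds in \eqref{marie3} and \eqref{LNCV1} are each uniform over the full admissible class of the other perturbation (Thm.~\ref{thm-coup} is uniform in $V$ in the relevant class, Thm.~\ref{thm-LNCV}(2) is uniform in $\V{c}$ with $\|\V{c}\|_{*,\infty}\le A$). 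A secondary technical point is that $\Psi\in L^p(\RR^\nu,\FHR)$ need not be bounded, so to apply the pointwise ($L^p$-to-$L^\infty$) estimates one truncates $\Psi$ via $\Psi^{(R)}=1_{\|\Psi(\cdot)\|\le R}\Psi$ and uses $\|\Psi-\Psi^{(R)}\|_p\to0$ together with the uniform-in-$n$ operator norm bounds; this is routine and I would only remark on it. With these four estimates assembled, letting first $n\to\infty$ and noting the fourth term is $n$-independent yields \eqref{alfred1}.
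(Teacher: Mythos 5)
Your proof is correct and follows essentially the same route as the paper's: a triangle-inequality split into a data term handled by the uniform $L^p\to L^\infty$ bounds of Thm.~\ref{thm-LNCV}(1), perturbation terms handled by Thms.~\ref{thm-LNCV}(2) and~\ref{thm-coup} on a compact set containing $\{\V{x}_n\}$, and a final term handled by the equicontinuity of Thm.~\ref{thm-eq-cont}. The only cosmetic difference is that you insert the intermediate semi-group $T_t^{V_n}$ explicitly, whereas the paper absorbs the simultaneous $V_n\to V$, $\V{c}_n\to\V{c}$ limit into one term by invoking the uniformity clauses of the two theorems directly.
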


%In \eqref{alfred1} and in its proof below, $(T_t^V\Phi)(\cdot)$ denotes the unique continuous
%representative of $T_t^V\Phi$ with $\Phi\in L^p(\RR^\nu,\FHR)$ that exists according to 
%Thm.~\ref{thm-eq-cont}; the same convention is applied to $T^{V_n,n}_t\Phi$.

\begin{proof}
Combining Thms.~\ref{thm-LNCV} and~\ref{thm-coup} reveals that
\begin{align*}
\sup_{n\in\NN}\|\Upsilon_{\nf{t}{2}}T^{V_n,n}\|_{p,\infty}<\infty,\quad
\sup_{\V{y}\in K}
%\sup_{{\Phi\in L^p(\RR^\nu,\FHR)\atop\|\Phi\|\le 1}}
\big\|\Upsilon_{\nf{t}{2}}(T_t^{V_n,n}\Psi)(\V{y})-\Upsilon_{\nf{t}{2}}(T_t^V\Psi)(\V{y})\big\|
\xrightarrow{\;\;n\to\infty\;\;}0,
\end{align*}
for every compact $K\subset\RR^\nu$. We conclude by choosing $K$ such that it contains the 
image of the sequence $\{\V{x}_n\}_{n\in\NN}$, writing
\begin{align*}
(T_t^{V_n,n}\Psi_n)(\V{x}_n)&-(T_t^V\Psi)(\V{x})
=(T_t^{V}\Psi)(\V{x}_n)-(T_t^V\Psi)(\V{x})
\\
&+\big((T_t^{V_n,n}-T_t^V)\Psi\big)(\V{x}_n)
+\big(T_t^{V_n,n}(\Psi_n-\Psi)\big)(\V{x}_n),
\end{align*} 
and employing the continuity of $\Upsilon_{\nf{t}{2}}T_t^V\Psi$ guaranteed by 
Thm.~\ref{thm-eq-cont}.
\end{proof}

The preceding corollary will be applied in a more specific situation in Sect.~\ref{sec-cont-GS}. 

Let us consider the standard model of non-relativistic QED for $N$ spin one-half
electrons to illustrate  Thms.~\ref{thm-LNCV}, \ref{thm-coup}, and~\ref{thm-eq-cont}:

\begin{ex}\label{ex-QED-exp}
Assume that $\V{G}=\V{G}^{\chi,N}$ and $\V{F}=\V{F}^{\chi,N}$ 
are as in Ex.~\ref{ex-NRQED} and that $V=V_{\V{Z},\ul{\V{R}}}^{N,\mathfrak{e}}$ 
is the many-body Coulomb
potential defined in Ex.~\ref{ex-QED-V} with atomic numbers $\V{Z}\in[0,\infty)^K$. 
(Of course, the component $Z_\vk$ of $\V{Z}$ being zero means that the $\vk$-th nucleus is 
absent.) Let $E_{\V{Z},\ul{\V{R}}}^{\chi,N,\mathfrak{e}}$ denote the infimum of the spectrum 
of the physical Hamiltonian
$H_{\V{Z},\ul{\V{R}}}^{\chi,N,\mathfrak{e}}\!\!\upharpoonright_{\HR^N_{\mathrm{phys}}}$ and let
$$
\Sigma_{\V{Z},\ul{\V{R}}}^{\chi,N,\mathfrak{e}}:=
\min_{M=1,...,N}\big\{E_{\V{Z},\ul{\V{R}}}^{\chi,M,\mathfrak{e}}
+E_{\V{0}}^{\chi,N-M,\mathfrak{e}}\big\}
$$ 
denote the {\em ionization threshold}. Physically,
this is the minimal energy required for removing at least one electron from the
confining potential of the molecule modeled by the first term in \eqref{def-VN}.
%Let us further assume that $\chi$ is proportional to the characteristic function of a ball 
%about the origin. Then, according to \cite{GLL2001,LiebLoss2003}, $E_{\V{Z}}^{\chi,N}$ is an 
%eigenvalue of $H_{\V{Z}}^{\chi,N}\!\!\upharpoonright_{\HR^N_{\mathrm{phys}}}$
%and every corresponding eigenvector, say $\Psi_{\V{Z}}^{\chi,N}$,
%is contained in the domain of the maximal operator of multiplication 
%with $\ul{\V{x}}\mapsto e^{a|\ul{\V{x}}|}$, for every $a>0$ such that 
%$a^2/2<\Sigma_{\V{Z}}^{\chi,N}-E_{\V{Z}}^{\chi,N}$. Moreover, 
Let $\{P_s\}_{s\in\RR}$ denote the spectral family of 
$H_{\V{Z},\ul{\V{R}}}^{\Lambda,N,\mathfrak{e}}\!\!\upharpoonright_{\HR^N_{\mathrm{phys}}}$.
Then it is known \cite{Griesemer2004} that the range of $P_s$ is contained in 
$\dom(e^{a|\cdot|})$, provided that $s$ and $a$ satisfy
\begin{align}\label{As}
s+a^2/2<\Sigma_{\V{Z},\ul{\V{R}}}^{\chi,N,\mathfrak{e}}.
\end{align}
(1) Assume that $a$ and $s$ satisfy \eqref{As}.
Then $e^{a|\cdot|}P_s e^{2H^{\Lambda,N,\mathfrak{e}}_{\V{Z},\ul{\V{R}}}}P_s\in\LO(\HR)$ 
and the relation $P_s=P_s^2$ entails
\begin{equation}\label{wayne}
P_s=\big(e^{-2H^{\chi,N,\mathfrak{e}}_{\V{Z},\ul{\V{R}}}}e^{-a|\cdot|}\big)e^{a|\cdot|}
P_se^{2H^{\chi,N,\mathfrak{e}}_{\V{Z},\ul{\V{R}}}}P_s.
\end{equation}
Thanks to Thm.~\ref{thm-eq-cont} we know that every $\Psi_s\in\Ran(P_s)$ with $s\in\RR$ 
%and, in particular, $\Psi_{\V{Z}}^{\chi,N}$
admits a unique uniformly continuous representative that we shall consider in the following.
We shall also assume that $\|\Psi_s\|=1$.

Assume that that the cut-off function
$\chi$ appearing in Ex.~\ref{ex-NRQED} satisfies $\omega e^{\delta_0\omega}\chi\in\HP$, 
for some $\delta_0>0$. For sufficiently small $\delta>0$, 
we then infer the following pointwise bound from the above remarks and \eqref{norm-T-F-Theta},
$$
\|e^{\delta\Id\Gamma(\omega)}\Psi_s(\ul{\V{x}})\|_{\FHR}
\le e^{-a|\ul{\V{x}}|}\big\|e^{\delta\Id\Gamma(\omega)+a|\cdot|}
e^{-2H^{\chi,N,\mathfrak{e}}_{\V{Z},\ul{\V{R}}}}e^{-a|\cdot|}\big\|_{2,\infty}
\big\|e^{a|\cdot|}\chi_s e^{2H^{\chi,N,\mathfrak{e}}_{\V{Z},\ul{\V{\V{R}}}}}P_s\big\|_{2,2},
$$ 
for all $\ul{\V{x}}\in\RR^{3N}$.
Here $\delta>0$ depends only on the coefficient vector, in this case
$\delta_0$, $\chi$, and the number of electrons $N$; see \eqref{hyp-aw-QED}.
For every $r>0$, put $\omega_r:=\omega 1_{\{\omega\ge r\}}$ and consider the weight 
given by Line~4 of Table~1 with $t_*:=2$ and $\vo:=\omega_r$. Then can we fix $r$ large 
enough such that the corresponding condition Hyp$(\Upsilon)$ in Line~4 of the table is fulfilled.
As above we now obtain
$$
\|e^{\delta_0\Id\Gamma(\omega_r)}\Psi_s(\ul{\V{x}})\|_{\FHR}\le 
c_{r,a,s,N,\chi,\V{Z},\V{R},\mathfrak{e}} 
e^{-a|\ul{\V{x}}|},\quad\ul{\V{x}}\in\RR^{3N}.
$$
Let $\Psi_s=(\Psi_s^{(n)})_{n=0}^\infty$ the representation of $\Psi_s$ as a sequence indexed by
the boson number according to the isomorphism 
$L^2(\RR^{3N},\FHR)=\bigoplus_{n=0}^\infty L^{2}(\RR^{3N},\CC^{2^N}\otimes\sF^{(n)})$. 
Then the previous bound implies
$$
\|e^{\delta_0\Id\Gamma^{(n)}(\omega)}\Psi_s^{(n)}(\ul{\V{x}})\|_{\CC^{2^N}\otimes\sF^{(n)}}\le 
c_{r,a,s,N,\chi,\V{Z},\V{R},\mathfrak{e}}e^{\delta_0nr}
e^{-a|\ul{\V{x}}|},\quad\ul{\V{x}}\in\RR^{3N},\,n\in\NN.
$$
This shows that at least any fixed $n$-boson function $\Psi_s^{(n)}$ has an
exponential decay with respect to the photon momentum variables
in the $L^2$-sense with a rate no less than $\delta_0$.

Likewise, if we suppose instead that $\omega^{\alpha+\eh}\chi\in\HP$, for some $\alpha\ge1$, then
$$
\|(1+\Id\Gamma(\omega))^\alpha\Psi_s(\ul{\V{x}})\|_{\FHR}\le 
c_{\alpha,a,s,N,\chi,\V{Z},\V{R},\mathfrak{e}} 
e^{-a|\ul{\V{x}}|},\quad\ul{\V{x}}\in\RR^{3N}.
$$

For general elements in spectral subspaces that are not eigenvectors, these pointwise 
bounds are new. Moreover, in the case of the Coulomb potential,
the relation \eqref{As} gives a more explicit and probably
better bound on the decay rate $a$ as compared to earlier pointwise decay estimates
on ground state eigenvectors \cite{HiHi2010,Hiroshima2003,LHB2011}.
We should also mention at this point that ground state eigenvectors in non-relativistic QED
are in the domain of all powers of the number operator \cite{Hiroshima2003}. 
In order to show this one has to exploit the corresponding eigenvalue equation.

\smallskip

\noindent
(2) Let $s\in\RR$ be arbitrary, $\Psi_s\in\Ran(P_s)$, and assume
that $\omega^{\alpha+\eh}\chi\in\HP$, for some $\alpha\ge1$. 
Then \eqref{wayne} with $a=0$ and Thm.~\ref{thm-LNCV} imply, for all 
$n\in\NN$ and $\ul{\V{x}}\in\RR^{3N}$,
$$
\|(1+\Id\Gamma^{(n)}(\omega))^\alpha\Psi_s^{(n)}(\ul{\V{x}})\|_{\CC^{2^N}\otimes\sF^{(n)}}
\le\|(1+\Id\Gamma(\omega))^\alpha\Psi_s(\ul{\V{x}})\|_{\FHR}
\le c_{\alpha,s,N,\chi,\V{Z},\V{R}}\|\Psi_s\|.
$$
Notice that, since $\omega(k)=|\V{k}|$, the previous bound can be read as an estimate on 
the norm of $\Psi_s^{(n)}(\ul{\V{x}})$ in the Sobolev space $H^{\alpha}(\RR^{3n},\CC^{2^{N+n}})$. 
(Here $\CC^{2^{N+n}}$ accounts for the spin degrees of freedom of the $N$ electrons and the 
polarizations of the $n$ photons.) 
Under the present assumption on $\chi$ we further know that the maps 
$\RR^{3N}\ni\ul{\V{x}}\mapsto\omega^\gamma(\V{G}_{\ul{\V{x}}}^{\chi,N},\V{F}_{\ul{\V{x}}}^{\chi,N})
\in\HP$ are bounded and continuous, for all $\gamma\in[\mh,\alpha]$. 
Hence, Thm.~\ref{thm-eq-cont} implies continuity of the maps 
$\RR^{3N}\ni\ul{\V{x}}\mapsto\Psi_s^{(n)}(\ul{\V{x}})\in H^{\alpha}(\RR^{3n},\CC^{2^{N+n}})$.
For fixed $\ul{\V{x}}$,
let $(\Psi_s^{(n)}(\ul{\V{x}}))^\wedge(\ul{\V{y}})$ denote the Fourier transform of
$\Psi_s^{(n)}(\ul{\V{x}})$ evaluated at $\ul{\V{y}}\in\RR^{3n}$.
If $\omega^{\alpha}\chi\in\HP$, for {\em all} $\alpha\ge1$, then we may employ the Sobolev 
embedding theorem to argue that every partial derivative
$\partial_{\ul{\V{y}}}^{\beta}(\Psi_s^{(n)}(\ul{\V{x}}))^\wedge(\ul{\V{y}})$ with $\beta\in\NN_0^{3n}$
is bounded and jointly continuous in $(\ul{\V{x}},\ul{\V{y}})\in\RR^{3(N+n)}$.
\end{ex}

%%%%%%%%%%%%%%%%%%%%%%%%%%%%%%%%%%%%%%%%%%%%%%
%%%%%%%%%%%%%%%%%%%%%%%%%%%%%%%%%%%%%%%%%%%%%%
%%%%%%%%%%%%%%%%%%%%%%%%%%%%%%%%%%%%%%%%%%%%%%

\section{Continuity of the integral kernel}\label{sec-cont-ker}

\noindent
Next, we turn our attention to the operator-valued integral kernel $T_t^V(\V{x},\V{y})$.
The first aim of this section is to show that it
is jointly continuous in the variables $t>0$, $\V{x}$, and $\V{y}$; see the subsequent
Thm.~\ref{thm-kern-cont} which is obtained by extending the arguments applied to
Schr\"odinger semi-groups in, e.g.,  \cite{Sznitman1998}.
In fact, the results of our Sects.~\ref{sec-cont-V}, \ref{sec-SG}, and \ref{sec-eq-cont} 
are needed in its proof. At this point we should also mention the article \cite{BHL2000}
where possibly very singular classical magnetic fields are treated, as well as
\cite{Gueneysu2011}, where the analysis of \cite{BHL2000} is pushed forward to a 
matrix-valued case.
In the standard reference for Schr\"odinger semi-groups \cite{Simon1982}, 
the continuity of the semi-group kernel is investigated in the absence of magnetic fields
by methods somewhat different from \cite{Sznitman1998}.

Proceeding as in \cite{Simon1982}, it is actually possible to combine the next theorem with the 
Feynman-Kac formula \eqref{feyn} and \eqref{intK} and to argue that resolvents of the
operator $H^V$ defined in Thm.~\ref{thm-FK} have $\LO(\FHR)$-valued integral kernels which are 
jointly continuous in their arguments $\V{x}$ and $\V{y}$. Using this result we could further argue 
that, for every bounded Borel function $f:\RR\to\CC$, the operator $f(H^V)$, defined by spectral 
calculus, has a jointly continuous $\LO(\FHR)$-valued integral kernel. 

The next theorem can also be combined with Thm.~\ref{thm-kern-para} further below to prove
joint continuity of the semi-group kernel in $t>0$, $\V{x}$, $\V{y}$, and additional model
parameters.

\begin{thm}\label{thm-kern-cont}
Let $V\in\cK_\pm(\RR^\nu)$, $\tau_2>\tau_1>0$, and 
let $(\Upsilon_t)_{t\ge0}$ and $\|\cdot\|_*$ be given by either Line~2 or Line~4 of Table~1. 
Assume that the map 
$\RR^\nu\ni\V{x}\mapsto\V{c}_{\V{x}}$ is bounded and continuous with respect to the
norm $\|\cdot\|_*$. Then the map 
\begin{equation}\label{marah}
[\tau_1,\tau_2]\times\RR^\nu\times\RR^\nu\ni(t,\V{x},\V{y})\mapsto
\Upsilon_{\nf{\tau_1}{8}}T^V_t(\V{x},\V{y})\in\LO(\FHR)
\end{equation}
is continuous with respect to the norm topology on $\LO({\FHR})$.
If $V\in\cK(\RR^\nu)$ and $\RR^\nu\ni\V{x}\mapsto\V{c}_{\V{x}}$ is bounded and 
uniformly continuous with respect to $\|\cdot\|_*$, then \eqref{marah} is uniformly continuous.
\end{thm}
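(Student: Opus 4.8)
The strategy is the classical one for Schr\"odinger kernels (cf. \cite{Sznitman1998}), realising $T^V_t(\V{x},\V{y})$ as a ``two-sided'' composition of the semi-group applied to a mollifier, so that the continuity in the kernel variables is reduced to the equicontinuity and strong continuity statements already proved in Sects.~\ref{sec-cont-V}--\ref{sec-eq-cont}. First I would fix small $0<\sigma<\tau_1$ (say $\sigma:=\tau_1/3$) and, for $\V{z}\in\RR^\nu$, introduce the approximate identities $\phi^\sigma_{\V{z}}:=p_\sigma(\V{z},\cdot)\,\id_{\FHR}$, viewed as an $\LO(\FHR)$-valued $L^1$-function. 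Using the Chapman--Kolmogoroff equation \eqref{CK2} twice together with \eqref{intK} and \eqref{T(x,y)-sym}, one writes, for $t\in[\tau_1,\tau_2]$,
\begin{align*}
T^V_t(\V{x},\V{y})
&=\int_{\RR^{2\nu}}p_\sigma(\V{x},\V{x}')\,T^V_{t-2\sigma}(\V{x}',\V{y}')\,p_\sigma(\V{y}',\V{y})\,\Id(\V{x}',\V{y}')
\\
&=\big(T^V_\sigma\,[\,T^V_{t-2\sigma}(T^V_\sigma\phi^{\V{y}}_\bullet)(\cdot)\,]\big)(\V{x}),
\end{align*}
i.e. $T^V_t(\V{x},\V{y})$ is obtained by applying $T^V_\sigma$ (in the $\V{x}'$-variable) to the function $\V{x}'\mapsto\big(T^V_{t-2\sigma}\Phi^{\sigma,\V{y}}\big)(\V{x}')$, where $\Phi^{\sigma,\V{y}}(\V{y}'):=(T^V_\sigma\phi^{\V{y}}_\bullet)(\V{y}')$, and symmetrically $\Phi^{\sigma,\V{y}}=T^V_\sigma(p_\sigma(\V{y},\cdot)\id)$ lies in a bounded subset of $L^\infty$ by \eqref{norm-T} and \eqref{bd-StV}, with $\|\V{y}\mapsto\Phi^{\sigma,\V{y}}\|$ depending continuously on $\V{y}$ (strong continuity of the heat semi-group and \eqref{norm-T}). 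Because $\Upsilon_{\nf{\tau_1}{8}}$ can be absorbed into $\Upsilon_{\nf{\sigma'}{2}}$ for a suitable $\sigma'\le\sigma$ (choosing $\sigma':=\tau_1/4\ge\tau_1/8\cdot2$ and using that on Lines 2,4 of Table~1 the weight is monotone in $t$), the operator $\Upsilon_{\nf{\tau_1}{8}}T^V_t(\V{x},\V{y})$ equals $\Upsilon_{\nf{\sigma'}{2}}$ applied, in the outer $T^V_\sigma$, to the above function.

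\textbf{Main steps.} (i) The inner layer: by Thm.~\ref{thm-eq-cont} (with $p=\infty$), the family $\{\,\Upsilon_{\nf{\sigma'}{2}}T^V_{t-2\sigma}\Phi^{\sigma,\V{y}}:t\in[\tau_1,\tau_2],\ \V{y}\in\RR^\nu\}$, divided by its uniform $L^\infty$-bound, is uniformly equicontinuous on compacta (on all of $\RR^\nu$ if $V\in\cK(\RR^\nu)$ and $\V{x}\mapsto\V{c}_{\V{x}}$ is uniformly continuous), which controls the $\V{x}$-dependence of \eqref{marah}. (ii) The $\V{y}$-dependence: write $T^V_t(\V{x},\V{y})-T^V_t(\V{x},\tilde{\V{y}})$ with the outer $T^V_\sigma$ fixed; then by Thm.~\ref{thm-LNCV}(1) (the weighted $L^1\!\to\!L^\infty$ bound, \eqref{norm-T-F-Theta} with $p=1$, $q=\infty$) applied to $T^V_{t-2\sigma}$ one estimates the $\LO(\FHR)$-norm by $\|\Phi^{\sigma,\V{y}}-\Phi^{\sigma,\tilde{\V{y}}}\|_1$ times a constant, and $\V{y}\mapsto\Phi^{\sigma,\V{y}}$ is (uniformly) continuous into $L^1(\RR^\nu,\LO(\FHR))$ by the strong $L^1$-continuity of the heat semi-group composed with the bound \eqref{norm-T}; one uses here that $T^V_\sigma$ maps $L^1$ continuously into $L^\infty$. (iii) The $t$-dependence: fix $\V{x},\V{y}$; for $t,t'\in[\tau_1,\tau_2]$ write $T^V_t(\V{x},\V{y})-T^V_{t'}(\V{x},\V{y})$ and insert $T^V_{t-2\sigma}-T^V_{t'-2\sigma}=\big(T^V_{|t-t'|}-\id\big)T^V_{(t\wedge t')-2\sigma}$, then bound via Thm.~\ref{thm-LNCV}(1) on the outer $T^V_\sigma$ and the strong continuity Thm.~\ref{thm-SC}(1): one checks that the set $\{T^V_{(t\wedge t')-2\sigma}\Phi^{\sigma,\V{y}}\}$ satisfies hypotheses (a)--(c) of Thm.~\ref{thm-SC}(1) (uniform $L^\infty$-bound, equicontinuity on compacta by Thm.~\ref{thm-eq-cont}, and the $\Id\Gamma(\omega)^\eh$-bound in (c) from \eqref{norm-T-F-Theta} with the Line-1 weight $\Upsilon=(1+\Id\Gamma(\omega))^\eh$), so that $\|e^{-F_\Psi}(T^V_{|t-t'|}-\id)e^{F_\Psi}\Psi\|_\infty\to0$ uniformly over this set as $|t-t'|\to0$. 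Finally one combines (i)--(iii): given $\varepsilon$, move $\V{x}$ a little using (i), $\V{y}$ a little using (ii), $t$ a little using (iii), each contribution being $O(\varepsilon)$.

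\textbf{Expected obstacle.} The only genuinely delicate point is the $t$-continuity, step (iii): one must verify that Thm.~\ref{thm-SC}(1) is applicable \emph{uniformly} over the relevant family of functions, in particular checking condition (c) — that the functions $T^V_{(t\wedge t')-2\sigma}\Phi^{\sigma,\V{y}}$ take values in $\dom(\Id\Gamma(\omega)^\eh)$ with a uniform $L^\infty$-bound on $\|(1+\Id\Gamma(\omega))^\eh(\cdot)\|$; this is exactly \eqref{norm-T-F-Theta} used with the Line-1 weight $\Upsilon_t=(1+\Id\Gamma(\omega)/2\alpha)^\alpha$ at $\alpha=\eh$, Hyp($\Upsilon$) being automatically satisfied under our standing assumption $\sup_{\V{x}}\|\V{c}_{\V{x}}\|_{\mathfrak{k}}<\infty$. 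Everything else is a matter of assembling the $L^1\!\to\!L^\infty$ smoothing bound, the equicontinuity theorem, the strong continuity theorem, and the continuity of $\V{y}\mapsto p_\sigma(\V{y},\cdot)$ in $L^1$; the uniform-continuity addendum follows by tracking that in the $\cK(\RR^\nu)$ case all the quoted theorems give their conclusions with $K=\RR^\nu$ when $\V{x}\mapsto\V{c}_{\V{x}}$ is uniformly continuous.
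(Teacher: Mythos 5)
Your overall architecture (equicontinuity for the $\V{x}$-variable, an $L^1\!\to\!L^\infty$ smoothing bound for the $\V{y}$-variable, Thm.~\ref{thm-SC}(1) for the $t$-variable) is the right one, but the identity on which you hang everything is false. Chapman--Kolmogoroff \eqref{CK2} factorizes the kernel through the kernels $T^V_\sigma(\cdot,\cdot)$ themselves, not through the Euclidean heat kernels: the display
$T^V_t(\V{x},\V{y})=\int p_\sigma(\V{x},\V{x}')\,T^V_{t-2\sigma}(\V{x}',\V{y}')\,p_\sigma(\V{y}',\V{y})\,\Id(\V{x}',\V{y}')$
does not hold, and your second line is not equal to the first: with $\phi^{\V{y}}_\bullet=p_\sigma(\V{y},\cdot)\id$ one computes
$\big(T^V_\sigma[T^V_{t-2\sigma}(T^V_\sigma\phi^{\V{y}}_\bullet)]\big)(\V{x})=\int_{\RR^\nu}T^V_{t}(\V{x},\V{z})\,p_\sigma(\V{y},\V{z})\,\Id\V{z}$,
i.e.\ a heat mollification of the kernel in its second variable, not $T^V_t(\V{x},\V{y})$. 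This is not a cosmetic slip, because your step (ii) — continuity in $\V{y}$ — is carried entirely by the trivial $L^1$-continuity of $\V{y}\mapsto p_\sigma(\V{y},\cdot)$. Once the factorization is corrected, the inner layer is $T^V_\sigma(\cdot,\V{y})\psi$ (as in \eqref{CK1}), and the $L^1$-continuity in $\V{y}$ of that function is essentially part of what the theorem asserts, so the argument becomes circular. You would also still have to explain how the unbounded weight reaches the $\V{y}$-end of the kernel; in your scheme it only ever acts through the outer $T^V_\sigma$ at the $\V{x}$-end.

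The paper resolves exactly this point by duality: using $T^V_t(\V{x},\V{y})^*=T^V_t(\V{y},\V{x})$ from \eqref{T(x,y)-sym} one writes
$\|\Upsilon_{\nf{s}{4}}(T^V_t(\V{x},\V{y})-T^V_t(\V{x},\tilde{\V{y}}))\|
=\sup_{\psi\in\dom(\Upsilon_{\nf{s}{4}}),\,\|\psi\|=1}\|(T^V_t(\V{y},\V{x})-T^V_t(\tilde{\V{y}},\V{x}))\Upsilon_{\nf{s}{4}}\psi\|$,
applies \eqref{CK1} to represent this as $T^V_{t-s}$ acting on $T^V_s(\cdot,\V{x})\Upsilon_{\nf{s}{4}}\psi$, and then invokes Thm.~\ref{thm-eq-cont}; the price is the nontrivial uniform bound $\sup_{\V{x},\psi}\|T^V_s(\cdot,\V{x})\Upsilon_{\nf{s}{4}}\psi\|_1<\infty$, which is obtained by factoring $\Upsilon_{\nf{s}{4}}T^V_s(\V{x},\V{z})\phi$ through an $e^{\pm F}$-conjugated weighted $L^1\!\to\!L^\infty$ estimate \eqref{norm-T-F-Theta}. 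That is the missing idea in your proposal. Your steps (i) and (iii) do survive in corrected form and coincide with the paper's Steps 1 and 3 (equicontinuity applied to $T^V_{t-s}(T^V_s(\cdot,\V{y})\psi)$, whose $L^1$-boundedness comes from the Gaussian bound in \eqref{T(x,y)-sym}; and Thm.~\ref{thm-SC}(1) applied to a family built from $\Phi_{s,\V{y},\psi}=T^V_{\nf{s}{2}}(\cdot,\V{y})\psi$), but note that in the $t$-step the exponential weights $e^{-F_{\V{x}}}$ are needed to verify condition (b) of Thm.~\ref{thm-SC}(1) when $V\notin\cK(\RR^\nu)$, since the relevant functions are not uniformly small at infinity without them.
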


\begin{proof}
We split the proof into three steps.
In Steps 1, 2 and 3 we verify uniform continuity in $\V{x}$, $\V{y}$, and $t$, respectively, as the
other two parameters are varying in suitable sets. To this end we pick $\tau_2>\tau_1>0$ and 
set $s:=\tau_1/2$.

{\em Step 1.}
As a consequence of the inequality in \eqref{T(x,y)-sym} the $L^1(\RR^\nu,\FHR)$-norm 
of the functions $T^V_s(\cdot,\V{y})\psi$  with $\V{y}\in\RR^\nu$
and $\psi\in\FHR$, $\|\psi\|=1$, is uniformly bounded. In view of this fact and Thm.~\ref{thm-eq-cont},
the set of $\FHR$-valued functions
$$
\big\{\Upsilon_{\nf{s}{4}}T^V_{t-s}(T^V_s(\cdot,\V{y})\psi):\,\psi\in\FHR,\,\|\psi\|=1,\,\V{y}\in\RR^\nu,\,
t\in[\tau_1,\tau_2]\big\}
$$
is uniformly equicontinuous on every compact subset of $\RR^\nu$, and uniformly equicontinuous
on the whole $\RR^\nu$ provided that $V\in\cK(\RR^\nu)$ and 
$\RR^\nu\ni\V{x}\mapsto\V{c}_{\V{x}}$ 
is bounded and uniformly continuous with respect to $\|\cdot\|_*$. 
Let $K=\RR^\nu$, if the latter conditions are fulfilled, and let
$K\subset\RR^\nu$ be compact, if not.
Combining the above observation with Prop.~\ref{prop-CK} we then deduce that the expression
\begin{align*}
&\sup_{{\V{x},\tilde{\V{x}}\in K\atop|\V{x}-\tilde{\V{x}}|<r}}
\sup_{t\in[\tau_1,\tau_2]}\sup_{\V{y}\in\RR^\nu}
\|\Upsilon_{\nf{s}{4}} (T^V_t(\V{x},\V{y})-T^V_t(\tilde{\V{x}},\V{y}))\|
\\
&=\sup_{{\V{x},\tilde{\V{x}}\in K\atop|\V{x}-\tilde{\V{x}}|<r}}
\sup_{t\in[\tau_1,\tau_2]}\sup_{\V{y}\in\RR^\nu}\sup_{\|\psi\|=1}
\big\|\Upsilon_{\nf{s}{4}} T^V_{t-s}(T^V_s(\cdot,\V{y})\psi)(\V{x})
-\Upsilon_{\nf{s}{4}} T^V_{t-s}(T^V_s(\cdot,\V{y})\psi)(\tilde{\V{x}})\big\|
\end{align*}
converges to zero in the limit $r\downarrow0$.
 
{\em Step 2.} Likewise, the $L^1(\RR^\nu,\FHR)$-norm of the functions
$T^V_s(\cdot,\V{x})\Theta_{\nf{s}{4}}\psi$  with $\V{x}\in\RR^\nu$
and $\psi\in\dom(\Theta_{\nf{s}{4}})$, $\|\psi\|=1$, is uniformly bounded; in fact, 
setting $F(\V{z}):=|\V{z}-\V{x}|$, $\V{z}\in\RR^\nu$, 
we infer from \eqref{T(x,y)-sym} and \eqref{CK1} that
\begin{align*}
\int_{\RR^\nu}\|T^V_s(\V{z},\V{x})\Upsilon_{\nf{s}{4}}&\psi\|\Id\V{z}\le
\int_{\RR^\nu}\sup_{\|\phi\|=1}\|\Theta_{\nf{s}{4}} T^V_s(\V{x},\V{z})\phi\|\Id\V{z}
\\
&=\int_{\RR^\nu}\sup_{\|\phi\|=1}\big\|(\Upsilon_{\nf{s}{4}} 
e^{-F}T^V_{\nf{s}{2}}e^F)(e^{-F}T^V_{\nf{s}{2}}(\cdot,\V{z})\phi)(\V{x})\big\|\Id\V{z}
\\
&\le\|\Upsilon_{\nf{s}{4}} 
e^{-F}T^V_{\nf{s}{2}}e^F\|_{1,\infty}\int_{\RR^\nu}\sup_{\|\phi\|=1}
\int_{\RR^\nu}\|e^{-F(\tilde{\V{x}})}T^V_{\nf{s}{2}}(\tilde{\V{x}},\V{z})\phi\|\Id\tilde{\V{x}}\Id\V{z}
\\
&\le c_{s,V,\V{c}}\int_{\RR^\nu}e^{-|\tilde{\V{x}}|}\Id\tilde{\V{x}}\int_{\RR^\nu}e^{-|\V{z}|^2/4s}\Id\V{z},
\end{align*}
for all $\V{x}\in\RR^\nu$ and normalized $\psi\in\dom(\Upsilon_{\nf{s}{4}})$.
Employing Thm.~\ref{thm-eq-cont} once more, we see that
the set of $\FHR$-valued functions
$$
\big\{T^V_{t-s}(T^V_s(\cdot,\V{x})\Upsilon_{\nf{s}{4}}\psi):\,\psi\in\dom(\Upsilon_{\nf{s}{4}}),
\,\|\psi\|=1,\,\V{x}\in\RR^\nu,\,t\in[\tau_1,\tau_2]\big\}
$$
is uniformly equicontinuous on every compact subset of $\RR^\nu$, and uniformly equicontinuous
on all of $\RR^\nu$, if the additional conditions in the last assertion of the statement are fulfilled. 
Let $K\subset\RR^\nu$ be given as in Step~1. Using 
\begin{align*}
\|\Upsilon_{\nf{s}{4}}(T^V_t(\V{x},\V{y})-T^V_t(\V{x},\tilde{\V{y}}))\|
&=\sup_{{\psi\in\dom(\Upsilon_{\nf{s}{4}})\atop\|\psi\|=1}}
\sup_{\|\phi\|=1}\big|\SPb{\psi}{\Upsilon_{\nf{s}{4}}(T^V_t(\V{x},\V{y})
-T^V_t(\V{x},\tilde{\V{y}}))\phi}\big|
\\
&=\sup_{{\psi\in\dom(\Upsilon_{\nf{s}{4}})\atop\|\psi\|=1}}
\|(T^V_t(\V{y},\V{x})-T^V_t(\tilde{\V{y}},\V{x}))\Upsilon_{\nf{s}{4}}\psi\|
\end{align*}
and invoking Prop.~\ref{prop-CK} once more we conclude that the expression
\begin{align*}
&\sup_{{\V{y},\tilde{\V{y}}\in K\atop|\V{y}-\tilde{\V{y}}|<r}}
\sup_{t\in[\tau_1,\tau_2]}\sup_{\V{x}\in\RR^\nu}
\|\Upsilon_{\nf{s}{4}}(T^V_t(\V{x},\V{y})-T^V_t(\V{x},\tilde{\V{y}}))\|
\\
&=\sup_{{\V{y},\tilde{\V{y}}\in K\atop|\V{y}-\tilde{\V{y}}|<r}}
\sup_{t\in[\tau_1,\tau_2]}\sup_{\V{x}\in\RR^\nu}\sup_{\|\psi\|=1}
\big\|T_{t-s}^V(T^V_s(\cdot,\V{x})\Upsilon_{\nf{s}{4}}\psi)(\V{y})
-T_{t-s}^V(T^V_s(\cdot,\V{x})\Upsilon_{\nf{s}{4}}\psi)(\tilde{\V{y}})\big\|
\end{align*}
goes to zero in the limit $r\downarrow0$.

{\em Step 3.}
For all $t\ge\tau_1$ and $\psi\in\FHR$, Prop.~\ref{prop-CK} further implies
\begin{align*}
{T^V_t(\V{x},\V{y})\psi}
&=\int_{\RR^\nu}{T^V_{\nf{s}{2}}(\V{x},\V{z})}T^V_{t-\nf{s}{2}}(\V{z},\V{y})\psi\Id\V{z}
=\int_{\RR^\nu}T_{\nf{s}{4}}^V\big(T_{\nf{s}{4}}^V(\cdot,\V{z})\Psi_{s,t,\V{y}}(\V{z})\big)(\V{x})\Id\V{z},
\end{align*}
with $\Psi_{s,t,\V{y}}:=T_{t-s}^V(T_{\nf{s}{2}}^V(\cdot,\V{y})\psi)$, which yields,
again with $F_{\V{x}}(\V{z}):=|\V{z}-\V{x}|$,
\begin{align*}
&\big\|\Upsilon_{\nf{s}{4}}(T_t^V(\V{x},\V{y})-T^V_{\tilde{t}}(\V{x},{\V{y}}))\psi\big\|
\\
&\le\|e^{-F_{\V{x}}}\Upsilon_{\nf{s}{4}} T^V_{\nf{s}{4}}e^{F_{\V{x}}}\|_{1,\infty}
\int_{\RR^\nu}\!\int_{\RR^\nu}\big\|e^{-F_{\V{x}}(\tilde{\V{x}})}T_{\nf{s}{4}}^V(\tilde{\V{x}},\V{z})
\big(\Psi_{s,t,\V{y}}(\V{z})-\Psi_{s,\tilde{t},{\V{y}}}(\V{z})\big)\big\|\Id\tilde{\V{x}}\,\Id\V{z}
\\
&\le\|e^{-F_{\V{x}}}\Upsilon_{\nf{s}{4}} T^V_{\nf{s}{4}}e^{F_{\V{x}}}\|_{1,\infty}
\sup_{\tilde{\V{z}}\in\RR^\nu}\int_{\RR^\nu}e^{|\tilde{\V{x}}-\tilde{\V{z}}|}
\|T_{\nf{s}{4}}^V(\tilde{\V{x}},\tilde{\V{z}})\|\Id\tilde{\V{x}}\,
\big\|e^{-F_{\V{x}}}(\Psi_{s,t,\V{y}}-\Psi_{s,\tilde{t},{\V{y}}})\big\|_{1},
\end{align*}
for all $t,\tilde{t}\ge\tau_1$. Employing the bound in \eqref{T(x,y)-sym} and \eqref{norm-T-F-Theta}, 
we find some $c_s>0$ such that
\begin{align}\nonumber
\big\|\Upsilon_{\nf{s}{4}}(T_t^V(\V{x},\V{y})-T^V_{\tilde{t}}(\V{x},{\V{y}}))\big\|
&\le c_s\sup_{\|\psi\|=1}\|e^{-F_{\V{x}}}(T_{t-s}^V-T_{\tilde{t}-s}^V)\Phi_{s,{\V{y}},\psi}\|_{1}
\\\label{vincius1}
&=c_s\sup_{\|\psi\|=1}\|e^{-F_{\V{x}}}(T_{|t-\tilde{t}|}^V-\id)T^V_{t\wedge\tilde{t}-s}
\Phi_{s,\V{y},\psi}\|_1,
\end{align}
with $\Phi_{s,\V{y},\psi}:=T_{\nf{s}{2}}^V(\cdot,\V{y})\psi$. 
Next, we observe that \eqref{T(x,y)-sym} implies that
\begin{align}%\nonumber
\sup_{\V{y}\in\RR^\nu}\sup_{\|\psi\|=1}\|\Phi_{s,\V{y},\psi}\|_1
&\le\sup_{\V{y}\in\RR^\nu}\int_{\RR^\nu}\|T_{\nf{s}{2}}^V(\V{z},\V{y})\|\Id\V{z}\label{vincius4}
\le c_s\int_{\RR^\nu}e^{-|\V{z}|^2/4s}\Id\V{z}<\infty.
\end{align} 
As above, let $K$ be equal to $\RR^\nu$, if $V\in\cK(\RR^\nu)$ and 
$\V{x}\mapsto\V{c}_{\V{x}}$ is bounded and uniformly continuous with respect to $\|\cdot\|_*$, and 
let $K$ be a compact subset of $\RR^\nu$ otherwise. Defining
$$
\cM:=\big\{e^{-F_{\V{x}}}T^V_{\tau-s}\Phi_{s,\V{y},\psi}:\,\psi\in\FHR,\,\|\psi\|=1,\,
\V{x}\in K,\,\V{y}\in\RR^\nu,\,\tau\in[\tau_1,\tau_2]\big\}
$$
we then infer that $\cM\subset L^1(\RR^\nu,\FHR)\cap L^\infty(\RR^\nu,\FHR)$ 
by Lem.~\ref{lem-T-bd} and
\begin{enumerate}
\item[(a)] $\cM$ is uniformly equicontinuous on every compact subset of $\RR^\nu$ 
by Thm.~\ref{thm-eq-cont} applied to the set 
$\{T^V_{\tau-s}\Phi_{s,\V{y},\psi}:\,\|\psi\|=1,\,\V{y}\in\RR^\nu,\,\tau\in[\tau_1,\tau_2]\}$
and the bounds \eqref{vincius4} and
\begin{align}\label{vincius5}
&m:=\sup_{\V{y}\in\RR^\nu}\sup_{\|\psi\|=1}\sup_{\tau\in[\tau_1,\tau_2]}
\|T^V_{\tau-s}\Phi_{s,\V{y},\psi}\|_\infty<\infty,
\\\nonumber
&|e^{-F_{\V{x}}(\V{z})}-e^{-F_{\V{x}}(\tilde{\V{z}})}|\le|\V{z}-\tilde{\V{z}}|,
\quad\V{x},\V{z},\tilde{\V{z}}\in\RR^\nu;
\end{align}
here the first one follows from
$\sup_{\tau\in[\tau_1,\tau_2]}\|T^V_{\tau-s}\|_{1,\infty}<\infty$ and \eqref{vincius4}.
If the additional conditions in the last assertion of the statement are fulfilled, then
$\cM$ is uniformly equicontinuous on the whole $\RR^\nu$.
\item[(b)] $\sup_{\Psi\in\cM}\|\Psi\|_\infty<\infty$ by \eqref{vincius5}. Moreover,
setting $\V{a}_\Psi:=\V{x}$, if $\Psi\in\cM$ is given as 
$\Psi=e^{-F_{\V{x}}}T^V_{\tau-s}\Phi_{s,\V{y},\psi}$, and abbreviating
$\Psi_R(\V{z}):=1_{|\V{z}-\V{a}_\Psi|<R}\Psi(\V{z})$, we verify that
$\|\Psi-\Psi_R\|_1\le e^{-R}m$.
\item[(c)] If $f_\Psi$ is defined as in Condition~(c) of Thm.~\ref{thm-SC}(1),
then \eqref{norm-T-F-Theta} implies
\begin{align*}
\sup_{\Psi\in\cM}\|f_\Psi\|_1&\le\sup_{\tau\in[\tau_1,\tau_2]}\sup_{\V{y}\in\RR^\nu}
\sup_{\|\psi\|=1}\|(1+\Id\Gamma(\omega))^\eh T_{\tau-s}^V\Phi_{s,\V{y},\psi}\|_1
\\
&\le\sup_{\tau\in[\tau_1,\tau_2]}\|(1+\Id\Gamma(\omega))^\eh T_{\tau-s}^V\|_{1,1}
\sup_{\V{y}\in\RR^\nu}\sup_{\|\psi\|=1}\|\Phi_{s,\V{y},\psi}\|_1<\infty.
\end{align*}
\end{enumerate}
We may hence apply Thm.~\ref{thm-SC}(1) to deduce that
\begin{align*}
\lim_{r\downarrow0}\sup_{{t,\tilde{t}\in[\tau_1,\tau_2]\atop|t-\tilde{t}|<r}}\sup_{\V{x}\in K}
\sup_{\V{y}\in\RR^\nu}\sup_{\|\psi\|=1}
&\|e^{-F_{\V{x}}}(T_{|t-\tilde{t}|}^V-\id)T^V_{t\wedge\tilde{t}-s}\Phi_{s,\V{y},\psi}\|_1
\\
&\le\lim_{r\downarrow0}\sup_{\Psi\in\cM}\|e^{-F_{\Psi}}(T^V_r-\id)e^{F_{\Psi}}\Psi\|=0,
\end{align*}
where $F_\Psi:=F_{\V{a}_\Psi}$. Combining this with \eqref{vincius1} we arrive at
\begin{align*}
\lim_{r\downarrow0}
\sup_{{t,\tilde{t}\in[\tau_1,\tau_2]\atop|t-\tilde{t}|<r}}\sup_{\V{x}\in K}\sup_{\V{y}\in\RR^\nu}
\big\|\Upsilon_{\nf{s}{4}}(T_t^V(\V{x},\V{y})-T^V_{\tilde{t}}(\V{x},{\V{y}}))\big\|&=0,
\end{align*}
and we conclude.
\end{proof}

\begin{thm}\label{thm-kern-para}
Let $\tau_2>\tau_1>0$ and $V,V_n\in\cK_\pm(\RR^\nu)$, $n\in\NN$, such that
\eqref{approx1} and \eqref{approx2} are satisfied. Let $\V{c},\V{c}_n$, $n\in\NN$ be coefficient
vectors satisfying Hyp.~\ref{hyp-G} with the same $\omega$ and $C$ and assume that
$\|\V{c}_{\V{x}}\|_{\mathfrak{k}},\|\V{c}_{n,\V{x}}\|_{\mathfrak{k}}\le A$, for all $\V{x}\in\RR^\nu$,
$n\in\NN$, and some $A\in(0,\infty)$. Assume further that \eqref{marie1} holds true, for all
compact $K\subset\RR^\nu$. Then
\begin{align}\label{clementine1}
\lim_{n\to\infty}\sup_{t\in[\tau_1,\tau_2]}\sup_{\V{x},\V{y}\in K}
\big\|\Upsilon_{\nf{\tau_1}{8}}\big(T^{V_n,n}_t(\V{x},\V{y})-T_t^V(\V{x},\V{y})\big)\big\|&=0,
\end{align}
for all compact $K\subset\RR^\nu$.
\end{thm}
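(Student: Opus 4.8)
The strategy is to mimic the factorization used in the proof of Theorem~\ref{thm-kern-cont}, splitting the kernel at three intermediate times so that we may insert operators $T^{V_n,n}_{t-s}$ (resp.~$T^V_{t-s}$) for which the convergence results of Sections~\ref{sec-cont-V}--\ref{sec-coup} are already available, and then patch the two sides together. Concretely, fix $\tau_2>\tau_1>0$ and set $s:=\tau_1/2$. For $t\in[\tau_1,\tau_2]$ and $\psi\in\FHR$ with $\|\psi\|=1$, Proposition~\ref{prop-CK} gives
\begin{align*}
T^{V_n,n}_t(\V{x},\V{y})\psi
&=\int_{\RR^\nu}T^{V_n,n}_{\nf{s}{2}}(\V{x},\V{z})\,
\big(T^{V_n,n}_{t-\nf{s}{2}}(\cdot,\V{y})\psi\big)(\V{z})\,\Id\V{z},
\end{align*}
and similarly with $T^{V_n,n}_{t-\nf{s}{2}}(\cdot,\V{y})\psi=T^{V_n,n}_{t-s}\big(T^{V_n,n}_{\nf{s}{2}}(\cdot,\V{y})\psi\big)$. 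Abbreviating $\Phi^{(n)}_{\V{y},\psi}:=T^{V_n,n}_{\nf{s}{2}}(\cdot,\V{y})\psi$ and $\Phi_{\V{y},\psi}:=T^V_{\nf{s}{2}}(\cdot,\V{y})\psi$, we then write the difference $T^{V_n,n}_t(\V{x},\V{y})-T^V_t(\V{x},\V{y})$, applied to $\psi$, as a sum of three terms: one where only the innermost factor $\Phi^{(n)}_{\V{y},\psi}$ is replaced by $\Phi_{\V{y},\psi}$, one where additionally the propagator $T^{V_n,n}_{t-s}$ is replaced by $T^V_{t-s}$, and one where finally the outermost kernel $T^{V_n,n}_{\nf{s}{2}}(\V{x},\cdot)$ is replaced by $T^V_{\nf{s}{2}}(\V{x},\cdot)$. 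In each term we will keep the weight $\Upsilon_{\nf{\tau_1}{8}}=\Upsilon_{\nf{s}{4}}$ attached to the outermost propagator and estimate using Lemma~\ref{lem-T-bd}, the bound in \eqref{T(x,y)-sym}, the weighted $L^p$-bound \eqref{norm-T-F-Theta}, and the convergence statements \eqref{LNCV2}, \eqref{marie4}, \eqref{stefan}.

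\textbf{Key steps.} First I would record the uniform bounds: by \eqref{norm-T-F-Theta} (with $\Upsilon$ as in Line~2 or~4) and the exponential-moment bound implicit in \eqref{marie3}'s hypotheses, $\|e^{-F_{\V{x}}}\Upsilon_{\nf{s}{4}}T^{V_n,n}_{\nf{s}{4}}e^{F_{\V{x}}}\|_{1,\infty}$ and $\sup_n\|(1+\Id\Gamma(\omega))^\eh T^{V_n,n}_{\tau}\|_{1,1}$, for $\tau$ in a compact subinterval of $(0,\infty)$, are bounded uniformly in $n$; similarly $\sup_n\sup_{\V{y}}\sup_{\|\psi\|=1}\|\Phi^{(n)}_{\V{y},\psi}\|_1<\infty$ using \eqref{T(x,y)-sym} and a Gaussian integral as in \eqref{vincius4}. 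Second, for the innermost replacement I would use $\|\Phi^{(n)}_{\V{y},\psi}-\Phi_{\V{y},\psi}\|_1=\|(T^{V_n,n}_{\nf{s}{2}}-T^V_{\nf{s}{2}})(\cdot,\V{y})\psi\|_1$, which by the duality between kernels and operators (i.e.\ \eqref{T(x,y)-sym}) reduces to an operator-norm statement controlled by a combination of \eqref{marie4} and \eqref{LNCV2}; since $\V{y}$ ranges over the compact $K$, the convergence is uniform in $\V{y}$. Third, for the middle replacement I apply $T^{V_n,n}_{t-s}-T^V_{t-s}$ to the fixed family $\{\Phi_{\V{y},\psi}\}$: the difference $(T^{V_n,n}_{t-s}-T^V_{t-s})\Phi_{\V{y},\psi}$ is estimated in $L^1$ by first replacing $V_n$ by $V$ at fixed coefficient vector (Corollary~\ref{cor-SCV}/\eqref{stefan} applied with $p=q=1$, where $\tau_1=0$ is admissible), then replacing $\V{c}_n$ by $\V{c}$ at fixed potential (Theorem~\ref{thm-coup}/\eqref{marie4} with $p=q=1$); both use that $\{\Phi_{\V{y},\psi}:\V{y}\in K,\|\psi\|=1\}$ is a bounded subset of $L^1$, and the convergences in Theorems~\ref{thm-LNCV}(2) and~\ref{thm-coup} are uniform over the relevant $F$ and $\V{c}$. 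Fourth, the outermost replacement is handled by convolving against $T^{V_n,n}_{\nf{s}{4}}$ the difference of the two kernels $T^{V_n,n}_{\nf{s}{2}}(\V{x},\cdot)-T^V_{\nf{s}{2}}(\V{x},\cdot)$, reducing again via \eqref{T(x,y)-sym} and \eqref{marie4}/\eqref{LNCV2} to the operator-norm convergence of $(T^{V_n,n}_{\nf{s}{2}}-T^V_{\nf{s}{2}})(\V{z},\V{x})$ uniformly for $\V{x}\in K$ and $\V{z}$ integrated against the Gaussian weight $e^{-|\V{z}-\V{x}|^2/2s}$; a tail-cutting argument (as in Step~4 of the proof of Theorem~\ref{thm-LNCV}) localizes the $\V{z}$-integral to a compact set at the cost of an arbitrarily small error controlled by the uniform bound from \eqref{T(x,y)-sym}. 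Putting the three estimates together and passing to the supremum over $t\in[\tau_1,\tau_2]$, $\V{x},\V{y}\in K$, $\|\psi\|=1$ yields \eqref{clementine1}.

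\textbf{Main obstacle.} The delicate point is the bookkeeping that keeps everything uniform in $n$ while the potential, the coefficient vector, and (in the inner factors) the kernel all change simultaneously: one must arrange the telescoping so that each single change is governed by exactly one of the cited convergence theorems, with the other two objects frozen, and verify that the ``frozen'' side always lies in the uniformly bounded family to which that theorem applies. In particular the middle step requires invoking \eqref{stefan} and \eqref{marie4} with $p=q=1$ (so that $\tau_1=0$ is allowed and the relevant propagators are the full $T^{V_n,n}_{t-s}$, which sit in a $t$-compact family), and one must check that the hypothesis $\sup_{t\in[\tau_1,\tau_2]}\sup_{\V{x}}\EE[e^{-\int_0^tV(\V{B}_s^{\V{x}})}]\le D$ in Theorem~\ref{thm-coup} holds uniformly along the sequence $V_n$, which is guaranteed by \eqref{approx3a}. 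The tail-cutting in the outermost step is the other place where care is needed, but it is entirely parallel to arguments already carried out in the proofs of Theorems~\ref{thm-LNCV} and~\ref{thm-coup}, so I expect it to be routine once the telescoping structure is fixed. I would remark at the end that, as in Theorem~\ref{thm-kern-cont}, if in addition $V\in\cK(\RR^\nu)$, $V-V_n\in\cK(\RR^\nu)$, \eqref{approx1} holds with $K$ replaced by $\RR^\nu$, and \eqref{marie1} holds with $K$ replaced by $\RR^\nu$, then $K$ may be replaced by $\RR^\nu$ in \eqref{clementine1}.
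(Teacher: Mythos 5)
Your skeleton coincides with the paper's: split the kernel by the Chapman--Kolmogoroff equation at an intermediate time of order $\tau_1$, dispose of the outer factor(s) by the weighted operator-norm convergence of the semi-groups (Thms.~\ref{thm-LNCV} and~\ref{thm-coup}) applied to a uniformly bounded family, and reduce the rest to an $L^1$-in-one-variable convergence of the inner kernel slices $(T^{V_n,n}_{s}-T^V_{s})(\cdot,\V{y})\psi$, uniformly in $\V{y}\in K$, $\|\psi\|=1$. The paper does this with a two-term split at a single time $s\in[\nf{\tau_1}{4},\nf{\tau_1}{2}]$: its first term is $(T^{V_n,n}_{t-s}-T^V_{t-s})$ applied to the $L^2$-bounded family $T_s^{V_n,n}(\cdot,\V{y})\psi$ and is killed by $\|1_K\Upsilon_{\nf{\tau_1}{8}}(T^{V_n,n}_{t-s}-T^V_{t-s})\|_{2,\infty}\to0$, with the compact cutoff on the \emph{output} variable, so no tail-cutting is needed (your middle-step tail-cutting is thereby avoided); its second term reduces to the $L^1$ convergence \eqref{wilmar} of the kernel slices, which the paper proves probabilistically, separating the change of potential (handled via \eqref{bd-WW}, the bridge-to-Brownian-motion identities \eqref{bridget72} and \eqref{clemens}, and \eqref{approx3}) from the change of coefficient vector (handled via Fatou, Cauchy--Schwarz and the SDE stability bounds \eqref{per7a}--\eqref{per7} from Lem.~\ref{lem-maria}).

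Your route differs exactly at that last point, and this is where your sketch needs care. For operator-valued kernels, quantities like $\int\|T^{V_n,n}_s(\V{z},\V{y})-T^V_s(\V{z},\V{y})\|_{\LO(\FHR)}\,\Id\V{z}$ are \emph{not} controlled by any of the operator norms in Thms.~\ref{thm-LNCV} and~\ref{thm-coup} (the scalar identification of $\|\cdot\|_{\infty,\infty}$ with $\sup_{\V{x}}\int\|K(\V{x},\V{y})\|\Id\V{y}$ fails without positivity), so the Gaussian-weighted integral of kernel-difference operator norms you invoke in your outermost step is not available from \eqref{LNCV2}/\eqref{marie4}; that step should simply be done at operator level, since it is an operator difference applied to an $n$-independent, uniformly bounded function. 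For the inner slice, only the vector-applied version is accessible, and your ``duality'' can indeed be made rigorous as follows: fix $\V{y},\psi$, take the measurable maximizer $\Phi(\V{z}):=g(\V{z})/\|g(\V{z})\|$ with $g:=(T^{V_n,n}_{s}(\cdot,\V{y})-T^V_{s}(\cdot,\V{y}))\psi$, use \eqref{intK} and \eqref{T(x,y)-sym} to write $\|g\|_1=\SPn{((T^{V_n,n}_{s}-T^V_{s})\Phi)(\V{y})}{\psi}$, upgrade the essential supremum to the pointwise value at $\V{y}$ by the equicontinuity of Thm.~\ref{thm-eq-cont}, and bound by $\|1_{K'}(T^{V_n,n}_{s}-T^V_{s})\|_{\infty,\infty}$, which tends to zero by \eqref{LNCV1} and \eqref{marie3} with $p=q=\infty$ (equivalently, via \eqref{achim}, your \eqref{LNCV2}/\eqref{marie4} with $p=q=1$). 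With these repairs your three-fold telescoping yields the theorem by a partly functional-analytic argument; the paper's proof of \eqref{wilmar} instead works directly on the Feynman--Kac integrand and needs no continuity input, which is why it is carried out probabilistically there.
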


\begin{proof}
We pick some $s\in[\nf{\tau_1}{4},\nf{\tau_1}{2}]$ and write
\begin{align*}
T^{V_n,n}_t(\V{x},\V{y})-T_t^V(\V{x},\V{y})
&=\big((T_{t-s}^{V_n,n}-T_{t-s}^V)(T_s^{V_n,n}(\cdot,\V{y})\psi)\big)(\V{x})
\\
&\quad
+\big(T_{t-s}^{V}(T_s^{V_n,n}(\cdot,\V{y})\psi-T_s^{V}(\cdot,\V{y})\psi)\big)(\V{x}),
\end{align*}
for a given $\psi\in\FHR$. Next, we observe that
\begin{align*}
\sup_{t\in[\tau_1,\tau_2]}\sup_{\V{x},\V{y}\in K}\sup_{\|\psi\|=1}
\big\|\Upsilon_{\nf{\tau_1}{8}}\big((T_{t-s}^{V_n,n}-T_{t-s}^V)(T_s^{V_n,n}
(\cdot,\V{y})\psi)\big)(\V{x})\big\|\xrightarrow{\;\;n\to\infty\;\;}0,
\end{align*}
because, on the one hand, $\{T_s^{V_n,n}(\cdot,\V{y})\psi:\,n\in\NN,\V{y}\in K,\|\psi\|=1\}$ 
is a bounded subset of $L^2(\RR^\nu,\FHR)$ in view of Cor.~\ref{cor-WEK} and 
\eqref{approx3a}, and, on the other hand,
$1_K\Upsilon_{\nf{\tau_1}{8}}(T_{t-s}^{V_n,n}-T_{t-s}^V)$ converges to zero 
in $\LO(L^2(\RR^\nu,\FHR),L^\infty(\RR^\nu,\FHR))$, 
uniformly in $t\in[\tau_1,\tau_2]$, according to Thms.~\ref{thm-LNCV} and~\ref{thm-coup}.
Since $\Upsilon_{\nf{\tau_1}{8}}T_{t-s}^{V}$ is bounded from 
$L^1(\RR^\nu,\FHR)$ to $L^\infty(\RR^\nu,\FHR)$, uniformly in $t\in[\tau_1,\tau_2]$,
it thus remains to show that
\begin{align}\label{wilmar}
\sup_{\|\psi\|=1}\sup_{\V{y}\in K}\int_{\RR^\nu}
\big\|T_s^{V_n,n}(\V{z},\V{y})\psi-T_s^{V}(\V{z},\V{y})\psi\big\|\Id\V{z}\xrightarrow{\;\;n\to\infty\;\;}0.
\end{align}
To this end we write
\begin{align}\nonumber
T_s^{V_n,n}(\V{z},\V{y})\psi-T_s^{V}(\V{z},\V{y})\psi&=
\big(T_s^{V_n,n}(\V{z},\V{y})-T_s^{V,n}(\V{z},\V{y})\big)\psi
\\\label{wilmar2}
&\quad+\big(T_s^{V,n}(\V{z},\V{y})-T_s^{V}(\V{z},\V{y})\big)\psi,
\end{align}
and estimate, using \eqref{bd-WW},
\begin{align}\nonumber
\big\|\big(&T_s^{V_n,n}(\V{z},\V{y})-T_s^{V,n}(\V{z},\V{y})\big)\psi\big\|
\\\nonumber
&\le p_s(\V{z},\V{y})\EE\big[\|\WW{s}{V_n,n}[\V{b}^{s;\V{y},\V{z}}]\psi-
\WW{s}{V,n}[\V{b}^{s;\V{y},\V{z}}]\psi\|\big]
\\\label{wilmar3}
&\le p_s(\V{z},\V{y})
\EE\big[|e^{-\int_0^sV_n(\V{b}_r^{s;\V{y},\V{z}})\Id r}
-e^{-\int_0^sV(\V{b}_r^{s;\V{y},\V{z}})\Id r}|\big]e^{\|\mho_n\|_\infty s}\|\psi\|.
\end{align}
Here, the terms $\|\mho_n\|_\infty$, which are defined by \eqref{def-lambda}
with $\V{F}$ replaced by $\V{F}_n$, are bounded uniformly in $n\in\NN$ 
by our assumptions on $\V{c}_n$.
The relation between the laws of $\V{b}^{s;\V{y},\V{z}}$ and $\V{B}^{\V{y}}$ further yields
\begin{align}\nonumber
\int_{\RR^\nu}&p_s(\V{z},\V{y})
\EE\big[|e^{-\int_0^{\sigma}V_n(\V{b}_r^{s;\V{y},\V{z}})\Id r}
-e^{-\int_0^{\sigma}V(\V{b}_r^{s;\V{y},\V{z}})\Id r}|\big]\Id\V{z}
\\\nonumber
&=\EE\Big[\int_{\RR^\nu}p_{s-\sigma}(\V{B}_\sigma^{\V{y}},\V{z})\Id\V{z}\,
\big|e^{-\int_0^{\sigma}V_n(\V{B}_r^{\V{y}})\Id r}
-e^{-\int_0^{\sigma}V(\V{B}_r^{\V{y}})\Id r}\big|\Big]
\\\label{clemens}
&=\EE\big[|e^{-\int_0^{\sigma}V_n(\V{B}_r^{\V{y}})\Id r}
-e^{-\int_0^{\sigma}V(\V{B}_r^{\V{y}})\Id r}|\big]
\end{align}
for all $\sigma\in(0,s)$, and, by dominated convergence, the equality between the right and left 
hand sides of \eqref{clemens} extends to $\sigma=s$. 
With the help of the dominated convergence theorem and \eqref{bridget72} we further obtain
\begin{align*}
\big(T_s^{V,n}(\V{z},\V{y})-T_s^{V}(\V{z},\V{y})\big)\psi
&=\lim_{\sigma\uparrow s}p_{s}(\V{z},\V{y})
\EE\big[\WW{\sigma}{V,n}[\V{b}^{s;\V{y},\V{z}}]\psi
-\WW{\sigma}{V}[\V{b}^{s;\V{y},\V{z}}]\psi\big]
\\
&=\lim_{\sigma\uparrow s}\EE\big[p_{s-\sigma}(\V{B}_\sigma^{\V{y}},\V{z})\big(
\WW{\sigma}{V,n}[\V{B}^{\V{y}}]-\WW{\sigma}{V}[\V{B}^{\V{y}}]\big)\psi\big].
\end{align*}
From the latter relation and Fatou's lemma we infer that
\begin{align}\nonumber
\int_{\RR^\nu}&\big\|\big(T_s^{V,n}(\V{z},\V{y})-T_s^{V}(\V{z},\V{y})\big)\psi\big\|\Id\V{z}
\\\nonumber
&\le\liminf_{\sigma\uparrow s}\int_{\RR^\nu}
\EE\Big[p_{s-\sigma}(\V{B}_\sigma^{\V{y}},\V{z})\big\|\big(
\WW{\sigma}{V,n}[\V{B}^{\V{y}}]-\WW{\sigma}{V}[\V{B}^{\V{y}}]\big)\psi\big\|\Big]\Id\V{z}
\\\nonumber
&=\EE\big[\|\WW{s}{V,n}[\V{B}^{\V{y}}]\psi-\WW{s}{V}[\V{B}^{\V{y}}]\psi\|\big]
\\\label{wilmar4}
&\le\EE\big[e^{-\int_0^sV(\V{B}_r^{\V{y}})\Id r}\big]^\eh\EE\big[
\|\WW{s}{0,n}[\V{B}^{\V{y}}]\psi-\WW{s}{0}[\V{B}^{\V{y}}]\psi\|^2\big]^\eh,
\end{align}
where we also applied Fubini's theorem and the dominated convergence theorem in the second 
step. The convergence \eqref{wilmar} is now implied by \eqref{dirk1}, \eqref{approx3},
\eqref{per7a}, \eqref{per7}, \eqref{wilmar2}, \eqref{wilmar3}, 
the extension of \eqref{clemens} to $\sigma=s$, and \eqref{wilmar4}
\end{proof}

%%%%%%%%%%%%%%%%%%%%%%%%%%%%%%%%%%%%%%%%%%%%%%%%
%%%%%%%%%%%%%%%%%%%%%%%%%%%%%%%%%%%%%%%%%%%%%%%%
%%%%%%%%%%%%%%%%%%%%%%%%%%%%%%%%%%%%%%%%%%%%%%%%

\section{Positivity improvement by the semi-group kernel in the scalar case}\label{sec-pos}

\noindent
In this section we complement the discussion of the semi-group kernel by 
showing that, in the scalar case $L=1$ with either $\V{F}=0$ or $\V{G}=\V{0}$, $T^V_t(\V{x},\V{y})$
is positivity improving with respect to a suitable positive cone in the Fock space, for all $t>0$ and
$\V{x},\V{y}\in\RR^\nu$.
As an immediate corollary we re-obtain a theorem, due to Hiroshima \cite{Hiroshima2000}, 
stating that the semi-group in non-relativistic QED is positivity improving, 
if spin is neglected and the Pauli principle is discarded. 
We will apply this result in Sect.~\ref{sec-cont-GS} to discuss the continuous dependence
on model parameters of strictly positive ground state eigenvectors of scalar Hamiltonians.

As usual, the proof of the aforementioned results is an application of Perron-Frobenius type 
theorems \cite{Faris1972,ReedSimonIV}.
Let us point out, however, that our proof of Thm.~\ref{thm-pos-ker} is
based on a novel factorization of the Feynman-Kac integrand found in \cite{GMM2014}, from which
the positivity improvement by the integrand can be read off more easily.
In fact, the factorization used in Hiroshima's proof 
involved unbounded operators that lead to additional technical difficulties. 

Let us introduce the notation $\vo(g):=\vp(ig)$ and $W(g):=e^{-i\vo(g)}$, $g\in\HP$, for
the conjugate field and the associated Weyl operator, respectively. As usual,
$\Omega=(1,0,0,\ldots\,)\in\sF$ will denote the vacuum vector. There will be no danger of
confusing it with the underlying probability space that is denoted by the same symbol.

A suitable self-dual convex cone in the Fock space $\sF$
is now given by $\sP:=\ol{\mr{\sP}}$, the closure of the set
$$
\mr{\sP}:=\big\{F(\vo(\V{g}))\Omega:\,F\in\sS(\RR^n),\,F\ge0,\,\V{g}\in\HP_C^n,\,n\in\NN\big\}.
$$
Here $\sS(\RR^n)$ is the set of Schwartz test functions on $\RR^n$ and, 
for every $F\in\sS(\RR^n)$,
the bounded operator $F(\vo(\V{g})):=F(\vo(g_1),\ldots,\vo(g_n))$ is defined by the spectral calculus
for finitely many commuting self-adjoint operators.  For later reference, we observe that the latter 
operator is equal to the Bochner-Lebesgue integral
\begin{align}\label{kent1}
F(\vo(\V{g}))&=(2\pi)^{-n}\int_{\RR^n}\hat{F}(-\vxi)W(\vxi\cdot\V{g})\Id\vxi.
\end{align}
In fact, there exists a probability space $(\cQ,\mathfrak{Q},\eta)$ and a
unitary map $\sU:\sF\to L^2(\cQ,\mathfrak{Q},\eta)$, turning each conjugate
field operator $\vo(g)$, $g\in\HP_C$, into a multiplication operator with a Gaussian random variable,
such that $\sP$ is the pre-image under $\sU$ of all non-negative elements of 
$L^2(\cQ,\mathfrak{Q},\eta)$. Furthermore, $\sU\Omega=1$, and in particular $\Omega$ is
strictly positive with respect to $\sP$. See, e.g., \cite{LHB2011,Simon1974} for the
construction of $\eta$ and a detailed discussion of related matters.

Notice that $\psi\in\sF$ belongs to the
completely real subspace $\sF_C:=\{\phi\in\sF:\Gamma(-C)\phi=\phi\}$, if and only if
there exist $\psi_+,\psi_-\in\sP$ with $\psi_+\bot\psi_-$ and $\psi=\psi_+-\psi_-$.
In fact, $F\in\sS(\RR^n)$ is real-valued, if and only if $\ol{\hat{F}(-\vxi)}=\hat{F}(\vxi)$,
and we further have $\Gamma(-C)W(g)\Omega=W(-g)\Omega$, $g\in\HP_C$. 
Thus, \eqref{kent1} implies
$F(\vo(\V{g}))\Omega\in\sF_C$, for all real-valued $F\in\sS(\RR^n)$ and $\V{g}\in\HP_C^n$.

\begin{thm}\label{thm-pos-ker}
Consider the scalar case $L=1$ with $\V{F}={0}$. Let $t>0$ and $\V{x},\V{y}\in\RR^\nu$.
Then $T^V_t(\V{x},\V{y})$ maps $\sF_C$ into itself and it is positivity improving with respect to $\sP$.
\end{thm}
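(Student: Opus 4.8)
The strategy is a Perron--Frobenius type argument built on the novel factorization of the Feynman--Kac integrand recalled from \cite{GMM2014}. First I would recall that in the present scalar case ($L=1$, $\V{F}=0$) the operator $\WW{t}{V}[\V{b}^{t;\V{y},\V{x}}](\vgamma)$ admits, for $\PP$-a.e.\ $\vgamma$, a factorization of the form
\begin{align*}
\WW{t}{V}[\V{b}^{t;\V{y},\V{x}}]
=e^{-\int_0^tV(\V{b}^{t;\V{y},\V{x}}_s)\Id s}\,
e^{i\varphi(u_1)}\,\Gamma(e^{-t\omega})\,W(u_2)\,e^{-r},
\end{align*}
where $u_1,u_2\in\HP$ are (random) vectors built from the coupling function $\V{G}$ along the Brownian bridge, $r\in\RR$ is a random scalar, and $\Gamma(e^{-t\omega})$ is the contraction semigroup of $\Id\Gamma(\omega)$; here I use $[C,e^{-t\omega}]=0$ and $CG_{\ell,\V{x}}=G_{\ell,\V{x}}$ from Hyp.~\ref{hyp-G}. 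The key structural points to verify are: (i) since $\int\V{G}\,\Id\V{b}$ pairs a real process against $C$-real vectors, the accumulated phase and Weyl arguments are such that $e^{i\varphi(u_1)}W(u_2)$ maps $\sF_C$ into $\sF_C$ and is positivity preserving on $\sP$ --- more precisely, each Weyl operator $W(g)$ with $g\in\HP_C$ is positivity preserving, as is visible from \eqref{kent1} applied to non-negative $\hat F$, or directly in the $L^2(\cQ,\mathfrak Q,\eta)$ representation where $W(g)$ acts as multiplication by a strictly positive Gaussian density times a unitary that preserves $\sF_C$; and (ii) $\Gamma(e^{-t\omega})$ commutes with $\Gamma(-C)$ (because $[C,e^{-t\omega}]=0$) and, for $t>0$, is positivity \emph{improving} with respect to $\sP$ --- this is the classical Perron--Frobenius fact for second-quantized strict contractions (cf.\ \cite{Faris1972,Simon1974}), since in the Gaussian representation $\Gamma(e^{-t\omega})$ has a strictly positive integral kernel.

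Once the factorization is in place, I would argue as follows. Fix $t>0$ and $\V{x},\V{y}\in\RR^\nu$. From $T^V_t(\V{x},\V{y})=p_t(\V{x},\V{y})\,\EE[\WW{t}{V}[\V{b}^{t;\V{y},\V{x}}]]$ and the factorization, the integrand $\WW{t}{V}[\V{b}^{t;\V{y},\V{x}}](\vgamma)$ is, for $\PP$-a.e.\ $\vgamma$, a product of a strictly positive scalar, a positivity-preserving operator $e^{i\varphi(u_1)}W(u_2)$ that maps $\sF_C\to\sF_C$, and the positivity-\emph{improving} operator $\Gamma(e^{-t\omega})$ (the leftover bounded factors being positive scalars). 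Hence each $\WW{t}{V}[\V{b}^{t;\V{y},\V{x}}](\vgamma)$ maps $\sF_C$ into $\sF_C$ and is positivity improving on $\sP$. The operator-valued Bochner integral $\EE[\cdots]$ of a family of $\sF_C$-valued operators is again $\sF_C$-valued, since $\sF_C$ is a closed real subspace; and an average of positivity-preserving operators is positivity preserving. To get strict positivity improvement of the average, I would use that if $0\neq\psi\in\sP$ and $0\neq\phi\in\sP$, then $\SPn{\phi}{\WW{t}{V}[\V{b}^{t;\V{y},\V{x}}](\vgamma)\psi}>0$ for $\PP$-a.e.\ $\vgamma$ (positivity improvement holds pathwise), whence $\SPn{\phi}{T^V_t(\V{x},\V{y})\psi}=p_t(\V{x},\V{y})\EE[\SPn{\phi}{\WW{t}{V}[\V{b}^{t;\V{y},\V{x}}]\psi}]>0$ because the integrand is $\PP$-a.s.\ strictly positive and $p_t(\V{x},\V{y})>0$.

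\textbf{Main obstacle.} The genuinely delicate point is establishing the pathwise positivity improvement of $e^{i\varphi(u_1)}W(u_2)\Gamma(e^{-t\omega})$ and its robustness under the expectation --- i.e.\ that the phase factor $e^{i\varphi(u_1)}$ together with the Weyl unitary does not spoil positivity when combined with $\Gamma(e^{-t\omega})$. Here one must exploit carefully the $C$-reality of $u_1,u_2$ (inherited from $CG_{\ell,\V{x}}=G_{\ell,\V{x}}$ and $[C,e^{-t\omega}]=0$) to see that in the $L^2(\cQ,\mathfrak Q,\eta)$ picture the combined operator is multiplication by a strictly positive function composed with the positivity-improving Gaussian convolution $\Gamma(e^{-t\omega})$. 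A secondary technical nuisance is measurability and separable-valuedness of $\vgamma\mapsto\WW{t}{V}[\V{b}^{t;\V{y},\V{x}}](\vgamma)$, so that the Bochner integral and the a.e.\ statements make sense --- but this is already guaranteed by Prop.~\ref{prop-ida} and Thm.~\ref{thm-WW}(1). For the case $\V{G}=\V{0}$ (mentioned in the section heading) the factorization degenerates: $\WW{t}{V}[\V{b}^{t;\V{y},\V{x}}]=e^{-\int_0^tV\Id s}\,e^{\int_0^t\vsigma\cdot\vp(\V{F})\Id s}$-type expression reduces, for $L=1$, to $e^{(\text{scalar})}\,\Gamma(e^{-t\omega})$ times a positivity-preserving exponential of a field operator, and the same Perron--Frobenius conclusion applies verbatim; I would indicate this parallel briefly rather than repeat the argument.
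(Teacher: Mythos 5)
Your Perron--Frobenius strategy is the right one in outline, but the proposal has a genuine gap at its core. The factorization you start from is asserted rather than derived, and it is not the factorization from \cite{GMM2014} on which the argument actually rests: there one has $\WW{t}{V}[\V{b}^{t;\V{x},\V{y}}]=e^{-u^V_t}A_{\nf{t}{3}}[U^+_t]\,e^{-t\Id\Gamma(\omega)/3}\,A_{\nf{t}{3}}[-U^-_t]^*$ with $A_s[f]:=\sum_{n=0}^\infty\frac{i^n}{n!}\ad(f)^ne^{-s\Id\Gamma(\omega)}$, i.e.\ the off-diagonal factors are \emph{non-unitary} exponentials of creation operators alone, regularized by the free semi-group --- not genuine Weyl operators. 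More seriously, the positivity claim on which your step (i) rests is false: a Weyl operator $W(g)=e^{-i\vo(g)}$ with $g\in\HP_C$ is \emph{not} positivity preserving with respect to $\sP$. In the $L^2(\cQ,\mathfrak{Q},\eta)$ representation adapted to $\sP$ the operators $\vo(g)$, $g\in\HP_C$, are exactly the (real Gaussian) multiplication operators, so $W(g)$ acts as multiplication by the unimodular phase $e^{-i\vo(g)}$; this does not preserve the cone and does not even map $\sF_C$ into itself (it is certainly not ``a strictly positive Gaussian density times a unitary''). Formula \eqref{kent1} only expresses $F(\vo(\V{g}))$ as a superposition of Weyl operators and says nothing about $W(g)$ preserving $\sP$. (Your other unitary factor $e^{i\vp(u_1)}$ with $u_1\in\HP_C$ would indeed be a positivity preserving Q-space shift, but the $W(u_2)$ factor destroys positivity pathwise, and averaging over $\PP$ cannot repair a pathwise failure.)

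The mechanism that actually works is different: one shows that $A_s[f]^*$, $f\in\HP_C$, preserves $\sP$ by letting it act on the generators $F(\vo(\V{g}))\Omega$ through coherent states. One computes $A_s[f]^*W(g)\Omega$ explicitly (it is a Weyl vector times the scalar $e^{-\|g\|^2/2+\|e^{-s\omega}g\|^2/2-i\SPn{f}{g}}$), inserts this into \eqref{kent1}, and finds $A_s[f]^*F(\vo(\V{g}))\Omega=\wt{F}(\vo(\V{g}))\Omega$, where $\hat{\wt{F}}$ differs from $\hat{F}$ by a Gaussian damping factor and a phase $e^{-i\vxi\cdot\SPn{f}{\V{g}}}$ with \emph{real} $\SPn{f}{\V{g}}$ --- this is precisely where $Cf=f$ and $Cg_j=g_j$ enter. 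Hence $\wt{F}$ is a Gaussian smoothing of a real translate of $F$ and is nonnegative whenever $F$ is, so $A_s[f]^*$ and $A_s[f]$ map $\sP$ into itself. Combining these positivity preserving factors with the positivity improving $e^{-t\Id\Gamma(\omega)/3}$ (a Perron--Frobenius argument, e.g.\ \cite[Thm.~XIII.44]{ReedSimonIV}, using that $\Omega$ is a nondegenerate, strictly positive ground state of $\Id\Gamma(\omega)$ since $\omega>0$ a.e.) shows that $\WW{t}{V}[\V{b}^{t;\V{x},\V{y}}]$ is positivity improving pointwise on the underlying probability space; your final step --- passing from a.s.\ strict positivity of $\SPn{\phi}{\WW{t}{V}[\V{b}^{t;\V{x},\V{y}}]\psi}$ to strict positivity of the expectation defining $T^V_t(\V{x},\V{y})$ --- is fine and agrees with the intended conclusion. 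Your closing remark about the case $\V{G}=\V{0}$ concerns a different statement (and the displayed expression there is not meaningful as written); it is not needed for the present theorem, which assumes $\V{F}=\V{0}$.
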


\begin{proof}
According to \cite[Rem.~F.2(1)]{GMM2014} we have the factorization
\begin{align}\label{fact}
\WW{t}{V}[\V{b}^{t;\V{x},\V{y}}]&=e^{-u^V_t}A_{\nf{t}{3}}[U^+_t]e^{-t\Id\Gamma(\omega)/3}
A_{\nf{t}{3}}[-U^-_t]^*,
\\\nonumber
A_s[f]&:=\sum_{n=0}^\infty\frac{i^n}{n!}\ad(f)^ne^{-s\Id\Gamma(\omega)},\quad
f\in\HP_C,\,s>0,
\end{align}
where $(u_s^V)_{s\in[0,t]}$ is a real-valued semi-martingale and $(U^\pm_s)_{s\in[0,t]}$
are $\HP_C$-valued semi-martingales, whose dependence on $(t,\V{x},\V{y})$ has been dropped
in the notation.
The sum defining $A_s[f]\in\LO(\sF)$ converges absolutely 
in the operator norm \cite[Lem.~F.1]{GMM2014}. Standard arguments \cite{Simon1974}
now show that $A_s[f]^*$ is 
positivity preserving: To see this, let $g\in\HP_C$ and $W(g)=e^{-i\vo(g)}$ be the corresponding 
Weyl operator. Since $W(g)\Omega$ is an analytic vector for $a(f)$, we easily find
\begin{align}\nonumber
A_s[f]^*W(g)\Omega&=e^{-s\Id\Gamma(\omega)}\sum_{n=0}^\infty\frac{(-i)^n}{n!}a(f)^nW(g)\Omega
\\\label{kent2}
&=e^{-\|g\|^2/2+\|e^{-s\omega}g\|^2/2-i\SPn{f}{g}}W(g),\quad f,g\in\HP_C.
\end{align}
Let $g_1,\ldots,g_n\in\HP_C$ and $F\in\sS(\RR^n)$.
Since the integration in \eqref{kent1} commutes with the bounded operator $A_s[f]^*$, 
we may employ \eqref{kent2} to get
\begin{align*}
A_s[f]^*F(\vp(\V{g}))\Omega&={(2\pi)^{-n}}\int_{\RR^n}
e^{-\|(1-e^{-2s\omega})\vxi\cdot\V{g}\|^2/2-i\vxi\cdot\SPn{f}{\V{g}}}
\hat{F}(-\vxi)W(\vxi\cdot\V{g})\Omega\,{\Id\vxi}.
\end{align*}
Another use of \eqref{kent1} yields
$A_s[f]^*F(\vp(\V{g}))\Omega=\wt{F}(\vo(\V{g}))\Omega$,
where $\wt{F}$ is the inverse Fourier transform of the Schwartz function
$\vxi\mapsto e^{-\|(1-e^{-2s\omega})\vxi\cdot\V{g}\|^2/2-i\vxi\cdot\SPn{f}{\V{g}}}\hat{F}(-\vxi)$.
If $F$ is non-negative, then $\wt{F}$ is non-negative as well.
We have thus shown that, for all $f\in\HP_C$ and $s>0$, 
$A_s[f]^*$ maps $\mr{\sP}$ into itself. Since it is bounded, it also maps $\sP$ into itself,
and this holds in particular for $e^{-s\Id\Gamma(\omega)}=A_s[0]^*$. It also follows that 
$A_s[f]=A_s[f]^{**}$ is positivity preserving. Since $\omega>0$, $\mu$-a.e., we further know that
$1$ is a non-degenerate eigenvalue of $e^{-s\Id\Gamma(\omega)}$, $s>0$, 
with strictly positive eigenvector $\Omega$. Therefore, the Perron-Frobenius type theorem
\cite[Thm.XIII.44(a)$\Rightarrow$(e)]{ReedSimonIV} (applied to $\sU e^{-s\Id\Gamma(\omega)}\sU^*$)
ensures that $e^{-s\Id\Gamma(\omega)}$ with $s>0$ is actually positivity improving. As a 
consequence, $\WW{t}{V}[\V{b}^{t;\V{x},\V{y}}]$ is positivity improving, pointwise on the underlying 
probability space, as a composition of positivity preserving operators 
one of which improves positivity. It is now clear that
$T^V_t(\V{x},\V{y})=p_t(\V{x},\V{y})\EE\big[\WW{t}{V}[\V{b}^{t;\V{x},\V{y}}]\big]$ is positivity
improving, too.
\end{proof}

\begin{cor}\label{cor-pos}
Consider the scalar case $L=1$ with $\V{F}={0}$. Let $t>0$.
Then $e^{-tH^V}$ is positivity improving with respect to the self-dual convex cone in 
$L^2(\RR^\nu,\FHR)$ given by
\begin{equation}\label{def-int-cone}
\int_{\RR^\nu}^\oplus\sP\Id\V{x}:=
\big\{\Psi\in L^2(\RR^\nu,\FHR):\,\Psi(\V{x})\in\sP,\,\text{a.e. $\V{x}$}\big\}.
\end{equation}
In particular, if $\inf\sigma(H^V)$ is an eigenvalue, then it is non-degenerate and
the corresponding eigenvector can be chosen strictly positive.
\end{cor}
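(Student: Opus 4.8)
The plan is to deduce Corollary~\ref{cor-pos} from Theorem~\ref{thm-pos-ker} via the integral kernel representation \eqref{intK}, the Feynman-Kac formula \eqref{feyn}, and a standard Perron-Frobenius argument. First I would record that the cone $\int_{\RR^\nu}^\oplus\sP\Id\V{x}$ in \eqref{def-int-cone} is a closed convex self-dual cone in $\HR$: closedness and convexity are clear, and if $\Phi\in\HR$ pairs nonnegatively with every element of the cone then, testing against functions supported on arbitrarily small measurable sets and using self-duality of $\sP$ in $\sF$, one gets $\Phi(\V{x})\in\sP$ for a.e.~$\V{x}$, i.e.\ $\Phi$ lies in the cone.

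Next I would show that $e^{-tH^V}$ preserves this cone. Fix $t>0$ and $\Psi$ in the cone, so $\Psi(\V{y})\in\sP$ for a.e.~$\V{y}$. By Theorem~\ref{thm-pos-ker} each $T^V_t(\V{x},\V{y})$ maps $\sP$ into itself, hence $T^V_t(\V{x},\V{y})\Psi(\V{y})\in\sP$ for a.e.~$(\V{x},\V{y})$; since $\sP$ is closed and convex, the (absolutely convergent, cf.\ \eqref{T(x,y)-sym}) Bochner integral $\int_{\RR^\nu}T^V_t(\V{x},\V{y})\Psi(\V{y})\Id\V{y}$ lies in $\sP$ for a.e.~$\V{x}$. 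By \eqref{intK} and \eqref{feyn} this integral equals $(e^{-tH^V}\Psi)(\V{x})$, so $e^{-tH^V}\Psi$ again belongs to the cone.

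For the positivity improvement I would compute, for $\Phi,\Psi$ in the cone,
\[
\SPn{\Phi}{e^{-tH^V}\Psi}_{\HR}
=\int_{\RR^\nu}\!\!\int_{\RR^\nu}\SPn{\Phi(\V{x})}{T^V_t(\V{x},\V{y})\Psi(\V{y})}\Id\V{y}\Id\V{x},
\]
where pulling the inner product inside the $\V{y}$-integral is immediate and Fubini is justified exactly as in the proof of Cor.~\ref{cor-SASG}, using $|\SPn{\Phi(\V{x})}{T^V_t(\V{x},\V{y})\Psi(\V{y})}|\le e^{\|\mho\|_\infty t}\|\Phi(\V{x})\|\,S^V_t(\V{x},\V{y})\,\|\Psi(\V{y})\|$, the $L^2$-boundedness of $S^V_t$, $\|\Phi(\cdot)\|,\|\Psi(\cdot)\|\in L^2(\RR^\nu)$, and the joint measurability of the integrand supplied by Prop.~\ref{lem-T(x,y)-sym}. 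The integrand is everywhere $\ge 0$ since $T^V_t(\V{x},\V{y})$ is positivity preserving and $\Phi(\V{x}),\Psi(\V{y})\in\sP$, and it is strictly positive at every $(\V{x},\V{y})$ with $\Phi(\V{x})\neq 0$ and $\Psi(\V{y})\neq 0$ because $T^V_t(\V{x},\V{y})$ is positivity improving. If $\Phi\neq 0\neq\Psi$ in $\HR$, then $\{\V{x}:\Phi(\V{x})\neq 0\}$ and $\{\V{y}:\Psi(\V{y})\neq 0\}$ have positive Lebesgue measure, so their Cartesian product has positive measure in $\RR^{2\nu}$ and the double integral is strictly positive. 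Hence $e^{-tH^V}$ is positivity improving.

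Finally, if $E_0:=\inf\sigma(H^V)$ is an eigenvalue, the spectral theorem gives $\|e^{-tH^V}\|=e^{-tE_0}$ and $\Ker(e^{-tH^V}-e^{-tE_0})=\Ker(H^V-E_0)$, so the norm of the self-adjoint positivity improving operator $e^{-tH^V}$ is an eigenvalue; the Perron-Frobenius theorem for positivity improving operators \cite{Faris1972,ReedSimonIV} then forces it to be a simple eigenvalue with a strictly positive eigenvector, which is precisely the non-degeneracy of $E_0$ and the positivity of its eigenvector. The whole argument is routine once Theorem~\ref{thm-pos-ker} is in hand; the only points needing a little care are the Fubini/measurability bookkeeping and the step showing the double integral is genuinely positive on a set of positive measure, neither of which is a real obstacle.
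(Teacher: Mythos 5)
Your argument is correct and follows essentially the same route as the paper: the paper's proof simply notes that positivity improvement of $e^{-tH^V}$ is immediate from Thm.~\ref{thm-pos-ker} (via \eqref{intK}, \eqref{feyn}, and self-duality of $\sP$, exactly the bookkeeping you spell out) and that the final statement is Faris' Perron--Frobenius theorem \cite[Cor.~1.2]{Faris1972}. Your write-up just makes explicit the Fubini/measurability details and the reduction of the eigenvalue statement to the spectral calculus, which the paper leaves implicit.
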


\begin{proof}
The first assertion is obvious from Thm.~\ref{thm-pos-ker}. The last statement follows from Faris' 
Perron-Frobenius theorem \cite[Cor.~1.2]{Faris1972}.
\end{proof}

If $\V{G}$ vanishes, instead of $\V{F}$, then one can use the cone 
$\sP_i:=\ol{\mr{\sP}_i}=\Gamma(i)\sP$ with
\begin{align}\label{kent7}
\mr{\sP}_i&:=\big\{F(\vp(\V{g}))\Omega:\,F\in\sS(\RR^n),\,F\ge0,\,\V{g}\in\HP_C^n,\,n\in\NN\big\}
=\Gamma(i)\mr{\sP}.
\end{align}
Here the conjugate fields $\vo(g)$ have been replaced by the fields $\vp(g)$ in the definition of
$\mr{\sP}_i$, and we used the relations $\Gamma(i)W(g)\Gamma(-i)=W(-ig)$ and \eqref{kent1}
to get the second equality in \eqref{kent7}; also recall $\vo(-ig)=\vp(g)$.
Hence, $\sP_i$ is obtained upon replacing
$\HP_C$ by the completely real subspace $i\HP_C$ of $\HP$ associated
with the conjugation $-C$. Likewise, $\sF_{-C}:=\{\psi_+-\psi_-:\psi_+,\psi_-\in\sP_i\}$ is the 
completely real subspace of $\sF$ associated with the conjugation $\Gamma(C)$. 

\begin{thm}\label{thm-pos-Nelson}
Consider the scalar case $L=1$ with $\V{G}=\V{0}$. Let $t>0$ and $\V{x},\V{y}\in\RR^\nu$.
Then $T^V_t(\V{x},\V{y})$ maps $\sF_{-C}$ into itself and it is positivity improving with respect to
$\sP_i$. In particular, $e^{-tH^V}$ is positivity improving with respect to 
$\int_{\RR^\nu}\sP_i\Id\V{x}$ (defined as in \eqref{def-int-cone}).
\end{thm}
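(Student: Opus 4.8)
The plan is to reduce Theorem~\ref{thm-pos-Nelson} to Theorem~\ref{thm-pos-ker} by means of the Bogoliubov-type unitary $\Gamma(i)$, which intertwines the fields and the conjugate fields. First I would record the algebraic relations that make this work: writing $\Gamma(i)$ for the second quantization of multiplication by $i$ on $\HP$, one has $\Gamma(i)a(f)\Gamma(i)^*=ia(\bar f)$-type identities, or more conveniently the Weyl-operator relation $\Gamma(i)W(g)\Gamma(-i)=W(-ig)$ already quoted in the excerpt, hence $\Gamma(i)\vo(g)\Gamma(-i)=\vp(g)$ and $\Gamma(i)\vp(g)\Gamma(-i)=-\vo(g)$. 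Consequently $\Gamma(i)\Id\Gamma(\omega)\Gamma(-i)=\Id\Gamma(\omega)$ (since $\Gamma(i)$ commutes with $\Id\Gamma$ of any multiplication operator), and $\Gamma(i)\Gamma(-C)\Gamma(-i)=\Gamma(C)$, so that $\Gamma(i)$ maps $\sF_C$ onto $\sF_{-C}$ and $\sP=\ol{\mr\sP}$ onto $\sP_i=\Gamma(i)\sP$, exactly as stated before the theorem.

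Next I would examine how $\Gamma(i)$ conjugates the operator $\wh H(\V{x})$ and the SDE \eqref{SDE-spin} in the scalar case $L=1$, $\V{G}=\V{0}$. Since $\V{G}_{\V{x}}=\V{0}$, the Hamiltonian is $\wh H(\V{x})=-\vp(F_{\V{x}})+\Id\Gamma(\omega)$ (with $q_{\V{x}}=\Div\V{G}_{\V{x}}=0$ and $S=L^2=1$, $\sigma_1=1$), and the stochastic integral term $i\int_0^\bullet\vp(\V{G}_{\ldots})Y_s\,\Id\V{X}_s$ in \eqref{SDE-spin} vanishes. Conjugating the SDE for $\WW{}{V}[\V{b}^{t;\V{x},\V{y}}]$ by $\Gamma(i)$ thus turns $-\vp(F_{\V{x}})$ into $+\vo(F_{\V{x}})=\vp(-iF_{\V{x}})$; because $CF_{j,\V{x}}=F_{j,\V{x}}$ by Hyp.~\ref{hyp-G}, the function $-iF_{\V{x}}$ lies in the completely real subspace $i\HP_C$ associated with $-C$, i.e.\ it satisfies $(-C)(-iF_{\V{x}})=-iF_{\V{x}}$. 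Hence $\Gamma(i)\WW{t}{V}[\V{b}^{t;\V{x},\V{y}}]\Gamma(-i)$ is, up to indistinguishability, the solution process $\widetilde{\mathbb W}{}^{V}_t[\V{b}^{t;\V{x},\V{y}}]$ associated with the coefficient data $\widetilde{\V{c}}$ obtained from $\V{c}$ by replacing $\V{F}_{\V{x}}$ with $-i\V{F}_{\V{x}}$ and the conjugation $C$ with $-C$ — this is a coefficient vector still satisfying Hyp.~\ref{hyp-G} (with $\omega$ unchanged and the new conjugation $-C$), so Theorem~\ref{thm-pos-ker} applies verbatim to it. I would invoke the uniqueness part of Thm.~\ref{thm-WW}(3) together with the factorization \eqref{fact}, or simply the uniqueness of the solution of the conjugated SDE, to justify this identification; the only subtlety is to check that $\Gamma(i)$ preserves $\dom(\Id\Gamma(\omega))$ and all the domain conditions in Thm.~\ref{thm-WW}, which is immediate since $\Gamma(i)$ commutes with $\Id\Gamma(\omega)$.

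From here the conclusion is purely formal: $T^V_t(\V{x},\V{y})=p_t(\V{x},\V{y})\,\EE[\WW{t}{V}[\V{b}^{t;\V{x},\V{y}}]]=\Gamma(-i)\,\widetilde T{}^{V}_t(\V{x},\V{y})\,\Gamma(i)$, where $\widetilde T{}^{V}_t(\V{x},\V{y})$ is the kernel built from $\widetilde{\V{c}}$. By Theorem~\ref{thm-pos-ker} (applied to $\widetilde{\V{c}}$, whose $\V{F}$-part does not vanish but whose $\V{G}$-part does — wait: Theorem~\ref{thm-pos-ker} assumes $\V{F}=0$), so in fact I would apply Theorem~\ref{thm-pos-ker} in its stated form, which requires $\V{F}=\V{0}$; but the conjugation swaps the roles of $\V{F}$ and the field, so the correct statement to quote is the one for $\V{G}=\V{0}$ after conjugation, i.e.\ the content of Theorem~\ref{thm-pos-ker} proved with $\vp$ replaced by $\vo$ throughout. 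The cleanest route is therefore: Theorem~\ref{thm-pos-ker}'s proof goes through verbatim with the cone $\sP$ replaced by $\sP_i$, the subspace $\sF_C$ by $\sF_{-C}$, and the conjugate-field Weyl calculus \eqref{kent1}--\eqref{kent2} replaced by its $\vp$-analogue (which is exactly \eqref{kent7}), because the factorization \eqref{fact} of $\WW{t}{V}[\V{b}^{t;\V{x},\V{y}}]$ involves only creation operators and $e^{-s\Id\Gamma(\omega)}$, whose positivity-improvement with respect to $\sP_i$ follows from the same Perron--Frobenius argument \cite[Thm.~XIII.44]{ReedSimonIV} since $\sP_i=\Gamma(i)\sP$ and $e^{-s\Id\Gamma(\omega)}$ commutes with $\Gamma(i)$. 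I then transport the result back through $\Gamma(i)$, obtain that $T^V_t(\V{x},\V{y})$ maps $\sF_{-C}=\Gamma(i)\sF_C$ into itself and improves positivity on $\sP_i$, and finally get the statement about $e^{-tH^V}$ from the Feynman--Kac formula \eqref{feyn} together with \eqref{intK}, exactly as in the proof of Corollary~\ref{cor-pos}: a positive $\Psi\in\int^\oplus\sP_i\,\Id\V{x}$ not a.e.\ zero is mapped to $(e^{-tH^V}\Psi)(\V{x})=\int T^V_t(\V{x},\V{y})\Psi(\V{y})\,\Id\V{y}$, which is strictly positive in $\sP_i$ for every $\V{x}$. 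The main obstacle is the bookkeeping in the conjugation step — verifying carefully that $\Gamma(i)$ conjugates the data $(\V{G},q,\V{F},C)$ with $\V{G}=\V{0}$ into admissible data to which the $\sP_i$-version of Theorem~\ref{thm-pos-ker} applies, and that the domain and measurability hypotheses of Thm.~\ref{thm-WW} are stable under this unitary; everything else is a one-line transport of the conclusion.
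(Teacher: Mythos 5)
Your final strategy coincides with the paper's: the published proof is a one-liner that reruns the proof of Thm.~\ref{thm-pos-ker} after transporting with $\Gamma(i)$, exactly as you propose (and your detour through conjugating the SDE and trying to apply Thm.~\ref{thm-pos-ker} to the conjugated data $\tilde{\V{c}}$ indeed cannot work, as you noticed yourself, since the conjugated model still has a non-vanishing linear coupling; the sign slips in your relations $\Gamma(i)\vp(g)\Gamma(-i)=\vo(g)$, $\Gamma(i)\vo(g)\Gamma(-i)=-\vp(g)$ are harmless because $\HP_C$ and $i\HP_C$ are stable under $g\mapsto -g$).

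There is, however, one genuine gap at the decisive step. You justify the ``verbatim'' rerun by saying that the factorization \eqref{fact} involves only creation operators and $e^{-s\Id\Gamma(\omega)}$, ``whose positivity-improvement with respect to $\sP_i$ follows from the same Perron--Frobenius argument since $\sP_i=\Gamma(i)\sP$ and $e^{-s\Id\Gamma(\omega)}$ commutes with $\Gamma(i)$.'' That commutation argument disposes only of the middle factor. The operators $A_{\nf{t}{3}}[U^+_t]$ and $A_{\nf{t}{3}}[-U^-_t]^*$ do not commute with $\Gamma(i)$, and $A_s[f]^*$ with $f\in\HP_C$ does \emph{not} in general preserve $\sP_i$: the analogue of \eqref{kent2} for the Weyl operators $e^{-i\vp(g)}$, $g\in\HP_C$, produces the phase $e^{-i\SPn{f}{-ig}}$, and the resulting function is a translate of $F$ by a \emph{real} vector (hence non-negative) precisely when $\SPn{f}{g}$ is purely imaginary, i.e.\ when $f\in i\HP_C$. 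So the whole point of choosing $\sP_i$ rather than $\sP$ in the case $\V{G}=\V{0}$ is the reality type of the semimartingales $U^\pm_t$ appearing in \eqref{fact}: in this case they are $i\HP_C$-valued (the linear coupling enters the SDE without the factor $i$ that accompanies $\vp(\V{G})$ in \eqref{SDE-spin}, which shifts the reality class), and this is exactly the input the paper takes from \cite[Rem.~F.2(1)]{GMM2014}. Your verification that $-iF_{\V{x}}\in i\HP_C$ is the right kind of statement, but you never connect it to the $U^\pm_t$ of the factorization, and without that link the claim that Thm.~\ref{thm-pos-ker}'s proof goes through with $\sP$ replaced by $\sP_i$ is unsupported. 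Once you state (or cite) the $i\HP_C$-valuedness of $U^\pm_t$ for $\V{G}=\V{0}$, the rest of your argument — the $\vp$-Weyl calculus \eqref{kent7}, the Perron--Frobenius step for $e^{-s\Id\Gamma(\omega)}$, and the passage to $e^{-tH^V}$ via \eqref{feyn} and \eqref{intK} as in Cor.~\ref{cor-pos} — is correct and is the paper's proof.
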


\begin{proof}
The only difference to the proofs of Thm.~\ref{thm-pos-ker} and Cor.~\ref{cor-pos} is that the
semi-martingales $(U^\pm_s)_{s\in[0,t]}$ appearing in the formula \eqref{fact} are now
$i\HP_C$-valued. The proof of Thm.~\ref{thm-pos-ker} thus shows that the transformed kernel 
$\Gamma(i)T^V_t(\V{x},\V{y})\Gamma(-i)$ is positivity improving with respect to $\sP$.
\end{proof}

%%%%%%%%%%%%%%%%%%%%%%%%%%%%%%%%%%%%%%%%%%%%%%%%
%%%%%%%%%%%%%%%%%%%%%%%%%%%%%%%%%%%%%%%%%%%%%%%%
%%%%%%%%%%%%%%%%%%%%%%%%%%%%%%%%%%%%%%%%%%%%%%%%

\section{Continuity properties of ground states in the scalar case}\label{sec-cont-GS}

\noindent
In this section we discuss the joint continuity of ground state eigenvectors in non-relativistic QED
with respect to position coordinates and model parameters. Thanks to Cor.~\ref{cor-eq-cont}, all
what is left to do is to prove $L^2$-continuity with respect to model parameters of the ground state 
eigenvectors. This is, however, still a nontrivial task as the ground state eigenvalue typically is
embedded in the continuous spectrum so that the standard methods of analytic perturbation theory
are not available. 

Besides providing an example for an application of Cor.~\ref{cor-eq-cont}, the aim
of the present section, thus, is to demonstrate that a certain compactness argument usually applied 
to prove the existence of ground states \cite{GLL2001} can also be employed to prove continuous
dependence on model parameters of the ground state eigenvectors. Furthermore, we shall present
a simplified version of the compactness argument of \cite{GLL2001} that allows to work with more
general assumptions on the coefficient vector $\V{c}$ and thus reduces the amount of technicalities.
In particular, we do not have to specify a choice of the polarization vectors and we do not need a
certain photon derivative bound \cite{GLL2001}. The idea is, roughly, to apply the standard 
characterization of relatively compact sets in $L^2(\RR^d)$, instead of the Rellich-Kondrachov
theorem used in \cite{GLL2001}, in combination with the usual formula for $a(k)$ applied to a 
ground state eigenvector.

To implement these ideas we shall consider a simplified situation where only one electron is present
(no Pauli principle) and spin is neglected. The main reason for this is that we require
at least a constant, finite degeneracy of the ground state eigenvalues to infer the continuity from
the compactness result. (If the geometric multiplicity
of the ground state eigenvalue is bigger than one, then one could still try to prove norm continuity
of the ground state eigenprojections.) Besides the case of one electron with spin neglected, where
the methods discussed in Sect.~\ref{sec-pos} ensure non-degeneracy of ground state eigenvalues
and where we may talk about distinguished {\em positive} eigenvectors,
we are not aware of any non-perturbative method in non-relativistic QED that controls the 
degeneracy in a more general situation.

Since a detailed workout of all construction steps in \cite{Griesemer2004,GLL2001} required 
here would be too space consuming and too repetitive at the same time, we shall concentrate on 
those things that are changed and otherwise be rather sketchy in this section. 

In the whole section we shall consider the following situation:

\begin{hyp}\label{hyp-cont-GS}
{\rm(a)} The one-boson space is given by $\HP=L^2(\RR^3\times\{0,1\})$, the $L^2$-space 
associated with the product of the Lebesgue measure on $\RR^3$ and the counting measure on 
$\{0,1\}$. We consider the scalar case where $L=1$ and set $\nu=3$, thus $\HR=L^2(\RR^3,\sF)$. 
We choose $\omega(\V{k},\lambda):=|\V{k}|$, $\V{k}\in\RR^3$, $\lambda\in\{0,1\}$, and
$\V{c}=(\V{G},q,\V{0})$, $\V{c}_n=(\V{G}_n,q_n,\V{0})$, $n\in\NN$, are coefficient vectors
fufilling Hyp.~\ref{hyp-G} together with $\omega$ and some fixed conjugation $C$ on $\HP$ that
commutes with $\omega$. We assume that 
\begin{equation}
\lim_{n\to\infty}\sup_{\V{x}\in K}\|\V{c}_{n,\V{x}}-\V{c}_{\V{x}}\|_{\mathfrak{k}}=0,
\end{equation}
for all compact $K\subset\RR^3$. We further set $\wt{\V{G}}_{\V{x}}:=\V{G}_{\V{x}}-\V{G}_{\V{0}}$
and $\wt{\V{G}}_{n,\V{x}}:=\V{G}_{n,\V{x}}-\V{G}_{n,\V{0}}$, $n\in\NN$,
and assume that the maps
$\RR^3\ni\V{x}\mapsto(\omega^{-1}q_{\V{x}},\omega^{-1}\wt{\V{G}}_{\V{x}})\in\HP^4$
and $\RR^3\ni\V{x}\mapsto(\omega^{-1}q_{n,\V{x}},\omega^{-1}\wt{\V{G}}_{n,\V{x}})\in\HP^4$
are well-defined, bounded, and continuous with
\begin{align*}
\lim_{n\to\infty}\sup_{\V{x}\in K}\big\|(\omega^{-1}q_{\V{x}},\omega^{-1}\wt{\V{G}}_{\V{x}})
-(\omega^{-1}q_{n,\V{x}},\omega^{-1}\wt{\V{G}}_{n,\V{x}})\big\|_{\HP}&=0,
\end{align*}
for all compact $K\subset\RR^3$.

\noindent
{\rm(b)} $V,V_n\in\cK_\pm$, $n\in\NN$, are such that $V_n-V\in\cK(\RR^3)$, for all $n\in\NN$, and
\eqref{approx1} and \eqref{approx2} hold true.

\smallskip

\noindent
{\rm(c)} We denote by $H$ the Hamiltonian defined by means of $(\omega,\V{c},V)$ 
(according to Thm.~\ref{thm-FK}). For every $n\in\NN$, the Hamiltonian defined by means of 
$(\omega,\V{c}_n,V_n)$ is denoted by $H_n$. We further define
$$
E:=\inf\sigma(H),\;\;E_n:=\inf\sigma(H_n),\;\; 
\Sigma:=\lim_{R\to\infty}\Sigma_R,\;\;\Sigma_n:=\lim_{R\to\infty}\Sigma_{n,R},
$$ 
with
\begin{align*}
\Sigma_R&:=\inf\big\{\SPn{\Phi}{H\Phi}\,\big|\,\Phi\in\fdom(H)\subset L^2(\RR^3,\sF),\,
\|\Phi\|=1,\,\supp(\Phi)\subset\{|\V{x}|\ge R\}\big\},
\end{align*}
for all $R\ge1$, and analogous definitions of $\Sigma_{n,R}$. We assume that the
following {\em binding condition} is satisfied,
\begin{equation}\label{bc}
\Sigma>E.
\end{equation}
\end{hyp}

\begin{ex}
(1) In the situation of Ex.~\ref{ex-NRQED} with $N=1$, let $\{\chi_n\}_{n\in\NN}$ be a
sequence of even, non-negative functions on $\RR^3$ converging to $\chi$
in $L^2(\RR^3,(\omega^{-1}+\omega)\Id\V{k})$. Then the coupling functions
$\V{G}^\chi$, $\V{G}^{\chi_n}$, $n\in\NN$, fulfill Hyp.~\ref{hyp-cont-GS}(a)
(with $q=q_n=0$).

Notice that in this example the functions $\omega^{-1}\V{G}_{\V{x}}^{\chi_n}$ do {\em not} belong
to $\HP^3$, if $\chi_n$ is constant in a neighborhood of $0$, which is the reason why the
function $\wt{\V{G}}_{\V{x}}$ is introduced \cite{BFS1999}. The pre-factor $\omega^{-1}$
appears upon estimating the resolvents in the key relation \eqref{aki1} below as 
$\|R(k)\|\le\omega(k)^{-1}$.

\smallskip

\noindent
(2) Let us again consider $\V{G}^\chi$ as in Ex.~\ref{ex-NRQED}.
Assume that $V(\V{x})\to0$, $|\V{x}|\to\infty$, and that
$e_V:=\inf\sigma(-\tfrac{1}{2}\Delta\dot{+}V)$ is a strictly negative
discrete eigenvalue of the Schr\"odinger
operator $-\tfrac{1}{2}\Delta\dot{+}V$. (Here the dot on the $+$ indicates that the Schr\"odinger
operator is defined via quadratic forms.) 
Then it follows from the arguments in \cite{GLL2001} that $\Sigma-E\ge -e_V$ and in particular
that the binding condition \eqref{bc} is fulfilled.

There are also certain short range potentials $V$ where 
$e_V$ is equal to the lower end of the essential spectrum of the Schr\"odinger operator and 
nevertheless \eqref{bc} is satisfied; see \cite{KoenenbergMatte2013a} for a non-perturbative 
discussion of this {\em enhanced binding} effect and for references to earlier perturbative studies.

\smallskip

\noindent(3) The binding condition \eqref{bc} is clearly fulfilled if $V(\V{x})\to\infty$, $|\V{x}|\to\infty$.
In this case the {\em ionization threshold} $\Sigma$ is infinite.
\end{ex}

Straightforward variational arguments employing \eqref{rb-a}--\eqref{qfb-vp}
and Lem.~\ref{lem-Kato-qfb} reveal that (in $(-\infty,\infty]$)
\begin{align}\label{conv-En}
&\lim_{n\to\infty}E_n=E,\quad\lim_{n\to\infty}\Sigma_{n}=\Sigma,\quad
\lim_{n\to\infty}\Sigma_{n,R}=\Sigma_R,\;R\ge1.
\end{align}

Although the existence part of the following proposition is not formally covered by the existing 
literature in its full generality, we shall treat it as well-known, too. 

In this proposition and henceforth, strict positivity of a vector in $\HR$ is understood with respect 
to the cone $\int_{\RR^3}^\oplus\sP\Id\V{x}$ introduced in Cor.~\ref{cor-pos}.

\begin{prop}\label{prop-ex-GS}
In the situation of Hyp.~\ref{hyp-cont-GS}, $E$ is a non-degenerate eigenvalue of
$H$ and there exists a unique normalized, strictly positive, corresponding eigenvector of $H$
that we denote by $\Psi$. Furthermore, there exists $n_0\in\NN$
such that, for all $n\ge n_0$, $E_n$ is a non-degenerate eigenvalue of $H_n$ to which there
corresponds a unique normalized, strictly positive eigenvector $\Psi_n$.
\end{prop}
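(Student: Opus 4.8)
The existence of a ground state and its non-degeneracy for $H$ is a combination of two ingredients that I would invoke as essentially known. First, the binding condition \eqref{bc} together with the standard Lieb--Loss--type compactness argument (in the simplified form announced in the introduction of this section, to be carried out in detail in the subsequent construction) shows that $E$ is an eigenvalue of $H$. For this one takes a minimizing sequence, applies an IMS-type localization to conclude that the approximate ground states do not escape to spatial infinity (using $\Sigma>E$), and extracts a weakly convergent subsequence whose limit is shown to be nonzero by the usual spatial-localization-plus-Fock-space-localization scheme; the key relative bounds needed are \eqref{rb-a}--\eqref{qfb-vp} and the Kato form bound of Lem.~\ref{lem-Kato-qfb}, and the photon-number control that replaces the Rellich--Kondrachov step uses the formula \eqref{def-a(k)} for $a(k)$ applied to an eigenvector together with the elementary compactness criterion in $L^2$. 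Second, for the non-degeneracy and strict positivity I would simply quote Cor.~\ref{cor-pos}: since we are in the scalar case $L=1$ with $\V{F}=\V{0}$, the semi-group $e^{-tH^V}$ is positivity improving with respect to the self-dual convex cone $\int_{\RR^3}^\oplus\sP\,\Id\V{x}$, whence by Faris' Perron--Frobenius theorem \cite[Cor.~1.2]{Faris1972} the eigenvalue $E=\inf\sigma(H)$, being an eigenvalue, is non-degenerate and its eigenvector can be chosen strictly positive; normalizing fixes it uniquely.

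\textbf{The approximants.} For the statement about $H_n$ I would argue that the same two ingredients apply to each $H_n$, but uniformly in $n$ for $n$ large. The point is that the binding condition is \emph{stable under the approximation} in the sense of Hyp.~\ref{hyp-cont-GS}: by the variational convergences \eqref{conv-En} we have $E_n\to E$ and $\Sigma_n\to\Sigma$ in $(-\infty,\infty]$, so the strict inequality $\Sigma>E$ implies there exists $n_0\in\NN$ with $\Sigma_n>E_n$ for all $n\ge n_0$ (if $\Sigma=\infty$ this is immediate once $E_n$ is finite, which holds for large $n$; if $\Sigma<\infty$ pick $\eta:=(\Sigma-E)/3$ and then $n_0$ so that $|E_n-E|,|\Sigma_n-\Sigma|<\eta$ for $n\ge n_0$). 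Hence for each such $n$ the existence argument gives that $E_n$ is an eigenvalue of $H_n$, and Cor.~\ref{cor-pos} applied to the model $(\omega,\V{c}_n,V_n)$ — which still satisfies $L=1$ and $\V{F}_n=\V{0}$ by Hyp.~\ref{hyp-cont-GS}(a) — gives that $E_n$ is non-degenerate with a unique normalized strictly positive eigenvector $\Psi_n$.

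\textbf{Main obstacle.} The only genuinely nontrivial point is the existence of a ground state, i.e. that $E$ (resp. $E_n$) is actually attained as an eigenvalue rather than merely being the bottom of the spectrum; this is where the binding condition and the compactness machinery of \cite{Griesemer2004,GLL2001} enter. The delicate part inside that argument is controlling the photon number of a minimizing sequence uniformly — classically done via the pull-through formula and a photon-derivative bound — and here the plan is to replace the derivative bound by the weaker hypotheses on $\omega^{-1}q_{\V{x}}$ and $\omega^{-1}\wt{\V{G}}_{\V{x}}$ in Hyp.~\ref{hyp-cont-GS}(a), using only the standard $L^2$-compactness criterion. Since the introduction of this section explicitly defers the detailed workout, in this proposition I would keep the treatment of existence ``well-known'' as stated and focus the written proof on deriving $n_0$ from \eqref{conv-En} and \eqref{bc}, and on citing Cor.~\ref{cor-pos} and \cite{Faris1972} for non-degeneracy and positivity; the uniformity over $n\ge n_0$ is then automatic because the relevant relative bounds are uniform in $n$ by the hypothesis $\|\V{c}_{n,\V{x}}\|_{\mathfrak{k}}\le A$ and $\sup_n\sup_{\V{x}}\|(\omega^{-1}q_{n,\V{x}},\omega^{-1}\wt{\V{G}}_{n,\V{x}})\|_{\HP}<\infty$.
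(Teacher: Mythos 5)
Your treatment of non-degeneracy, strict positivity and of the threshold index $n_0$ matches the paper: the paper likewise reduces everything to Cor.~\ref{cor-pos} (hence Faris' theorem \cite{Faris1972}) once $E$ and $E_n$ are known to be eigenvalues, and it obtains $n_0$ from the stability of the binding condition under \eqref{conv-En} (in the paper this is packaged into Prop.~\ref{prop-loc}, whose $n_0$ also yields the uniform localization bounds used later, but for the present proposition your direct derivation from \eqref{conv-En} and \eqref{bc} is the same observation).

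The gap is in your sketch of the existence step. A direct minimizing-sequence argument with IMS localization for the \emph{massless} Hamiltonian does not work as stated: since the photons are massless, $E$ sits at the bottom of the essential spectrum, so a weakly convergent minimizing sequence can perfectly well converge to $0$; and the photon-number/infrared control that replaces Rellich--Kondrachov — the pull-through identity recorded later as \eqref{aki1} in Prop.~\ref{prop-IR} — is only available for genuine eigenvectors, not for minimizing sequences (for the latter the term $a(k)(H-E)\Psi$ is not controlled pointwise in $k$). The paper's proof instead follows the two-step route of \cite{GLL2001}: first replace $\omega$ by the massive dispersion $\omega_{\tilde n}(k)=(\V{k}^2+\tilde n^{-2})^{1/2}$, for which existence of ground states of the modified Hamiltonians is standard under Hyp.~\ref{hyp-cont-GS}; then remove the artificial mass by a compactness argument showing that the massive ground states $\{\Psi^{(\tilde n)}\}_{\tilde n}$ (resp. $\{\Psi_n^{(\tilde n)}\}_{\tilde n}$) contain norm-convergent subsequences. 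Moreover, because the hypotheses on the coefficient vectors here are weaker than in \cite{GLL2001} (no photon derivative bound, no fixed choice of polarization vectors), the paper does not quote that compactness result directly but reuses the compactness argument developed in this very section, applied with \emph{constant} $V$ and $\V{c}$ and the varying sequence $\omega_{\tilde n}$. So the existence of $\Psi$ and $\Psi_n$, $n\ge n_0$, is not an off-the-shelf citation plus a minimizing sequence; it is the massive-regularization-plus-mass-removal scheme, with the mass-removal step being exactly the adaptation of Lems.~\ref{lem-cpt1} and Cors.~\ref{cor-IRa}--\ref{cor-IR} that your last paragraph only alludes to.
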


\begin{proof}
The non-degeneracy and strict positivity follow immediately from Cor.~\ref{cor-pos} 
as soon as we have pegged $E$ and $E_n$, $n\ge n_0$, as eigenvalues. Here the number $n_0$
is identical to the one found in Prop.~\ref{prop-loc} below, which ensures that the binding
condition $\Sigma_n>E_n$ holds, for $n\ge n_0$, as well. To prove that $E$ and $E_n$, 
$n\ge n_0$, are eigenvalues, one would then first replace $\omega$ by
$\omega_{\tilde{n}}(k):=(\V{k}^2+\tilde{n}^{-2})^\eh$, $\tilde{n}\in\NN$, and show that the 
Hamiltonians, thus modified,
have an eigenvalue at the bottom of their spectrum. This is by now fairly standard and works
under the assumptions of Hyp.~\ref{hyp-cont-GS} \cite{GLL2001}.
Let $\Psi^{(\tilde{n})}$ and $\Psi_n^{(\tilde{n})}$, $n\ge n_0$, be corresponding normalized 
eigenvectors. Then, in the next step, one removes the artificial photon mass $\tilde{n}^{-1}$ 
by a compactness argument implying that $\{\Psi^{(\tilde{n})}\}_{\tilde{n}\in\NN}$ and 
$\{\Psi_n^{(\tilde{n})}\}_{\tilde{n}\in\NN}$ contain norm convergent subsequences.
The arguments of \cite{GLL2001} do not apply here directly, because we work with less specific
assumptions on the coefficient vectors. To remedy this one can, however, adapt the
compactness argument presented in this section: just consider constant sequences of
potentials $V$ and coefficient vectors $\V{c}$, and the non-constant sequence
$\omega_{\tilde{n}}$, $\tilde{n}\in\NN$, instead.
\end{proof}

\begin{thm}\label{thm-cont-GS}
In the setting of Hyp.~\ref{hyp-cont-GS} and using the notation of Prop.~\ref{prop-ex-GS},
we may conclude that the following holds true:

\smallskip

\noindent{\rm(1)} $\Psi_n\to\Psi$, $n\to\infty$, in $L^2(\RR^3,\sF)$.

\smallskip

\noindent{\rm(2)}
Let $\{\V{x}_n\}_{n\in\NN}$ be a converging sequence in $\RR^3$ with limit $\V{x}$ and let us
consider the unique continuous representatives of the ground state eigenvectors. Then
$\Psi_n(\V{x}_n)\to\Psi(\V{x})$, $n\to\infty$, in $\sF$. If $t>0$ and
$(\Upsilon_t)_{t\ge0}$ and $\|\cdot\|_*$ are given by either Line~2 or Line~4 of Table~1, and if
we additionally assume that the maps
$\RR^\nu\ni\V{x}\mapsto\V{c}_{\V{x}}$ and $\RR^\nu\ni\V{x}\mapsto\V{c}_{n,\V{x}}$, $n\in\NN$,
are continuous and uniformly bounded with respect to the norm $\|\cdot\|_*$, then
$\Upsilon_t\Psi_n(\V{x}_n)\to\Upsilon_t\Psi(\V{x})$, $n\to\infty$, in $\sF$.
\end{thm}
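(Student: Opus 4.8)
The plan is to prove part (1) by a compactness argument in the spirit of \cite{GLL2001}, and then deduce part (2) by combining (1) with Cor.~\ref{cor-eq-cont}. For part (1), since $\|\Psi_n\|=1$ for $n\ge n_0$, it suffices to show that $\{\Psi_n\}_{n\ge n_0}$ is relatively compact in $L^2(\RR^3,\sF)$ and that every norm-convergent subsequence has a limit that is a normalized, strictly positive eigenvector of $H$ at the bottom of its spectrum; uniqueness (Prop.~\ref{prop-ex-GS}) then forces the whole sequence to converge to $\Psi$. So the core is the compactness claim. First I would establish spatial localization: using the binding condition \eqref{bc}, the convergences \eqref{conv-En}, and an IMS-type localization estimate (as in \cite{Griesemer2004,GLL2001}), I would prove a uniform exponential bound of the form $\sup_{n\ge n_0}\|e^{\beta|\cdot|}\Psi_n\|<\infty$ for some $\beta>0$ — this is the content of the ``Prop.~\ref{prop-loc}'' referenced in the excerpt. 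This controls the tails in the electron variable uniformly in $n$.

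\textbf{Compactness in Fock space.} The second ingredient is compactness in the photon degrees of freedom. Here I would use the pull-through (``$a(k)$'') formula: for a ground state eigenvector $\Psi_n$ of $H_n$ at energy $E_n$, one has formally
\begin{align}\label{aki1}
(a(k)\Psi_n)(\V{x}) = -\big(H_n(\V{x}) + \omega(k) - E_n\big)^{-1} v_{n,k}(\V{x})\Psi_n(\V{x}),
\end{align}
where $v_{n,k}$ is built from the coupling functions $\V{G}_{n,\V{x}}$ and $q_{n,\V{x}}$ evaluated at momentum $k$, and the resolvent is bounded in norm by $\omega(k)^{-1}$. The point of introducing $\wt{\V{G}}_{n,\V{x}} = \V{G}_{n,\V{x}}-\V{G}_{n,\V{0}}$ in Hyp.~\ref{hyp-cont-GS}(a) is that, after combining \eqref{aki1} with the exponential spatial localization, the $\V{G}_{n,\V{0}}$-part can be handled by a translation/dressing transformation while the difference $\wt{\V{G}}_{n,\V{x}}$ carries the extra factor of $\omega^{-1}$ needed to make $\int\|v_{n,k}\|^2\omega(k)^{-2}\,\Id k$ finite and uniformly bounded. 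This yields a uniform bound on $\||k|^{1/2} a(k)\Psi_n\|_{L^2(\Id\V{x}\,\Id k)}$, i.e., on $\|\Id\Gamma(\omega)^{1/2}\Psi_n\|$, together with an equicontinuity-type bound on $k\mapsto a(k)\Psi_n$ in $L^2(\RR^3,\sF)$. Via the standard characterization of relatively compact sets in $L^2(\RR^3\times\{0,1\})$ (uniform smallness of tails plus uniform continuity of translates) applied in the photon momentum variable, one upgrades this fiberwise (in $n$-photon sectors) to relative compactness of $\{\Psi_n\}$ in $L^2(\RR^3,\sF)$; the uniform photon-number bound controls the high-sector tail and the spatial localization controls the electron tail.

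\textbf{Identifying the limit and passing to part (2).} Given a subsequence $\Psi_{n_j}\to\Phi$ in $L^2(\RR^3,\sF)$, $\Phi$ is normalized and strictly positive (the cone $\int^\oplus\sP\,\Id\V{x}$ is closed). To see $H\Phi = E\Phi$, I would use the Feynman-Kac representation: $e^{-tH_n}\Psi_n = e^{-tE_n}\Psi_n$, and by Thm.~\ref{thm-LNCV}(2) and Thm.~\ref{thm-coup} (whose convergences are uniform in the coefficient vectors, precisely as needed here) together with \eqref{conv-En}, one gets $e^{-tH_n}\Psi_n \to e^{-tH}\Phi$ and $e^{-tE_n}\Psi_n\to e^{-tE}\Phi$ in $L^2$, so $e^{-tH}\Phi = e^{-tE}\Phi$ for all $t>0$, whence $\Phi$ is an eigenvector of $H$ at $E = \inf\sigma(H)$. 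By non-degeneracy and the normalization/positivity, $\Phi = \Psi$; since the limit is independent of the subsequence, $\Psi_n\to\Psi$, proving (1). For (2), the sequence $\{\Psi_n\}$ converges in $L^2$ to $\Psi$ by (1), the potentials $V_n$ converge to $V$ in the sense of \eqref{approx1}--\eqref{approx2}, and the coefficient vectors converge as required; moreover $\Psi_n = e^{tE_n}e^{-tH_n}\Psi_n = e^{tE_n}T_t^{V_n,n}\Psi_n$ and similarly $\Psi = e^{tE}T_t^V\Psi$. Applying Cor.~\ref{cor-eq-cont} with the converging sequences $\Psi_n\to\Psi$ and $\V{x}_n\to\V{x}$ (and using $e^{tE_n}\to e^{tE}$) gives $\Upsilon_{t/2}\Psi_n(\V{x}_n) = e^{tE_n}\Upsilon_{t/2}(T_t^{V_n,n}\Psi_n)(\V{x}_n) \to e^{tE}\Upsilon_{t/2}(T_t^V\Psi)(\V{x}) = \Upsilon_{t/2}\Psi(\V{x})$; replacing $t/2$ by $t$ (just relabel) yields the weighted statement, and taking $\Upsilon=\id$ (Line~2 or~4 with $\vo=0$, or simply the unweighted case) yields $\Psi_n(\V{x}_n)\to\Psi(\V{x})$ in $\sF$.

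\textbf{Main obstacle.} The hard part will be making the pull-through/compactness argument \eqref{aki1} rigorous and \emph{uniform in $n$} under the weaker hypotheses of Hyp.~\ref{hyp-cont-GS}(a): one must justify the pull-through formula (domain questions for $a(k)\Psi_n$, measurability in $k$), control the resolvent $(H_n(\V{x})+\omega(k)-E_n)^{-1}$ uniformly — which is where the binding condition and \eqref{conv-En} re-enter to keep $E_n$ bounded away from $\inf\sigma(H_n(\V{x}))+\omega(k)$ appropriately — and assemble the fiberwise compactness into genuine $L^2(\RR^3,\sF)$ compactness, handling the interplay of electron-localization and photon-sector decay simultaneously. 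The excerpt signals that this is exactly where the simplified argument (avoiding the photon derivative bound and the explicit polarization vectors of \cite{GLL2001}) does its work, so the write-up should present these estimates carefully while citing \cite{Griesemer2004,GLL2001,BFS1999} for the steps that are genuinely unchanged.
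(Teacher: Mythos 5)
Your proposal is correct and follows essentially the same route as the paper: part (1) via the compactness result (uniform exponential localization plus the pull-through formula and the $L^2$ compactness criterion, which is the paper's Prop.~\ref{prop-GS}), identification of any subsequential limit as the normalized positive ground state using the convergence of the semi-groups from Thms.~\ref{thm-LNCV} and~\ref{thm-coup} together with \eqref{conv-En}, and part (2) by writing $\Psi_n=e^{\tau E_n}T_\tau^{V_n,n}\Psi_n$ and invoking Cor.~\ref{cor-eq-cont}. The only cosmetic difference is that you identify the limit through the eigenvalue relation $e^{-tH}\Phi=e^{-tE}\Phi$ while the paper passes through the equivalent strong resolvent convergence identity $(H-E+1)^{-1}\Psi_\infty=\Psi_\infty$.
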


\begin{proof}
(1): Let $\{\Psi_{n_j}\}_{j\in\NN}$ be any subsequence of $\{\Psi_n\}_{n\ge n_0}$.
By Prop.~\ref{prop-GS} below this subsequence contains another subsequence,
say $\{\Psi_{\ell_j}\}_{j\in\NN}$, such that $\Psi_{\ell_j}\to\Psi_\infty$, $j\to\infty$,
for some normalized $\Psi_\infty\in\HR$. Since each $\Psi_{\ell_j}$ is strictly positive,
$\Psi_\infty$ is non-negative. Furthermore, Thm.~\ref{thm-LNCV}, Thm.~\ref{thm-coup},
and Hyp.~\ref{hyp-cont-GS} imply that $H_n$ converges to $H$ in the strong resolvent sense;
recall the proof of Cor.~\ref{cor-SCV} and that strong convergence of the semi-group implies
strong resolvent convergence. Taking also \eqref{conv-En} into account, we deduce that
$$
\Psi_\infty=\lim_{j\to\infty}\Psi_{\ell_j}=\lim_{j\to\infty}(H_{\ell_j}-E_{\ell_j}+1)^{-1}\Psi_{\ell_j}
=(H-E+1)^{-1}\Psi_\infty,
$$
whence $\Psi_\infty\in\dom(H)$ and $H\Psi_\infty=E\Psi_\infty$. Since $\Psi_\infty$ is normalized
and non-negative, we must have that $\Psi_\infty=\Psi$. Hence, every subsequence of
$\{\Psi_n\}_{n\ge n_0}$ contains a subsequence that converges to $\Psi$. This is, however, 
only possible, if $\Psi_n\to\Psi$, $n\to\infty$, in $\HR$.

(2): Set $\tau:=2t$.
By \eqref{conv-En} and Part~(1) the vectors $\Psi_n':=e^{\tau E_n}\Psi_n$, $n\ge n_0$,
converge in $L^2(\RR^3,\sF)$ to $\Psi':=e^{\tau E}\Psi$, as $n$ goes to infinity. Furthermore,
$\Psi=e^{-\tau H}\Psi'=T_\tau^V\Psi'$ and similarly $\Psi_n=T_\tau^{V_n,n}\Psi_n'$, $n\ge n_0$, 
where we use some notation of Cor.~\ref{cor-eq-cont}. 
The latter now implies all statements under (2).
\end{proof}

\begin{prop}\label{prop-GS}
Every subsequence of the eigenvector sequence $\{\Psi_n\}_{n\ge n_0}$ found in
Prop.~\ref{prop-ex-GS} contains another subsequence that converges in $\HR$.
\end{prop}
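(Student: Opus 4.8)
The plan is to follow the compactness strategy of \cite{GLL2001}, but in the simplified form announced in the text: instead of invoking the Rellich--Kondrachov theorem one uses the Fréchet--Kolmogorov characterization of relatively compact subsets of $L^2$ together with the pull-out formula \eqref{def-a(k)} for $a(k)$ applied to a ground state eigenvector. Fix a subsequence of $\{\Psi_n\}_{n\ge n_0}$, which for notational convenience we still denote by $\{\Psi_n\}$. The goal is to extract a norm-convergent sub-subsequence. One works in the isomorphism $\HR=L^2(\RR^3,\sF)=\bigoplus_{m=0}^\infty L^2(\RR^{3}\times(\RR^3\times\{0,1\})^m)^{\mathrm{sym}}$ and writes $\Psi_n=(\Psi_n^{(m)})_{m\ge0}$. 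Because $\|\Psi_n\|=1$, it suffices to prove: (i) a tightness/uniform decay statement in the photon number $m$, i.e. $\sup_n\sum_{m\ge M}\|\Psi_n^{(m)}\|^2\to0$ as $M\to\infty$; (ii) for each fixed $m$, the family $\{\Psi_n^{(m)}\}_n$ is relatively compact in its $L^2$-space. Granting (i) and (ii), a diagonal argument over $m$ produces a sub-subsequence converging componentwise, and (i) upgrades this to convergence in $\HR$.

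For (i) I would use the exponential localization of the photons in the ground state. The binding condition \eqref{bc} together with \eqref{conv-En} gives, by Prop.~\ref{prop-loc} referred to in the text, a uniform gap $\Sigma_n-E_n\ge\gamma>0$ for $n\ge n_0$; the standard Agmon-type argument (exactly as in \cite{Griesemer2004,GLL2001}) then yields a uniform bound $\sup_n\|e^{a|\cdot|}\Psi_n\|<\infty$ for some $a>0$, and, more importantly here, a uniform bound on the photon number, $\sup_n\SPn{\Psi_n}{\Id\Gamma(1)\Psi_n}<\infty$. Indeed, testing the eigenvalue equation with $\Id\Gamma(1)\Psi_n$ and using the relative bounds \eqref{rb-a}, \eqref{rb-ad}, \eqref{rb-vp2} controlled by $\sup_{\V{x}}\|\V{c}_{n,\V{x}}\|_{\mathfrak k}\le A$ (part of Hyp.~\ref{hyp-G}, uniform by Hyp.~\ref{hyp-cont-GS}), one gets $\SPn{\Psi_n}{\Id\Gamma(\omega)\Psi_n}$ and then $\SPn{\Psi_n}{\Id\Gamma(1)\Psi_n}$ bounded uniformly in $n$ (the infrared behaviour forces one to insert the auxiliary $\omega^{-1}$-weighted quantities $\omega^{-1}q$, $\omega^{-1}\wt{\V G}$ from Hyp.~\ref{hyp-cont-GS}(a), precisely as the remark after that hypothesis explains). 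A uniform bound $\sup_n\sum_m m\|\Psi_n^{(m)}\|^2<\infty$ is exactly (i).

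For (ii) I would exploit the pull-out identity: for a ground state eigenvector $\Psi_n$ of $H_n$ one has, $\mu$-a.e.\ in $k$, the key relation
\begin{align}\label{aki1}
(a(k)\Psi_n)(\V{x})&=-R_n(k)\,\big(w_{n,\V{x}}(k)\,\Psi_n(\V{x})
+\text{(commutator terms)}\big),
\end{align}
where $R_n(k)=(H_n-E_n+\omega(k))^{-1}$ and $w_{n,\V{x}}$ is built from the coupling functions; schematically $\|R_n(k)\|\le\omega(k)^{-1}$. From \eqref{aki1}, using the $\mathfrak k$- and $\omega^{-1}$-bounds on the coefficient vectors together with the already-established uniform bounds on $\|\Psi_n\|$, $\|\Id\Gamma(\omega)^{1/2}\Psi_n\|$, and $\|e^{a|\cdot|}\Psi_n\|$, one obtains: an $n$-uniform $L^2$-bound on $(\V{x},k)\mapsto|k|^{-\delta}(a(k)\Psi_n)(\V{x})$ for small $\delta>0$ (this controls small $k$ via $\omega^{-1}$ and the integrability of $\omega^{-2}|\wt{\V G}|^2$, $\omega^{-2}|q|^2$), an $n$-uniform $L^2$-bound on $(\V{x},k)\mapsto\omega(k)^{1/2}(a(k)\Psi_n)(\V{x})$ for large $k$ (from $\Psi_n\in\dom(\Id\Gamma(\omega))$-type control), and an $n$-uniform bound on $\|(a(k)\Psi_n)(\cdot)-(a(k)\Psi_n)(\cdot+\V{y})\|_{L^2}$ controlled by $|\V{y}|$ times an $L^2(\Id k/\omega)$ function, coming from differentiating $w_{n,\V{x}}$ and $R_n(k)$ in $\V{x}$ (this is where one needs $\V{x}\mapsto\omega^{-1}\wt{\V G}_{\V x}$, $\omega^{-1}q_{\V x}$ and their limits, supplied by Hyp.~\ref{hyp-cont-GS}(a)). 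Fixing $m\ge1$, one writes $\Psi_n^{(m)}$ in terms of $a(k_1)\cdots a(k_m)\Psi_n$, and the above three estimates, fed into the Fréchet--Kolmogorov criterion on $L^2(\RR^{3(m+1)}\times\{0,1\}^m)$ — equicontinuity of translations in the electron variable and in the photon variables, and uniform decay at infinity in all variables — give relative compactness of $\{\Psi_n^{(m)}\}_n$; the case $m=0$ is handled by $\sup_n\|e^{a|\cdot|}\Psi_n\|<\infty$ plus the same photon-number tightness to control the tail, or simply reduces to (i) plus continuity.

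The main obstacle is (ii), specifically making the translation-equicontinuity estimate on $a(k)\Psi_n$ genuinely uniform in $n$ near the infrared: one must commute translations through the resolvent $R_n(k)$ and through the multiplication operators $w_{n,\V{x}}$, and the commutator $[\text{translation},R_n(k)]$ produces a term involving $\nabla_{\V x}\wh H_n(\V{x})$ sandwiched between two resolvents, which is only controlled after pairing with $\omega(k)^{-1}$ and invoking the uniform continuity and uniform bounds of $\V{x}\mapsto\omega^{-1}\wt{\V G}_{n,\V x}$, $\omega^{-1}q_{n,\V x}$ from Hyp.~\ref{hyp-cont-GS}(a). This is exactly the step where the present, more general hypotheses replace the photon derivative bound of \cite{GLL2001}; I would carry it out by writing $R_n(k)\partial_{x_\ell}(\cdot)R_n(k)$ explicitly, bounding $\partial_{x_\ell}\wh H_n(\V x)(1+\Id\Gamma(\omega))^{-1}$ uniformly via \eqref{rb-a}, \eqref{rb-ad}, \eqref{rb-vp2} applied to $\partial_{x_\ell}\V{G}_{n,\V x}$ (whose $\mathfrak k$-norm is uniformly bounded by Hyp.~\ref{hyp-G}), and then using $\|R_n(k)\|\le\omega(k)^{-1}$ twice. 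Everything else — the diagonal extraction, the identification of the limit in Thm.~\ref{thm-cont-GS}(1) via strong resolvent convergence, and Thm.~\ref{thm-cont-GS}(2) via Cor.~\ref{cor-eq-cont} — is then routine.
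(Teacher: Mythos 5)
Your overall strategy is the paper's: uniform exponential localization, the pull-through identity of Prop.~\ref{prop-IR}, a uniform photon-number bound, and a Fr\'{e}chet--Kolmogorov compactness argument, assembled by tightness in the boson number. But there is a genuine gap at the heart of step (ii): the translation equicontinuity \emph{in the photon momentum variables} is asserted and never derived. None of your three estimates supplies it, and the "main obstacle'' you address concerns commuting \emph{electron} translations through $R_n(k)$ (the quantity $\nabla_{\V{x}}\wh{H}_n$), not shifts in $k$. A direct attack on $k$-shifts along your lines fails in the infrared: Hyp.~\ref{hyp-cont-GS} deliberately assumes no $k$-derivatives of the coupling functions (avoiding the photon derivative bound of \cite{GLL2001}), and differentiating the resolvent factors gives $\|\nabla_{\V{k}}R_n(k)\|\sim\omega(k)^{-2}$, which multiplied by couplings of size $\omega^{-1/2}$ is not square integrable near $\V{k}=\V{0}$ in three dimensions. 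The paper circumvents exactly this: it inserts an infrared cutoff $\chi_\delta$ and proves relative compactness only of the cutoff components (Lem.~\ref{lem-cpt1}), where the $\V{k}$-derivative bounds on $\chi_\delta B_n$, $\chi_\delta R_n$ cost only $\delta^{-2}$; it replaces $k$-derivatives of the couplings by the relative compactness of $\{\mathfrak{g}_{n,\V{x}}:n\in\NN,\V{x}\in K_R\}$ in $\HP^7$, which yields $k$-shift continuity uniformly in $(n,\V{x})$ (Cor.~\ref{cor-IR}); and it recovers the full claim not by proving compactness of the un-cutoff components, but by extracting a weak limit and showing it has norm one, using the uniform infrared smallness $F(0,2\delta)\to0$ and the photon-number bound to control the discarded high-boson-number and low-momentum pieces. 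Your uncut claim (ii) is in fact true, but only as a consequence of the cutoff compactness plus the uniform infrared bound (a set uniformly approximated by relatively compact sets is relatively compact); your proposed route does not reach it.

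A secondary point: for equicontinuity of translations in the electron variable you do not need to commute translations through $R_n(k)$ at all -- doing so would also drag in the difference $V_n(\cdot+\V{y})-V_n(\cdot)$ of Coulomb potentials, which is not uniformly small in any convenient operator sense. The paper gets this step for free from the uniform weighted $H^1$-bound \eqref{tizian5}, i.e.\ $\|\Psi_n-S_{(\V{y},\V{0})}\Psi_n\|\le|\V{y}|\,\|\hat{\vxi}\hat{\Psi}_n\|\le c|\V{y}|$. Your step (i), the uniform bound on $\SPn{\Psi_n}{\Id\Gamma(1)\Psi_n}$, should likewise be obtained from the pull-through formula \eqref{aki1} with the $\omega^{-1}$-weighted couplings (as in Cor.~\ref{cor-IRa}) rather than by testing the eigenvalue equation with $\Id\Gamma(1)\Psi_n$, since the number operator is not form-bounded by $H_n$ without such infrared input; you partly acknowledge this, so this is a matter of mechanics rather than of substance.
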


The previous proposition is proved at the end of this section. The remainder of this section
serves as a preparation for that proof.

The compactness argument carried through below (see in particular Lem.~\ref{lem-cpt1})
requires some information on the uniform
spatial localization of the ground state eigenvectors $\Psi_n$. This is provided by the next 
proposition which is a corollary of a result in \cite{Griesemer2004}.

In what follows, the symbol $F(\hat{\V{x}})$ will be a convenient notation for
the multiplication operator associated with a measurable function 
$\RR^3\ni\V{x}\mapsto F(\V{x})$. 

\begin{prop}\label{prop-loc}
In the situation of Hyp.~\ref{hyp-cont-GS},
pick $a,b>0$ and $\delta\ge0$ with $E+\delta+b+a^2/2<\Sigma$.
Then we find some $n_0\in\NN$ such that 
\begin{align}\label{tizian0}
\Sigma_n\ge E_n+\delta+b+a^2/2,\quad n\ge n_0.
\end{align}
We further find $c,c',c''>0$ such that, for every $n\ge n_0$ and every normalized element 
$\Phi_n$ of the range of the spectral projection $1_{(-\infty,E_n+\delta]}(H_n)$,
the following bounds hold true,
\begin{align}\label{tizian1}
\|e^{a|\hat{\V{x}}|}\Phi_n\|&\le c,
\\\label{tizian2}
\int_{\RR^3}e^{2a|\V{x}|}\big\|(-i\nabla-\vp(\V{G}_{n,\V{x}}))\Phi_n(\V{x})\big\|^2\Id\V{x}&\le c',
\\\label{tizian5}
\int_{\RR^3}e^{2a|\V{x}|}\big\|\nabla\Phi_n(\V{x})\big\|^2\Id\V{x}&\le c''.
\end{align}
Furthermore, if $\sup\{\|\omega^\alpha\V{c}_{n,\V{x}}\|:n\in\NN,\V{x}\in\RR^3\}<\infty$,
for some $\alpha\ge1$, then there exists $c'''>0$ such that, for every $n\ge n_0$, the
unique continuous representative of $\Psi_n$ satisfies
\begin{align}\label{tizian10}
\|(1+\Id\Gamma(\omega))^\alpha\Phi_n(\V{x})\|\le c'''e^{-a|\V{x}|},\;\;\V{x}\in\RR^\nu.
\end{align}
If $\sup\{\|e^{\delta_0\omega}\V{c}_{n,\V{x}}\|:n\in\NN,\V{x}\in\RR^3\}<\infty$,
for some $\delta_0>0$, then there exist $c''''>0$ and $\delta\in(0,\delta_0]$ such that, 
for all $n\ge n_0$,
\begin{align}\label{tizian11}
\|e^{\delta\Id\Gamma(\omega)}\Phi_n(\V{x})\|\le c''''e^{-a|\V{x}|},\;\;\V{x}\in\RR^\nu.
\end{align}
\end{prop}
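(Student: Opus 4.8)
The statement to prove is Proposition~\ref{prop-loc}, which collects uniform exponential localization bounds for ground states (and low-lying spectral vectors) of the approximating Hamiltonians $H_n$. The plan is to deduce everything from the already-available exponential localization theory for a \emph{single} such Hamiltonian, as worked out in \cite{Griesemer2004}, together with the convergence relations \eqref{conv-En} and the uniform quadratic-form bounds coming from Hyp.~\ref{hyp-cont-GS}.

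First I would fix $a,b>0$ and $\delta\ge0$ with $E+\delta+b+a^2/2<\Sigma$ and use \eqref{conv-En} — specifically $E_n\to E$ and $\Sigma_n\to\Sigma$ — to pick $n_0$ so large that \eqref{tizian0} holds for all $n\ge n_0$; this is the step that ``pegs'' the binding condition uniformly and also supplies the $n_0$ referred to in Prop.~\ref{prop-ex-GS} and Prop.~\ref{prop-GS}. Next, for each fixed $n\ge n_0$, the exponential bound \eqref{tizian1} is exactly (a special case of) the localization estimate of \cite{Griesemer2004}: any normalized $\Phi_n\in\Ran 1_{(-\infty,E_n+\delta]}(H_n)$ satisfies $\|e^{a|\hat{\V{x}}|}\Phi_n\|\le c_n$ whenever $E_n+\delta+a^2/2<\Sigma_n$. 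The work is to see that $c_n$ can be chosen independent of $n$. The proof in \cite{Griesemer2004} produces $c_n$ in terms of (i) the spectral gap-type quantity $\Sigma_n-(E_n+\delta)$, which by \eqref{tizian0} is bounded below by $b+a^2/2$ uniformly, and (ii) the operator norm of the commutator of $H_n$ with the localization function, which is controlled by $\sup_{\V{x}}\|\V{c}_{n,\V{x}}\|_{\mathfrak{k}}$ through \eqref{rb-a}--\eqref{rb-vp2}; the latter is uniformly bounded by Hyp.~\ref{hyp-cont-GS}(a) (which gives $\sup_n\sup_{\V{x}}\|\V{c}_{n,\V{x}}\|_{\mathfrak{k}}<\infty$) and by the uniform Kato bound on $V_n$ from Lem.~\ref{lem-Kato-qfb} together with \eqref{approx2}. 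Tracking these dependences gives \eqref{tizian1} with a single constant $c$.

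Once \eqref{tizian1} is in hand, \eqref{tizian2} and \eqref{tizian5} follow from a standard energy (IMS-type) localization estimate applied to $e^{a|\hat{\V{x}}|}\Phi_n$: one writes the quadratic form $\SPn{e^{a|\hat{\V{x}}|}\Phi_n}{(H_n-E_n+1)e^{a|\hat{\V{x}}|}\Phi_n}$, commutes the weight through, and uses that the commutator terms are controlled by $a^2\|e^{a|\hat{\V{x}}|}\Phi_n\|^2$ plus a small multiple of the kinetic energy; the positivity $H^0\ge -\|\mho\|_\infty$ (uniform in $n$ by Hyp.~\ref{hyp-cont-GS}) and \eqref{qfb-vp} then bound $\int e^{2a|\V{x}|}\|(-i\nabla-\vp(\V{G}_{n,\V{x}}))\Phi_n(\V{x})\|^2\Id\V{x}$ and $\int e^{2a|\V{x}|}\|\Id\Gamma(\omega)^{\eh}\Phi_n(\V{x})\|^2\Id\V{x}$ by a uniform constant; \eqref{tizian5} then comes from \eqref{tizian2} via \eqref{rb-a}--\eqref{rb-ad} to replace $-i\nabla-\vp(\V{G}_{n,\V{x}})$ by $-i\nabla$ at the cost of $\sup_n\sup_{\V{x}}\|\V{G}_{n,\V{x}}\|_{\mathfrak{k}}^2\cdot\int e^{2a|\V{x}|}\|\Id\Gamma(\omega)^\eh\Phi_n(\V{x})\|^2\Id\V{x}$. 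Finally, the pointwise Fock-norm decay bounds \eqref{tizian10} and \eqref{tizian11} are obtained exactly as in Ex.~\ref{ex-QED-exp}: apply the weighted $L^2\to L^\infty$ bound \eqref{norm-T-F-Theta} (with the weight $\Upsilon_t=(1+\Id\Gamma(\omega))^\alpha$ from Line~1, resp.\ $\Upsilon_t=e^{\delta\Id\Gamma(\omega)}$ from Line~3 of Table~1, the hypotheses of which hold by the assumed uniform bounds $\sup_n\sup_{\V{x}}\|\omega^\alpha\V{c}_{n,\V{x}}\|<\infty$ resp.\ $\sup_n\sup_{\V{x}}\|e^{\delta_0\omega}\V{c}_{n,\V{x}}\|<\infty$) to $\Phi_n=e^{-H_n}\Phi_n'$ with $\Phi_n'=e^{E_n}\Phi_n$, using $\|e^{a|\hat{\V{x}}|}\Phi_n'\|\le e^{|E_n|}c$ from \eqref{tizian1} and \eqref{conv-En}; the exponential factor $e^{-a|\V{x}|}$ is produced by factoring $\Phi_n(\V{x}) = (e^{-H_n}e^{-a|\hat{\V{x}}|})(e^{a|\hat{\V{x}}|}\Phi_n')(\V{x})$ and estimating the norm of $e^{a|\V{x}|}\Upsilon_t e^{-H_n}e^{-a|\hat{\V{x}}|}$ from $L^2$ to $L^\infty$ by \eqref{norm-T-F-Theta} — here the constants depending on $\|\V{c}_n\|_{*,\infty}$ are uniform in $n$ by assumption, and one splits $1=1_{|\V{x}-\V{z}|<|\V{x}|/2}+1_{|\V{x}-\V{z}|\ge|\V{x}|/2}$ in the kernel exactly as in the proof of Cor.~\ref{cor-WEK} to extract the decay.

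The main obstacle is the uniformity in $n$ of the constant in \eqref{tizian1}: one must re-examine the localization proof of \cite{Griesemer2004} carefully enough to confirm that every constant entering there depends on the Hamiltonian only through the uniformly controlled quantities $\Sigma_n-E_n-\delta$, $\sup_{\V{x}}\|\V{c}_{n,\V{x}}\|_{\mathfrak{k}}$, and the uniform Kato norm of $V_n$ (the latter entering only through $\sup_n c_\gamma(V_n)$, which is made $<\ve$ for a suitable $\gamma$ by the last assertion of Lem.~\ref{lem-Kato-qfb}). Everything downstream — \eqref{tizian2}, \eqref{tizian5}, \eqref{tizian10}, \eqref{tizian11} — is then routine given the tools already assembled in the paper.
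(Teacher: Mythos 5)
Your overall route is the same as the paper's: get \eqref{tizian0} from \eqref{conv-En}, get \eqref{tizian1} by tracking constants through the localization proof of \cite{Griesemer2004}, get \eqref{tizian2}--\eqref{tizian5} from an IMS-type energy identity with the exponential weight, and get \eqref{tizian10}--\eqref{tizian11} by writing $\Phi_n$ as the semi-group applied to a spectrally truncated vector and invoking the weighted $L^2\to L^\infty$ bound \eqref{norm-T-F-Theta}, with constants uniform in $n$ because $\sup_n\|\V{c}_n\|_{*,\infty}<\infty$ and $E_n+\delta$ is bounded above.

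However, the crux of the proposition is precisely the step you defer, and as you describe it the uniformity argument for \eqref{tizian1} would not go through. The constant extracted from \cite{Griesemer2004} is \emph{not} controlled by a lower bound on $\Sigma_n-(E_n+\delta)$ together with coefficient and Kato norms; it has the form $Ce^{2aR}\,(\Sigma_{n,R}-E_n)(\delta+b)\max\{b^{-1},b^{-3}\}$, valid under the conditions $16/R^2\le b$ and $\Sigma_{n,R}\ge E_n+\delta+3b/4+a^2/2$ for the \emph{localized} threshold $\Sigma_{n,R}$ at a finite radius $R$, and it grows exponentially in $R$. So uniformity in $n$ requires a single radius $R$ that works for all large $n$, plus an $n$-uniform upper bound on $\Sigma_{n,R}-E_n$. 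Knowing only $\Sigma_n\to\Sigma$ and \eqref{tizian0} does not give this, since $\Sigma_{n,R}\nearrow\Sigma_n$ at a rate that could a priori depend on $n$; what is needed (and what the paper uses) is the fixed-$R$ convergence $\Sigma_{n,R}\to\Sigma_R$ recorded in \eqref{conv-En}: first fix $R$ with $\Sigma_R\ge E+\delta+7b/8+a^2/2$ and $R\ge 4/b^{1/2}$, then choose $n_0$ so that $|\Sigma_{n,R}-\Sigma_R|$ and $|E_n-E|$ are at most $b/16$ for $n\ge n_0$. A second, smaller slip: for \eqref{tizian10}--\eqref{tizian11} your factorization $\Phi_n=e^{-H_n}(e^{E_n}\Phi_n)$ is only valid when $\Phi_n$ is an eigenvector; for a general normalized element of $\Ran\,1_{(-\infty,E_n+\delta]}(H_n)$ one must write $\Phi_n=T_1^{V_n,n}\,1_{(-\infty,E_n+\delta]}(H_n)e^{H_n}\Phi_n$ and use the operator-norm reformulation $\|e^{a|\hat{\V{x}}|}1_{(-\infty,E_n+\delta]}(H_n)\|\le c$ of \eqref{tizian1}, which costs only a factor $e^{E_n+\delta}$; with that repair your argument coincides with the paper's.
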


\begin{rem}\label{rem-loc}
Again the pointwise bounds \eqref{tizian10} and \eqref{tizian11} are new
for general elements in spectral subspaces that are not eigenvectors. 
If $V(\V{x})$ goes to infinity, as $|\V{x}|\to\infty$, which clearly entails
$\Sigma=\infty$, then one can actually observe a much faster, possibly super-exponential decay
of ground state eigenfunctions \cite{HiHi2010,Hiroshima2003,LHB2011}.
The bounds of Thm.~\ref{prop-loc} are, however, more than sufficient for the present purpose. 
\end{rem}

\begin{proof}[Proof of Prop.~\ref{prop-loc}]
Combining the estimates in the proof of \cite[Thm.~1]{Griesemer2004} and analyzing the
integral involving the almost analytic function appearing there, it is easy to see that
there exists a universal constant $C>0$ such that
\begin{align}\label{tizian4}
\|e^{a|\hat{\V{x}}|}\Phi_n\|&\le Ce^{2aR}\big(\Sigma_{n,R}-E_n\big)
(\delta+b)\max\{b^{-1},b^{-3}\},\quad n\ge n_0,
\end{align}
provided that we can choose $R\ge1$ and $n_0\in\NN$ so large that $16/R^2\le b$
and $\Sigma_{n,R}\ge E_n+\delta+3b/4+a^2/2$, for all $n\ge n_0$.
That this is actually possible can be seen as follows: First, we fix some $R\ge4/b^\eh$ such that 
$\Sigma_R\ge E+\delta+7b/8+a^2/2$.
After that we avail ourselves of \eqref{conv-En} and pick $n_0$ so large that 
$|\Sigma_{n,R}-\Sigma_R|\le b/16$ and $|E_n-E|\le b/16$, for all $n\ge n_0$.
We conclude that \eqref{tizian4} implies a bound on its left hand side that is uniform in $n\ge n_0$.
This proves \eqref{tizian1}. Next, we set $F(\V{x}):=a|\V{x}|$, $\V{x}\in\RR^\nu$,
and notice that the validity of \eqref{tizian1} is equivalent to the bound 
$\|e^F1_{(-\infty,E_n+\delta]}(H_n)\|\le c$.
Writing $\Phi_n=e^{-H_n}1_{(-\infty,E_n+\delta]}(H_n)e^{H_n}\Phi_n$ 
and denoting the semi-group associated with $(V_n,\V{c}_n)$ by $(T^{V_n,n}_t)_{t\ge0}$,
we thus obtain
\begin{align*}
\|\Phi_n(\V{x})\|\le e^{-F(\V{x})+E_n+\delta}\big\|e^FT^{V_n,n}_1e^{-F}\big\|_{2,\infty}\|e^F\Phi_n\|,
\quad\|e^F\Phi_n\|\le c,
\end{align*}
whence \eqref{tizian10} and \eqref{tizian11} follow from \eqref{norm-T-F-Theta} and the fact
that $\Sigma_R$ is an upper bound on each $E_n+\delta$ with $n\ge n_0$.

Let $\chi\in C^\infty(\RR^\nu)$ be bounded with bounded first order partial derivatives. Then it is 
easy to see that $\chi\fdom(H)\subset\fdom(H)\subset\dom(-i\partial_{x_j}-\vp(G_{n,j}))$,
$j\in\{1,2,3\}$, and we obtain
\begin{align}\label{tizian50}
\big\|\chi(-i\nabla-\vp(\V{G}_n))\Phi_n\big\|&\le
\big\|(-i\nabla-\vp(\V{G}_n))\chi\Phi_n\big\|+\|(\nabla\chi)\Phi_n\|,\quad n\in\NN.
\end{align}
Denoting the quadratic form associated with $H_n$ by $\mathfrak{q}_n$, we further have, 
for some $\beta>0$ and all $n\ge n_0$,
\begin{align}\nonumber
\big\|(&-i\nabla-\vp(\V{G}_n))\chi\Phi_n\big\|^2\le4\mathfrak{q}_n[\chi\Phi_n]+2\beta\|\chi\Phi_n\|^2
\\\nonumber
&=4\Re\,\mathfrak{q}_n[\chi^2\Phi_n,\Phi_n]+4\big\|(\nabla\chi)\Phi_n\big\|^2+2\beta\|\chi\Phi_n\|^2
\\\label{tizian51}
&=4\SPb{\chi\Phi_n}{\chi1_{(-\infty,E_n+\delta]}(H_n)H_n\Phi_n}
+4\big\|(\nabla\chi)\Phi_n\big\|^2+2\beta\|\chi\Phi_n\|^2,
\end{align}
where we used a well-known identity for the form $\mathfrak{q}_n$ in the second step;
see, e.g., \cite[p.~324]{Griesemer2004}. We now infer from \eqref{tizian50} and \eqref{tizian51} that,
if $\chi$ is of the form $\chi=e^F$ with a bounded smooth $F$ satisfying $|\nabla F|\le a$ 
and $F(\V{0})=0$,
\begin{align*}
\frac{1}{2}\big\|e^F(-i\nabla-\vp(\V{G}_n))\Phi_n\big\|^2&\le
(4\Sigma_R+2\beta+5a^2)c^2.
\end{align*}
Here we also used that $\|H_n\Phi_n\|\le E_n+\delta\le\Sigma_R$
and $\|e^F1_{(-\infty,E_n+\delta]}(H_n)\|\le c$, which is equivalent to the
validity of the first bound in \eqref{tizian1}.
Inserting $F_\ve(\V{x}):=a\langle\V{x}\rangle/(1+\ve\langle\V{x}\rangle)-a/(1+\ve)$, 
$\V{x}\in\RR^\nu$, we obtain
\eqref{tizian2} in the limit $\ve\downarrow0$ with the help of the monotone convergence theorem.
Finally, \eqref{tizian5} follows from \eqref{rb-a}, \eqref{rb-ad}, \eqref{tizian2}, and \eqref{tizian10} 
with $\alpha=1$ and a slightly larger $a$.
\end{proof}

 The (essentially well-known) relation \eqref{aki1} asserted in the next proposition is the key identity
in the compactness argument. It allows to exploit the convergence properties of the
coefficient vectors $\V{c}_n$ in order to prove relative compactness of $\{\Psi_n:n\in\NN\}$
by means of the standard characterization of relatively compact subsets of $L^2(\RR^d)$.

In what follows it will sometimes be more convenient to write $(a\Phi)(k)$ instead of using
the more customary notation $a(k)\Phi$ for the pointwise annihilation operator.

\begin{prop}\label{prop-IR}
In the situation described by Hyp.~\ref{hyp-cont-GS}, let $n\in\NN$ and write
\begin{align}\nonumber
R_n(k)&:=(H_n-E_n+\omega(k))^{-1},\quad
B_n(k):=(H_n-E_n)(H_n-E_n+\omega(k))^{-1}.
\end{align}
Let $\Psi_n$ be the ground state eigenvector found in Prop.~\ref{prop-ex-GS}.
Then the following identity holds, for a.e. 
$k=(\V{k},\lambda)\in(\RR^3\setminus\{\V{0}\})\times\{0,1\}$,
\begin{align}\nonumber
(a\Psi_n)(k)&=B_n(k)\V{G}_{n,\V{0}}(k)\cdot\hat{\V{x}}\Psi_n
+R_n(k)\tfrac{i}{2}q_{n,\hat{\V{x}}}(k)\Psi_n
\\\label{aki1}
&\quad+R_n(k)\wt{\V{G}}_{n,\hat{\V{x}}}(k)\cdot(-i\nabla-\vp(\V{G}_{n}))\Psi_n.
\end{align}
\end{prop}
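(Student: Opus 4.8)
The identity \eqref{aki1} is the standard "pull-out" formula for the pointwise annihilation operator applied to a ground state eigenvector, adapted to the present general coupling functions via the decomposition $\V{G}_{n,\V{x}}=\V{G}_{n,\V{0}}+\wt{\V{G}}_{n,\V{x}}$. The plan is to start from the pull-through formula $a(k)(H_n-E_n+\omega(k))=(H_n-E_n)a(k)+[a(k),H_n]$ applied to $\Psi_n$, use $(H_n-E_n)\Psi_n=0$, and then compute the commutator $[a(k),H_n]$ explicitly. Since $H_n$ is built from $\tfrac12(-i\nabla-\vp(\V{G}_{n,\hat{\V{x}}}))^2-\tfrac{i}{2}\vp(q_{n,\hat{\V{x}}})+\Id\Gamma(\omega)+V$, and the only terms not commuting with $a(k)$ are $\Id\Gamma(\omega)$, $\vp(\V{G}_{n,\hat{\V{x}}})$ and $\vp(q_{n,\hat{\V{x}}})$, the canonical commutation relations \eqref{CCR} give $[a(k),\Id\Gamma(\omega)]=\omega(k)a(k)$, $[a(k),\vp(f)]=\overline{(Cf)(k)}=f(k)$ whenever $Cf=f$ (which holds for the components of $\V{G}_n$ and $q_n$ by Hyp.~\ref{hyp-G}, since these are fixed by $C$). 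Carrying this out, the $\Id\Gamma(\omega)$ contribution cancels the $\omega(k)$ already present after pulling through, and the remaining terms reorganize into exactly the three terms on the right-hand side of \eqref{aki1}.

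\textbf{Key steps in order.} First I would record the domain facts: $\Psi_n\in\dom(H_n)\subset\dom(\Id\Gamma(\omega))\cap\dom(\Delta)$, $\Psi_n(\V{x})\in\dom(\Id\Gamma(\omega)^{\eh})$ for a.e.\ $\V{x}$ (so $a(k)\Psi_n$ makes sense for a.e.\ $k$), and that $R_n(k)$, $B_n(k)$ are bounded with $\|R_n(k)\|\le\omega(k)^{-1}$, $\|B_n(k)\|\le1$; the vectors $\V{G}_{n,\V{0}}(k)\cdot\hat{\V{x}}\Psi_n$, $q_{n,\hat{\V{x}}}(k)\Psi_n$ and $\wt{\V{G}}_{n,\hat{\V{x}}}(k)\cdot(-i\nabla-\vp(\V{G}_n))\Psi_n$ lie in $\HR$ for a.e.\ $k$ by Hyp.~\ref{hyp-cont-GS}(a) together with the exponential localization bounds \eqref{tizian1}--\eqref{tizian5} (with $a=0$, $\delta=0$, $n\ge n_0$), so the right-hand side is well-defined. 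Second, I would justify the pull-through identity: for $\psi\in\dom(H_n)$ and a.e.\ $k$, $a(k)(H_n-E_n+\omega(k))\psi=(H_n-E_n)a(k)\psi+(\text{commutator terms})\psi$; the cleanest way is to test against $\ad(g)\Phi$ for $g\in\HP$, $\Phi\in\dom(H_n)\cap\sF_{\mathrm{fin}}$-type core vectors, using the weak form of the commutation relations, then integrate in $k$. Third, compute $[a(k),H_n]\Psi_n$: writing $\tfrac12(-i\nabla-\vp(\V{G}_{n,\hat{\V{x}}}))^2$ and applying the Leibniz rule, the cross term produces $-\V{G}_{n,\hat{\V{x}}}(k)\cdot(-i\nabla-\vp(\V{G}_{n,\hat{\V{x}}}))\Psi_n$, the $\Id\Gamma(\omega)$ term produces $\omega(k)a(k)\Psi_n$ (which I move to the left to cancel the $\omega(k)a(k)\Psi_n$ coming from the left-hand side factor), and $-\tfrac{i}{2}\vp(q_{n,\hat{\V{x}}})$ gives $-\tfrac{i}{2}q_{n,\hat{\V{x}}}(k)\Psi_n$. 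Fourth, split $\V{G}_{n,\hat{\V{x}}}(k)=\V{G}_{n,\V{0}}(k)+\wt{\V{G}}_{n,\hat{\V{x}}}(k)$ in the cross term; the $\V{G}_{n,\V{0}}(k)$ piece, being a constant (in $\V{x}$) vector, commutes past $-i\nabla$ to leave $\V{G}_{n,\V{0}}(k)\cdot(-i\nabla-\vp(\V{G}_{n,\hat{\V{x}}}))\Psi_n$, and then one rewrites $-i\nabla_j\Psi_n$ in terms of $\hat{\V{x}}$ using that $-i\nabla=i[H_n,\hat{\V{x}}]$ on suitable vectors (a commutator identity; here $[\hat{x}_j,H_n]=i\partial_{x_j}$ modulo lower-order terms that cancel), yielding $(H_n-E_n)\V{G}_{n,\V{0}}(k)\cdot\hat{\V{x}}\Psi_n$ after using $(H_n-E_n)\Psi_n=0$. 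Finally, apply $R_n(k)$ throughout; on the $\V{G}_{n,\V{0}}$ term $R_n(k)(H_n-E_n)=B_n(k)$, giving the three asserted terms.

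\textbf{Main obstacle.} The delicate point is the manipulation of the $\V{G}_{n,\V{0}}(k)\cdot(-i\nabla)\Psi_n$ term: rewriting $-i\nabla\Psi_n$ as a commutator $i[H_n,\hat{\V{x}}]\Psi_n$ and then reducing to $(H_n-E_n)\hat{\V{x}}\Psi_n$ requires care because $\hat{\V{x}}\Psi_n$ need not lie in $\dom(H_n)$ a priori and because the commutator $[H_n,\hat{\V{x}}]$ also generates the field term $\vp(\partial_{x_j}\V{G}_{n,\hat{\V{x}}})$-type contributions. The resolution is exactly why the decomposition $\V{G}_{n,\hat{\V{x}}}=\V{G}_{n,\V{0}}+\wt{\V{G}}_{n,\hat{\V{x}}}$ is introduced: only $\wt{\V{G}}$ carries $\V{x}$-dependence and it appears multiplied by $\wt{\V{G}}_{n,\hat{\V{x}}}(k)$ which vanishes at $\V{x}=\V{0}$, making all the resulting expressions $L^2$ in $k$ thanks to $\|\omega^{-1}\wt{\V{G}}_{n,\hat{\V{x}}}\|_{\HP}$ being bounded; meanwhile $\V{G}_{n,\V{0}}(k)$ being $\V{x}$-independent genuinely commutes with $\hat{\V{x}}$ and $\nabla$, so the commutator reduction goes through cleanly on the relevant core. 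I would carry out the $\hat{\V{x}}$-commutator step first on a dense set of nice $\Psi_n$ and then pass to the limit using the weighted bounds \eqref{tizian1}, \eqref{tizian2}, \eqref{tizian5}, which is precisely the role Prop.~\ref{prop-loc} plays here. Everything else is a routine application of \eqref{CCR} and the definitions.
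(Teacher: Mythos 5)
Your overall strategy is exactly the standard one that the paper itself merely cites (it refers to \cite{BFS1999,GLL2001,Hiroshima2003,Spohn2004} and only remarks that the technicalities are eased by an essential self-adjointness statement on a suitable core): commute $a(k)$ through $H_n$, use $(H_n-E_n)\Psi_n=0$, split $\V{G}_{n,\V{x}}=\V{G}_{n,\V{0}}+\wt{\V{G}}_{n,\V{x}}$, rewrite the $\V{G}_{n,\V{0}}$-contribution via $-i\partial_{x_j}-\vp(G_{n,j})=i[H_n,\hat{x}_j]$ and $R_n(k)(H_n-E_n)=B_n(k)$, and control domains and $k$-integrability through Prop.~\ref{prop-loc} and a core. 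The role you assign to the splitting (only $\omega^{-1}\wt{\V{G}}_{n,\V{x}}$ is square integrable, while the $\V{G}_{n,\V{0}}$-piece is saved by $\|B_n(k)\|\le1$ at the price of the factor $\hat{\V{x}}$, which the exponential localization controls) is also the right one.

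However, your key commutator step rests on a misidentification of the Hamiltonian, and as written it would not produce \eqref{aki1}. The operator $-\tfrac{i}{2}\vp(q_{n,\hat{\V{x}}})$ is \emph{not} a summand of $H_n$: it belongs to the fiber operator $\wh{H}(\V{x})$ of Def.~\ref{free-Ham}, which enters the SDE \eqref{SDE-spin} (where $-\tfrac12\Delta$ has been traded for the driving Brownian motion), whereas $H_n$ is defined via Thm.~\ref{thm-FK} as the form sum of $\tfrac12(-i\nabla-\vp(\V{G}_n))^2+\Id\Gamma(\omega)$ and $V_n$ in the scalar case; note also that $-\tfrac{i}{2}\vp(q)$ is anti-symmetric, so it cannot appear in a self-adjoint $H_n$. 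The middle term of \eqref{aki1} therefore does not come from a commutator $[a(k),-\tfrac{i}{2}\vp(q_{n,\hat{\V{x}}})]$, but from reordering $-i\nabla$ past the $\V{x}$-dependent function $\V{G}_{n,\hat{\V{x}}}(k)$ inside the cross term:
\begin{align*}
\big[a(k),\tfrac12(-i\nabla-\vp(\V{G}_{n,\hat{\V{x}}}))^2\big]
&=-\V{G}_{n,\hat{\V{x}}}(k)\cdot(-i\nabla-\vp(\V{G}_{n,\hat{\V{x}}}))
+\tfrac{i}{2}\,q_{n,\hat{\V{x}}}(k),
\end{align*}
with $q_n=\Div_{\V{x}}\V{G}_n$. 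As stated, your bookkeeping is inconsistent: carrying out the Leibniz rule correctly for your (incorrect) Hamiltonian makes the two $q$-contributions cancel, so the $R_n(k)\tfrac{i}{2}q_{n,\hat{\V{x}}}(k)\Psi_n$ term would be lost; taking the cross term in the ordered form you wrote and then adding the spurious $\vp(q)$-commutator lands on the right formula only because two errors compensate. The fix is local, but it must be made. A further small correction: your concern that $[H_n,\hat{\V{x}}]$ generates $\vp(\partial_{x_j}\V{G}_{n,\hat{\V{x}}})$-type contributions is unfounded, since $\hat{x}_j$ commutes with $\vp(\V{G}_{n,\hat{\V{x}}})$ (both act fiberwise in $\V{x}$); one has exactly $[H_n,\hat{x}_j]=-i\,(-i\partial_{x_j}-\vp(G_{n,j,\hat{\V{x}}}))$, and the only genuine issue there is the a priori domain of $\hat{\V{x}}\Psi_n$, which is handled, as you propose, by a regularization/weak formulation on the core together with the bounds of Prop.~\ref{prop-loc}.
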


\begin{proof}
The proof follows a standard procedure 
(see, e.g., \cite{BFS1999,GLL2001,Hiroshima2003,Spohn2004}), 
whence we drop it here.
We just remark that the technical details will be facilitated by exploiting that $H_n$
is essentially self-adjoint on the complex linear span of the functions $f\phi$, where
$f$ is in the domain of the Schr\"{o}dinger operator $-\tfrac{1}{2}\Delta\dot{+}V$ and
$\phi$ is, e.g., an analytic vector for $\Id\Gamma(\omega)$; 
in the situation of Hyp.~\ref{hyp-cont-GS}
the latter result does not yet follow from the existing literature, but it will be shown in the
forthcoming note \cite{Matte-esa}.
\end{proof}

\begin{cor}\label{cor-IRa}
In the situation of Prop.~\ref{prop-IR}, $\sup_{n\in\NN}\|\Id\Gamma(1)^\eh\Psi_n\|<\infty$.
\end{cor}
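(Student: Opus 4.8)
The plan is to reduce the claim to a uniform bound on the ``pointwise annihilation'' integral appearing in the infrared representation \eqref{aki1}, and then to estimate that integral termwise using Prop.~\ref{prop-loc}.

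First recall the elementary identity
\[
\|\Id\Gamma(1)^\eh\psi\|^2=\int_\cM\|(a\psi)(k)\|_{\sF}^2\,\Id\mu(k),\qquad\psi\in\dom(\Id\Gamma(1)^\eh),
\]
which follows at once from \eqref{def-a(k)} and the definitions in Subsect.~\ref{ssec-Fock}, and which conversely guarantees $\psi\in\dom(\Id\Gamma(1)^\eh)$ as soon as its right hand side is finite. Hence it suffices to bound $\int_\cM\|(a\Psi_n)(k)\|^2\Id\mu(k)$ uniformly in $n$ (recall that, by Prop.~\ref{prop-ex-GS}, the normalized ground state eigenvectors $\Psi_n$ exist precisely for $n\ge n_0$, so the supremum is really over $n\ge n_0$).

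Next I would plug in \eqref{aki1}, apply the triangle inequality, and use the pointwise-in-$k$ operator norm bounds $\|R_n(k)\|\le\omega(k)^{-1}$ and $\|B_n(k)\|\le1$ (the relevant spectral functions $t\mapsto(t+\omega(k))^{-1}$ and $t\mapsto t(t+\omega(k))^{-1}$ being bounded on $[0,\infty)$ by $\omega(k)^{-1}$ and $1$). This bounds $\|(a\Psi_n)(k)\|$ by a sum of three terms whose squares, after integration $\Id\mu(k)$ and an application of Tonelli's theorem to interchange the $k$- and $\V{x}$-integrals in the last two, are controlled (up to harmless dimensional constants from the $\CC^3$-valued dot products) by
\[
\|\V{G}_{n,\V{0}}\|_{\HP}^2\,\big\||\hat{\V{x}}|\Psi_n\big\|^2,\qquad
\sup_{\V{x}}\|\omega^{-1}q_{n,\V{x}}\|_{\HP}^2\,\|\Psi_n\|^2,\qquad
\sup_{\V{x}}\|\omega^{-1}\wt{\V{G}}_{n,\V{x}}\|_{\HP}^2\int_{\RR^3}\big\|(-i\nabla-\vp(\V{G}_{n,\V{x}}))\Psi_n(\V{x})\big\|^2\Id\V{x},
\]
respectively. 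The coefficients $\|\V{G}_{n,\V{0}}\|_{\HP}\le\|\V{c}_{n,\V{0}}\|_{\mathfrak{k}}$, $\sup_{n,\V{x}}\|\omega^{-1}q_{n,\V{x}}\|_{\HP}$ and $\sup_{n,\V{x}}\|\omega^{-1}\wt{\V{G}}_{n,\V{x}}\|_{\HP}$ are finite by Hyp.~\ref{hyp-cont-GS}(a).

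Finally I would invoke Prop.~\ref{prop-loc}, applied with $\delta=0$ and any admissible $a,b>0$ with $E+b+a^2/2<\Sigma$, to the normalized eigenvectors $\Psi_n$, $n\ge n_0$: \eqref{tizian1} gives $\||\hat{\V{x}}|\Psi_n\|\le c_a\|e^{a|\hat{\V{x}}|}\Psi_n\|\le c$, and \eqref{tizian2} (bounding the weight $e^{2a|\V{x}|}$ below by $1$) gives $\int_{\RR^3}\|(-i\nabla-\vp(\V{G}_{n,\V{x}}))\Psi_n(\V{x})\|^2\Id\V{x}\le c'$; of course $\|\Psi_n\|=1$. Combining these estimates yields $\sup_{n\ge n_0}\|\Id\Gamma(1)^\eh\Psi_n\|^2<\infty$, which is the assertion. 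I expect the only delicate point to be bookkeeping rather than conceptual: checking that the constants drawn from Hyp.~\ref{hyp-cont-GS}(a) are uniform in $n$ (this follows by combining the stated boundedness of each map, the boundedness of the limiting map, and the uniform-on-compacts convergence), and correctly splitting the $\CC^3$-valued dot products in \eqref{aki1} into scalar pieces; the interchange of integrals is unproblematic since all integrands are non-negative.
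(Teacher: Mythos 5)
Your argument is correct and is exactly the proof the paper has in mind: its one-line proof of Cor.~\ref{cor-IRa} invokes precisely the identity $\|\Id\Gamma(1)^\eh\Phi\|^2=\sum_\lambda\int\|a(\V{k},\lambda)\Phi\|^2\Id\V{k}$ (with its converse), the representation \eqref{aki1}, and Hyp.~\ref{hyp-cont-GS}, and the estimates you supply ($\|B_n(k)\|\le1$, $\|R_n(k)\|\le\omega(k)^{-1}$, Tonelli, and the uniform bounds \eqref{tizian1}--\eqref{tizian2} from Prop.~\ref{prop-loc}) are the standard details left implicit there. The only caveat is that your proof, like the paper's own use of Hyp.~\ref{hyp-cont-GS}(a) (e.g.\ in the proof of Cor.~\ref{cor-IR}), reads the boundedness of $\V{x}\mapsto(\omega^{-1}q_{n,\V{x}},\omega^{-1}\wt{\V{G}}_{n,\V{x}})$ as uniform in $n$ -- your parenthetical derivation of this uniformity from uniform-on-compacts convergence alone is not literally conclusive, but this concerns the interpretation of the hypothesis rather than a defect of your argument.
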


\begin{proof}
As usual, this follows easily from Hyp.~\ref{hyp-cont-GS}, \eqref{aki1}, and the fact that
$\Phi\in\fdom(\Id\Gamma(1))$, if and only if the right hand side of the formula
$\|\Id\Gamma(1)^\eh\Phi\|^2=\sum_{\lambda=0}^1\int_{\RR^3}\|a(\V{k},\lambda)\Phi\|^2\Id\V{k}$
is finite, the latter being valid in the affirmative case.
\end{proof}

\begin{cor}\label{cor-IR}
In the situation of Prop.~\ref{prop-IR},
let $\wt{\chi}\in C^\infty(\RR,\RR)$ be such that $0\le\wt{\chi}\le1$, $\wt{\chi}=0$ on $(-\infty,1]$
and $\wt{\chi}=1$ on $[2,\infty)$. 
Set $\chi_\delta(\V{k}):=\wt{\chi}(|\V{k}|/\delta)$, for all $\V{k}\in\RR^3$ and $\delta\in(0,1]$.
For all $0\le r_0<r_1\le\infty$, $\delta\in(0,1]$, and $\V{h}\in\RR^3$, we further define
\begin{align*}
F(r_0,r_1)&:=\sup_{n\in\NN}\sum_{\lambda=0}^1\int_{\RR^3}1_{\{r_0\le|\V{k}|\le r_1\}}
\|(a\Psi_n)(\V{k},\lambda)\|^2\Id\V{k},
\\
\triangle_\delta(\V{h})&:=\sup_{n\in\NN}\sum_{\lambda=0}^1\int_{\RR^3}
\|(\chi_\delta a\Psi_n)(\V{k},\lambda)-(\chi_\delta a\Psi_n)(\V{k}+\V{h},\lambda)\|^2\Id\V{k}.
\end{align*}
Then $F(r_0,\infty)\to0$, as $r_0\to\infty$, $F(0,r_1)\to0$, as $r_1\downarrow0$,
and $\triangle_\delta(\V{h})\to0$, as $\V{h}\to\V{0}$, for every $\delta\in(0,1]$.
\end{cor}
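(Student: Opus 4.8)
The plan is to read off all three limits from the key identity \eqref{aki1} combined with the uniform localization bounds of Prop.~\ref{prop-loc}, applied to the ground state eigenvectors $\Psi_n$ (which belong to the range of $1_{(-\infty,E_n+\delta]}(H_n)$ for any $\delta>0$ with $E+\delta+b+a^2/2<\Sigma$, and hence inherit \eqref{tizian1}, \eqref{tizian2}, \eqref{tizian5}). First I would establish the following uniform $L^2$-type bounds on the three terms on the right hand side of \eqref{aki1}: for the first term, $\|B_n(k)\|\le1$ and the hypothesis gives $\sup_n\|\omega^{-1}\wt{\V{G}}_{n,\V{x}}\|$ bounded together with $\sup_n\|\V{G}_{n,\V{0}}\|_{\mathfrak{k}}<\infty$ (so in particular $\omega^{-1}\V{G}_{n,\V{0}}\notin\HP$ in general — but $B_n(k)$ contains a factor $\omega(k)$ hidden in $(H_n-E_n)R_n(k)$, more precisely $\|B_n(k)\V{G}_{n,\V{0}}(k)\|\le$ const and $\int_{\{|\V{k}|\le 1\}}\|B_n(k)\V{G}_{n,\V{0}}(k)\cdot\hat{\V{x}}\Psi_n\|^2$ is controlled using $\|\hat{\V{x}}\Psi_n\|\le c$ from \eqref{tizian1} and $\int\|\V{G}^\chi_{\V{0}}(k)\|^2\Id k<\infty$); for the second and third terms one uses $\|R_n(k)\|\le\omega(k)^{-1}$ together with $\sup_n\|\omega^{-1}q_{n,\V{x}}\|,\sup_n\|\omega^{-1}\wt{\V{G}}_{n,\V{x}}\|$ bounded, $\|\hat{\V{x}}\Psi_n\|\le c$, and $\int_{\RR^3}\|(-i\nabla-\vp(\V{G}_n))\Psi_n(\V{x})\|^2\Id\V{x}\le c'$ from \eqref{tizian2}. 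Each term is then a square-integrable $\HR$-valued function of $k$ with $\sup_n$-uniform $L^2$-norm.

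Next I would prove $F(0,r_1)\to0$ as $r_1\downarrow0$. On $\{|\V{k}|\le r_1\}$ the factors $\|R_n(k)\|\le\omega(k)^{-1}$ together with the integrability near $\V{0}$ of $\|q^\chi_{\V{0}}(k)/\omega(k)\|^2$ etc. (which is exactly the content of the pre-factor $\omega^{-1}$ in Hyp.~\ref{hyp-cont-GS}(a)) and dominated convergence give that the contributions of the second and third terms of \eqref{aki1} vanish. For the first term one uses that $\|B_n(k)\psi\|\le\|\psi\|$ and $B_n(k)\to 1$ strongly as $\omega(k)\to0$; more quantitatively, since $\|B_n(k)\hat{\V{x}}\Psi_n\|\le\|\hat{\V{x}}\Psi_n\|\le c$ and $\int_{\{|\V{k}|\le r_1\}}\|\V{G}^\chi_{\V{0}}(k)\|^2\Id k\to0$, this term's contribution to $F(0,r_1)$ also tends to $0$. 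For $F(r_0,\infty)\to0$ as $r_0\to\infty$, on $\{|\V{k}|\ge r_0\}$ we have $\omega(k)\ge r_0$, so $\|R_n(k)\|\le r_0^{-1}$ and $\|B_n(k)\V{G}_{n,\V{0}}(k)\|\le\|(H_n-E_n)R_n(k)\|\,\|\V{G}_{n,\V{0}}(k)\|$; but $(H_n-E_n)R_n(k)=1-\omega(k)R_n(k)$ is bounded by $1$ uniformly, and one exploits the decay of $\|\V{G}^\chi_{\V{0}}(k)\|$ and $\|q^\chi_{n,\V{x}}(k)\|,\|\wt{\V{G}}^\chi_{n,\V{x}}(k)\|$ at large $|\V{k}|$ (a consequence of $\V{c}_{n,\V{x}}\in\mathfrak{k}$ uniformly, so the tails of $\int\omega^2\|\V{G}_n\|^2$ are small uniformly) via dominated convergence.

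Finally, for $\triangle_\delta(\V{h})\to0$ as $\V{h}\to\V{0}$ for fixed $\delta\in(0,1]$, I would argue by equicontinuity: on $\{|\V{k}|\ge\delta\}$ the resolvents $R_n(k)$ and the operators $B_n(k)$ are Lipschitz (indeed real-analytic) in $\omega(k)$ with a modulus uniform in $n$ on $\{|\V{k}|\ge\delta\}$, since $\|\partial_{\omega}R_n(k)\|=\|R_n(k)^2\|\le\delta^{-2}$ and similarly for $B_n$; and the coefficient functions $k\mapsto\V{G}^\chi_{n,\V{0}}(k)$, $k\mapsto q^\chi_{n,\V{x}}(k)$, $k\mapsto\wt{\V{G}}^\chi_{n,\V{x}}(k)$, as elements of $L^2$ in $\V{k}$ (away from $\V{0}$, with the relevant fixed vectors $\hat{\V{x}}\Psi_n$, $\Psi_n$, $(-i\nabla-\vp(\V{G}_n))\Psi_n$ attached), are translation-continuous in the $L^2$-sense with a modulus uniform in $n$ — this last uniformity comes from \eqref{hyp-aw-QED}-type bounds on $\omega\V{G}^\chi$ (so $\V{k}\mapsto\V{G}^\chi_{\V{0}}(k)$ lies in a fixed compact set of $L^2$) together with $\chi_\delta$ cutting off the singular region, combined with \eqref{tizian1}, \eqref{tizian2}, \eqref{tizian5} to control the attached vectors. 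Adding and subtracting, one bounds $\triangle_\delta(\V{h})$ by a sum of terms each of which is a product of a uniformly bounded factor with a factor tending to $0$ as $\V{h}\to\V{0}$ uniformly in $n$. I expect the main obstacle to be the first term $B_n(k)\V{G}_{n,\V{0}}(k)\cdot\hat{\V{x}}\Psi_n$, because $\omega^{-1}\V{G}_{n,\V{0}}\notin\HP$ and one cannot simply write $R_n(k)$ in front; one must keep the operator $B_n(k)=1-\omega(k)R_n(k)$ intact, use that it is a uniformly bounded, uniformly (locally in $\omega$) Lipschitz family, and push the burden of the infrared behaviour onto the square-integrability of $\V{G}^\chi_{\V{0}}$ itself and onto the uniform bound $\|\hat{\V{x}}\Psi_n\|\le c$ from the exponential localization \eqref{tizian1}.
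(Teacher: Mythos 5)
There is a genuine gap, and it sits exactly where you predicted the ``main obstacle'' would \emph{not} be: not in the first term of \eqref{aki1}, but in the uniformity in $n$ (and in $\V{x}$) of the infrared estimates and of the translation continuity. Your ultraviolet argument is fine (a Chebyshev-type bound $\int_{|\V{k}|\ge r_0}|\V{c}_{n,\V{x}}|^2\le r_0^{-2}\sup_{n,\V{x}}\|\omega\V{c}_{n,\V{x}}\|^2$ does the job, even though your phrase ``the tails of $\int\omega^2\|\V{G}_n\|^2$ are small uniformly'' is not literally what boundedness in $\mathfrak{k}$ gives). But for $F(0,r_1)\to0$ and for $\triangle_\delta(\V{h})\to0$ you invoke dominated convergence and \eqref{hyp-aw-QED}-type bounds on $\V{x}$-independent functions $q^\chi_{\V{0}}$, $\V{G}^\chi_{\V{0}}$. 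Two things go wrong. First, the corollary is stated under Hyp.~\ref{hyp-cont-GS}, where $q_{n,\hat{\V{x}}}$ and $\wt{\V{G}}_{n,\hat{\V{x}}}$ are genuinely $\V{x}$-dependent multiplication operators; after Fubini you must control $\sup_{n}\sup_{\V{x}}\int_{|\V{k}|\le r_1}|\omega^{-1}\wt{\V{G}}_{n,\V{x}}(\V{k})|^2\Id\V{k}$ over the region where $\Psi_n$ and $(-i\nabla-\vp(\V{G}_n))\Psi_n$ live, and the hypotheses only give \emph{boundedness} of $\V{x}\mapsto\omega^{-1}\wt{\V{G}}_{n,\V{x}}$ on all of $\RR^3$, not any compactness there. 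Second, even after restricting $\V{x}$ to a compact set, dominated convergence gives the infrared vanishing only for each fixed $(n,\V{x})$; the $\sup_n$ sitting inside $F$ and $\triangle_\delta$ requires equi-integrability near $\V{k}=\V{0}$ and uniform $L^2$-shift continuity, and neither follows from uniform weighted norm bounds (weights in $\omega$ control ultraviolet decay, not infrared equi-smallness, and certainly not equicontinuity under translations — think of functions concentrating near a point with bounded weighted norms).

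The paper closes both gaps by two steps you omit. It first inserts the spatial cutoff $1_{K_R}(\hat{\V{x}})$ into each term of \eqref{aki1}, with an error that is small in $L^2(\Id\V{k};\HR)$ \emph{uniformly in $n$} thanks to $\|1_{K_R^c}(\hat{\V{x}})\langle\hat{\V{x}}\rangle^{-1}\|\le 1/R$ and the uniform bounds on $\langle\hat{\V{x}}\rangle^2\Psi_n$ and $\langle\hat{\V{x}}\rangle(-i\nabla-\vp(\V{G}_n))\Psi_n$ coming from \eqref{tizian1}--\eqref{tizian2} (so the exponential weights of Prop.~\ref{prop-loc} are used with more strength than your $\|\hat{\V{x}}\Psi_n\|\le c$). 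It then uses the convergence assumption of Hyp.~\ref{hyp-cont-GS}(a) — which your proof never uses — to see that $(n,\V{x})\mapsto\mathfrak{g}_{n,\V{x}}:=(\V{G}_{n,\V{0}},\omega^{-1}q_{n,\V{x}},\omega^{-1}\wt{\V{G}}_{n,\V{x}})$ extends continuously to the compact set $\ol{\NN}\times K_R$, hence has relatively compact image in $\HP^7$. Relative compactness is exactly what converts the pointwise statements you want into uniform ones: covering the image by finitely many $\ve$-balls around functions supported away from $\omega=0$ (and away from $\omega=\infty$) yields the uniform smallness of $F(0,r_1)$ (and $F(r_0,\infty)$), and it yields the uniform $L^2$-continuity under shifts that, combined with your (correct) $\delta^{-2}$-Lipschitz bounds on $\chi_\delta R_n$, $\chi_\delta B_n$, gives $\triangle_\delta(\V{h})\to0$. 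Without the localization and without deriving this compactness from the convergence of $\V{c}_n$ to $\V{c}$, your argument only covers situations like the QED example where $|\V{G}^\chi_{\V{x}}(\V{k})|$ is $\V{x}$-independent and the coupling family is a single convergent sequence, not the general hypothesis under which the corollary is stated.
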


\begin{proof}
To start with we observe that $\|B_n(k)\|\le1$, $\|R_n(k)\|\le1/\omega(k)$,
as well as
$\|1_{K_R^c}(\hat{\V{x}})\langle\hat{\V{x}}\rangle^{-1}\|\le1/R$, $R\ge1$, and that,
by \eqref{tizian1} and \eqref{tizian2}, the norms of $\langle\hat{\V{x}}\rangle^2\Psi_n$ and 
$\langle\hat{\V{x}}\rangle(-i\nabla-\vp(\V{G}_{n}))\Psi_n$ are bounded uniformly in $n$.

Let $\ve>0$ and define
\begin{align}\nonumber
\Phi_n^R(k)&:=B_n(k)\V{G}_{n,\V{0}}(k)\cdot1_{K_R}(\hat{\V{x}})\hat{\V{x}}\Psi_n
+R_n(k)\tfrac{i}{2}q_{n,\hat{\V{x}}}(k)1_{K_R}(\hat{\V{x}})\Psi_n
\\\label{aki81}
&\quad+R_n(k)\wt{\V{G}}_{n,\hat{\V{x}}}(k)\cdot 1_{K_R}(\hat{\V{x}})(-i\nabla-\vp(\V{G}_{n}))\Psi_n.
\end{align}
By virtue of Hyp.~\ref{hyp-cont-GS} and the preceding remarks, 
we may fix $R\ge1$ such that
$\|a\Psi_n-\Phi_n^R\|_{L^2(\RR^3\times\{0,1\},\HR)}<\ve$, for all $n\in\NN$.
Moreover, it follows from \eqref{aki81} that
\begin{align*}
\sup_{n\in\NN}&\sum_{\lambda=0}^1\int_{\RR^3}1_{\{r_0\le|\V{k}|\le r_1\}}
\|\Phi_n^R(\V{k},\lambda)\|^2\Id\V{k}
\\
&\le c^2\sup_{n\in\NN}\sup_{\V{x}\in K_R}\|1_{\{r_0\le\omega\le r_1\}}
\mathfrak{g}_{n,\V{x}}\|_{\HP}^2=:\wt{F}_R(r_0,r_1),
\end{align*}
for some $(n,r_0,r_1)$-independent $c>0$, where
$\mathfrak{g}_{n,\V{x}}:=(\V{G}_{n,\V{0}},\omega^{-1}q_{n,\V{x}},
\omega^{-1}\wt{\V{G}}_{n,\V{x}})$. By Hyp.~\ref{hyp-cont-GS} the sequence of functions 
$\{\mathfrak{g}_{n}\}_n$ converges in $C(K_R,\HP^7)$. Therefore, if we turn
$\ol{\NN}:=\NN\cup\{\infty\}$ into a compact topological space by demanding that the map 
$n\mapsto n^{-1}$ is a homeomorphism onto $\{0\}\cup\{n^{-1}:n\in\NN\}\subset\RR$, then the map 
${\NN}\times K_R\ni(n,\V{x})\mapsto\mathfrak{g}_{n,\V{x}}\in\HP^7$ has a jointly continuous 
extension to $\ol{\NN}\times K_R$ and in particular its image is relatively compact in $\HP^7$.
We can thus find finitely many
functions in $\HP^{7}$ with compact supports in $(\RR^3\setminus\{\V{0}\})\times\{0,1\}$,
such that the set $\{\mathfrak{g}_{n,\V{x}}:n\in\NN,\V{x}\in K_R\}$ is contained in the union of 
the open balls of radius $\ve$ about
those compactly supported functions. If $r_0$ is large enough  (resp. $r_1$ small enough),
then the supports of all these finitely many compactly supported functions 
are disjoint from $\{r_0\le\omega\}$ (resp. $\{\omega\le r_1\}$).
Then it follows that $\limsup_{r_0\to\infty}\wt{F}_R(r_0,\infty)^\eh\le c\ve$, 
thus $\limsup_{r_0\to\infty}F(r_0,\infty)^\eh\le(1+c)\ve$,
and similarly $\limsup_{r_1\downarrow0}{F}(0,r_1)^\eh\le (1+c)\ve$.

We may further verify that the operator-valued functions 
$\RR^d\setminus\{\V{0}\}\ni\V{k}\mapsto\chi_\delta(\V{k})B_n(\V{k},\lambda)$ and 
$\RR^d\setminus\{\V{0}\}\ni\V{k}\mapsto\chi_\delta(\V{k})R_n(\V{k},\lambda)$ 
with $\lambda\in\{0,1\}$ are continuously differentiable with 
$\|\nabla_{\V{k}}\{\chi_\delta(\V{k})B_n(\V{k},\lambda)\}\|\le c'/\delta^2$
and $\|\nabla_{\V{k}}\{\chi_\delta(\V{k})R_n(\V{k},\lambda)\}\|\le c'/\delta^2$,
for some universal constant $c'>0$ and all $\delta\in(0,1]$. 
Likewise, $|\nabla_{\V{k}}(\chi_\delta/\omega)|\le c'/\delta^2$, $\delta\in(0,1]$.
Combining these remarks with the observations made in the first paragraph of this proof,
we easily find $c'',c'''>0$ such that, for all $\delta\in(0,1]$ and $\V{h}\in\RR^d$,
\begin{align}\nonumber
\sup_{n\in\NN}&\sum_{\lambda=0,1}\int_{\RR^3}
\|(\chi_\delta \Phi_n^R)(\V{k},\lambda)-(\chi_\delta\Phi_n^R)(\V{k}+\V{h},\lambda)\|^2\Id\V{k}
\\\nonumber
&\le\frac{c''}{\delta^2}\,|\V{h}|^2\sup_{n\in\NN}\sup_{\V{x}\in K_R}\sum_{\lambda=0}^1
\int_{\RR^d}|\omega\mathfrak{g}_{n,\V{x}}|^2(\V{k}+\V{h},\lambda)\Id\V{k}
\\\label{pat2}
&\quad+c'''\sup_{n\in\NN}\sup_{\V{x}\in K_R}\sum_{\lambda=0}^1
\int_{\RR^d}\big|(\omega\mathfrak{g}_{n,\V{x}})(\V{k}+\V{h},\lambda)
-(\omega\mathfrak{g}_{n,\V{x}})(\V{k},\lambda)\big|^2\Id\V{k}.
\end{align}
The supremum in the first line of the right hand side of \eqref{pat2} is obviously 
$\V{h}$-independent; it is finite, because $\{\mathfrak{g}_n\}_n$ converges in 
$C(K_R,\HP^{7})$. The supremum in the
second line of \eqref{pat2} goes to $0$ as $\V{h}\to\V{0}$ since, by the above arguments, the set 
$\{\omega\mathfrak{g}_{n,\V{x}}:n\in\NN,\V{x}\in K_R\}$ is relatively compact in $\HP^7$ and 
in particular the shift operation applied to its elements is norm-continuous uniformly in  
$(n,\V{x})\in\NN\times K_R$. Altogether this implies 
$\limsup_{\V{h}\to\V{0}}\triangle_\delta(\V{h})^\eh\le2\ve$ and we conclude recalling that $\ve>0$
was arbitrary.
\end{proof}

Employing the canonical isomorphism $\HR=\bigoplus_{m=0}^\infty L^2(\RR^3,\sF^{(m)})$, 
we represent the ground state eigenvectors found in Prop.~\ref{prop-ex-GS}
as sequences $\Psi_n=(\Psi_n^{(m)})_{m\in\NN_0}$
with $\Psi_n^{(m)}\in\sF^{(m)}$ in the next lemma.

\begin{lem}\label{lem-cpt1}
Let $m\in\NN$, $\ul{\lambda}=(\lambda_1,\ldots,\lambda_m)\in\{0,1\}^m$, and $\delta\in(0,1]$. Put
\begin{align*}
\Psi_{\delta,n}^{(m,\ul{\lambda})}(\V{x},\V{k}_1,\ldots,\V{k}_m)
:=\Big(\prod_{j=1}^m\chi_\delta(\V{k}_j)\Big)
\Psi_{n}^{(m)}(\V{x},\V{k}_1,\lambda_1,\ldots,\V{k}_m,\lambda_m),
\end{align*}
where $\chi_\delta$ is defined in the statement of Cor.~\ref{cor-IR}. Then the set 
$$
\cK_\delta^{(m,\ul{\lambda})}:=\{\Psi_{\delta,n}^{(m,\ul{\lambda})}:n\in\NN\}
$$ 
is relatively compact in $L^2(\RR^{3(m+1)})$.
\end{lem}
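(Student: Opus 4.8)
The plan is to apply the classical Riesz--Fréchet--Kolmogorov criterion for relative compactness in $L^2(\RR^{3(m+1)})$: a bounded set $\cK\subset L^2(\RR^d)$ is relatively compact if and only if (i) it is uniformly $L^2$-tight at spatial infinity, (ii) the translates of its elements are $L^2$-equicontinuous, i.e.\ $\sup_{f\in\cK}\|f(\cdot+\V{z})-f(\cdot)\|_2\to0$ as $\V{z}\to\V{0}$. We have $d=3(m+1)$ variables, namely the electron variable $\V{x}\in\RR^3$ and the $m$ photon momenta $\V{k}_1,\dots,\V{k}_m\in\RR^3$, and we must verify (i) and (ii) with respect to each block of variables separately (tightness and equicontinuity in each group of three coordinates, which together give the full statement). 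Boundedness of $\cK_\delta^{(m,\ul\lambda)}$ in $L^2$ is immediate since $|\chi_\delta|\le1$ and $\|\Psi_n\|=1$.

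First I would handle the electron variable $\V{x}$. Tightness in $\V{x}$ follows from the uniform exponential localization bound \eqref{tizian1} of Prop.~\ref{prop-loc}: $\sup_n\|e^{a|\hat{\V{x}}|}\Psi_n\|\le c$ forces $\int_{|\V{x}|\ge R}\|\Psi_n(\V{x})\|^2\Id\V{x}\to0$ uniformly in $n$ as $R\to\infty$, and inserting the cut-offs $\chi_\delta$ only decreases the integrand. Equicontinuity of the $\V{x}$-translates is where the identity \eqref{aki1} of Prop.~\ref{prop-IR} enters: iterating it (one application per photon, using the Fock-space structure $L^2(\RR^3,\sF^{(m)})$ and the pointwise annihilation operators $a(k_j)$) expresses $\Psi_n^{(m)}$, up to normalization constants from the symmetrization, in terms of $m$-fold products of resolvents $R_n(k_j)$ or $B_n(k_j)$ times operators built from $\V{G}_{n,\V{0}}$, $\omega^{-1}q_{n,\hat{\V{x}}}$, $\omega^{-1}\wt{\V{G}}_{n,\hat{\V{x}}}$ acting on $\hat{\V{x}}\Psi_n$, $\Psi_n$, or $(-i\nabla-\vp(\V{G}_n))\Psi_n$. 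The key point is that on the support of $\prod_j\chi_\delta(\V{k}_j)$ the operators $\chi_\delta R_n(k_j)$, $\chi_\delta B_n(k_j)$, $\chi_\delta/\omega$, and their $\V{k}_j$-gradients are bounded by constants depending only on $\delta$ (as already noted in the proof of Cor.~\ref{cor-IR}), and the $\V{x}$-dependence of the remaining factors sits entirely in $\Psi_n$, $\nabla_{\V{x}}\Psi_n$ (controlled uniformly by \eqref{tizian5}), and in the coefficient functions $\V{x}\mapsto(\omega^{-1}q_{n,\V{x}},\omega^{-1}\wt{\V{G}}_{n,\V{x}})$, which by Hyp.~\ref{hyp-cont-GS} are uniformly bounded and—after passing, as in the proof of Cor.~\ref{cor-IR}, to the compact parameter space $\ol{\NN}\times K_R$—give a relatively compact, hence uniformly norm-continuous, family in $\HP$. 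Combining these, the $\V{x}$-translate of $\Psi_{\delta,n}^{(m,\ul\lambda)}$ differs from $\Psi_{\delta,n}^{(m,\ul\lambda)}$ by something whose $L^2$-norm is bounded uniformly in $n$ by a modulus of continuity tending to $0$ as the shift tends to $0$; one also uses \eqref{tizian1}/\eqref{tizian2} to absorb the unbounded factor $\hat{\V{x}}$.

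Next I would handle each photon variable $\V{k}_j$. Tightness at $|\V{k}_j|\to\infty$ is exactly the statement $F(r_0,\infty)\to0$ of Cor.~\ref{cor-IR} (with the harmless extra cut-offs $\chi_\delta$ in the other variables, which only shrink the integral), after integrating out $\V{x}$ and the remaining momenta and using Fubini together with the normalization $\|\Psi_n\|=1$ to bound the tail uniformly. The near-origin behaviour is automatic: on $\cK_\delta^{(m,\ul\lambda)}$ the factor $\chi_\delta(\V{k}_j)$ vanishes for $|\V{k}_j|\le\delta$, so there is no mass near $\V{k}_j=\V{0}$ at all—this is precisely why the cut-off $\chi_\delta$ was inserted and why the lemma is stated for the truncated vectors rather than for $\Psi_n$ itself. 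Equicontinuity of the $\V{k}_j$-translates is the content of $\triangle_\delta(\V{h})\to0$ in Cor.~\ref{cor-IR}, again carried over to the $m$-photon setting by the iterated-\eqref{aki1} representation: the shift in $\V{k}_j$ is absorbed by the $\V{k}_j$-differentiability of $\chi_\delta R_n$, $\chi_\delta B_n$, $\chi_\delta/\omega$ (with $\delta$-dependent bounds on the gradients) and by the uniform norm-continuity of the shift acting on the relatively compact family $\{\omega\,\mathfrak{g}_{n,\V{x}}:n\in\NN,\V{x}\in K_R\}\subset\HP^7$, exactly as in the second half of the proof of Cor.~\ref{cor-IR}.

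\textbf{Main obstacle.} The routine parts are the two compactness criteria themselves; the delicate point is bookkeeping in the iterated application of \eqref{aki1} for $m$ photons: one must check that applying the single-photon resolvent identity successively in $\V{k}_1,\dots,\V{k}_m$ produces, on the joint cut-off region $\prod_j\{|\V{k}_j|>\delta\}$, a sum of terms each of which is a composition of operators bounded uniformly in $n$ and controllably (i.e.\ with $\delta$-dependent but $n$-independent constants) differentiable in each $\V{k}_j$, with the only surviving $n$-dependence living in the converging data of Hyp.~\ref{hyp-cont-GS} and in $\Psi_n,\nabla_{\V{x}}\Psi_n,(-i\nabla-\vp(\V{G}_n))\Psi_n$, for which the uniform bounds \eqref{tizian1}, \eqref{tizian2}, \eqref{tizian5} and Cor.~\ref{cor-IRa} are available. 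Once this algebraic/analytic structure is in place, both (i) and (ii) reduce, block by block, to Cor.~\ref{cor-IR} and Prop.~\ref{prop-loc}, and the Riesz--Fréchet--Kolmogorov theorem yields the relative compactness of $\cK_\delta^{(m,\ul\lambda)}$ in $L^2(\RR^{3(m+1)})$.
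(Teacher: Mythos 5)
Your skeleton is the paper's: the Kolmogorov--Riesz criterion in $L^2(\RR^{3(m+1)})$, boundedness from $|\chi_\delta|\le1$ and $\|\Psi_n\|=1$, and uniform tightness at infinity obtained by splitting $\{|(\V{x},\V{k}_{[m]})|\ge R\}$ into the events $\{|\V{x}|\ge R/(m+1)\}$ and $\{|\V{k}_j|\ge R/(m+1)\}$, controlling the first by the uniform exponential localization \eqref{tizian1} and the second, via the permutation symmetry of $\Psi_n^{(m)}$ and the pointwise annihilation operator, by $F(R/(m+1),\infty)\to0$ from Cor.~\ref{cor-IR}. The gap is in the translation-equicontinuity step. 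There you invoke an $m$-fold iterated application of \eqref{aki1} to represent $\Psi_n^{(m)}$ through products of resolvents, and you yourself flag the bookkeeping of that iteration as the unresolved main obstacle. This is a genuine problem, not a routine verification: to iterate \eqref{aki1} one must pull a second annihilation operator through $R_n(k_1)$, $B_n(k_1)$, through $\hat{\V{x}}$ and through $(-i\nabla-\vp(\V{G}_n))$, which produces second-order pull-through commutator terms sandwiched between resolvents whose uniform-in-$n$ (and uniformly $\V{k}$-differentiable) control is not supplied by Prop.~\ref{prop-loc}, Prop.~\ref{prop-IR} or Cor.~\ref{cor-IR}. As written, the equicontinuity of the photon translates, and your version of the $\V{x}$-translates, is therefore not established.

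The missing observation, which is exactly how the paper closes the argument and which makes any iteration unnecessary, is that a telescoping sum lets you shift only one variable block at a time, and by permutation symmetry it suffices to shift the single variable $\V{k}_1$. For such a shift by $\V{h}$ the definition of the pointwise annihilation operator gives directly
\begin{align*}
\big\|\Psi_{\delta,n}^{(m)}-\Psi_{\delta,n}^{(m)}(\cdot,\cdot+\V{h},\cdot,\ldots,\cdot)\big\|^2
&=\frac{1}{m}\sum_{\lambda=0}^1\int_{\RR^3}\Big\|\chi_\delta^{\otimes_{m-1}}
\Big((\chi_\delta a\Psi_n)^{(m-1)}(\V{k},\lambda)-(\chi_\delta a\Psi_n)^{(m-1)}(\V{k}+\V{h},\lambda)\Big)\Big\|^2\Id\V{k}
\\
&\le\triangle_\delta(\V{h}),
\end{align*}
so a \emph{single} use of \eqref{aki1}, already encapsulated in the statement $\triangle_\delta(\V{h})\to0$ of Cor.~\ref{cor-IR}, controls all photon translates uniformly in $n$; no resolvent representation of the $m$-boson component is needed, and the cut-offs $\chi_\delta$ in the untouched variables are simply bounded by one. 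Likewise the $\V{x}$-translates are handled in one line by the uniform $H^1$-bound \eqref{tizian5}: passing to the $\sF$-valued Fourier transform, $\|\Psi_n-\Psi_n(\cdot-\V{y})\|\le|\V{y}|\,\|\hat{\vxi}\hat{\Psi}_n\|\le c'''|\V{y}|$ uniformly in $n$, with no appeal to continuity of the coefficient vectors or to \eqref{aki1} at all. With these two replacements your argument closes and coincides with the paper's proof.
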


\begin{proof}
For a start, it is clear that $\cK_\delta^{(m,\ul{\lambda})}$ is contained in the closed unit ball 
about the origin in $L^2(\RR^{3(m+1)})$.
Furthermore, writing $\V{k}_{[m]}:=(\V{k}_1,\ldots,\V{k}_m)$, etc., 
we observe the following pointwise bounds between functions 
defined on $\RR^{3(m+1)}$,
\begin{align*}
1_{\{|(\V{x},\V{k}_{[m]})|\ge R\}}
&\le1_{\{|\V{x}|\ge R/(m+1)\}}+\sum_{j=1}^m1_{\{|\V{k}_j|\ge R/(m+1)\}},
\quad R>0.
\end{align*}
Then the permutation symmetry of $\Psi_{n}^{(m)}$ in its last $n$ variable pairs
$(\V{k}_j,\lambda_j)$ implies, for all $\delta\in(0,1]$ and with $a$ and $c$ as in \eqref{tizian1},
\begin{align*}
\int_{\RR^{3(m+1)}}&1_{\{|(\V{x},\V{k}_{[m]})|\ge R\}}|\Psi_{\delta,n}^{(m,\ul{\lambda})}
(\V{x},\V{k}_{[m]})|^2\Id(\V{x},\V{k}_{[m]})
\\
&\le\int_{\RR^{3(m+1)}}1_{\{|\V{x}|\ge R/(m+1)\}}|\Psi_{n}^{(m,\ul{\lambda})}(\V{x},\V{k}_{[m]})|^2
\Id(\V{x},\V{k}_{[m]})
\\
&\quad
+m\sum_{\ul{\lambda}'\in\{0,1\}^m}\int_{\RR^{3(m+1)}}1_{\{|\V{k}_1|\ge R/(m+1)\}}
|\Psi_{n}^{(m,\ul{\lambda}')}(\V{x},\V{k}_{[m]})|^2\Id(\V{x},\V{k}_{[m]})
\\
&\le\|1_{\{|\hat{\V{x}}|\ge R/(m+1)\}}\Psi_n\|^2+\sum_{\lambda=0}^1
\int_{\RR^3}1_{\{|\V{k}_1|\ge R/(m+1)\}}\|a(\V{k}_1,\lambda)\Psi_n\|^2\Id\V{k}_1
\\
&\le c^2e^{-2aR/(m+1)}+F(R/(m+1),\infty)\xrightarrow{\;\;R\to\infty\;\;}0.
\end{align*}
Since the expression tending to zero in the last line is $n$-independent, we see that
$\cK_\delta^{(m,\ul{\lambda})}$ is uniformly integrable. 

Next, let $S_{(\V{y},\V{h}_{[m]})}$ be the unitary operator that shifts the variables
$(\V{x},\V{k}_{[m]})$ by the vector $(\V{y},\V{h}_{[m]})\in\RR^{3(m+1)}$ and let
$S^{(1)}_{\V{h}_j}$ be the unitary operator shifting only the variable $\V{k}_1$ by $\V{h}_j$.
By a telescopic sum argument and the permutation symmetry of $\Psi_{\delta,n}^{(m)}$ 
in its last $m$ variables, we then obtain
\begin{align*}
\|\Psi_{\delta,n}^{(m,\ul{\lambda})}-S_{(\V{y},\V{h}_{[m]})}\Psi_{\delta,n}^{(m,\ul{\lambda})}\|
&\le\|\Psi_{n}-S_{(\V{y},\V{0})}\Psi_{n}\|
+\sum_{j=1}^m\|\Psi_{\delta,n}^{(m)}-S_{\V{h}_{j}}^{(1)}\Psi_{\delta,n}^{(m)}\|.
\end{align*}
By the definition of the pointwise annihilation operator,
\begin{align*}
&\|\Psi_{\delta,n}^{(m)}-S_{\V{h}}^{(1)}\Psi_{\delta,n}^{(m)}\|^2
\\
&=\frac{1}{m}\sum_{\lambda=0}^1\int_{\RR^3}
\Big\|\chi_\delta^{\otimes_{m-1}}\Big((\chi_\delta a\Psi_n)^{(m-1)}(\V{k},\lambda)
-(\chi_\delta a\Psi_n)^{(m-1)}(\V{k}+\V{h},\lambda)\Big)\Big\|^2\Id\V{k},
\end{align*}
where the norm under the $\Id\V{k}$-integral is the one on 
$L^2(\RR^{3}\times(\RR^3\times\{0,1\})^{3(m-1)})$
and $\chi_\delta^{\otimes_{m-1}}$ is the multiplication operator
associated with the function 
$(\V{k}_1,\ldots,\V{k}_{m-1})\mapsto\chi_\delta(\V{k}_1)\cdots\chi_\delta(\V{k}_{m-1})$.
Therefore, 
$\sup_{n}\|\Phi_{\delta,n}^{(m)}-S_{\V{h}}^{(1)}\Phi_{\delta,n}^{(m)}\|^2
\le\triangle_\delta(\V{h})$. We finally observe that
$\|\Psi_{n}-S_{(\V{y},\V{0})}\Psi_{n}\|\le|\V{y}|\|\hat{\vxi}\hat{\Psi}_n\|\le c'''|\V{y}|$,
with the $n$-independent constant $c'''$ appearing in \eqref{tizian5}.
(Here $\hat{\Psi}_n$ is the $\sF$-valued Fourier transform of $\Psi_n$.)
Putting these remarks together and applying Cor.~\ref{cor-IR}, we deduce that
$$
\sup_{n\in\NN}\|\Psi_{\delta,n}^{(m,\ul{\lambda})}
-S_{(\V{y},\V{h}_{[m]})}\Phi_{\delta,n}^{(m,\ul{\lambda})}\|
\xrightarrow{\;\;\;|(\V{y},\V{h}_{[m]})|\to0\;\;\;}0.
$$
By the well-known characterization of relatively compact sets in $L^2(\RR^{3(m+1)})$, 
this proves the assertion.
\end{proof}

Finally, we can keep a promise that will also conclude the proof of Thm.~\ref{thm-cont-GS}:

\begin{proof}[Proof of Prop.~\ref{prop-GS}]
Let $1\le n_1<n_2<\ldots\,$ be integers. Then $\{\Psi_{n_j}\}_{j\in\NN}$ contains a weakly
converging subsequence, say $\{\Psi_{\ell_j}\}_{j\in\NN}$, whose weak limit we denote by 
$\Psi_\infty$. We shall show that $\|\Psi_\infty\|=1$, which will imply that $\{\Psi_{\ell_j}\}_{j\in\NN}$
is actually norm convergent.

To this end, let $m_0\in\NN$ and $\delta\in(0,1]$. By virtue of Lem.~\ref{lem-cpt1} we may
successively, in a finite number of steps,
select subsequences to find $\kappa_\ell\in\NN$, $\ell\in\NN$,
with $\kappa_\ell\to\infty$, $\ell\to\infty$, such that every sequence
$\{\Psi^{(m,\ul{\lambda})}_{\delta,\kappa_\ell}\}_{\ell\in\NN}$ with $m\in\{0,\ldots,m_0\}$ and
$\ul{\lambda}\in\{0,1\}^m$ converges strongly to its weak limit 
$(\Gamma(\chi_\delta)\Psi_\infty)^{(m,\ul{\lambda})}$. 
Let $p_{m_0}$ be the orthogonal projection onto the first $m_0+1$ direct summands is
$\HR=\bigoplus_{m=0}^\infty L^2(\RR^3,\sF^{(m)})$. Then
\begin{align}\nonumber
\|\Psi_\infty\|&\ge
\big\|p_{m_0}\Gamma(\chi_\delta)\Psi_\infty\big\|
=\lim_{\ell\to\infty}\|p_{m_0}\Gamma(\chi_\delta)\Psi_{\kappa_\ell}\|
\\\nonumber
&\ge\lim_{\ell\to\infty}\|p_{m_0}\Psi_{\kappa_\ell}\|
-\sup_{n}\SPb{\Psi_{n}}{(1-\Gamma(\chi_\delta^2))\Psi_{n}}^\eh
\\\label{pat1}
&\ge1-\sup_{n\in\NN}\|(\id-p_{m_0})\Psi_{n}\|
-\sup_{n}\big\|\Id\Gamma(\ol{\chi}_\delta^2)^\eh\Psi_{n}\big\|.
\end{align}
Here Cor.~\ref{cor-IRa} implies
$\sup_{n}\|(\id-p_{m_0})\Psi_{n}\|
\le m_0^{\mh}\sup_{n}\|\Id\Gamma(1)^\eh\Psi_n\|\le cm_0^{\mh}$, while Cor.~\ref{cor-IR} entails
\begin{align*}
\big\|\Id\Gamma(\ol{\chi}_\delta^2)^\eh\Psi_{n}\big\|^2
&=\sum_{\lambda=0}^1\int_{\RR^3}\ol{\chi}_\delta^2(\V{k})
\big\|a(\V{k},\lambda)\Psi_n\|^2\Id\V{k}
\le F(0,2\delta)\xrightarrow{\;\;\delta\downarrow0\;\;}0,
\end{align*}
where $c$ and $F(0,2\delta)$ are $n$-independent.
Since $m_0\in\NN$ was arbitrary large and $\delta\in(0,1]$ arbitrary small in \eqref{pat1}, 
we conclude that $\|\Psi_\infty\|=1$.
\end{proof}

%%%%%%%%%%%%%%%%%%%%%%%%%%%%%%%%%%%%%%%%%%%%%%%%
%%%%%%%%%%%%%%%%%%%%%%%%%%%%%%%%%%%%%%%%%%%%%%%%
%%%%%%%%%%%%%%%%%%%%%%%%%%%%%%%%%%%%%%%%%%%%%%%%

\appendix

%%%%%%%%%%%%%%%%%%%%%%%%%%%%%%%%%%%%%%%%%%%%%%%%
%%%%%%%%%%%%%%%%%%%%%%%%%%%%%%%%%%%%%%%%%%%%%%%%
%%%%%%%%%%%%%%%%%%%%%%%%%%%%%%%%%%%%%%%%%%%%%%%%

\section{Multiple commutator estimates}\label{app-comm}

\noindent
In this appendix we derive some norm bounds on commutators
between creation and annihilation operators and functions of
second quantized multiplication operators in the boson Hilbert
space. They are applied in Sect.~\ref{sec-weights}
to estimate the norms of the operators in \eqref{eva1}--\eqref{eva4} and in particular of
\begin{align}\nonumber
2T_1(s)
&=\vt^\mh\Theta_s^{-1}(\Ad_{\vp(\V{G}_{\V{B}^{\V{q}}_s})}^2\Theta_s^2)\Theta_s^{-1}\vt^\mh
\\\nonumber
&=2\{\vt^\mh\Theta_s^{-1}\Ad_{\vp(\V{G}_{\V{B}^{\V{q}}_s})}\Theta_s\}
\{(\Ad_{\vp(\V{G}_{\V{B}^{\V{q}}_s})}\Theta_s)\Theta_s^{-1}\vt^\mh\}
\\\label{for-T1c}
&\quad
+\vt^\mh\Theta_s^{-1}\Ad_{\vp(\V{G}_{\V{B}^{\V{q}}_s})}^2\Theta_s\vt^\mh
+\vt^\mh(\Ad_{\vp(\V{G}_{\V{B}^{\V{q}}_s})}^2\Theta_s)\Theta_s^{-1}\vt^\mh,
\end{align}
where $\Ad_ST:=[S,T]$ and
where we used the product rule $\Ad_S(TT')=T\Ad_ST'+(\Ad_ST)T'$.
The lower the power of $\Theta_s$, the weaker the conditions imposed on
the coefficient vector $\V{c}$ in our bounds below, whence we wrote commutators with
$\Theta_s^2$ as combinations of commutators involving only $\Theta_s$. Likewise,
\begin{align}\nonumber
T_2(s)&=-i(\Ad_{\vp(q_{\V{B}^{\V{q}}_s})}\Theta_s)\Theta_s^{-1}\vt^\mh
\\\label{for-T2}
&\quad
-(\Ad_{\vsigma\cdot\vp(\V{F}_{\V{B}^{\V{q}}_s})}\Theta_s)\Theta_s^{-1}\vt^\mh
+\Theta_s^{-1}(\Ad_{\vsigma\cdot\vp(\V{F}_{\V{B}^{\V{q}}_s})}\Theta_s)\vt^\mh,
\\\label{for-vecT}
\V{T}(s)&=-2(\Ad_{\vp(\V{G}_{\V{B}^{\V{q}}_s})}\Theta_s)\Theta_s^{-1}\vt^\mh.
\end{align}
Let us also note for later reference that, for every $g\in\HP$,
\begin{align}\label{comm-inv1}
(\Ad_{\vp(g)}\Theta_s^{\mp1})\Theta_s^{\pm1}&=-\Theta_s^{\mp1}\Ad_{\vp(g)}\Theta_s^{\pm1},
\\\label{comm-inv2}
(\Ad_{\vp(g)}^2\Theta_s^{\mp1})\Theta_s^{\pm1}&=-\Theta_s^{\mp1}\Ad_{\vp(g)}^2\Theta_s^{\pm1}
+2(\Theta_s^{-1}\Ad_{\vp(g)}\Theta_s)^2.
\end{align}
In view of \eqref{def-vp-vo} all
simple and double commutators appearing above can be written as linear
combinations of terms of the form
\begin{equation*}%\label{helena1}
(1+\Id\Gamma(\omega))^{-\nf{n}{2}}
\Theta_s^{-\beta+\kappa}
\Big\{\Big(\prod_{j=1}^N\Ad_{\ad(f_j)}\Big)\Big(\prod_{\ell=1}^M\Ad_{a(g_\ell)}\Big)\,\Theta_s\Big\}
\Theta_s^{-\gamma-\kappa}(1+\Id\Gamma(\omega))^{-\nf{m}{2}}.
\end{equation*}
with $\kappa=0$, $\beta+\gamma=1$, $M+N\le2$, and $m+n\le M+N$.
Here we used that, on account of \eqref{CCR} and the Jacobi identity,
the order of the $M+N$ commutations with the creation or annihilation
operators is immaterial. In all cases of our present interest $\Theta_s$ is some function
of a second quantized multiplication operator. Since we need bounds on 
simple and double commutators involving all combinations of
the creation and annihilation operators and different functions of various second quantized
operators, a systematic treatment seems to be in order.
We shall consider the general case of multiple commutators
with arbitrary $M$ and $N$ right away. The resulting bounds might also be useful elsewhere.

Let $\sL$ be some finite index set, $M:=\#\sL$, and let $p_{\sL}:=(p_\ell)_{\ell\in\sL}\in\cM^\sL$.
For $\psi\in\dom(\Id\Gamma(1)^{M/2})$, or rather a representative of it, and for $n\in\NN_0$, 
we write
$$
(a(p_{\sL})\,\psi)^{(n)}(k_{[n]})
:=(n+M)^\eh\dots(n+1)^\eh\,\phi^{(n+M)}(k_{[n]},p_{\sL}),
\quad k_{[n]}\in\cM^{n}.
$$
For almost every $p_{\sL}$, this defines a new element $a(p_{\sL})\,\psi$ of $\sF$.
Notice that the order of the variables $p_\ell$ is immaterial
because of the permutation symmetry of $\phi^{(n+M)}$ and that
$a(p_{\sL})=\prod_{\ell\in\sL}a(p_\ell)$; compare \eqref{def-a(k)}.
The identity
\begin{align}\label{a(P)ad(f)}
a(p_{\sL})\,\ad(f)\,\psi=\ad(f)\,a(p_{\sL})\,\psi+\sum_{\ell\in\sL}f(p_\ell)\,a(p_{\sL\setminus\{\ell\}})\,\psi
\end{align}
holds almost everywhere, if $\psi\in\dom(\Id\Gamma(1)^{(M+1)/2})$
and $f\in\HP$.
If $\V{v}:\cM\to\RR^L$ is a vector of multiplication
operators and if $F:\RR^L\to\RR$ is measurable, then both sides of
the following {\em pull-through formula},
\begin{align}\label{pull-through-mult}
a(p_{\sL})\,F(\Id\Gamma(\V{v}))\,\psi
&=F\Big(\Id\Gamma(\V{v})+\sum_{\ell\in\sL}\V{v}(p_\ell)\Big)\,a(p_{\sL})\,\psi,
\end{align}
define the same elements of $\sF$, for almost every $p_{\sL}$, provided that
$\psi$ belongs to the domain of $\Id\Gamma(1)^{M/2}F(\Id\Gamma(\V{v}))$.
This is a tautological consequence of the definitions.
Together with \eqref{def-a(f)} an $M$-fold repeated application of
the pull-through formula for each $a(p_\ell)$ gives a handy formula
for multiple commutators with annihilation operators of boson
wave functions $g_\ell\in\HP$, $\ell\in\sL$,
\begin{align}\nonumber
\SPB{\phi}{&\Big(\prod_{\ell\in\sL}\Ad_{a(g_\ell)}\Big)F(\Id\Gamma(\V{v}))\,\psi}
\\\label{hansi}
&=\int\Big(\prod_{\ell\in\sL}\ol{g(p_\ell)}\Big)
\SPb{\phi}{\triangle_{p_{\sL}}F(\Id\Gamma(\V{v}))\,a(p_{\sL})\,\psi}\,\Id\mu^M(p_{\sL}),
\end{align}
for $\psi$ as above and all $\phi\in\sF$, where, for any $\wt{F}:\RR^L\to\RR$, 
\begin{align*}
\triangle_{p_\ell}\wt{F}(\Id\Gamma(\V{v})):=
\wt{F}(\Id\Gamma(\V{v})+\V{v}(p_\ell))-\wt{F}(\Id\Gamma(\V{v})),
\qquad\triangle_{p_{\sL}}:=\prod_{\ell\in\sL}\triangle_{p_\ell}.
\end{align*}
Let $\sJ$ be another finite index set disjoint from $\sL$, $N:=\#\sJ$,
and pick additional wave functions, $g_j\in\HP$, $j\in\sJ$.
Applying \eqref{hansi} once with $\phi$ replaced by
$a(g_j)\,\phi$ 
and another time with $\psi$ substituted by $\ad(g_j)\,\psi$
using \eqref{a(P)ad(f)}, and substracting the results we find, 
for all $\phi,\psi\in\dom\big(\Id\Gamma(1)^{\frac{M+1}{2}}F(\Id\Gamma(\V{v}))\big)$,
\begin{align*}
\SPB{&\phi}{\Ad_{\ad(g_j)}\Big(\prod_{\ell\in\sL}\Ad_{a(g_\ell)}\Big)\,F(\Id\Gamma(\V{v}))\,\psi}
\\
&=
-\int\Big(\prod_{\ell\in\sL}\ol{g(p_\ell)}\Big)
\,\SPb{\Ad_{a(g_j)}\triangle_{p_{\sL}}F(\Id\Gamma(\V{v}))\,\phi}{a(p_{\sL})\,\psi}\Id\mu^M(p_{\sL})
\\
&\quad-\sum_{\ell'\in\sL}
\int\Big(\prod_{\ell\in\sL}\ol{g(p_\ell)}\Big)\,g_j(p_{\ell'})\,
\SPb{\phi}{\triangle_{p_{\sL}}F(\Id\Gamma(\V{v}))\,a(p_{\sL\setminus\{\ell'\}})\,\psi}
\Id\mu^M(p_{\sL})
\\
&=-\int\Big(\prod_{\ell\in\sL}\ol{g(p_\ell)}\Big)
\,g_j(p_j)\,\SPb{a(p_j)\,\phi}{\triangle_{p_{\{j\}\cup\sL}}F(\Id\Gamma(\V{v}))\,a(p_{\sL})\,\psi}
\Id\mu^{M+1}(p_{\{j\}\cup\sL})
\\
&\quad-\sum_{\ell'\in\sL}
\int\Big(\prod_{\ell\in\sL}\ol{g(p_\ell)}\Big)\,g_j(p_{\ell'})\,
\SPb{\phi}{\triangle_{p_{\sL}}F(\Id\Gamma(\V{v}))\,a(p_{\sL\setminus\{\ell'\}})\,\psi}
\Id\mu^M(p_{\sL}).
\end{align*}
Repeating this procedure using $a(p_i)\,a(g_j)=a(g_j)\,a(p_i)$ and applying
\eqref{pull-through-mult} to $F_1$ and $F_3$ in \eqref{clelia0} below we obtain the following result:

\begin{lem}\label{lem-clelia}
For $\iota=1,2,3$, let $F_\iota:\RR^{L_\iota}\to\RR$ be measurable and  let
$\V{v}_\iota:\cM\to\RR^{L_\iota}$ be a vector of 
multiplication operators. With the notation explained above we have, for all 
$\phi,\psi\in\dom\big(\Id\Gamma(1)^{\frac{M+N}{2}}\prod_{\iota=1}^3
F_\iota(\Id\Gamma(\V{v}_\iota))\big)$,
\begin{align}\nonumber
\SPB{F_1&(\Id\Gamma(\V{v}_1))\,\phi}{\Big\{\Big(\prod_{j\in\sJ}\Ad_{\ad(g_j)}\Big)
\Big(\prod_{\ell\in\sL}\Ad_{a(g_\ell)}\Big)\,F_2(\Id\Gamma(\V{v}_2))\Big\}
\,F_3(\Id\Gamma(\V{v}_3))\,\psi}
\\\label{clelia0}
&=(-1)^N\!\!\!\sum_{{{\sA\cup\sB=\sJ\atop\sC\cup\sD=\sL}\atop\#\sB=\#\sD}}\int
\SPb{a(p_\sA)\phi}{M_{\sA,\sC}(p_{\sA\cup\sL})\,a(p_\sC)\psi}
\Id\mu^{\#(\sA\cup\sL)}(p_{\sA\cup\sL}),
\end{align}
with the following family of multiplication operators, parametrized by 
$\sA$, $\sC$, $p_{\sA\cup\sL}$,
and the states $g_\ell\in\HP$ (which are dropped in the notation), 
\begin{align}\nonumber
M_{\sA,\sC}(&p_{\sA},p_{\sC},p_{\sD})
:=\sum_{\pi\in\mathrm{Bij}(\sB,\sD)}\Big(\prod_{\ell\in\sL}\ol{g_\ell(p_\ell)}\Big)
\Big(\prod_{b\in\sB}g_{b}(p_{\pi(b)})\Big)
\Big(\prod_{a\in\sA}g_{a}(p_a)\Big)
\times
\\\label{def-MAB}
&\times F_1\Big(\Id\Gamma\big(\V{v}_1+\!\!{\textstyle{\sum\limits_{a\in
\sA}\V{v}_1(p_a)}}\big)\Big)
\{\triangle_{p_{\sA\cup\sL}}F_2(\Id\Gamma(\V{v}_2))\}F_3\Big(\Id\Gamma\big(\V{v}_3
+\!\!{\textstyle{\sum\limits_{c\in
\sC}\V{v}_3(p_c)}}\big)\Big).
\end{align}
Here $\mathrm{Bij}(\sB,\sD)$ denotes the set of bijections of $\sB$
onto $\sD$.
\end{lem}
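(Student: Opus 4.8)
\textbf{Proof strategy for Lemma~\ref{lem-clelia}.}
The plan is to reduce the general $(M+N)$-fold commutator formula \eqref{clelia0} to the building blocks that were established in the paragraphs preceding the statement, namely the pull-through formula \eqref{pull-through-mult} for annihilation operators past functions of second-quantized operators, the canonical commutation relation \eqref{a(P)ad(f)} between $a(p_{\sL})$ and a creation operator, and the one-sided multiple-commutator identity \eqref{hansi} for commutators with annihilation operators only. The first step is to note that, because of \eqref{CCR} and the Jacobi identity, the order in which the $N$ adjoints $\Ad_{\ad(g_j)}$ and the $M$ adjoints $\Ad_{a(g_\ell)}$ are applied to $F_2(\Id\Gamma(\V{v}_2))$ does not matter, so it suffices to prove the identity for the specific ordering written in \eqref{clelia0}; this is the remark already made in the text just before Lemma~\ref{lem-clelia}.

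The core of the argument is an induction on $N=\#\sJ$. The base case $N=0$ is exactly the identity \eqref{hansi}: apply \eqref{pull-through-mult} $M$ times, once for each $a(p_\ell)$, to move the annihilation operators through $F_2(\Id\Gamma(\V{v}_2))$, producing the shifted argument $\Id\Gamma(\V{v}_2)+\sum_{\ell}\V{v}_2(p_\ell)$, and recognize the difference $\triangle_{p_{\sL}}F_2(\Id\Gamma(\V{v}_2))$ as the telescoping combination of these shifts; the factors $F_1(\Id\Gamma(\V{v}_1))$ and $F_3(\Id\Gamma(\V{v}_3))$ on the outside then get their arguments shifted by $\sum_{a\in\sA}\V{v}_1(p_a)$ with $\sA=\varnothing$ and by $\sum_{c\in\sC}\V{v}_3(p_c)$ with $\sC=\varnothing$ respectively (no shift), which matches \eqref{def-MAB}. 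For the inductive step, pick one index $j\in\sJ$ and write $\prod_{j'\in\sJ}\Ad_{\ad(g_{j'})}=\Ad_{\ad(g_j)}\circ\prod_{j'\in\sJ\setminus\{j\}}\Ad_{\ad(g_{j'})}$. Expanding $\Ad_{\ad(g_j)}X=\ad(g_j)X-X\ad(g_j)$ and using the already-computed expansion for $\prod_{j'\in\sJ\setminus\{j\}}\Ad_{\ad(g_{j'})}\,(\prod_{\ell}\Ad_{a(g_\ell)})\,F_2$ on the vectors $a(g_j)\phi$ and $\ad(g_j)\psi$ respectively, one moves $a(g_j)$ through $\phi$-side factor by \eqref{pull-through-mult}, and on the $\psi$-side one uses \eqref{a(P)ad(f)} to commute $a(p_{\sL\cup\sA})$ past $\ad(g_j)$, which produces exactly two kinds of terms: (i) a term with $j$ added to the set of ``surviving'' annihilation variables $\sA$ (the $\ad(g_j)$ is killed against the left vector, contributing $g_j(p_j)$ and an extra integration variable $p_j$), and (ii) a sum over $\ell'$ of terms where $j$ is paired with $\ell'\in\sL$, contributing a factor $g_j(p_{\ell'})$; this is the mechanism by which $\sJ$ splits into $\sA\sqcup\sB$, $\sL$ splits into $\sC\sqcup\sD$ with $\#\sB=\#\sD$, and the sum over bijections $\pi\in\mathrm{Bij}(\sB,\sD)$ arises from the $N$ independent choices of which $\ell'$ gets paired with which $j$. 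Collecting all these contributions and matching the combinatorial bookkeeping to \eqref{def-MAB} closes the induction.

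The main obstacle, and the place where care is needed, is the bookkeeping of the products of test-function factors together with the correct argument shifts inside $F_1$, $F_2$, $F_3$: one must check that each creation operator $\ad(g_b)$ that gets ``absorbed'' by pairing with $a(g_\ell)$ contributes the factor $g_b(p_{\pi(b)})$ with $p_{\pi(b)}$ an $\sL$-variable, each $\ad(g_a)$ that survives contributes $g_a(p_a)$ with $p_a$ a fresh integration variable, that all shifts in $F_2$ are over the full set $p_{\sA\cup\sL}$ (hence the operator $\triangle_{p_{\sA\cup\sL}}$ in \eqref{def-MAB}), and that $F_1$ is shifted only by the surviving $\sA$-variables while $F_3$ is shifted only by the $\sC\subset\sL$-variables that remain \emph{un}paired---this asymmetry comes from \eqref{a(P)ad(f)} being applied on the $\psi$-side only. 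A secondary technical point is the justification of all these manipulations on the stated domain $\dom\bigl(\Id\Gamma(1)^{(M+N)/2}\prod_{\iota}F_\iota(\Id\Gamma(\V{v}_\iota))\bigr)$: the pointwise (in $p_{\sL}$) identities hold for a.e.\ $p_{\sL}$ because $a(p_{\sL})\psi$ is well-defined in $\sF$ for a.e.\ $p_{\sL}$ on that domain, and the $L^1(\mu^M)$-integrability of the $p_{\sL}$-integrands needed to justify interchanging integration and the inner product follows from Cauchy--Schwarz together with the relative $\Id\Gamma(1)$-bounds \eqref{rb-a}, exactly as in the derivation of \eqref{hansi}; since this is routine, I would only indicate it rather than carry it out in full.
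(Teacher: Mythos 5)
Your proposal is correct and follows essentially the same route as the paper: the paper derives the lemma inline just before its statement by exactly this iteration, taking \eqref{hansi} (obtained from the pull-through formula) as the base case, then adding the creation-operator commutators one at a time by evaluating the expansion on $a(g_j)\phi$ and on $\ad(g_j)\psi$ with \eqref{a(P)ad(f)}, which produces precisely the splitting into surviving variables ($\sA$, $\sC$) and paired ones ($\sB$, $\sD$ with the bijections), and finally applying \eqref{pull-through-mult} to $F_1$ and $F_3$ to obtain the shifted arguments in \eqref{def-MAB}.
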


The next lemma reduces the problem to find a bound on the expression \eqref{clelia0}
to the estimation of real functions of several variables.

\begin{lem}\label{lem-a(K)}
For fixed $p_\sB$, let $\sM(p_{\sA},p_{\sC}):=M_{\sA,\sC}(p_{\sA},p_{\sB},p_{\sC})$ 
denote one of the multiplication operator-valued functions in
\eqref{def-MAB} (or any other reasonable family of operators
parametrized by $p_{\sA}$ and $p_{\sC}$).
Let $\theta:=1+\Id\Gamma(\omega)$, where we suppose that
the measurable function $\omega:\cM\to\RR$
is strictly positive almost everywhere. Then, if $m,n\in\NN_0$ with $n\le\#\sA$ and $m\le\#\sC$,
\begin{align*}
&\int\big|\SPb{a(p_{\sA})\,\theta^{-\nf{n}{2}}\,\phi}{
\sM(p_{\sA},p_\sC)\,a(p_\sC)\,\theta^{-\nf{m}{2}}\,\psi}\big|\,
\frac{\Id\mu^{\#(\sA\cup\sC)}(p_{\sA\cup\sC})}{((\#\sA-n)!(\#\sC-m)!)^\eh}
\\
&\le \|\phi\|\|\psi\|\!
\sum_{{\fa\subset \sA\atop\#\fa\ge n}}\sum_{{\fc\subset\sC\atop\#\fc\ge m}}\!
\bigg[\int\frac{\|\theta^{(\#\fa-n)/2}\sM(p_{\sA},p_{\sC})\theta^{(\#\fc-m)/2}\|^2}{
(\prod_{a\in\fa}\omega(p_a))\prod_{c\in\fc}\omega(p_c)}\Id\mu^{\#(\sA\cup\sC)}(
p_{\sA\cup\sC})\bigg]^{\eh}\!\!.
\end{align*}
If $n>\#\sA$, then the sum over $\fa$ reduces to only one summand
with $\fa=\sA$, $\theta^{(\#\sA-\#\fa)/2}$ is replaced
by $(\theta+\sum_{a\in\sA}\omega(p_a))^{-(n-\#\sA)/2}$, and
$(\#\sA-n)!$ is replaced by $1$. Analogous replacements
occur in the case $m>\#\sC$. (The latter remarks apply in particular when
$n>\#\sA$ {\em and} $m>\#\sC$.)
\end{lem}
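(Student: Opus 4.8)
\textbf{Proof plan for Lemma~\ref{lem-a(K)}.}
The plan is to expand the left-hand side as a series in the boson number, use the explicit action of the pointwise annihilation operators $a(p_\sA)$ and $a(p_\sC)$ on Fock-space vectors, and reduce everything to a Cauchy--Schwarz estimate over the boson configuration variables. First I would write $\phi=(\phi^{(k)})_{k\ge0}$ and $\psi=(\psi^{(k)})_{k\ge0}$ and recall that, for $\mu^{\#\sA}$-a.e.\ $p_\sA$, the vector $a(p_\sA)\theta^{-n/2}\phi$ has $k$-boson component equal to $((k+\#\sA)\cdots(k+1))^\eh$ times the value of $(\theta^{-n/2}\phi)^{(k+\#\sA)}$ at $(k_{[k]},p_\sA)$; since $\theta$ acts diagonally on each sector, $(\theta^{-n/2}\phi)^{(k+\#\sA)}(k_{[k]},p_\sA)=(1+\sum\omega(k_i)+\sum_{a\in\sA}\omega(p_a))^{-n/2}\phi^{(k+\#\sA)}(k_{[k]},p_\sA)$. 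The multiplication operator $\sM(p_\sA,p_\sC)$ (being a function of $\Id\Gamma$ of shifted multiplication operators, as in \eqref{def-MAB}) also acts diagonally on the $(k+\#\sC)$-sector after the shift, so the inner product under the integral is literally an $L^2$-pairing over all remaining boson variables $k_{[k]}$, summed over $k$.

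Next I would insert, for each $a\in\sA$, a factor $\omega(p_a)^{1/2}\omega(p_a)^{-1/2}$ and similarly for $c\in\sC$, and apply Cauchy--Schwarz in the variables $p_{\sA\cup\sC}$ together with the $k$-summation. The point is to pair the $\omega(p_a)^{-1/2}$'s with the operator-norm factor $\|\theta^{(\#\fa-n)/2}\sM\,\theta^{(\#\fc-m)/2}\|$ (that gives the right-hand side integral), and to pair the $\omega(p_a)^{1/2}$'s with the Fock components of $\phi$ and $\psi$. For the latter pairing one uses the elementary relative bound that underlies \eqref{rb-a}: for a vector $\chi$ with $m$-boson component $\chi^{(m)}$, one has $\int_{\cM^j}\omega(q_1)\cdots\omega(q_j)\,\|\chi^{(m)}(\cdot,q_1,\dots,q_j)\|_{L^2(\cM^{m-j})}^2\,\Id\mu^j(q_{[j]})\le (\text{combinatorial factor})\,\|\Id\Gamma(\omega)^{j/2}\chi\|^2$, and after peeling off all the $\theta^{-n/2}$-type weights the resulting norms collapse to $\|\phi\|^2$ and $\|\psi\|^2$. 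The subsets $\fa\subset\sA$ and $\fc\subset\sC$ arise because, when distributing the weight factors, one has the freedom to send a given $\omega(p_a)^{1/2}$ either onto the operator-norm side or (absorbed into a $\theta$-power) keep it; the decomposition $\#\fa-n$ in the exponent of $\theta$ on the operator side records exactly how many of the $\#\sA$ variables were kept with the operator, hence the sum over all $\fa$ with $\#\fa\ge n$. The factorials $((\#\sA-n)!(\#\sC-m)!)^{1/2}$ in the denominator on the left and the factorials hidden in the combinatorial pairing bounds are what make the constants match up to $1$.

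The special cases $n>\#\sA$ (and symmetrically $m>\#\sC$) require a small modification: there are not enough $p_\sA$-variables to absorb the full power $\theta^{-n/2}$, so after pulling $\#\sA$ of the variables through one is left with a leftover $(\theta+\sum_{a\in\sA}\omega(p_a))^{-(n-\#\sA)/2}$ acting on the remaining Fock variables, which is bounded by $1$ and is carried along; this is why in that regime the sum over $\fa$ degenerates to the single term $\fa=\sA$ and the factorial is replaced by $1$. I would treat these cases by the same computation, simply stopping the pull-through once the $p_\sA$-variables are exhausted.

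The main obstacle I anticipate is purely bookkeeping: keeping track of the combinatorial prefactors $((k+\#\sA)\cdots(k+1))^\eh$ and the symmetry factors coming from the unordered nature of $a(p_\sA)$, and verifying that, after Cauchy--Schwarz and the elementary $\Id\Gamma(\omega)$-relative bounds, every numerical constant is indeed $\le1$ so that the stated inequality has no extraneous factor. There is no conceptual difficulty once the pull-through formula \eqref{pull-through-mult} and the diagonal action of all operators on Fock sectors are used; the care is entirely in matching factorials and in correctly indexing the sums over $\fa$ and $\fc$ against the $\theta$-powers that land on the operator-norm factor.
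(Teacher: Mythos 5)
Your overall skeleton — pull-through of the weights, the pointwise/sector action of $a(p_\sA)$, Cauchy--Schwarz in the configuration variables, and the identity $\int\omega(k)\|a(k)\eta\|^2\,\Id\mu(k)=\|\Id\Gamma(\omega)^{\eh}\eta\|^2$ iterated to collapse the vector side to $\|\phi\|\,\|\psi\|$ — is indeed the skeleton of the paper's proof. But the two points you defer to ``bookkeeping'' are precisely where the argument needs ideas, and your description of them would not work as stated. First, the subset sum: you say each $\omega(p_a)^{\eh}$ may be ``sent onto the operator-norm side or kept'', but positive powers of $\omega(p_a)$ never appear on the $\sM$-side of the asserted bound. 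The actual mechanism is this: since only $\theta^{-\nf{n}{2}}$ is available while $\#\sA$ annihilation operators must be absorbed, one factorizes, via \eqref{pull-through-mult}, $a(p_\sA)\,\theta^{-\nf{n}{2}}=W\cdot\tilde A$, where $\tilde A$ interleaves each $a(p_a)$ with an inverse weight chosen so that the telescoped product is exactly $\theta^{-\nf{n}{2}}$, and $W$ is a leftover product of $\#\sA-n$ \emph{positive} weights of the form $(\theta+\omega(p_j)+\dots)^{\eh}$. Each such factor is then split by $(x+y)^{\eh}\le x^{\eh}+y^{\eh}$ into $\theta^{\eh}$, which commutes onto $\sM$ and makes that variable join $\fa$ (this is the source of the exponent $(\#\fa-n)/2$), or $\omega(p_j)^{\eh}$, which supplies, for a variable outside $\fa$, the weight that the $\Id\Gamma(\omega)^{\eh}$-identity needs, so that no $\omega(p_j)^{-1}$ damping is charged to the $\sM$-integral for that variable. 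Your sketch contains no source for these compensating positive weights; for the excess $\#\sA-n$ variables the prefactors $\sqrt{(k+\#\sA)!/k!}$ of the pointwise annihilation operators are then not cancelled by your symmetric-integration bound, and the claimed collapse to $\|\phi\|^2$ fails.

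Second, the normalization $((\#\sA-n)!(\#\sC-m)!)^{\eh}$ is not a constant that ``matches up to one'' after routine counting: in the paper it is produced by a specific device, namely a partition of the integration domain into the cells on which the $\omega(p_a)$ are decreasingly ordered; on each such cell one bounds the $j$-th factor of $W$ by $(\#\sA-j+1)^{\eh}(\theta+\omega(p_j))^{\eh}$ before splitting, and the product of these gains is exactly $((\#\sA-n)!)^{\eh}$, the final Cauchy--Schwarz being taken summed over the ordering cells. Without this decreasing-rearrangement trick, or an equivalent symmetrization that you gesture at but do not carry out, the naive expansion of $(\theta+\sum\omega)^{\eh}$ generates sums over which variable carries each $\omega^{\eh}$, i.e.\ multiplicities that do not reduce to the stated square-root factorials. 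These two devices are the content of the lemma; with them your plan becomes the paper's proof. Note also that in the regime $n>\#\sA$ the statement retains the sharper factor $(\theta+\sum_{a\in\sA}\omega(p_a))^{-(n-\#\sA)/2}$ inside the $\sM$-integral; simply bounding this leftover by one, as you suggest, proves a weaker inequality than the one asserted (and the retained, shifted factor is what is actually exploited in the applications of the lemma).
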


\begin{proof}
Let us first assume that $n\le\#\sA$ and $m\le\#\sC$.
For notational simplicity we may also assume that $\sA=[N]:=\{1,\dots,N\}$ and $\sC=[M]$.
Moreover, we shall use the letter $k$ to denote variables labeled by $\sA=[N]$. 
We choose a partition $\sM^N=\bigcup_{\pi\in\mathrm{Bij}([N],[N])}D_{\pi}{(N)}$
into disjoint measurable sets $D_\pi(N)$ such that 
$$
k_{[N]}\in D_{\pi}{(N)}\quad\Rightarrow\quad
\omega(k_{\pi(1)})\ge\omega(k_{\pi(2)})\ge\dots\ge\omega(k_{\pi(N)}),
$$
for all $\pi\in\mathrm{Bij}([N],[N])$.
We choose a partition $\sM^M=\bigcup_{\pi'\in\mathrm{Bij}([M],[M])}D_{\pi'}{(M)}$ 
with the analogous property. We now fix permutations 
$\pi\in\mathrm{Bij}([N],[N])$ and $\pi'\in\mathrm{Bij}([M],[M])$, write
$D_{\pi,\pi'}:=D_\pi(N)\times D_{\pi'}(M)$, and consider
\begin{align*}
I_{\pi,\pi'}&:=\int_{D_{\pi,\pi'}}\!\big|\SPb{a(k_{[N]})\theta^{-\nf{n}{2}}\phi}{
\sM(k_{[N]},p_{[M]})a(p_{[M]})\theta^{-\nf{m}{2}}\,\psi}\big|
\Id\mu^{M+N}(k_{[N]},p_{[M]}).
\end{align*}
We use the less space consuming notation $k_j^\pi:=k_{\pi(j)}$.
By the pull-through formula,
\begin{align*}
&a(k_1)\dots a(k_N)\theta^{-\nf{n}{2}}=a(k_N^\pi)\dots a(k_1^\pi)\theta^{-\nf{n}{2}}
\\
&=(\theta+\omega(k_{n+1}^\pi)+\dots+\omega(k_N^\pi))^\eh\dots(\theta+\omega(k_N^\pi))^\eh
a(k_N^\pi)\theta^\mh\dots
a(k_{n+1}^\pi)\theta^\mh\times
\\
&\quad\times a(k_n^\pi)(\theta+\omega(k_1^\pi)+\dots+\omega(k_{n-1}^\pi))^\mh a(k_{n-1}^\pi)
\dots(\theta+\omega(k_1^\pi))^\mh a(k_1^\pi)\theta^\mh,
\end{align*}
where the second line has to be ignored when $n=N$. If $N>n$, then
the multiplication operator in the second line can be bounded, pointwise on $D_\pi(N)$, as
\begin{align*}
(\theta&+\omega(k_{n+1}^\pi)+\dots+\omega(k_N^\pi))^\eh\dots(\theta+\omega(k_N^\pi))^\eh
\\
&\le((N-n)!)^\eh(\theta^\eh+\omega(k_{n+1}^\pi)^\eh)\dots(\theta^\eh+\omega(k_{N}^\pi)^\eh)
\\
&=
((N-n)!)^\eh\sum_{{\fb\subset[N]\setminus[n]}}\theta^{\#\fb/2}\prod_{j\in([N]\setminus[n])\setminus
\fb}\omega(k_j^\pi)^\eh,
\end{align*}
where we repeatedly used $({a+b})^\eh\le{a}^\eh+{b}^\eh$
and $\omega(k_N^\pi)\le\dots\le\omega(k_{n+1}^\pi)$ on $D_\pi(N)$. 
An analogous estimate certainly holds for the variables $p_{[M]}$ and the
permutation $\pi'$. Applying the Cauchy-Schwarz inequality to each summand thus
contributing to $I_{\pi,\pi'}$ we now see that
\begin{align*}
&\sum_{\pi,\pi'}I_{\pi,\pi'}\le \big((N-n)!(M-m)!\big)^\eh J_{n,N}(\phi)^\eh J_{m,M}(\psi)^\eh
\\
&\;\;\;\cdot\sum_{{\fb\subset[N]\setminus[n]\atop \fd\subset[M]\setminus[m]}}
\bigg[\sum_{\pi,\pi'}\int_{D_{\pi,\pi'}}\frac{\|\theta^{\#\fb/2}\sM(k_{[N]},p_{[M]})
\theta^{\#\fd/2}\|^2}{(\prod_{b\in\fb\cup[n]}\omega(k_b^\pi))
\prod_{d\in\fd\cup[m]}\omega(p_d^{\pi'})}\Id\mu^{M+N}(k_{[N]},p_{[M]})\bigg]^\eh,
\end{align*}
with
\begin{align*}
&J_{n,N}(\phi):=
\int\omega(k_1)\dots\int\omega(k_N)\big\|a(k_N)\,\theta^\mh\dots
a(k_{n+1})\theta^\mh\times
\\
&\times a(k_n)(\theta+\omega(k_1)+\dots+\omega(k_{n-1}))^\mh
\dots a(k_1)\theta^\mh\phi\big\|^2\Id\mu(k_N)\dots\Id\mu(k_1),
\end{align*}
and a similar definition for $J_{m,M}(\psi)$.
Applying the following familiar consequence of Fubini's theorem and the
permutation symmetry of $\eta\in\dom(\Id\Gamma(\omega)^\eh)$ repeatedly, 
\begin{equation*}
\int\omega(k)\|a(k)\eta\|^2\Id\mu(k)=\|\Id\Gamma(\omega)^\eh\eta\|^2
\le\|(\theta+t)^\eh\eta\|^2,\quad t\ge0,
\end{equation*}
we see that $J_{n,N}(\phi)\le\|\phi\|^2$ and, analogously, $J_{m,M}(\psi)\le\|\psi\|^2$.

If $n>N$, then the second line in the formula for
$a(k_N)\dots a(k_1)\,\theta^{-\nf{n}{2}}$ on $D_\pi(N)$ above is replaced
by $(\theta+\omega(k_1^\pi)+\dots+\omega(k_N^\pi))^{-(n-N)/2}$ and
$n$ is substituted by $N$ in the third line.
It is then clear how to conclude in this case.
\end{proof}

The multiplication operators we have to bound according to the
previous lemma involve the higher order difference operations
appearing in the formula \eqref{def-MAB}. To deal with such
expressions we introduce some more notation.

For any real function $f:\RR\to\RR$, we set
$$
\triangle_s f:=(\tau_{s}-\id)f,\qquad
(\tau_sf)(t):=f(t+s),\qquad s,t\in\RR,
$$
and we shall write 
$$
\triangle_{s_{\sL}}:=\prod_{\ell\in\sL}\triangle_{s_{\ell}},
\quad\tau_{s_\sL}:=\prod_{\ell\in\sL}\tau_{s_{\ell}},
\qquad s_{\sL}=(s_\ell)_{\ell\in\sL}\in\RR^\sL.
$$
For $f_1,f_2:\RR\to\RR$, we notice that the product rule,
\begin{equation}\label{Leibniz1}
\triangle_s(f_1\,f_2)=(\triangle_sf_1)\,\tau_s f_2+f_1\,(\triangle_sf_2),
\end{equation}
and the translation invariance of $\triangle_s$ entail
\begin{equation}\label{Leibniz2}
\triangle_{s_{\sL}}(f_1\,f_2)=\sum_{\sA\cup\sB=\sL}(\triangle_{s_\sA}f_1)
\,\tau_{s_\sA}\triangle_{s_\sB}f_2,
\end{equation}
where $\sA$ and $\sB$ are always disjoint and possibly empty.
Furthermore, we readily verify by induction that
\begin{equation}\label{Leibniz-pol}
\triangle_{s_{\sL}}t^n=\!\!\sum_{{\vk_\sL\in\NN^\sL\atop|\vk_\sL|\le n}}{n\choose\vk_\sL}
t^{n-|\vk_\sL|}\prod_{\ell\in\sL}s_{\ell}^{\vk_\ell},
\quad {n\choose\vk_\sL}:=\frac{n!}{(n-|\vk_\sL|)!\prod_{\ell\in\sL}\vk_\ell!}.
\end{equation}
where empty sums are zero and empty products are one.
Notice that $\vk_\ell\ge1$, for each $\ell\in\sL$, in \eqref{Leibniz-pol}.

One is inclined to estimate higher order difference quotients 
by repeatedly applying the mean value theorem and using $L^\infty$-bounds on derivatives. 
A too naive estimation of the derivatives would, however, not be
sufficient for the function $t^\gamma$ to find the canonical
conditions on the photon states $g_j$ in Lem.~\ref{lem-comm-omega}
below, whence we shall argue more carefully in the next lemma.

\begin{lem}\label{lem-Leibniz-t}
Let $a,\alpha>0$,  $\ve\in[0,1]$, $s_0:=0$, $s_\ell>0$ and $\delta_\ell\in[0,1]$, for $\ell\in\sL$, 
and write $|\delta_{\sL}|:=\sum_{\ell\in\sL}\delta_\ell$,
$s_\sL^{\delta_\sL}:=\prod_{\ell\in\sL}s_\ell^{\delta_\ell}$, etc., and
\begin{equation}\label{def-Feps}
F_\ve(t):=t/(1+\ve t),\qquad t>0.
\end{equation}
Then the following bound holds, for all $t>0$,
\begin{align}\label{Leibniz7}
|\triangle_{s_{\sL}}F_\ve^{-\alpha}(t)|
&\le c(\alpha,\#\sL,\delta_\sL)\,F_\ve^{-\alpha}(t)\,t^{-|\delta_{\sL}|}\,s_\sL^{\delta_\sL},
\\\label{Leibniz77}
|\triangle_{s_{\sL}}F_\ve^{\alpha}(t)|
&\le c'(\alpha,\#\sL,\delta_{\sL})
\sum_{\ell\in\{0\}\cup\sL} F_\ve^\alpha(t+s_\ell)
(t+s_\ell)^{-|\delta_{\sL}|}\,s_\sL^{\delta_\sL},
\\\label{Leibniz-exp}
|\triangle_{s_{\sL}}e^{a F_\ve(t)}|
&\le c(\#\sL)\,a^{|\delta_{\sL}|}e^{aF_\ve(t)}e^{a|s_\sL|}\,s_\sL^{\delta_\sL},
\quad\text{if}\;\ve\le a.
\end{align}
\end{lem}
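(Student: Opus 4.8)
\textbf{Proof strategy for Lemma~\ref{lem-Leibniz-t}.}
The plan is to reduce all three bounds to the combinatorial identities \eqref{Leibniz2} and \eqref{Leibniz-pol}, exploiting that $F_\ve=F_\ve(t)$ is itself a simple rational function whose difference operations are easy to control directly. First I would record the elementary identity $\triangle_s F_\ve(t)=F_\ve(t+s)-F_\ve(t)=\frac{s}{(1+\ve t)(1+\ve(t+s))}$, which gives $0\le\triangle_sF_\ve(t)\le s$ and, more usefully, $\triangle_sF_\ve(t)=F_\ve(t+s)F_\ve(t)\,\frac{s}{t(t+s)}\le F_\ve(t)\,(s/t)$ together with the pointwise comparison $F_\ve(t)\le F_\ve(t+s)\le F_\ve(t)+s$. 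Iterating via \eqref{Leibniz2}, a multiple difference $\triangle_{s_\sL}F_\ve(t)$ expands into a bounded (depending only on $\#\sL$) number of terms, each a product of shifted copies $\tau_{s_\sA}F_\ve$ and single differences $\triangle_{s_b}F_\ve$; estimating each single difference by $s_b/(t+s_{\cdot})$ and each shifted copy by $F_\ve(t)+|s_\sL|$ yields the "master" bound that $|\triangle_{s_\sL}F_\ve^{n}(t)|$ for integer $n$ is controlled by $\sum_{\ell\in\{0\}\cup\sL}F_\ve^n(t+s_\ell)(t+s_\ell)^{-\#\sL}s_\sL^{\mathbf 1}$ up to a combinatorial constant. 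This handles integer powers; I would phrase it as a standalone sub-claim since all three displays follow from it.

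Next, to pass from integer exponents to the real exponent $\alpha>0$ and the interpolation exponents $\delta_\ell\in[0,1]$, I would use the telescoping-and-interpolation trick: for a single difference, $\triangle_s F_\ve^{\pm\alpha}(t)$ is bounded by combining the trivial bound $|\triangle_s F_\ve^{\pm\alpha}(t)|\le 2\max(F_\ve(t)^{\pm\alpha},F_\ve(t+s)^{\pm\alpha})$ (corresponding to $\delta=0$) with the mean-value bound $|\triangle_s F_\ve^{\pm\alpha}(t)|\le \alpha\,s\,\sup_{u\in[t,t+s]}F_\ve(u)^{\pm\alpha}F_\ve(u)^{-1}\ve'\le c\,F_\ve(t)^{\pm\alpha}(s/t)$ (corresponding to $\delta=1$; here one uses $F_\ve'(u)=(1+\ve u)^{-2}\le 1$ and $F_\ve(u)^{-1}\le t^{-1}$ on the relevant range, plus monotonicity of $F_\ve$ to replace the sup). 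Taking the geometric mean of the $\delta=0$ and $\delta=1$ bounds to the powers $1-\delta$ and $\delta$ gives exactly the per-factor estimate $|\triangle_{s_\ell}F_\ve^{\pm\alpha}|\lesssim F_\ve^{\pm\alpha}(t+s_\ell^{\cdot})\,(t+s_\ell^{\cdot})^{-\delta_\ell}s_\ell^{\delta_\ell}$, and then \eqref{Leibniz2} assembles these into \eqref{Leibniz7} and \eqref{Leibniz77}. The distinction between the two displays is just the sign of the exponent: for $-\alpha<0$ the function is decreasing so the shifted copies satisfy $F_\ve^{-\alpha}(t+s_\sA)\le F_\ve^{-\alpha}(t)$, collapsing the sum over $\ell$ to the single term $\ell=0$, whereas for $+\alpha$ one genuinely needs $F_\ve^\alpha(t+s_\ell)$ on the right and hence the sum over $\ell\in\{0\}\cup\sL$.

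For \eqref{Leibniz-exp}, the structure is the same but cleaner. Using $\triangle_s e^{aF_\ve(t)}=e^{aF_\ve(t)}\bigl(e^{a\triangle_sF_\ve(t)}-1\bigr)$ and $0\le\triangle_sF_\ve(t)\le s\wedge(s/t)$ together with $|e^x-1|\le |x|e^{|x|}\le |x|^{\delta}|x|^{1-\delta}e^{|x|}$ for $x\ge 0$, a single difference obeys $|\triangle_s e^{aF_\ve(t)}|\le e^{aF_\ve(t)}\,a^{\delta}s^{\delta}\,(as)^{1-\delta}e^{as}$; iterating through \eqref{Leibniz2} and using that the exponential factors $e^{a s_b}$ multiply to $e^{a|s_\sL|}$ gives \eqref{Leibniz-exp}, where the hypothesis $\ve\le a$ enters only to guarantee $a\triangle_sF_\ve(t)\le as$ stays harmless (indeed it is automatic here since $\triangle_sF_\ve\le s$ regardless, but the hypothesis is kept to match the companion bounds in the appendix). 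The main obstacle, and the step deserving the most care, is the real-power case: one must choose the interpolation cleanly enough that the final exponent of $t+s_\ell$ is exactly $-|\delta_\sL|$ and the exponent of $s_\ell$ exactly $\delta_\ell$, with no stray loss, since these precise powers are what make the commutator estimates in Ex.~\ref{ex-comm-omega} and Ex.~\ref{ex-comm-exp} close with the canonical hypotheses Hyp($\Upsilon$); getting the bookkeeping of which shifted argument appears in each factor (the $t+s_\ell$ versus $t+s_{\ell'}$ ambiguity coming from nested telescopings) to collapse into the stated single sum is the delicate point, and I would handle it by always bounding every shifted argument below by $t$ and above by $t+|s_\sL|$ at the cost of a constant depending on $\#\sL$, which is permitted by the form of the claimed constants.
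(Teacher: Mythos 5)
There is a genuine gap at the central step, namely the passage from single differences to higher-order differences for non-integer powers. Your plan is to interpolate the $\delta=0$ and $\delta=1$ bounds for one difference $\triangle_{s_\ell}F_\ve^{\pm\alpha}$ and then let \eqref{Leibniz2} ``assemble'' these per-factor estimates into \eqref{Leibniz7} and \eqref{Leibniz77}. But \eqref{Leibniz2} is a product rule: it distributes the difference operations over an actual factorization $f_1f_2$ of the function being differenced, with each $\triangle_{s_\ell}$ landing on one factor. For non-integer $\alpha$ (and $\#\sL\ge2$, which does occur in the applications, e.g. $\alpha=\nf{1}{2}$, $M+N=2$) the function $F_\ve^{\pm\alpha}$ admits no factorization into $\#\sL$ pieces each absorbing one difference, and knowing bounds on each single difference of $F_\ve^{\pm\alpha}$ tells you nothing about the composed operator $\triangle_{s_1}\triangle_{s_2}\cdots$ applied to it — you cannot ``difference an inequality''. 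This is exactly the higher-order-difference subtlety flagged in the paragraph preceding the lemma. A symptom of the gap is your parenthetical claim that the hypothesis $\ve\le a$ in \eqref{Leibniz-exp} is not really needed: it is. For $\#\sL=2$, $\delta_1=\delta_2=1$ and $a\downarrow0$ with $\ve>0$ fixed, the left-hand side behaves like $a\,|\triangle_{s_1}\triangle_{s_2}F_\ve(t)|\sim a\,\ve\,s_1s_2(1+\ve t)^{-3}$ while the right-hand side is of order $a^2s_1s_2$, so the bound fails for $a\ll\ve$; the curvature of $F_\ve$ only shows up in second and higher differences, which your single-difference identity $\triangle_se^{aF_\ve(t)}=e^{aF_\ve(t)}(e^{a\triangle_sF_\ve(t)}-1)$ never sees. (Also, your displayed mean-value bound $|\triangle_sF_\ve^{\alpha}(t)|\le c\,F_\ve^{\alpha}(t)\,s/t$ for the positive power is false when $s\gg t$; the shifted argument must appear, as in \eqref{Leibniz77}.)

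What is missing is a mechanism that makes the \emph{multiple} difference factor, so that the fractional gains $s_\ell^{\delta_\ell}$ can be extracted variable by variable. The paper's device is the representation $t^{-\alpha}=\Gamma(\alpha)^{-1}\int_0^\infty e^{-rt}r^{\alpha-1}\Id r$: since the exponent is linear in $t$, $\triangle_{s_\sL}e^{-rt}=e^{-rt}\prod_\ell(e^{-rs_\ell}-1)$ exactly, and each factor is interpolated by $(rs_\ell)^{\delta_\ell}$ before substituting $r\to r/t$; the positive power is then handled by $t^\alpha=t^nt^{-(n-\alpha)}$ together with \eqref{Leibniz2} and \eqref{Leibniz-pol}, the case $\ve>0$ by replacing $t$ with $1+\ve t$ and one more application of \eqref{Leibniz2}, and \eqref{Leibniz-exp} by Fa\`a di Bruno for $\frac{\Id^k}{\Id t^k}e^{aF_\ve}$ combined with the exact iterated-integral representation of the $k$-fold difference — this is where $\ve\le a$ is used to absorb the factors $a^\ell\ve^{k-\ell}$. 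Alternatively, a derivative-based route along your lines can be repaired, but only by proving \emph{corner} bounds for the full multi-difference (treat the subset of variables with $\delta_\ell=1$ via the iterated-integral representation of the corresponding higher derivative, expand the remaining differences by the triangle inequality, keeping track of where the shifted arguments land for the positive power) and then performing a multi-parameter geometric interpolation over $\delta_\sL\in\{0,1\}^{\sL}$; as written, your proposal does not supply either mechanism.
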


\begin{proof}
First, we consider \eqref{Leibniz7} and \eqref{Leibniz77} in the case $\ve=0$. In view of
\begin{equation}\label{triangle-exp}
\triangle_se^{-r t}=e^{-r t}(e^{-r s}-1)
\end{equation}
it is convenient to represent $t^{-\alpha}$ as a superposition of exponentials,
\begin{align}
t^{-\alpha}&=\Gamma(\alpha)^{-1}\int_0^\infty
e^{-rt}\,\frac{dr}{r^{1-\alpha}}.
\end{align}
Together with \eqref{triangle-exp} the above formula yields
\begin{align}\label{Leibniz4}
\triangle_{s_{\sL}}t^{-\alpha}
&=\Gamma(\alpha)^{-1}
\int_0^\infty e^{-rt}\Big(\prod_{\ell\in\sL}(e^{-s_\ell r}-1)\Big)\,\frac{dr}{r^{1-\alpha}}.
\end{align}
Estimating first $1-e^{-s_\ell r}\le(s_\ell r)^{\delta_\ell }$ and substituting $r\to r/t$
thereafter we arrive at \eqref{Leibniz7} with $\ve=0$.

To prove \eqref{Leibniz77} with $\ve=0$ we pick some some $n\in\NN$ such that $\alpha<n$
and write $t^\alpha=t^nt^{-(n-\alpha)}$. 
Together with \eqref{Leibniz2} and \eqref{Leibniz-pol} this yields
\begin{align*}
\triangle_{s_{\sL}}t^{\alpha}
&=\sum_{\sA\cup \sB=\sL}
\sum_{{\vk_\sA\in\NN^\sA\atop|\vk_\sA|\le n}}{n\choose\vk_\sA}
t^{n-|\vk_\sA|}s_\sA^{\vk_\sA}\,\tau_{s_{\sA}}\triangle_{s_{\sB}} t^{-(n-\alpha)}.
\end{align*}
Applying \eqref{Leibniz7} with $\ve=0$ and $\alpha$ replaced by $n-\alpha$ we see that
\begin{align*}
t^{n-|\vk_\sA|}\,s_\sA^{\vk_\sA}|\tau_{s_{\sA}}\triangle_{s_{\sB}}t^{-(n-\alpha)}|
&\le
\const(n,\alpha,\delta_{\sB})\,(t+|s_\sA|)^{\alpha}\,
\frac{t^{n-|\vk_\sA|}\,s_\sA^{\vk_\sA}\,s_\sB^{\delta_\sB}}{(t+|s_\sA|)^{n+|\delta_\sB|}}
\\
&\le
\const(n,\alpha,\delta_{\sB})\,(t+|s_\sA|)^{\alpha-|\delta_{\sL}|}\,s_\sL^{\delta_\sL}
\\
&\le\const(n,\alpha,\delta_{\sB})\,(\#\sA)^{(\alpha-|\delta_{\sL}|)\vee0}
\sum_{a\in\sA}(t+s_{a})^{\alpha-|\delta_{\sL}|}\,s_\sL^{\delta_\sL}.
\end{align*}
In the second step we also used that $\vk_a\ge1$, for each component of $\vk_\sA=(\vk_a)_{a\in\sA}$,
that $t+|s_\sA|$ is bigger than $t$ and each $s_{a}$, 
and that $\sA\cup\sB=\sL$, which implies
$$
\frac{t^{n-|\vk_\sA|}\,s_\sA^{\vk_\sA}\,s_\sB^{\delta_\sB}}{(t+|s_\sA|)^{n+|\delta_\sB|}}
\le\frac{s_\sA^{\vk_\sA-\delta_{\sA}}}{(t+|s_\sA|)^{|\vk_\sA-\delta_\sA|}}\,
\frac{s_\sL^{\delta_\sL}}{(t+|s_\sA|)^{|\delta_\sL|}}\le\frac{s_\sL^{\delta_\sL}}{(t+|s_\sA|)^{|\delta_\sL|}}.
$$
This proves \eqref{Leibniz77}, for $\ve=0$.

Now, let $\tilde{\ve}>0$. If we replace $t$ by $1+\tilde{\ve}\,t$ in
\eqref{Leibniz7}
and \eqref{Leibniz77} with $\ve=0$, then each $s_\ell$ has to be replaced
by $\tilde{\ve}\,s_\ell$. Combining \eqref{Leibniz7}\&\eqref{Leibniz77}
with $\ve=0$ and the product rule \eqref{Leibniz2} we thus obtain
\begin{align*}
|\triangle_{s_{\sL}}t^\alpha(1+\tilde{\ve}t)^{-\alpha}|
&\le\const(\alpha,\#\sL,\delta_{\sL})\sum_{{\sA\cup\sB=\sL\atop\sA\not=\varnothing}}
\sum_{a\in\sA}
\frac{\tilde{\ve}^{|\delta_\sB|}\,(t+s_{a})^{\alpha-|\delta_\sA|}}{
(1+\tilde{\ve}\,(t+|s_{\sA}|))^{\alpha+|\delta_\sB|}}\,
s_\sL^{\delta_\sL}
\\
&\quad+\const(\alpha,\#\sL,\delta_{\sL})\,
\tilde{\ve}^{|\delta_{\sL}|}\,t^\alpha\,(1+\tilde{\ve}\,t)^{-\alpha-|\delta_{\sL}|}\,s_\sL^{\delta_\sL}.
\end{align*}
Since $\tilde{\ve}/(1+\tilde{\ve}\,(t+|s_{\sA}|))\le1/(t+s_{a})$,
for every $a\in\sA$, this yields \eqref{Leibniz77}
with $\ve=\tilde{\ve}$. We may analogously extend \eqref{Leibniz7} to
positive $\ve$ using that
$(1+\ve t+\ve s_{a})/(t+|s_\sA|)\le(1+\ve t)/t$ in addition.

To prove \eqref{Leibniz-exp} we first recall Fa\`{a} di Bruno's formula,
$$
(g\circ f)^{(k)}(t)=\sum_{\ell=1}^k\frac{g^{(\ell)}(f(t))}{\ell!}\sum_{\alpha\in\NN^\ell: |\alpha|=k}
\frac{k!}{\alpha_1!\dots\alpha_\ell!}\,f^{(\alpha_1)}(t)\dots f^{(\alpha_\ell)}(t),
$$
for $k$-times differentiable functions $f,g:\RR\to\RR$,
which implies
\begin{align*}
\frac{\Id^k}{\Id t^k}e^{aF_\ve(t)}=e^{aF_\ve(t)}\sum_{\ell=1}^kc_{k,\ell}\,
\frac{a^\ell(-\ve)^{k-\ell}}{(1+\ve t)^{k+\ell}},
\qquad c_{k,\ell}:=\frac{k!}{\ell!}\sum_{\alpha\in\NN^\ell: |\alpha|=k}1.
\end{align*}
whence (with $[k]=\{1,\ldots,k\}$, $|u_{[k]}|=u_1+\dots+u_k$)
\begin{align*}
\triangle_{s_{[k]}}e^{aF_\ve(t)}
&=\sum_{\ell=1}^kc_{k,\ell}\int_0^{s_k}\dots\int_{0}^{s_1}e^{aF_\ve(t+|u_{[k]}|)}
\frac{a^\ell(-\ve)^{k-\ell}}{(1+\ve t+\ve|u_{[k]}|)^{k+\ell}}\Id u_1\dots \Id u_k.
\end{align*}
Using $F_\ve(t+r)\le F_\ve(t)+r$, $r,t\ge0$, repeatedly and taking the condition $\ve\le a$
into account we obtain
\begin{align*}
|\triangle_{s_{[k]}}e^{aF_\ve(t)}|
&\le c_k e^{aF_\ve(t)}\prod_{\ell=1}^ke^{as_\ell}\int_0^{s_\ell}ae^{a(u_\ell-s_\ell)}\Id u_\ell
\le c_ke^{aF_\ve(t)}\prod_{\ell=1}^ke^{|a|s_\ell}(as_\ell)^{\delta_\ell}.
\end{align*}
\end{proof}

We are now prepared to derive bounds on the type of commutators appearing in our applications:

\begin{lem}\label{lem-comm-omega}
Let $\sJ$ and $\sL$ be disjoint finite, possibly empty index sets and set $N:=\#\sJ$, $M:=\#\sL$.
Let $m,n\in\NN_0$ with $n\le N$ and $m\le M$, $\alpha\ge1/2$, $\beta,\gamma,\sigma,\tau\ge0$
with $\alpha=\beta+\gamma$ and $\sigma+\tau\le(M+N)/2$, and let $\kappa\in\RR$.
For $\ve\in[0,1]$ and $E\ge1$, define 
\begin{equation}\label{def-FveE}
F_{\ve,E}(t):=(E+t)/(1+\ve(E+t)),\quad t\ge0.
\end{equation}
Pick $g_\ell\in\HP$ satisfying $v^{\mu+\eh}g_\ell\in\HP$, for all $\ell\in\sJ\cup\sL$, with 
$$
\mu:=(|\kappa|+\sigma+\tau)\vee(\alpha+|\kappa|+\sigma+\tau-\tfrac{M+N+m+n}{2}),
$$
where the measurable function $v:\cM\to\RR$ is strictly positive almost
everywhere and monotonically increasing in $|\V{k}|$.
Finally, set $v_\ve:=v/(1+\ve v)$. Then the densely defined operator
\begin{align}\nonumber
T:=(1+\Id\Gamma(v))^{-\nf{n}{2}}&F_{\ve,E}^{\sigma-\beta+\kappa}(\Id\Gamma(v_\ve))
\Big\{\Big(\prod_{j\in\sJ}\Ad_{\ad(g_j)}\Big)\Big(\prod_{\ell\in\sL}\Ad_{a(g_\ell)}\Big)
F_{\ve,E}^\alpha(\Id\Gamma(v_\ve))\Big\}
\\\nonumber%\label{for-Tomega}
&\qquad\times\,F_{\ve,E}^{\tau-\gamma-\kappa}(\Id\Gamma(v_\ve))(1+\Id\Gamma(v))^{-\nf{m}{2}},
\end{align}
extends uniquely to a bounded operator on $\sF$ with
\begin{align*}
\|T\|&\le c(\alpha,|\kappa|,\sigma,\tau,N,M)\,E^{-\frac{M+N+m+n}{2}+\sigma+\tau}
\\
&\quad\cdot\Big\{
\sum_{{a,b\in\sJ\cup\sL\atop
    a\not=b}}\big\|v^\eh(1+\tfrac{v}{E})^{\mu}g_a\big\|
\big\|v^\eh(1+\tfrac{v}{E})^{|\kappa|+\sigma+\tau}g_b\big\|\!\!\prod_{{c\in\sJ\cup\sL\atop
    c\not=a,b}}\!\|v^\eh g_c\|
 \\
 &\qquad+\sum_{a\in\sJ\cup\sL} 
 \big\|v^\eh(1+\tfrac{v}{E})^{\mu}g_a\big\|\!\!\prod_{{c\in\sJ\cup\sL\atop
    c\not=a}}\!\|v^\eh g_c\|\Big\}.
\end{align*}
Here empty sums should be read as $0$ and empty products are $1$.
\end{lem}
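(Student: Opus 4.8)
The strategy is to combine the combinatorial commutator formula of Lemma~\ref{lem-clelia}, the operator-to-scalar reduction of Lemma~\ref{lem-a(K)}, and the difference-operation estimates of Lemma~\ref{lem-Leibniz-t}, keeping careful track of the powers of $v$ generated at each step. First I would expand $T$ using Lemma~\ref{lem-clelia} with $F_1 = F_{\ve,E}^{\sigma-\beta+\kappa}$, $F_2 = F_{\ve,E}^\alpha$, $F_3 = F_{\ve,E}^{\tau-\gamma-\kappa}$ and $\V v_1 = \V v_2 = \V v_3 = v_\ve$. This writes (a representative matrix element of) $T$ as a sum over decompositions $\sA\cup\sB = \sJ$, $\sC\cup\sD = \sL$ with $\#\sB = \#\sD$, of integrals of $\SPn{a(p_\sA)(1+\Id\Gamma(v))^{-n/2}\phi}{M_{\sA,\sC}(p_{\sA\cup\sL})a(p_\sC)(1+\Id\Gamma(v))^{-m/2}\psi}$. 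To each such term I then apply Lemma~\ref{lem-a(K)} (with $\theta := 1+\Id\Gamma(v)$, $\omega$ there taken to be $v$, and the fixed variables $p_\sB$ absorbed into $\sM$), which bounds the integral by $\|\phi\|\|\psi\|$ times a sum over $\fa\subset\sA$, $\fc\subset\sC$ of the $L^2(\mu)$-norms (in the remaining variables) of $\|\theta^{(\#\fa-n)/2}M_{\sA,\sC}\theta^{(\#\fc-m)/2}\|$ divided by $\prod_{\fa}v(p_a)\prod_{\fc}v(p_c)$, with the stated modifications when $n>\#\sA$ or $m>\#\sC$ (which is the mechanism producing the exponent $\mu$ in its second, ``large-difference-order'' form).

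\textbf{The operator-norm estimate of $M_{\sA,\sC}$.} The heart of the argument is bounding $\|\theta^{\#\fb'/2}M_{\sA,\sC}(p_{\sA\cup\sL})\theta^{\#\fd'/2}\|$ pointwise (as a multiplication operator). Here $M_{\sA,\sC}$ from \eqref{def-MAB} is a product of three shifted functional-calculus factors times products $\prod_\sL\ol{g_\ell(p_\ell)}\prod_\sB g_b(p_{\pi(b)})\prod_\sA g_a(p_a)$, and the middle factor contains the higher difference operation $\triangle_{p_{\sA\cup\sL}}F_{\ve,E}^\alpha(\Id\Gamma(v_\ve))$. The point of Lemma~\ref{lem-Leibniz-t} (applied with $t = \Id\Gamma(v_\ve)$ as a multiplication operator, $s_\ell = v_\ve(p_\ell)$, and suitable $\delta_\ell\in[0,1]$) is exactly that this difference is controlled by $F_{\ve,E}^\alpha(\Id\Gamma(v_\ve))$ shifted, times $(\text{base})^{-|\delta|}$ times $\prod v_\ve(p_\ell)^{\delta_\ell}$; the factors $F_{\ve,E}^{\sigma-\beta+\kappa}$ and $F_{\ve,E}^{\tau-\gamma-\kappa}$ are then absorbed, after accounting for the sign of the exponents, by \eqref{Leibniz7}/\eqref{Leibniz77} again (these are the $|\kappa|+\sigma+\tau$ contributions). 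Since $F_{\ve,E}\le E+t$ and $F_{\ve,E}^{-1}\le E^{-1}$ and more precisely $F_{\ve,E}(t)\asymp E+t$ on the relevant range, one reads off a global factor $E^{-(M+N+m+n)/2+\sigma+\tau}$ from the combination of the $\theta^{\#\fb'/2},\theta^{\#\fd'/2}$ powers against the overall negative power of $F_{\ve,E}$, together with factors $(1 + v(p_\ell)/E)^{(\cdot)}$ attached to each momentum variable. One then chooses the splitting of exponents $\delta_\ell$ among the variables so that: (i) the two distinguished variables $a,b\in\sJ\cup\sL$ carry the ``heavy'' weights $v^{1/2}(1+v/E)^\mu$ and $v^{1/2}(1+v/E)^{|\kappa|+\sigma+\tau}$ respectively; (ii) all other variables carry only $v^{1/2}$ (using $1+v/E\le$ const is \emph{not} allowed since $v$ is unbounded, so instead the exponent of $1+v/E$ on those variables must be forced to $0$ by the bookkeeping — this is exactly where $\mu$ being the maximum of two quantities, and $n\le N$, $m\le M$, enter). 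The $v^{1/2}$ on every variable pairs with the $1/\prod v(p_a)\prod v(p_c)$ denominators and the $L^2(\mu)$-integration in Lemma~\ref{lem-a(K)} to produce precisely the norms $\|v^{1/2}(\cdot)g_\ell\|_{\HP}$.

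\textbf{Assembling and the monotonicity hypothesis.} After estimating each $M_{\sA,\sC}$ this way, I sum over the (finitely many, with cardinality bounded in terms of $N,M$) decompositions $\sA,\sB,\sC,\sD$ and subsets $\fa,\fb',\fc,\fd'$; each resulting term is of one of the two shapes in the claimed bound (a ``two-heavy-variables'' term, coming from $\#\sA+\#\sB\ge 2$ i.e. at least two creation/annihilation variables surviving, or a ``one-heavy-variable'' term, coming from the $n>\#\sA$/$m>\#\sC$ branches of Lemma~\ref{lem-a(K)} where one extra shift is absorbed into the resolvent base rather than producing a free $g$). The combinatorial constant is $c(\alpha,|\kappa|,\sigma,\tau,N,M)$. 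The hypothesis that $v$ is monotonically increasing in $|\V k|$ is used, as in the proof of Lemma~\ref{lem-a(K)}, to order the momenta so that the worst weight $1+v/E$ can be placed on a single chosen variable while the others only see $v^{1/2}$; I expect the main obstacle to be exactly this bookkeeping — verifying that for \emph{every} decomposition the total exponent of $1+v/E$ distributed over the non-distinguished variables can be made $0$ and that the exponent left on the two (or one) distinguished variables never exceeds $\mu$ resp.\ $|\kappa|+\sigma+\tau$. This amounts to a careful but elementary accounting of the exponents $\alpha,\beta,\gamma,\sigma,\tau,\kappa$ together with the integer shifts $\#\sA,\#\sB,\#\sC,\#\sD,\#\fa,\#\fc,n,m$, using $\alpha=\beta+\gamma$, $\sigma+\tau\le(M+N)/2$, $n\le N$, $m\le M$; once this is done, the bound follows by the triangle inequality and Cauchy--Schwarz as already packaged in Lemmas~\ref{lem-a(K)} and~\ref{lem-Leibniz-t}.
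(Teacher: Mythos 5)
Your plan has the same skeleton as the paper's proof: Lemma~\ref{lem-clelia} with $F_1=F_{\ve,E}^{\sigma-\beta+\kappa}$, $F_2=F_{\ve,E}^{\alpha}$, $F_3=F_{\ve,E}^{\tau-\gamma-\kappa}$ and $\V{v}_1=\V{v}_2=\V{v}_3=v_\ve$, then Lemma~\ref{lem-a(K)}, then Lemma~\ref{lem-Leibniz-t}. However, one step fails as you state it: you apply Lemma~\ref{lem-a(K)} with $\theta=1+\Id\Gamma(v)$, $\omega=v$. For $\ve\in(0,1]$ the positive powers $\theta^{(\#\fa-n)/2}$, $\theta^{(\#\fc-m)/2}$ that this lemma produces are then powers of $1+\Id\Gamma(v)$, whereas the multiplication operators $M_{\sA,\sC}$ of \eqref{def-MAB} decay only in $\Id\Gamma(v_\ve)$: all three functional-calculus factors are functions of $\Id\Gamma(v_\ve)$, and the decay of $\triangle_{p_{\sA\cup\sL}}F_{\ve,E}^{\alpha}$ provided by \eqref{Leibniz77} is through $(E+\Id\Gamma(v_\ve)+s_\ell)^{-|\delta_{\sA\cup\sL}|}$. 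Since $v_\ve\le1/\ve$ while $v$ is unbounded, $\|\theta^{(\#\fa-n)/2}M_{\sA,\sC}\theta^{(\#\fc-m)/2}\|=\infty$ whenever $\#\fa>n$ or $\#\fc>m$, i.e.\ your bound is vacuous exactly in the regularized regime ($\ve>0$, $\ve$-uniform constants) for which the lemma is needed in Section~\ref{sec-weights}. The repair is to dominate $(1+\Id\Gamma(v))^{-\nf{n}{2}}$ by $(1+\Id\Gamma(v_\ve))^{-\nf{n}{2}}$ and run Lemma~\ref{lem-a(K)} with $\omega=v_\ve$; one still ends with the norms $\|v^{\eh}(1+\tfrac{v}{E})^{(\cdot)}g_\ell\|$ because the factors $s_\ell^{\delta_\ell}$ produced by \eqref{Leibniz77} are powers of $v_\ve(p_\ell)\le v(p_\ell)$.

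Relatedly, your mechanism for extracting $E^{-\frac{M+N+m+n}{2}+\sigma+\tau}$ is not correct for $\ve>0$: neither $F_{\ve,E}^{-1}\le E^{-1}$ nor $F_{\ve,E}(t)\asymp E+t$ holds there (one only has $F_{\ve,E}\ge\tfrac12$, and $F_{\ve,E}$ saturates at $1/\ve$). The $\ve$-uniform $E$-powers come from the \emph{unregularized} denominators $(E+t+s_\ell)^{-|\delta_{\sA\cup\sL}|}$ in \eqref{Leibniz77}, with the $\ve$-dependent ratios like $(1+\ve(E+t+|s_\sA|))/(1+\ve(E+t+s_\ell))$ bounded by constants via $\ve v_\ve\le1$; the concrete choice $\delta_\ell=1$ on $\fa\cup\fc\cup\sD$ and $\delta_\ell=\eh$ on $(\sA\setminus\fa)\cup(\sC\setminus\fc)$ gives $|\delta_{\sA\cup\sL}|-\tfrac{\#\fa+\#\fc}{2}=\tfrac{M+N}{2}\ge\sigma+\tau$, which is what yields the stated $E$-power, leaves every non-distinguished variable with only $v^{\eh}$, and caps the distinguished variables at $\mu$ and $|\kappa|+\sigma+\tau$. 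This exponent accounting, which you defer as ``bookkeeping'', is the actual content of the lemma, so as it stands the core estimate is unverified. Two smaller points: the one-heavy-variable terms arise already from the unshifted $\ell=0$ term in \eqref{Leibniz77} and from $\#(\sA\cup\sC)\le1$, not primarily from the $n>\#\sA$ branch of Lemma~\ref{lem-a(K)} (that branch does occur when $\sB\neq\varnothing$, but it only improves matters); and the monotonicity of $v$ is not what drives the ordering argument in Lemma~\ref{lem-a(K)}, which orders by the values $\omega(k_j)$ for any measurable $\omega$.
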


\begin{proof}
We abbreviate $s_\ell:=v_\ve(p_\ell)$, $\ell\in\sJ\cup\sL$, 
and $|s_\sA|:=\sum_{a\in\sA}s_a$, etc. According to \eqref{def-MAB} and 
Lem.~\ref{lem-a(K)} (which we apply with $\omega=v_\ve$ and $m=n=0$)
we have to find bounds on the norms
\begin{align*}
N_{\sA,\sC}^{\fa,\fc}&:=\big\|(1+\Id\Gamma(v_\ve))^{\frac{\#\fa-n+\#\fc-m}{2}}
\,F_{\ve,E}^{\sigma-\beta+\kappa}(\Id\Gamma(v_\ve)+|s_\sA|)
\\
&\qquad\quad\times 
(\triangle_{s_{\sA\cup\sL}}F_{\ve,E}^\alpha)(\Id\Gamma(v_\ve))
\,F_{\ve,E}^{\tau-\gamma-\kappa}(\Id\Gamma(v_\ve)+|s_\sC|)\big\|,
\end{align*}
where $\fa\subset\sA$, $\fc\subset\sC$ satisfy $\#\fa\ge n$, $\#\fc\ge m$, and $\sA\cup\sB=\sJ$,
$\sC\cup\sD=\sL$, are partitions with $\#\sB=\#\sD$.
We again set $s_0:=0$, assuming without loss of generality that
$0\notin\sJ\cup\sL$. Employing \eqref{Leibniz77} and taking
$\alpha=\beta+\gamma$ into account we obtain
\begin{align*}
N_{\sA,\sC}^{\fa,\fc}\le c_0\!\!\!\sum_{\ell\in\{0\}\cup\sA\cup\sL}
&\sup_{t\ge0}\Big(\frac{E+t+s_\ell}{E+t+|s_\sA|}\,\frac{1+\ve E+\ve
  t+\ve|s_\sA|}{1+\ve E+\ve t+\ve s_\ell}\Big)^\beta
\\
\cdot&\sup_{t\ge0}\Big(\frac{E+t+s_\ell}{E+t+|s_\sC|}\,\frac{1+\ve E+\ve
  t+\ve|s_\sC|}{1+\ve E+\ve t+\ve s_\ell}\Big)^\gamma
\\
\cdot&\sup_{t\ge0}\Big(\frac{E+t+|s_\sA|}{E+t+|s_\sC|}\,
\frac{1+\ve E+\ve t+\ve |s_\sC|}{1+\ve E+\ve t+\ve |s_\sA|}\Big)^\kappa
\\
\cdot&\sup_{t\ge0}\frac{(1+t)^{\frac{\#\fa-n+\#\fc-m}{2}}(E+t+|s_\sA|)^\sigma
(E+t+|s_\sC|)^\tau}{(E+t+s_\ell)^{|\delta_{\sA\cup\sL}|}}\,s_{\sA\cup\sL}^{\delta_{\sA\cup\sL}}
\\
\le c_1\!\!\!\sum_{\ell\in\{0\}\cup\sA\cup\sL}\!\!
\Big(\frac{E+s_\ell}{E}&\Big)^\alpha\Big(\frac{E+|s_{\sA\cup\sC}|}{E}\Big)^{|\kappa|}
\frac{(E+|s_{(\sA\cup\sC)\setminus\{\ell\}}|)^{\sigma+\tau}/E^{\sigma+\tau}}{
(E+s_\ell)^{|\delta_{\sA\cup\sL}|-\sigma-\tau-\frac{\#\fa-n+\#\fc-m}{2}}}\,
s_{\sA\cup\sL}^{\delta_{\sA\cup\sL}},
\end{align*}
where the parameters $\delta_a\in[0,1]$ may be chosen depending on
$\sA$, $\sC$, $\fa$, and $\fc$ and where we used that $\ve\,s_a\le1$ and
$$
\forall\,s,v\ge0\::\quad
\sup_{t\ge0}\frac{E+t+v}{E+t+s}=
\left\{
\begin{array}{ll}
(E+v-s)/E,&s< v,\\1,&s\ge v.
\end{array}
\right.
$$
Notice that the fractions containing the $\ve$'s in the previous
inequality are all uniformly bounded because of $\ve\,v_\ve\le1$.
We now choose $\delta_\ell=1$, if $\ell\in \fa\cup \fc\cup \sD$, and $\delta_\ell=1/2$,
if $\ell\in(\sA\setminus\fa)\cup(\sC\setminus\fc)$.
Then $|\delta_{\sA\cup\sL}|-(\#\fa+\#\fc)/2=(M+N)/2\ge\sigma+\tau$, 
since $\#\sB=\#\sD$, and using also
$$
E+|s_{\sA\cup\sC}|\le(E+s_\ell)\,\frac{E+|s_{(\sA\cup\sC)\setminus\{\ell\}}|}{E},
$$
we conclude that
\begin{align*}
\frac{(N_{\sA,\sC}^{\fa,\fc})^2}{(\prod_{a\in\fa\cup\fc}v(p_a))}
&\le c_2 \,\frac{\sS(p_{\sA\cup\sL})}{E^{M+N+m+n-2(\sigma+\tau)}}
\Big(\prod_{d\in\sD}v(p_d)^2\Big)\prod_{a\in \sA\cup \sC}v(p_a),
\end{align*}
where the $(\alpha,\sigma,\tau,\kappa,\sA\cup\sL,N,M,n,m)$-dependent function $\sS$ is given by
\begin{align*}
\sS(p_{\sA\cup\sL})
&:=\sum_{\ell\in\sA\cup\sL}\Big(\frac{E+v(p_\ell)}{E}\Big)^{2\mu}\!
\Big(1+\!\!\sum_{{j\in\sA\cup\sL\atop j\not=\ell}}\!\!
\Big(\frac{E+v(p_j)}{E}\Big)^{2(|\kappa|+\sigma+\tau)}\Big).
\end{align*}
Putting everything together we arrive at
\begin{align*}
&\|T\|
\\
&\le c_3\!\!\sum_{{{\sA\cup \sB=\sJ\atop\sC\cup\sD=\sL}\atop\#\sB=\#\sD}}
\sum_{\pi\in\mathrm{Bij}(\sB,\sD)}\int\Big(\prod_{d\in\sD}
v(p_d)^\eh|g_d(p_d)|\Big)\Big(\prod_{b\in\sB}v(p_{\pi(b)})^\eh|g_{b}(p_{\pi(b)})|\Big)
\\
&\cdot E^{-(M+N)/2+\sigma+\tau}
\bigg[\int\sS(p_{\sA\cup\sL})\!\!\prod_{a\in\sA\cup\sC}\!\big\{v(p_a)\,|g_{a}(p_a)|^2
\Id\mu(p_a)\big\}\bigg]^\eh\Id\mu^{\#\sD}(p_\sD),
\end{align*}
from which we easily infer the asserted estimate.
\end{proof}

\begin{rem}\label{rem-comm-omega}
If we additionally assume that $\sJ=\varnothing$ or $\sL=\varnothing$, 
then the whole statement of Lem.~\ref{lem-comm-omega} still holds true with $\mu$ replaced by 
$$
\tilde{\mu}:=(|\kappa|+\sigma+\tau)\vee(\beta\vee\gamma+|\kappa|+\sigma+\tau-\tfrac{M+N+m+n}{2}).
$$
In fact, in this case we always have $\sB=\sC=\varnothing$, i.e.
$\sA=\sJ$ and $\sC=\sL$ in the proof of Lem.~\ref{lem-comm-omega}. This implies that
$s_\ell\le|s_\sA|$ or $s_\ell\le|s_\sC|$, for all $\ell\in\{0\}\cup\sA\cup\sL$, so that $\alpha$
can be replaced by $\beta\vee\gamma$ in the estimate on $N_{\sA,\sC}^{\mathfrak{a},\mathfrak{c}}$.
Following the proof without any further changes we arrive at the assertion with
$\tilde{\mu}$ in place of $\mu$.
\end{rem}

\begin{ex}\label{ex-isi}
Let $f\in\HP$ with $\omega^\eh f\in\HP$ and set $\theta:=1+\Id\Gamma(\omega)$.
Choosing $\ve=0$, $E=1$, $n=m=0$, $M+N=1$, $\beta=\sigma=\tau=\kappa=0$, and
$\alpha=\gamma=1/2$ in Lem.~\ref{lem-comm-omega}, we see that 
$Y(f):=\big[\theta^\eh,\vp(f)\big]\theta^\mh$,
which is well-defined a priori on $\dom(\Id\Gamma(\omega)^\eh)$, extends to a bounded operator
on $\FHR$ with $\|Y(f)\|\le c\|\omega^\eh f\|$. This implies that the map
$
\RR^\nu\ni\V{x}\mapsto\vp(\V{G}_{\V{x}})^2\in\LO(\dom(\Id\Gamma(\omega)),\FHR)
$
is continuous. In fact, using \eqref{rb-a}\&\eqref{rb-ad} repeatedly, 
we obtain, for all $\V{x},\V{y}\in\RR^\nu$ and $\psi\in\dom(\Id\Gamma(\omega))$,
\begin{align*}
\big\|\vp(\V{G}_{\V{x}})^2\psi-\vp(\V{G}_{\V{y}})^2\psi\big\|
&=\big\|(\vp(\V{G}_{\V{x}})-\vp(\V{G}_{\V{y}}))\vp(\V{G}_{\V{x}}+\V{G}_{\V{y}})\psi\big\|
\\
&\le
c'\|\V{G}_{\V{x}}-\V{G}_{\V{y}}\|_{\mathfrak{k}^\nu}
\big\|\big(Y(\V{G}_{\V{x}}+\V{G}_{\V{y}})+\vp(\V{G}_{\V{x}}+\V{G}_{\V{y}})\big)\theta^\eh\psi\big\|
\\
&\le c''\|\V{G}_{\V{x}}-\V{G}_{\V{y}}\|_{\mathfrak{k}^\nu}
\sup_{\V{z}}\|\V{G}_{\V{z}}\|_{\mathfrak{k}^\nu}
\|\theta\psi\|,
\end{align*}
and the claim follows from the continuity of $\V{x}\mapsto\V{G}_{\V{x}}\in\mathfrak{k}^\nu$
required in Hyp.~\ref{hyp-G}.
\end{ex}

\begin{ex}\label{ex-comm-omega}
Let us explain how to read off the proof of Lem.~\ref{lem-bd-T} from Lem.~\ref{lem-comm-omega}.
First, we show that
\begin{align}\nonumber
\|\V{T}(s)\|&\le c_\alpha\big\|(\tfrac{v_{\alpha,s}}{1+\iota\tau_\alpha(s)})^\eh
(1+\tfrac{v_{\alpha,s}}{1+\iota\tau_\alpha(s)})^{|\alpha|-\eh}\V{G}_{\V{B}_s^{\V{q}}}\big\|
\\\label{bd-TTheta-spin4}
&\le c_\alpha\|(\vo+\vk)^\eh(1+\vo+\vk)^{|\alpha|-\eh}\V{G}_{\V{B}_s^{\V{q}}}\|,
\end{align}
where $\iota\tau_\alpha(s)$, $\vo$, $\vk$, and $v_{\alpha,s}$ are introduced in the paragraph 
preceding Lem.~\ref{lem-spin4}. In fact, recalling \eqref{for-vecT} and \eqref{comm-inv1} and 
using the notation of Lem.~\ref{lem-comm-omega} we may write
\begin{align}\nonumber
\V{T}(s)&=-2F_{\ve,E}(\Id\Gamma(v_{\alpha,\ve,s}))^{-\beta}\big(\Ad_{\ad(\V{G}_{\V{B}_s^{\V{q}}})}
F_{\ve,E}^{|\alpha|}(\Id\Gamma(v_{\alpha,\ve,s}))\big)
F_{\ve,E}(\Id\Gamma(v_{\alpha,\ve,s}))^{-\gamma}
\\\label{for-vecTTheta}
&\quad-2F_{\ve,E}(\Id\Gamma(v_{\alpha,\ve,s}))^{-\beta}\big(\Ad_{a(\V{G}_{\V{B}_s^{\V{q}}})}
F_{\ve,E}^{|\alpha|}(\Id\Gamma(v_{\alpha,\ve,s}))\big)
F_{\ve,E}(\Id\Gamma(v_{\alpha,\ve,s}))^{-\gamma}
\end{align}
with $E:=1+\iota\tau_\alpha(s)$ and either $(\beta,\gamma)=(|\alpha|,0)$ or 
$(\beta,\gamma)=(0,|\alpha|)$. 
Of course, the two terms on the right hand side correspond to the choices $N=1$, $M=0$,
and $N=0$, $M=1$, respectively.
Hence, Lem.~\ref{lem-comm-omega} implies the first inequality of \eqref{bd-TTheta-spin4},
and the second one follows from $v_{\alpha,s}\le(1+\iota\tau_\alpha(s))(\vo+\vk)$.
In the same way, taking \eqref{for-T2} and \eqref{comm-inv1} into account, we obtain
\begin{align*}
\|T_2(s)\|&\le c_{\alpha}'\|(\vo+\vk)^\eh(1+\vo+\vk)^{|\alpha|-\eh}
(q_{\V{B}_s^{\V{q}}},\V{F}_{\V{B}_s^{\V{q}}})\|.
\end{align*}
In view of \eqref{for-T1c}, \eqref{comm-inv1}, and \eqref{comm-inv2}, 
Lem.~\ref{lem-comm-omega} finally implies
\begin{align*}
T_1(s)
&=c_\alpha''\|(\vo+\vk)^\eh(1+\vo+\vk)^{|\alpha|-\eh}\V{G}_{\V{B}_s^{\V{q}}}\|^2
\\
&\quad+c_\alpha''
\|(\vo+\vk)^\eh(1+\vo+\vk)^{(|\alpha|-1)\vee0}\V{G}_{\V{B}_s^{\V{q}}}\|
\,\|(\vo+\vk)^\eh\V{G}_{\V{B}_s^{\V{q}}}\|.
\end{align*}
\end{ex}

\begin{lem}\label{lem-comm-exp}
Let $E\ge1$, $\ve\in[0,1]$, let $v:\cM\to\RR$ be measurable, and let $\omega$ 
be a measurable function on $\cM$ which is strictly positive almost everywhere.
Let $\sJ$ and $\sL$ be disjoint finite, possibly empty index sets, $N:=\#\sJ$, $M:=\#\sL$,
let $m,n\in\NN_0$ such that $m+n=M+N$, and let 
$\delta,\beta,\gamma\in[0,1]$ satisfy $\ve\le\delta=\beta+\gamma$. Pick $g_\ell\in\HP$ satisfying
$$
u_\delta^\eh e^{\delta v}g_\ell\in\HP,
\quad\text{where}\quad u:=(\delta v)\vee((\delta v)^2/\omega),
$$
for all $\ell\in\sJ\cup\sL$. Finally, set $v_\ve:=v/(1+\ve v)$ and define $F_{\ve,E}$ by 
\eqref{def-FveE}. Then the densely defined operator
\begin{align}\nonumber
\cT:=(1+\Id\Gamma(\omega))^{-\nf{n}{2}}&e^{-\beta F_{\ve,E}(\Id\Gamma(v_\ve))}
\Big\{\Big(\prod_{j\in\sJ}\Ad_{\ad(g_j)}\Big)\Big(\prod_{\ell\in\sL}\Ad_{a(g_\ell)}\Big)
e^{\delta F_{\ve,E}(\Id\Gamma(v_\ve))}\Big\}
\\\nonumber%\label{for-cTomega}
&\qquad\times\,e^{-\gamma F_{\ve,E}(\Id\Gamma(v_\ve))}(1+\Id\Gamma(\omega))^{-\nf{m}{2}},
\end{align} 
extends uniquely to a bounded operator on $\sF$ with
\begin{align}\label{bd-comm-exp}
\|\cT\|&\le c_{m,n,M,N}\prod_{\ell\in\sJ\cup\sL}\|u_\delta^\eh e^{\delta v}g_\ell\|.
\end{align}
\end{lem}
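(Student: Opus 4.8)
The plan is to follow, almost verbatim, the three-step argument that produced Lem.~\ref{lem-comm-omega}, the only genuinely new ingredient being the exponential difference estimate \eqref{Leibniz-exp} in place of the polynomial ones \eqref{Leibniz7}--\eqref{Leibniz77}. First I would write $F_{\ve,E}(\Id\Gamma(v_\ve))=F_\ve(E+\Id\Gamma(v_\ve))$ with $F_\ve$ as in \eqref{def-Feps} and apply the multiple-commutator identity Lem.~\ref{lem-clelia} with $F_1:=e^{-\beta F_\ve(E+\,\cdot\,)}$, $F_2:=e^{\delta F_\ve(E+\,\cdot\,)}$, $F_3:=e^{-\gamma F_\ve(E+\,\cdot\,)}$ and $\V v_1=\V v_2=\V v_3=v_\ve$. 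Up to the customary density argument that inserts the two damping factors $(1+\Id\Gamma(\omega))^{-n/2}$, $(1+\Id\Gamma(\omega))^{-m/2}$, this rewrites the relevant matrix element of $\cT$ as a finite sum over partitions $\sA\cup\sB=\sJ$, $\sC\cup\sD=\sL$ with $\#\sB=\#\sD$ of integrals of $\SPn{a(p_\sA)\phi}{M_{\sA,\sC}(p_{\sA\cup\sL})\,a(p_\sC)\psi}$, the multiplication operators $M_{\sA,\sC}$ being given by \eqref{def-MAB}.

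Second I would bound $M_{\sA,\sC}$. Writing $s_\ell:=v_\ve(p_\ell)$, translation invariance of $\triangle$ together with \eqref{Leibniz-exp} applied with $a=\delta$ at the argument $E+\Id\Gamma(v_\ve)\ge E\ge1$ — this is the one place the hypothesis $\ve\le\delta$ enters — gives, as a multiplication operator and for any choice of exponents $\delta_\ell\in[0,1]$,
\[
\bigl|\triangle_{s_{\sA\cup\sL}}e^{\delta F_{\ve,E}(\Id\Gamma(v_\ve))}\bigr|
\le c\,\delta^{|\delta_{\sA\cup\sL}|}\,e^{\delta F_{\ve,E}(\Id\Gamma(v_\ve))}\,e^{\delta|s_{\sA\cup\sL}|}\,s_{\sA\cup\sL}^{\delta_{\sA\cup\sL}}.
\]
Since $F_{\ve,E}$ is nondecreasing and $\beta+\gamma=\delta$, the product of the three exponential factors appearing in \eqref{def-MAB} — namely $e^{-\beta F_{\ve,E}(\Id\Gamma(v_\ve)+|s_\sA|)}$, the bracketed difference, and $e^{-\gamma F_{\ve,E}(\Id\Gamma(v_\ve)+|s_\sC|)}$ — is bounded pointwise by $c\,\delta^{|\delta_{\sA\cup\sL}|}e^{\delta|s_{\sA\cup\sL}|}s_{\sA\cup\sL}^{\delta_{\sA\cup\sL}}$. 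Multiplying by the remaining scalar $g$-factors of \eqref{def-MAB} and by $\theta^{(\#\fa-n)/2}$, $\theta^{(\#\fc-m)/2}$ ($\theta:=1+\Id\Gamma(\omega)$) as required by Lem.~\ref{lem-a(K)}, and using $\#\fa\le\#\sA\le N$, $\#\fc\le\#\sC\le M$ and $m+n=M+N$, the surviving power $(1+\Id\Gamma(\omega))^{(\#\fa+\#\fc-m-n)/2}$ has nonpositive exponent, hence norm $\le1$. This is exactly why the hypothesis $m+n=M+N$ is imposed, and it is also the reason $E$ disappears from the final bound: $e^{-\beta F_{\ve,E}}\le1$ uniformly, so $E\ge1$ is needed only to keep $F_\ve$ away from its singularity at the origin and (for $\ve>0$) to make all the operators honestly bounded.

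Finally I would feed this into Lem.~\ref{lem-a(K)} (applied with its $\omega$ being the genuine $\omega$ and its $m,n$ equal to ours), use Cauchy--Schwarz in the bijection sum of \eqref{def-MAB} to decouple the $g_\ell$, and choose $\delta_\ell\in\{0,1\}$ exactly as in the proof of Lem.~\ref{lem-comm-omega}, so that each $p_\ell$-integral carries a single norm of $g_\ell$. The remaining and main obstacle is the weight bookkeeping: one has to check that $\delta^{\delta_\ell}v_\ve(p_\ell)^{\delta_\ell}$, possibly divided by one factor $\omega(p_\ell)$ coming from a denominator in Lem.~\ref{lem-a(K)}, is in both cases dominated — uniformly in $\ve\in[0,1]$ — by $u_\delta(p_\ell)$ with $u=(\delta v)\vee((\delta v)^2/\omega)$ (using $v_\ve\le v$ and $\delta v_\ve\le1$), and that $e^{\delta|s_{\sA\cup\sL}|}\le\prod_\ell e^{\delta v(p_\ell)}$. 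The integrals then factorize and produce $\prod_{\ell\in\sJ\cup\sL}\|u_\delta^{\eh}e^{\delta v}g_\ell\|$, which is \eqref{bd-comm-exp}. One also has to dispose of the degenerate subcases of Lem.~\ref{lem-a(K)} with $\#\sA<n$ or $\#\sC<m$, where the relevant $\theta$-power is replaced by $(\theta+|s_\sA|)^{-(n-\#\sA)/2}\le1$ and so only helps; apart from that, the argument is a routine transcription of the polynomial case.
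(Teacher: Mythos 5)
Your proposal follows the paper's own proof in all essentials: the same combination of Lem.~\ref{lem-clelia}, Lem.~\ref{lem-a(K)} (applied with the genuine $\omega$ and the given $m,n$), and the exponential difference bound \eqref{Leibniz-exp}, with the cancellation coming from monotonicity of $F_{\ve,E}$ and $\beta+\gamma=\delta$, the exponent count based on $m+n=M+N$, and the same weight bookkeeping producing $\prod_\ell\|u_\delta^{\eh}e^{\delta v}g_\ell\|$. Two small points should be tightened. First, the exponents must be chosen in $\{\nf{1}{2},1\}$ (as in the proof of Lem.~\ref{lem-comm-omega}), not in $\{0,1\}$: a variable in $\sC\setminus\fc$ carries no $\omega$-denominator, and with $\delta_\ell=0$ its integral would only be controlled by $\|e^{\delta v}g_\ell\|$, which is not dominated by $\|u_\delta^{\eh}e^{\delta v}g_\ell\|$; your own domination sentence in fact presupposes the choice $\nf{1}{2}$ there. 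Second, in the mixed degenerate case of Lem.~\ref{lem-a(K)} --- say $n>\#\sA$ but $m\le\#\sC$, which genuinely occurs (take $\sB=\sD=\varnothing$ with $\#\sA=N<n$, so $\#\sC=M>m$) --- the sum over $\fc$ contains terms with $\#\fc>m$, i.e.\ a strictly positive power $(1+\Id\Gamma(\omega))^{(\#\fc-m)/2}$ on one side. You cannot simply bound $(1+\Id\Gamma(\omega)+|s_\sA|)^{-(n-\#\sA)/2}$ by $1$ and say it ``only helps''; it is needed to absorb that positive power, via $(1+\Id\Gamma(\omega)+|s_\sA|)^{-(n-\#\sA)/2}(1+\Id\Gamma(\omega))^{(\#\fc-m)/2}\le(1+\Id\Gamma(\omega))^{(\#\sA+\#\fc-m-n)/2}\le1$, which is the paper's inequality $n-\#\sA\ge\#\fc-m$. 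This is just your combined-exponent observation extended to the degenerate branch, so the repair is immediate, but as written that branch would be left uncontrolled.
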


\begin{proof}
Consider two partitions $\sA\cup \sB=\sJ$ and $\sC\cup \sD=\sL$ with $\#\sB=\#\sD$.
Since $m+n=M+N$ at least one of the conditions $n\ge\#\sA$ or $m\ge\#\sC$ is satisfied. 
Without loss of generality we may assume for the moment that $n\ge\#\sA$.
Setting  $\omega_\sA:=\sum_{a\in\sA}\omega(p_a)$,
$s_\ell:=v_\ve(p_\ell)$, $|s_\sA|:=\sum_{a\in\sA}v_\ve(p_a)$, etc., 
we then infer from Lem.~\ref{lem-clelia} 
and Lem.~\ref{lem-a(K)} that it suffices to find a bound on the norms
\begin{align*}
\cN_{\sA,\sC}^{\fc}&:=\big\|(1+\Id\Gamma(\omega)+\omega_\sA)^{-\frac{n-\#\sA}{2}}
\,e^{-\beta F_{\ve,E}(\Id\Gamma(v_\ve)+|s_\sA|)}
\\
&\qquad\quad\times 
(\triangle_{s_{\sA\cup\sL}}e^{\delta F_{\ve,E}})(\Id\Gamma(v_\ve))
\,e^{-\gamma F_{\ve,E}(\Id\Gamma(v_\ve)+|s_\sC|)}(1+\Id\Gamma(\omega))^{\frac{\#\fc-m}{2}}\big\|,
\end{align*}
where $\fc\subset\sC$ satisfies $\#\fc\ge m$. On account of $n-\#\sA\ge\#\fc-m$ we have
$$
\|(1+\Id\Gamma(\omega)+\omega_\sA)^{-\frac{n-\#\sA}{2}}
(1+\Id\Gamma(\omega))^{\frac{\#\fc-m}{2}}\|\le1,
$$
and taking also Lem.~\ref{lem-Leibniz-t} into account we see that
\begin{align*}
\cN_{\sA,\sC}^{\fc}&\le
c_{\sA,\sL}\,\delta^{|\delta_{\sA\cup\sL}|}e^{\delta|s_{\sA\cup\sL}|}
s_{\sA\cup\sL}^{\delta_{\sA\cup\sL}}
\sup_{t\ge0}e^{-\beta F_{\ve,E}(t+|s_\sA|)-\gamma F_{\ve,E}(t+|s_{\sC}|)+\delta F_{\ve,E}(t)},
\end{align*}
where we now choose $\delta_\ell=1$, if $\ell\in\sA\cup\fc\cup \sD$, and $\delta_\ell=1/2$,
if $\ell\in\sC\setminus\fc$.
The supremum in the previous inequality is $\le1$ by the assumption $a=b+c$ and the fact that
$F_{\ve,E}$ is monotonically increasing. 
We thus arrive at
\begin{align*}
\frac{(\cN_{\sA,\sC}^{\fc})^2}{(\prod_{a\in\sA\cup\fc}\omega(p_a))}
&\le c
\Big(\prod_{d\in\sD}u_\delta(p_d)^2e^{2\delta v(p_d)}\Big)\prod_{a\in \sA\cup \sC}
u_\delta(p_a)e^{2\delta v(p_a)}.
\end{align*}
In view of \eqref{clelia0}, \eqref{def-MAB}, and Lem.~\ref{lem-a(K)}, 
and since $\delta\le1$, it finally follows that
\begin{align*}
\|\cT\|&\le c\!\!\sum_{{{\sA\cup\sB=\sJ\atop\sC\cup\sD=\sL}\atop\#\sB=\#\sD}}
\sum_{\pi\in\mathrm{Bij}(\sB,\sD)}
\int\Big(\prod_{d\in\sD}u_\delta(p_d)^\eh e^{\delta v(p_d)/2}\ol{g_d(p_d)}\Big)
\\
&\quad\cdot
\Big(\prod_{b\in\sB}u_\delta(p_{\pi(b)})^\eh e^{\delta v(p_{\pi(b)})/2}g_b(p_{\pi(b)})\Big)
\\
&\quad\cdot\!\!\!\!\!
\sum_{{\fa\subset\sA\atop\#\fa\ge n\wedge\#\sA}}
\sum_{{\fc\subset\sC\atop\#\fc\ge m\wedge\#\sC}}\!\!
\bigg[\int\prod_{a\in\sA\cup\sC}\big\{u_\delta(p_a)e^{2\delta v(p_a)}|g_a(p_a)|^2
\Id\mu(p_{a})\big\}\bigg]^\eh\Id\mu^{\#\sD}(p_{\sD}),
\end{align*}
and we conclude that \eqref{bd-comm-exp} holds true.
\end{proof}

\begin{ex}\label{ex-comm-exp}
For positive $\delta$, the bound \eqref{tina2} is an easy consequence of 
Lem.~\ref{lem-comm-exp},
if we choose $E=1+\iota\tau_1(s)$, $v=v_{1,s}$,  so that 
$\Theta_{\ve,s}^{(1)}=F_{\ve,1+\iota\tau_1(s)}(\Id\Gamma(v_{1,\ve,s}))$,
where $\iota\tau_1(s)$, $v_{1,s}$, and $\Theta_{\ve,s}^{(1)}$ are defined in the
beginning of Sect.~\ref{sec-weights}. Similarly as in Ex.~\ref{ex-comm-omega},
the case $\delta<0$ can be reduced to the case of positive $\delta$ by means of 
\eqref{comm-inv1} and \eqref{comm-inv2}.
\end{ex}

%%%%%%%%%%%%%%%%%%%%%%%%%%%%%%%%%%%%%%%%%%%%%%%
%%%%%%%%%%%%%%%%%%%%%%%%%%%%%%%%%%%%%%%%%%%%%%%
%%%%%%%%%%%%%%%%%%%%%%%%%%%%%%%%%%%%%%%%%%%%%%%

\bigskip

\noindent
{\bf Acknowledgements.} It is a pleasure to thank Batu G\"{u}neysu and Jacob Schach M{\o}ller
for many inspiring and helpful discussions.

This work has been supported by the Villum Foundation.

%%%%%%%%%%%%%%%%%%%%%%%%%%%%%%%%%%%%%%%%%%%%%%%
%%%%%%%%%%%%%%%%%%%%%%%%%%%%%%%%%%%%%%%%%%%%%%%
%%%%%%%%%%%%%%%%%%%%%%%%%%%%%%%%%%%%%%%%%%%%%%%

\end{document}